\newcommand{\comment}[1]{}
\newcommand{\ket}[1]{|#1\rangle}
\newcommand{\bra}[1]{\langle #1|}
\newcommand{\braket}[2]{\langle #1 | #2 \rangle}
\newcommand{\proj}[1]{\ket{#1} \bra{#1}}
\newcommand{\norm}[1]{\left\|#1\right\|}
\newcommand{\vectornorm}[1]{\left|\left|#1\right|\right|}
\newcommand{\eps}{\epsilon}
\renewcommand{\Pr}{\mathbb{P}}
\newcommand{\Expect}{\mathbb{E}}
\newcommand{\tr}{\operatorname{tr}}
\newcommand{\swap}{\mathcal{F}}
\newcommand{\ot}{\otimes}
\newcommand{\bbI}{\mathbb{I}}
\newcommand{\cC}{\mathcal{C}}
\newcommand{\cE}{\mathcal{E}}
\newcommand{\cG}{\mathcal{G}}
\newcommand{\cF}{\mathcal{F}}
\newcommand{\cV}{\mathcal{V}}
\newcommand{\cU}{\mathcal{U}}
\newcommand{\cS}{\mathcal{S}}
\newcommand{\cP}{\mathcal{P}}
\newcommand{\cN}{\mathcal{N}}
\newcommand{\cH}{\mathcal{H}}
\newcommand{\bfm}{\mathbf{m}}
\newcommand{\bfn}{\mathbf{n}}
\newcommand{\bfq}{\mathbf{q}}
\newcommand{\bfp}{\mathbf{p}}
\newcommand{\tpiN}{\frac{2 \pi i}{N}}
\newcommand{\be}{\begin{equation}}
\newcommand{\ee}{\end{equation}}
\newcommand{\bes}{\begin{equation*}}
\newcommand{\ees}{\end{equation*}}
\def\ba#1\ea{\begin{align}#1\end{align}}
\def\bas#1\eas{\begin{align*}#1\end{align*}}
\def\bit{\begin{itemize}}
\def\eit{\end{itemize}}
\def\l{\left}
\def\r{\right}
\def\<{\langle}
\def\>{\rangle}
\newcommand{\stocdomr}{\trianglelefteq}
\newcommand{\stocdoml}{\trianglerighteq}
\newcommand{\majr}{\preceq}
\newcommand{\majl}{\succeq}
\renewcommand{\Re}{{\rm Re}}
\newtheorem{theorem}{Theorem}
\newtheorem{lemma}[theorem]{Lemma}
\newtheorem{definition}[theorem]{Definition}
\newtheorem{corollary}[theorem]{Corollary}
\newtheorem{conjecture}[theorem]{Conjecture}
\numberwithin{equation}{section}
\numberwithin{theorem}{section}
\newcommand{\nn}{\nonumber\\}
\newcommand{\eq}[1]{Eqn.~\ref{eq:#1}}
\newcommand{\eqn}[1]{Eqn.~\ref{eq:#1}}
\newcommand{\peq}[1]{(Eqn.~\ref{eq:#1})}
\newcommand{\thmref}[1]{Theorem \ref{thm:#1}}
\newcommand{\corref}[1]{Corollary \ref{cor:#1}}
\newcommand{\lemref}[1]{Lemma \ref{lem:#1}}
\newcommand{\chapref}[1]{Chapter \ref{chap:#1}}
\newcommand{\partref}[1]{Part \ref{part:#1}}
\newcommand{\secref}[1]{Section \ref{sec:#1}}
\newcommand{\defref}[1]{Definition \ref{def:#1}}
\newcommand{\figref}[1]{Figure \ref{fig:#1}}
\DeclareMathOperator{\supp}{supp}
\DeclareMathOperator{\Span}{span}
\DeclareMathOperator{\poly}{poly}
\DeclareMathOperator{\Par}{Par}
\renewcommand{\iff}{\ensuremath{\mathrm{\,iff\,}}}
\def\ra{\rightarrow}
\def\bbC{\mathbb{C}}
\def\bbE{\mathbb{E}}
\def\Pr{\mathbb{P}}
\title{Pseudo-randomness and Learning in Quantum Computation}
\author{Richard Andrew Low}
\begin{document}

\maketitle

\begin{abstract}

This thesis discusses the young fields of quantum pseudo-randomness and quantum learning algorithms.  We present techniques for derandomising algorithms to decrease randomness resource requirements and improve efficiency.  One key object in doing this is a $k$-design, which is a distribution on the unitary group whose $k^{\rm th}$ moments match those of the unitarily invariant Haar measure.  We show that for a natural model of a random quantum circuit, the distribution of random circuits quickly converges to a 2-design.  We then present an efficient unitary $k$-design construction for any $k$, provided the number of qubits $n$ satisfies $k = O(n / \log n)$.  In doing this, we provide an efficient construction of a quantum tensor product expander, which is a generalisation of a quantum expander which in turn generalises classical expanders.  We then discuss applications of $k$-designs.  We show that they can be used to improve the efficiency of many existing algorithms and protocols and also find new applications to derandomising large deviation bounds.  In particular, we show that many large deviation bound results for Haar random unitaries carry over to $k$-designs for $k = \poly(n)$.

In the second part of the thesis, we present some learning and testing algorithms for the Clifford group.  We find an optimal algorithm for identifying an unknown Clifford operation.  We also give an algorithm to test if an unknown operation is close to a Clifford or far from every Clifford.

\end{abstract}

\begin{acknowledgements}

I am greatly indebted to my supervisor Aram Harrow for his teaching, support and inspiration throughout my PhD and for this I greatly thank him.  I would also like to thank the whole quantum information group at Bristol and members of the Computer Science department for their support and help.  In particular, I thank Mick Bremner, Rapha\"{e}l Clifford, Toby Cubitt, Richard Jozsa, Will Matthews, Ashley Montanaro,  Ben Sach, Dan Shepherd and Andreas Winter.

I also thank my family and friends for their encouragement and interest and members of the Hornstars, Dr Doctor, Cattle Market and Les Rosbifs for welcome distraction from this thesis.

I also acknowledge funding from the ARO grant ASTQIT and the EPSRC grant QIP-IRC.

\end{acknowledgements}

\setcounter{page}{1}

\tableofcontents

\setlength{\evensidemargin}{0.1cm}
\setlength{\oddsidemargin}{1.6cm}

\chapter{Introduction}

Landauer famously said that information is physical \cite{LandauerInfoPhysical}.  A corollary of this is that computation is a physical process.  It is simply the evolution of a physical state, governed by the laws of physics.  A classical computer is therefore an information processor where the physical evolution is restricted to that of classical physics.  A quantum computer is more general: quantum evolution is allowed.  One might therefore reasonably expect that quantum computers are more powerful.  It might be that the extra possibilities allowed by quantum evolution allow states to be processed more efficiently to speed up the computation.  Determining which problems a quantum computer can solve faster than a classical computer is the central problem in the theory of quantum computation.

Significant progress has already been made in answering this question.  Shor's algorithm \cite{ShorsAlgorithm} shows that factoring of integers is possible in polynomial time on a quantum computer.  In contrast, it is not known if factoring is possible in polynomial time on a classical computer.  Also, Grover's unstructured search algorithm \cite{GroversAlgorithm} allows a marked item in an unsorted database to be found using only the square root of the time required on a classical computer.  Finding other algorithms and provable separations between quantum and classical computation is an important area of current research.

This thesis makes some progress towards finding such new algorithms.  In classical computer science, randomness and pseudo-randomness have been key tools in the development of new and faster algorithms.  In the first part of this thesis, we discuss applications and constructions of quantum analogues of these pseudo-random objects.  Whilst we do not come up with new algorithms based on these, we hope that in the future the tools we build will find application in this area.  In the second part of this thesis, we discuss problems in the theory of machine learning, which is an area in which quantum computers could outperform their classical counterparts.

A side theme in this thesis is the idea that computational complexity must be considered in physical models.  The converse of our opening statement is also true: physical systems store and process information; they are computers.  Therefore physical systems that could solve problems that are provably difficult do not exist in nature.  This can rule out models that provide too much computational power.

In \partref{PR}, we discuss quantum pseudo-randomness.  We introduce the subject in \chapref{PRIntro} and provide motivation from the classical computer science literature.  The main idea is to use pseudo-randomness instead of full randomness to decrease the amount of randomness required.  This is desirable in classical computing because random bits are expensive to produce.  In quantum computing random bits can be obtained by measurement but uniformly random unitaries and states (formally defined in \secref{RandomUnitaries}) cannot be produced efficiently so pseudo-randomness is necessary if efficiency is desired.  In classical computing random bits are often saved by limiting dependence, for example by using $k$-wise independent random variables.  These are variables where the distribution of any $k$ variables is the same as for fully independent random variables but dependencies become apparent when observing more than $k$ of the variables.  We discuss a quantum analogue of this known as a $k$-design.

In \chapref{RandomCircuits} we show that short random quantum circuits (see \secref{QuantumIntro} for background on quantum circuits) are 2-designs, giving an efficient method for producing a 2-design.  Then in \chapref{TPE} we provide an efficient $k$-design construction for all $k$, giving the first construction for $k>2$.  In order to do this, we present an efficient construction of a quantum $k$-tensor product expander, which is a quantum analogue of a classical tensor product expander which in turn is a generalisation of the standard expander used in classical computer science.  We then summarise known applications of $k$-designs in \chapref{DesignApplications} as well as providing our own to show that $k$-designs exhibit measure concentration which in some cases is almost as strong as for uniformly random unitaries.

In \partref{Learning} we turn to problems in learning theory.  In particular, we consider the problem of identifying a given black box unitary with as few queries as possible.  We find an algorithm with optimal asymptotic query complexity to identify an unknown unitary from the Clifford group (defined in \chapref{LearningCliffords}).  We also show how this can be done if the unitary only approximately implements a Clifford and we also present a testing algorithm to determine if a given operation is close to a Clifford or far from every Clifford.

\section{Brief Introduction to Quantum Mechanics}
\label{sec:QuantumIntro}

We now briefly mention some key concepts in quantum mechanics and define some notation.  For a more complete introduction see \cite{NielsenChuang}.

The state of a $d$-dimensional quantum system is represented by a vector in the complex space $\mathbb{C}^{d}$.  If $d=2$, we call the system a \emph{qubit} and often we will take $d=2^n$ and say the system has $n$ qubits.

We will normally use \emph{Dirac notation} for quantum states.  We write column vectors as $\ket{\psi}$, with the associated conjugate row vector as $\bra{\psi}$.  The inner product between two states is written as $\braket{\psi_1}{\psi_2}$.  We will write $\psi$ for the projector $\proj{\psi}$.  States can also be probabilistic mixtures of pure states.  If the state is $\ket{\psi_i}$ with probability $p_i$ then it has \emph{density matrix} $\sum_i p_i \proj{\psi_i}$.

It is often convenient to break the space up into different components, for example the system and its environment.  Mathematically, systems are combined by using the \emph{tensor product}.  The combined state of system $A$ in state $\ket{\psi_A}$ and system $B$ in state $\ket{\psi_B}$ is written $\ket{\psi_A} \ot \ket{\psi_B}$.  This leads to the phenomenon of \emph{entanglement},  which is when the combined state cannot be written in this product form.  For example, the state $\frac{1}{\sqrt{2}} \l( \ket{0_A} \ot \ket{0_B} + \ket{1_A} \ot \ket{1_B} \r)$ cannot be written in the product form $\ket{\psi_A} \ot \ket{\psi_B}$ and so is entangled.  To find the state of a subsystem $A$ from a density matrix $\rho_{AB}$, we take the \emph{partial trace}.  Write $\rho_{AB} = \sum_{ijkl} \rho_{ijkl} \ket{i_A} \bra{j_A} \ot \ket{k_B} \bra{l_B}$.  Then the reduced state is $\rho_A = \tr_B \rho_{AB} = \sum_{ijk} \rho_{ijkk} \ket{i_A} \bra{j_A}$.

Measurement of a quantum system can be written mathematically in terms of a POVM (positive operator valued measure).  This is a set of positive semi-definite operators $P_i$ such that $\sum_i P_i = I$.  Then the measurement outcomes are the labels $i$ and outcome $i$ occurs with probability $\tr P_i \rho$ if the state being measured is $\rho$.

The evolution of a closed quantum system is unitary.  That is, the quantum state at a later time $t$ is related by a unitary to the initial quantum state: $\rho_t = U_t \rho_0 U_t^\dagger$.  If the system of interest is part of some larger system then the dynamics need not be unitary.  The most general form of evolution can be written in the Kraus decomposition: $\rho_t = \sum_i A_i \rho_0 A_i^\dagger$ where $A_i$ are any operators normalised so that $\sum_i A_i^\dagger A_i = I$.

Finally we mention that it is often convenient to think of unitary evolution as a quantum circuit, built up of smaller elementary unitary gates.  This is in direct analogy to the use of circuits in classical computing.  Classically, a NAND gate suffices to produce any other gate so all classical circuits can be made up of just NAND gates.  Similarly, there exist sets of unitary gates from which any unitary can be built.  An example is the following three gates:
\bes
H = \frac{1}{\sqrt{2}}\begin{pmatrix}1 & 1 \\ 1 & -1\end{pmatrix} 
\qquad
R_{\pi/4} = \begin{pmatrix}1 & 0 \\ 0 & e^{i \pi / 4}\end{pmatrix}
\ees
\bes
CNOT = \begin{pmatrix}1 & 0 & 0 & 0 \\ 0 & 1 & 0 & 0 \\ 0 & 0 & 0 & 1 \\ 0 & 0 & 1 & 0 \end{pmatrix}.
\ees
We often seek to build a family of circuits that act on $n$ qubits for all $n$, where the gates are chosen from some elementary set, such as that above.  We say that the circuits are efficient if the number of gates grows only polynomially with $n$.

\section{Preliminaries}

Here we define some notation and concepts that are used throughout this thesis.

\subsection{Pauli Matrices}

We will often use the Pauli matrices:
\ba
\sigma_0 = \sigma_I &= \begin{pmatrix}1 & 0 \\ 0 & 1\end{pmatrix} 
\qquad
\sigma_1 = \sigma_x = \begin{pmatrix} 0& 1 \\ 1 & 0\end{pmatrix} \nonumber \\
\sigma_2 = \sigma_y &= \begin{pmatrix} 0& -i \\ i & 0\end{pmatrix} 
\qquad
\sigma_3 = \sigma_z = \begin{pmatrix} 1& 0 \\ 0 & -1\end{pmatrix} 
\ea
We can extend these to matrices on $n$ qubits by taking tensor products.  Let $p \in \{0, 1, 2, 3\}^n$ and $\sigma_p = \sigma_{p_1} \ot \sigma_{p_2} \ot \cdots \ot \sigma_{p_n}$ where $p_i$ is the value at the $i^{\text th}$ position in the string $p$.  We will sometimes use the alternative notation of $p \in \{I, x, y, z\}^n$.  We will refer to $\sigma_p$ as Pauli matrices on $n$ qubits.  There are $4^n$ Pauli matrices and they are orthogonal i.e.~$\tr \sigma_p \sigma_q = 2^n \delta_{pq}$.  Note also that $\sigma_p^2 = \sigma_0$, the identity.  Also, Pauli matrices either commute or anticommute.

The Pauli matrices form an orthogonal basis for matrices in $\bbC^{2^n \times 2^n}$.  Therefore any such matrix $A$ can be written in the form $\sum_p \gamma(p) \sigma_p$, with $\gamma(p) = \frac{1}{2^n} \tr \sigma_p A$.  Sometimes we will choose a different normalisation for the Pauli coefficients $\gamma(p)$ but will make this clear from the context.

\subsection{The Symmetric Group and Permutation Operators}

The symmetric group is the group of all permutations.  Let $\cS_N$ be the symmetric group on $N$ objects.  Then for $\pi \in \cS_N$ define the corresponding permutation operator
\be
B(\pi):=\sum_{i=1}^N \ket{\pi(i)}\bra{i}
\ee
to be the matrix that permutes the basis states $\ket{1},\ldots,\ket{N}$ according to $\pi$.

On the other hand, if we have $k$ $N$-dimensional systems then for $\pi\in\cS_k$ define the subsystem permutation operator $S(\pi)$ by
\be
S(\pi) := \sum_{n_1=1}^N\cdots \sum_{n_k=1}^N \ket{n_{\pi^{-1}(1)},\ldots
n_{\pi^{-1}(k)}}\bra{n_1,\ldots,n_k}.
\ee

Now we present two useful lemmas about subsystem permutation operators.
\begin{lemma}
\label{lem:TraceCycles}
Let $C$ be a cycle of length $c$ in $S_c$.  Then
\begin{equation*}
\tr \l( C \l( A_1 \otimes A_2 \otimes \ldots \otimes A_c \r)\r) = \tr \l(A_{C(1)} A_{C^{\circ 2}(1)} A_{C^{\circ 3}(1)} \ldots A_1 \r).
\end{equation*}
\end{lemma}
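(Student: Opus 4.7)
The plan is to prove the identity by a direct expansion of the subsystem permutation operator in the computational basis and then use the fact that $C$ is a single $c$-cycle to collapse what survives into one matrix trace.

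First, I substitute
\[
C = \sum_{n_1,\ldots,n_c} \ket{n_{C^{-1}(1)},\ldots,n_{C^{-1}(c)}}\bra{n_1,\ldots,n_c}
\]
and evaluate $\tr(C(A_1\ot\cdots\ot A_c))$ in the basis $\{\ket{m_1,\ldots,m_c}\}$, using $\bra{n_1,\ldots,n_c}(A_1\ot\cdots\ot A_c)\ket{m_1,\ldots,m_c} = \prod_j (A_j)_{n_j,m_j}$. The inner product $\braket{m_1,\ldots,m_c}{n_{C^{-1}(1)},\ldots,n_{C^{-1}(c)}}$ then produces the delta constraints $n_k = m_{C(k)}$; performing the $n$-sums leaves
\[
\tr\bigl(C(A_1\ot\cdots\ot A_c)\bigr) = \sum_{m_1,\ldots,m_c}\prod_{j=1}^{c}(A_j)_{m_{C(j)},\,m_j}.
\]

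Second, I exploit the cycle structure. Setting $a_k := C^{\circ k}(1)$, the orbit $\{a_0,\ldots,a_{c-1}\}$ equals $\{1,\ldots,c\}$ because $C$ is a single $c$-cycle, so relabeling $b_k := m_{a_k}$ with the cyclic identification $b_c = b_0$ turns the summand into $\prod_{k=0}^{c-1}(A_{a_k})_{b_{k+1},\,b_k}$. Summing over $b_0,\ldots,b_{c-1}$ is precisely the expansion of the trace of the ordered matrix product of the $A_{a_k}$'s around the cycle, and cyclicity of the ordinary trace then lets me rotate $A_{a_0}=A_1$ into the final position to match the form $\tr(A_{C(1)} A_{C^{\circ 2}(1)} \cdots A_1)$ stated in the lemma.

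The whole argument is index bookkeeping; the only real subtlety is the interplay between the $C^{-1}$ inside the definition of $S(\pi)$ and the $C$ appearing in the statement, so I would verify the small cases $c=2$ (which reduces to the familiar $\tr(\swap(A\ot B))=\tr(AB)$) and $c=3$ before writing down the general formula, in order to pin down the direction of the cyclic product.
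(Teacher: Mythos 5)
Your proposal follows the same route as the paper: expand the trace in a product basis, let the permutation act to produce delta constraints on the indices, observe that the surviving summand is a product of matrix elements chained along the orbit of $1$ under $C$, and resolve the identity to recognise a single matrix trace.

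One point deserves a flag, and it is precisely the subtlety you anticipated in your last paragraph. From your (correct) intermediate expression $\sum_{m}\prod_{j}(A_j)_{m_{C(j)},m_j}$, the relabelling $b_k = m_{a_k}$ with $a_k = C^{\circ k}(1)$ gives the summand $\prod_{k=0}^{c-1}(A_{a_k})_{b_{k+1},b_k}$; contracting adjacent indices in the standard matrix-product convention yields $\tr\bigl(A_{a_{c-1}}A_{a_{c-2}}\cdots A_{a_1}A_{a_0}\bigr)$, i.e.\ $\tr\bigl(A_{C^{-1}(1)}A_{C^{-2}(1)}\cdots A_{C(1)}A_1\bigr)$. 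The lemma as stated asks for $\tr\bigl(A_{C(1)}A_{C^{\circ 2}(1)}\cdots A_1\bigr)$, and these two expressions are reverses of one another, not cyclic rotations, so the appeal to cyclicity of the trace does not close the gap. (The paper's own proof has the mirror-image slip in its second displayed line, writing $\bra{i_k}A_{C(k)}\ket{i_{C(k)}}$ where the permutation action actually gives $\bra{i_{C(k)}}A_k\ket{i_k}$, so the two derivations end up consistent with each other but both reversed relative to a careful index-check.) The discrepancy vanishes for $c=2$, and more generally whenever the $A_i$ are symmetric or the trace is taken of a reversible cyclic word (e.g.\ all $A_i$ Hermitian and the quantity real), which covers every use of this lemma in the thesis; but if you carry out the $c=3$ check you flagged, you will find it is the version with $C^{-1}$ that survives, and it is worth recording that explicitly rather than invoking cyclicity.
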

\begin{proof}
We have
\bas
\tr ( C ( A_1 &\otimes A_2 \otimes \ldots \otimes A_c )) \\
&= \sum_{i_1, i_2, \ldots, i_c} \bra{i_1 i_2 \ldots i_c} C \l( A_1 \otimes A_2 \otimes \ldots \otimes A_c \r) \ket{i_1 i_2 \ldots i_c} \\
&= \sum_{i_1, i_2, \ldots, i_c} \bra{i_1} A_{C(1)} \ket{i_{C(1)}} \bra{i_2} A_{C(2)} \ket{i_{C(2)}} \ldots \bra{i_c} A_{C(c)} \ket{i_{C(c)}} \\
&= \sum_{i_1, i_2, \ldots, i_c} \bra{i_1} A_{C(1)} \ket{i_{C(1)}} \bra{i_{C(1)}} A_{C^{\circ 2}(1)} \ket{i_{C^{\circ 2}(1)}} \ldots \bra{i_{C^{\circ c-1}(1)}} A_{1} \ket{i_1}
\eas
since $C^{\circ c}(1)=1$.  Evaluate the sum using the resolution of the identity to get the result.
\end{proof}
A simple example of this Lemma is that
\be
\tr (\swap (A \ot B)) = \tr A B
\ee
where $\swap$ is the swap operator.

We also work out the Pauli expansion of the swap operator.  To stress that this result does not depend on the choice of orthogonal basis we prove it in full generality.
\begin{lemma}
\label{lem:Swap}
The swap operator $\swap$ on two $d$-dimensional systems can be written as
\begin{equation*}
\frac{1}{d} \sum_{p} \sigma_p \otimes \sigma_p.
\end{equation*}
where $\{ \sigma_p \}$ form a Hermitian orthogonal basis with $\tr \sigma_p^2 = d$.
\end{lemma}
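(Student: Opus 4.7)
The plan is to expand $\swap$ in the orthogonal basis $\{\sigma_p \ot \sigma_q\}$ of operators on the two $d$-dimensional systems, and compute the coefficients via the previous lemma.

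First I would note that since $\{\sigma_p\}$ is an orthogonal basis for $d \times d$ matrices (with $d^2$ elements, by dimension count and orthogonality), and the $\sigma_p$ are Hermitian, we have $\tr(\sigma_p \sigma_q) = \tr(\sigma_p^\dagger \sigma_q) = d\,\delta_{pq}$. Consequently, $\{\sigma_p \ot \sigma_q\}_{p,q}$ is an orthogonal basis of operators on $\bbC^d \ot \bbC^d$ satisfying $\tr\l((\sigma_p \ot \sigma_q)(\sigma_{p'} \ot \sigma_{q'})\r) = d^2 \delta_{pp'}\delta_{qq'}$.

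Next I would write the expansion
\bes
\swap = \sum_{p,q} c_{pq}\, \sigma_p \ot \sigma_q, \qquad c_{pq} = \frac{1}{d^2} \tr\l( \swap (\sigma_p \ot \sigma_q) \r),
\ees
and compute the coefficients using \lemref{TraceCycles} with the $2$-cycle $C = \swap$, which gives $\tr(\swap (\sigma_p \ot \sigma_q)) = \tr(\sigma_p \sigma_q) = d\,\delta_{pq}$. Substituting yields $c_{pq} = \delta_{pq}/d$, and hence
\bes
\swap = \frac{1}{d} \sum_p \sigma_p \ot \sigma_p,
\ees
as claimed.

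There is no real obstacle here; the only thing to be careful about is to establish that $\{\sigma_p \ot \sigma_q\}$ is genuinely a (complete) basis so that the expansion is justified, which follows from the Hermitian orthogonality plus a dimension count. Everything else is a one-line application of the preceding lemma.
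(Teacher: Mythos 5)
Your proof is correct and follows essentially the same route as the paper: expand $\swap$ in the orthogonal basis $\{\sigma_p \ot \sigma_q\}$ and use \lemref{TraceCycles} to compute the coefficients $\tr(\swap(\sigma_p \ot \sigma_q)) = \tr(\sigma_p\sigma_q) = d\,\delta_{pq}$. You simply spell out the normalisation and the completeness of the tensor basis a bit more explicitly than the paper does.
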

\begin{proof}
Expand $\swap$ in the basis and use \lemref{TraceCycles}:
\begin{align*}
\tr \l(\l(\sigma_p \otimes \sigma_q\r) \swap\r) &= \tr \sigma_p \sigma_q \\
&=
\begin{cases}
d	&	p=q \\
0	&	{\rm otherwise}.
\end{cases}
\end{align*}
The given sum has the correct coefficients in the basis therefore $\frac{1}{d} \sum_{p} \sigma_p \otimes \sigma_p=\swap$.
\end{proof}

\subsection{Asymptotic Notation}

We will use the following standard asymptotic notation.

\begin{definition}
\label{def:BigO}
$f(n) = O(g(n))$ if there exists $c, n_0 > 0$ such that $0 \le f(n) \le c g(n)$ for all $n \ge n_0$.
\end{definition}
\begin{definition}
\label{def:BigOmega}
$f(n) = \Omega(g(n))$ if there exists $c, n_0 > 0$ such that $f(n) \ge c g(n) \ge 0$ for all $n \ge n_0$.
\end{definition}
\begin{definition}
\label{def:BigTheta}
$f(n) = \Theta(g(n))$ if $f(n) = O(g(n))$ and $f(n) = \Omega(g(n))$.
\end{definition}
\begin{definition}
\label{def:LittleO}
$f(n) = o(g(n))$ if $\lim_{n \rightarrow \infty} f(n)/g(n) = 0$.
\end{definition}

\subsection{Norms and Superoperator Norms}

\subsubsection{Norms}

We will make heavy use of Schatten $p$-norms:
\begin{definition}
\label{def:SchattenPNorms}
For $A$ a $d \times d$ matrix, the Schatten $p$-norm is given by
\be
|| A ||_p = \l( \sum_{i = 1}^d \sigma_i^p \r)^{1/p}
\ee
where $\sigma_i$ are the singular values of $A$.
\end{definition}
In particular, $|| A ||_1 = \sum_{i = 1}^d \sigma_i = \tr \sqrt{A^\dagger A}$, $|| A ||_2 = \sqrt{\sum_{i = 1}^d \sigma_i^2} = \sqrt{\tr A^\dagger A}$ and $|| A ||_\infty = \max_i \sigma_i$.

These norms satisfy the following simple relationships:
\ba
|| A ||_2 &\le || A ||_1 \le \sqrt{d} || A ||_2 \\
|| A ||_\infty &\le || A ||_1 \le d || A ||_\infty \\
|| A ||_\infty &\le || A ||_2 \le \sqrt{d} || A ||_\infty.
\ea

\subsubsection{Superoperator Norms}

Just as state norms can be used to bound the distinguishability of states, superoperator norms bound how easy it is to tell different superoperators apart.  We start with the 1-norm:
\begin{definition}
\label{def:1normsuper}
The 1-norm of a superoperator $\cE$ is given by
\bes
\vectornorm{\cE}_{1 \rightarrow 1} = 
\sup_{X \ne 0} \frac{\vectornorm{\cE(X)}_{1}}{\vectornorm{X}_{1}}.
\ees
\end{definition}
The main problem with this definition is that the 1-norm is not stable under tensoring with the identity i.e.~there exist channels with $||\cE \ot \text{\rm id}_d||_{1 \rightarrow 1} > || \cE ||_{1 \rightarrow 1}$, where $\text{\rm id}_d$ is the identity channel on $d$ dimensions.  This means that some channels are easier to distinguish by inputting entangled states.  If the norm is to measure the distinguishability of channels it should take this into account.  To overcome this problem, the diamond norm is defined:
\begin{definition}[\cite{KSV02}]
\label{def:diamondNorm}
The diamond norm of a superoperator $\cE$ is given by
\bes
\vectornorm{\cE}_{\diamond} = \sup_d \vectornorm{\cE \otimes \text{\rm id}_d}_{1 \rightarrow 1} = 
\sup_d \sup_{X \ne 0} \frac{\vectornorm{(\cE \otimes
    \text{\rm id}_d)X}_{1}}{\vectornorm{X}_{1}}.
\ees
\end{definition}
If follows immediately that $|| \cE ||_{1 \rightarrow 1} \le || \cE ||_\diamond$.  Also it is shown in \cite{KSV02} that the diamond norm satisfies $||\cE \ot \text{\rm id}_d||_\diamond = || \cE ||_\diamond$ for all channels $\cE$ and dimensions $d$ and that the dimension $d$ in the supremum can be taken to be the same as the dimension of the system $\cE$ acts on.  Operationally, the diamond norm of the difference between two quantum operations tells us the largest possible probability of distinguishing the two operations if we are allowed to have them act on part of an arbitrary, possibly entangled, state.

We will also use the 2-norm:
\begin{definition}
\label{def:2normsuper}
The 2-norm of a superoperator $\cE$ is given by
\bes
\vectornorm{\cE}_{2 \rightarrow 2} = 
\sup_{X \ne 0} \frac{\vectornorm{\cE(X)}_{2}}{\vectornorm{X}_{2}}.
\ees
\end{definition}

In \cite{VanDamThesis} Appendix C, the following relationships between the superoperator norms are proven:
\ba
|| \cE ||_{2 \rightarrow 2} &\le \sqrt{d} || \cE ||_{1 \rightarrow 1} \label{eq:2normsuper1normsuper}\\
|| \cE ||_{1 \rightarrow 1} &\le \sqrt{d} || \cE ||_{2 \rightarrow 2} \label{eq:1normsuper2normsuper}\\
|| \cE ||_\diamond &\le d || \cE ||_{1 \rightarrow 1} \label{eq:diamondnorm1normsuper}\\
|| \cE ||_\diamond &\le d || \cE ||_{2 \rightarrow 2}. \label{eq:diamondnorm2normsuper}
\ea

\section{Previous Publications}

The majority of this thesis has been published previously and some is work in collaboration.

\chapref{RandomCircuits} is joint work with Aram Harrow and is available as ``Random Quantum Circuits are Approximate 2-designs'', Communications in Mathematical Physics, Volume 291, Number 1, Pages 257-302.  It is also available as a pre-print: arXiv:0802.1919.

\chapref{TPE} is also joint work with Aram Harrow and is available as ``Efficient Quantum Tensor Product Expanders and $k$-Designs'', Proceedings of RANDOM 2009, LNCS, Volume 5687, Pages 548-561.  It is also available as a pre-print: arXiv:0811.2597.

\chapref{DesignApplications} from \secref{LargeDeviations} onwards is available as ``Large deviation bounds for $k$-designs", Proceedings of the Royal Society A, Volume 465, Number 2111, Pages 3289-3308.  It is also available as a pre-print: arXiv:0903.5236.

\chapref{LearningCliffords} is available as ``Learning and Testing Algorithms for the Clifford Group'', Physical Review A, Volume 80, Number 5, Page 052314.  It is also available as a pre-print: arXiv:0907.2833.

\part{Quantum Pseudo-randomness}
\label{part:PR}

\chapter{Introduction to Quantum Pseudo-randomness}
\label{chap:PRIntro}

Randomness is an important resource in both classical and quantum computing.  It has applications in virtually all areas of computer science, including algorithms, cryptography and networking.  Randomness can improve efficiency or, as in the case of cryptography, allow us to perform tasks that we would not be able to do with deterministic resources.

An example of an algorithm where a randomised algorithm is faster than any known deterministic algorithm is polynomial identity testing.  Here, the task is to determine if two polynomials are identically equal.  By evaluating the polynomials on random inputs, identity testing can be done in polynomial time whereas no polynomial time deterministic algorithm is known.

Also, a commonly used randomised algorithm is that of randomised quicksort.  In quicksort, a pivot element is chosen and elements smaller than this are placed to the left and larger elements to the right.  Then these two parts are sorted recursively.  However, the choice of pivot element greatly affects the run-time of the algorithm.  If chosen poorly (for example so that there is only one element smaller than the pivot), the algorithm runs in $O(n^2)$ time.  If chosen well, the algorithm runs in $O(n \log n)$ time.  Choosing the pivot element randomly will be a good choice on average, giving expected run-time $O(n \log n)$ \cite{MotwaniRaghavan}.  However, this run-time can be achieved deterministically using deterministic median finding \cite{LinearTimeMedian} but in practice the randomised method is more efficient.

As another example, many primality testing algorithms are randomised because of their simplicity, even though a deterministic polynomial-time algorithm is now known.  Also, in the field of communication complexity, separations between deterministic and randomised algorithms can be proven.  The deterministic complexity of evaluating the equality function (to determine if Alice and Bob's strings are equal) is $\Theta(n)$, whereas the randomised complexity is $\Theta(\log n)$ \cite{KushilevitzNisan}.  As yet another example of randomness in classical computer science, in networking a random delay is often inserted after a collision so the nodes wait different times so are likely to avoid another collision.

In this part, we seek to extend some of these gains of using randomness to quantum computing.  We wish to find applications of randomness to find new quantum algorithms and constructions.

Besides the computer science applications, there are also physical reasons for studying randomness in quantum mechanics.  Some systems can be modelled as interacting randomly and it is interesting to ask what the limiting state (or distribution on states) is for such a system.  Also of great interest is how quickly the system reaches this stationary state.  If the time taken grows too quickly with the size of the system (for example, exponentially) then for any system apart from the most trivial, the stationary state will never be reached and will not be seen in physical systems.  However, if the time is small (for example, a small polynomial), then the stationary state can be reached quickly and will be observed in real systems.  It is in problems like this that physicists must consider the computer science aspects of their models.  We study problems of this kind in Chapters \ref{chap:RandomCircuits} and \ref{chap:DesignApplications}.

\section{Random Unitaries}
\label{sec:RandomUnitaries}

In quantum computing, operations are unitary gates and randomness is often used in the form of random unitary operations.  Random unitaries have algorithmic uses (e.g.~\cite{Sen05}), cryptographic applications (e.g.~\cite{AmbainisSmith04,RandomizingQuantumStates04}) and applications to fundamental quantum protocols (e.g.~\cite{RemoteStatePreparation05, SuperdenseCodingHHL}).  For information-theoretic applications, it is often convenient to use unitary matrices drawn from the uniform distribution on the unitary group, also known as the Haar measure.  This measure is the unique unitarily invariant measure i.e.~the only measure $dU$ on the unitary group $\cU(d)$ where $\int_{\cU(d)} f(U) dU = \int_{\cU(d)} f(UV) dU$ for all functions $f$ and unitaries $V$.  For random states, we write the unitarily invariant measure on $d$-dimensional states as $d\psi$.  This can be thought of as a Haar distributed unitary applied to any fixed pure state.  It is also known as the Fubini-Study metric.

However, in both classical and quantum computing, obtaining random bits is often expensive, and so it is often desirable to minimise their use.  For example, in classical computing, expanders (discussed in \chapref{TPE}) and $k$-wise independent functions (see \secref{k-designs}) have been developed for this purpose and have found wide application.  We will spend a great deal of time exploring quantum analogues of these: quantum expanders and $k$-designs.

In addition to randomness being expensive, there is an even more pressing problem when using random unitaries and states.  An $n$-qubit unitary is defined by $4^n$ real parameters, and so cannot even be approximated efficiently using a subexponential amount of time or randomness.  So any application that requires a random unitary cannot be efficient.  Instead, we will seek to construct efficient pseudo-random ensembles of unitaries which resemble the Haar measure for certain applications.  For example, a $k$-design (often referred to as a $t$-design, or a $(k,k)$-design), as mentioned above, is a distribution on unitaries which matches the first $k$ moments of the Haar distribution.  $k$-designs have found many uses which are explored in \chapref{DesignApplications}.

In \secref{k-designs}, we formally define $k$-designs and summarise known constructions.  Then in \chapref{RandomCircuits} we show that, for a natural model of a random quantum circuit, the distribution quickly converges to that of a 2-design.  This gives an efficient approximate 2-design construction and also has physical applications.  In \chapref{TPE}, we provide an efficient construction of a unitary $k$-design for any $k$ (although there are restrictions on the dimension, see later).  Then in \chapref{DesignApplications}, we discuss applications of designs, including to derandomising constructions that use large deviation bounds.

Parts of this chapter have been published previously in \cite{RandomCircuits,TPE,LargeDeviationskDesigns} and parts are joint work with Aram Harrow.

\section{\texorpdfstring{$k$}{k}-designs}
\label{sec:k-designs}

A unitary $k$-design is a distribution of unitaries that gives the same expectations of polynomials of degree at most $k$ as the Haar measure.  This is just like Gaussian quadrature, where integrals of polynomials are calculated by sums.  Gaussian quadrature says that there exist sample points $\{ x_i \}$ and weights $\{ w_i \}$ so that for all polynomials $f$ of degree at most $2b-1$,
\be
\sum_{i=1}^b w_i f(x_i) = \int_{p}^q dx f(x)
\ee
for some fixed limits $p$ and $q$.  This allows the integrals to be calculated much more efficiently.  A unitary $k$-design is the same, except the polynomial is on elements of unitary matrices from the unitary group rather than numbers on the real line.  The $k$ refers to the degree of the polynomial.  We will also discuss state designs, where the function is on coefficients of states rather than unitaries.

\subsection{\texorpdfstring{$k$}{k}-wise Independence}

$k$-designs can also be thought of as a quantum analogue of $k$-wise independence.  A sequence of random variables $X_1, \ldots, X_n$ is $k$-wise independent if, for any subset of size $j \le k$,
\be
\Pr(X_{i_1} = x_{i_1} , \ldots, X_{i_j}  = x_{i_j}) = \Pr(X_{i_1} = x_{i_1}) \ldots \Pr(X_{i_j}  = x_{i_j}).
\ee
As a simple example of how this can save randomness, consider the set
\be
\{000, 011, 101, 110\}.
\ee
If an element is chosen uniformly at random from this set, the probability distribution of the values of any two bits is the same as if all three bits were chosen independently.  This is therefore a 2-wise independent set, and saves one bit of randomness.  In general, if $k \ll n$, an exponential saving in randomness can be made in this way.  Efficient constructions of exactly $k$-wise independent sets are known \cite{ExactkWiseIndep} and more efficient approximate constructions are given in \cite{NaorNaorkWiseIndep}.

A related concept is that of $k$-wise independent permutations.  These are sets of permutations with the property that, when a permutation is chosen randomly from this set and applied to $n$ points, the distribution of the positions of any $k$ points is the same as if a uniformly random permutation was applied.  For example, a random cyclic shift is a 1-wise independent permutation.  Again, an exponential saving of randomness is possible \cite{KNRkWiseIndepPerms}.

We seek to construct quantum $k$-designs to achieve a similar saving of randomness for quantum algorithms.  We now formally define $k$-designs.

\subsection{Exact Designs}

We will use the following notation to distinguish the measure we are using.  Write $\bbE$ for the expectation with $\bbE_{U \sim \nu}$ meaning the expectation when $U$ is chosen from the measure $\nu$.  If the measure is the Haar measure in dimension $d$ we will write $\bbE_{U \sim \cU(d)}$.  We use the same subscripts for probabilities so $\Pr_{U \sim \cU(d)}$ denotes the probability when $U$ is chosen from the Haar measure, etc..  When considering random states, we will write $\bbE_{\ket{\psi} \sim \cS(d)}$, etc..

\subsubsection{State designs}

A $k$-design is an ensemble of states such that, when one state is
chosen from the ensemble and copied $k$ times, it is indistinguishable
from a uniformly random state.  The state $k$-design definition we use is due to Ambainis and Emerson \cite{AmbainisEmerson07}:
\begin{definition}[\cite{AmbainisEmerson07}, Definition 1]
\label{def:StateDesign}
An ensemble of quantum states $\nu = \{ p_i, \ket{\psi_i} \}$ is a state $k$-design if
\be
\bbE_{\ket{\psi} \sim \nu} \l[ \l( \ket{\psi} \bra{\psi} \r)^{\ot k} \r] = \bbE_{\ket{\psi} \sim \cS(d)} \l[ \l( \ket{\psi} \bra{\psi} \r)^{\ot k} \r]
\ee
\end{definition}
We can evaluate the integral on the right hand side:
\begin{lemma}
\label{lem:SymmetricStateAverage}
\be
\int_\psi \l( \ket{\psi} \bra{\psi} \r)^{\ot k} d\psi = \frac{\Pi_{+k}}{{k+d-1 \choose k}}
\ee
where $\Pi_{+k}$ is the
projector onto the symmetric subspace of $k$ $d$-dimensional spaces.
\end{lemma}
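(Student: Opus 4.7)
The plan is to use a symmetry argument based on Schur's lemma together with a dimension/trace calculation to pin down the normalisation.

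First I would observe two key invariance/support properties of the operator $M := \int_\psi (\ket{\psi}\bra{\psi})^{\ot k} d\psi$. Since each $\ket{\psi}^{\ot k}$ lies in the symmetric subspace of $(\bbC^d)^{\ot k}$, the operator $M$ is supported on that subspace, i.e.\ $\Pi_{+k} M \Pi_{+k} = M$. Second, by the defining invariance of the Haar measure, for every $V \in \cU(d)$ we have
\bes
V^{\ot k} M (V^\dagger)^{\ot k} = \int_\psi \l(V\ket{\psi}\bra{\psi}V^\dagger\r)^{\ot k} d\psi = \int_\psi \l(\ket{\psi}\bra{\psi}\r)^{\ot k} d\psi = M,
\ees
so $M$ commutes with $V^{\ot k}$ for all $V$.

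Next I would invoke the standard fact that the representation of $\cU(d)$ on the symmetric subspace of $(\bbC^d)^{\ot k}$ via $V \mapsto V^{\ot k}|_{\text{sym}}$ is irreducible. Combined with the first step, $M$ viewed as an operator on the symmetric subspace commutes with every element of an irreducible representation, so by Schur's lemma $M = c\, \Pi_{+k}$ for some scalar $c \geq 0$.

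Finally I would fix $c$ by taking the trace. On one side, $\tr M = \int_\psi \tr (\ket{\psi}\bra{\psi})^{\ot k} d\psi = \int_\psi 1\, d\psi = 1$, since the Haar measure is normalised. On the other side, $\tr \Pi_{+k}$ equals the dimension of the symmetric subspace, which is the number of multisets of size $k$ from $d$ symbols, namely $\binom{k+d-1}{k}$. Therefore $c = 1/\binom{k+d-1}{k}$, which gives the claim. The only non-routine ingredient is the irreducibility of $V \mapsto V^{\ot k}$ on the symmetric subspace; I would cite this as standard (a consequence of Schur--Weyl duality), since doing it from scratch would be the longest part of the argument but is orthogonal to the point of the lemma.
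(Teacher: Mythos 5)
Your proof is correct, but it is precisely the ``standard proof'' that the paper explicitly mentions and then deliberately sets aside. You argue that $M := \int_\psi (\proj{\psi})^{\ot k} d\psi$ is supported on the symmetric subspace and commutes with the irreducible representation $V \mapsto V^{\ot k}$ there, so Schur's lemma forces $M = c\,\Pi_{+k}$, and a trace fixes $c$. The paper instead proceeds as follows: from the commutation of $M$ with $U^{\ot k}$ for all $U$, it invokes Schur--Weyl duality in the form that the commutant of $\{U^{\ot k}\}$ is spanned by the subsystem permutation operators $S(\pi)$, $\pi \in \cS_k$, so $M = \sum_{\pi} \alpha_\pi S(\pi)$; then it uses the permutation invariance of $M$ to force all $\alpha_\pi$ to be equal, identifies the resulting sum with $\Pi_{+k} = \frac{1}{k!}\sum_\pi S(\pi)$, and normalises by the trace. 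The two arguments are equivalent in strength and both ultimately rest on Schur--Weyl duality, but they serve different purposes: your route reaches the answer slightly more directly if one already accepts irreducibility on the symmetric subspace, whereas the paper's route is chosen because the ``write the invariant object as a linear combination of $S(\pi)$ and match coefficients'' technique is the workhorse reused throughout Chapter~\ref{chap:RandomCircuits} (e.g.\ in the analysis of $\hat G$), so introducing it here pays forward. Your observation that $M$ is supported on the symmetric subspace is a nice touch that the paper's phrasing leaves implicit; no gap in your argument.
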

\begin{proof}
The standard proof (see e.g.~\cite{GoodmanWallach98} or \cite{BBDEJM97}) involves showing that $\int_\psi \l( \ket{\psi} \bra{\psi} \r)^{\ot k} d\psi$ commutes with all elements of an irreducible representation (irrep) of the unitary group that acts on the symmetric subspace so by Schur's lemma must be proportional to the projector onto the symmetric subspace.  However, here we give an alternative proof that introduces a technique we will use later.

By the unitary invariance of the Haar measure, $\int_\psi \l( \ket{\psi} \bra{\psi} \r)^{\ot k} d\psi$ commutes with $U^{\ot k}$ for all unitaries $U$.  By Schur-Weyl duality (see e.g.~\cite{GoodmanWallach98}), this implies that the integral is a linear combination of subsystem permutation operators.  Therefore we have
\be
\int_\psi \l( \ket{\psi} \bra{\psi} \r)^{\ot k} d\psi = \sum_{\pi \in S_k} \alpha_{\pi} S(\pi).
\ee
However, the integral is invariant under permutations so $\alpha_{\pi}$ must be the same for all permutations $\pi$.  Using $\Pi_{+k} = \frac{1}{k!} \sum_{\pi \in S_k} S(\pi)$ and finding the normalisation by taking the trace (the dimension of the symmetric subspace is ${k+d-1 \choose k}$) proves the result.
\end{proof}

We will also state equivalent definitions of designs in terms of polynomials of matrix elements of the unitary or coefficients of the state.  First we must define what we mean by the degree of a polynomial:
\begin{definition}
A monomial in elements of a matrix $U$ or state $\ket{\psi}$ is of degree $(k_1,k_2)$ if it contains $k_1$ conjugated elements and $k_2$ unconjugated elements.  We call it balanced if $k_1=k_2$ and will simply say a balanced monomial has degree $k$ if it is degree $(k,k)$.  A balanced polynomial is of degree $k$ if it is a sum of balanced monomials of degree at most $k$, with at least one monomial with degree equal to $k$.
\end{definition}
So that, in this definition, $U_{pq} U^*_{rs}$ is a balanced monomial of degree $(1,1)$ and $U_{pq} U_{rs}$ is a monomial of degree $(2,0)$ and is unbalanced.  For the state $\ket{\psi} = \sum_i \alpha_i \ket{i}$, $\alpha_i \alpha_j^*$ is a balanced monomial of degree $(1, 1)$.

We can then define state $k$-designs in terms of monomials:
\begin{definition}[\cite{AmbainisEmerson07}, Definition 3]
\label{def:StateDesignMonomials}
An ensemble of quantum states $\nu$ is a state $k$-design if, for all balanced monomials $M$ of degree at most $k$,
\be
\bbE_{\ket{\psi} \sim \nu} M(\ket{\psi}) = \bbE_{\ket{\psi} \sim \cS(d)} M(\ket{\psi})
\ee
\end{definition}
This is an equivalent definition to \defref{StateDesign}:
\begin{lemma}[\cite{AmbainisEmerson07}, Theorem 5]
The state design definitions \ref{def:StateDesign} and \ref{def:StateDesignMonomials} are equivalent.
\end{lemma}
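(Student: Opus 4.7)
The core observation behind the equivalence is that the matrix elements of $(\ket{\psi}\bra{\psi})^{\otimes k}$ in the computational basis are exactly the balanced monomials of degree $(k,k)$ in the amplitudes of $\ket{\psi}$. Explicitly, writing $\ket{\psi} = \sum_i \alpha_i \ket{i}$,
\bes
\bra{i_1,\ldots,i_k}(\ket{\psi}\bra{\psi})^{\otimes k}\ket{j_1,\ldots,j_k} = \alpha_{i_1}\cdots\alpha_{i_k}\alpha_{j_1}^*\cdots\alpha_{j_k}^*,
\ees
so ranging over all tuples $(i_1,\ldots,i_k,j_1,\ldots,j_k)$ sweeps out precisely the balanced monomials of degree exactly $(k,k)$. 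The plan is to use this correspondence in both directions.

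For the direction \defref{StateDesign} $\Rightarrow$ \defref{StateDesignMonomials}, the matrix equation in \defref{StateDesign} equates the expectation of every entry of $(\ket{\psi}\bra{\psi})^{\otimes k}$, so by the observation above, $\bbE_\nu M(\ket{\psi}) = \bbE_{\cS(d)} M(\ket{\psi})$ holds for every balanced monomial $M$ of degree exactly $k$. To extend this to balanced monomials of degree $j < k$, I would multiply by $1 = \braket{\psi}{\psi}^{k-j} = \bigl(\sum_m |\alpha_m|^2\bigr)^{k-j}$ and expand: any degree-$j$ balanced monomial becomes a finite sum of degree-$k$ balanced monomials, whose expectations already agree under $\nu$ and under the Haar measure.

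For the converse \defref{StateDesignMonomials} $\Rightarrow$ \defref{StateDesign}, I simply read off each entry of $\bbE_\nu[(\ket{\psi}\bra{\psi})^{\otimes k}]$ and $\bbE_{\cS(d)}[(\ket{\psi}\bra{\psi})^{\otimes k}]$ in the computational basis; each entry is the expectation of a balanced monomial of degree exactly $k$, and by hypothesis these agree. Since all matrix elements coincide, the operator identity of \defref{StateDesign} follows.

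There is no real obstacle here; the lemma is essentially a bookkeeping statement, and the only mild subtlety is handling the sub-maximal degrees $j < k$ in the first direction, which is resolved by the trace-padding argument using $\braket{\psi}{\psi}=1$. The proof should fit in a few lines once the identification of matrix entries with monomials is stated.
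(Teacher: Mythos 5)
Your proof is correct and follows the same route as the paper: identify the matrix entries of $(\ket{\psi}\bra{\psi})^{\otimes k}$ in the computational basis with the balanced monomials of degree exactly $k$, then pass between the operator equality and the monomial equalities by reading off entries. Your handling of monomials of degree $j<k$ by padding with $\braket{\psi}{\psi}^{k-j}$ and expanding is just an algebraic restatement of the paper's partial-trace reduction, so the two arguments are essentially identical.
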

\begin{proof}
Firstly, we only need to prove the result for $M$ of degree exactly $k$, since by partial tracing this implies the result for any smaller $k$.

Each entry in the matrix $\bbE_{\ket{\psi} \sim \nu} \l[ \l( \ket{\psi} \bra{\psi} \r)^{\ot k} \r]$ is the expectation of a monomial of degree $k$, with the state chosen from the design.  Further, the corresponding entry in $\bbE_{\ket{\psi} \sim \cS(d)} \l[ \l( \ket{\psi} \bra{\psi} \r)^{\ot k} \r]$ is the expectation of the same monomial but with the state chosen from the Haar measure.  If the ensemble of states satisfies \defref{StateDesignMonomials} then these are equal, so the ensemble also satisfies \defref{StateDesign}.

On the other hand, for every balanced monomial of degree $k$, there is an entry in $\bbE_{\ket{\psi} \sim \nu} \l[ \l( \ket{\psi} \bra{\psi} \r)^{\ot k} \r]$ equal to its expectation.  Therefore, if the ensemble of states satisfies \defref{StateDesign} then it also satisfies \defref{StateDesignMonomials}.
\end{proof}

\subsubsection{Unitary designs}

Consider having $k$ $d$-dimensional systems in any initial state.  A unitary $k$-design is an ensemble of unitaries such that when a unitary is randomly selected from it and applied to each of the $k$ systems, the overall state is indistinguishable from choosing a uniformly random unitary.  This can be seen as a generalisation of state designs in that any column of a unitary $k$-design is a state $k$-design.  Formally, we have:
\begin{definition}
\label{def:UnitaryDesign}
Let $\nu$ be an ensemble of unitary operators.  Define
\be
\label{eq:UnitaryDesign}
\cG_{\nu}(\rho) = \bbE_{U \sim \nu} \l[ U^{\ot k} \rho (U^\dagger)^{\ot k} \r]
\ee
and
\be
\cG_H(\rho) = \bbE_{U \sim \cU(d)} \l[ U^{\ot k} \rho (U^\dagger)^{\ot k} \r]
\ee
Then the ensemble is a unitary $k$-design if $\cG_\nu(\rho) = \cG_H(\rho)$ for all $d^k \times d^k$ matrices $\rho$ (not necessarily physical states).
\end{definition}
For convenience we have defined this for all matrices $\rho$ although it is equivalent to only require equality for physical states, since all matrices can be obtained from linear combinations of physical states.

Like state designs, unitary designs can also be defined in terms of polynomials:
\begin{definition}[\cite{DCEL06}]
\label{def:UnitaryDesignMonomials}
$\nu$ is a unitary $k$-design if, for all balanced monomials $M$ of degree $k$,
\be
\bbE_{U \sim \nu} M(U) = \bbE_{U \sim \cU(d)} M(U).
\ee
\end{definition}
Again, these definitions are equivalent:
\begin{lemma}
The unitary design definitions \ref{def:UnitaryDesign} and \ref{def:UnitaryDesignMonomials} are equivalent.
\end{lemma}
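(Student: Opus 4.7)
The plan is to show that the matrix identity $\cG_\nu(\rho)=\cG_H(\rho)$ for all $\rho$ encodes exactly the same information as the equality of expectations $\bbE_{U\sim\nu}M(U)=\bbE_{U\sim \cU(d)}M(U)$ for every balanced monomial $M$ of degree $k$. The bridge between the two is a direct computation of matrix elements in an elementary basis.

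First I would let $\rho$ range over the standard basis $\{\ket{I}\bra{J}\}$ of $d^k\times d^k$ matrices, where $I=(i_1,\ldots,i_k)$ and $J=(j_1,\ldots,j_k)$ are multi-indices in $\{1,\ldots,d\}^k$. A direct expansion gives
\bes
\bra{A}U^{\ot k}\ket{I}\bra{J}(U^\dagger)^{\ot k}\ket{B}=\prod_{\ell=1}^{k}U_{A_\ell I_\ell}\,U^{*}_{B_\ell J_\ell},
\ees
which is a balanced monomial $M_{A,B,I,J}(U)$ of degree exactly $(k,k)$. Moreover, as $A,B,I,J$ range over all multi-indices, every balanced monomial of degree $k$ arises this way (any two conjugated entries and any two unconjugated entries of $U$ can appear with arbitrary row and column indices). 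So the collection of matrix elements of $\cG_\nu(\ket{I}\bra{J})$, as $I,J$ vary, is in bijection with the collection of balanced monomials of degree $k$, and the average $\bbE_{U\sim\nu}M_{A,B,I,J}(U)$ is exactly the $(A,B)$-entry of $\cG_\nu(\ket{I}\bra{J})$.

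From here both directions are immediate. If \defref{UnitaryDesignMonomials} holds, then each matrix element of $\cG_\nu(\ket{I}\bra{J})$ agrees with that of $\cG_H(\ket{I}\bra{J})$, and by linearity the full superoperator identity $\cG_\nu(\rho)=\cG_H(\rho)$ extends to all $\rho$, giving \defref{UnitaryDesign}. Conversely, given \defref{UnitaryDesign}, any balanced degree-$k$ monomial $M$ can be realised as the $(A,B)$-entry of $\cG_\nu(\ket{I}\bra{J})$ for suitable $A,B,I,J$, and equality of the two superoperators forces $\bbE_{U\sim\nu}M(U)=\bbE_{U\sim \cU(d)}M(U)$.

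There is no genuine obstacle: the proof is a bookkeeping exercise paralleling the state-design argument, with the only thing to be careful about being the identification of matrix elements of $U^{\ot k}(\cdot)(U^\dagger)^{\ot k}$ with balanced monomials of the correct bidegree. The main thing to check is that every balanced monomial of degree $k$ genuinely appears as some matrix element (not just up to symmetry), which is clear from the free choice of the four multi-indices.
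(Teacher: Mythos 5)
Your proof is correct and takes essentially the same route as the paper's: both arguments pick $\rho=\ket{I}\bra{J}$, observe that the $(A,B)$ matrix element of $U^{\ot k}\rho(U^\dagger)^{\ot k}$ is the balanced degree-$k$ monomial $\prod_\ell U_{A_\ell I_\ell}U^*_{B_\ell J_\ell}$, note that every such monomial arises for some choice of multi-indices, and then pass between the two definitions by linearity. Your version is slightly more explicit than the paper's one-line statement but contains no new idea.
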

\begin{proof}
The proof is very similar to the state design case.  Again, we only consider monomials of degree $k$ since by partial tracing this implies the result for smaller $k$.

Consider matrices $\rho$ of the form $\ket{i_1, i_2, \ldots, i_k} \bra{j_1, j_2, \ldots, j_k}$ in \defref{UnitaryDesign}.  Then each element of $U^{\ot k} \rho \left(U^\dagger\right)^{\ot k}$ is a balanced monomial of degree $k$ and, for some choice of indices in $\ket{i_1, i_2, \ldots, i_k} \bra{j_1, j_2, \ldots, j_k}$, each balanced monomial of degree $k$ appears.
\end{proof}

\subsection{Approximate \texorpdfstring{$k$}{k}-designs}

While exact designs have desirable properties, it is often much easier to construct approximate designs which, for many applications, are sufficient.  Also, approximate designs can have fewer unitaries than exact designs.  For example, it was shown in \cite{AMTW00} that $2^{2n}$ unitaries are necessary and sufficient for an exact unitary 1-design.  However, an approximate 1-design can be implemented with only $2^{n+o(n)}$ unitaries which gives almost a factor of 2 saving in random bits.

\subsubsection{Approximate state designs}

Our approximate state design definition is as follows:
\begin{definition}
\label{def:ApproxStateDesign}
$\nu$ is an $\eps$-approximate state $k$-design if
\be
\left|\left|\bbE_{\ket{\psi} \sim \nu} \l[ \left( \ket{\psi} \bra{\psi} \right)^{\ot k} \r] - \bbE_{\ket{\psi} \sim \cS(d)} \l[ \left( \ket{\psi} \bra{\psi} \right)^{\ot k} \r]\right|\right|_\infty \le \frac{\eps}{{k+d-1 \choose k}}.
\ee
\end{definition}
${k+d-1 \choose k}$ appears because it is the dimension of the symmetric subspace.  In \cite{AmbainisEmerson07}, a similar definition was proposed but
with the additional requirement that the ensemble also forms a
1-design (exactly), i.e.
\bes
\bbE_{\ket{\psi} \sim \nu} \ket{\psi} \bra{\psi} = \bbE_{\ket{\psi} \sim \cS(d)} \ket{\psi} \bra{\psi}
\ees
This requirement was necessary there only so that a suitably normalised
version of the ensemble would form a POVM. We will not use it.

By taking the partial trace one can show that a
$k$-design is a $k'$-design for $k' \le k$.  Thus approximate
$k$-designs are always at least approximate 1-designs.

\subsubsection{Approximate unitary designs}

We have many choices to make when defining an approximate design.  Here we give four definitions which are convenient in different contexts.  In \lemref{ApproxUnitaryDesignEquiv} we show that they are all equivalent, up to polynomial dimension factors.

If the unitary design is considered a quantum channel that applies a random unitary from the distribution to the input, then a relevant measure is the diamond norm difference between the approximate design and an exact design.  Because the diamond norm is related to the distinguishability of channels, having a low diamond norm distance means that it is difficult to detect that an approximate design was given rather than exact.  One approximate design definition is therefore:
\begin{definition}[DIAMOND\footnote{We name the definitions to help distinguish them}, See \chapref{RandomCircuits}]
\label{def:ApproxUnitaryDesignDiamond}
$\nu$ is an $\eps$-approximate unitary $k$-design if
\begin{equation}
\label{eq:ApproxUnitaryDesignDiamond}
\vectornorm{\cG_\nu - \cG_H}_{\diamond} \le \eps,
\end{equation} 
where $\cG_\nu$ and $\cG_H$ are defined in \defref{UnitaryDesign}.
\end{definition}
In \cite{DCEL06}, they consider approximate twirling, which is implemented using an approximate 2-design.  They give an alternative definition of closeness which is more convenient for this application:
\begin{definition}[TWIRL, \cite{DCEL06}]
\label{def:ApproxUnitaryDesignDankert}
$\nu$ is an $\eps$-approximate twirl if
\begin{equation}
\max_\Lambda \vectornorm{\bbE_{U \sim \nu} U^\dagger \Lambda(U \rho U^\dagger)U - \Expect_{U \sim \cU(d)} U^\dagger \Lambda(U \rho U^\dagger)U}_{\diamond} \le \frac{\eps}{d^2}.
\end{equation}
The maximisation is over channels $\Lambda$ and $d$ is the dimension.
\end{definition}

In \chapref{TPE}, unitary designs are constructed from quantum tensor product expanders.  A quantum $k$-TPE is defined as an ensemble $\nu$ of unitaries such that
\be
\left\|\bbE_{U\sim \nu} \l[U^{\ot k,k}\r] - \bbE_{U \sim \cU(d)}
\l[U^{\ot k,k}\r] \right\|_\infty \le \lambda
\ee
for $\lambda < 1$ and $U^{\ot k, k} = U^{\ot k} \ot \l( U^* \r)^{\ot k}$ (the motivation for this definition is explained in \chapref{TPE}).  From this a natural $k$-design definition follows:
\begin{definition}[TRACE, See \chapref{TPE}]
$\nu$ is an $\eps$-approximate unitary $k$-design if
\label{def:ApproxUnitaryDesignkk}
\be
\left\| \bbE_{U\sim \nu} \l[U^{\ot k,k}\r]  - \bbE_{U \sim \cU(d)} \l[U^{\ot k,k}\r] \right\|_1 \le \eps.
\ee
\end{definition}
In \thmref{DesignFromTPE} we prove the simple result that a unitary design can be constructed by iterating the TPE.

We will also need a definition in terms of monomials:
\begin{definition}[MONOMIAL, See \chapref{DesignApplications}]
\label{def:ApproxUnitaryDesignMonomials}
$\nu$ is an $\eps$-approximate unitary $k$-design if, for all balanced monomials $M$ of degree $\le k$,
\begin{equation}
\label{eq:ApproximateUnitarykdesignMonomials}
\l| \bbE_{U \sim \nu} M(U) - \bbE_{U \sim \cU(d)} M(U) \r| \le \frac{\eps}{d^k}
\end{equation}
\end{definition}

We would now like to show that all these definitions are equivalent.  By equivalent, we mean that, if $\nu$ is an $\eps$-approximate unitary design by one definition, then it is an $\eps'$-approximate unitary design by any other definition, where $\eps' = \poly(d^k) \eps$.

\begin{lemma}
\label{lem:ApproxUnitaryDesignEquiv}
Definitions \ref{def:ApproxUnitaryDesignDiamond} (DIAMOND), \ref{def:ApproxUnitaryDesignkk} (TRACE) and \ref{def:ApproxUnitaryDesignMonomials} (MONOMIAL) are all equivalent.  Also \defref{ApproxUnitaryDesignDankert} (TWIRL) is equivalent to the other definitions for an approximate 2-design only.
\end{lemma}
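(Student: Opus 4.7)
The plan is to establish a cycle of implications bridged by the natural vectorised representation of the superoperator $\cG_\nu$. The key observation is that the matrix $\bbE_{U\sim \nu}[U^{\ot k,k}]$ appearing in TRACE is exactly this representation: since $U^{\ot k, k} = U^{\ot k} \ot (U^*)^{\ot k}$, one has $U^{\ot k,k}\,\mathrm{vec}(\rho) = \mathrm{vec}(U^{\ot k}\rho (U^\dagger)^{\ot k})$, and hence the Schatten-$\infty$ norm of this matrix coincides with the $2\to 2$ superoperator norm of $\cG_\nu$. All equivalences then follow by translating between Schatten norms of this matrix, superoperator norms of $\cG_\nu$, and entrywise bounds on monomial expectations.

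For TRACE $\Leftrightarrow$ MONOMIAL, I would use that the entries of $\bbE_{U\sim\nu}[U^{\ot k,k}]$ in the computational basis are precisely expectations of balanced monomials of degree $(k,k)$, with partial tracing reducing degree so we may restrict to degree exactly $k$. Since the Schatten-$\infty$ norm of a matrix bounds each of its entries and is itself dominated by the $1$-norm, TRACE with parameter $\eps$ yields MONOMIAL with parameter $\eps d^k$. Conversely, an entrywise bound of $\eps/d^k$ on a $d^{2k}\times d^{2k}$ matrix gives Frobenius norm at most $\eps d^k$ and hence $1$-norm at most $\eps d^{2k}$, producing TRACE with $\poly(d^k)$ loss.

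For TRACE $\Leftrightarrow$ DIAMOND, I would combine the identification $\|\cG_\nu - \cG_H\|_{2\to 2} = \|\bbE_\nu[U^{\ot k,k}] - \bbE_H[U^{\ot k,k}]\|_\infty$ with the superoperator inequalities \eq{2normsuper1normsuper}--\eq{diamondnorm2normsuper} collected in the preliminaries (taking ``dimension'' to be $d^k$). One direction follows from $\|\cG_\nu - \cG_H\|_\diamond \le d^k \|\cG_\nu - \cG_H\|_{2\to 2} \le d^k \|\bbE_\nu[U^{\ot k,k}] - \bbE_H[U^{\ot k,k}]\|_1$, giving TRACE $\Rightarrow$ DIAMOND. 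The reverse picks up a $d^{2k}$ factor from $\|\cdot\|_1 \le d^{2k}\|\cdot\|_\infty$ and another $\sqrt{d^k}$ from passing between $\|\cdot\|_{2\to 2}$ and $\|\cdot\|_\diamond$, but the total loss remains polynomial in $d^k$.

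For DIAMOND $\Leftrightarrow$ TWIRL at $k=2$, I would vectorise both $\Lambda$ and $\rho$ so that the map $\Lambda \mapsto \bbE_\nu[U^\dagger \Lambda(U\cdot U^\dagger) U]$ becomes a linear function of $\mathrm{vec}(\Lambda)$ whose coefficients are entries of $\bbE_\nu[U^{\ot 2,2}]$; this exposes the TWIRL quantity as another norm of the same second-moment matrix that controls DIAMOND at $k=2$, with the $1/d^2$ scaling in \defref{ApproxUnitaryDesignDankert} absorbing one power of the dimension. The hardest part will be showing that this derived norm is equivalent, up to $\poly(d)$ factors, to the trace/operator norms appearing on the DIAMOND side — in particular that the supremum over arbitrary channels $\Lambda$ is rich enough to detect every entry of the second-moment matrix, so that TWIRL is no weaker than DIAMOND at $k=2$. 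The restriction to $k=2$ is essential because the twirl is intrinsically a second-moment object in $U$ and carries no information about moments of higher order.
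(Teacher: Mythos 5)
Your core strategy is the same one the paper uses: pass through the intermediate characterisation $\|\cG_\nu-\cG_H\|_{2\to 2}=\|\bbE_\nu[U^{\ot k,k}]-\bbE_{\cU(d)}[U^{\ot k,k}]\|_\infty$ (the paper calls this OPERATOR-2-NORM), and convert between Schatten norms of the vectorised second-moment matrix, superoperator norms, and entrywise monomial bounds, absorbing $\poly(d^k)$ factors at each step. Your TRACE~$\leftrightarrow$~MONOMIAL and TRACE~$\leftrightarrow$~DIAMOND arguments track the paper's chain of inequalities closely and are correct; the dimension factors you quote are in the right ballpark.

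The gap is in the TWIRL equivalence at $k=2$, and you have in fact put your finger on exactly where it is without resolving it. Vectorisation shows that $\Lambda\mapsto\bbE_\nu[U^\dagger\Lambda(U\cdot U^\dagger)U]$ is linear in $\mathrm{vec}(\Lambda)$ with coefficients drawn from $\bbE_\nu[U^{\ot 2,2}]$, but the TWIRL supremum ranges only over \emph{channels}, which are a proper convex subset (CPTP) of linear maps. It is not automatic that this restricted supremum sees every entry of the second-moment matrix, and ``richness'' does not follow by general nonsense because the family of Kraus operators of a CPTP map is constrained by $\sum_k A_k^\dagger A_k=I$. To close the direction TWIRL~$\Rightarrow$~MONOMIAL one needs a concrete construction: the paper takes, for each pair of matrix units $\ket{p}\bra{q}$, $\ket{r}\bra{s}$, the genuine channels with Kraus operators $A_\sigma=\ket{p}\bra{q}+\sigma\ket{r}\bra{s}$ and $B=I-\ket{q}\bra{q}-\ket{s}\bra{s}$ for $\sigma\in\{\pm 1,\pm i\}$, and then forms the polarisation combination $\Lambda_{U,+1}-\Lambda_{U,-1}+i\Lambda_{U,+i}-i\Lambda_{U,-i}$, which isolates the cross term $4U^\dagger\ket{p}\bra{q}U\rho U^\dagger\ket{s}\bra{r}U$ while remaining a difference of twirls of honest channels (with a small fix when $p=r$). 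Evaluating on $\rho=\ket{e}\bra{f}$ then gives the entrywise bound on $\bbE_\nu[U^{\ot 2,2}]$. Without a device of this kind your argument does not establish that TWIRL controls MONOMIAL, and similarly your MONOMIAL~$\Rightarrow$~TWIRL direction needs the explicit Kraus-decomposition bookkeeping using $\sum_k\|A_k\|_2^2=d$ to convert a monomial bound into a diamond-norm bound on the twirl, which you gesture at (``the $1/d^2$ scaling absorbing one power of dimension'') but do not carry out.
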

\begin{proof}
To prove this, we will consider yet another possible definition (OPERATOR-2-NORM):
\be
\label{eq:ApproxUnitaryDesign2Norm}
\l|\l| \cG_{\nu} - \cG_H \r|\r|_{2 \rightarrow 2} \le \eps.
\ee
Note that this is equivalent to
\be
\label{eq:ApproxUnitaryDesignInftyNorm}
\l\| \bbE_{U\sim \nu} [U^{\ot k,k}]  - \bbE_{U \sim \cU(d)} [U^{\ot k,k}] \r\|_\infty \le \eps
\ee
which is the same as \defref{ApproxUnitaryDesignkk} (TRACE) except the norm is the $\infty$-norm rather than the 1-norm.  We shall prove that the other $k$-design definitions are equivalent to this.   We then show that \defref{ApproxUnitaryDesignDankert} (TWIRL) is equivalent to \defref{ApproxUnitaryDesignMonomials} (MONOMIAL) for $k=2$.  We use notation $A \xrightarrow{s}B$ to mean that if $\nu$ is an $\eps$-approximate unitary $k$-design according to definition A then it is a $s \eps$-approximate unitary $k$-design according to definition B.  If $s=1$ we omit the superscript.

A diagram showing the different parts to the proof is given in \figref{ApproxDesignEquivDiag}.  We remark that direction 2 is unneeded but is included since it provides tighter bounds and has a simple proof.

\begin{figure}[ht]
  \begin{center}
    \includegraphics[width=8cm]{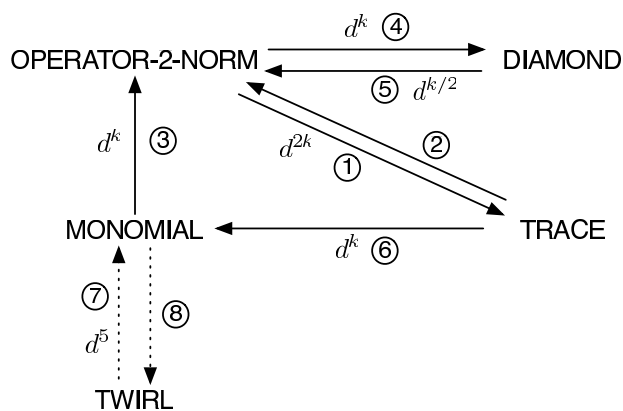}
    \caption[A diagram showing the parts of the proof of \lemref{ApproxUnitaryDesignEquiv}]{A diagram showing the different parts of the proof of \lemref{ApproxUnitaryDesignEquiv}.  The dotted arrows show the correspondence is only for $k=2$.  The circled digits refer to the enumerated items below and the factors by the arrows indicate the precision lost in the approximation when converting between the definitions.}
    \label{fig:ApproxDesignEquivDiag}
  \end{center}
\end{figure}

\begin{enumerate}

\item
OPERATOR-2-NORM $\xrightarrow{d^{2k}}$ TRACE:

Use the equivalence between Equations \ref{eq:ApproxUnitaryDesign2Norm} and \ref{eq:ApproxUnitaryDesignInftyNorm} and that \bes \l\| \bbE_{U\sim \nu} [U^{\ot k,k}]  - \bbE_{U \sim U(N)} [U^{\ot k,k}] \r\|_1 \le d^{2k} \l\| \bbE_{U\sim \nu} [U^{\ot k,k}]  - \bbE_{U \sim \cU(d)} [U^{\ot k,k}] \r\|_\infty. \ees

\item
TRACE $\rightarrow$ OPERATOR-2-NORM:

Use \bes \l\| \bbE_{U\sim \nu} [U^{\ot k,k}]  - \bbE_{U \sim \cU(d)} [U^{\ot k,k}] \r\|_\infty \le \l\| \bbE_{U\sim \nu} [U^{\ot k,k}]  - \bbE_{U \sim \cU(d)} [U^{\ot k,k}] \r\|_1 \ees and the equivalence between Equations \ref{eq:ApproxUnitaryDesign2Norm} and \ref{eq:ApproxUnitaryDesignInftyNorm}.

\comment{\item
If $\nu$ is an $\eps$-approximate unitary $k$-design according to \eq{ApproxUnitaryDesign2Norm} (OPERATOR-2-NORM) then it is a $d^k \eps$-approximate unitary $k$-design according to \defref{ApproxUnitaryDesignMonomials} (MONOMIAL):

We claim that, if for all $\rho \in \bbC^{d^k \times d^k}$,
\be
\l\| \bbE_{U \sim \nu} \l[U^{\ot k} \rho (U^\dagger)^{\ot k}\r] - 
\bbE_{U \sim \cU(d)} \l[U^{\ot k} \rho (U^\dagger)^{\ot k}\r] 
\right\|_2 \le \eps \norm{\rho}_2
\ee
then, for all balanced monomials $M$ of degree at most $k$,
\be
\l| \bbE_{U \sim \nu} M(U) - \bbE_{U \sim \cU(d)} M(U) \r| \le \eps.
\ee
To prove this claim, let $m \le k$ and take $M$ to be any balanced monomial of degree $m$.  Write $M = U_{p_1 q_1} \ldots U_{p_m q_m} U_{r_1 s_1}^* \ldots U_{r_m s_m}^*$.  Then let \bes \rho_m = \ket{q_1, \ldots, q_m}\bra{s_1, \ldots, s_m}. \ees  Generalising the operators $\cG_\nu$ and $\cG_H$ used earlier, let \bas \cG_{\nu,k}(\rho) &= \bbE_{U \sim \nu} \l[ U^{\ot k} \rho \left(U^\dagger\right)^{\ot k} \r] \\ \cG_{H,k}(\rho) &= \bbE_{U \sim \cU(d)} \l[ U^{\ot k} \rho \left(U^\dagger\right)^{\ot k} \r] \\ \rho_k &= \rho_m \ot \frac{I^{\ot k-m}}{\sqrt{d^{k-m}}}. \eas  Then $|| \rho_k ||_2 = 1$ and
\bas
\eps &\ge \l|\l| \cG_{\nu,k}(\rho_k) - \cG_{H,k}(\rho_k) \r|\r|_2 \\
&= \l|\l| \l(\cG_{\nu,m}(\rho_m) - \cG_{H,m}(\rho_m)\r) \ot \frac{I^{\ot k-m}}{\sqrt{d^{k-m}}} \r|\r|_2 \\
&= || \cG_{\nu,m}(\rho_m) - \cG_{H,m}(\rho_m) ||_2
\eas
We then use the fact that the largest matrix element is upper bounded by the 2-norm.  For any matrix $A$,
\bas
|A_{ij}| \le \sqrt{\sum_{i'j'} |A_{i'j'}|^2} = \sqrt{\tr A^\dagger A} = ||A||_2.
\eas
For us, this implies
\be
| (\cG_{\nu,m}(\rho_m) - \cG_{H,m}(\rho_m))_{p_1 \ldots p_m, r_1 \ldots r_m} | \le || \cG_{\nu,m}(\rho_m) - \cG_{H,m}(\rho_m) ||_2
\ee
which gives
\be
|\bbE_{U \sim \nu} M(U) - \bbE_{U \sim \cU(d)} M(U) | \le \eps
\ee
to prove the claim.}

\item
MONOMIAL $\xrightarrow{d^k}$ OPERATOR-2-NORM:

Choose any $\rho \in \bbC^{d^k \times d^k}$ and write it as $\rho = \sum_{ij} \rho_{ij} \ket{i} \bra{j}$.  Then
\bas
\l\| \cG_\nu(\rho) - \cG_H(\rho) \r\|_2 &\le \sum_{ij} | \rho_{ij} | \l\| \cG_\nu( \ket{i} \bra{j} ) - \cG_H( \ket{i} \bra{j} ) \r\|_2 \\
&= \sum_{ij} | \rho_{ij} | \sqrt{ \sum_{kl} \l| \l( \cG_\nu( \ket{i} \bra{j} ) - \cG_H( \ket{i} \bra{j} ) \r)_{kl} \r|^2 }
\eas
using the fact that the 2-norm squared is the sum of the squares of the matrix elements.  Now, we have a bound on the matrix elements of $\cG_\nu( \ket{i} \bra{j} ) - \cG_H( \ket{i} \bra{j} )$ from \defref{ApproxUnitaryDesignMonomials} (MONOMIAL):
\bes
| \l(\cG_\nu( \ket{i} \bra{j} ) - \cG_H( \ket{i} \bra{j} )\r)_{kl} | \le \frac{\eps}{d^k}
\ees
so
\bas
\l\| \cG_\nu(\rho) - \cG_H(\rho) \r\|_2 &\le \eps \sum_{ij} | \rho_{ij} | \\
&\le d^k \eps || \rho ||_2.
\eas

\item
OPERATOR-2-NORM $\xrightarrow{d^k}$ DIAMOND:

This follows from the superoperator norm relationship given in \eq{diamondnorm2normsuper}.

\item
DIAMOND $\xrightarrow{d^{k/2}}$ OPERATOR-2-NORM:

This uses the operator norm inequalities $|| \phi ||_{1 \rightarrow 1} \le || \phi ||_\diamond$ and \eq{2normsuper1normsuper}.

\item
TRACE $\xrightarrow{d^k}$ MONOMIAL:

Let $M$ be a balanced monomial of degree $k$ and write it as \bes M = U_{p_1 q_1} \ldots U_{p_k q_k} U^*_{r_1 s_1} \ldots U^*_{r_k s_k}. \ees  Then let $\hat{M} = \ket{p_1} \bra{q_1} \ot \ldots \ot \ket{p_k} \bra{q_k} \ot \ket{r_1} \bra{s_1} \ot \ldots \ot \ket{r_k} \bra{s_k}$.  Then $M(U) = \tr \hat{M} U^{\ot k,k}$ and $\| \hat{M} \|_\infty = 1$.  Now we use the fact that for any operator $A$
\be
\| A \|_1 = \max_B \{ \tr A B : \| B \|_\infty \le 1 \}
\ee
to rewrite the TRACE definition:
\bas
\| \bbE_{U \sim \nu} [ U^{\ot k,k} ] - &\bbE_{U \sim \cU(d)} [ U^{\ot k,k} ] \|_1 = \\
&\max_B \l\{ \tr \l( \bbE_{U \sim \nu} [U^{\ot k,k}] - \bbE_{U \sim \cU(d)} [U^{\ot k,k}] \r) B : \| B \|_\infty \le 1 \r\} \\
&\ge \l| \tr \l(\bbE_{U \sim \nu} [U^{\ot k,k}] - \bbE_{U \sim \cU(d)} [U^{\ot k,k}] \r) \hat{M} \r| \\
&= \l| \bbE_{U \sim \nu} M(U) - \bbE_{U \sim \cU(d)} M(U) \r|.
\eas

\item
MONOMIAL $\xrightarrow{d^5}$ TWIRL (for $k=2$):

Write $\Lambda(\rho)$ in the Kraus decomposition as
\be
\Lambda(\rho) = \sum_k A_k \rho A_k^\dagger
\ee
with
\be
\label{eq:KrausNorm}
\sum_k A_k^\dagger A_k = I.
\ee
Let $\Lambda_U(\rho) = \sum_k U^\dagger A_k U \rho U^\dagger A_k^\dagger U$.  Then the $p,q$ matrix element of $\Lambda_U(\rho)$ is
\be
\sum_{krstuij} \rho_{ij} U_{si} U_{uq} U^*_{rp} U^*_{tj} A_{krs} A^*_{kut}.
\ee

From \defref{ApproxUnitaryDesignMonomials} (MONOMIAL) we have that \bes \l| (\bbE_{U \sim \nu} - \bbE_{U \sim \cU(d)}) U_{si} U_{uq} U^*_{rp} U^*_{tj} \r| \le \eps/d^2 \ees (treating expectation as an operator).  This implies that
\be
\label{eq:LambdaMatrixElements}
\l| \sum_{krstuij} \rho_{ij} A_{krs} A^*_{kut} (\bbE_{U \sim \nu} - \bbE_{U \sim \cU(d)}) U_{si} U_{uq} U^*_{rp} U^*_{tj} \r| \le \sum_{krstuij} | \rho_{ij} A_{krs} A^*_{kut} | \eps/d^2.
\ee
Now, $\sum_{rs} |A_{krs}| \le d || A_k ||_2$ and $\sum_{ij} |\rho_{ij}| \le d || \rho ||_2$ and, taking the trace of the normalisation condition \eq{KrausNorm} we find
\bes
d = \sum_k \tr A_k^\dagger A_k = \sum_k || A_k ||_2^2.
\ees
So we find \eq{LambdaMatrixElements} is upper bounded by
\bes
\frac{\eps}{d^2} d || \rho ||_2 \sum_k d^2 || A_k ||_2^2 \le \eps d^2 || \rho ||_2.
\ees
Using the fact that the 2-norm squared is the sum of the squares of the matrix elements we find that
\bes
|| \bbE_{U \sim \nu} \Lambda_U(\rho) - \bbE_{U \sim \cU(d)} \Lambda_U(\rho) ||_2 \le \eps d^4 || \rho ||_2.
\ees
Using $|| \cdot ||_\diamond \le d || \cdot ||_2$ (\eq{diamondnorm2normsuper}) we prove the result.

\item
TWIRL $\rightarrow$ MONOMIAL for $k=2$:

Let $A_\sigma = \ket{p}\bra{q} + \sigma \ket{r} \bra{s}$ where $\sigma \in \{+1, -1, +i, -i\}$.  Let $B = I - \ket{q}\bra{q} - \ket{s}\bra{s}$.  Then $A_\sigma$ and $B$ are the Kraus operators of a valid channel, provided $p \ne r$, which we assume for now.  Further, let
\be
\Lambda_{U,\sigma}(\rho) = U^\dagger \Lambda_\sigma (U \rho U^\dagger) U
\ee
where $\Lambda_\sigma$ is the channel with Kraus operators $A_\sigma$ and $B$.  Now let
\be
\Lambda_{U,s}(\rho) = \Lambda_{U ,+1}(\rho)-\Lambda_{U ,-1}(\rho)+i\Lambda_{U, +i}(\rho)-i\Lambda_{U, -i}(\rho).
\ee
We see that
\be
\Lambda_{U,s}(\rho) = 4 U^\dagger \ket{p} \bra{q} U \rho U^\dagger \ket{s} \bra{r} U.
\ee
Now, from \defref{ApproxUnitaryDesignDankert} (TWIRL) and the triangle inequality (using $|| \cdot ||_2 \le || \cdot ||_1$), we have
\be
|| \bbE_{U \sim \nu} \Lambda_{U,s}(\rho) - \bbE_{U \sim \cU(d)} \Lambda_{U,s}(\rho) ||_2 \le \frac{4 \eps || \rho ||_1}{d^2}.
\ee
This implies that each matrix element is small i.e.
\be
| (\bbE_{U \sim \nu} - \bbE_{U \sim \cU(d)}) \bra{c} U^\dagger \ket{p} \bra{q} U \rho U^\dagger \ket{s} \bra{r} U \ket{d} | \le \frac{\eps || \rho ||_1}{d^2}.
\ee
Now let $\rho = \ket{e} \bra{f}$.  We do not have to choose a physical state since the diamond-norm bound is true for all matrices.  This gives us
\be
| (\bbE_{U \sim \nu} - \bbE_{U \sim \cU(d)}) U^*_{pc} U_{qe} U^*_{sf} U_{rd} | \le \frac{\eps}{d^2}
\ee
as required.

For $p=r$, we also assume that $s=q$ since if not, just take $p \ne r$ and $s=q$ and swap the labels.  Here take $A_\pm = \pm \ket{p} \bra{q}$ and $B = I - \ket{q}\bra{q}$ and consider $\Lambda_{U,+}(\rho) - \Lambda_{U,-}(\rho) = 2 U^\dagger \ket{p} \bra{q} U \rho U^\dagger \ket{q} \bra{p} U$.\qedhere
\end{enumerate}
\end{proof}

We remark that other types of approximate definitions are possible.  For cryptographic uses, a computationally secure approximate design may be sufficient, rather than the information theoretic security discussed above.  A computationally secure approximate design would be nearly indistinguishable from an exact design in polynomial time.  Applications and constructions of such objects remain open problems.

\subsubsection{Constructions}

Here we summarise the known constructions of unitary and state designs.  We will say that a $k$-design construction is efficient if the effort required to sample a state or unitary from the design is polynomial in $n$ and $k$.  Note that we do not require the number of states or unitaries to be polynomial because, even for approximate designs, an exponential number is required.  Rather, the number of random bits needed to specify an element of the design should be $\poly(n,k)$.

We start with state design constructions since these have been studied far more than unitary designs.  Firstly, exact efficient state 1-designs are trivial: simply choose a random state from any basis.  Numerous examples of exact efficient state 2-design constructions are known (e.g.~\cite{Barnum02}).  Hayashi et al.~\cite{HHM06} give an inefficient construction of state $k$-designs for any $n$ and $k$ but general exact constructions are not efficient in $n$ and $k$.  However, Ambainis and Emerson provide an efficient approximate construction for any $k$ with $d \ge 2k$.  Aaronson \cite{Aaronson07} also gives an efficient approximate construction.

Less is known about efficient constructions for unitary designs.  It is straightforward to prove that the Pauli matrices form an exact 1-design and in \cite{DLT02, Dankert05} it is shown that the Clifford group (see \chapref{LearningCliffords} for a definition) forms an exact 2-design although no efficient exact sampling method is known.  However, an approximate sampling method is given in \cite{DLT02} and a more efficient approximate 2-design construction is given in \cite{DCEL06}.  The structure of unitary 2-designs is considered in \cite{GAE07}, providing lower bounds on the number of unitaries in the design. 

In \chapref{TPE} we give the first efficient approximate unitary $k$-design construction for $k > 2$.  The construction works in $O(k n + \log 1/\eps)$ time for $k = O(n / \log n)$.  Through \lemref{ApproxUnitaryDesignEquiv}, the construction is efficient for all the equivalent definitions above.  We also conjecture in \chapref{RandomCircuits} that random quantum circuits of length $\poly(n, k)$ are approximate unitary $k$-designs although we only prove this for $k=2$.

\chapter{Random Quantum Circuits}
\label{chap:RandomCircuits}

\section{Introduction: Pseudo-random Quantum Circuits}

Random circuits are a natural object to consider when looking at the complexity of random operations.  They are circuits where the gates and their positions are chosen randomly from some given distribution.  If the gate set that the random circuit chooses from is universal then, as we show below, the random circuit will converge to the uniform Haar measure.  The advantage of considering a random circuit rather than a random unitary on the whole system is it is naturally efficient to implement, for polynomial length circuits.  Random circuits of some fixed length are also a new measure on the unitary group which, as we show later, reproduces some of the properties of the Haar measure for polynomial length.  As well as the computer science aspects, this has applications in physics since randomly interacting systems could be modelled as a random circuit.  These systems will only reach their equilibrium if the random circuit converges quickly.  Thus proving convergence of the random circuit shows that some physical systems will have some properties of Haar random systems after evolving for a short amount of time.

We consider a general class of random circuits where a series of
two-qubit gates are chosen from a universal gate set.  We give a
framework for analysing the $k^{\text{th}}$ moments of these circuits.
Our conjecture, based on an analogous classical result
\cite{HooryBrodsky04}, is that a random circuit on $n$ qubits of
length $\poly(n,k)$ is an approximate $k$-design.  While we do not
prove this, we instead give a tight analysis of the $k=2$ case.  We
find that in a broad class of natural random circuit models (described
in \secref{RandomCircuits}), a circuit of length $O(n(n+\log 1/\eps))$
yields an $\eps$-approximate 2-design.  The approximate design definition used in this section is the diamond-norm definition given in \defref{ApproxUnitaryDesignDiamond} and, through \lemref{ApproxUnitaryDesignEquiv}, applies to the alternative definitions given above.  Moreover, our results also apply to random
stabiliser circuits, meaning that a random stabiliser circuit of
length $O(n(n+\log 1/\eps))$ will be an
$\eps$-approximate 2-design.  This both simplifies the construction
and tightens the efficiency of the approach of \cite{DLT02}, which
constructed $\eps$-approximate 2-designs in time $O(n^6(n^2+\log
1/\eps))$ using $O(n^3)$ elementary quantum gates.

\subsection{Random Circuits}
\label{sec:RandomCircuits}

The random circuit we will use is the following.  Choose a 2-qubit
gate set that is universal on $U(4)$ (or on the stabiliser subgroup of
$U(4)$).  One example of this is the set of all one qubit gates
together with the controlled-NOT gate.  Another is simply the set of
all of $U(4)$.  Then, at each step, choose a random pair of qubits and
apply a gate from the universal set chosen uniformly at random.  For
the $U(4)$ case, the distribution will be the Haar measure on $U(4)$.
One such circuit is shown in Fig.~\ref{figRandomCircuit} for $n=4$
qubits.  This is based on the approach used in \cite{ODP06,DOP07} but our analysis is both simpler and more general.

\begin{figure}[h]
  \begin{center}
    \includegraphics[width=12cm]{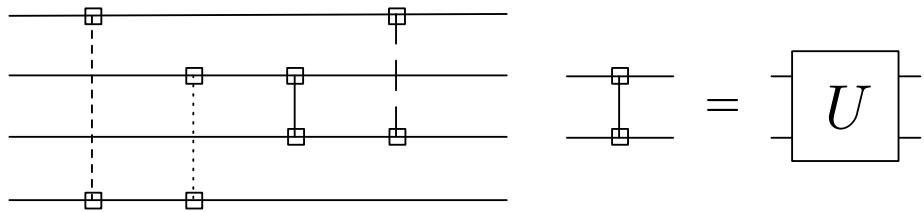}
    \caption[An example of a random circuit]{An example of a random circuit.  Different lines indicate
      a different gate is applied at each step.}
    \label{figRandomCircuit}
  \end{center}
\end{figure}
Since the universal set can generate the whole of $U(2^n)$ in this
way, such random circuits can produce any unitary.  Further, since
this process converges to a unitarily invariant distribution and the
Haar distribution is unique, the resulting unitary must be uniformly
distributed amongst all unitaries \cite{ELL05}.  Therefore this
process will eventually converge to a Haar distributed unitary from
$U(2^n)$.  This is proven rigorously in \lemref{GEigenvalues}.  
However, since a Haar unitary cannot be produced in polynomial time, this process will not converge in polynomial time.  
We address this problem by considering only the lower-order moments of the
distribution and showing these are nearly the same for random circuits
as for Haar-distributed unitaries.  This claim is formally described
in \thmref{Main2Design}.

This chapter is organised as follows.  In \secref{Preliminaries} we explain how a random circuit could be used to construct a $k$-design.  We then summarise the results of this chapter in \secref{RandomCircuitsResults}.  In \secref{Moments} we work out how the state evolves after a single step of the random circuit.  We then extend this to multiple steps in \secref{Convergence} and prove our general convergence results.  A key simplification will be (following \cite{ODP06}) to map the evolution of the second moments of the quantum circuit onto a classical Markov chain.  We then prove a tight convergence result for the case where the gates are chosen from $U(4)$ in \secref{U4Convergence}.  This section contains most of the technical content of the chapter.  Using our bounds on mixing time we put together the proof that random circuits yield approximate unitary 2-designs in \secref{MainResult}.  \secref{ConclusionRandomCircuits} concludes with some discussion of applications.

The majority of this chapter, with the exception of \secref{ZeroChainGap}, has been published previously as \cite{RandomCircuits} and is joint work with Aram Harrow.

\section{Preliminaries}
\label{sec:Preliminaries}
\subsection{Pauli expansion}
Much of the following will be done in the Pauli basis.  In this chapter, we choose the normalisation so that $\rho$ is written in the Pauli basis as
\be
\rho = 2^{-n/2} \sum_p \gamma(p) \sigma_p.
\ee
With this normalisation,
\be
\sum_p \gamma(p)^2 = \tr \rho^2
\ee
which is 1 for pure $\rho$.  In general,
\bes
\sum_p \gamma^2(p) \le 1
\ees
with equality if and only if $\rho$ is pure.  Note also that $\tr \rho = 1$ is equivalent to $\gamma(0) = 2^{-n/2}$.

This notation is extended to states on $nk$ qubits by treating $\gamma$ as a
function of $k$ strings from $\{0,1,2,3\}^n$.  Thus a 
 state $\rho$ on $nk$ qubits is written as
\begin{equation}
\label{eq:PauliBasisGeneral}
\rho = 
2^{-nk/2} \sum_{p_1, \ldots, p_k} \gamma_0(p_1, \ldots, p_k) \sigma_{p_1} \otimes \ldots \otimes \sigma_{p_k}.
\end{equation}

\subsection{Random Circuits as \texorpdfstring{$k$}{k}-designs}

If a random circuit is to be an approximate $k$-design then
\eq{ApproxUnitaryDesignDiamond} must be satisfied where the unitaries in $\cG_\nu$
are the different possible random circuits.  We can think of this as
applying the random circuit not once but $k$ times to $k$ different
systems.

Suppose that applying $t$ random gates yields the random
circuit $W$.  If $W^{\ot k}$ acts on an $nk$-qubit state $\rho$, then
the resulting state is 
\begin{equation}
\label{eqn:StateEvolutionPauliBasis}
\rho_W := W^{\ot k}\rho(W^\dag)^{\ot k} =
2^{-nk/2} \sum_{p_1, \ldots, p_k} \gamma_0(p_1, \ldots, p_k) W
\sigma_{p_1} W^{\dagger} \otimes \ldots \otimes W \sigma_{p_k}
W^{\dagger}. 
\end{equation}

For this to be a $k$-design, the expectation over
all choices of random circuit should match the expectation over
Haar-distributed $W\in U(2^n)$.

We are now ready to state our main results.  Our results apply to a
large class of gate sets which we define below: 
\begin{definition}
\label{def:2copy-gapped}
Let $\cE=\{p_i,U_i\}$ be a discrete ensemble of elements from $U(d)$.
Define an operator $G_\cE$ by
\be G_\cE := \sum_i p_i U_i^{\ot k,k}
\label{eq:gapped-condition}\ee
where $U^{\ot k,k} = U^{\ot k} \ot \l( U^* \r)^{\ot k}$.
More generally, we can consider continuous distributions.  If $\mu$ is
a probability measure on $U(d)$ then we can define $G_\mu$ by analogy
as
\be G_\mu := \int_{U(d)} d\mu(U) U^{\ot k,k}
\label{eq:gapped-condition2}\ee
Then
$\cE$ (or $\mu$) is $k$-copy gapped if $G_\cE$ (or $G_\mu$)
has only $k!$ eigenvalues with absolute value equal to $1$.
\end{definition}
For any discrete ensemble $\cE=\{p_i,U_i\}$, we can define a measure
$\mu=\sum_i p_i \delta_{U_i}$.  Thus, it suffices to state our theorems in
terms of $\mu$ and $G_\mu$.  We also remark that the $k$-copy gapped property is the same as the $k$-tensor product expander property for any non-zero gap as defined in \chapref{TPE}.

The condition on $G_\mu$ in the above definition may seem somewhat
strange.  We will see in \secref{Moments} that when $d\geq k$ there is
a $k!$-dimensional subspace of $(\bbC^d)^{\ot 2k}$ that is acted upon
trivially by any $G_\mu$.  Additionally, when $\mu$ is the Haar
measure on $U(d)$ then $G_\mu$ is the projector onto this space.
Thus, the $k$-copy gapped condition implies that vectors orthogonal to
this space are shrunk by $G_\mu$.

We will see that $G_\mu$ is $k$-copy gapped in a  number of important cases.
First, we give a definition of universality that can apply not only to
discrete gates sets, but to arbitrary measures on $U(4)$.

\begin{definition}\label{def:dist-universal}
Let $\mu$ be a distribution on $U(4)$.  Suppose that for any open ball
$S\subset U(4)$ there exists a positive integer $\ell$
such that $\mu^{\star \ell}(S)>0$.  Then we say $\mu$ is universal
[for $U(4)$].
\end{definition}
  Here $\mu^{\star \ell}$ is the $\ell$-fold convolution
of $\mu$ with itself; i.e.
$$\mu^{\star \ell}= \int \delta_{U_1\cdots U_\ell} 
d\mu(U_1)\cdots d\mu(U_\ell).$$
When $\mu$ is a discrete distribution over a set $\{U_i\}$,
\defref{dist-universal} is equivalent to the usual definition of
universality for a finite set of unitary gates.

\begin{theorem}\label{thm:kCopyGappedExamples}
The following distributions on $U(4)$ are $k$-copy gapped:
\begin{itemize}
\item[(i)]{Any universal gate set.  Examples are $U(4)$ itself, any
    entangling gate together with all single qubit gates, or the gate
    set considered in \cite{ODP06}.}
\item[(ii)]{Any approximate (or exact) unitary $k$-design on 2 qubits, such as the
    uniform distribution over the 2-qubit Clifford group, which is an exact 2-design.} 
\end{itemize}
\end{theorem}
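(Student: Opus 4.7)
The plan is to identify a $k!$-dimensional subspace $W\subset(\bbC^d)^{\ot 2k}$ that is pointwise fixed by $G_\mu$ for \emph{every} probability measure $\mu$ on $U(d)$, and then show that under either hypothesis the restriction of $G_\mu$ to $W^\perp$ has operator norm strictly less than $1$. Here $W$ is the space of $U(d)$-invariants of the representation $U\mapsto U^{\ot k,k}$, which by Schur--Weyl duality is spanned by the (vectorisations of the) subsystem permutation operators $S(\pi)$ for $\pi\in\cS_k$ and has dimension exactly $k!$ when $d\ge k$ (cf.~\lemref{SymmetricStateAverage}). For the Haar measure, $G_{\mathrm{Haar}}$ is the orthogonal projector onto $W$, so Haar itself is trivially $k$-copy gapped.

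For the general case of (i), suppose $G_\mu v=\lambda v$ with $|\lambda|=1$ and $v\ne 0$. Since each $U^{\ot k,k}$ is unitary and $G_\mu v$ is the $\mu$-average of the equal-length vectors $U^{\ot k,k}v$, strict convexity of the Hilbert-space unit ball forces $U^{\ot k,k}v=\lambda v$ for $\mu$-almost every $U$; iterating gives $U^{\ot k,k}v=\lambda^\ell v$ for $\mu^{\star\ell}$-a.e.~$U$. Universality (\defref{dist-universal}), continuity of $U\mapsto U^{\ot k,k}v$, and compactness of $S^1$ then imply that every $U_0\in U(4)$ satisfies $U_0^{\ot k,k}v=\chi(U_0)\,v$ for some $\chi(U_0)\in S^1$: pick a sequence $V_n\to U_0$ with $V_n\in\mathrm{supp}(\mu^{\star\ell_n})$ and pass to a subsequence along which $\lambda^{\ell_n}$ converges. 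The resulting $\chi$ is a continuous character of $U(4)$, hence equals $\det^n$ for some $n\in\mathbb{Z}$. Plugging in $U=e^{i\theta}I$ gives $U^{\ot k,k}=I$ on one side but $\det(U)^n=e^{ind\theta}$ on the other, forcing $n=0$, so $\chi\equiv 1$, $v\in W$, and $\lambda=1$. The bound $\|G_\mu\|\le 1$ rules out Jordan blocks of size greater than $1$ at unit-modulus eigenvalues, so this accounts for \emph{all} such eigenvalues with algebraic multiplicity, giving exactly $k!$ of them.

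For (ii), an exact unitary $k$-design has $G_\nu=G_{\mathrm{Haar}}$ by \defref{UnitaryDesign}, which is the projector onto $W$ and is manifestly $k$-copy gapped. For an $\eps$-approximate $k$-design with $\eps<1$ (in any of the equivalent norms of \lemref{ApproxUnitaryDesignEquiv}), we have $\|G_\nu-G_{\mathrm{Haar}}\|_\infty<1$; since every $U^{\ot k,k}$ fixes $W$ pointwise, so does $G_\nu$, while $G_{\mathrm{Haar}}$ vanishes on $W^\perp$. Hence $G_\nu|_{W^\perp}=(G_\nu-G_{\mathrm{Haar}})|_{W^\perp}$ has operator norm strictly less than $1$, again yielding exactly $k!$ unit-modulus eigenvalues.

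The main technical obstacle is the character-theoretic step in (i): promoting the a.e.~eigenvector equation to one valid on all of $U(4)$, and then eliminating the non-trivial one-dimensional sub-representations of $U\mapsto U^{\ot k,k}$. Without the latter, one could a priori obtain a $\lambda$ that is a non-trivial root of unity on a $\det$-isotypic component; the global-phase computation with $U=e^{i\theta}I$ is precisely what collapses that possibility and identifies the full unit-modulus eigenspace with $W$.
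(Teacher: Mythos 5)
Your proof is correct and takes a route that is genuinely distinct from the paper's, although both ultimately rest on the same two pillars: Schur--Weyl duality to identify the trivial sector $W$, and the observation that $U\mapsto U^{\ot k,k}$ is insensitive to global phase, which kills $\det^n$ characters. The paper (via \lemref{GEigenvalues}) works \emph{irrep by irrep}: it block-diagonalises $G_\mu$ along the decomposition of $V^{\ot k}\ot(V^*)^{\ot k}$ into irreducibles, notes there are no non-trivial one-dimensional summands, and then for each higher-dimensional $V_\lambda$ derives a contradiction from universality together with the fact that an irreducible representation of dimension $>1$ has no one-dimensional invariant subspace. You instead work \emph{eigenvector by eigenvector}: you take a putative unit-modulus eigenvector $v$ of $G_\mu$, use strict convexity of the Hilbert ball (equality in the integral triangle inequality for an average of unit vectors) to upgrade $G_\mu v=\lambda v$ to $U^{\ot k,k}v=\lambda v$ on $\supp\mu$ (and hence on $\supp\mu^{\star\ell}$), and then pass to the topological closure to get a character $\chi=\det^n$, which the global-phase computation forces to be trivial. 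Your strict-convexity step substitutes for the paper's ``no $1$-dimensional invariant subspace'' step, and your character-theoretic argument is a compact packaging of the paper's irrep bookkeeping. One point worth making explicit: the $\mu$-a.e.\ equation $U^{\ot k,k}v=\lambda^\ell v$ holds on a \emph{closed} set of full $\mu^{\star\ell}$-measure, hence on all of $\supp\mu^{\star\ell}$, which is the bridge to the $V_n\to U_0$ argument.

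Two smaller comments. First, your argument directly establishes the eigenvalue-count formulation of \defref{2copy-gapped}, whereas the paper's \lemref{GEigenvalues} proves the a priori stronger operator-norm statement $\|G_\mu-G_{U(d)}\|_\infty<1$ (\eq{k-copy-gapped}); the paper conflates the two, and for the finite-dimensional mixing arguments that follow, spectral radius $<1$ on $W^\perp$ (which you prove) does the job, but it is worth being aware that the two are not literally equivalent for a general non-normal contraction, so the norm inequality does not fall out for free from your version. Second, your part (ii) is actually more careful than the paper's: the paper's one-line proof really only addresses the \emph{exact} design case, while you spell out the approximate case via the reducing-subspace observation $G_\nu|_{W^\perp}=(G_\nu-G_{\mathrm{Haar}})|_{W^\perp}$, including the caveat that ``approximate'' must mean $\eps<1$ in an operator-norm-type formulation. (A tiny citation nit: the $k!$-dimensionality of $W$ is really the $\dim\ge k$ Schur--Weyl statement about commutant permutation operators, not quite the content of \lemref{SymmetricStateAverage}, which concerns the symmetric subspace of $k$ copies.)
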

\begin{proof}
\mbox{}
\begin{itemize}
\item[(i)]{This is proven in \lemref{GEigenvalues}.}
\item[(ii)]{This follows straight from \defref{UnitaryDesign}.\qedhere}
\end{itemize}
\end{proof}

\section{Summary of Results}
\label{sec:RandomCircuitsResults}

\begin{theorem}
\label{thm:Main2Design}
Let $\mu$ be a 2-copy gapped distribution and $W$ be a random circuit on $n$ qubits
obtained by drawing $t$ random unitaries according to $\mu$ and
applying each of them to a random pair of qubits.  Then there exists $C$ (depending only on $\mu$) such that for any $\eps>0$ and any $t \geq C(n(n+\log 1/\eps))$, $\cG_W$ is an
  $\eps$-approximate unitary 2-design according to either 
  \defref{ApproxUnitaryDesignDiamond} (DIAMOND) or \defref{ApproxUnitaryDesignDankert} (TWIRL).
\end{theorem}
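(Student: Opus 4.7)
The plan is to analyze the second-moment operator
\[ M \;:=\; \bbE_{\{i,j\} \subset [n]} \, \bbE_{U \sim \mu}\, U_{ij}^{\ot 2,2}, \]
where $U_{ij}$ denotes the embedding of a 2-qubit gate $U$ into the full $n$-qubit system on qubits $\{i,j\}$. A single step of the random circuit, lifted to the $(2,2)$-copy space, is exactly $M$, and after $t$ steps the operator is $M^t$. The Haar average $\bbE_{W\sim\cU(2^n)} W^{\ot 2,2}$ is the orthogonal projector $P_{\mathrm{fix}}$ onto the $2!=2$-dimensional commutant spanned by the identity and the full-system swap (by Schur--Weyl duality). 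It therefore suffices to bound $\|M^t - P_{\mathrm{fix}}\|_\infty$ and then invoke \lemref{ApproxUnitaryDesignEquiv} to convert to the diamond norm (losing a factor of $d^k = 2^{2n}$), with the TWIRL case following through MONOMIAL.

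First I would unpack the local structure of $M$ in the Pauli basis. Expanding $2n$-qubit operators as $2^{-n}\sum_{p,q}\gamma(p,q)\,\sigma_p \ot \sigma_q$ with $p,q \in \{0,1,2,3\}^n$, a gate acting on qubits $\{i,j\}$ touches only the four labels $(p_i,q_i,p_j,q_j)$ and leaves the rest fixed. By the 2-copy gapped hypothesis on $\mu$, the local averaged operator $\bbE_{U\sim\mu} U^{\ot 2,2}$ on two qubits has a 2-dimensional invariant subspace (the local identity and swap) and contracts every orthogonal mode by some factor $\lambda_\mu<1$ depending only on $\mu$.

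Next, following the approach of \cite{ODP06,DOP07}, I would reduce the second-moment dynamics to a classical Markov chain on $\{0,1\}^n$. Using \lemref{Swap}, the local fixed subspace on qubits $i,j$ corresponds exactly to Pauli configurations with $p_i=q_i$ and $p_j=q_j$. Encoding each qubit by $x_i=0$ if $p_i=q_i$ and $x_i=1$ otherwise, the random gate action induces a chain on $x$ in which a hit on an unmatched pair collapses it to the matched configuration at rate at least $1-\lambda_\mu$, and the absorbing state is $x=0^n$ (with a small extra label distinguishing identity from swap inside $P_{\mathrm{fix}}$). The operator norm $\|M^t - P_{\mathrm{fix}}\|_\infty$ is then upper bounded by the total-variation convergence of this chain.

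The main obstacle is showing that this chain has spectral gap $\Omega(1/n)$, with the constant depending only on $\lambda_\mu$. By permutation symmetry the chain projects to a birth-and-death chain on the Hamming weight $|x|\in\{0,\ldots,n\}$, so the gap can be estimated by canonical paths, an explicit comparison with lazy random walk on $\{0,\ldots,n\}$, or a coupling argument. Once an $\Omega(1/n)$ gap is in hand, standard mixing-time bounds yield $\|M^t - P_{\mathrm{fix}}\|_\infty \le e^{-ct/n}$, and choosing $t = C\,n(n+\log 1/\eps)$ drives this below $\eps/2^{2n}$, which via \lemref{ApproxUnitaryDesignEquiv} gives the desired diamond-norm (and hence twirl) approximation. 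The delicate point is that the Hamming-weight chain is not quite a lazy random walk: only specific transitions are allowed per step and the transition rates are nonlinear in $|x|$, so one must track these features carefully to avoid losing an extra factor of $n$ in the mixing-time bound.
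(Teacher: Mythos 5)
Your reduction to a Markov chain on $\{0,1\}^n$ — encoding each qubit by whether $p_i = q_i$ — does not control $\|M^t - P_{\mathrm{fix}}\|_\infty$, because $x = 0^n$ is very far from lying in the commutant. The ``diagonal'' subspace where $p = q$ is $4^n$-dimensional, while $P_{\mathrm{fix}}$ is only $2$-dimensional, spanned by the delta at $p=q=0$ and the uniform superposition over all $(p,p)$. Your parenthetical ``with a small extra label distinguishing identity from swap inside $P_{\mathrm{fix}}$'' is the crux of the misconception: once at $x = 0^n$, the remaining degree of freedom is a distribution over all $4^n$ diagonal Pauli labels, not a two-valued tag, and that distribution must still equilibrate. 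Relatedly, the local fixed subspace of $\bbE_{U\sim\mu} U^{\ot 2,2}$ on a pair is $2$-dimensional, not the $16$-dimensional space ``$p_i=q_i$ and $p_j=q_j$'' you identify via \lemref{Swap}.

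Your chain therefore captures only the decay of the off-diagonal ($p\ne q$) coefficients, which the paper treats separately by a coupon-collector argument (\lemref{CoefficientsDecay2General}). The bulk of the technical work — and the actual source of the $n^2$ in the general 2-copy-gapped bound — is the equilibration of the diagonal coefficients $\gamma(p,p)$, which evolve as a second, genuinely stochastic Markov chain on $\{0,1,2,3\}^n$ with $00\ldots0$ removed. Bounding its gap by $\Omega(1/n)$ requires passing to the ``zero chain'' (\lemref{ZeroChainTransitionMatrix} through \thmref{ZeroChainGap}), which counts the number of \emph{non-identity} local Paulis in the shared string $p$; this is a two-sided birth-and-death chain with nontrivial stationary distribution $\pi_0(x)\propto 3^x\binom{n}{x}$, not the monotone absorbing walk on the Hamming weight of the agreement pattern you propose. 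Without analyzing the diagonal chain, the claimed bound on $\|M^t - P_{\mathrm{fix}}\|_\infty$, and hence on the diamond norm, does not follow from your argument.
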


To prove \thmref{Main2Design}, we show that the second
moments of the random circuits converge quickly to those of a uniform
Haar distributed unitary.  For $W$ a circuit as in \thmref{Main2Design}, write $\gamma_W(p_1, p_2)$ for the Pauli coefficients of $\rho_W = W^{\ot 2} \rho \l(W^\dagger\r)^{\ot 2}$.  Then write $\gamma_t(p_1, p_2) = \Expect_W \gamma_W(p_1, p_2)$ where $W$ is a circuit of length $t$.  Then we have
\begin{lemma}
\label{lem:MainMixing}
Let $\mu$ and $W$ be as in \thmref{Main2Design}.  Let the initial state be $\rho$ with $\gamma_0(p,p) \ge 0$ and $\sum_p \gamma_0(p,p) = 1$ (for example the state $\proj{\psi} \otimes \proj{\psi}$ for any pure state $\ket{\psi}$).  Then there
exists a constant $C$ (possibly depending on $\mu$) such that for
any $\eps>0$
\begin{itemize}
\item[(i)]{\begin{equation}
\sum_{p_1, p_2 \atop p_1 p_2 \ne 00} \left( \gamma_t(p_1,
    p_2) - \delta_{p_1 p_2} \frac{1}{2^n(2^n+1)} \right)^2 \le
\eps 
\end{equation}
for $t \ge Cn \log 1/\eps$.}
\item[(ii)]{\begin{equation}
\sum_{p_1, p_2 \atop p_1 p_2 \ne 00} \left| \gamma_t(p_1, p_2)
  - \delta_{p_1 p_2} \frac{1}{2^n(2^n+1)} \right| \le \eps
\end{equation}
for $t \ge Cn(n + \log 1/\eps)$ or, when $\mu$ is the uniform
distribution on $U(4)$ or its stabiliser subgroup, $t \ge Cn \log
\frac{n}{\eps}$.} 
\end{itemize}
\end{lemma}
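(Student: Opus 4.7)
The plan is to reduce the averaged second-moment evolution to a classical Markov chain on Pauli strings and then bound its spectral gap. First, I would analyze one step of the random circuit, averaged over both the gate $U\sim\mu$ and the random qubit pair $(i,j)$. Because the gate acts only on positions $i,j$ in both copies, the averaged evolution of $\gamma_t(p_1,p_2)$ is a linear map that, on the two affected qubits, is determined by the operator $G_\mu$ from Definition \ref{def:2copy-gapped}. The 2-copy gapped property ensures that on these two qubits the map has exactly $2!=2$ eigenvectors of eigenvalue $1$ (spanned by $I\otimes I$ and the swap), while every other component is shrunk by a factor $\lambda_\mu<1$ depending only on $\mu$.

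Next, starting from the hypothesized initial condition $\gamma_0(p,p)\ge 0$, $\sum_p \gamma_0(p,p)=1$ (with off-diagonal $\gamma_0(p,q)=0$ for $p\neq q$, which is preserved by the averaged evolution under the 2-copy gapped property), I would show that $\gamma_t(p,p)$ evolves as the distribution of a classical Markov chain $P$ on $\{0,1,2,3\}^n$. The stationary distribution is determined by the unique Haar-fixed vectors of $G_\mu$ and, using Lemma \ref{lem:SymmetricStateAverage}, is $\gamma_\infty(0,0)=1/(2^n+1)$ on the identity string and $\gamma_\infty(p,p)=1/(2^n(2^n+1))$ elsewhere. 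Parts (i) and (ii) then reduce to bounding, respectively, the $\ell_2$ and $\ell_1$ mixing times of $P$.

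To obtain part (i) with $t=O(n\log 1/\epsilon)$, the chain must have spectral gap $\Omega(1/n)$. The easy bound — local gap $\Omega(1)$ times the $1/\binom{n}{2}$ factor for choosing a random pair — only gives $\Omega(1/n^2)$, so the challenge is to gain a factor of $n$. I would do this in two stages: (a) handle the special case of uniform $\mu$ on $U(4)$ by exploiting the permutation symmetry of Pauli weights, reducing $P$ to a birth-death chain on a coarser statistic (e.g.\ the ``support type'' of $(p_1,p_2)$), whose gap can be computed directly and shown to be $\Omega(1/n)$ by a Cheeger/coupling argument; and (b) extend to arbitrary 2-copy gapped $\mu$ by a standard Markov chain comparison (Diaconis--Saloff-Coste), since $\mu^{\star\ell}$ for some constant $\ell$ dominates a constant multiple of Haar on $U(4)$ by the gap hypothesis. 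For part (ii), $\ell_1$ on the $4^n$-dimensional diagonal space is bounded by $2^n$ times the $\ell_2$ norm via Cauchy--Schwarz, so driving the $\ell_2$-distance to $\epsilon/2^n$ yields $t=O(n(n+\log 1/\epsilon))$. The improved $O(n\log(n/\epsilon))$ bound for uniform $\mu$ on $U(4)$ or its stabiliser subgroup should follow from a direct coupling or log-Sobolev analysis of the coarser chain, where the relative entropy can be made $\le\epsilon$ directly without paying the full $\ell_2\!\to\!\ell_1$ conversion.

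The main obstacle will be the $\Omega(1/n)$ gap rather than $\Omega(1/n^2)$. A subtle point is that the Markov chain $P$ has a partially decoupled structure on strings with many $0$ entries (since a gate acting on a qubit pair where both $p_1$ and $p_2$ are $I$ does nothing), which must be handled separately — this is presumably what \secref{ZeroChainGap} is about. I anticipate treating the chain restricted to the ``support'' structure of $(p_1,p_2)$ (which qubits are nontrivial) and the conditional chain on the nontrivial labels separately, and then combining the two via a decomposition-of-Markov-chains argument.
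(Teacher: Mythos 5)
Your overall strategy matches the paper's closely: average over the gate and the pair, reduce the evolution of the $\gamma_t(p,p)$ coefficients to a classical Markov chain on $\{0,1,2,3\}^n$, coarsen to a one-dimensional birth--death chain on the number of non-identity positions (your ``support type'' chain is exactly the paper's ``zero chain''), prove an $\Omega(1/n)$ gap by decomposition near the slow boundary plus a comparison argument, transfer from $U(4)$ to any $2$-copy gapped $\mu$ by Markov-chain comparison, and use Cauchy--Schwarz for the $\ell_1$ bound and a log-Sobolev/coupon-collector argument for the sharp $O(n\log(n/\eps))$ bound when $\mu$ is Haar on $U(4)$. All of those pieces are exactly what the paper does, and you correctly identified where the factor of $n$ is gained over the naive $\Omega(1/n^2)$ bound.

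There is, however, a genuine gap. The conclusion of the lemma sums over \emph{all} pairs $(p_1,p_2)\neq(0,0)$, including the off-diagonal terms $p_1\neq p_2$, which must be shown to decay to zero. Your proposal dismisses them by inserting the extra assumption that $\gamma_0(p,q)=0$ for $p\neq q$ and asserting this is preserved by the averaged evolution. Neither claim is warranted: the hypotheses constrain only the diagonal coefficients, and for the illustrative example $\proj\psi\ot\proj\psi$ the off-diagonals are $\gamma(p)\gamma(q)$, generically nonzero. Moreover, for a general $2$-copy gapped $\mu$ the operator $\hat G_\mu$ need not preserve the diagonal subspace $\mathrm{span}\{\sigma_p\ot\sigma_p\}$; the doubled channel only manifestly commutes with conjugation by the copy swap, which preserves the larger symmetric/antisymmetric decomposition, not the diagonal alone (that invariance is a special feature of the Haar projector). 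So even if the off-diagonals started at zero there is no a priori reason they stay there. The paper supplies a separate coupon-collector argument for these terms: each time the random pair lands on a position where $p_1$ and $p_2$ differ, the corresponding coefficient contracts by a factor $1-\Delta$, and a binomial-tail bound on $\sum_{p_1\neq p_2}\Expect|\gamma_t(p_1,p_2)|$ gives $O(\eps)$ after $O(n(n+\log 1/\eps))$ steps (or $O(n\log(n/\eps))$ in the $U(4)$ case by a refined count of how many positions have been hit). You need that piece as a standalone lemma before combining with the Markov chain bound. As a minor slip, the conserved coordinate $\gamma(0,0)$ should equal $2^{-n}$ for a physical two-copy state (which is what makes the remaining mass $1-2^{-n}$ spread uniformly to $1/(2^n(2^n+1))$ on each of the $4^n-1$ other diagonal strings), not $1/(2^n+1)$; this does not affect the argument since the $(0,0)$ term is excluded from the sums, but the arithmetic should be kept straight.
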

We can then extend this to all states by a simple corollary:
\begin{corollary}
\label{cor:MainMixing}
Let $\mu$, $W$ and $\gamma_W$ be as in \lemref{MainMixing}.  Then, for any initial state $\rho = \frac{1}{2^n}\sum_{p_1,p_2} \gamma_0 (p_1,p_2) \sigma_{p_1} \ot \sigma_{p_2}$, there
exists a constant $C$ (possibly depending on $\mu$) such that for
any $\eps>0$
\begin{itemize}
\item[(i)]{\begin{equation}
\sum_{p_1, p_2 \atop p_1 p_2 \ne 00} \left( \gamma_t(p_1,
    p_2) - \delta_{p_1 p_2} \frac{\sum_{p \ne 0} \gamma_0 (p,p)}{4^n-1} \right)^2 \le
\eps 
\label{eq:2-norm-converge} 
\end{equation}
for $t \ge Cn(n+\log 1/\eps)$.}
\item[(ii)]{\begin{equation}
\sum_{p_1, p_2 \atop p_1 p_2 \ne 00} \left| \gamma_t(p_1, p_2)
  - \delta_{p_1 p_2} \frac{\sum_{p \ne 0} \gamma_0 (p,p)}{4^n-1} \right| \le \eps
\label{eq:1-norm-converge}
\end{equation}
for $t \ge Cn(n + \log 1/\eps)$.} 
\end{itemize}
\end{corollary}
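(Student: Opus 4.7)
My plan is to deduce the corollary from Lemma \ref{lem:MainMixing} by exploiting linearity of the random-circuit evolution in $\gamma_0$. Both $\gamma_t(p_1,p_2)$ and the candidate Haar limit $\delta_{p_1 p_2}\sum_{p\ne 0}\gamma_0(p,p)/(4^n-1)$ depend linearly on $\gamma_0$, so their difference $\hat\gamma_t := \gamma_t-\gamma_\infty$ does too. It therefore suffices to write $\gamma_0$ as a short signed combination of inputs each satisfying the hypothesis of Lemma \ref{lem:MainMixing}, and then combine the per-piece bounds via the triangle inequality.

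Concretely, I would split the (real) diagonal Pauli coefficients as $\gamma_0(p,p)=a_+(p)-a_-(p)$ with $a_\pm(p)\ge 0$, and set $S_\pm=\sum_p a_\pm(p)$, $\alpha_\pm=a_\pm/S_\pm$ (probability distributions on Pauli strings). Define $\gamma_0^{(+)}$ to have diagonal $\alpha_+$ and off-diagonal $S_+^{-1}\gamma_0(p_1,p_2)$ for $p_1\ne p_2$, and $\gamma_0^{(-)}$ to have diagonal $\alpha_-$ and zero off-diagonal. Then $\gamma_0=S_+\gamma_0^{(+)}-S_-\gamma_0^{(-)}$, and each piece has non-negative diagonal summing to one as Lemma \ref{lem:MainMixing} requires. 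Because $\rho$ is a density matrix, $\sum_{p_1,p_2}\gamma_0(p_1,p_2)^2=\tr\rho^2\le 1$, and Cauchy-Schwarz gives $S_++S_-=\sum_p|\gamma_0(p,p)|\le 2^n$.

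Applying Lemma \ref{lem:MainMixing}(i) to each piece at precision $\eps'=\eps/4^n$ and using the 2-norm triangle inequality yields $\|\hat\gamma_t\|_2\le(S_++S_-)\sqrt{\eps'}\le 2^n\sqrt{\eps'}$, so $\|\hat\gamma_t\|_2^2\le\eps$ once $t\ge Cn\log(4^n/\eps)=O(n(n+\log 1/\eps))$, proving~(i). Part~(ii) is analogous with the 1-norm triangle inequality: Lemma \ref{lem:MainMixing}(ii) at precision $\eps'=\eps/2^n$ gives $\|\hat\gamma_t\|_1\le 2^n\eps'=\eps$ once $t\ge Cn(n+\log(2^n/\eps))=O(n(n+\log 1/\eps))$.

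The main subtlety is that Lemma \ref{lem:MainMixing} is stated with the specific target $1/(2^n(2^n+1))$, which tacitly assumes $\gamma_0(0,0)=2^{-n}$ (as for $\proj{\psi}\ot\proj{\psi}$), while the pieces $\gamma_0^{(\pm)}$ typically have $\gamma_0^{(\pm)}(0,0)=\alpha_\pm(0)\ne 2^{-n}$. One needs the mildly more general Markov-chain statement implicit in the lemma's proof, namely that $\gamma_t(p,p)$ converges to $(1-\gamma_0(0,0))/(4^n-1)$ for any non-negative diagonal summing to one; the identities $S_+-S_-=\sum_p\gamma_0(p,p)$ and $a_+(0)-a_-(0)=\gamma_0(0,0)$ then show that $S_+\gamma_\infty^{(+)}-S_-\gamma_\infty^{(-)}$ reproduces exactly the corollary's stationary $\delta_{p_1 p_2}\sum_{p\ne 0}\gamma_0(p,p)/(4^n-1)$.
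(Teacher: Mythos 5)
Your approach is a genuinely different decomposition from the paper's. The paper also exploits linearity, but does so by writing $\gamma_t(p,p)=\sum_{q\ne 0}\gamma_0(q,q)\,g_t(p,p;q,q)$ in terms of the Markov-chain kernel $g_t$ started from delta-function initial conditions, and then applies Cauchy--Schwarz directly to that sum; you instead split the diagonal vector into positive and negative parts, renormalise each to a probability distribution, and use the triangle inequality. Both routes land on the same dimensional losses ($\eps\to\eps/4^n$ for the 2-norm, $\eps\to\eps/2^n$ for the 1-norm), both require the same mild generalisation of Lemma~\ref{lem:MainMixing} that you correctly flag (stationary diagonal mass $(1-\gamma_0(0,0))/(4^n-1)$), and both ultimately give $t=O(n(n+\log 1/\eps))$. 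Your version avoids the explicit Cauchy--Schwarz step, at the cost of some extra bookkeeping in the decomposition.

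There is, however, one real gap in the proof as you have written it: you fold the off-diagonal coefficients $\gamma_0(p_1,p_2)$, $p_1\ne p_2$, rescaled by $1/S_+$, into the piece $\gamma_0^{(+)}$, and then invoke the full conclusion of Lemma~\ref{lem:MainMixing} on each piece. But the off-diagonal decay in the lemma (through Lemma~\ref{lem:CoefficientsDecay2General}) is proved using the physicality bound $\sum_{p_1,p_2}\gamma_0(p_1,p_2)^2\le\tr\rho^2\le 1$ (and, in the 1-norm version, $|\gamma_0(p_1,p_2)|\le 1$). Since $S_+$ can be as small as $2^{-n}$, the rescaled off-diagonal block in $\gamma_0^{(+)}$ can have 2-norm as large as $1/S_+\gg 1$, so $\gamma_0^{(+)}$ need not correspond to a physical state and the lemma's off-diagonal conclusion does not apply to it as stated. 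The simple repair is exactly what the paper does: keep the off-diagonal terms out of the decomposition entirely and bound them by applying Lemma~\ref{lem:MainMixing} (or directly Lemma~\ref{lem:CoefficientsDecay2General}) to the original physical state $\rho$; use your positive/negative split only for the diagonal vector $\gamma_0(p,p)$, where nothing beyond non-negativity and normalisation is needed for the Markov-chain argument. (Equivalently, one can observe that the off-diagonal evolution is linear and homogeneous, so the rescaling by $1/S_+$ commutes with the evolution and cancels when you multiply back by $S_+$; but this should be said explicitly rather than left implicit in an invocation of the lemma on a non-physical input.) Once that is patched, your argument goes through.
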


By the diamond-norm definition of an approximate design (\defref{ApproxUnitaryDesignDiamond}), we only need 
convergence in the 2-norm \peq{2-norm-converge}, which is implied by
1-norm convergence \peq{1-norm-converge} but 
weaker.  However, \defref{ApproxUnitaryDesignDankert} (TWIRL), which requires the map to be close to the
twirling operation, requires 1-norm convergence
(i.e. \eq{1-norm-converge}).  Thus, \thmref{Main2Design} for \defref{ApproxUnitaryDesignDiamond} (DIAMOND)
follows from \corref{MainMixing}(i) 
and \thmref{Main2Design} for \defref{ApproxUnitaryDesignDankert} (TWIRL) follows from \corref{MainMixing}(ii).
\thmref{Main2Design} is proved in \secref{MainResult} and \corref{MainMixing} in \secref{Convergence}.

We note that we do not need to separately prove the result for \defref{ApproxUnitaryDesignDankert} (TWIRL) since the result follows from the equivalence of the $k$-design definitions (\lemref{ApproxUnitaryDesignEquiv}).  However, we include the proof since, if our bounds were improved to show convergence in $O(n \log \frac{n}{\eps})$ time, if we simply applied \lemref{ApproxUnitaryDesignEquiv}, this would only imply that $O(n(n+ \log 1/\eps)$ time was needed for \defref{ApproxUnitaryDesignDankert} (TWIRL).

We also emphasise that, in the course of proving \lemref{MainMixing}, we prove that the eigenvalue gap (defined in \secref{MarkovChain}) of the Markov chain that gives the evolution of the $\gamma(p,p)$ terms is $O(1/n)$.  It is easy to show that this bound is tight for some gate sets.

{\em Related work:}  Here we compare our work with other related results and efficient
constructions of approximate unitary 2-designs.
\begin{itemize}
\item The uniform distribution over the Clifford group on $n$ qubits
  is an exact 2-design \cite{DLT02}.  Moreover, \cite{DLT02} described
  how to sample from the Clifford group using $O(n^8)$ classical gates
  and $O(n^3)$ quantum gates.  Our results show that applying
  $O(n(n+\log 1/\eps))$ random two-qubit Clifford gates also achieve
  an $\eps$-approximate 2-design (although not necessarily a
  distribution that is within $\eps$ of uniform on the Clifford
  group).
\item Dankert et al.~\cite{DCEL06} gave a specific circuit
  construction of an approximate 2-design.  To achieve small error in
  the sense of \defref{ApproxUnitaryDesignDiamond} (DIAMOND), their circuits require
  the same
  $O(n(n+\log 1/\eps))$ gates that our random circuits do.  However,
  when we use 
  \defref{ApproxUnitaryDesignDankert} (TWIRL), the circuits from \cite{DCEL06} only
  need $O(n \log 1/\eps)$ gates while we only show that random circuits of
  length $O(n(n+\log 1/\eps))$ suffice.
\item The closest results to our own are in the papers by Oliveira et
  al.~\cite{ODP06,DOP07}, which considered a specific gate set
  (random single qubit gates and a controlled-NOT) and proved that the
  second moments converge in time $O(n^2(n+ \log 1/\eps))$.  Our
  strategy of analysing random
  quantum circuits in terms of classical Markov chains is also adapted
  from \cite{ODP06,DOP07}.  In \secref{Moments}, we generalise this approach
   to analyse the $k^{\text{th}}$ moments for arbitrary $k$.

  Our main results extend the results of  \cite{ODP06,DOP07} to a larger class of gate sets and improve their
   convergence bounds.  Some of these improvements have been conjectured by
   \cite{Znidaric07}, where the author presented numerical evidence in support
   of them.
\item
   An algorithmic application of random circuits was given in
   \cite{HH08}, where they were used to construct a new class of
   superpolynomial quantum speedups.  In that paper, random circuits of
   length $O(n^3)$ were used in order to guarantee that they were
   so-called ``dispersing'' circuits.  Our results immediately imply that
   circuits of length $O(n^2)$ would instead suffice.   We believe that
   this could be further improved with a specialised argument, since
   \cite{HH08} assumed that the input to the random circuit was always a
   computational basis state.
\end{itemize}

\section{Analysis of the Moments}
\label{sec:Moments}

In order to prove our results, we need to understand how the state
evolves after each step of the random circuit.  
In this section we consider just one step and a fixed pair of qubits.
Later on we will extend this to prove convergence results for multiple
steps with random pairs of qubits drawn at every step.  We consider first the Haar distribution over the full unitary group
and then will discuss the more general case of any 2-copy gapped
distribution. 

In this section, we work in general dimension $d$ and with a general
Hermitian orthogonal basis $\sigma_0,\ldots,\sigma_{d^2-1}$.  Later we
will take $d$ to be either 4 or $2^n$ and the $\sigma_i$ to be Pauli
matrices.  However, in this section we keep the discussion general to
emphasise the potentially broader applications.

Fix an orthonormal basis for $d\times d$ Hermitian matrices:
$\sigma_0, \ldots, \sigma_{d^2-1}$, normalised so that $\tr
\sigma_p\sigma_q = d\,\delta_{p,q}$.   Let $\sigma_0$ be the identity.
We need to evaluate the quantity 
\begin{equation}
\label{eq:Tk}
\Expect_U \left(U^{\otimes k} \sigma_{p_1} \otimes \ldots \otimes \sigma_{p_k} (U^\dagger)^{\otimes k}\right) =: T(\bfp)
\end{equation}
where the expectation is over Haar distributed $U \in U(d)$.  We will need this quantity in two cases.  Firstly, for $d=2^n$, these are the moments obtained after applying a uniformly distributed unitary so we know what the random circuit must converge to.  Secondly, for $d=4$, this tells us how a random $U(4)$ gate acts on any chosen pair.

Call the quantity in \eq{Tk} $T(\bfp)$ (we use \textbf{bold} to indicate a $k$-tuple of coefficients; take $\bfp = (p_1, \ldots, p_k)$) and write it in the $\sigma_p$ basis as
\begin{equation}
T(\bfp) = \sum_{\bfq} \hat{G}(\bfq; \bfp) \sigma_{q_1} \otimes \ldots \otimes \sigma_{q_k}.
\end{equation}
Here, $\hat{G}(\bfq; \bfp)$ is the coefficient in the Pauli expansion of $T(\bfp)$ and we define $\hat{G}$ as the matrix with entries equal to $\hat{G}(\bfq; \bfp)$.  We have left off the usual normalisation factor because, as we shall see, with this normalisation $\hat{G}$ is a projector.  Inverting this, we have
\begin{align}
\hat{G}(\bfq; \bfp) 
&= d^{-k} \tr \left(\sigma_{q_1} \otimes \ldots \otimes \sigma_{q_k}
  T(\bfp)\right) \nonumber\\
& = d^{-k} \bbE_U \tr \left(  (\sigma_{q_1}\ot \cdots \ot \sigma_{q_k}) 
U^{\ot  k}  (\sigma_{p_1}\ot \cdots \ot \sigma_{p_k}) (U^\dag)^{\ot k}\right)
\label{eq:G}\end{align}
Note that $\hat{G}$ is real since $T$ and the basis are Hermitian.

We can gain all the information we need about the Haar integral in \eq{Tk} with the following observations:
\begin{lemma}
\label{lem:HaarIntegralCommutes}
$T(\bfp)$ commutes with $U^{\otimes k}$ for any unitary $U$.
\end{lemma}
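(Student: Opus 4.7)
The plan is to exploit the left-invariance of the Haar measure directly. I want to show that $U^{\otimes k} T(\bfp) (U^\dagger)^{\otimes k} = T(\bfp)$ for every unitary $U$, which is equivalent to the commutation claim.

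First, I would move the conjugation by $U^{\otimes k}$ inside the expectation:
\bes
U^{\otimes k}\, T(\bfp)\, (U^\dagger)^{\otimes k}
= \Expect_{V \sim \cU(d)} \bigl[(UV)^{\otimes k}\, \sigma_{p_1}\ot\cdots\ot\sigma_{p_k}\, ((UV)^\dagger)^{\otimes k}\bigr].
\ees
Then I would perform the substitution $W = UV$. Since the Haar measure $dV$ on $\cU(d)$ is left-invariant, i.e.\ $dV = d(UV)$ for every fixed $U$, the integrand becomes a Haar integral in $W$, which is exactly $T(\bfp)$. Multiplying on the right by $U^{\otimes k}$ then gives $U^{\otimes k} T(\bfp) = T(\bfp)\, U^{\otimes k}$, as required.

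There is really no obstacle here: the only substantive content is the invariance property of the Haar measure recalled in \secref{RandomUnitaries}, together with the fact that $(UV)^{\otimes k} = U^{\otimes k} V^{\otimes k}$ (and likewise for the adjoints), which lets the conjugation by $U^{\otimes k}$ be absorbed into a change of variables. I would keep the proof to these two or three lines.
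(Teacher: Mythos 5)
Your proof is correct and takes essentially the same route as the paper's, which states only that the lemma ``follows from the invariance of the Haar measure on the unitary group''; you have simply written out the change of variables $W = UV$ explicitly. One small remark: the paper's stated invariance $\int f(U)\,dU = \int f(UV)\,dU$ is right-invariance, while your substitution uses left-invariance, but the two are equivalent on a compact group such as $\cU(d)$, so nothing is amiss.
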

\begin{proof}
Follows from the invariance of the Haar measure on the unitary group.
\end{proof}

\begin{corollary}
\label{cor:IsPerms}
$T(\bfp)$ is a linear combination of permutations from the symmetric group $S_k$.
\end{corollary}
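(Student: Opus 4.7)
The plan is to apply Schur--Weyl duality, exactly as was done in the alternative proof of \lemref{SymmetricStateAverage} earlier in the chapter. By \lemref{HaarIntegralCommutes}, the operator $T(\bfp)$ lies in the commutant of the representation $U \mapsto U^{\otimes k}$ of $\cU(d)$ on $(\bbC^d)^{\otimes k}$. Schur--Weyl duality identifies this commutant as the algebra generated by the subsystem permutation operators $S(\pi)$ for $\pi \in \cS_k$, so any element of the commutant can be written as $\sum_{\pi \in \cS_k} \alpha_\pi S(\pi)$ for some coefficients $\alpha_\pi$. Applying this to $T(\bfp)$ immediately yields the claim.

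The argument is essentially one line given Schur--Weyl duality, and there is no real obstacle since the groundwork has already been laid: \lemref{HaarIntegralCommutes} supplies the commutation property, and the paper has already invoked Schur--Weyl duality in precisely this form. The only things to be careful about are (i) noting that one does not need the $\alpha_\pi$ to have any particular symmetry here (unlike in \lemref{SymmetricStateAverage}, where the invariance of the integrand under permutations of the tensor factors forced all $\alpha_\pi$ to coincide; here the $\bfp$ dependence breaks that symmetry), and (ii) making clear that the coefficients $\alpha_\pi$ will in general depend on $\bfp$, which is why they become the matrix entries $\hat{G}(\bfq;\bfp)$ in the Pauli basis.

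Thus the corollary follows in a single step from Schur--Weyl duality applied to \lemref{HaarIntegralCommutes}, with no additional computation required. The subsequent task of working out the explicit coefficients $\hat{G}(\bfq;\bfp)$ and verifying that $\hat{G}$ is a projector (as hinted by the remark about normalisation) will be carried out in the ensuing discussion, but does not belong to the proof of this corollary itself.
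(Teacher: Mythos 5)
Your proof is exactly the paper's: Corollary \ref{cor:IsPerms} is derived from Lemma \ref{lem:HaarIntegralCommutes} by invoking Schur--Weyl duality, which identifies the commutant of $U^{\ot k}$ with the span of the subsystem permutation operators $S(\pi)$. Your additional remarks about the $\bfp$-dependence of the coefficients and the lack of permutation symmetry are correct and helpful context, but the core argument is identical to the paper's one-line proof.
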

\begin{proof}
This follows from Schur-Weyl duality (see e.g.~\cite{GoodmanWallach98}).
\end{proof}

From this, we can prove that $\hat{G}$ is a projector and find its eigenvectors.
\begin{theorem}
\label{thm:SymmetryG}
$\hat{G}$ is symmetric, i.e. $\hat{G}(\bfq; \bfp) = \hat{G}(\bfp; \bfq)$.
\end{theorem}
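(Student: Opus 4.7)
The plan is to start from the integral representation
\[
\hat{G}(\bfq;\bfp) = d^{-k} \bbE_U \tr\!\left( \sigma_{\bfq}\, U^{\ot k} \sigma_{\bfp} (U^\dag)^{\ot k}\right),
\]
where I am writing $\sigma_{\bfp}$ for $\sigma_{p_1}\ot\cdots\ot\sigma_{p_k}$, and exploit two basic symmetries: invariance of the Haar measure under $U \mapsto U^\dag$, and cyclicity of the trace. Concretely, by Haar invariance (the map $U\mapsto U^\dag$ preserves $dU$ since $dU$ is both left- and right-invariant, so in particular inversion-invariant),
\[
\bbE_U \tr\!\left( \sigma_{\bfq}\, U^{\ot k} \sigma_{\bfp} (U^\dag)^{\ot k}\right)
= \bbE_U \tr\!\left( \sigma_{\bfq}\, (U^\dag)^{\ot k} \sigma_{\bfp}\, U^{\ot k}\right).
\]
Now cycle the trace to move $\sigma_{\bfq}$ past $(U^\dag)^{\ot k}\sigma_{\bfp}U^{\ot k}$, giving
\[
\bbE_U \tr\!\left( \sigma_{\bfp}\, U^{\ot k} \sigma_{\bfq}\, (U^\dag)^{\ot k}\right) = d^k \hat{G}(\bfp;\bfq),
\]
which is exactly the required symmetry after dividing through by $d^k$.

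There is essentially no obstacle here; the entire content is the two one-line manipulations above, so the proof is a few lines long. The only mild subtlety to state explicitly is why $U\mapsto U^\dag$ preserves the Haar measure — this follows from the uniqueness of the bi-invariant probability measure on the compact group $U(d)$ applied to the pushforward measure — but since the paper has already invoked Haar invariance in \lemref{HaarIntegralCommutes} and in the proof of \lemref{SymmetricStateAverage}, I would just cite unitary invariance of $dU$ and move on. Note also that Hermiticity of the basis is not actually needed here; symmetry of $\hat{G}$ holds for the complex expansion coefficients in any orthogonal basis, provided the trace pairing is the one used to define $\hat{G}$.
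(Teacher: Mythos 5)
Your proof is correct and, in spirit, is the same approach the paper intends, but it is actually more careful than the paper's stated justification. The paper's proof says only that the claim ``follows from the invariance of the trace under cyclic permutations,'' and that alone is not enough: the cyclic rearrangements of $\tr\bigl(\sigma_\bfq U^{\ot k}\sigma_\bfp (U^\dag)^{\ot k}\bigr)$ produce only
\be
\tr\bigl(\sigma_\bfp (U^\dag)^{\ot k}\sigma_\bfq U^{\ot k}\bigr),\qquad \tr\bigl(U^{\ot k}\sigma_\bfp (U^\dag)^{\ot k}\sigma_\bfq\bigr),\qquad \tr\bigl((U^\dag)^{\ot k}\sigma_\bfq U^{\ot k}\sigma_\bfp\bigr),
\ee
none of which is $\tr\bigl(\sigma_\bfp U^{\ot k}\sigma_\bfq (U^\dag)^{\ot k}\bigr)$. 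To close the gap one must also use that the Haar measure is invariant under $U\mapsto U^\dag$, which is exactly the second ingredient you identify. So your two-step argument --- inversion-invariance of $dU$, then a cyclic shift --- is the complete version of what the paper gestures at. Your closing remark is also accurate: with $\hat G$ defined via the bilinear pairing $\hat G(\bfq;\bfp)=d^{-k}\tr(\sigma_\bfq T(\bfp))$ and $\tr\sigma_p\sigma_q = d\delta_{pq}$, nothing in the argument uses Hermiticity of the basis; Hermiticity is only used in the subsequent observation that $\hat G$ is real.
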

\begin{proof}
Follows from the invariance of the trace under cyclic permutations.
\end{proof}

\begin{theorem}
\label{thm:EigenvectorsG}
$S(\pi)$ is an eigenvector of $\hat{G}$ with eigenvalue $1$ for any subsystem permutation operator $S(\pi)$ i.e.
\begin{equation*}
\sum_{\bfq} \hat{G}(\bfp; \bfq) \tr (\sigma_{q_1} \otimes \ldots \otimes \sigma_{q_k} S(\pi)) = \tr (\sigma_{p_1} \otimes \ldots \otimes \sigma_{p_k} S(\pi)).
\end{equation*}
Further, any vector orthogonal to this set has eigenvalue $0$.
\end{theorem}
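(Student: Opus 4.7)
The plan is to work directly from the defining formula \eq{G} for $\hat{G}$ and exploit the Pauli expansion of the subsystem permutation $S(\pi)$. Set $v_\pi(\bfq) := \tr\l(\sigma_{q_1} \ot \cdots \ot \sigma_{q_k}\, S(\pi)\r)$. Because $\{\sigma_{q_1}\ot\cdots\ot\sigma_{q_k}\}_\bfq$ is a Hermitian orthogonal basis with $\tr(\sigma_\bfp\sigma_\bfq) = d^k\delta_{\bfp,\bfq}$, one has the identity $S(\pi) = d^{-k}\sum_\bfq v_\pi(\bfq)\,\sigma_{q_1}\ot\cdots\ot\sigma_{q_k}$. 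This converts the statement into a question about how the Haar twirl acts on $S(\pi)$.

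For the first claim, I would substitute \eq{G} on the left-hand side to obtain
\bes
\sum_\bfq \hat{G}(\bfp;\bfq)\,v_\pi(\bfq) = d^{-k}\tr\l((\sigma_{p_1}\ot\cdots\ot\sigma_{p_k})\,\bbE_U\l[U^{\ot k}\l(\sum_\bfq v_\pi(\bfq)\sigma_{q_1}\ot\cdots\ot\sigma_{q_k}\r)(U^\dagger)^{\ot k}\r]\r).
\ees
By the expansion above the inner bracket is exactly $d^k S(\pi)$, and the standard fact that $S(\pi)$ commutes with $U^{\ot k}$ for every unitary $U$ (Schur--Weyl) makes the Haar expectation trivial: $U^{\ot k} S(\pi) (U^\dagger)^{\ot k} = S(\pi)$. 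The right-hand side then reduces to $\tr(\sigma_{p_1}\ot\cdots\ot\sigma_{p_k}\,S(\pi)) = v_\pi(\bfp)$.

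For the second claim, the strategy is to identify the range of $\hat{G}$ and use symmetry. For an arbitrary input vector $w$, writing $X_w := \sum_\bfq w(\bfq)\,\sigma_{q_1}\ot\cdots\ot\sigma_{q_k}$, the definition gives $(\hat{G}w)(\bfp) = d^{-k}\tr\l((\sigma_{p_1}\ot\cdots\ot\sigma_{p_k})\,\bbE_U[U^{\ot k} X_w (U^\dagger)^{\ot k}]\r)$. By \corref{IsPerms} and linearity the inner Haar expectation is a linear combination $\sum_\pi \alpha_\pi S(\pi)$, so $\hat{G}w$ lies in $\mathrm{span}\{v_\pi\}$. Combined with the first claim, which already places each $v_\pi$ in the range of $\hat{G}$, this yields $\mathrm{range}(\hat{G}) = \mathrm{span}\{v_\pi\}$. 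Since $\hat{G}$ is real symmetric by \thmref{SymmetryG}, its kernel is the orthogonal complement of its range, so any vector orthogonal to $\mathrm{span}\{v_\pi\}$ is annihilated, i.e.\ has eigenvalue $0$.

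The proof is essentially a direct computation, so there is no serious obstacle; the main thing to keep straight is the Pauli normalisation and the two separate uses of Schur--Weyl, once as the commutation identity $[U^{\ot k},S(\pi)]=0$ and once via \corref{IsPerms}. One minor subtlety worth flagging is that the $v_\pi$ are not linearly independent when $d<k$ (the $S(\pi)$ themselves collapse there), but this is harmless because the argument only needs the span of $\{v_\pi\}$ to identify $\mathrm{range}(\hat{G})$.
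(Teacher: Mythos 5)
Your proof is correct and takes essentially the same route as the paper. Both proofs establish the first claim by reassembling the Pauli sum and invoking $[S(\pi),U^{\ot k}]=0$ to trivialise the Haar twirl, and your range/kernel argument for the second claim is a mild linear-algebraic repackaging of the paper's observation (via \corref{IsPerms}) that each row $\hat{G}(\bfp;\cdot)$ lies in $\Span\{v_\pi\}$, so that $\hat{G}v=0$ whenever $v$ is orthogonal to every $v_\pi$.
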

\begin{proof}
For the first part,
\begin{align}
\sum_{\bfq} &\hat{G}(\bfp; \bfq)
\tr (\sigma_{q_1} \otimes \ldots \otimes \sigma_{q_k} S(\pi)) \nn
&= d^{-k} \sum_{\bfq} \mathbb{E}_U \tr \left(\sigma_{q_1} U \sigma_{p_1} U^\dagger\right) \ldots 
\tr \l(\sigma_{q_k} U
  \sigma_{p_k} U^\dagger\right) \tr\left(\sigma_{q_1} \otimes \ldots
  \otimes \sigma_{q_k} S(\pi) \right)\nn
&= d^{-k} \tr \left( S(\pi) \mathbb{E}_U \sum_{q_1} \tr \left(\sigma_{q_1} U
  \sigma_{p_1} U^\dagger\right) \sigma_{q_1} \otimes \ldots 
\otimes \sum_{q_k} \tr
\left(\sigma_{q_k} U \sigma_{p_k} U^\dagger\right) \sigma_{q_k} \r)
\label{eq:G-perm-evector}
\end{align}
Writing $U^\dagger \sigma_p U$ in the $\sigma_p$ basis, we find
\begin{equation*}
\frac{1}{d} \sum_q \tr\left( \sigma_q U \sigma_p U^\dagger \right) \sigma_q = U \sigma_p U^\dagger.
\end{equation*}
Therefore \eq{G-perm-evector} becomes 
\begin{equation*}
\tr \left( S(\pi) \mathbb{E}_U U^\dagger \sigma_{p_1} U \otimes \ldots \otimes U^\dagger \sigma_{p_k} U \right) =\tr \left( \sigma_{p_1} \otimes \hdots \otimes \sigma_{p_k} S(\pi) \right).
\end{equation*}
For the second part, consider any vector $v$ which is orthogonal to the permutation operators (we can neglect the complex conjugate because $S(\pi)$ is real in this basis), i.e.
\begin{equation}
\label{eq:OrthToPerms}
\sum_{\bfq} \tr \left( \sigma_{q_1} \otimes \ldots \otimes \sigma_{q_k} S(\pi) \right) v(\bfq) = 0
\end{equation}
for any permutation $\pi$.  Then
\begin{equation*}
\sum_{\bfq} \hat{G}(\bfp; \bfq) v(\bfq) = d^{-k} \sum_{\bfq} \tr \left(\sigma_{q_1} \otimes \ldots \otimes \sigma_{q_k} T(\bfp) \right) v(\bfq)
\end{equation*}
which is zero since $T(\bfp)$ is a linear combination of permutations and $v$ is orthogonal to this by \eq{OrthToPerms}.
\end{proof}

\begin{theorem}
$\hat{G}^2 = \hat{G}$, i.e. $\sum_{\mathbf{q'}} \hat{G}(\bfp; \mathbf{q'}) \hat{G}(\mathbf{q'}; \bfq) = \hat{G}(\bfp; \bfq)$.
\end{theorem}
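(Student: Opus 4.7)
The plan is to observe that the two preceding theorems already determine $\hat{G}$ to be an orthogonal projector, from which $\hat{G}^2 = \hat{G}$ is immediate.

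First, by \thmref{SymmetryG}, $\hat{G}$ is symmetric, and since the basis $\{\sigma_p\}$ is Hermitian and $T(\bfp)$ is Hermitian, the matrix entries $\hat{G}(\bfq;\bfp)$ are real. Thus $\hat{G}$ is a real symmetric matrix, and the spectral theorem gives an orthonormal eigenbasis with real eigenvalues.

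Next, \thmref{EigenvectorsG} pins down the spectrum: the vectors $v_\pi(\bfq) := \tr(\sigma_{q_1} \otimes \cdots \otimes \sigma_{q_k} S(\pi))$ for $\pi \in S_k$ span the $+1$-eigenspace of $\hat{G}$, while every vector orthogonal to their span lies in the kernel. Hence every eigenvalue of $\hat{G}$ is either $0$ or $1$, so $\hat{G}$ is an orthogonal projector and the desired identity follows.

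There is no genuine obstacle: this claim is a direct corollary of the structural results already proved. An alternative route that avoids passing through the eigendecomposition would be to recognise $\hat{G}$ as the matrix (in the Pauli basis) of the twirl superoperator $T : X \mapsto \bbE_U U^{\otimes k} X (U^\dagger)^{\otimes k}$ and use the bi-invariance of the Haar measure, $\bbE_V \bbE_U f(VU) = \bbE_W f(W)$, to verify $T \circ T = T$ directly and then translate this back to the basis coefficients.
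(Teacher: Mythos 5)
Your proof is correct, and it takes a genuinely different route from the paper's. You deduce $\hat{G}^2=\hat{G}$ structurally: combining \thmref{SymmetryG} (so $\hat{G}$ is real symmetric and the spectral theorem applies) with both halves of \thmref{EigenvectorsG} (eigenvalue $1$ on the span of the permutation vectors, eigenvalue $0$ on the orthogonal complement) to conclude $\hat{G}$ is an orthogonal projector. The paper instead argues computationally and only needs the \emph{first} half of \thmref{EigenvectorsG}: using \eq{G} it writes $\hat{G}(\mathbf{q'};\bfq)=d^{-k}\tr\l(\sigma_{q'_1}\ot\cdots\ot\sigma_{q'_k}\,T(\bfq)\r)$, notes via \corref{IsPerms} that $T(\bfq)$ is a linear combination of the $S(\pi)$, hence each column $\hat{G}(\cdot\,;\bfq)$ is a linear combination of permutation eigenvectors and is therefore fixed by $\hat{G}$. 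Your version buys conceptual clarity (orthogonal projectors square to themselves) at the cost of invoking symmetry and the spectral theorem where the paper does not. In fact you are slightly overequipped: once you know $\hat{G}$ acts as the identity on $V=\Span\{v_\pi\}$ and annihilates $V^\perp$, then for any $z=v+w$ with $v\in V,\ w\in V^\perp$ you have $\hat{G}z=v$ and $\hat{G}^2z=\hat{G}v=v=\hat{G}z$, giving the projector identity without needing symmetry or diagonalisability at all. Your second suggested route (recognising $\hat{G}$ as the twirl $X\mapsto \bbE_U U^{\ot k}X(U^\dagger)^{\ot k}$ in the $\sigma_p$ basis and using translation invariance of the Haar measure to get $T\circ T=T$) is also sound and is arguably the most transparent argument of the three.
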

\begin{proof}
Using \eq{G},
\begin{equation*}
\sum_{\mathbf{q'}} \hat{G}(\bfp; \mathbf{q'}) \hat{G}(\mathbf{q'}; \bfq) = \sum_{\mathbf{q'}} \hat{G}(\bfp; \mathbf{q'}) d^{-k} \tr \left(\sigma_{q'_1} \otimes \ldots \otimes \sigma_{q'_k} T(\bfq)\right).
\end{equation*}
From \corref{IsPerms}, $T(\bfq)$ is a linear combination of permutations.  This implies, using \thmref{EigenvectorsG} that
\begin{align*}
\sum_{\mathbf{q'}} \hat{G}(\bfp; \mathbf{q'}) d^{-k} \tr \left(\sigma_{q'_1} \otimes \ldots \otimes \sigma_{q'_k} T(\bfq)\right) &= d^{-k} \tr \left(\sigma_{p_1} \otimes \ldots \otimes \sigma_{p_k} T(\bfq)\right) \\
&= \hat{G}(\bfp; \bfq)
\end{align*}
as required.
\end{proof}

\begin{corollary}
$\hat{G}$ is a projector so has eigenvalues $0$ and $1$.
\end{corollary}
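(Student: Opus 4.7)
The plan is essentially to invoke the previous theorem directly: since $\hat{G}^2=\hat{G}$ has just been established, $\hat{G}$ is idempotent, which is the defining property of a (not necessarily orthogonal) projector. From idempotence it is immediate that the minimal polynomial of $\hat{G}$ divides $x^2-x=x(x-1)$, so every eigenvalue of $\hat{G}$ satisfies $\lambda^2=\lambda$, forcing $\lambda\in\{0,1\}$. This gives the eigenvalue claim with no further work.

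To round things out I would remark that $\hat{G}$ is in fact an orthogonal projector: \thmref{SymmetryG} shows it is symmetric (and it is real by the remark after \eqref{eq:G}), so $\hat{G}=\hat{G}^\dagger=\hat{G}^2$, which is the standard characterisation of an orthogonal projector on a real inner product space. Combined with \thmref{EigenvectorsG}, which identifies the $1$-eigenspace as the span of the subsystem permutation operators $\{S(\pi):\pi\in S_k\}$ and the orthogonal complement as the $0$-eigenspace, this identifies $\hat{G}$ precisely as the orthogonal projector onto $\mathrm{span}\{S(\pi):\pi\in S_k\}$, which is the geometric content behind the $k!$-dimensional invariant subspace mentioned after \defref{2copy-gapped}. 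There is no real obstacle here since all the work was done in the preceding theorem; the only thing to be careful about is not to overstate the dimension of the $1$-eigenspace without appealing to the linear independence of the $S(\pi)$, which holds whenever $d\geq k$.
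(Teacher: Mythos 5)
Your proof is correct and takes essentially the same route as the paper, which simply lets the corollary follow from the idempotence $\hat{G}^2=\hat{G}$ established in the preceding theorem. The extra observations about orthogonality (via the symmetry from \thmref{SymmetryG}) and the identification of the $1$-eigenspace are accurate and consistent with the surrounding discussion, though not strictly needed for the stated claim.
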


We now evaluate $\hat{G}$ and $T$ for the cases of $k=1$ and $k=2$ since these are the cases we are interested in for the remainder of the chapter.

\subsection{\texorpdfstring{$k=1$}{k=1}}

The $k=1$ case is clear: the random unitary completely randomises the state.  Therefore all terms in the expansion are set to zero apart from the identity i.e.
\begin{equation}
T(p) =
\begin{cases}
\sigma_0	&	p=0\\
0	&	p \ne 0.
\end{cases}
\end{equation}

\subsection{\texorpdfstring{$k=2$}{k=2}}
\label{sec:GforK2}

For $k=2$, there are just two permutation operators, identity $I$ and swap $\mathcal{F}$.  Therefore there are just two eigenvectors with non-zero eigenvalue ($n > 1$).  In normalised form, taking them to be orthogonal, their components are
\begin{align*}
f_1(q_1, q_2) &= \delta_{q_1 0} \delta_{q_2 0} \\
f_2(q_1, q_2) &= \frac{1}{d^2-1} \delta_{q_1 q_2} (1 - \delta_{q_1 0})
\end{align*}
We will now prove three properties of $\hat{G}$ that we need:
\begin{enumerate}
\item{$\hat{G}(p_1, p_2; q_1, q_2) = 0$ if $p_1 \ne p_2$ or $q_1 \ne q_2$.
\begin{proof}
Consider the function $f(q_1, q_2) = \delta_{q_1 a} \delta_{q_2 b}$ with $a \ne b$.  This function has zero overlap with the eigenvectors $f_1$ and $f_2$ so it goes to zero when acted on by $\hat{G}$.  Therefore $\hat{G}(p_1, p_2; a, b) = 0$.  The claim follows from the symmetry property (\thmref{SymmetryG}).
\end{proof}}
With this we will write $\hat{G}(p; q) \equiv \hat{G}(p_1, p_2; q_1, q_2)$.
\item{$\hat{G}(p; 0) = \delta_{p 0}$.
\begin{proof}
Let $\hat{G}$ act on eigenvector $f_1$.
\end{proof}}
\item{$\hat{G}(p; a) = \frac{1}{d^2-1}$ for $a, p \ne 0$.
\begin{proof}
Let $\hat{G}$ act on the input $\delta_{q a}$.  This has zero overlap with $f_1$ and overlap $\frac{1}{d^2-1}$ with $f_2$.
\end{proof}}
\end{enumerate}
Therefore we have
\begin{equation}
\hat{G}(p_1, p_2; q_1, q_2) =
\begin{cases}
0	& p_1 \ne p_2 \rm{\, or\,} q_1 \ne q_2 \\
1	& p_1 = p_2 = q_1 = q_2 = 0 \\
\frac{1}{d^2-1}	& p_1 = p_2 \ne 0, q_1 = q_2 \ne 0 \\
\end{cases}
\end{equation}

Since $T(p_1, p_2) = \sum_{q_1, q_2} \hat{G}(p_1, p_2; q_1, q_2) \sigma_{q_1} \otimes \sigma_{q_2}$, we have
\begin{equation}
T(p_1, p_2) =
\begin{cases}
0	&	p_1 \ne p_2 \\
\sigma_0 \otimes \sigma_0	&	p_1 = p_2 = 0 \\
\frac{1}{d^2-1} \sum_{p' \ne 0} \sigma_{p'} \otimes \sigma_{p'}	&	p_1 = p_2 \ne 0.
\end{cases}
\end{equation}
Therefore the terms $\sigma_{p_1} \otimes \sigma_{p_2}$ with $p_1 \ne p_2$ are set to zero.  Further, the sum of the diagonal coefficients $\gamma(p, p)$ is conserved.  This allows us to identify this with a probability distribution (after renormalising) and use Markov chain analysis.  To see this, write again the starting state
\begin{equation*}
\rho = \frac{1}{d} \sum_{q_1, q_2} \gamma_0(q_1, q_2) \sigma_{q_1} \otimes \sigma_{q_2}
\end{equation*}
with state after application of any unitary $W$
\begin{equation*}
\rho_W = \frac{1}{d} \sum_{q_1, q_2} \gamma_W(q_1, q_2) \sigma_{q_1} \otimes \sigma_{q_2} = 2^{-n} \sum_{q_1, q_2} \gamma(q_1, q_2) \left(W \sigma_{q_1} W^{\dagger}\right) \otimes \left(W \sigma_{q_2} W^{\dagger}\right).
\end{equation*}
Then
\begin{align*}
\sum_q \gamma_W(q, q) &= \frac{1}{d} \sum_q \tr \left(\sigma_q \otimes \sigma_q \rho_W\right) \\
&= \tr \left(\swap \rho_W\right) \\
&= \frac{1}{d} \sum_{q_1, q_2} \gamma(q_1, q_2) \tr \left( \swap \left(W \sigma_{q_1} W^{\dagger}\right) \otimes \left(W \sigma_{q_2} W^{\dagger}\right) \right)\\
&= \frac{1}{d} \sum_{q_1, q_2} \gamma(q_1, q_2) \tr \left( \sigma_{q_1} \sigma_{q_2} \right)\\
&= \sum_q \gamma(q, q)
\end{align*}
as required, where $\swap$ is the swap operator and we have used
Lemmas \ref{lem:Swap} and \ref{lem:TraceCycles}. 

\subsection{Moments for General Universal Random Circuits}
\label{sec:MomentsGeneral}

We now consider universal distributions $\mu$ that in general may be
different from the uniform (Haar) measure on $U(d)$.  Our main result in this section
will be to show that a universal distribution on $U(4)$ is also 2-copy
gapped.  In fact, we will phrase this result in slightly more general
terms and show that a universal distribution on $U(d)$ is also
$k$-copy gapped for any $k$.  Universality (\defref{dist-universal}) generalises in
the obvious way to $U(d)$, whereas when we say that $\mu$ is $k$-copy
gapped, we mean that \be \|G_\mu - G_{U(d)}\|_\infty < 1,
\label{eq:k-copy-gapped}\ee
where $G_{?}=\bbE_U U^{\ot k,k}$, with the
expectation taken over $\mu$ for $G_\mu$ or over the Haar measure for
$G_{U(d)}$.  

The reason \eq{k-copy-gapped} represents our condition for $\mu$ to be
$k$-copy gapped is as follows:
Observe that $\hat{G}$ and $G$ are unitarily related, so the
definition of $k$-copy gapped could equivalently be given in terms of
$\hat{G}$.  We have shown above that $\hat{G}_{U(d)}$ (and thus
$G_{U(d)}$) has all eigenvalues equal to $0$ or $1$ i.e. it is a projector.  By
contrast, $G_\mu$ may not even be Hermitian.  However, we will prove
below that all eigenvectors of ${G}_{U(d)}$ with eigenvalue 1 are also
eigenvectors of ${G}_\mu$ with eigenvalue 1.  Thus, \eq{k-copy-gapped}
will imply that $\lim_{t\ra\infty} (\hat{G}_\mu)^t= \hat{G}_{U(d)}$,
just as we would expect for a gapped random walk.

 We would like to show that \eq{k-copy-gapped} holds whenever
$\mu$ is universal.  This result was proved in \cite{ArnoldKrylov62}
(and was probably known even earlier)
when $\mu$ had the form $(\delta_{U_1}+\delta_{U_2})/2$.  Here we show how
to extend the argument to any universal $\mu$.

\begin{lemma}
\label{lem:GEigenvalues}
Let $\mu$ be a distribution on $U(d)$. Then all eigenvectors of $G_{U(d)}$ with
eigenvalue 1 are eigenvectors of $G_\mu$ with eigenvalue 1.
Additionally, if $\mu$ is universal then $\mu$ is $k$-copy gapped for
any positive integer $k$ (cf. \eq{k-copy-gapped}).
\end{lemma}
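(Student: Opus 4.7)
The first assertion I would dispatch immediately: from the unitary invariance of the Haar measure, $G_{U(d)}^2 = G_{U(d)}$ (substitute $W=UV$ in the double integral) and $G_{U(d)}^* = G_{U(d)}$ (substitute $U \mapsto U^{-1}$), so $G_{U(d)}$ is an orthogonal projector onto the subspace $\{v : U^{\ot k,k} v = v \text{ for every } U \in U(d)\}$ (matching the analysis of $\hat{G}$ earlier in the section). Any such $v$ then visibly satisfies $G_\mu v = \int U^{\ot k,k} v\, d\mu(U) = v$.

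For the gapped claim, my plan is to prove the stronger spectral statement that every eigenvector of $G_\mu$ with unit-modulus eigenvalue already lies in $\mathrm{Range}(G_{U(d)})$; together with the first part this forces all $k!$ unit-modulus eigenvalues of $G_\mu$ to arise from the trivial/fixed sub-representation. Suppose $G_\mu v = \lambda v$ with $|\lambda|=\|v\|=1$. Since $\lambda v = \int U^{\ot k,k} v\, d\mu(U)$ is an average of unit vectors whose norm achieves the maximum possible value, the equality case of Jensen's inequality in Hilbert space (strict convexity of $\|\cdot\|^2$) forces $U^{\ot k,k} v = \lambda v$ for $\mu$-almost every $U$, and hence for every $U \in \supp(\mu)$ by continuity of $U \mapsto U^{\ot k,k} v$.

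I would then package this into a group-theoretic statement. Let $H := \{U \in U(d) : U^{\ot k,k} v \in \bbC v\}$, a closed subgroup of $U(d)$ on which $\chi(U) := \< v | U^{\ot k,k} | v\>$ is a continuous unitary character; by the above, $\supp(\mu) \subseteq H$. Universality of $\mu$ gives $\bigcup_\ell \supp(\mu^{\star \ell})$ dense in $U(d)$; since this set lies in $\<\supp(\mu)\> \subseteq H$ and $H$ is closed, we obtain $H = U(d)$. Hence $v$ spans a one-dimensional sub-representation of $U \mapsto U^{\ot k,k}$ with some character $\chi : U(d) \to U(1)$.

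The final step is a weight computation. Every continuous character of $U(d)$ has the form $\chi(U) = (\det U)^m$ for some $m \in \mathbb{Z}$; at $U = \operatorname{diag}(z_1,\ldots,z_d)$ the weights of $U^{\ot k,k} = U^{\ot k} \ot (U^*)^{\ot k}$ are monomials $z_1^{a_1-b_1}\cdots z_d^{a_d-b_d}$ with $a, b \in \mathbb{Z}_{\geq 0}^d$ and $|a|=|b|=k$, so every weight has exponent sum $0$, whereas $\det^m$ has weight $(m,\ldots,m)$ with exponent sum $md$ --- forcing $m=0$. Thus $\chi\equiv 1$ and $\lambda=1$, so $v$ is fixed by every $U^{\ot k,k}$ and $v \in \mathrm{Range}(G_{U(d)})$, a contradiction. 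The main obstacle I anticipate is the passage from ``$\supp(\mu) \subseteq H$'' to ``$U(d) \subseteq H$'': this is precisely where universality of $\mu$ is indispensable, via the observation that $\overline{\<\supp(\mu)\>} = U(d)$.
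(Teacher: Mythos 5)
Your argument is correct and takes a genuinely different route from the paper's. The paper decomposes $V^{\ot k}\ot(V^*)^{\ot k}$ into irreducible representations of $U(d)$, reduces the problem to bounding $\|\int r_\lambda\,d\mu\|_\infty$ block by block over the non-trivial irreps $\lambda$, and uses the same $\det^m$ weight computation that you use to rule out non-trivial one-dimensional pieces. You instead work directly with $G_\mu$: starting from an honest eigenvector $v$ with unit-modulus eigenvalue, strict convexity of the Hilbert-space norm forces $U^{\ot k,k}v$ to equal a fixed scalar multiple of $v$ for every $U\in\supp\mu$; you then package the stabiliser $H=\{U: U^{\ot k,k}v\in\bbC v\}$ as a closed subgroup, invoke universality to force $H=U(d)$, and kill the resulting character by the weight argument. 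Your route cleanly avoids a delicate step in the paper's own proof: from $\|\int r_\lambda\,d\mu\|_\infty = 1$ the paper asserts a unit vector $v$ with $\int r_\lambda(U)v\,d\mu = \omega v$, $|\omega|=1$, but a contraction of operator norm one is only guaranteed a singular vector with singular value one, not an eigenvector with unit-modulus eigenvalue, so that inference is not automatic; by taking $v$ to be an eigenvector of $G_\mu$ from the outset you avoid the issue entirely. The trade-off is that what you establish is the eigenvalue version of $k$-copy gapped from Definition~\ref{def:2copy-gapped} (exactly $k!$ eigenvalues of modulus one, equivalently spectral radius $<1$ on the orthogonal complement of $\mathrm{Range}(G_{U(d)})$), rather than the operator-norm bound $\|G_\mu - G_{U(d)}\|_\infty<1$ of Eqn.~\ref{eq:k-copy-gapped}; for non-normal $G_\mu$ these are not equivalent, and your argument does not yet deliver the stronger norm statement. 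A small cosmetic point: the closing ``a contradiction'' should read ``as desired,'' since you argued directly rather than by contradiction.
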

In particular, if $k=2$ this Lemma implies that $\mu$ is 2-copy gapped
(cf. \thmref{kCopyGappedExamples}).
\begin{proof}
Let $V\cong \bbC^{d}$ be the fundamental representation of $U(d)$,
where the action of $U\in U(d)$ is simply $U$ itself.  Let $V^*$ be
its dual representation, where $U$ acts as $U^*$.  The operators
$G_\mu$ and $G_{U(d)}$ act on the space $V^{\ot k} \ot (V^*)^{\ot
  k}$.  We will see that $G_{U(d)}$ is completely determined by the
decomposition of $V^{\ot k} \ot (V^*)^{\ot k}$ into irreducible representations (irreps).  
 Suppose that
the multiplicity of $(r_\lambda,V_\lambda)$ in $V^{\ot k} \ot
(V^*)^{\ot k}$ is $m_\lambda$,
where the $V_\lambda$'s are the irrep spaces and $r_\lambda(U)$ the
corresponding representation matrices.  In other words
\begin{align}
V^{\ot k} \ot (V^*)^{\ot k} 
& \cong \bigoplus_{\lambda} V_\lambda \ot \bbC^{m_\lambda} \\
U^{\ot k} \ot (U^*)^{\ot k}
& \sim \sum_{\lambda} \proj{\lambda} \ot r_\lambda(U) \ot
I_{m_\lambda}
\label{eq:irrep-basis}
\end{align}
Here $\sim$ indicates that the two sides are related by conjugation by
a fixed ($U$ independent) unitary.

Let
$\lambda=0$ denote the trivial irrep: i.e. $V_0=\bbC$ and $r_0(U)=1$
for all $U$. 
We claim that $\bbE_U r_\lambda(U)=0$ whenever 
$\lambda\neq 0$ and the expectation is taken over the Haar measure.
To show this, note that $\bbE_U
r_\lambda(U)$ commutes with $r_\lambda(V)$ for all $V\in U(d)$ and
thus, by Schur's Lemma, we must have $\bbE_U
r_\lambda(U)= cI$ for some $c\in \bbC$.  However, by the
translation-invariance of 
the Haar measure we have $cI = \bbE_U
r_\lambda(U) = \bbE_U
r_\lambda(UV) = c\, r_\lambda(V)$ for all $V\in U(d)$.  Since
$\lambda\neq 0$, we cannot have $r_\lambda(V)= I$ for all $V$ and so
it must be that $c=0$.

Thus, if we write $G_{U(d)}$ and  $G_\mu$ using the basis on the RHS of
\eq{irrep-basis}, we have
\be {G}_{U(d)}  = \proj{0} \ot I_{m_0}\ee
where $\proj{0}$ is a projector onto the trivial irrep.
On the other hand,
\be {G}_{\mu} = \proj{0} \ot I_{m_0} + \sum_{\lambda \ne 0} \proj{\lambda} \ot \l(\int r_\lambda(U) d\mu(U)\r) \ot I_{m_\lambda}
\ee
Thus, every eigenvector of ${G}_{U(d)}$ with eigenvalue one is also
fixed by ${G}_\mu$.  For the remainder of the space, the direct sum
structure means that
\be \|G_{U(d)} - G_\mu\|_\infty = \max_{\substack{\lambda\neq 0\\
m_\lambda\neq 0}} \l\|\int r_\lambda(U) d\mu(U)\r\|_\infty
.\ee
Note that this maximisation only includes $\lambda$ with $\dim V_\lambda>1$.  This is because non-trivial one-dimensional irreps of $U(d)$ have the form $\det U^m$ for some non-zero integer $m$.  Under the map $U\mapsto e^{i\phi}U$, such irreps pick up a phase of $e^{im\phi}$.  However, $U^{\ot k}\ot (U^*)^{\ot k}$ is invariant under $U\mapsto e^{i\phi}U$.  Thus $V^{\ot k} \ot (V^*)^{\ot k}$  cannot contain any non-trivial one-dimensional irreps.

Now suppose by
contradiction that there exists $\lambda\neq 0$ 
with $m_\lambda\neq 0$
and \bes \l|\l|\int r_\lambda(U) d\mu(U)\r|\r|_\infty=1.\ees  (We do not need to consider the case $\|\int
r_\lambda(U) d\mu(U)\|_\infty>1$, since
$\|r_\lambda(U)\|_\infty=1$ for all $U$ and $\|\cdot\|_\infty$ obeys the triangle inequality.)
Indeed, the triangle inequality further implies that  there exists a unit vector $\ket{v}\in
V_\lambda$ such that  
$$\int d\mu(U)\, r_\lambda(U)\ket{v} = \omega\ket{v},$$
for some $\omega\in\bbC$ with $|\omega|=1$. 

By the above argument we can assume that $\dim V_\lambda > 1$.  Since $V_\lambda$ is irreducible, it cannot contain a one-dimensional invariant subspace, implying that there exists $U_0\in U(d)$ such that 
$$|\bra{v}r_\lambda(U_0)\ket{v}| = 1-\delta,$$
for some $\delta>0$.   Since $U\mapsto |\bra{v}r_\lambda(U)\ket{v}|$ is
continuous, there exists an open ball $S$ around $U_0$ such that
$|\bra{v}r_\lambda(U)\ket{v}| \leq  1-\delta/2$ for all $U\in S$.  
Define $\bar{S} := U(d)\backslash S$.

Now we use the fact that $\mu$ is universal to find an $\ell$ such
that $\mu^{\star \ell}(S)>0$.  Next, observe that $\int d\mu^{\star
  \ell}(U)\, \bra{v}r_\lambda(U)\ket{v} = \omega^\ell$.  Taking the
absolute value of both sides yields
\begin{align*}
1 & = 
\l| \int_{U(d)} d\mu^{\star \ell}(U)\, \bra{v}r_\lambda(U)\ket{v}\r|
\\ & \leq
 \int_{U(d)} d\mu^{\star \ell}(U)\, \l|\bra{v}r_\lambda(U)\ket{v}\r|
\\ & =
 \int_{S} d\mu^{\star \ell}(U)\, \l|\bra{v}r_\lambda(U)\ket{v}\r|
+ \int_{\bar{S}} d\mu^{\star \ell}(U)\, \l|\bra{v}r_\lambda(U)\ket{v}\r|
\\ & \leq
\mu^{\star \ell}(S)\l(1-\frac{\delta}{2}\r) + 
\l(1-\mu^{\star \ell}(S)\r)
\\ & < 1,
\end{align*}
a contradiction.  We conclude that $\|G_{U(d)} - G_\mu\|_\infty<1$.
\end{proof}

\section{Convergence}
\label{sec:Convergence}

In the previous section we saw that iterating any universal gate set on $U(d)$ eventually converges to the uniform distribution on $U(d)$.  Since the set of all two-qubit unitaries is universal on $U(2^n)$, this implies that random circuits eventually converge to the Haar measure.  In this section, we turn to proving upper bounds on this convergence rate, focusing on the first two moments.

Let $\hat{G}^{(ij)}$ be the matrix with $\hat{G}$ (with $d=4$) acting on qubits $i$ and $j$ and the identity on the others.  Then, if the pair $(i, j)$ is chosen at step $t$, we can find the expected coefficients at step $t+1$ by multiplying by $\hat{G}^{(ij)}$.  In general, a random pair is chosen at each step.  So
\be
\gamma_{t+1}(\mathbf{p}) = \sum_{\mathbf{q}} \frac{1}{n(n-1)} \sum_{i \ne j} \hat{G}^{(ij)}(\mathbf{p}; \mathbf{q}) \gamma_t(\mathbf{q})
\ee
where $\gamma_{t+1}$ are the expected coefficients at step $t$.  We can think of this evolution as repeated application of the matrix
\begin{equation}
\label{eq:GeneralTransitionMatrix}
P = \frac{1}{n(n-1)} \sum_{i \ne j} \hat{G}^{(ij)}.
\end{equation}

For $k=2$, the key idea of Oliveira et al.~\cite{ODP06} was to map the evolution of the $\gamma(p, p)$ coefficients to a Markov chain.  The $\gamma(p_1, p_2)$ coefficients with $p_1 \ne p_2$ just decay as each qubit is chosen and can be analysed directly.

However, we can only map the $\gamma(p, p)$ coefficients to a probability distribution when they are non-negative, which is not the case for general states.  Most of the rest of the chapter is dedicated to proving \lemref{MainMixing}, which only applies to states with $\gamma(p, p) \ge 0$ and normalised so their sum is $1$.  \corref{MainMixing} then extends this to all states:
\begin{proof}[Proof of \corref{MainMixing}]
\lemref{MainMixing} still applies to the $\gamma(p_1, p_2)$ terms with $p_1 \ne p_2$.  Therefore we just need to show how to apply \lemref{MainMixing} to states that initially have some negative $\gamma(p, p)$ terms.  

For the $\gamma(p, p)$ terms, \lemref{MainMixing} says that the random walk starting with any initial probability distribution converges to uniform in some bounded time $t$.  Let $g_t(p, p; q, q)$ be the coefficients after $t$ steps of the walk starting at a particular point $q$ (i.e.~$g_0(p, p;q,q) = \delta_{p,q}$).  Now, for any starting state $\rho$, let the initial coefficients be $\gamma_0(p,p)$.  Then, by linearity, we can write the expected coefficients after $t$ steps $\gamma_t(p, p) := \Expect \gamma_W(p, p)$ as
\be
\gamma_t(p,p) = \sum_{q\ne0} \gamma_0(q,q) g_t(p,p;q,q)
\ee
for $p\ne0$.

We can now prove convergence rates for the expected coefficients $\gamma_t(p, p)$:
\begin{itemize}
\item[(i)]{
For the 2-norm, we have from \lemref{MainMixing} that for $t \ge C n \log 1/\eps$
\be
\sum_{p \ne 0} \left( g_t(p, p; q, q) - \frac{1}{4^n-1}\right)^2 \le \eps
\ee
for any $q$.  Note that the normalisation for the $\gamma(p,p)$ terms with $p\ne0$ has changed from \lemref{MainMixing} since we are neglecting the $\gamma(0,0)$ term here.  Now
\begin{align*}
&\sum_{p \ne 0} \left( {\gamma}_t(p,p) - \frac{\sum_{q\ne0} \gamma_0(q,q)}{4^n-1}\right)^2 \\
&= \sum_{p \ne 0} \left( \sum_{q \ne 0} \gamma_0(q,q) \left(g_t(p,p; q,q) - \frac{1}{4^n-1}\right)\right)^2 \\
&\le \sum_{q \ne 0} \gamma_0(q,q)^2 \sum_{q' \ne 0} \sum_{p\ne0} \left(g_t(p,p;q',q') - \frac{1}{4^n-1}\right)^2 \\
&\le (4^n-1) \eps \sum_{q \ne 0} \gamma_0(q,q)^2 \\
&\le 4^n \eps \sum_{q_1, q_2} \gamma_0(q_1,q_2)^2 \\
&= 4^n \eps \, \tr \rho^2 \\
&\le 4^n \eps
\end{align*}
where the first inequality is the Cauchy-Schwarz inequality.  Therefore for $t \ge C n (n + \log 4^n/\eps)$, the 2-norm distance from stationarity for the $\gamma(p,p)$ terms is at most $\eps$.  Choose $C'$ such that $C' n (n + \log 1/\eps) \ge C n (n + \log 4^n/\eps)$ to obtain the result.
}
\item[(ii)]{
For the 1-norm, \lemref{MainMixing} says that for $t \ge C n (n + \log 1/\eps)$
\be
\sum_{p \ne 0} \left| g_t(q;p,p) - \frac{1}{4^n-1}\right| \le \eps.
\ee
We can then proceed much as for the 2-norm case:
\begin{align*}
&\sum_{p \ne 0} \left| \gamma_t(p,p) - \frac{\sum_{q\ne0} \gamma_0(q,q)}{4^n-1}\right| \\
&= \sum_{p \ne 0} \left| \sum_{q \ne 0} \gamma_0(q,q) \left(g_t(p,p;q,q) - \frac{1}{4^n-1}\right)\right| \\
&\le \sum_{q \ne 0} |\gamma_0(q,q)| \sum_{p\ne0} \left|g_t(p,p;q,q) - \frac{1}{4^n-1}\right| \\
&\le \eps \sum_{q \ne 0} |\gamma_0(q,q)| \\
&\le 2^n \eps \sum_{q \ne 0} \gamma_0^2(q,q) \\
&\le 2^n \eps.
\end{align*}
Therefore for $t \ge C n (n + \log 2^n/\eps)$, the 1-norm distance from stationarity for the $\gamma(p,p)$ terms is at most $\eps$.\qedhere
}
\end{itemize}
\end{proof}

We now proceed to prove \lemref{MainMixing}.  Firstly, we will consider the simple case of $k=1$ to prove this process forms a 1-design as this will help us to understand the more complicated case of $k=2$.

\subsection{First Moments Convergence}

Recall that $\rho = 2^{-n/2} \sum_p \gamma(p) \sigma_p$ and we wish to evaluate the moments of the coefficients.  So for the first moments to converge, we want to know $\Expect \gamma(p)$.

For $k=1$, the $U(4)$ random circuit uniformly randomises each pair that is chosen.  More precisely, a pair of sites $i, j$ are chosen at random and all the coefficients with $p_i \ne 0$ or $p_j \ne 0$ are set to zero.  Thus we get an exact 1-design when all sites have been hit.  For other gate sets, the terms do not decay to zero but decay by a factor depending on the gap of $\hat{G}$.  Call the gap $\Delta$; for $U(4)$ $\Delta=1$ and for others $0 < \Delta \le 1$ and $\Delta$ is independent of $n$.  Therefore once each site has been hit $m$ times the terms have decayed by a factor $(1-\Delta)^m$.

For a bound like the mixing time (see \secref{MarkovChain} for definition), we want to bound the quantity $\sum_{p \ne 0} | \Expect_W \gamma_W(p) |$ where $\gamma_W(p)$ is the Pauli coefficient after applying the random circuit $W$.  We also want 2-norm bounds, so we bound $\sum_{p \ne 0} (\Expect_W \gamma_W(p))^2$ too.  We will in fact find bounds on \bes \sum_{p \ne 0} \Expect_W | \gamma_W(p) | \ees and \bes \sum_{p \ne 0} \l(\Expect_W |\gamma_W(p)|\r)^2,\ees which are stronger.

A standard problem in the theory of randomised algorithms is the \emph{coupon collector problem}.  If a magazine comes with a free coupon, which is chosen uniformly randomly from $n$ different types, how many magazines should you buy to have a high probability of getting all $n$ coupons?  It is not hard to show that $n \ln \frac{n}{\eps}$ samples (magazines) have at least a $1-\eps$ probability of including all $n$ coupons.  Using this, we expect all sites to be hit with probability at least $1-\eps$ after $\Theta(n \log \frac{n}{\eps})$ steps.  This argument can be made precise in this context by bounding the non-identity coefficients.  We find, as expected, that the sum is small after $O(n \log n)$ steps:
\begin{lemma}
\label{lem:CoefficientDecayGeneral}
After $O(n \log 1/\eps)$ steps
\begin{equation*}
\sum_{p \ne 0} \left( \Expect_W |\gamma_W(p) | \right)^2 \le \eps
\end{equation*}
and after $O(n\log \frac{n}{\eps})$ steps,
\begin{equation}
\label{eq:TermsDecayGeneral}
\sum_{p \ne 0} \Expect_W | \gamma_W(p) | \le \eps.
\end{equation}
\end{lemma}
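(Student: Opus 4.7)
The plan is to define $\alpha_t(p) := \Expect_W |\gamma_W(p)|$, where the expectation is over a random circuit of length $t$ drawn from the model, and derive a per-step recursion showing that $\alpha_t(p)$ decays exponentially in $t$ at a rate proportional to $|p|/n$.

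First, I would analyse a single step acting on a pair $(i,j)$. Expanding $U\sigma_{q_iq_j}U^\dagger$ in the Pauli basis, the coefficient evolves as $\gamma_{t+1}(p) = \sum_{q:\, q_k = p_k\text{ for all } k \ne i,j} c_{pq}(U)\gamma_t(q)$, where $c_{pq}(U)$ is an entry of the $15\times 15$ adjoint-action matrix on sites $(i,j)$. For $(p_i,p_j) = (0,0)$ the identity is preserved and $c_{pq}(U) = \delta_{q,p}$; otherwise the coefficient is rotated within the 15 non-identity Paulis on those two sites. Applying the triangle inequality inside $\Expect_U$ and then averaging over prior history yields the entry-wise recursion $\alpha_{t+1}(p) \le \sum_q M_{pq}\,\alpha_t(q)$, where $M_{pq} := \Expect_{(i,j),U}|c_{pq}(U)|$ is a non-negative kernel. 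Crucially, $M$ has zero entries from non-identity rows into the identity column, so non-identity coefficients never gain mass from the identity.

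Second, I would show that $M$ contracts $\alpha$ at rate $1 - \Omega(|p|/n)$. For $p$ of weight $w \ge 1$, the probability a random pair lies entirely outside $\supp(p)$ (preserving the coefficient) is $\binom{n-w}{2}/\binom{n}{2}$; otherwise the pair hits $\supp(p)$ and the $k{=}1$ gap of $\hat{G}$ on $U(4)$ established in \lemref{GEigenvalues} together with Cauchy--Schwarz (using $\Expect_U c_{pq}(U)^2 \le 1/15$ on the non-identity block) produces a contraction factor bounded away from one by the gap $\Delta>0$ of $\mu$. Combining gives $\alpha_{t+1}(p) \le (1 - c\Delta w/n)\alpha_t(p)$ for some absolute constant $c>0$. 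Iterating and using $\alpha_0(p) = |\gamma_0(p)|$ together with $\sum_p|\gamma_0(p)|^2 \le \tr\rho^2 \le 1$ gives
\begin{equation*}
\sum_{p\ne 0}\alpha_t(p)^2 \le \sum_{p\ne 0}(1-c\Delta/n)^{2t}|\gamma_0(p)|^2 \le e^{-2c\Delta t/n},
\end{equation*}
which is at most $\eps$ for $t = O(n \log 1/\eps)$, proving the first statement.

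For the $\ell^1$-bound, directly applying Cauchy--Schwarz to the $\ell^2$-bound costs a factor of $2^n$ (the square root of the number of non-identity Paulis), which would force $t = \Omega(n^2)$. Instead I would stratify the sum by the weight $w = |p|$, using $|\{p:|p|=w\}| = \binom{n}{w}3^w \le (3en/w)^w$ and the weight-dependent decay $(1-c\Delta w/n)^t \le e^{-c\Delta w t/n}$, so that weight class $w$ contributes at most $(3en/w)^w e^{-c\Delta w t/n}$. For $t \ge Cn\log(n/\eps)$ with $C$ large enough, the ratio of consecutive terms is bounded by a constant and the sum over $w \ge 1$ is geometric and at most $\eps$, giving the claimed $O(n\log(n/\eps))$ bound. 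This matches a coupon-collector intuition: after $\Theta(n\log(n/\eps))$ steps each qubit has been touched with probability $\ge 1-\eps/2$, and on this event the weight-stratified decay dominates.

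The main obstacle is the per-step contraction in the second step: a pointwise bound $|c_{pq}(U)|\le 1$ gives no contraction, so one must genuinely average $|c_{pq}(U)|$ over the gapped distribution. Transferring the second-moment bound $\Expect_U c_{pq}(U)^2 \le 1/15$ on the 2-qubit non-identity block into a usable first-moment contraction on $\Expect_U |c_{pq}(U)|$, and then composing this across sites in $\supp(p)$ so that the net per-step contraction on $\alpha_t$ scales like $1 - c\Delta|p|/n$ rather than the naive $(1-\Delta)^{\mathbf{1}[\text{pair hits }\supp(p)]}$, is where the technical work concentrates.
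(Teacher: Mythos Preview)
Your approach is essentially identical to the paper's: both assert a per-step pointwise decay of the form $\Expect|\gamma_{W_t}(p)|\le\bigl(1-c\Delta|p|/n\bigr)|\gamma_{W_{t-1}}(p)|$, iterate, and then stratify by weight. The paper sums the $\ell^1$ bound via the binomial identity $\sum_w\binom{n}{w}3^we^{-\Delta wt/n}=(1+3e^{-\Delta t/n})^n-1$; your geometric-sum version is equivalent, and the $\ell^2$ arguments match as well.

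There is, however, a genuine gap, and it is more serious than the obstacle you flag. With the absolute value \emph{inside} the expectation the statement is actually false: for any unitary $W$ acting on a pure $\rho_0$ the state $W\rho_0W^\dagger$ remains pure, so $\sum_p\gamma_W(p)^2=1$ identically, giving $\sum_{p\ne0}\Expect_W|\gamma_W(p)|\ge\sqrt{1-2^{-n}}$ for every $t$; and once $W$ is near-Haar each $\gamma_W(p)$ is a mean-zero random variable with standard deviation $\sim 2^{-n}$, so $\sum_{p\ne0}(\Expect_W|\gamma_W(p)|)^2=\Theta(1)$, never $\le\eps$. What is actually needed downstream (and what the paper itself says it needs in the paragraph just before the lemma) is the bound on $|\Expect_W\gamma_W(p)|$; the lemma should be read with the absolute value outside. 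Even after that correction, the pointwise per-step recursion both you and the paper invoke fails for a general gapped gate set: the $15\times15$ non-identity block of $\hat G$ on the chosen pair is not diagonal, so hitting $\supp(p)$ mixes $\gamma_{t-1}(p)$ with the other fourteen coefficients in its fibre, and your Cauchy--Schwarz bound $\Expect_U|c_{pq}(U)|\le(\Expect_Uc_{pq}(U)^2)^{1/2}\le1/\sqrt{15}$ gives row-sum $\sqrt{15}>1$, not a contraction. The recursion is valid only for the $U(4)$ gate set, where that block vanishes. For general $\mu$ the clean route to the $\ell^2$ bound on $\gamma_t=P^t\gamma_0$ is global: the operator $P=\frac1{n(n-1)}\sum_{i\ne j}\hat G^{(ij)}$ has spectral gap $\Omega(\Delta/n)$ on $\{p\ne 0\}$, obtained by averaging $\langle v,\hat G^{(ij)}v\rangle\le\|v\|^2-\Delta\sum_{p:p_{ij}\ne00}|v(p)|^2$ over pairs, which yields the claimed $O(n\log 1/\eps)$ directly.
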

\begin{proof}
At each step, a pair of sites is chosen at random and any terms with non-identity coefficients for this pair decay by a factor $(1-\Delta)$.  For example, the term $\sigma_1 \otimes \sigma_0^{\otimes (n-1)}$ decays whenever the first site is chosen.  Thus the probability of each term decaying depends on the number of zeroes.  We start with the 1-norm bound.

Suppose the circuit applied after $t$ steps is $W_t$.  Consider $\Expect_{W_t} | \gamma_{W_t}(p) |$ for any $p$ with $d$ non-zeroes.  Since the state $\rho$ is physical, $\tr \rho^2 \le 1$ so $\sum_p \gamma^2_0(p)  \le 1$.  Now, in each step, if any site is chosen where $p$ is non-zero, this term decays by a factor $(1-\Delta)$.  This occurs with probability $1-\frac{(d-n)(d-n-1)}{n(n-1)} \ge d/n$, the probability of choosing a pair where at least one site is non-zero.  Therefore
\begin{equation*}
\Expect | \gamma_{W_t}(p) | \le \left( (1-\Delta)d/n + (1-d/n) \right) | \gamma_{W_{t-1}}(p) |
\end{equation*}
where the expectation is over the circuit applied at step $t$.  If we iterate this $t$ times we find
\begin{equation*}
\Expect_{W} | \gamma_{W}(p) | \le \exp(-\Delta t d/n) | \gamma_{0}(p) |
\end{equation*}
where the expectation here is over all random circuits for the $t$ steps.  We now sum over all $p$:
\begin{equation*}
\sum_{p \ne 0} \Expect_W | \gamma_W(p) | \le \sum_{d=1}^{n} \exp(-\Delta t d/n) \sum_{d(p) = d} | \gamma_{0}(p) |
\end{equation*}
where $d(p)$ is the number of non-zeroes in $p$.  For the 1-norm bound, we can simply bound $| \gamma_{0}(p) | \le 1$ to give $\sum_{d(p) = d} | \gamma_0(p) | \le {n \choose d} 3^d$ so
\begin{equation*}
\sum_{p \ne 0} \Expect_W | \gamma_W(p) | \le (1+3 \exp(-\Delta t /n))^n - 1
\end{equation*}
where we have used the binomial theorem.  Now let $t= \frac{n}{\Delta} \ln \frac{3n}{\eps}$.  This gives
\begin{equation*}
\sum_{p \ne 0} \Expect_W | \gamma_W(p) | \le (1+\eps/n)^n - 1 = O(\eps).
\end{equation*}
For the 2-norm bound,
\begin{align*}
\sum_{p \ne 0} (\Expect_W | \gamma_W(p) |)^2 \le& \sum_{p \ne 0} \exp(-2\Delta t d/n) \gamma^2_{0}(p) \\
=& \sum_{d=1}^n \exp(-2\Delta t d/n)\sum_{d(p) = d} \gamma^2_{0}(p) \\
\le& \sum_{d=1}^n \exp(-2\Delta t d/n) \\
\le& \frac{\exp(-2\Delta t /n)}{1-\exp(-2\Delta t /n)}
\end{align*}
where we have used $\sum_p \gamma^2_0(p) \le 1$.  We find after $\frac{n}{2\Delta} \ln 1/\eps$ steps that
\bes
\sum_{p \ne 0} (\Expect_W |\gamma_W(p)|)^2 \le \frac{\eps}{1-\eps}\qedhere
\ees
\end{proof}

\subsection{Second Moments Convergence}

Firstly, the $\sigma_{p_1} \otimes \sigma_{p_2}$ terms for $p_1 \ne p_2$ decay in a similar way to the non-identity terms in the 1-design analysis.  In fact, the proof of \lemref{CoefficientDecayGeneral} carries over almost identically to this case to give
\begin{lemma}
\label{lem:CoefficientsDecay2General}
After $O(n \log 1/\eps)$ steps
\begin{equation*}
\sum_{p_1 \ne p_2} (\Expect_W |\gamma_W(p_1, p_2)|)^2 \le \eps
\end{equation*}
and after $O(n(n+\log 1/\eps))$ steps
\begin{equation*}
\sum_{p_1 \ne p_2} \Expect_W | \gamma_W(p_1, p_2) | \le \eps.
\end{equation*}
\end{lemma}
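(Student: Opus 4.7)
The plan is to mirror the proof of \lemref{CoefficientDecayGeneral}, replacing the role of ``non-identity sites'' by \emph{difference sites}: for a pair $(p_1,p_2)$ with $p_1\ne p_2$, call $i$ a difference site if $p_{1,i}\ne p_{2,i}$, and write $d=d(p_1,p_2)$ for the number of such sites (so $d\ge 1$). I would first establish the analogue of the per-step decay: a random gate applied on a pair $(i,j)$ shrinks $\Expect_W|\gamma_W(p_1,p_2)|$ by a factor $(1-\Delta)$ whenever $\{i,j\}$ meets the difference-site set, and leaves it essentially unchanged otherwise. Here $\Delta>0$ is the 2-copy gap supplied by \lemref{GEigenvalues}. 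This claim rests on two facts: the fixed-point subspace of the 2-copy $\hat{G}_\mu$ is spanned by $I^{\otimes 2}$ and $\swap$, both of which lie in the diagonal Pauli subspace by \lemref{Swap}; and $\hat{G}_\mu$ is symmetric (\thmref{SymmetryG}, whose proof uses only cyclic invariance of the trace and hence extends to any measure), so it preserves the diagonal subspace and its orthogonal off-diagonal complement. A 2-qubit restriction $(p_{1,ij},p_{2,ij})$ is off-diagonal precisely when $\{i,j\}$ meets the difference-site set; on this off-diagonal subspace the gap condition supplies the factor $(1-\Delta)$.

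The probability that a uniformly random pair meets the difference-site set is $1-\frac{(n-d)(n-d-1)}{n(n-1)} \ge d/n$, so iterating gives $\Expect_W|\gamma_W(p_1,p_2)| \le \exp(-\Delta t d/n)\,|\gamma_0(p_1,p_2)|$. For the 2-norm bound I would square, partition by $d(p_1,p_2)=d$, apply $\sum_{p_1,p_2}\gamma_0(p_1,p_2)^2 = \tr\rho^2\le 1$, and sum the resulting geometric series in $d$ exactly as in \lemref{CoefficientDecayGeneral}; this yields the claim after $t=\Omega(n\log 1/\eps)$ steps. For the 1-norm bound I would use the pointwise bound $|\gamma_0(p_1,p_2)|\le 1$ together with the count $\binom{n}{d}\cdot 12^d \cdot 4^{n-d}$ of off-diagonal pairs with exactly $d$ difference sites ($12 = 4\cdot 3$ legal choices at each difference site and $4$ at each matched site), giving
$$\sum_{p_1\ne p_2}\Expect_W|\gamma_W(p_1,p_2)| \le \sum_{d=1}^n \binom{n}{d} 12^d 4^{n-d} e^{-\Delta t d/n} = \bigl(4+12\,e^{-\Delta t/n}\bigr)^n - 4^n,$$
which is $O(\eps)$ once $t=Cn(n+\log 1/\eps)$ for a constant $C$ depending only on $\Delta$. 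The extra factor of $4^n$ in the count — which is absent from the 1-design case — is exactly what converts the $O(n\log(n/\eps))$ mixing bound into $O(n(n+\log 1/\eps))$.

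The main obstacle is the per-term contraction claim in the first paragraph. For a general (non-Haar) 2-copy gapped $\mu$, $\hat{G}_\mu^{(ij)}$ applied to an off-diagonal restriction may redistribute mass among distinct off-diagonal Pauli labels, so a literal ``factor $(1-\Delta)$ on the single coefficient $|\gamma_W(p_1,p_2)|$'' is not automatic. The cleanest fix is to group off-diagonal labels by their difference-site pattern, bound the block-$\ell_2$ norm of each group using the subspace-level contraction, and then recover the pointwise bound via Cauchy--Schwarz; this loses only polynomial factors that can be absorbed into the constant $C$, leaving the counting argument above intact.
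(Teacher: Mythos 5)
Your proposal takes essentially the same route as the paper's proof: a per-step contraction keyed to the difference sites, iterated to give $\exp(-\Delta t\,d/n)$, followed by the count $\binom{n}{d}\,12^d\,4^{n-d}$ and the binomial theorem---indeed your closed form $(4+12e^{-\Delta t/n})^n - 4^n$ is identical to the paper's $4^n\bigl[(1+3e^{-\Delta t/n})^n - 1\bigr]$, and your treatment of the 2-norm mirrors \lemref{CoefficientDecayGeneral} exactly. So on the mechanics, the match is complete. The concern you raise in your final paragraph about the per-coefficient $(1-\Delta)$ contraction for a general 2-copy gapped $\mu$ is well-placed: the paper simply asserts it by analogy, and the same informality is already present in its proof of \lemref{CoefficientDecayGeneral}. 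However, I would push back on the patch you sketch: the difference-site pattern is \emph{not} invariant under $\hat{G}^{(ij)}_\mu$. A gate applied to the pair $(i,j)$ can move a difference from $i$ to $j$, create a difference where the two labels previously agreed, or annihilate one, so grouping off-diagonal labels by pattern does not produce $\hat{G}^{(ij)}$-invariant blocks on which a clean subspace-level $\ell_2$ contraction can be invoked. Any rigorous repair has to track the flow of mass between patterns (much as the paper's zero chain tracks the evolving number of zeroes in the diagonal sector), which is substantially more delicate than the single Cauchy--Schwarz step you propose and risks the kind of compounding that defeats a simple geometric recursion. The observation is a good one, but the fix needs more than stated.
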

\begin{proof}
Instead of the number of zeroes governing the decay rate, we need to count the number of places where $p_1$ and $p_2$ differ.  This gives
\begin{equation*}
\Expect | \gamma_{W_t}(p_1, p_2) | \le \left( (1-\Delta)d/n + (1-d/n) \right) | \gamma_{W_{t-1}}(p_1, p_2) |
\end{equation*}
where now $d$ is the number of differing sites.  There are ${n \choose d} 12^d 4^{n-d}$ states that differ in $d$ places so we find
\begin{equation*}
\sum_{p_1 \ne p_2} \Expect_W | \gamma_W(p_1, p_2) | \le 4^n[(1+3 \exp(-\Delta t /n))^n - 1].
\end{equation*}
Set $t =\frac{n}{\Delta} (n \ln 4 + \ln 1/\eps)$ to make this $O(\eps)$.  The 2-norm bound follows in the same way as for \lemref{CoefficientDecayGeneral}.
\end{proof}
We now need to prove the $\gamma(p, p)$ terms converge quickly.  We have seen above that the sum of the terms $\gamma(p, p)$ is conserved and, for the purposes of proving \lemref{MainMixing}, we assume the sum is $1$ and $\gamma(p,p) \ge 0$ for all $p$.

To illustrate the evolution, consider the simplest case when the gates are chosen from $U(4)$.  We have evaluated $\hat{G}$ in \secref{GforK2} for $k=2$ for this case.  Translated into coefficients this yields the following update rule, where we have written it for the case when qubits 1 and 2 are chosen:
\begin{multline}
\label{eq:FullChain}
\gamma_{t+1}(r_1, r_2, r_3, \ldots, r_n, s_1, s_2, s_3, \ldots, s_n) \\
=\begin{cases}
0	&	(r_1, r_2) \ne (s_1, s_2) \\
\gamma_{t}(0, 0, r_3, \ldots, r_n, 0, 0, s_3, \ldots, s_n)	&	(r_1, r_2) = (s_1, s_2) = (0, 0) \\
\frac{1}{15} \sum_{r'_1, r'_2 \atop r'_1 r'_2 \ne 0} \gamma_{t}(r'_1, r'_2, r_3, \ldots, r_n, r'_1, r'_2, s_3, \ldots, s_n) &	(r_1, r_2) = (s_1, s_2) \ne (0, 0).
\end{cases}
\end{multline}
The key idea of Oliveira et al.~\cite{ODP06} was to map the evolution of the $\gamma(p, p)$ coefficients to a Markov chain.  We can apply this here to get, on state space $\{0, 1, 2, 3\}^n$, the evolution:
\begin{enumerate}
\item{Choose a pair of sites uniformly at random.}
\item{If the state is $00$ it remains $00$.}
\item{Otherwise, choose the state uniformly at random from $\{0,1,2,3\}^2 \backslash \{00\}$.}
\end{enumerate}
This is the correct evolution since, if the initial state is distributed according to $\gamma_t(q, q)$, the final state is distributed according to $\gamma_{t+1}(p, p)$.

The evolution for other gate sets will be similar, but the states will not be chosen uniformly randomly in the third step.  However, the state $00$ will remain $00$ and the stationary distribution on the other 15 states is the same.  We will find the convergence times for general gate sets and then consider the $U(4)$ gate set since we can perform a tight analysis for this case.

\subsection{Markov Chain Analysis}
\label{sec:MarkovChain}

Before finding the convergence rate for our problem, we will briefly introduce the basics of Markov chain mixing time analysis.  All of these standard results can be found in \cite{MontenegroTetali06} and references therein.

A process is Markov if the evolution only depends on the current state rather than the full state history.  Therefore the evolution of the state can be thought of as a matrix, the \emph{transition matrix}, acting on a vector which represents the current distribution.  We will only be interested in discrete time processes so the state after $t$ steps is given by the $t^{\text{th}}$ power of the transition matrix acting on the initial distribution.

We say a Markov chain is \emph{irreducible} if it is possible to get from one state to any other state in some number of steps.  Further, a chain is \emph{aperiodic} if it does not return to a state at regular intervals.  If a chain is both irreducible and aperiodic then it is said to be \emph{ergodic}.  A well known result of Markov chain theory is that all ergodic chains converge to a unique stationary distribution.  In matrix language this says that the transition matrix $P$ has eigenvalue $1$ with no multiplicity and all other eigenvalues have absolute value strictly less than 1.  We will also need the notion of \emph{reversibility}.  A Markov chain is reversible if the time reversed chain has the same transition matrix, with respect to some distribution.  This condition is also known as \emph{detailed balance}:
\begin{equation}
\label{eq:DetailedBalance}
\pi(x) P(x, y) = \pi(y) P(y, x).
\end{equation}
It can be shown that a reversible ergodic Markov chain is only reversible with respect to the stationary distribution.  So above $\pi(x)$ is the stationary distribution of $P$.  An immediate consequence of this is that for a chain with uniform stationary distribution, it is reversible if and only if it is symmetric (i.e.~$P(x, y) = P(y, x)$).  Note also that reversible chains have real eigenvalues, since they are similar to the symmetric matrix $\sqrt{\frac{\pi(x)}{\pi(y)}}P(x, y)$ (using the similarity transform $\delta_{xy} \sqrt{\pi(x)}$).

With these definitions and concepts, we can now ask how quickly the Markov chain converges to the stationary distribution.  This is normally defined in terms of the 1-norm mixing time.  We use (half the) 1-norm distance to measure distances between distributions:
\begin{equation}
\vectornorm{s - t} = \frac{1}{2} \vectornorm{s - t}_1 = \frac{1}{2} \sum_i |s_i - t_i|.
\end{equation}
We assume all distributions are normalised so then $0 \le \vectornorm{s - t} \le 1$.  We can now define the mixing time:
\begin{definition}
Let $\pi$ be the stationary distribution of $P$.  Then if $P$ is ergodic the mixing time $\tau$ is
\begin{equation}
\tau(\eps) = \max_s \min_t \{ t \ge 0 : \vectornorm{P^t s - \pi} \le \eps \}.
\end{equation}
\end{definition}
We will also use the (weaker) 2-norm mixing time (note this is not the same as $\tau_2$ in \cite{MontenegroTetali06}):
\begin{definition}
Let $\pi$ be the stationary distribution of $P$.  Then if $P$ is ergodic the 2-norm mixing time $\tau_2$ is
\begin{equation}
\tau_2(\eps) = \max_s \min_t \{ t \ge 0 : \vectornorm{P^t s - \pi}_2 \le \eps \}.
\end{equation}
\end{definition}
Unless otherwise stated, when we say mixing time we are referring to the 1-norm mixing time.

There are many techniques for bounding the mixing time, including finding the second largest eigenvalue of $P$.  This gives a good measure of the mixing time because components parallel to the second largest eigenvector decay the slowest.  We have (for reversible ergodic chains)
\begin{theorem}[see \cite{MontenegroTetali06}, Corollary 1.15]
\label{thm:MixingTimeGap}
\begin{equation*}
\tau(\eps) \le \frac{1}{\Delta} \ln \frac{1}{\pi_* \eps}
\end{equation*}
where $\pi_* = \min \pi(x)$ and $\Delta = \min(1-\lambda_2, 1+\lambda_{min})$ where $\lambda_2$ is the second largest eigenvalue and $\lambda_{min}$ is the smallest.  $\Delta$ is known as the \emph{gap}.
\end{theorem}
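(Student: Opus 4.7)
The plan is to use the standard spectral approach to Markov chain mixing, exploiting reversibility to diagonalise $P$ and control the decay of the non-stationary component by the spectral gap $\Delta$.

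First, I would use reversibility to pass to a symmetric matrix. Since $P$ is reversible with respect to $\pi$, the similarity transform $D_{xy} = \delta_{xy}\sqrt{\pi(x)}$ makes $\tilde P := D P D^{-1}$ symmetric, with $\tilde P(x,y) = \sqrt{\pi(x)/\pi(y)}\, P(x,y)$ by detailed balance. So $\tilde P$ has a real orthonormal eigenbasis $\{v_i\}$ with real eigenvalues $1 = \lambda_1 > \lambda_2 \ge \cdots \ge \lambda_{\min}$, and $P$ shares these eigenvalues. The top eigenvector of $\tilde P$ is $v_1$ proportional to $(\sqrt{\pi(x)})_x$, corresponding to the stationary distribution of $P$.

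Next I would analyse the 2-norm convergence in the $\pi$-weighted inner product. Write any initial distribution $s$ as $s = \pi + r$, where $r$ is orthogonal to the all-ones vector (so that $\mathbf{1}^\top r = 0$). In the transformed coordinates $\hat r := D r$, one has $\hat r \perp v_1$, so $\tilde P \hat r$ lies in the span of $v_2,\ldots,v_N$. Hence $\|\tilde P^t \hat r\|_2 \le \max(|\lambda_2|,|\lambda_{\min}|)^t \|\hat r\|_2 = (1-\Delta)^t \|\hat r\|_2$. Translating back, $\|P^t s - \pi\|_{1/\pi} \le (1-\Delta)^t \|s - \pi\|_{1/\pi}$, where $\|x\|_{1/\pi}^2 := \sum_x x(x)^2/\pi(x)$.

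Now I would convert this $L^2(1/\pi)$ bound to an $L^1$ bound by Cauchy--Schwarz:
\bes
\|P^t s - \pi\|_1 = \sum_x \sqrt{\pi(x)} \cdot \frac{|(P^t s - \pi)(x)|}{\sqrt{\pi(x)}} \le \Big(\sum_x \pi(x)\Big)^{1/2}\|P^t s - \pi\|_{1/\pi} = \|P^t s - \pi\|_{1/\pi}.
\ees
Together with the bound $\|s - \pi\|_{1/\pi}^2 \le \sum_x s(x)^2/\pi(x) \le 1/\pi_*$ (since $s$ is a probability distribution), I obtain
\bes
2\|P^t s - \pi\| = \|P^t s - \pi\|_1 \le \frac{(1-\Delta)^t}{\sqrt{\pi_*}}.
\ees
Finally, using $1-\Delta \le e^{-\Delta}$, setting the right-hand side to $2\eps$ and solving for $t$ gives $t \ge \frac{1}{\Delta}\ln\frac{1}{\sqrt{\pi_*}\,\eps}$, which is (up to a harmless factor of $2$ absorbed into constants) the stated bound $\tau(\eps) \le \frac{1}{\Delta}\ln\frac{1}{\pi_* \eps}$.

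The only real subtlety is the choice of the weighted norm and verifying that the spectral gap controls both $1-\lambda_2$ (slow mixing of long-lived modes) and $1+\lambda_{\min}$ (slow mixing of near-periodic modes), which is exactly why $\Delta$ is defined as the minimum of the two. The rest is bookkeeping, and the prefactor $1/\pi_*$ inside the logarithm is essentially optimal and comes from the worst-case starting distribution being a point mass at the least-probable state.
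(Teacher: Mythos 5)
Your proof is correct, and it is the standard spectral argument: symmetrize by $D = \mathrm{diag}(\sqrt{\pi})$, expand in the orthonormal eigenbasis, bound the decay of the component orthogonal to $\sqrt{\pi}$ by $(1-\Delta)^t$, convert the $\ell^2(1/\pi)$ bound to total variation via Cauchy--Schwarz, and bound the worst-case initial norm by $1/\sqrt{\pi_*}$. The paper does not give its own proof of this theorem --- it is cited as Corollary 1.15 of Montenegro and Tetali --- and your argument is essentially the one that appears there; you even obtain the slightly sharper prefactor $\ln\frac{1}{2\sqrt{\pi_*}\eps}$, which implies the stated bound since $\pi_* \le \sqrt{\pi_*}$.
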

If the chain is irreversible, it may not even have real eigenvalues.  However, we can bound the mixing time in terms of the eigenvalues of the reversible matrix $PP^*$ where $P^*(x, y) = \frac{\pi(y)}{\pi(x)} P(y, x)$.  In this case we have (\cite{MontenegroTetali06}, Corollary 1.14)
\begin{equation}
\tau(\eps) \le \frac{2}{\Delta_{PP^*}} \ln \frac{1}{\pi_* \eps}
\end{equation}
where now $\Delta_{PP^*}$ is the gap of the chain $PP^*$.  Note that for a reversible chain $P = P^*$ and $\Delta_{PP^*} \approx 2\Delta$ so the bounds are approximately the same.

This can also be converted into a 2-norm mixing time bound:
\begin{equation}
\label{eq:2normMixing}
\tau_2(\eps)\le \frac{2}{\Delta_{PP^*}} \ln 1/\eps.
\end{equation}
To bound the gap, we will use the comparison theorem in \thmref{Comparison} below.  In this Theorem, we are thinking of the Markov chain as a directed graph where the vertices are the states and there are edges for allowed transitions (i.e.~transitions with non-zero probability).  For irreducible chains, it is possible to make a path from any vertex to any other; we call the path length the number of transitions in such a path (which will in general depend on the choice of path).
\begin{theorem}[see \cite{MontenegroTetali06}, Theorem 2.14]
\label{thm:Comparison}
Let $P$ and $\hat{P}$ be two Markov chains on the same state space $\Omega$ with the same stationary distribution $\pi$.  Then, for every $x \ne y \in \Omega$ with $\hat{P}(x, y) > 0$ define a directed path $\gamma_{xy}$ from $x$ to $y$ along edges in $P$ and let its length be $| \gamma_{xy} |$.  Let $\Gamma$ be the set of all such paths.  Then
\begin{equation*}
\Delta \ge \hat{\Delta}/A
\end{equation*}
for the gaps $\Delta$ and $\hat{\Delta}$ where
\begin{equation*}
A = A(\Gamma) = \max_{a \ne b, P(a, b) \ne 0} \frac{1}{\pi(a) P(a, b)} \sum_{x \ne y : (a, b) \in \gamma_{xy}} \pi(x) \hat{P}(x, y) | \gamma_{xy} |.
\end{equation*}
\end{theorem}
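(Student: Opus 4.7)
The plan is to prove the comparison theorem via the standard Dirichlet form argument. Assuming both chains are reversible with respect to $\pi$ (the non-reversible case is handled by passing to $PP^*$ as indicated earlier in the excerpt), I would use the variational characterisation
\[
\Delta = \min_{\substack{f \ne 0 \\ \bbE_\pi f = 0}} \frac{\cE_P(f,f)}{\mathrm{Var}_\pi(f)}, \qquad \cE_P(f,f) = \tfrac{1}{2}\sum_{x,y} \pi(x) P(x,y) (f(x)-f(y))^2,
\]
and similarly for $\hat{\Delta}$ with $\hat{\cE}$. Since both Dirichlet forms use the same inner product (both chains share the stationary distribution $\pi$), it suffices to prove the pointwise comparison $\hat{\cE}(f,f) \le A \cdot \cE_P(f,f)$ for every function $f$; then the variational principle immediately gives $\Delta \ge \hat{\Delta}/A$.

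To produce this comparison, I would fix $f$ and for each pair $(x,y)$ with $\hat{P}(x,y)>0$ telescope $f(x)-f(y)$ along the prescribed path $\gamma_{xy} = (x=x_0,x_1,\ldots,x_{|\gamma_{xy}|}=y)$, so that $f(x)-f(y) = \sum_{(a,b)\in\gamma_{xy}} (f(a)-f(b))$. Cauchy--Schwarz then yields
\[
(f(x)-f(y))^2 \le |\gamma_{xy}| \sum_{(a,b)\in\gamma_{xy}} (f(a)-f(b))^2.
\]
Substituting into $\hat{\cE}(f,f)$ and swapping the order of summation so that the outer sum runs over edges $(a,b)$ in $P$, the coefficient of $(f(a)-f(b))^2$ becomes $\sum_{x,y:(a,b)\in\gamma_{xy}} \pi(x)\hat{P}(x,y)|\gamma_{xy}|$. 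By the definition of $A$ in the theorem, this is at most $A \cdot \pi(a) P(a,b)$, so
\[
\hat{\cE}(f,f) \le \tfrac{1}{2} \sum_{(a,b)} A \cdot \pi(a) P(a,b) (f(a)-f(b))^2 = A \cdot \cE_P(f,f),
\]
which delivers the required bound.

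The main obstacle is really just the reversibility hypothesis: the Dirichlet-form variational principle above characterises gaps only for reversible chains. The excerpt has already accounted for this by instructing the reader to apply the bound via $PP^*$ when $P$ is not reversible, in which case one would run the same path argument on the symmetrised chain; the paths $\gamma_{xy}$ in $P$ lift to paths in $PP^*$ of at most twice the length, and a matching bound on $A(PP^*)$ follows. All remaining steps -- the Cauchy--Schwarz inequality, the reordering of summations, and the extraction of the factor $A$ -- are routine once the Dirichlet-form framework is in place.
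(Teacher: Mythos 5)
The paper does not prove this theorem; it cites it directly to Montenegro and Tetali (and ultimately to Diaconis and Saloff-Coste). Your Dirichlet-form argument — telescoping along the prescribed paths, applying Cauchy--Schwarz to get $(f(x)-f(y))^2 \le |\gamma_{xy}|\sum_{(a,b)\in\gamma_{xy}}(f(a)-f(b))^2$, swapping the order of summation, and bounding the resulting edge congestion by $A\,\pi(a)P(a,b)$ — is exactly the canonical proof given in those references, and it is correct. One small caveat worth flagging: this argument compares the spectral gaps $1-\lambda_2$ of the two chains via the variational principle $1-\lambda_2 = \min_{f} \cE(f,f)/\mathrm{Var}_\pi(f)$; it says nothing about the bottom-of-spectrum quantity $1+\lambda_{\min}$, so strictly speaking it proves the comparison for $1-\lambda_2$ rather than for the absolute gap $\min(1-\lambda_2,\,1+\lambda_{\min})$ that the paper's \thmref{MixingTimeGap} uses. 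This is standard and usually resolved by laziness, but your write-up would be cleaner if it stated explicitly which notion of gap the variational principle controls.
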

For example, when comparing 1-dimensional random walks there is no choice in the paths; they must pass through every point between $x$ and $y$.  Further, the walk can only progress one step at a time so (without loss of generality, for reversible chains) let $b = a+1$ to give
\begin{align}
\label{eq:ComparisonWalk}
A &= \max_a \frac{1}{\pi(a) P(a, a+1)} \sum_{x \le a} \sum_{y \ge a+1} \pi(x) \hat{P}(x, y) (y-x) \nonumber \\
&= \max_a \frac{\hat{P}(a, a+1)}{P(a, a+1)}.
\end{align}
A generalisation of the comparison theorem involves constructing flows, which are weighted sets of paths between states.  This can give a tighter bound since bottlenecks are averaged over.  This gives a modified comparison theorem:
\begin{theorem}[\cite{DiaconisSaloff-Coste93}, Theorem 2.3]
\label{thm:ComparisonFlows}
Let $P$ and $\hat{P}$ be two Markov chains on the same state space $\Omega$ with the same stationary distribution $\pi$.  Then, for every $x \ne y \in \Omega$ with $\hat{P}(x, y) > 0$, construct a set of directed paths $\mathcal{P}_{xy}$ from $x$ to $y$ along edges in $P$.  We define the flow function $f$ which maps each path $\gamma_{xy} \in \mathcal{P}_{xy}$ to a real number in the interval $[0, 1]$ such that
\begin{equation*}
\sum_{\gamma_{xy} \in \mathcal{P}_{xy}} f(\gamma_{xy}) = \hat{P}(x, y).
\end{equation*}
Again, let the length of each path be $| \gamma_{xy} |$.  Then
\begin{equation*}
\Delta \ge \hat{\Delta}/A
\end{equation*}
for the gaps $\Delta$ and $\hat{\Delta}$ where
\begin{equation}
\label{eq:ComparisonFlowsA}
A = A(f) = \max_{a \ne b, P(a, b) \ne 0} \frac{1}{\pi(a) P(a, b)} \sum_{x \ne y, \gamma_{xy} \in \mathcal{P}_{xy} : (a, b) \in \gamma_{xy}} \pi(x) f(\gamma_{xy}) | \gamma_{xy} |.
\end{equation}
\end{theorem}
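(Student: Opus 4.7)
The plan is to prove the comparison bound via Dirichlet forms, which is the standard route to results of this kind and explains why a flow refines \thmref{Comparison}: bottleneck edges are averaged over several paths rather than being forced to carry the entire mass from some $\hat{P}$-transition. For a reversible chain $P$ with stationary distribution $\pi$, recall the Dirichlet form $\mathcal{E}_P(g,g) = \frac{1}{2} \sum_{x,y} \pi(x) P(x,y) (g(x)-g(y))^2$ and the variational characterisation $\Delta = \min \mathcal{E}_P(g,g)/\mathrm{Var}_\pi(g)$, where the minimum ranges over non-constant $g$. Since $P$ and $\hat{P}$ share the stationary distribution, $\mathrm{Var}_\pi(g)$ is the same on both sides, so it suffices to prove the Dirichlet-form domination $\mathcal{E}_{\hat{P}}(g,g) \le A \cdot \mathcal{E}_P(g,g)$ for every $g$; this immediately yields $\Delta \ge \hat{\Delta}/A$.

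To establish the domination, I would begin from the telescoping identity $g(x)-g(y) = \sum_{(a,b)\in \gamma_{xy}} (g(a)-g(b))$ valid for any path $\gamma_{xy}$ along $P$-edges. The Cauchy--Schwarz inequality then gives $(g(x)-g(y))^2 \le |\gamma_{xy}| \sum_{(a,b)\in \gamma_{xy}} (g(a)-g(b))^2$. Unpacking $\hat{P}(x,y)$ via the flow decomposition $\hat{P}(x,y) = \sum_{\gamma_{xy}\in \mathcal{P}_{xy}} f(\gamma_{xy})$ yields
\[
\mathcal{E}_{\hat{P}}(g,g) \le \tfrac{1}{2} \sum_{x\ne y} \pi(x) \sum_{\gamma_{xy} \in \mathcal{P}_{xy}} f(\gamma_{xy}) |\gamma_{xy}| \sum_{(a,b)\in \gamma_{xy}} (g(a)-g(b))^2.
\]

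The crucial rearrangement is to swap the order of summation so that the outer sum runs over $P$-edges $(a,b)$. Grouping the contributions from each such edge gives
\[
\mathcal{E}_{\hat{P}}(g,g) \le \tfrac{1}{2} \sum_{(a,b): P(a,b)>0} (g(a)-g(b))^2 \sum_{\substack{x,y,\,\gamma_{xy}:\\ (a,b)\in \gamma_{xy}}} \pi(x) f(\gamma_{xy}) |\gamma_{xy}|.
\]
Factoring $\pi(a)P(a,b)$ out of the inner sum exposes exactly the quantity appearing in \eq{ComparisonFlowsA}, so the inner sum is bounded by $A\cdot \pi(a) P(a,b)$, giving $\mathcal{E}_{\hat{P}}(g,g) \le A \cdot \mathcal{E}_P(g,g)$ as required.

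The main obstacle is purely combinatorial bookkeeping in the double-sum interchange; the substantive content is the telescoping-plus-Cauchy--Schwarz step, which is why $|\gamma_{xy}|$ appears linearly (rather than as a more favourable average) in the definition of $A$. A minor subtlety is reversibility: the Dirichlet-form argument is cleanest when both chains are reversible, and for irreversible chains one should replace $P$ by $PP^*$ (and $\hat{P}$ by $\hat{P}\hat{P}^*$) exactly as in the paragraph preceding \eq{2normMixing}, and then run the same argument, picking up only constant factors.
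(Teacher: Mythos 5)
The paper states this result purely as a citation to Diaconis and Saloff-Coste (Theorem~2.3 of \cite{DiaconisSaloff-Coste93}) and gives no in-house proof, so there is nothing in the paper to compare against directly. Your argument is correct and is precisely the standard one from that reference: pass to the Dirichlet form via the variational characterisation $\Delta = \min_g \mathcal{E}_P(g,g)/\mathrm{Var}_\pi(g)$, telescope $g(x)-g(y)$ along each path and apply Cauchy--Schwarz to bring out the factor $|\gamma_{xy}|$, distribute $\hat{P}(x,y)$ over the flow $f$, and then swap the double sum so the outer index runs over $P$-edges, bounding the resulting per-edge coefficient by $A\,\pi(a)P(a,b)$. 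One small point of care: the variational characterisation controls $1-\lambda_2$ rather than the symmetrised quantity $\min(1-\lambda_2,\,1+\lambda_{\min})$ that the paper calls the gap in \thmref{MixingTimeGap}; Diaconis and Saloff-Coste's theorem is likewise stated for $1-\lambda_2$, and the paper applies it only in that regime, so this is consistent, but it is worth stating explicitly which notion of gap is being compared.
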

Note that we recover the comparison theorem when there is just one path between each $x$ and $y$.

Yet another generalisation is to allow general length functions instead of simply counting the edges.  This only appears in the literature as a comparison to the chain $\hat{P}(x, y) = \pi(y)$ although it can easily be generalised to allow comparison with any chain.
\begin{theorem}[\cite{KahaleMixing}, Proposition 1]
\label{thm:ComparisonLengthFunction}
Let $P$ be a Markov chain on the state space $\Omega$ with stationary distribution $\pi$.  Then, for every $x \ne y \in \Omega$ define a directed path $\gamma_{xy}$ from $x$ to $y$ along edges in $P$ and let its length be
\be
| \gamma_{xy} |_l = \sum_{(a,b) \in \gamma_{xy}} l(a,b)
\ee
for any positive length function $l(a,b)$, defined on the edges of the path.  Let $\Gamma$ be the set of all such paths.  Then
\bes
\Delta \ge 1/A
\ees
where
\bes
A = A(\Gamma) = \max_{a \ne b, P(a, b) \ne 0} \frac{1}{l(a, b) \pi(a) P(a, b)} \sum_{x \ne y : (a, b) \in \gamma_{xy}} \pi(x) \hat{P}(x, y) | \gamma_{xy} |_l.
\ees
\end{theorem}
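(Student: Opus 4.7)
The plan is to prove this via the standard Dirichlet form / Rayleigh quotient characterisation of the spectral gap, adapted to accommodate the general length function $l(a,b)$. Recall that for a reversible chain with stationary distribution $\pi$, the gap satisfies
\[
\Delta = \min_{f : \operatorname{Var}_\pi(f) > 0} \frac{\mathcal{E}_P(f,f)}{\operatorname{Var}_\pi(f)},
\qquad
\mathcal{E}_P(f,f) = \tfrac{1}{2} \sum_{a,b} \pi(a) P(a,b)\,(f(b)-f(a))^2.
\]
The strategy is to show that for every $f$, $\mathcal{E}_{\hat{P}}(f,f) \le A\cdot \mathcal{E}_P(f,f)$; since the variance depends only on $\pi$ (shared by both chains), dividing and minimising yields $\hat{\Delta} \le A\cdot \Delta$, which in the special case $\hat{P}(x,y)=\pi(y)$ gives $\hat{\Delta}=1$ and hence $\Delta \ge 1/A$.

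The first step is the path decomposition: for any $x\ne y$ with $\hat{P}(x,y)>0$, write $f(y)-f(x) = \sum_{(a,b) \in \gamma_{xy}} (f(b)-f(a))$. Now I apply a weighted Cauchy--Schwarz with weights $l(a,b)$: for any positive weights,
\[
\Bigl(\sum_{(a,b) \in \gamma_{xy}} (f(b)-f(a))\Bigr)^2
\le \Bigl(\sum_{(a,b) \in \gamma_{xy}} l(a,b)\Bigr)\Bigl(\sum_{(a,b) \in \gamma_{xy}} \tfrac{(f(b)-f(a))^2}{l(a,b)}\Bigr)
= |\gamma_{xy}|_l \sum_{(a,b) \in \gamma_{xy}} \tfrac{(f(b)-f(a))^2}{l(a,b)}.
\]
This is precisely where the length function enters naturally — ordinary Cauchy--Schwarz with uniform weights would reproduce \thmref{ComparisonFlows}; the choice $l(a,b)$ adapts the inequality to reward edges one wishes to count as ``long.''

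Next, substitute into $\mathcal{E}_{\hat{P}}(f,f)$ and swap the order of summation so that, instead of summing over pairs $(x,y)$ and then edges in $\gamma_{xy}$, I sum over edges $(a,b)$ of $P$ and then over $(x,y)$ whose path passes through $(a,b)$:
\[
\mathcal{E}_{\hat{P}}(f,f) \le \tfrac{1}{2}\sum_{a,b : P(a,b)>0} \frac{(f(b)-f(a))^2}{l(a,b)} \sum_{x\ne y : (a,b)\in\gamma_{xy}} \pi(x)\hat{P}(x,y)|\gamma_{xy}|_l.
\]
By the definition of $A$, the inner sum is at most $A\cdot l(a,b)\pi(a)P(a,b)$; the cancellation of $l(a,b)$ with its reciprocal is the whole point of inserting it in the Cauchy--Schwarz step. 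This leaves
\[
\mathcal{E}_{\hat{P}}(f,f) \le A \cdot \tfrac{1}{2}\sum_{a,b}\pi(a)P(a,b)(f(b)-f(a))^2 = A\cdot \mathcal{E}_P(f,f),
\]
and the conclusion follows as indicated.

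I do not expect any serious obstacle: the argument is a direct generalisation of the Diaconis--Saloff-Coste comparison proof (\thmref{ComparisonFlows}), with the single-flow $\hat{P}$ playing the role of the flow function. The only subtle point is the weighted Cauchy--Schwarz, but choosing weights $l(a,b)$ is forced by the requirement that the $l(a,b)^{-1}$ factor produced by Cauchy--Schwarz cancel the $l(a,b)$ factor coming out of the definition of $A$. If one prefers the exact statement $\Delta\ge 1/A$ as written, one notes that it corresponds to specialising $\hat{P}(x,y) = \pi(y)$ (the chain that mixes in one step), so $\hat{\Delta}=1$ in the Rayleigh-quotient inequality above; the general formulation $\Delta \ge \hat{\Delta}/A$ follows identically and subsumes \thmref{Comparison} and \thmref{ComparisonFlows} as special cases.
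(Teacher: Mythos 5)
Your argument is correct; the paper cites this result from \cite{KahaleMixing} (Proposition 1) without reproducing a proof, and your Dirichlet-form comparison is the standard argument in the Diaconis--Saloff-Coste style, with the weighted Cauchy--Schwarz (weights $l(a,b)$) chosen precisely so the resulting $l(a,b)^{-1}$ cancels the $l(a,b)$ in the definition of $A$. You also correctly identify that the stated $\Delta \ge 1/A$ is the specialisation to $\hat{P}(x,y)=\pi(y)$ (for which $\hat{\Delta}=1$), and the only implicit hypothesis worth flagging is reversibility of $P$ (needed for the Rayleigh-quotient characterisation of the gap), which is assumed throughout this part of the paper.
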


\subsubsection{Decomposition}

For some Markov chains, it is easier to consider different parts of the chain separately to prove convergence results.  This allows, for example, different convergence techniques to be used on different parts of the chain.  The separate parts are combined using the decomposition theorem:
\begin{theorem}[\cite{MartinRandallDecomposition}, Theorem 4.2]
\label{thm:Decomposition}
Let $P(x, y)$ be the transition matrix for a reversible Markov chain with state space $\Omega$ and stationary distribution $\pi(x)$.  Then let $\Omega_i$ be disjoint subsets of $\Omega$ such that $\cup_i \Omega_i = \Omega$.  Let
\begin{equation}
P_i(x, y) = \begin{cases}
P(x, y) & x, y \in \Omega_i, x \ne y \\
1 - \sum_{y' \in \Omega_i, y' \ne x} P(x, y') & x = y \in \Omega_i \\
0 & otherwise.
\end{cases}
\end{equation}
Further let $w_i = \sum_{x \in \Omega_i} \pi(x)$ and
\begin{equation}
\bar{P}(i, j)  = \frac{1}{w_i} \sum_{x \in \Omega_i, y \in \Omega_j} \pi(x) P(x, y).
\end{equation}
Then
\begin{equation}
\Delta \ge \frac{1}{2} \bar{\Delta} \min_{i} \Delta_i
\end{equation}
where $\Delta$ is the gap of $P$, $\bar{\Delta}$ for $\bar{P}$ and $\Delta_i$ for $P_i$.
\end{theorem}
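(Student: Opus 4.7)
The plan is to apply the variational characterization of the spectral gap via Dirichlet forms. For a reversible chain $P$ with stationary distribution $\pi$, define the Dirichlet form $\cE_P(f,f) = \frac{1}{2}\sum_{x,y\in\Omega}\pi(x)P(x,y)(f(x)-f(y))^2$ and the variance $\text{Var}_\pi(f) = \bbE_\pi(f^2) - (\bbE_\pi f)^2$. Then the gap has the Poincar\'e characterization
\bes
\Delta = \min\left\{\frac{\cE_P(f,f)}{\text{Var}_\pi(f)} \,:\, f \text{ non-constant}\right\}.
\ees
I would prove the theorem by producing a universal upper bound on $\text{Var}_\pi(f)$ in terms of $\cE_P(f,f)$ that passes through the chains $P_i$ and $\bar P$.

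For any test function $f$, introduce the piece-wise averages $\bar f(i) = w_i^{-1} \sum_{x\in\Omega_i}\pi(x) f(x)$, the restricted stationary distributions $\pi_i(x) = \pi(x)/w_i$ on $\Omega_i$, and note that $\bar\pi(i)=w_i$ is stationary for $\bar P$. First I would invoke the law of total variance,
\bes
\text{Var}_\pi(f) = \text{Var}_{\bar\pi}(\bar f) + \sum_i w_i\, \text{Var}_{\pi_i}(f|_{\Omega_i}),
\ees
and handle each summand separately. The intra-piece variances are controlled directly by definition of the gaps $\Delta_i$: $\text{Var}_{\pi_i}(f|_{\Omega_i}) \le \Delta_i^{-1}\cE_{P_i}(f|_{\Omega_i},f|_{\Omega_i})$. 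Summing and using $w_i\cE_{P_i}(f|_{\Omega_i},f|_{\Omega_i}) = \tfrac{1}{2}\sum_{x,y\in\Omega_i,\,x\neq y}\pi(x)P(x,y)(f(x)-f(y))^2$ identifies this with the \emph{intra-piece} part $\cE_P^{\text{intra}}(f,f)$ of the full Dirichlet form, so
\bes
\sum_i w_i\,\text{Var}_{\pi_i}(f|_{\Omega_i}) \le \frac{1}{\min_i \Delta_i}\,\cE_P^{\text{intra}}(f,f) \le \frac{1}{\min_i \Delta_i}\,\cE_P(f,f).
\ees
For the inter-piece term, applying the Poincar\'e inequality to the projected chain gives $\text{Var}_{\bar\pi}(\bar f) \le \bar\Delta^{-1}\,\cE_{\bar P}(\bar f,\bar f)$, so everything reduces to comparing $\cE_{\bar P}(\bar f,\bar f)$ to $\cE_P(f,f)$.

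The heart of the argument, and the main obstacle, is proving the comparison
\bes
\cE_{\bar P}(\bar f, \bar f) \le 2\,\cE_P(f,f).
\ees
Writing $\cE_{\bar P}(\bar f,\bar f) = \tfrac{1}{2}\sum_{i\neq j}\sum_{x\in\Omega_i,\,y\in\Omega_j}\pi(x)P(x,y)(\bar f(i)-\bar f(j))^2$ using the definition of $\bar P$, I would expand $\bar f(i)-\bar f(j) = (\bar f(i)-f(x)) + (f(x)-f(y)) + (f(y)-\bar f(j))$ and apply Cauchy--Schwarz. A naive application $(a+b+c)^2 \le 3(a^2+b^2+c^2)$ loses too much and yields only a factor $\tfrac{1}{3}$ in the final bound; the sharp factor $\tfrac{1}{2}$ requires exploiting that the endpoint-correction terms $\bar f(i)-f(x)$ and $f(y)-\bar f(j)$ have mean zero when averaged against $\pi_i$ and $\pi_j$ respectively, so that cross terms vanish under the summation over $x,y$. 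Formalising this requires rewriting $(\bar f(i)-\bar f(j))^2$ as
\bes
\left(\frac{1}{w_i w_j \bar P(i,j)}\sum_{x\in\Omega_i,\,y\in\Omega_j}\pi(x)P(x,y)\big(f(x)-f(y)\big)\right)^2
\ees
and then applying Cauchy--Schwarz once to pull the square inside, recovering $\cE_P^{\text{inter}}(f,f)$ with the sharp constant. Combining with the intra-piece estimate gives
\bes
\text{Var}_\pi(f) \le \frac{\cE_P(f,f)}{\min_i \Delta_i} + \frac{2\,\cE_P(f,f)}{\bar\Delta \cdot \min_i \Delta_i},
\ees
where a reorganisation (using $\bar\Delta, \Delta_i \le 1$) consolidates both terms into the single bound $\text{Var}_\pi(f) \le \tfrac{2}{\bar\Delta\,\min_i\Delta_i}\cE_P(f,f)$. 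Taking the infimum over $f$ yields $\Delta \ge \tfrac{1}{2}\bar\Delta \min_i \Delta_i$.
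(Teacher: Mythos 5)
The paper does not prove this theorem; it is quoted verbatim as Theorem 4.2 of Martin and Randall \cite{MartinRandallDecomposition}. Their proof goes through the Caracciolo--Pelissetto--Sokal decomposition result, which uses the two-step projection $\bar P_{\mathrm{CPS}}(i,j) = \frac{1}{w_i}\sum_{x\in\Omega_i,y\in\Omega_j}\pi(x)P^2(x,y)$ and yields $\Delta \ge \Delta(\bar P_{\mathrm{CPS}})\min_i\Delta_i$ with no loss, and then shows $\Delta(\bar P_{\mathrm{CPS}})\ge\tfrac12\Delta(\bar P)$ to convert to the single-step projection. Your plan of a direct Dirichlet-form comparison is a different route.

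Unfortunately the direct route as you have sketched it contains a genuine gap: the intermediate inequality $\cE_{\bar P}(\bar f,\bar f)\le 2\,\cE_P(f,f)$ is false. Take four states with uniform $\pi$, $\Omega_1=\{a,b\}$, $\Omega_2=\{c,d\}$, inter-block transitions $P(b,c)=P(c,b)=p$, intra-block transitions $P(a,b)=P(b,a)=P(c,d)=P(d,c)=q$, and $f(a)=-10$, $f(b)=0$, $f(c)=1$, $f(d)=11$. Then $\bar f(1)-\bar f(2)=-11$ while the only inter-block jump sees $|f(b)-f(c)|=1$; one computes $\cE_{\bar P}(\bar f,\bar f)=\tfrac{121p}{4}$ and $\cE_P(f,f)=\tfrac{p}{4}+50q$, so the ratio tends to $121$ as $q\to 0$ and can be made arbitrarily large. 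The point is that $\cE_{\bar P}$ is sensitive to the block averages of $f$, which are dominated by states that have no inter-block edges and hence contribute nothing to $\cE_P^{\text{inter}}$; these must instead be controlled via the intra-block variances, so $\cE_{\bar P}$ cannot be bounded by $\cE_P$ alone. Your proposed fix by ``exploiting mean-zero'' also fails because $\sum_{x\in\Omega_i}\pi_i(x)(\bar f(i)-f(x))=0$ is the wrong weighting: in the Dirichlet form the weights are $\pi(x)P(x,y)$ rather than $\pi_i(x)$, so the cross terms do not vanish, and the displayed rewriting of $\bar f(i)-\bar f(j)$ as a transition-weighted average of $f(x)-f(y)$ is simply not an identity. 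Finally, even granting $\cE_{\bar P}\le 2\cE_P$, the ``reorganisation'' step requires $\tfrac{2}{\bar\Delta}+\tfrac{1}{\min_i\Delta_i}\le\tfrac{2}{\bar\Delta\min_i\Delta_i}$, i.e.\ $2\min_i\Delta_i+\bar\Delta\le 2$, which fails whenever both quantities are close to $1$. A workable direct argument exists, but it must route the endpoint-correction terms back through the intra-block variances (and hence the $\Delta_i$), and a naive version of that yields a constant like $\tfrac14$ rather than the stated $\tfrac12$; recovering $\tfrac12$ is precisely why Martin and Randall pass through the CPS two-step projection.
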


\subsubsection{log-Sobolev Constant}

We will need tighter, but more complicated, mixing time results to prove the tight result for the $U(4)$ case.  We use the log-Sobolev constant:
\begin{definition}
\label{def:LogSobolev}
The log-Sobolev constant $\rho$ of a chain with transition matrix $P$ and stationary distribution $\pi$ is
\begin{equation*}
\rho = \min_f \frac{\sum_{x \ne y} (f(x) - f(y))^2 P(x, y) \pi(y)}{\sum_x \pi(x) f(x)^2 \log \frac{f(x)^2}{\sum_y \pi(y) f(y)^2}}.
\end{equation*}
\end{definition}
The mixing time result is:
\begin{lemma}[see \cite{DiaconisSaloff-Coste96}, Theorem 3.7']
The mixing time of a finite, reversible, irreducible Markov chain is
\begin{equation}
\label{eq:SobolevMixingTime}
\tau(\eps) = O\left(\frac{1}{\rho} \log \log \frac{1}{\pi_*} + \frac{1}{\Delta}{\log \frac{d}{\eps}}\right)
\end{equation}
where $\rho$ is the Sobolev constant, $\pi_*$ is the smallest value of the stationary distribution, $\Delta$ is the gap and $d$ is the size of the state space.
\end{lemma}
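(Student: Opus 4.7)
The plan is to follow the standard two-phase strategy of Diaconis--Saloff-Coste: use the log-Sobolev inequality to obtain fast contraction in a strong norm, then use the spectral gap for the finishing step in $L^2$, and finally convert the resulting bound to total variation.

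Denote $f_t = p_t/\pi$, where $p_t$ is the distribution after $t$ steps starting from the worst initial state $x$, so that $f_0 = \delta_x/\pi$. A direct calculation gives $\|f_0\|_{p,\pi}^p = \sum_y \pi(y) f_0(y)^p = \pi(x)^{1-p}$, and the total variation distance satisfies $\|p_t - \pi\|_1 = \|f_t - 1\|_{1,\pi} \le \|f_t - 1\|_{2,\pi}$ by Cauchy--Schwarz against the probability measure $\pi$. So it suffices to make $\|f_t - 1\|_{2,\pi} \le \eps$.

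Phase one uses Gross's theorem: the log-Sobolev inequality with constant $\rho$ is equivalent to the hypercontractive bound $\|P_t g\|_{q(t),\pi} \le \|g\|_{q_0,\pi}$ along the curve $q(t) = 1 + (q_0-1)e^{2\rho t}$. Setting $q(t)=2$ gives $q_0 = 1 + e^{-2\rho t}$, and applied to $f_0$ this yields $\|P_t f_0\|_{2,\pi} \le \pi_*^{-e^{-2\rho t}/(1+e^{-2\rho t})}$. This is $O(1)$ as soon as $e^{-2\rho t}\log(1/\pi_*) = O(1)$, which occurs after $t_1 = O((1/\rho)\log\log(1/\pi_*))$ steps.

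Phase two then applies the Poincar\'e (spectral gap) inequality to the remaining factor: $\|P_{t_2}(h-1)\|_{2,\pi} \le e^{-\Delta t_2}\|h-1\|_{2,\pi}$ for any density $h$. Starting from $h = P_{t_1} f_0$ with $\|h-1\|_{2,\pi} = O(1)$, a further $t_2 = O((1/\Delta)\log(1/\eps))$ steps bring the $L^2(\pi)$ distance, and hence the total variation distance, below $\eps$. The $\log d$ appearing in the stated bound can be traced to the alternative conversion $\|p-\pi\|_1 \le \sqrt{d}\,\|p-\pi\|_2$ when one measures $L^2$ in counting measure rather than against $\pi$, which introduces an additional $(1/(2\Delta))\log d$ steps and yields the advertised $(1/\Delta)\log(d/\eps)$ term. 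The main obstacle is the careful bookkeeping in phase one: one must optimise $q_0 \to 1^+$ against the rapidly growing initial norm $\|f_0\|_{q_0,\pi} = \pi_*^{(1-q_0)/q_0}$ so that the $\log(1/\pi_*)$ appearing in the elementary spectral-gap bound of \thmref{MixingTimeGap} is upgraded to $\log\log(1/\pi_*)$. This is the only place where the log-Sobolev constant provides genuine improvement, and it requires the full strength of Gross's theorem rather than any elementary Dirichlet-form estimate.
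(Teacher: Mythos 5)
The paper does not give its own proof of this lemma — it is quoted directly from Diaconis and Saloff-Coste, Theorem~3.7'. Your two-phase argument is the standard proof of exactly that theorem: hypercontractivity governed by the log-Sobolev constant brings the bad initial $L^{q_0}(\pi)$ norm of $\delta_x/\pi$ down to an $O(1)$ quantity in $L^2(\pi)$ after $O\!\left(\tfrac{1}{\rho}\log\log\tfrac{1}{\pi_*}\right)$ steps, and the spectral gap then contracts the remainder to $\eps$.

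Two caveats worth noting. First, Gross's equivalence between log-Sobolev and hypercontractivity is a statement about continuous-time Markov semigroups; for a discrete-time chain, as here, one must either pass to the associated continuous-time chain or make the chain lazy, as Diaconis and Saloff-Coste do, before the curve $q(t) = 1 + (q_0-1)e^{2\rho t}$ applies. This costs only a constant factor absorbed by the $O(\cdot)$, but your invocation of the continuous-time statement verbatim for the discrete operator $P$ elides it, and a careful write-up should say so. Second, your closing paragraph about the origin of the $\log d$ is not quite right: the conversion you actually use, $\|p_t - \pi\|_1 \le \|f_t - 1\|_{2,\pi}$, carries no $\sqrt{d}$ factor, so your argument proves the strictly tighter bound $O\!\left(\tfrac{1}{\rho}\log\log\tfrac{1}{\pi_*} + \tfrac{1}{\Delta}\log\tfrac{1}{\eps}\right)$ directly. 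The stated $\log(d/\eps)$ is simply a looser restatement (since $\log\tfrac{1}{\eps} \le \log\tfrac{d}{\eps}$), so the appeal to an alternative counting-measure $L^2$ conversion is unnecessary and somewhat misleading — especially since that conversion uses a different, incomparable $L^2$ norm. Neither point affects correctness; your proof as given implies the lemma.
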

Further, the comparison theorem (\thmref{Comparison}) works just the same to give
\begin{equation*}
\rho \ge \hat{\rho}/A.
\end{equation*}
We will need one more result, due to Diaconis and Saloff-Coste:
\begin{lemma}[\cite{DiaconisSaloff-Coste96}, Lemma 3.2]
\label{lem:ProductChain}
Let $P_i$, $i = 1, \ldots, d$, be Markov chains with gaps $\Delta_i$ and Sobolev constants $\rho_i$.  Now construct the product chain $P$.  This chain has state space equal to the product of the spaces for the chains $P_i$ and at each step one of the chains is chosen at random and run for one step.  Then $P$ has spectral gap given by:
\begin{equation*}
\Delta = \frac{1}{d} \min_i \Delta_i
\end{equation*}
and Sobolev constant:
\begin{equation*}
\rho = \frac{1}{d} \min_i \rho_i.
\end{equation*}
\end{lemma}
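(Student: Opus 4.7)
The plan is to handle the spectral gap and the log-Sobolev constant separately, since the spectral gap follows from direct linear algebra while the log-Sobolev statement requires the tensorisation property of entropy. First I would set up notation: write $P = \frac{1}{d}\sum_{i=1}^d P^{(i)}$, where $P^{(i)} = I_1 \otimes \cdots \otimes P_i \otimes \cdots \otimes I_d$ acts as $P_i$ on the $i^{\text{th}}$ coordinate and as the identity on the others. The stationary distribution is the product measure $\pi = \pi_1 \otimes \cdots \otimes \pi_d$, since $\pi P^{(i)} = \pi$ for every $i$.

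For the spectral gap, observe that the operators $P^{(i)}$ pairwise commute (they act on disjoint tensor factors), so they are simultaneously diagonalisable. The eigenvectors of $P$ are tensor products $v_{k_1} \otimes \cdots \otimes v_{k_d}$ of eigenvectors of the $P_i$, and the corresponding eigenvalues are $\frac{1}{d}\sum_{i=1}^d \lambda_{i,k_i}$. The top eigenvalue $1$ is achieved uniquely by taking $k_i=1$ (with $\lambda_{i,1}=1$) in every factor. The next largest eigenvalue is obtained by keeping $k_j=1$ for all but one coordinate $i$ and picking the second-largest eigenvalue of $P_i$ in that coordinate, which gives $\frac{1}{d}\bigl((d-1) + (1-\Delta_i)\bigr) = 1 - \Delta_i/d$. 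Minimising over $i$ yields $\Delta(P) = \min_i \Delta_i / d$. (A symmetric argument, using $1+\lambda_{\min}$, handles the opposite end of the spectrum, so the bound is tight.)

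For the log-Sobolev constant, the key ingredient is the tensorisation inequality for entropy on a product measure: for any $f:\Omega\to\bbR_{\ge 0}$,
\bes
\mathrm{Ent}_\pi(f) \;\le\; \sum_{i=1}^d \Expect_\pi\bigl[\mathrm{Ent}_{\pi_i}(f)\bigr],
\ees
where $\mathrm{Ent}_{\pi_i}$ acts on the $i^{\text{th}}$ coordinate with the others fixed. Applying this with $f^2$ in place of $f$, and then using the log-Sobolev inequality for each $P_i$ in the form $\mathrm{Ent}_{\pi_i}(f^2) \le \rho_i^{-1}\,\mathcal{E}_{P_i}(f,f)$ coordinate-wise, gives
\bes
\mathrm{Ent}_\pi(f^2) \;\le\; \sum_{i=1}^d \frac{1}{\rho_i}\,\Expect_\pi\bigl[\mathcal{E}_{P_i}(f,f)\bigr] \;\le\; \frac{1}{\min_i \rho_i}\sum_{i=1}^d \mathcal{E}_{P^{(i)}}(f,f).
\ees
The Dirichlet form of $P$ itself is $\mathcal{E}_P(f,f) = \frac{1}{d}\sum_i \mathcal{E}_{P^{(i)}}(f,f)$ directly from the definition and the form of $P$. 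Combining these identifies $\rho(P) \ge \min_i \rho_i / d$.

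The matching upper bound is obtained by testing with a function $f$ that depends on only one coordinate $i^*$, chosen to be the one attaining $\min_i \rho_i$, and taking the function that (nearly) attains the log-Sobolev ratio for $P_{i^*}$. Then the tensor-product entropy reduces to $\mathrm{Ent}_{\pi_{i^*}}(f^2)$ while the Dirichlet form reduces to $\frac{1}{d}\mathcal{E}_{P_{i^*}}(f,f)$, producing the ratio $\rho_{i^*}/d = \min_i \rho_i / d$. The main obstacle is the tensorisation inequality for entropy; this is a standard fact but it is the only non-elementary ingredient, and once it is granted the rest is bookkeeping that mirrors the spectral gap calculation.
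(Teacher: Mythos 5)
The paper does not actually prove \lemref{ProductChain} --- it quotes it verbatim from Diaconis and Saloff-Coste \cite{DiaconisSaloff-Coste96}, Lemma~3.2 --- so there is no internal proof to compare against. Your reconstruction is correct and is essentially the argument Diaconis and Saloff-Coste themselves give: the commuting tensor-factor structure handles the spectral gap, and tensorisation of entropy (with a one-coordinate test function for the matching upper bound) handles the log-Sobolev constant. One inaccuracy to flag: your remark that ``a symmetric argument, using $1+\lambda_{\min}$, handles the opposite end of the spectrum'' is not right. The second-largest eigenvalue of $P$ is obtained by perturbing a \emph{single} coordinate, giving $\lambda_2(P) = 1 - \frac{1}{d}\min_i(1-\lambda_{2,i})$; but the smallest eigenvalue of $P$ is obtained by setting \emph{every} coordinate to its worst value, giving $\lambda_{\min}(P) = \frac{1}{d}\sum_i \lambda_{i,\min}$, which is an average, not a minimum. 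Consequently $1+\lambda_{\min}(P) = \frac{1}{d}\sum_i(1+\lambda_{i,\min}) \ge \min_i\Delta_i \ge \frac{1}{d}\min_i\Delta_i$, so the negative-eigenvalue branch never binds; the stated equality $\Delta = \frac{1}{d}\min_i\Delta_i$ is really an equality for the top gap $1-\lambda_2$ (which is the definition Diaconis--Saloff-Coste use), and only a lower bound under the paper's two-sided definition $\Delta = \min(1-\lambda_2, 1+\lambda_{\min})$. This is a harmless discrepancy for the paper's application (where the relevant chains are lazy or otherwise have no negative-eigenvalue problem), but the phrase ``symmetric argument'' should be replaced by the observation above.
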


\subsection{Convergence Proof}

We now prove the Markov chain convergence results to show that the $\gamma(p, p)$ terms converge quickly.  We have already shown that the $\gamma(p_1, p_2)$ terms with $p_1 \ne p_2$ converge quickly and that there is no mixing between these terms and the $\gamma(p, p)$ terms.  Therefore, in this section, we remove such terms from $\hat{G}$.

We want to prove the Markov chain with transition matrix (\eq{GeneralTransitionMatrix})
\begin{equation*}
P = \frac{1}{n(n-1)} \sum_{i \ne j} \hat{G}^{(ij)}
\end{equation*}
converges quickly.  Firstly, we know from \secref{MomentsGeneral} that $P$ has two eigenvectors with eigenvalue $1$.  The first is the identity state ($\sigma_0 \otimes \sigma_0$) and the second is the uniform sum of all non-identity terms ($\frac{1}{4^n-1}\sum_{p \ne 0} \sigma_p \otimes \sigma_p$).  From now on, we remove the identity state.  This makes the chain irreducible.  Since we know it converges, it must be aperiodic also so the chain is ergodic and all other eigenvalues are strictly between $1$ and $-1$.

We show here that the gap of this chain, up to constants, does not depend on the choice of 2-copy gapped gate set.  In the second half of the chapter we find a tight bound on the gap for the $U(4)$ case which consequently gives a tight bound on the gap for all universal sets.
 
Since the stationary distribution is uniform, the chain is reversible
if and only if $P$ is a symmetric matrix.  A sufficient condition for
$P$ to be symmetric is for $\hat{G}^{(ij)}$ to be symmetric.  We saw
in \thmref{SymmetryG} that for the $U(4)$ gate set case
$\hat{G}^{(ij)}$ is symmetric.  In fact, the proof works identically
to show that $\hat{G}^{(ij)}$ is symmetric for any gate set, provided
the set is invariant under Hermitian conjugation.  However, 2-copy gapped gate
sets do not necessarily have this property so the Markov chain is not
necessarily reversible.  We will find equal bounds (up to constants)
for the gaps of both $P$ (if $\hat{G}$ is symmetric) and $P P^*$ (if
$\hat{G}$ is not symmetric) below: 
\begin{theorem}
\label{thm:GapGeneralUniversal}
Let $\mu$ be any 2-copy gapped distribution of gates.  If $\mu$ is
invariant under Hermitian conjugation then let $\Delta_P$ be the
eigenvalue gap of the resulting Markov chain matrix $P$.  Then
\begin{equation}
\Delta_P = \Omega(\Delta_{U(4)})
\end{equation}
where $\Delta_{U(4)}$ is the eigenvalue gap of the $U(4)$ chain.  If
$\mu$ is not invariant under Hermitian conjugation then let $\Delta_{P
  P^*}$ be the eigenvalue gap of the resulting Markov chain matrix $P
P^*$.  Then 
\begin{equation}
\Delta_{P P^*} = \Omega(\Delta_{U(4)}).
\end{equation}
\end{theorem}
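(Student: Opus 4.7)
The plan is to reduce the gap of $P$ (or of $PP^*$) to the gap of the reference chain $P_{U(4)}$ via a local spectral comparison on a single pair of qubits. The structural input I would use is \lemref{GEigenvalues} applied with $d=4$ and $k=2$: for any 2-copy gapped $\mu$, the operator $\hat{G}^{(ij)}_\mu$ has the same one-eigenspace as $\hat{G}^{(ij)}_{U(4)} = \Pi^{(ij)}$, and its spectral radius on the orthogonal complement $V_-^{(ij)}$ is at most some $\lambda_\mu < 1$ that depends only on $\mu$ (not on $n$, since a single $\hat{G}^{(ij)}$ is essentially the $n$-independent operator $G_\mu$ tensored with the identity on the other qubits).

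For the reversible case, where $\mu$ is invariant under Hermitian conjugation, $\hat{G}^{(ij)}_\mu$ is self-adjoint, so its spectral radius and operator norm coincide on $V_-^{(ij)}$. Combined with the shared one-eigenspace this gives the Hermitian operator inequality
\begin{equation*}
I - \hat{G}^{(ij)}_\mu \;\succeq\; (1-\lambda_\mu)\l(I - \Pi^{(ij)}\r).
\end{equation*}
Averaging over ordered pairs $(i,j)$ gives $I - P \succeq (1-\lambda_\mu)(I - P_{U(4)})$, and the variational characterisation of the spectral gap (applied on the orthogonal complement of the constant function, which both operators annihilate) yields $\Delta_P \ge (1-\lambda_\mu)\,\Delta_{U(4)} = \Omega(\Delta_{U(4)})$.

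For the non-reversible case, $\hat{G}^{(ij)}_\mu$ is not self-adjoint, and its operator norm on $V_-^{(ij)}$ can strictly exceed $\lambda_\mu$, so the direct inequality above fails. My plan is to iterate locally: since $\hat{G}^{(ij)}_\mu|_{V_-^{(ij)}}$ acts on an $n$-independent finite-dimensional space with spectral radius $\lambda_\mu<1$, Gelfand's formula supplies a constant $L=L(\mu)$ such that $\|(\hat{G}^{(ij)}_\mu)^L|_{V_-^{(ij)}}\|_{op}\le 1/2$, so that $L$-fold iteration on a single pair is close in operator norm to $\Pi^{(ij)}$. I would then apply the comparison theorem (\thmref{Comparison}) to the reversible chain $PP^*$ with reference $P_{U(4)}$: for each edge of $P_{U(4)}$ corresponding to pair $(i,j)$, construct a length-$2L$ path in $PP^*$ that uses only pair $(i,j)$ (this path carries flow comparable to the $P_{U(4)}$ transition thanks to the iterated local bound). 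Bounding the congestion $A$ would then yield $\Delta_{PP^*} = \Omega(\Delta_{U(4)})$, with the hidden constant depending only on $\mu$.

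The principal obstacle is the congestion analysis in the non-reversible case: the uniform average over length-$2L$ pair sequences in $(PP^*)^{?}$ assigns only a $1/(n(n-1))^{2L-1}$ fraction of weight to the all-same-pair sequences used in the comparison paths. Ensuring that $A$ in \eq{ComparisonFlowsA} remains $n$-independent requires carefully matching these combinatorial factors against the $1/(n(n-1))$ normalisation of each $P_{U(4)}$ transition, which is where the technical work concentrates. A clean alternative is to prove the Hermitian operator inequality $I - (\hat{G}^{(ij)}_\mu)^L ((\hat{G}^{(ij)}_\mu)^L)^* \succeq (3/4)(I - \Pi^{(ij)})$ on each pair (which follows from the Gelfand bound) and then apply an averaged-pair comparison argument to pass from this local estimate on the reversible operators $\hat{G}^{(ij)}_\mu (\hat{G}^{(ij)}_\mu)^*$ to the global $PP^*$, absorbing the non-local cross terms via path-length-$L$ comparison.
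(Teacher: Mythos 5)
Your reversible-case argument is correct and genuinely different from the paper's, and it is arguably cleaner.  The paper's proof is a combinatorial path/flow comparison (Theorem \ref{thm:ComparisonFlows}): it uses only the fact that the restriction of $\hat G_\mu$ to a single pair is an irreducible, aperiodic stochastic matrix on $15$ states (hence constant-length paths exist), and then shows the congestion is $O(1)$.  Your Dirichlet-form route instead quantifies the \emph{spectral} quality of the single-pair operator: for symmetric $\mu$, $\hat G^{(ij)}_\mu$ is self-adjoint, so the 2-copy-gapped condition gives $\|\hat G^{(ij)}_\mu(I-\Pi^{(ij)})\|_\infty\le\lambda_\mu<1$ with $\lambda_\mu$ independent of $n$ (since $\hat G^{(ij)}_\mu$ is a fixed $16\times 16$ operator tensored with identity), whence $I-\hat G^{(ij)}_\mu\succeq(1-\lambda_\mu)(I-\Pi^{(ij)})$.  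Averaging over pairs and restricting both sides to the diagonal $\gamma(p,p)$ subspace (which is fine, since conjugating a PSD inequality by the projection onto that subspace preserves it, and $\Pi^{(ij)}$ is block-diagonal with respect to it), the variational principle gives $1-\lambda_2(P)\ge(1-\lambda_\mu)(1-\lambda_2(P_{U(4)}))$.  You should also note that the same bound $\hat G^{(ij)}_\mu\succeq-\lambda_\mu(I-\Pi^{(ij)})$ gives $1+\lambda_{\min}(P)\ge 1-\lambda_\mu$, a constant that dominates $\Delta_{U(4)}=\Theta(1/n)$, so the full gap $\Delta_P=\min(1-\lambda_2,1+\lambda_{\min})$ is $\Omega(\Delta_{U(4)})$.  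The payoff of your route is an explicit, transparent constant $1-\lambda_\mu$; the paper's route is more flexible (it never needs the local operator norm bound, only irreducibility) and carries over verbatim to the irreversible case.

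The non-reversible part of your proposal has real gaps, and your diagnosis of where the difficulty lies is off.  First, the ``principal obstacle'' you describe --- that only a $1/(n(n-1))^{2L-1}$ fraction of length-$2L$ pair sequences stay on the same pair --- is not how the comparison theorem works.  In Theorem \ref{thm:ComparisonFlows} you \emph{choose} the flow; there is no enforced uniform average over pair sequences.  A single edge of the chain $PP^{*}$ in which $P$ acts nontrivially on pair $(i,j)$ and $P^{*}$ loops on \emph{any} pair already has probability $\Theta(1/n^2)$ (because the looping pair can be summed over), exactly matching the $P_{U(4)}$ transition probability; this is precisely the paper's construction, and it yields $O(1)$ congestion with no $n^{-\Theta(L)}$ penalty.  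Second, the Gelfand power $L$ is superfluous machinery: the comparison theorem needs paths, not operator-norm bounds, and paths of constant length exist by irreducibility of the restricted single-pair chain (which the 2-copy-gapped assumption supplies directly).  Third, your proposed ``clean alternative'' --- proving $I-(\hat G^{(ij)}_\mu)^L((\hat G^{(ij)}_\mu)^L)^{*}\succeq\tfrac34(I-\Pi^{(ij)})$ locally and averaging --- does not combine into a statement about $PP^{*}$ because $P^L\ne\tfrac{1}{n(n-1)}\sum_{i\ne j}(\hat G^{(ij)}_\mu)^L$.  The expansion of $P^L(P^L)^{*}$ contains products $\hat G^{(i_1 j_1)}_\mu\cdots\hat G^{(i_L j_L)}_\mu$ over distinct pairs, and since each $\hat G^{(ij)}_\mu$ is a convex combination of orthogonal matrices (so its singular values, unlike its spectrum, can equal $1$ on $V_-^{(ij)}$), such cross-pair products need not contract at all; a Jensen/convexity bound on $\|(P^{*})^L f\|^2$ is therefore dominated by the worst (cross-pair) sequence and gives nothing.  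For the irreversible case you should simply adopt the paper's path construction in $PP^{*}$: your spectral machinery neither simplifies nor replaces it.
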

\begin{proof}
We will use the comparison method with flows (\thmref{ComparisonFlows}).  Firstly consider the case where $\mu$ is closed under Hermitian conjugation i.e.~$\hat{G}$ is symmetric.

We will compare $P$ to the $U(4)$ chain, which we call $P_{U(4)}$.  Recall that this chain chooses a pair at random and does nothing if the pair is $00$ and chooses a random state from $\{0,1,2,3\}^2 \backslash \{00\}$ otherwise.

To apply \thmref{ComparisonFlows}, we need to construct the flows between transitions in $P_{U(4)}$.  We will choose paths such that only one pair is modified throughout.  For example (with $n=4$), the transition $1000 \rightarrow 2000$ is allowed in $P_{U(4)}$.  To construct a path in $P$, we need to find allowed transitions between these two paths in $P$.  $\hat{G}$ may not include the transition $10 \rightarrow 20$ directly, however, $\hat{G}$ is irreducible on this subspace of just two pairs.  This means that a path exists and can be of maximum length $14$ if it has to cycle through all intermediate states (in fact, since $\hat{G}$ is symmetric the maximum path length is $8$; all that is important here is that it is constant).  For example, the transitions $10 \rightarrow 11 \rightarrow 20$ might be allowed.  Then we could choose the full path to be $1000 \rightarrow 1100 \rightarrow 2000$.  In this case we have chosen the path to involve transitions pairing sites 1 and 2.  However, we could equally well have chosen any pairing; we could pair the first site with any of the others.  We can choose 3 paths in this way.  For this example, the flow we want to choose will be all 3 of these paths equally weighted.  We now use this idea to construct flows between all transitions in $P_{U(4)}$ to prove the result.

Let $x \ne y \in \Omega$ and let $d(x, y)$ be the Hamming distance between the states ($d(x, y)$ gives the number of places at which $x$ and $y$ differ).  There are two cases where $P_{U(4)}(x, y) \ne 0$:
\begin{enumerate}
\item{$d(x, y) = 2$.  Here we must choose a unique pairing, specified by the two sites that differ.  Make all transitions in $P$ using this pair giving just one path.}
\item{$d(x, y) = 1$.  For this case, choose all possible pairings of the changing site that give allowed transitions in $P_{U(4)}$.  For each pairing, construct a path in $P$ modifying only this pair.  If the differing site is initially non-zero then there are $n-1$ such pairings; if the differing site is initially zero then there are $n-z(x)$ pairings where $z(x)$ is the number of zeroes in the state $x$.}
\end{enumerate}
All the above paths are of constant length since we have to (at most) cycle through all states of a pair.  We must now choose the weighting $f(\gamma_{xy})$ for each path such that
\be
\sum_{\mathcal{P}_{xy}} f(\gamma_{xy}) = P_{U(4)}(x, y)
\ee
where $\mathcal{P}_{xy}$ is the set of all paths from $x$ to $y$ constructed above.  We choose the weighting of each path to be uniform.  We just need to calculate the number of paths in $\mathcal{P}_{xy}$ to find $f$:
\begin{enumerate}
\item{$d(x, y) = 2$.  There is just one path so $f(\gamma_{xy}) = P_{U(4)}(x, y) = \Theta(1/n^2)$.}
\item{$d(x, y) = 1$.  If the differing site is initially non-zero then $P_{U(4)}(x, y) = \Theta(1/n)$ and there are $n-1$ paths so $f(\gamma_{xy}) = \frac{P_{U(4)}(x,y)}{n-1} = \Theta(1/n^2)$.  If the differing site is initially zero then $P_{U(4)}(x, y) = \Theta\left(\frac{n-z(x)}{n^2}\right)$ and there are $n-z(x)$ paths so $f(\gamma_{xy}) = \frac{P_{U(4)}(x, y)}{n-z(x)} = \Theta(1/n^2)$.}
\end{enumerate}
So for all paths, $f = \Theta(1/n^2)$.    We now just need to know how many times each edge $(a, b)$ in $P$ is used to calculate $A$:
\begin{equation}
A = \max_{a \ne b, P(a, b) \ne 0} A(a, b)
\end{equation}
where
\begin{equation}
A(a, b) = \frac{1}{P(a, b)} \sum_{x \ne y, \gamma_{xy} \in \mathcal{P}_{xy} : (a, b) \in \gamma_{xy}} f(\gamma_{xy}).
\end{equation}
We have cancelled the factors of $\pi(x)$ because the stationary distribution is uniform.  We have also ignored the lengths of the paths since they are all constant.  

To evaluate $A(a, b)$, we need to know how many paths pass through each edge $(a, b)$.  We again consider the two possibilities separately:
\begin{enumerate}
\item{$d(a, b) = 2$.  Suppose $a$ and $b$ differ at sites $i$ and $j$.  Firstly, we need to count how many transitions from $x$ to $y$ in $P_{U(4)}$ could use this edge, and then how many paths for each transition actually use the edge. 

To find which $x$ and $y$ could use the edge, note that $x$ and $y$ must differ at sites $i$, $j$ or both.  Furthermore, the values at the sites other than $i$ and $j$ must be the same as for $a$ (and therefore $b$).  There is a constant number of $x, y$ pairs that satisfy this condition.  Now, for each $x, y$ pair satisfying this, paths that use this edge must use the pairing $i, j$ for all transitions.  Since in the paths we have chosen above there is a unique path from $x$ to $y$ for each pairing, there is at most one path for each $x, y$ pair that uses edge $a, b$.

For $d(a, b) = 2$, $P(a, b) = \Theta(1/n^2)$ so $A(a, b)$ is a constant for this case.}
\item{$d(a, b) = 1$.  Let there be $r$ pairings that give allowed transitions in $P$ between $a$ and $b$.  As above, each pairing gives a constant number of paths.  So the numerator is $\Theta(r/n^2)$.  Further, $P(a, b) = \Theta(r/n^2)$.  So again $A(a, b)$ is constant.}
\end{enumerate}
Combining, $A$ is a constant so the result is proven for the case $\hat{G}$ is symmetric.

We now turn to the irreversible case.  We now need to bound the gap of $P P^* = P P^T$.  This chain selects two (possibly overlapping) pairs at random and applies $\hat{G}$ to one of them and $\hat{G}^T$ to the other.  We can use the above exactly by choosing $\hat{G}$ to perform the transitions above and $\hat{G}^T$ to just loop the states back to themselves.  By aperiodicity (the greatest common divisor of loop lengths is $1$), we can always find constant length paths that do this.
\end{proof}

Now we need to know the gap of the $U(4)$ chain.  We can, by a simple application of the comparison theorem, show it is $\Omega(1/n^2)$.  However, in the second half of this chapter we show it is $\Theta(1/n)$.  This gives us (using \thmref{MixingTimeGap}):
\begin{corollary}
\label{cor:SecondMomentsMixing}
The Markov chain $P$ has mixing time $O(n(n+ \log 1/\eps))$ and 2-norm mixing time $O(n\log 1/\eps)$.
\end{corollary}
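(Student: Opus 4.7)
The plan is to simply combine the eigenvalue gap bound $\Delta_P = \Omega(1/n)$ (which the text tells us follows from \thmref{GapGeneralUniversal} together with the tight analysis $\Delta_{U(4)} = \Theta(1/n)$ carried out in the second half of the chapter) with the standard spectral bounds on mixing time already recalled in \secref{MarkovChain}. There is essentially no new work to do: the corollary is a two-line consequence of results that have already been assembled.

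First I would invoke \thmref{GapGeneralUniversal}. In the reversible case (when $\hat{G}$ is symmetric) this gives $\Delta_P \ge c\,\Delta_{U(4)} = \Omega(1/n)$ directly; in the general case it gives $\Delta_{PP^*} = \Omega(1/n)$. Either way, an application of \thmref{MixingTimeGap} (or its irreversible version stated just after it) produces
\[
\tau(\eps) \;\le\; \frac{2}{\Delta_{PP^*}} \ln \frac{1}{\pi_* \eps}.
\]
The stationary distribution of $P$ is uniform on the $4^n-1$ non-identity diagonal Pauli states, so $\pi_* = 1/(4^n-1)$ and $\ln(1/\pi_*) = O(n)$. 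Substituting the gap bound gives
\[
\tau(\eps) \;=\; O(n)\cdot O(n + \log 1/\eps) \;=\; O(n(n+\log 1/\eps)),
\]
which is the first claim.

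For the 2-norm mixing time I would apply \eq{2normMixing}, namely $\tau_2(\eps) \le \frac{2}{\Delta_{PP^*}} \ln(1/\eps)$, since this bound does not pick up the $\ln(1/\pi_*)$ factor. Plugging in $\Delta_{PP^*} = \Omega(1/n)$ immediately yields $\tau_2(\eps) = O(n \log 1/\eps)$.

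The only genuinely hard step in this chain of reasoning is the tight gap bound $\Delta_{U(4)} = \Omega(1/n)$; the cheap comparison-theorem argument already sketched in the text gives only $\Omega(1/n^2)$, which would cost an extra factor of $n$ in both mixing-time bounds. Sharpening this to $\Theta(1/n)$ is what the forthcoming sections of the chapter are devoted to (via a log-Sobolev analysis together with the decomposition machinery recalled in \thmref{Decomposition} and \lemref{ProductChain}). Once that input is in hand, however, the corollary follows with no further calculation.
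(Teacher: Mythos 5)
Your proposal is correct and matches the paper's own (very terse) derivation: the paper simply says the corollary follows "using \thmref{MixingTimeGap}" from \thmref{GapGeneralUniversal} together with the $\Omega(1/n)$ gap for the $U(4)$ chain established later in the chapter, which is exactly the chain of reasoning you spell out, including the $\ln(1/\pi_*) = O(n)$ contribution and the use of \eq{2normMixing} for the 2-norm bound.
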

We conjecture that the mixing time (as well as \lemref{CoefficientsDecay2General}) can be tightened to $\Theta(n\log\frac{n}{\eps})$, which is asymptotically the same as for the $U(4)$ case:
\begin{conjecture}
\label{conj:Mixing}
The second moments for the case of general 2-copy gapped distributions
have 1-norm mixing time $\Theta(n\log\frac{n}{\eps})$. 
\end{conjecture}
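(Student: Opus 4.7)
The plan is to replace the spectral-gap argument by a log-Sobolev argument. The existing $O(n(n+\log 1/\eps))$ bound arises from applying $\tau(\eps) \le \Delta^{-1}\log(1/(\pi_*\eps))$ with $\Delta = \Omega(1/n)$ and $\log(1/\pi_*) = \Theta(n)$; the $\log(1/\pi_*)$ factor is the source of the superfluous $n$. The log-Sobolev mixing-time estimate (of the form given in \eq{SobolevMixingTime}, modulo the precise dependence on $d$) replaces this factor by $\log\log(1/\pi_*) = O(\log n)$, so the conjecture reduces to showing $\rho = \Omega(1/n)$ for every 2-copy gapped distribution.

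From the $U(4)$ chain to a general 2-copy gapped chain, the strategy mirrors \thmref{GapGeneralUniversal}. The comparison theorem for $\rho$ (the log-Sobolev analogue of \thmref{Comparison}, explicitly $\rho \ge \hat{\rho}/A$ as noted just after \thmref{Comparison}) uses the same comparison constant $A$ as the spectral-gap version, and the flow construction already built in the proof of \thmref{GapGeneralUniversal} gives $A = O(1)$. Hence the core of the problem is proving $\rho_{U(4)} = \Omega(1/n)$ for the $U(4)$ chain itself. A natural attack is to apply the decomposition theorem (\thmref{Decomposition}) to separate the ``pair-selection'' randomness from the ``within-pair randomisation'', bound the within-pair component trivially (it is a fixed chain of constant size), and then combine via \lemref{ProductChain} with a comparison to a classically understood chain such as random-transposition shuffling or Bernoulli--Laplace, whose log-Sobolev constants are known to be of order $1/n$.

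For the matching lower bound, one combines two standard inputs. The spectral-gap lower bound together with $\Delta = \Theta(1/n)$ yields $\tau(\eps) = \Omega(n\log(1/\eps))$, while a coupon-collector argument supplies the $\Omega(n\log n)$ contribution: at each step only two qubits are modified, so unless $t = \Omega(n\log n)$ some site remains untouched with constant probability, and the Pauli coefficient at an untouched site cannot mix, forcing the total-variation distance to stay bounded away from zero. Adding the two gives $\tau(\eps) = \Omega(n\log(n/\eps))$.

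The main obstacle is the log-Sobolev estimate $\rho_{U(4)} = \Omega(1/n)$. Log-Sobolev constants are notoriously harder to control than spectral gaps and cannot be derived from gap bounds alone (one has only $\rho \le \Delta/2$ in general). The substantial symmetry of the $U(4)$ chain---invariance under $\cS_n$ permuting coordinates and under the induced $S_3$ action on non-identity labels at each coordinate---should make a decomposition/comparison argument tractable, but supplying a clean reference chain with a matching log-Sobolev estimate is where the technical work would concentrate.
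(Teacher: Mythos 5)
Your reduction to ``prove $\rho_{U(4)}=\Omega(1/n)$ and transfer to general 2-copy gapped gate sets via $\rho\ge\hat\rho/A$ with the flows of \thmref{GapGeneralUniversal}'' has a fatal gap: the log-Sobolev constant of the full $U(4)$ chain is in fact $O(1/n^2)$, not $\Omega(1/n)$. Test it with $f(p)=\bbI[p\text{ has exactly one non-identity entry}]$. The stationary measure is uniform and the chain is symmetric, so the Dirichlet form is $\tfrac{2}{4^n-1}\sum_{p:f(p)=1}\Pr_p(\text{leave the level set})$. From a one-nonzero state the walk changes $f$ only by selecting the single nontrivial site (probability $2/n$) and then landing on a two-nonzero pair (probability $9/15$), so $\Pr_p(\text{leave})=\tfrac{6}{5n}$, there are $3n$ such states, and the numerator is $\Theta(4^{-n})$. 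The entropy term is $\pi(A)\log\tfrac{1}{\pi(A)}$ with $\pi(A)=\tfrac{3n}{4^n-1}$, i.e.\ $\Theta(n^2/4^n)$. The ratio is $\Theta(1/n^2)$. So there is a genuine Sobolev bottleneck at the sparse-support states: it is invisible to the spectral gap (which is $\Theta(1/n)$) but fatal to a global log-Sobolev argument, and no choice of reference chain, flow or decomposition can manufacture a constant that isn't there.

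This is precisely why the paper's proof of $\Theta(n\log\tfrac{n}{\eps})$ for the $U(4)$ chain (\thmref{ZeroChainMixing}) does not run a single global log-Sobolev estimate. It decomposes the walk into three phases, handles the slow regime near $x=O(1)$ nonzeros by direct Chernoff and waiting-time arguments (phases 1--2), and only invokes log-Sobolev on the restricted chain $P_2$ on $[\theta n,n]$, where the $\Omega(1/n)$ bound is actually available (\lemref{MixingPhase3}). The content of the conjecture is to make this phase decomposition work for a general 2-copy gapped $\hat G$; the real obstruction is that for a general gate set the $\gamma(p,p)$ dynamics need not lump onto the one-dimensional zero chain, so the explicit phase-1/2 arguments (which lean heavily on the birth--death structure) do not carry over, while the comparison you describe only helps in the phase-3 regime. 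Two minor points: the $\rho\ge\hat\rho/A$ transfer is stated for reversible chains only, so for $\mu$ not closed under adjoint you would have to rework it for $PP^*$ as in \thmref{GapGeneralUniversal}; and the log-Sobolev mixing bound as recorded in \eq{SobolevMixingTime} carries a $\tfrac{1}{\Delta}\log\tfrac{d}{\eps}$ term which, with $d=4^n-1$, is already $\Theta(n(n+\log\tfrac1\eps))$ and so gives no improvement --- you would need the sharper Diaconis--Saloff-Coste form with $\log\tfrac1\eps$ in place of $\log\tfrac{d}{\eps}$. Your coupon-collector lower-bound sketch is fine in outline.
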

It seems likely that an extension of our techniques in \secref{U4Convergence} could be used to prove this.

Combining the convergence results we have proved our general result \lemref{MainMixing}:
\begin{proof}[Proof of \lemref{MainMixing}]
Combining \corref{SecondMomentsMixing} (for the $\gamma(p, p)$ terms) and \lemref{CoefficientsDecay2General} (for the $\gamma(p_1, p_2)$, $p_1 \ne p_2$ terms) proves the result.
\end{proof}

We have now shown that the first and second moments of random circuits converge quickly.  For the remainder of the chapter we prove the tight bound for the gap and mixing time of the $U(4)$ case and show how mixing time bounds relate to the closeness of the 2-design to an exact design.  Only for the $U(4)$ case is the matrix $\hat{G}$ a projector so in this sense the $U(4)$ random circuit is the most fundamental.  While we expect the above mixing time bound is not tight, we can prove a tight mixing time result for the $U(4)$ case.  However, using our definition of an approximate $k$-design, the gap rather than the mixing time governs the degree of approximation.

\section{Tight Analysis for the \texorpdfstring{$U(4)$}{U(4)} Case}
\label{sec:U4Convergence}

We have already found tight bounds for the first moments in \lemref{CoefficientDecayGeneral}: just set $\Delta = 1$.

\subsection{Second Moments Convergence}

We need to prove a result analogous to \lemref{CoefficientsDecay2General} for the terms $\sigma_{p_1} \otimes \sigma_{p_2}$ where $p_1 \ne p_2$.  We already have a tight bound for the 2-norm decay, by setting $\Delta = 1$ into \lemref{CoefficientsDecay2General}.  We tighten the 1-norm bound:
\begin{lemma}
\label{lem:CoefficientsDecay2}
After $O(n \log \frac{n}{\eps})$ steps
\begin{equation}
\label{eq:TermsDecay}
\sum_{p_1 \ne p_2} \Expect_W | \gamma_W(p_1, p_2) | \le \eps
\end{equation}
\end{lemma}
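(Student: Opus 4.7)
My plan closely parallels that of \lemref{CoefficientsDecay2General}, but exploits two features of the $U(4)$ case. First I would specialise the one-step contraction to $\Delta = 1$: because $\hat G$ for $U(4)$ annihilates every $(p_1,p_2)$ coefficient whose Paulis at the acted-upon pair $(i,j)$ disagree (by \secref{GforK2}), the factor $(1-\Delta)d/n + (1-d/n)$ appearing in \lemref{CoefficientsDecay2General} reduces to simply $(1 - d/n)$, giving
\bes
\Expect_U\,|\gamma_{W_t}(p_1,p_2)| \le \l(1 - \tfrac{d(p_1,p_2)}{n}\r)\,|\gamma_{W_{t-1}}(p_1,p_2)|,
\ees
and iterating $t$ steps yields $\Expect_W|\gamma_W(p_1,p_2)| \le e^{-td/n}\,|\gamma_0(p_1,p_2)|$ for $d = d(p_1,p_2) \ge 1$.

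Next I would sum over off-diagonal pairs stratified by Hamming distance. In \lemref{CoefficientsDecay2General} the crude bound $|\gamma_0| \le 1$ combined with the count $\binom{n}{d}12^d 4^{n-d}$ at distance $d$ produces a $4^n$ prefactor that forces $t = \Omega(n(n + \log 1/\eps))$. To obtain $O(n \log(n/\eps))$ I would shrink this prefactor to $\poly(n)$ by combining two ingredients: (i) the state-independent bound $|\gamma_0(p_1,p_2)| \le 2^{-n}$ valid for any density matrix on $2n$ qubits, which halves the exponential prefactor from $4^n$ to $2^n$; and (ii) the observation that in Case B of \secref{GforK2}, $\hat G$ uniformly averages the coefficient over all $15$ non-identity diagonal Paulis at each agreement pair, so after $O(n\log n)$ steps every agreement position has been averaged and the expected coefficient becomes insensitive to the specific agreement-position Pauli. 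This collapses the effective term count from $\binom{n}{d}12^d 4^{n-d}$ to $\binom{n}{d}3^d$ up to a polynomial factor, yielding
\bes
\sum_{p_1\ne p_2}\Expect_W|\gamma_W(p_1,p_2)| \le \poly(n)\bigl[(1+3e^{-t/n})^n - 1\bigr],
\ees
which matches the form of the $k=1$ bound in \lemref{CoefficientDecayGeneral} and is at most $\eps$ for $t = O(n \log(n/\eps))$.

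The hard part will be step (ii): rigorously showing that the Case B averaging strips away the $4^{n-d}$ counting of agreement-Pauli configurations. One clean way is to split the chain on off-diagonal coefficients using \thmref{Decomposition} into a disagreement-shrinking subchain, which supplies the exponent $e^{-td/n}$, and an agreement-averaging subchain, which mixes the agreement Paulis to uniform in $O(n\log n)$ steps by the same Markov chain arguments used for the diagonal terms; the overall mixing rate then follows by combining the two gaps. An alternative route is to iterate the one-step contraction directly on the stratified sum $\alpha_t(d) = \sum_{d(p_1,p_2)=d}\Expect|\gamma_W(p_1,p_2)|$, using the constraint $\sum\gamma_0^2 \le \tr\rho^2 \le 1$ together with Cauchy--Schwarz to convert $2$-norm control at each $d$-level into the required $L^1$ bound.
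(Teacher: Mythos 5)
Your approach diverges fundamentally from the paper's proof, and the divergence exposes a genuine gap. The paper does \emph{not} sharpen the per-step contraction estimate of \lemref{CoefficientsDecay2General}. Instead it conditions on a global event: the random variable $H$ counting distinct qubits that have been chosen at least once. It bounds $\Pr(H \le h) \le {n \choose h}(h/n)^t$ by a union bound (for a fixed set of $h$ qubits, every step must choose a pair inside that set), and observes that once a qubit has been hit, every off-diagonal coefficient with $p_1 \ne p_2$ at that qubit has been exactly annihilated. Combined with the purity constraint $\sum_{p_1,p_2}\gamma_W(p_1,p_2)^2 \le 1$ and Cauchy--Schwarz $\sum |\gamma_W| \le \sqrt{N}$, where $N$ counts surviving nonzero coefficients, this gives a residual bound $4^{n-h}$ conditional on $H \ge h$; summing $\sum_h {n \choose h}(h/n)^t \, 4^{n-h}$ and using the binomial theorem yields $(1 + 4 e^{-t/n})^n - 1$, which is $O(\eps)$ for $t = n\ln(4n/\eps)$.

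Your proposal instead iterates the one-step contraction and tries to shrink the $4^n$ prefactor of \lemref{CoefficientsDecay2General} directly. Ingredient (i), $|\gamma_0(p_1,p_2)| \le 2^{-n}$, is correct but only reduces the prefactor from $4^n$ to $2^n$, which still forces $t = \Omega(n^2)$. Ingredient (ii) does not work: the $U(4)$ gate averages the $15$ coefficients sharing a nonzero common $(i,j)$-Pauli, making them equal, and equal coefficients are precisely the \emph{worst} case for Cauchy--Schwarz (not the best). The averaging therefore neither reduces the nonzero term count nor tightens the $\ell^2$-to-$\ell^1$ conversion, and the claimed collapse of the count to $\binom{n}{d}3^d$ up to a polynomial factor is not justified---it would require the averaged coefficients to essentially vanish, which is not established. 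The suggested patches do not repair this: the off-diagonal coefficients are not nonnegative and do not sum to a conserved quantity, so they form no Markov chain to which \thmref{Decomposition} could apply, and the stratified Cauchy--Schwarz route carries the same $2^n$ prefactor that killed ingredient (i).

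The missing idea is to condition on a \emph{global} event (the set of hit qubits) rather than tracking per-term decay rates. Once every qubit in a set $S$ has been hit, off-diagonal coefficients survive only if $p_1 = p_2$ on all of $S$; this global support constraint, combined with the purity bound, feeds a single Cauchy--Schwarz step, and the coupon-collector tail on $H$ supplies the $\eps$-dependence. Stratified per-term contraction cannot see this structure.
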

\begin{proof}
We will split the random circuits up into classes depending on how many qubits have been hit.  Let $H$ be the random variable giving the number of different qubits that have been hit.  We can work out the distribution of $H$ and bound the sum of $| \gamma_W(p_1, p_2) |$ for each outcome.

Firstly we have, after $t$ steps,
\begin{equation*}
\Pr(H \le h) \le {n \choose h} \left(\frac{h(h-1)}{n(n-1)}\right)^t \le {n \choose h} (h/n)^t.
\end{equation*}
Now, for each qubit hit, each coefficient which has $p_1$ and $p_2$ differing in this place is set to zero.  So after $h$ have been hit, there are only (at most) $16^{(n-h)}$ terms in the sum in \eq{TermsDecay}.  As before, the state is a physical state, $\tr \rho^2 \le 1$ so $\sum_{p_1 p_2} \gamma^2(p_1, p_2) \le 1$ so $\sum_{p_1 p_2} | \gamma(p_1, p_2) | \le \sqrt{N}$ if there are at most $N$ non-zero terms in the sum.  Therefore we have, after $t$ steps,
\begin{align*}
\sum_{p_1 \ne p_2} \Expect_W | \gamma_W(p_1, p_2) | &\le \sum_{h=1}^{n-1} \Pr(H = h) 16^{(n-h)/2} \\
&\le \sum_{h=1}^{n-1} \Pr(H \le h) 4^{(n-h)} \\
&\le \sum_{h=1}^{n-1} {n \choose h} (h/n)^t 4^{(n-h)} \\
&= \sum_{h=1}^{n-1} {n \choose h} (1-h/n)^t 4^{h} \qquad h \rightarrow n-h  \\
&\le \sum_{h=1}^{n-1} {n \choose h} \exp(-ht/n) 4^{h}.
\end{align*}
Now, let $t = n \ln \frac{n}{\eps}$:
\begin{align*}
\sum_{p_1 \ne p_2} \Expect_W | \gamma_W(p_1, p_2) | &\le \sum_{h=1}^{n-1} {n \choose h} \left(\frac{4 \eps}{n} \right)^h \\
&= \left( 1 + \frac{4\eps}{n} \right)^n -1 -\left( \frac{4\eps}{n} \right)^n = O(\eps)
\end{align*}
where the last line follows from the binomial theorem.
\end{proof}
This, combined with the mixing time result we prove below, completes the proof that the second moments of the random circuit converge in time $O(n \log \frac{n}{\eps})$.

\subsection{Markov Chain of Coefficients}

The Markov chain acting on the coefficients is reducible because the state $\{ 0 \}^n$ is isolated.  However, if we remove it then the chain becomes irreducible.  The presence of self loops implies aperiodicity therefore the chain is ergodic.  We have already seen that the chain converges to the Haar uniform distribution (in \secref{RandomCircuits}) therefore the stationary state is the uniform state $\pi(x) = 1/(4^n-1)$.  Further, since the chain is symmetric and has uniform stationary distribution, the chain satisfies detailed balance (\eq{DetailedBalance}) so is reversible.  We now turn to obtaining bounds on the mixing time of this chain.

We want to show that the full chain converges to stationarity in time $\Theta(n \log \frac{n}{\eps})$.  To prove this, we will construct another chain called the zero chain.  This is the chain that counts the number of zeroes in the state.  Since it is the zeroes that slow down the mixing, this chain will accurately describe the mixing time of the full chain.
\begin{lemma}
\label{lem:ZeroChainTransitionMatrix}
The zero chain has transition matrix P on state space (we count non-zero positions) $\Omega = \{1,2, \ldots, n\}$.
\begin{equation}
P(x, y) =
\begin{cases}
  1 - \frac{2x(3n-2x-1)}{5n(n-1)}  & y = x \\
  \frac{2x(x-1)}{5n(n-1)}  & y = x-1 \\
  \frac{6x(n-x)}{5n(n-1)}  & y = x+1 \\
  0 & \rm{otherwise}
\end{cases}
\end{equation}
for $1 \le x, y \le n$.
\end{lemma}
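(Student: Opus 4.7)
The plan is to verify the formula by a direct calculation, after first checking that the projection from the full chain to the count of non-zero positions is itself Markov. The action of one step of the full chain on the selected ordered pair of sites $(i,j)$ depends only on whether the restricted pair is $(0,0)$ or not: if it is $(0,0)$ the state is unchanged, and otherwise the pair is replaced by a uniformly random element of $\{0,1,2,3\}^2 \setminus \{(0,0)\}$. Hence the conditional transition on the pair depends only on the pair's zero-pattern (both zero, exactly one zero, or both non-zero). Since the distribution over patterns of the randomly chosen pair depends only on $x$, the number of non-zero sites, the lumped chain on $\{1,\ldots,n\}$ is Markov.

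Next I would observe that each step can change $x$ by at most $1$. The only non-trivial transitions on the pair are: two non-zeros $\to$ one non-zero (impossible to reach zero non-zeros since $(0,0)$ is excluded from the target), or one non-zero $\to$ two non-zeros (impossible to reach zero non-zeros for the same reason); a pair with two zeros is always fixed. So $P(x,y)=0$ unless $y \in \{x-1,x,x+1\}$, which matches the claim.

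The two non-trivial entries are then a matter of counting. The probability that the (ordered) pair has both sites non-zero is $x(x-1)/[n(n-1)]$, and the probability that it has exactly one non-zero site is $2x(n-x)/[n(n-1)]$. Among the $15$ allowed target pair states, $6$ have exactly one non-zero coordinate and $9$ have both coordinates non-zero. Multiplying gives
\begin{equation*}
P(x,x-1) = \frac{x(x-1)}{n(n-1)} \cdot \frac{6}{15} = \frac{2x(x-1)}{5n(n-1)},
\quad
P(x,x+1) = \frac{2x(n-x)}{n(n-1)} \cdot \frac{9}{15} = \frac{6x(n-x)}{5n(n-1)},
\end{equation*}
matching the claimed formulas. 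The diagonal entry $P(x,x) = 1 - P(x,x-1) - P(x,x+1)$ then reduces after factoring out $2x/[5n(n-1)]$ to $1 - 2x(3n-2x-1)/[5n(n-1)]$.

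No step here is a real obstacle; the only conceptual point is the consistency of lumping, and the rest is bookkeeping on the $15$-element target distribution. As a sanity check, the boundary cases behave correctly: at $x=n$ the formula for $P(n,n+1)$ vanishes because $n-x=0$, and the absorbing point $x=0$ has been removed from $\Omega$ in accordance with the removal of the all-identity state from the full chain.
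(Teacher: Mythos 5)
Your proof is correct and takes essentially the same approach as the paper's: it identifies which of the $15$ possible replacement pair-states increase or decrease the zero count and multiplies by the probability of drawing a pair with the relevant zero-pattern, yielding identical formulas. You add an explicit justification that the lumping to the count of non-zeros is Markov, which the paper leaves implicit, but the calculation itself is the same.
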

\begin{proof}
Suppose there are $n-x$ zeroes (so there are $x$ non-zeroes).  Then the only way the number of zeroes can decrease (i.e.~for $x$ to increase) is if a non-zero item is paired with a zero item and one of the $9$ (out of $15$) new states is chosen with no zeroes.  The probability of choosing such a pair is $\frac{2x(n-x)}{n(n-1)}$ so the overall probability is $\frac{9}{15} \frac{2x(n-x)}{n(n-1)}$.

The number of zeroes can increase only if a pair of non-zero items is chosen and one of the $6$ states is chosen with one zero.  The probability of this occurring is $\frac{6}{15} \frac{x(x-1)}{n(n-1)}$.

The probability of the number of zeroes remaining unchanged is simply calculated by requiring the probabilities to sum to $1$.
\end{proof}
We see that the zero chain is a one-dimensional random walk on the line.  It is a lazy random walk because the probability of moving at each step is $<1$.  However, as the number of zeroes decreases, the probability of moving increases monotonically:
\begin{equation}
1-P(x,x) = \frac{2x(3n-2x-1)}{5n(n-1)} \ge 2x/5n.
\end{equation}

\begin{lemma}
\label{lem:ZeroStatDistrib}
The stationary distribution of the zero chain is
\begin{equation}
\label{eq:ZeroStatDistrib}
\pi_0(x) = \frac{3^x {n \choose x}}{4^n-1}.
\end{equation}
\end{lemma}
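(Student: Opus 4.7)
The plan is to verify the claimed distribution by checking the detailed balance equations (since the zero chain is a birth-death walk on $\{1,\ldots,n\}$ with only nearest-neighbour transitions, detailed balance is both necessary and sufficient for stationarity), and then to check normalisation. I expect no serious obstacle, but there is a conceptual point worth highlighting: the formula $\pi_0(x) = 3^x\binom{n}{x}/(4^n-1)$ is exactly the marginal on the number of non-zero coordinates induced by the uniform distribution on $\{0,1,2,3\}^n\setminus\{0^n\}$, which is the stationary distribution of the full chain proved in Section~\ref{sec:Convergence}. This gives a quick sanity check and motivates why the formula should be right: there are $\binom{n}{x}$ placements and $3^x$ non-zero labellings, divided by the total mass $4^n-1$.

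First I would verify the edge equation $\pi_0(x)P(x,x+1) = \pi_0(x+1)P(x+1,x)$ for $1\leq x\leq n-1$. Plugging in from \lemref{ZeroChainTransitionMatrix}, the left-hand side is
\begin{equation*}
\frac{3^x\binom{n}{x}}{4^n-1}\cdot\frac{6x(n-x)}{5n(n-1)},
\end{equation*}
and the right-hand side is
\begin{equation*}
\frac{3^{x+1}\binom{n}{x+1}}{4^n-1}\cdot\frac{2(x+1)x}{5n(n-1)}.
\end{equation*}
Using $\binom{n}{x+1} = \binom{n}{x}(n-x)/(x+1)$ the factor of $(x+1)$ cancels and both sides reduce to $6\cdot 3^x\binom{n}{x}x(n-x)/[5n(n-1)(4^n-1)]$. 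Since these are the only non-trivial off-diagonal transitions, detailed balance holds, so $\pi_0 P = \pi_0$.

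Second, I would confirm that $\pi_0$ is a probability distribution by the binomial theorem:
\begin{equation*}
\sum_{x=1}^{n} 3^x\binom{n}{x} \;=\; (1+3)^n - 1 \;=\; 4^n-1,
\end{equation*}
so $\sum_{x=1}^n \pi_0(x) = 1$. Since the chain is irreducible and aperiodic (hence ergodic) on $\{1,\ldots,n\}$ by the remarks preceding the lemma, its stationary distribution is unique, and $\pi_0$ as defined is that unique stationary distribution. The main thing to be careful about is that the state $x=0$ has been excluded from the state space, which is consistent with dropping the identity term earlier; this is why the sum starts at $x=1$ and why the normalisation is $4^n-1$ rather than $4^n$.
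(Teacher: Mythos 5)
Your proof is correct and matches the paper's approach: the paper's first method is direct verification that $\pi_0 P = \pi_0$, which you carry out via the (equivalent but tidier) detailed-balance equations for a birth–death chain, and your ``sanity check'' via counting non-zero strings is exactly the paper's second method. The detailed-balance calculation and the binomial normalisation both check out.
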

\begin{proof}
This can be proven by multiplying the transition matrix in \lemref{ZeroChainTransitionMatrix} by the state \eq{ZeroStatDistrib}.  Alternatively, it can be proven by counting the number of states with $n-x$ zeroes.  There are ${n \choose x}$ ways of choosing which sites to make non-zero and each non-zero site can be one of three possibilities: 1, 2 or 3.  The total number of states is $4^n-1$, which gives the result.
\end{proof}

Below we will prove the following theorem:
\begin{theorem}
\label{thm:ZeroChainMixing}
The zero chain mixes in time $\Theta(n \log \frac{n}{\eps})$.
\end{theorem}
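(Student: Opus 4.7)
The plan is to establish matching upper and lower bounds of $\Theta(n\log(n/\epsilon))$ on the mixing time of the zero chain.

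For the upper bound, I will apply the log-Sobolev mixing-time estimate \eq{SobolevMixingTime}. Since $\pi_*=\pi_0(1)=3n/(4^n-1)$, we have $\log\log(1/\pi_*)=O(\log n)$, and the state space has size $d=n$, so the estimate becomes $\tau(\epsilon)=O((1/\rho)\log n+(1/\Delta)\log(n/\epsilon))$. This yields the desired $O(n\log(n/\epsilon))$ provided that both $\Delta=\Omega(1/n)$ and $\rho=\Omega(1/n)$.

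The strategy for bounding $\Delta$ and $\rho$ from below is comparison (\thmref{Comparison}) with a reference birth-death chain $\hat P$ on $\{1,\dots,n\}$ that (i) shares the stationary distribution $\pi_0$, (ii) has both gap and log-Sobolev constant $\Theta(1/n)$, and (iii) has transition rates dominated by those of $P$ up to constants, so that \eq{ComparisonWalk} gives a comparison ratio $A=O(1)$. The naive reference---the Ehrenfest chain with $\hat P(x,x+1)=3(n-x)/(4n)$ and $\hat P(x,x-1)=x/(4n)$, which arises by lumping the product chain on $\{0,1,2,3\}^n$ and thus has gap and log-Sobolev $\Theta(1/n)$ via \lemref{ProductChain}---is $\Theta(n)$ times faster at the boundary $x=1$ than $P$ is, which gives only $\Delta,\rho=\Omega(1/n^2)$. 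To avoid this loss, I will instead use a Bernoulli--Laplace-type reference chain with $\hat P(x,x+1)\propto x(n-x)/n^2$ and corresponding downward rate, chosen so that the ratio $\hat P(a,a+1)/P(a,a+1)$ is uniformly bounded in $a\in\{1,\dots,n-1\}$. Such chains, arising as lumpings of random-transposition-style dynamics, have gap and log-Sobolev both $\Theta(1/n)$ by classical Diaconis--Saloff-Coste estimates, and \thmref{Comparison} then transfers $\Delta,\rho=\Omega(1/n)$ to the zero chain.

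For the lower bound, both the $\log(1/\epsilon)$ and $n\log n$ factors must be produced. The $\log(1/\epsilon)$ factor follows from the standard spectral-gap estimate $\tau(\epsilon)\geq(2\Delta)^{-1}\log(1/(2\epsilon))$, together with the upper bound $\Delta=O(1/n)$ obtained from the test function $f=\mathbf{1}_{\{1\}}$: its Rayleigh quotient simplifies to $P(1,2)/(1-\pi_0(1))=O(1/n)$. The $n\log n$ factor comes from a hitting-time argument. Start the chain at $X_0=1$; the per-step escape probability from state $x$ is $P(x,x+1)+P(x,x-1)=\Theta(x/n)$ for $x\leq 3n/4$, so the expected sojourn at $x$ is $\Theta(n/x)$, and the expected time to first reach the stationary bulk near $x\approx 3n/4$ is $\sum_{x=1}^{\Theta(n)}\Theta(n/x)=\Theta(n\log n)$, which forces $\tau=\Omega(n\log n)$.

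The main obstacle will be pinning down the log-Sobolev constant. The Ehrenfest reference, although matching $\pi_0$ and having log-Sobolev constant $\Theta(1/n)$, cannot be used because of the boundary slowdown, so one must verify that a Bernoulli--Laplace-type reference (or a closely related birth-death chain) really has $\rho=\Theta(1/n)$ with the same $\pi_0$, and that the one-dimensional comparison of \eq{ComparisonWalk} is uniform in $a$. Once this log-Sobolev input is in hand, the rest of the argument is a routine assembly of \eq{SobolevMixingTime} with the elementary gap and hitting-time bounds sketched above.
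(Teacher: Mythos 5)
Your upper-bound strategy is genuinely different from the paper's, and it contains a real gap. You correctly diagnose that the naive Ehrenfest comparison loses a factor of $n$ at the boundary $x=1$, and you propose to fix this by comparing to a birth--death reference $\hat P$ with $\hat P(x,x+1)\propto x(n-x)/n^2$ that matches $\pi_0$. But once you impose detailed balance against $\pi_0(x)\propto 3^x\binom{n}{x}$, the downward rate is forced to $\hat P(x,x-1)\propto x(x-1)/n^2$, and the resulting $\hat P$ is nothing but a constant time-rescaling of the zero chain $P$ itself (compare $P(x,x+1)=\tfrac{6x(n-x)}{5n(n-1)}$, $P(x,x-1)=\tfrac{2x(x-1)}{5n(n-1)}$). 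The comparison is therefore vacuous: to apply \thmref{Comparison} you would need to already know $\rho_{\hat P}=\Theta(1/n)$, which is the very quantity you set out to bound. The invoked ``classical Diaconis--Saloff-Coste estimates for lumpings of random-transposition dynamics'' do not apply here: the genuine Bernoulli--Laplace chain has hypergeometric (not binomial) stationary law, and the biased Ehrenfest chain has the right $\pi_0$ but boundary rate $\Theta(1/n)$ rather than $\Theta(x/n^2)$, which is exactly the mismatch you were trying to avoid. In short, there is no off-the-shelf reference chain with both the right $\pi_0$ and the right rates, so the key input $\rho_P=\Omega(1/n)$ is unproven.

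The paper sidesteps this obstruction entirely by not attempting a global log-Sobolev bound. It splits the walk into three phases: in $[1,n^\delta]$ and $[n^\delta/2,\theta n]$ it uses direct biased-random-walk arguments (the forward bias gives Chernoff-type concentration of the accelerated walk, plus geometric waiting-time bounds from \lemref{WaitingConc} and \lemref{NumberOfHits}); only in the bulk $[\theta n/2,n]$ does it invoke log-Sobolev (\lemref{MixingPhase3}), where the comparison ratio $5(n-1)/(8a)$ against the lumped product chain $Q_0$ \emph{is} $O(1)$ uniformly because $a\geq\theta n$. So the localization to the bulk is not a convenience but the device that makes the Ehrenfest comparison legitimate. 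If you want to pursue your cleaner single-step route, you would need to compute or bound the log-Sobolev constant of the zero chain from scratch (e.g.\ via its spectral decomposition in terms of the relevant dual orthogonal polynomials), which is a nontrivial undertaking and is not in the literature you cite. Your lower-bound sketch (hitting time from $x=1$ plus the spectral bound $\tau\gtrsim\Delta^{-1}\log(1/\eps)$) is fine in spirit and close to the paper's coupon-collector argument, though the step from ``expected hitting time $\Omega(n\log n)$'' to ``$\tau=\Omega(n\log n)$'' needs a concentration statement for the hitting time rather than just its mean.
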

We prove this using direct arguments about the convergence of the random walk.  However, we also include a less complex method that only bounds the gap:
\begin{theorem}
\label{thm:ZeroChainGap}
The zero chain has gap $\Omega(1/n)$.
\end{theorem}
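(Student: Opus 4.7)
My plan is to apply the length-function comparison theorem (\thmref{ComparisonLengthFunction}) with the trivially-mixing chain $\hat P(x,y)=\pi_0(y)$, which has gap~$1$. Since $P$ is a birth-death chain, the canonical path from $x$ to $y$ is forced to be the monotone nearest-neighbour path. Choosing the length function $l(b,b+1)=1/(\pi_0(b)P(b,b+1))$ normalises every edge so that $l(a,a+1)\pi_0(a)P(a,a+1)=1$, and swapping the order of summation reduces the congestion to
\begin{equation*}
A \le \max_a\left(G(a)\sum_{i\le a}\frac{F(i)}{\pi_0(i)P(i,i+1)} + F(a)\sum_{i>a}\frac{G(i)}{\pi_0(i)P(i,i+1)}\right),
\end{equation*}
where $F(i)=\sum_{x\le i}\pi_0(x)$ and $G(i)=1-F(i)$. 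Then $\mathrm{gap}\ge 1/A$.

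The main task is the combinatorial estimate $A=O(n)$. Using the identity $\pi_0(i)P(i,i+1)=\frac{6\cdot 3^i\binom{n-2}{i-1}}{5(4^n-1)}$ and the observation that $\pi_0$ is, up to removing the negligible state $0$, the $\mathrm{Bin}(n,3/4)$ distribution concentrated in a window of width $\sigma=\Theta(\sqrt n)$ about $3n/4$, I would split the sums into a bulk, Gaussian-tail and extreme-tail contribution. In the bulk $|i-3n/4|=O(\sqrt n)$ the Gaussian approximation gives $\pi_0(i)P(i,i+1)=\Theta(1/\sqrt n)$ and $F(i),G(i)=\Theta(1)$, so each term is $\Theta(\sqrt n)$ over $\Theta(\sqrt n)$ points, contributing $O(n)$. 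In the Gaussian tails at $k\sigma$ from the mode, the Mills-ratio estimate $F(i)/\pi_0(i)=O(\sigma/k)$, together with $P(i,i+1)=\Theta(1)$, gives a per-ring contribution $O(n/k^2)$, which sums over $k\ge 1$ to $O(n)$.

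The main obstacle is the extreme tail $i=O(\sqrt n)$, where $P(i,i+1)=\Theta(i/n)$ degenerates and a naive estimate would produce a spurious $\log n$ factor. The cure is to exploit the super-geometric growth of $\pi_0$ at its left edge---since $\pi_0(i)/\pi_0(i-1)=3(n-i+1)/i\gg 1$ there---to obtain the sharper bound $F(i)\le C\pi_0(i)$, which exactly cancels the small denominator $P(i,i+1)$; a symmetric estimate handles the right edge. If this direct estimate proves delicate, a fallback is to isolate the extreme-tail states via the decomposition theorem (\thmref{Decomposition}) and recombine. Assembling the three ranges gives $A=O(n)$, so \thmref{ComparisonLengthFunction} yields $\mathrm{gap}\ge 1/A=\Omega(1/n)$.
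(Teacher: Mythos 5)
Your main approach---one application of \thmref{ComparisonLengthFunction} with $\hat P(x,y)=\pi_0(y)$, the forced monotone paths, and $l(s,s+1)=1/(\pi_0(s)P(s,s+1))$---gives, with $F(i)=\sum_{x\le i}\pi_0(x)$ and $G=1-F$ as you define them,
\begin{equation*}
A_a = G(a)\sum_{s\le a}\frac{F(s)}{\pi_0(s)P(s,s+1)} + F(a)\sum_{s> a}\frac{G(s)}{\pi_0(s)P(s,s+1)},
\end{equation*}
and the claim would follow from $\max_a A_a = O(n)$. The fatal step is the left extreme tail, and your proposed cure is incorrect: the bound $F(i)\le C\pi_0(i)$ gives $F(i)/(\pi_0(i)P(i,i+1))\le C/P(i,i+1)=\frac{5Cn(n-1)}{6i(n-i)}=\Theta(n/i)$, which does \emph{not} cancel the small $P(i,i+1)$; and since also $F(i)\ge\pi_0(i)$, the ratio is $\Theta(n/i)$ in both directions. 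Summing over $1\le i\le n/2$ already gives $\Theta(n\log n)$, and taking $a$ near the median (so $G(a)=\Theta(1)$) forces $A=\Theta(n\log n)$, so this route delivers only $\Delta=\Omega(1/(n\log n))$. The problem is entirely the slow left end: your right-tail estimate is fine, since there $G(s)/(\pi_0(s)P(s,s+1))=\Theta(1)$ because the moving probability $1-P(s,s)$ is $\Theta(1)$ rather than $\Theta(s/n)$.

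What is genuinely needed is your stated fallback. The paper decomposes at $m=\theta n$ (\thmref{Decomposition}): on the tail piece $P_1$ it runs \thmref{ComparisonLengthFunction} with a \emph{geometrically decaying} length $l(s,s+1)=r^s$, $0<r<1$, which weights the first edge so heavily that the path-length sum telescopes to $A_z\le\frac{5n(n-1)}{6z(n-z)(1-r)}h_z(r)$ with $h_z(r)=O(1)$, maximised at $z=1$ to give $A=O(n)$ (\lemref{GapP1}); the bulk piece $P_2$ is handled by a product-chain/log-Sobolev comparison (\lemref{MixingPhase3}); and the two gaps are combined via the $\Omega(1)$ gap of the two-state projected chain $\bar P$ (\lemref{GapPBar}). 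The decomposition is precisely what permits two incompatible length functions on the geometric-tail and Gaussian-bulk regions; the single normalisation $l=1/c$ you chose is intrinsically off by a $\log n$.
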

This only implies the mixing time is $O(n(n+\log 1/\eps))$ which is weaker than \thmref{ZeroChainMixing}, although still sufficient to prove our main result \thmref{Main2Design}, using a modification of \corref{FullChainMixing} to show that the full chain mixing time is $O(n(n+\log 1/\eps))$.

Knowing the gap allows us to easily work out the 2-norm mixing time:
\begin{theorem}
\label{thm:ZeroChainMixing2Norm}
The zero chain has 2-norm mixing time $O(n \log 1/\eps)$.
\end{theorem}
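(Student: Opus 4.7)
The plan is to deduce this directly from \thmref{ZeroChainGap} together with the general bound relating the gap of a reversible chain to its 2-norm mixing time, namely \eq{2normMixing}.

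First I would observe that the zero chain is reversible with respect to its stationary distribution $\pi_0(x) = 3^x\binom{n}{x}/(4^n-1)$ given in \lemref{ZeroStatDistrib}. This is a quick check of detailed balance \peq{DetailedBalance}: one just verifies that $\pi_0(x) P(x,x+1) = \pi_0(x+1) P(x+1,x)$ using the transition probabilities from \lemref{ZeroChainTransitionMatrix}, which amounts to the identity $3^x \binom{n}{x} \cdot 6x(n-x) = 3^{x+1} \binom{n}{x+1} \cdot 2(x+1)x$, and this reduces to $6 = 3 \cdot 2$.

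Since the chain is reversible we have $P = P^*$, so $PP^* = P^2$, and the spectral gap of $P^2$ is $1 - \max(\lambda_2^2, \lambda_{\min}^2)$, which factors as $(1-\lambda)(1+\lambda)$ for the dominant $\lambda$. As noted in the discussion following \thmref{MixingTimeGap}, this gives $\Delta_{PP^*} \ge \Delta$ (and indeed $\approx 2\Delta$ when $\Delta$ is small), where $\Delta$ is the gap of $P$ itself. By \thmref{ZeroChainGap} we have $\Delta = \Omega(1/n)$, hence $\Delta_{PP^*} = \Omega(1/n)$.

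Plugging into \eq{2normMixing},
\begin{equation*}
\tau_2(\eps) \le \frac{2}{\Delta_{PP^*}} \ln \frac{1}{\eps} = O(n \log 1/\eps),
\end{equation*}
which is the claimed bound. There is no real obstacle here; the theorem is a direct corollary of the spectral gap bound already established in \thmref{ZeroChainGap}, and the only thing to check beyond that is reversibility, which follows immediately from the explicit form of $\pi_0$ and $P$.
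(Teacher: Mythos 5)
Your proposal is correct and is precisely the paper's own route: the paper's proof of this theorem reads, in its entirety, "Use the bound on the gap in \thmref{ZeroChainGap} and \eq{2normMixing}." You have simply spelled out the details — verifying detailed balance for $\pi_0$, noting that reversibility gives $\Delta_{PP^*} \ge \Delta_P$ (a point the paper already flags just before \eq{2normMixing}) — and then plugged $\Delta = \Omega(1/n)$ into \eq{2normMixing} exactly as intended.
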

\begin{proof}
Use the bound on the gap in \thmref{ZeroChainGap} and \eq{2normMixing}.
\end{proof}
Before proving \thmref{ZeroChainMixing}, we will show how the mixing time of the full chain follows from this.

\begin{corollary}
\label{cor:FullChainMixing}
The full chain mixes in time $\Theta(n \log \frac{n}{\eps})$.
\end{corollary}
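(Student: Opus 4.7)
The lower bound is immediate from the data-processing inequality: the map $\phi(p) := |\{i : p_i \ne 0\}|$ is a deterministic coarse-graining whose pushforward is the zero chain, so $\vectornorm{\cP_t - \pi} \ge \vectornorm{\phi_*\cP_t - \phi_*\pi}$ (where $\cP_t$ is the distribution of the full chain after $t$ steps) and the full chain cannot mix faster than the zero chain, which is $\Theta(n\log\frac{n}{\eps})$ by \thmref{ZeroChainMixing}.

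For the upper bound I would apply the TV decomposition
\bes
\vectornorm{\cP_t - \pi} \le \vectornorm{\phi_*\cP_t - \phi_*\pi} + \max_x \vectornorm{\cP_t(\cdot \mid \phi=x) - \pi(\cdot \mid \phi=x)},
\ees
bound the first term by $\eps/2$ for $t = O(n\log\frac{n}{\eps})$ via \thmref{ZeroChainMixing}, and control the second by exploiting symmetry. The transition matrix $P$ commutes with the group $G = S_n \ltimes S_3^n$ that permutes coordinates and independently relabels the three non-identity Pauli values at each site, because $\hat{G}^{(ij)}$ sends any non-$00$ input to the uniform distribution on the fifteen non-$00$ two-site states. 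The orbits of $G$ on $\Omega \setminus \{0^n\}$ are precisely the level sets $\Omega_x = \phi^{-1}(x)$, so $\pi(\cdot \mid \phi=x)$ is uniform on $\Omega_x$ and the conditional TV distance measures how non-$G$-symmetric $\cP_t$ still is within each level set.

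To quantify this, I would run a synchronous coupling of two copies of the chain started from $p_0$ and $g \cdot p_0$ for an adversarial $g \in G$, with both copies using the same random pair $(i,j)$ and the same outcome of $\hat{G}^{(ij)}$ at each step. Whenever a pair is touched, the two copies agree at those positions up to the residual $G$-action, and a coupon-collector-type argument on the set of touched pairs shows that within $O(n\log\frac{n}{\eps})$ steps every coordinate has been freshly resampled with probability $1 - \eps/2$, forcing the coupled chains to agree and hence $\cP_t(\cdot \mid \phi=x)$ to lie within $\eps/2$ of uniform.

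The main obstacle is making this coupling rigorous while keeping $\phi$ matched in both copies: a single touched pair can transition between level sets differently in the two copies when the incident values differ, so the coupling must be designed to preserve the joint $\phi$-trajectory. A cleaner alternative would be to establish $\rho = \Omega(1/n)$ for the log-Sobolev constant of the full chain---for instance via the comparison theorem for $\rho$ noted after \eqref{eq:SobolevMixingTime}, compared against a product-type chain on $\{0,1,2,3\}^n$ whose $\rho$ is $\Omega(1/n)$ from \lemref{ProductChain}---and then invoke \eqref{eq:SobolevMixingTime}, which gives mixing in $O(n(\log\log 4^n + \log(1/\eps))) = O(n\log\frac{n}{\eps})$.
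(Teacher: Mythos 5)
Your lower bound via data-processing is correct and cleanly stated, and your triangle-inequality decomposition of the total-variation distance into a zero-chain term plus a conditional-within-level-set term is exactly the right way to frame the upper bound; the paper's proof implicitly performs the same split. The symmetry observation is also sound: $P$ does commute with $G = S_n \ltimes S_3^n$ because each local update $\hat{G}^{(ij)}$ sends any non-$00$ pair state to the uniform distribution on the fifteen non-$00$ pair states, so the orbits of $G$ on $\Omega\setminus\{0^n\}$ are the level sets of $\phi$ and $\pi(\cdot\mid\phi=x)$ is uniform.

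But you have correctly diagnosed the fatal obstacle in the coupling and then left it open. The synchronous coupling of $p_0$ and $g\cdot p_0$ fails as soon as $g$ involves a non-trivial coordinate permutation: the two copies have their non-zero sites in different positions, so when the shared pair $(i,j)$ is drawn, one copy may see $00$ (and hold) while the other sees a non-$00$ state (and resamples), after which the $\phi$-trajectories desynchronise and the whole decomposition collapses. Restricting $g$ to $S_3^n$ fixes the alignment but only covers the orbit with a fixed support pattern, so you would establish uniformity of Pauli labels but not of positions, which is not enough. Your fallback --- a log-Sobolev comparison of the full chain on $\Omega\setminus\{0^n\}$ with a restricted single-site resampling chain --- is plausible (the stationary distributions can be matched by excising $0^n$ as in the phase-3 analysis, and single-site moves can be simulated by pair moves that fix the partner's value, provided at least one of the two sites is non-zero along the path), but it requires building the comparison paths and verifying the congestion constant, none of which is in your sketch. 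The paper instead avoids two-copy coupling entirely: it first runs $t_0 = O(n\log\frac{n}{\eps'})$ steps so the zero chain is approximately stationary, observes that from then on a constant fraction of sites are non-zero so each site is ``hit'' (paired with a non-$00$ partner) with probability $\Theta(1/n)$ per step, runs a coupon-collector argument to get all sites hit within $O(n\log n)$ more steps except with $1/\poly(n)$ probability, uses the event ``all sites hit'' as a (near) stationary time to conclude $\|v-\pi\|\le 1/\poly(n)$, and finally boosts to $\eps$ with \lemref{DistanceToStat}. Your sketch is missing both the ``first equilibrate the zero chain so the hit rate is $\Theta(1/n)$'' step and any completed mechanism for controlling the conditional distance once the marginal has mixed.
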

\begin{proof}
Once the zero chain has approximately mixed, the distribution of zeroes is almost correct.  We need to prove that the distribution of non-zeroes is correct after $O(n \log \frac{n}{\eps})$ steps too.

Once each site of the full chain has been hit, meaning it is chosen and paired with another site so not both equal zero, the chain has mixed.  This is because, after each site has been hit, the probability distribution over the states is uniform.  When the zero chain has approximately mixed, a constant fraction of sites are zero so the probability of hitting a site at each step is $\Theta(1/n)$.  By the coupon collector argument, each site will have been hit with probability at least $1-\eps$ in time time $O(n \log \frac{n}{\eps})$.  Once the zero chain has mixed to $\eps'$, we can run the full chain this extra number of steps to ensure each site has been hit with high probability.  Since the mixing of the zero chain only increases with time, the distance to stationarity of the full chain is now $1-\eps-\eps'$.  We make this formal below.

After $t_0 = O(n \log \frac{n}{\eps'})$ steps, the number of zeroes is $\eps'$-close to the stationary distribution $\pi_0$ by \thmref{ZeroChainMixing} and only gets closer with more steps since the distance to stationarity decreases monotonically.  The stationary distribution \eq{ZeroStatDistrib} is approximately a Gaussian peaked at $3n/4$ with $O(n)$ variance.  This means that, with high probability, the number of non-zeroes is close to $3n/4$.  We will in fact only need that there is at least a constant fraction of non-zeroes; with probability at least $1-\eps'-\exp(-\Omega(n))$ there will be at least $n/2$.

To prove the mixing time, we run the chain for time $t_0$ so the zero chain mixes to $\eps'$.  Then run for $t_1$ additional steps.  Let $H_{i, t}$ be the event that site $i$ is hit at step $t$.  Let $H_i = \cup_{t = t_0+1}^{t_0 +t_1} H_{i, t}$ and $H = \cap_{i = 1}^n H_i$.  We want to show $\Pr(H)$ is close to 1, or, in other words, that all sites are hit with high probability.  Further let $X_t$ be the random variable giving the number of non-zeroes at step $t$.

If at step $t-1$ site $i$ is non-zero then the event $H_{i, t}$ occurs if the qubit is chosen, which occurs with probability $2/n$.  If, however, it was zero then it must be paired with a non-zero thing for $H_{i, t}$ to hold.  Conditioned on any history with $X_{t-1} \ge n/2$, this probability is $\ge 1/n$.  In particular, we can condition on not having previously hit $i$ and the bound does not change.  Combining we have
\begin{equation*}
\Pr\left(H_{i, t}^c \bigg| \left[ X_{t-1} \ge n/2 \right] \bigcap \left( \bigcap_{t' = t_0+1}^{t-1} H_{i, t'}^c \right)\right) \le 1 - 1/n.
\end{equation*}
Then, after $t_1$ extra steps,
\begin{equation*}
\Pr\left(H_i^c \bigg| \bigcap_{t = t_0}^{t_0+t_1-1} \left[ X_{t} \ge n/2 \right] \right) \le (1 - 1/n)^{t_1}
\end{equation*}
which, using the union bound, gives
\begin{equation*}
\Pr\left(H^c \bigg| \bigcap_{t = t_0}^{t_0+t_1-1} \left[ X_{t} \ge n/2 \right] \right) \le n(1 - 1/n)^{t_1}.
\end{equation*}
Now, since the zero chain has mixed to $\eps'$,
\begin{equation*}
\Pr\left(\overline{\bigcap_{t = t_0}^{t_0+t_1-1} \left[ X_{t} \ge n/2 \right]}\right) \le t_1 \l(\sum_{x=n/2}^{n-1} \pi_0(x) + \eps'\r)  \le t_1 \l(\exp(-O(n)) + \eps'\r)
\end{equation*}
so
\begin{equation*}
\Pr(H^c) \le n(1 - 1/n)^{t_1} + t_1 \l(\exp(-O(n)) + \eps'\r).
\end{equation*}
Now, choose $t_1 = n \ln \frac{2n}{\eps}$ so that $\Pr(H^c) \le \delta$ where $\delta = \eps + t_1 \l(\exp(-O(n)) + \eps'\r)$.  Choose $\eps = 1/n$ and $\eps' = 1/n^3$ so that $\delta$ is $1/\poly(n)$.  Now, using the bound on $\Pr(H^c)$, we can write the state $v$ after $t_1 = O(n \log n)$ steps as
\begin{equation*}
v = (1-\delta) \pi + \delta \pi'
\end{equation*}
where $\pi$ is the stationary distribution and $\pi'$ is any other distribution.  Using this,
\begin{equation*}
|| v- \pi || \le \delta.
\end{equation*}
We now apply \lemref{DistanceToStat} to show that after $O(n \log \frac{n}{\eps})$ steps the distance to stationarity of the full chain is $\eps$.
\end{proof}

\subsection{Proof of Theorem 3.6.4} 

We will now proceed to prove \thmref{ZeroChainMixing}.  We present an outline of the proof here; the details are in \secref{ZeroChainMixingProof}.

Firstly, note that by the coupon collector argument, the lower bound on the time is $\Omega (n \log n)$.  We need to prove an upper bound equal to this.  Intuition says that the mixing time should take time $O(n \log n)$ because the walk has to move a distance $\Theta(n)$ and the waiting time at each step is proportional to $n, n/2, n/3, \ldots$ which sums to $O(n \log n)$, provided each site is not hit too often.   We will show that this intuition is correct using Chernoff bound and log-Sobolev (see later) arguments.

We will first work out concentration results of the position after some number of \emph{accelerated} steps.  The zero chain has some probability of staying still at each step. The accelerated chain is the zero chain conditioned on moving at each step.  We define the accelerated chain by its transition matrix:
\begin{definition}
The transition matrix for the accelerated chain is
\begin{equation}
P_a(x, y) =
\begin{cases}
  0  & y = x \\
  \frac{x-1}{3n-2x-1} & y = x-1 \\
  \frac{3(n-x)}{3n-2x-1} & y = x+1 \\
  0 & \rm{otherwise}.
\end{cases}
\end{equation}
\end{definition}
We use the accelerated chain in the proof to firstly prove the accelerated chain mixes quickly, then to bound the waiting time at each step to obtain a mixing time bound for the zero chain.
  
To prove the mixing time bound, we will split the walk up into three phases.  We will split the state space into three (slightly overlapping) parts and the phase can begin at any point within that space. So each phase has a state space $\Omega_i \subset [1, n]$, an entry space $E_i \subset \Omega_i$ and an exit condition $T_i$.  We say that a phase completes successfully if the exit condition is satisfied in time $O(n \log n)$ for an initial state within the entry space.  When the exit condition is satisfied, the walk moves onto the next phase.

The phases are:
\begin{enumerate}
\item{$\Omega_1 = [1, n^\delta]$ for some constant $\delta$ with $0 < \delta < 1/2$.  $E_1 = \Omega_1$ (i.e.~it can start anywhere) and $T_1$ is satisfied when the walk reaches $n^\delta$.  For this part, the probability of moving backwards (gaining zeroes) is $O(n^{\delta -1})$ so the walk progresses forwards at each step with high probability.  This is proven in \lemref{Phase1MovesRight}.  We show that the waiting time is $O(n \log n)$ in \lemref{Phase1WaitingTime}.}
\item{$\Omega_2 = [n^\delta/2, \theta n]$ for some constant $\theta$ with $0 < \theta < 3/4$.  $E_2 = [n^\delta, \theta n]$ and $T_2$ is satisfied when the walk reaches $\theta n$.  Here the walk can move both ways with constant probability but there is a $\Omega(1)$ forward bias.  Here we use a monotonicity argument: the probability of moving forward at each step is
\begin{align*}
p(x) &= \frac{3(n-x)}{3n-2x-1} \\
& \ge \frac{3(n-x)}{3n-2x} \\
& \ge \frac{3(1-\theta)}{3-2\theta}.
\end{align*}
If we model this random walk as a walk with constant bias equal to $\frac{3(1-\theta)}{3-2\theta}$ we will find an upper bound on the mixing time since mixing time increases monotonically with decreasing bias.  Further, the waiting time at $x=a$ stochastically dominates the waiting time at $x=b$ for $b \ge a$.  The true bias decreases with position so the walk with constant bias spends more time at the early steps.  Thus the position of this simplified walk is stochastically dominated by the position of the real walk while the waiting time stochastically dominates the waiting time of the real walk.}
\item{$\Omega_3 = [\frac{\theta}{2} n, n]$ and $E_3 = [\theta n, n]$.  $T_3$ is satisfied when this restricted part of the chain has mixed to distance $\eps$.  Here the bias decreases to zero as the walk approaches $3n/4$ but the moving probability is a constant.  We show that this walk mixes quickly by bounding the log-Sobolev constant of the chain.}
\end{enumerate}
Showing these three phases complete successfully will give a mixing time bound for the whole chain.

We now prove in \secref{RandomCircuitsProofs} that the phases complete successfully with probability at least $1-1/\poly(n)$:
\begin{lemma}
\label{lem:Phase1Completes}
\begin{equation*}
\Pr(\text{\rm{Phase 1 completes successfully}}) \ge 1 - n^{2\delta-1} - 2 n^{-\delta}
\end{equation*}
\end{lemma}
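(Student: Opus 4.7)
The plan is to decompose the ``successfully completes'' event into two simpler sub-events and then apply a union bound using \lemref{Phase1MovesRight} and \lemref{Phase1WaitingTime} (which will already have been established just before this lemma). Phase 1 completes successfully if both (a) the accelerated walk progresses from its starting point in $[1,n^\delta]$ all the way up to $n^\delta$ without ever stepping left, and (b) the total waiting time accumulated over at most $n^\delta$ such forward steps is $O(n\log n)$. If both hold, the accelerated walk needs at most $n^\delta$ moves to hit $n^\delta$ and the elapsed real time is $O(n\log n)$, so the exit condition $T_1$ is met within the time budget.

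For (a), at a position $x\le n^\delta$ the accelerated chain moves left with probability
\bes
\frac{x-1}{3n-2x-1}\le \frac{n^\delta}{3n-2n^\delta-1}=O(n^{\delta-1}).
\ees
Starting anywhere in $E_1=[1,n^\delta]$ we need at most $n^\delta$ moves to reach $n^\delta$, so by a union bound over these moves the probability that the walk ever steps left during Phase 1 is at most $n^\delta\cdot O(n^{\delta-1})=O(n^{2\delta-1})$. This is exactly the content of \lemref{Phase1MovesRight}, and accounts for the $n^{2\delta-1}$ term in the bound.

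For (b), \lemref{Phase1WaitingTime} states that the total waiting time in Phase 1 is $O(n\log n)$ except with probability $2n^{-\delta}$. The argument there is a Chernoff-type concentration estimate: at each of the (at most) $n^\delta$ visited positions $x$, the holding time is geometric with success probability $\Theta(x/n)$, so the expected total waiting time is $\sum_{x=1}^{n^\delta}\Theta(n/x)=\Theta(n\log n)$, and standard tail bounds on sums of independent geometric random variables give a concentration window of the claimed form. Assuming this lemma, nothing further is required here.

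Combining (a) and (b) by the union bound yields
\bes
\Pr(\text{Phase 1 completes successfully})\ge 1-n^{2\delta-1}-2n^{-\delta},
\ees
as claimed. The only conceptual step is realising that success factors cleanly into ``never move left'' and ``not too much delay'' — since these are handled independently by the two cited lemmas, no additional work is needed and the main obstacle will have been packaged into \lemref{Phase1WaitingTime}, where the waiting-time concentration actually has to be proved.
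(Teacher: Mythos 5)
Your proof is essentially the paper's proof, which also decomposes success into the event $H$ that the walk moves right at every step (controlled by \lemref{Phase1MovesRight}) and the event that the total waiting time $W^{(1)}$ is at most $3t'$ (controlled by \lemref{Phase1WaitingTime} with $C=3$). One small imprecision: \lemref{Phase1WaitingTime} gives a bound \emph{conditional on $H$} (this is what lets one assume each site is hit exactly once, so the total waiting time is a sum of independent geometrics), not an unconditional tail bound, so the correct combination is $\Pr(H\cap\{W^{(1)}\le 3t'\})=\Pr(H)\Pr(W^{(1)}\le 3t'\mid H)\ge(1-n^{2\delta-1})(1-2n^{-\delta})$ rather than a naive union bound over two unconditional events; this yields the same final inequality, so the conclusion is unaffected.
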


\begin{lemma}
\label{lem:Phase2Completes}
\begin{multline*}
\Pr(\text{\rm{Phase 2 completes successfully}}) \ge \\ 1 - \exp\left(-\frac{2}{3} \mu \theta n\right) - \left(\frac{4}{\theta n}\right)^\frac{3}{2\mu} - \frac{2 \exp\left(\frac{-\mu n^\delta}{4}\right)}{1-\exp(-\mu/2)} - \left(q/p \right)^{n^\delta/2}
\end{multline*}
where $\mu = \frac{6(1-\theta)}{3-2\theta} -1$.
\end{lemma}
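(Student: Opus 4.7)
The plan is to execute the stochastic-domination strategy outlined just above the lemma. Let $\tilde Y$ denote the constant-bias accelerated walk on $\mathbb Z$ with forward probability $p=\frac{3(1-\theta)}{3-2\theta}$ and backward probability $q=1-p$, so that $p-q=\mu$. A step-by-step coupling with the accelerated zero chain $Y$ gives $Y_k\ge \tilde Y_k$ for every accelerated step while both walks lie in $\Omega_2$, since the true forward probability $\frac{3(n-x)}{3n-2x-1}$ dominates $p$ throughout $\Omega_2$. A parallel coupling makes the waiting time of the real chain at position $x$ stochastically dominated by a geometric variable whose mean is monotone in $x$. Together these reductions let me argue entirely about $\tilde Y$: any upper bound on its exit time to $\theta n$, and any upper bound on its left-exit probability, transfers to the real chain.

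Second, I would split the failure event into three disjoint pieces and union-bound them. \textbf{(a) Left exit.} The walk hits $n^\delta/2$ before $\theta n$. A standard gambler's-ruin calculation for a walk with forward bias $p$ starting at $n^\delta$ gives probability at most $(q/p)^{n^\delta/2}$, matching the final error term. \textbf{(b) Too few forward accelerated steps.} Fix $N=\Theta(\theta n/\mu)$ and apply a one-sided Hoeffding/Chernoff inequality to the sum $\tilde Y_N-\tilde Y_0$ of i.i.d.\ $\pm1$ increments of mean $\mu$; tuning the constant in $N$ produces the $\exp(-\frac{2}{3}\mu\theta n)$ term. \textbf{(c) Total wait too large.} Show that the waiting time summed over these $N$ accelerated steps is $O(n\log n)$ except on an event whose probability is given by the two remaining error terms.

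For piece (c), I would use that the wait at position $x$ is $\mathrm{Geom}(\Theta(x/n))$ while the local time of the biased walk at $x$ is itself geometric with parameter $\Theta(\mu)$, so the total wait at $x$ has mean $\Theta(n/(\mu x))$ and an exponential upper tail. Summing over $x\in[n^\delta/2,\theta n]$ gives total mean $O(n\log n/\mu)$. A union bound over positions then produces the two remaining error terms: the geometric series $\sum_{x\ge n^\delta/2}\exp(-\Omega(\mu x))$ collapses to $\frac{2\exp(-\mu n^\delta/4)}{1-\exp(-\mu/2)}$, while the bottleneck positions $x\approx n^\delta$, where each visit costs $\Theta(n^{1-\delta})$, contribute the polynomial tail $(4/(\theta n))^{3/(2\mu)}$ coming from a moment bound on the local time at that single scale.

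The main obstacle is step (c): the walk's trajectory and its waits are dependent, and the rare but expensive excursions down to positions near $n^\delta$ dominate the variance of the total time, so a careful scale-by-scale treatment of the waiting-time contributions is needed to avoid losing polynomial factors of $n$. This is where the slightly unusual polynomial-in-$n$ term $(4/(\theta n))^{3/(2\mu)}$ in the lemma is produced. Once (a), (b), (c) are each controlled, a single union bound gives the stated bound on the phase-2 failure probability.
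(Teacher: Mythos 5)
Your decomposition of the phase-2 failure event into (a) left exit, (b) insufficient rightward progress, and (c) excessive waiting time matches the paper's proof exactly, and you correctly attribute each of the four error terms to its source: (a) gives $(q/p)^{n^\delta/2}$ via gambler's ruin (\lemref{NotBackwards}); (b) gives $\exp(-\tfrac{2}{3}\mu\theta n)$ via a Chernoff bound for a walk whose bias is bounded below (\lemref{Phase2GetsToRightPlace}, which uses \lemref{BiasedWalkPositionWithBoundedBias}, with $t=3\theta n/\mu$ and $\gamma=3$); and the stochastic-domination reduction to the constant-bias walk is precisely the paper's set-up. Pieces (a) and (b) are handled correctly.

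The gap is in piece (c). You attribute the polynomial term $(4/(\theta n))^{3/(2\mu)}$ to ``a moment bound on the local time at that single scale'' $x\approx n^\delta$ together with a per-position union bound. This would not close. The local time at any fixed position $x$ is roughly geometric with parameter $\Theta(\mu)$, so all of its fixed moments are $O(1)$ and cannot produce a quantity decaying polynomially in $n$. More to the point, a per-position union bound on the \emph{waiting-time} events fails: the total wait at a position $x$ is a sum of $\Theta(1/\mu)$ geometrics of mean $\Theta(n/x)$, so $\Pr(W_x\geq C\,\bbE W_x)$ is a constant independent of $n$ and $x$ for fixed $C$, and summing over the $\Theta(n)$ positions in $[n^\delta/2,\theta n]$ gives nothing useful. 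The paper's \lemref{Phase2WaitingTime} instead proceeds in three stages: (i) condition on the local-time control event $H$ that the cumulative number of visits to positions $\le x$ is at most $x/\tilde\mu$ for all $x\ge n^\delta/2$, bounding $\Pr(H^c)$ by a union bound and \lemref{NumberOfHits}, which is where the geometric-series term $\frac{2\exp(-\mu n^\delta/4)}{1-\exp(-\mu/2)}$ comes from, as you correctly guess; (ii) on $H$, use a majorization/stochastic-domination argument to replace the actual trajectory by a fixed worst-case configuration $\mathbf{X}_0$ visiting each position up to $s=\theta n$ exactly $1/\tilde\mu$ times; (iii) bound the total waiting time of $\mathbf{X}_0$ via the moment-generating-function estimate of \lemref{WaitingConc}, $\bbE\, e^{\lambda W}\le 2\sqrt{s}$, followed by Markov's inequality. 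The resulting tail $(4/s)^{3/(2\mu)}$ is an aggregated bound on the whole waiting time arising from the interaction of the $\sqrt{s}$ MGF bound with the $\log s$ factor in the threshold, not a local-time moment at $n^\delta$. Some version of this conditioning-plus-MGF argument is needed to make step (c) go through.
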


\begin{lemma}
\label{lem:Phase3Completes}
\begin{equation*}
\Pr(\text{\rm{Phase 3 completes successfully}})  \ge 1 - \left( \frac{\theta}{3(2-\theta)} \right)^{\theta n/2}
\end{equation*}
\end{lemma}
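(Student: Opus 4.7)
The plan is to decompose ``Phase 3 completes successfully'' into two pieces: (a) the walk stays in $\Omega_3 = [\theta n/2, n]$ throughout a window $T = O(n\log n)$, and (b) the chain restricted to $\Omega_3$ mixes to within $\eps$ in that many steps. Piece (b) will be deterministic, coming from a log-Sobolev constant bound; the stated probability $(\theta/(3(2-\theta)))^{\theta n/2}$ will be the failure probability of (a), i.e.~the probability that the walk exits $\Omega_3$ through the lower boundary $\theta n/2$. Exit through the upper end is impossible since $n$ is the maximum state.

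For (a) I would use an exponential supermartingale. Set $c := \theta/(3(2-\theta))$ and consider $Y_t := c^{X_t}$. Since $q(x)/p(x) = (x-1)/(3(n-x))$ is strictly increasing in $x$ and equals $c$ up to $O(1/n)$ at $x = \theta n/2$, we have $c \le q(x)/p(x)$ for every $x \in \Omega_3$ (if one wants to be punctilious at the boundary, take $c$ to be exactly $q(\lceil\theta n/2\rceil)/p(\lceil\theta n/2\rceil)$ and absorb the $O(1/n)$ slack later). A direct computation using the transition probabilities of \lemref{ZeroChainTransitionMatrix} gives
\bes
\mathbb{E}[Y_{t+1}\mid X_t = x] = c^x\l[p(x)c + q(x)/c + (1 - p(x) - q(x))\r] \le c^x,
\ees
so $Y_t$ is a supermartingale on $\Omega_3$. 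Applying Doob's optional stopping to $\tau := \tau_{\theta n/2} \wedge T$, using $X_0 \ge \theta n$ (entry space $E_3 = [\theta n, n]$) and that $c^{X_\tau} \ge 0$ on $\{\tau_{\theta n/2} > T\}$, yields $c^{\theta n/2}\Pr(\tau_{\theta n/2} \le T) \le \mathbb{E}[c^{X_0}] \le c^{\theta n}$, so $\Pr(\tau_{\theta n/2} \le T) \le c^{\theta n/2} = (\theta/(3(2-\theta)))^{\theta n/2}$, as required.

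For (b) I would bound the log-Sobolev constant of the restricted zero chain by $\Omega(1/n)$. The restricted chain has $\Theta(1)$ moving probability everywhere in $\Omega_3$ and its stationary distribution (the restriction of $\pi_0$ from \lemref{ZeroStatDistrib}) is peaked near $3n/4$ with width $\Theta(\sqrt n)$. A clean route is to decompose the underlying full walk into $n$ independent two-state (zero/non-zero) bit chains and invoke \lemref{ProductChain}: each bit chain has $\Omega(1)$ log-Sobolev constant, so the product has log-Sobolev constant $\Omega(1/n)$; the restricted zero chain is then handled by the comparison theorem (\thmref{Comparison}) applied to the log-Sobolev constant. Combined with $\log(1/\pi_*) = O(n)$, \eq{SobolevMixingTime} gives a deterministic mixing bound of $O(n\log n)$ for (b).

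The main obstacle is the log-Sobolev step: a plain spectral-gap comparison gives only an $\Omega(1/n)$ gap and hence $O(n^2)$ mixing, whereas we need the stronger log-Sobolev estimate (combined with the product-chain decomposition) to reach $O(n\log n)$. The supermartingale part, by contrast, is a routine gambler's-ruin calculation once the correct value of $c$ is identified.
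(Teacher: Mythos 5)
Your decomposition matches the paper's own (which combines \lemref{NotBackwards} for the excursion with \lemref{MixingPhase3} for the mixing), and your treatment of part~(b) is the same product-chain/log-Sobolev comparison used in \lemref{MixingPhase3}. The problem is in the supermartingale step in part~(a), where the direction is reversed. The condition for $Y_t = c^{X_t}$ with $0<c<1$ to be a supermartingale is $p(x)c + q(x)/c \le p(x) + q(x)$, which is equivalent to $c \ge q(x)/p(x)$, not $c \le q(x)/p(x)$: for $q<p$ the convex function $f(c)=pc+q/c$ takes the value $p+q$ at both $c=q/p$ and $c=1$ and is below $p+q$ only on $[q/p,1]$, so $f(c)>p+q$ whenever $0<c<q/p$. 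You picked $c = \theta/(3(2-\theta))$, which is the \emph{minimum} of $q(x)/p(x)$ over $\Omega_3$ (attained at $x=\theta n/2$), so $c \le q(x)/p(x)$ throughout and $Y_t$ is a strict \emph{sub}martingale; optional stopping then yields $\mathbb{E}[Y_\tau]\ge\mathbb{E}[Y_0]$, which is the wrong direction for the bound.

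Reversing the inequality does not fix the argument as written, because there is no constant $c<1$ with $c\ge q(x)/p(x)$ on all of $\Omega_3=[\theta n/2,n]$: the ratio $(x-1)/(3(n-x))$ crosses $1$ near $x=3n/4$ and diverges as $x\to n$. A working version should instead use that any downward excursion from $E_3=[\theta n,n]$ to $\theta n/2$ must pass through the window $[\theta n/2,\theta n]$, on which $q/p$ is uniformly at most its value at the right endpoint, $q(\theta n)/p(\theta n)\approx\theta/(3(1-\theta))<1$; the supermartingale then controls each such crossing, with a union bound over the $O(n\log n)$ possible entries into the window. This gives a bound of order $\poly(n)\cdot(\theta/(3(1-\theta)))^{\theta n/2}$ --- note the $1-\theta$ rather than the $2-\theta$ appearing in the lemma statement, which corresponds to the \emph{lower} endpoint of the window and is not what a gambler's-ruin argument produces in the obvious way. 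Since the lemma is used downstream only for the fact that the failure probability is $e^{-\Omega(n)}$, this discrepancy in constants is harmless, but the exact bound as stated is harder to justify than the paper's one-line citation suggests.
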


We can now finally combine to prove our result:
\begin{proof}[Proof of \thmref{ZeroChainMixing}]
The stationary distribution has exponentially small weight in the tail with lots of zeroes.  We show that, provided the number of zeroes is within phase 3, the walk mixes in time $O(n \log \frac{n}{\eps})$.  We also show that if the number of zeroes is initially within phase 1 or 2, after $O(n \log n)$ steps the walk is in phase 3 with high probability.  We can work out the distance to the stationary distribution as follows.

Let $p_f$ be the probability of failure.  This is the sum of the error probabilities in Lemmas \ref{lem:Phase1Completes}, \ref{lem:Phase2Completes} and \ref{lem:Phase3Completes}.  The key point is that $p_f = 1/\poly(n)$.  Then after $O(n \log \frac{n}{\eps})$ steps (the sum of the number of steps in the 3 phases), the state is equal to $(1-p_f)v_3 + p_f v'$ where $v_3$ is the state in the phase 3 space and $v'$ is any other distribution, which occurs if any one of the phases fails.  Since the distance to stationarity in phase 3 is $\eps$, $|| v_3 - \pi_3 || \le \eps$, where $\pi_3$ is the stationary distribution on the state space of phase 3.  In \lemref{MixingPhase3} we show that $\pi_3(x) = \pi(x)/(1-w)$ where $w = \sum_{x=1}^{\theta n / 2 -1} \pi(x)$.  Since $\pi(x)$ is exponentially small in this range, $w$ is exponentially small in $n$.  Now use the triangle inequality to find
\begin{equation}
||v_3 - \pi|| \le ||v_3 - \pi_3|| + ||\pi_3 - \pi||.
\end{equation}
Since the chain in phase 3 has mixed to $\eps$, the first term is $\le \eps$.  We can evaluate $|| \pi_3 - \pi||$:
\begin{align*}
|| \pi_3 - \pi || &= \frac{1}{2} \sum_{x=1}^n || \pi_3(x) - \pi(x) || \\
&= \frac{1}{2} \left( \sum_{x=1}^{\theta n/2 - 1} \pi(x) + \sum_{x = \theta n /2}^n(\pi(x)/(1-w) - \pi(x)) \right) \\
&= \frac{1}{2} \left( w + 1 - (1-w) \right) = w.
\end{align*}
So now,
\begin{align*}
||(1-p_f)v_3 + p_f v' - \pi|| &= ||(1-p_f)(v_3 - \pi) + p_f(v' - \pi)|| \\
&\le (1-p_f) || v_3 - \pi|| + p_f ||v' - \pi|| \\
&\le (1-p_f)(\eps + w) + p_f \\
&\le \delta
\end{align*}
where $\delta = \eps + w + p_f$.  We are free to choose $\eps$: choose it to be $1/n$ so that $\delta$ is $1/\poly(n)$.  So now the running time to get a distance $\delta$ is $t = O(n \log n)$.  We then apply \lemref{DistanceToStat} to obtain the result.

This concludes the proof of \thmref{ZeroChainMixing} so \corref{FullChainMixing} is proved.
\end{proof}
We have now proven \lemref{MainMixing} and consequently \corref{MainMixing}.  We are now ready to show how \thmref{Main2Design} follows, but first give the alternative proof that the zero chain gap is $\Omega(1/n)$.  The remainder of the proof of \thmref{Main2Design} is in \secref{MainResult}.

\subsection{Proof of Theorem 3.6.5} 
\label{sec:ZeroChainGap}

Here we prove that the gap of the zero chain is $\Omega(1/n)$.  While this can be deduced from \thmref{ZeroChainMixing} and provides weaker mixing time bounds, this bound on the gap is sufficient to prove our main result so we present it as a simpler alternative proof.

We use the method of decomposition (\thmref{Decomposition}), whereby the Markov chain is split up into disjoint state spaces.  This works well here because, for the first part of the walk with many zeroes, the walker remains stationary most of the time whereas when there is a constant fraction of zeroes, the walker moves on most steps.  Using the decomposition method allows us to use different techniques in these different regimes.

We therefore divide the walk up into two parts, $P_1$ and $P_2$, which are shown in \figref{ZeroChainDecomp}.  The chain $\bar{P}$ is the chain that links the two parts, according to the decomposition theorem, \thmref{Decomposition}.

\begin{figure}[h]
  \begin{center}
    \includegraphics[width=12cm]{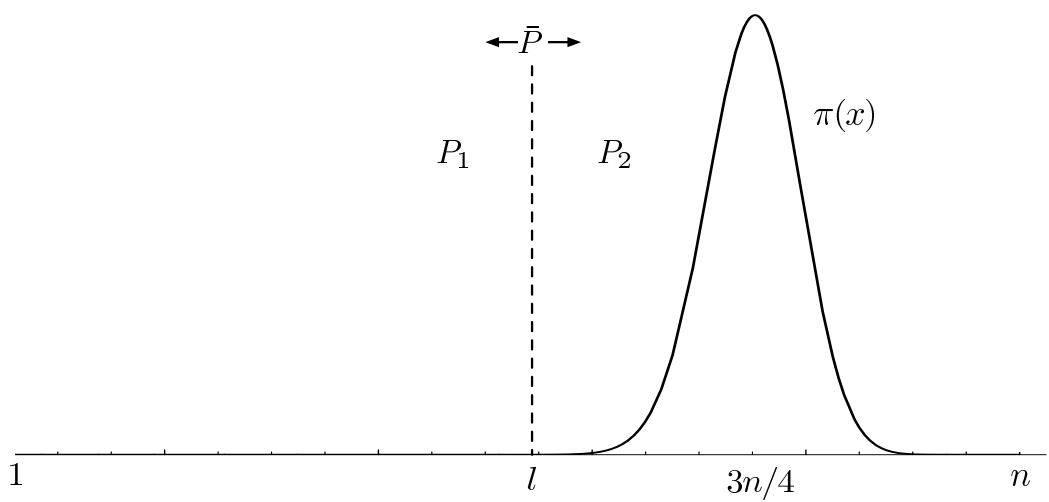}
    \caption[The decomposition of the zero chain]{The decomposition of the zero chain into $P_1$ and $P_2$.  The graph plotted is the zero chain stationary distribution $\pi(x)$.} 
    \label{fig:ZeroChainDecomp}
  \end{center}
\end{figure}

\begin{itemize}
\item{$P_1$: Let $P_1$ have state space $\Omega_1 = \{1, \ldots, m\}$.  This chain has transition matrix
\be
P_1(x,y) = 
\begin{cases}
0 & x > m \text{\, \rm or \,} y > m \\
1 - P(m, m-1) & x = y = m \\
P(x, y) & \text{\rm otherwise}
\end{cases}
\ee
and stationary distribution
\be
\pi_1(x) = \frac{\pi(x)}{b_m}
\ee
where
\be
b_m = \sum_{x=1}^{m} \pi(x).
\ee
}
\item{$P_2$: Let $P_2$ be on state space $\Omega_2 = \{m+1, \ldots, n\}$.  This chain has transition matrix
\be
P_2(x,y) = 
\begin{cases}
0 & x \le m \text{\, \rm or \,} y \le m \\
1 - P(m+1, m+2) & x = y = m+1 \\
P(x, y) & \text{\rm otherwise}
\end{cases}
\ee
and stationary distribution
\be
\pi_2(x) = \frac{\pi(x)}{c_m}
\ee
where
\be
c_m = \sum_{x=m+1}^{n} \pi(x) = 1 - b_m.
\ee
}
\end{itemize}

We will take $m = \theta n$ where $0 < \theta < 0.49$ (we could in principle just have $\theta<3/4$ but this restriction makes the calculations simpler; see \lemref{StatDistribExpSmall} for the origin of the upper bound on $\theta$).  Note that $P_2$ is the same as phase 3 used in the direct mixing time proof (up to relabelling $m+1$ to $m$).  Therefore we already have, from the proof of \lemref{MixingPhase3}, that the gap $\Delta_2$ of $P_2$ is $\Omega(1/n)$.  To find the gap of the whole zero chain we need to find the gaps $\Delta_1$ and $\bar{\Delta}$.  An ingredient to proving this is an exponential bound on the tail of the stationary distribution:
\begin{lemma}
\label{lem:StatDistribExpSmall}
\be
\pi(\theta n) \le 2\l(\frac{1}{4} \l( \frac{3 e}{\theta} \r)^{\theta} \r)^n
\ee
and for $\theta < \theta_0 \approx 0.49$, $\pi(\theta n) = e^{-\Omega(n)}$.
\end{lemma}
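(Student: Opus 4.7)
The plan is to substitute $x = \theta n$ directly into the closed form $\pi(x) = 3^x \binom{n}{x}/(4^n - 1)$ from \lemref{ZeroStatDistrib} and apply a standard binomial coefficient estimate. Concretely, I would use the well-known bound $\binom{n}{k} \le (en/k)^k$, which at $k = \theta n$ gives $\binom{n}{\theta n} \le (e/\theta)^{\theta n}$. Combined with $3^{\theta n}$ this yields a numerator of $(3e/\theta)^{\theta n}$.

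For the denominator, for all $n \ge 1$ we have $4^n - 1 \ge 4^n/2$, so $1/(4^n - 1) \le 2/4^n$. Putting the pieces together,
\begin{equation*}
\pi(\theta n) \;\le\; \frac{2 (3e/\theta)^{\theta n}}{4^n} \;=\; 2\left(\frac{1}{4}\left(\frac{3e}{\theta}\right)^{\theta}\right)^{n},
\end{equation*}
which is precisely the claimed inequality.

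For the second assertion, the bound decays exponentially provided the base $\tfrac14(3e/\theta)^\theta$ is strictly less than $1$, equivalently $f(\theta) := \theta\ln(3e/\theta) < \ln 4$. A short calculation gives $f'(\theta) = \ln(3/\theta)$, which is positive on $(0,3)$, so $f$ is strictly increasing on the relevant range and the condition defines a unique threshold $\theta_0 \in (0,1)$. Numerically solving $f(\theta_0) = \ln 4$ (e.g.\ by testing $\theta = 0.49$, for which $f(0.49) \approx 1.378 < \ln 4 \approx 1.386$, and $\theta = 0.495$, for which $f \approx 1.387 > \ln 4$) gives $\theta_0 \approx 0.49$ as claimed. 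For any fixed $\theta < \theta_0$ the base is a constant strictly less than $1$, so $\pi(\theta n) = e^{-\Omega(n)}$.

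No step looks like a real obstacle; the only point requiring care is the numerical identification of the threshold $\theta_0$, which I would simply state with a brief verification rather than derive symbolically. The rest is routine Stirling-type estimation.
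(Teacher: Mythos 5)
Your proof is correct and uses exactly the same two ingredients as the paper's proof: the bound $\binom{n}{k} \le (en/k)^k$ and $4^n - 1 \ge 4^n/2$, followed by identifying $\theta_0$ as the root of $\tfrac14(3e/\theta)^\theta = 1$. The extra detail you supply (the monotonicity of $\theta \mapsto \theta\ln(3e/\theta)$ and the numerical bracketing of $\theta_0$) is a welcome elaboration but does not change the route.
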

\begin{proof}
Use ${n \choose k} \le \l(\frac{n e}{k}\r)^k$ and $4^n -1 \ge \frac{4^n}{2}$ to prove the bound.  When $\frac{1}{4} \l( \frac{3 e}{\theta} \r)^{\theta} < 1$, $\pi(\theta n)$ is exponentially small.  $\theta_0$ is the solution to $\frac{1}{4} \l( \frac{3 e}{\theta} \r)^{\theta} = 1$.
\end{proof}

From this we can bound the gap of $\bar{P}$:
\begin{lemma}
\label{lem:GapPBar}
The gap of $\bar{P}$ is $\Omega(1/n)$.
\end{lemma}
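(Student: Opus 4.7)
The plan is to exploit the fact that $\bar{P}$ is a 2-state Markov chain on $\{1,2\}$ corresponding to $\Omega_1$ and $\Omega_2$. For such a chain with off-diagonal transitions $p=\bar{P}(1,2)$ and $q=\bar{P}(2,1)$, the nontrivial eigenvalue is $1-p-q$, so $\bar{\Delta}=p+q$. Since $\Omega_1$ and $\Omega_2$ partition the state space along the cut between $m$ and $m+1$, and the zero chain only permits nearest-neighbour transitions, the only $x\in\Omega_1,y\in\Omega_2$ contributing to the sum defining $\bar{P}(1,2)$ are $(x,y)=(m,m+1)$, and similarly only $(m+1,m)$ contributes to $\bar{P}(2,1)$. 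Thus
\bes
\bar{\Delta} \;=\; \frac{\pi(m)P(m,m+1)}{b_m} \;+\; \frac{\pi(m+1)P(m+1,m)}{c_m}.
\ees

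The detailed balance relation $\pi(m)P(m,m+1)=\pi(m+1)P(m+1,m)$ (which holds because the zero chain is reversible, being a random walk with a uniform-looking stationary distribution structure but more directly from the explicit $\pi$ and $P$) lets me denote this common quantity by $Q_0$, giving $\bar{\Delta}=Q_0(1/b_m+1/c_m)$.

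The next step, which I expect to be the main technical piece, is to show $b_m=\Theta(\pi(m))$. I would compute the ratio
\bes
\frac{\pi(x+1)}{\pi(x)} \;=\; \frac{3(n-x)}{x+1},
\ees
and observe that for $x\leq m=\theta n$ with $\theta<0.49<1/2$, this ratio is uniformly bounded below by some constant $\alpha>1$ (e.g.\ $\alpha\approx 3.1$ for $\theta=0.49$). Consequently $\pi(x)\leq \alpha^{-(m-x)}\pi(m)$ for all $x\leq m$, and summing the geometric series gives $b_m\leq \pi(m)\cdot \alpha/(\alpha-1)=O(\pi(m))$; the trivial bound $b_m\geq \pi(m)$ gives the matching lower bound. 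Meanwhile, by \lemref{StatDistribExpSmall}, $b_m$ is exponentially small in $n$, so $c_m=1-b_m=\Theta(1)$.

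Finally, combine: $Q_0=\pi(m)P(m,m+1)$, and from \lemref{ZeroChainTransitionMatrix}, $P(m,m+1)=6m(n-m)/(5n(n-1))=\Theta(1)$ for $m=\theta n$. Therefore
\bes
\bar{\Delta} \;=\; \Theta(\pi(m))\cdot\!\left(\frac{1}{\Theta(\pi(m))}+\frac{1}{\Theta(1)}\right) \;=\; \Theta(1),
\ees
which is certainly $\Omega(1/n)$ as claimed (in fact the stronger bound $\bar{\Delta}=\Omega(1)$ emerges; the weaker statement in the lemma presumably suffices because the other pieces, notably $\Delta_1$ and $\Delta_2$, are only $\Omega(1/n)$). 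The only real obstacle is verifying the geometric-ratio bound cleanly; everything else is bookkeeping.
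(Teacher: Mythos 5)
Your approach is essentially the same as the paper's: compute $\bar{P}(1,2)$ and $\bar{P}(2,1)$, use the two-state eigenvalue formula, show $b_m = \Theta(\pi(m))$ by a geometric-ratio argument (the paper does the same thing, bounding $\pi(x)/\pi(m) \le 3^{x-m}$ by dropping the binomial coefficient ratio, which is morally identical to your uniform ratio bound), and show $c_m = \Theta(1)$ from the exponential smallness of $\pi(\theta n)$. Your detailed-balance repackaging as $\bar{\Delta} = Q_0(1/b_m + 1/c_m)$ is just bookkeeping.

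There is one genuine gap. You assert that for a two-state chain with off-diagonal transitions $p,q$, the gap is $\bar{\Delta} = p+q$. But the nontrivial eigenvalue is $1-p-q$, and the gap as defined in the paper is $\min(1-\lambda_2, 1+\lambda_{\min}) = \min(p+q,\; 2-p-q) = 1 - |1-(p+q)|$. The identity $\bar{\Delta} = p+q$ therefore requires $p+q \le 1$, which you never verify. Your $\Theta(1)$ estimate of $p+q$ does not by itself rule out $p+q$ being close to $2$, in which case the gap would be close to $0$ and the conclusion would fail. The fix is already implicit in your ingredients: since $\pi(m) \le b_m$ you have $p = Q_0/b_m \le P(m,m+1) \le 3/5$, and $q = Q_0/c_m$ is exponentially small, so $p+q < 1$ for large $n$. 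The paper carries out exactly this check before removing the absolute value; your write-up needs to do the same to be complete.
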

\begin{proof}
We first need to work out the transition matrix for $\bar{P}$.  From the definition of $\bar{P}$ in the decomposition theorem,
\bas
\bar{P}(1, 2) &= \frac{1}{b_m} \pi(m) P(m, m+1)\\
\bar{P}(2, 1) &= \frac{1}{c_m} \pi(m+1) P(m+1, m).
\eas
We can find the two diagonal elements using the fact that the transition matrix is stochastic.  Because the zero chain is reversible, $\bar{P}$ is also reversible and by direct calculation of the eigenvalues has gap
\be
\bar{\Delta} = 1 - | 1- (\bar{P}(1, 2) + \bar{P}(2, 1)) |.
\ee
However, we can remove the modulus signs since, for $n$ large enough, $\bar{P}(1, 2) + \bar{P}(2, 1) \le 1$.  This is because, using reversibility and $\pi(m) \le b_m$
\be
\bar{P}(1, 2) + \bar{P}(2, 1) = \frac{\pi(m) P(m, m+1)}{c_m b_m} \le \frac{P(m, m+1)}{c_m}.
\ee
Using \lemref{StatDistribExpSmall} we find that $c_m \ge 1- e^{-\Omega(n)}$ for $m = \theta n$ and $\theta < \theta_0$.  Using $P(m, m+1) \le 3/5$, we find that for $n$ large enough, $\bar{P}(1, 2) + \bar{P}(2, 1) \le 1$.

Now we need to show that $\bar{P}(1, 2) + \bar{P}(2, 1) = \Omega(1)$.  Again using \lemref{StatDistribExpSmall}, we find $\bar{P}(2, 1) = e^{-\Omega(n)}$.  We just need a bound on $\bar{P}(1, 2)$.  First we bound $\pi(m)/b_m$:
\bas
\frac{\pi(m)}{b_m} &= \frac{\pi(m)}{\sum_{x=1}^m \pi(x)} \\
&= \frac{1}{\sum_{x=1}^m \frac{\pi(x)}{\pi(m)}} \\
&= \frac{1}{\sum_{x=1}^m 3^{x-m} \frac{{n \choose x}}{{n \choose m}}} \\
&\ge \frac{1}{\sum_{x=1}^m 3^{x-m}} \\
&= \frac{1}{\sum_{x=0}^{m-1} 3^{-x}} \\
&\ge \frac{1}{\sum_{x=0}^{\infty} 3^{-x}} \\
&= \frac{2}{3}
\eas
using ${n \choose x} \le {n \choose m}$ for $x \le m \le n/2$.  For $m = \theta n$, we have $P(m, m+1) = \Omega(1)$ so overall $\bar{P}(1, 2) = \Omega(1)$, proving the bound on the gap.
\end{proof}

\begin{lemma}
\label{lem:GapP1}
The gap of $P_1$ is $\Omega(1/n)$.
\end{lemma}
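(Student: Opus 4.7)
The chain $P_1$ is a reversible birth-death walk on $\Omega_1=\{1,\ldots,m\}$ with $m=\theta n$, $\theta<0.49$, and its stationary distribution $\pi_1(x)\propto\pi(x)=3^{x}\binom{n}{x}/(4^{n}-1)$ grows geometrically on $\Omega_1$: the ratio $\pi(x+1)/\pi(x)=3(n-x)/(x+1)$ is decreasing in $x$ and, at its minimum $x=m$, still satisfies $\pi(m)/\pi(m-1)\ge 3(1-\theta)/\theta > 3$. Consequently, for any $k\le m$ the partial sum $\sum_{x=1}^{k}\pi(x)$ is a geometric series dominated by its last term, i.e.~$\pi_{1}([1,k])=\Theta(\pi_{1}(k))$. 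The plan is to combine this concentration with the small local transition probability $P_{1}(1,2)=\Theta(1/n)$ to bound the gap.

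The main step is to apply a Hardy-type inequality for reversible birth-death chains, which gives
\[
\Delta_{1}\;\ge\;\frac{1}{4B},\qquad
B=\max_{1\le k<m}\pi_{1}([1,k])\sum_{y=k+1}^{m}\frac{1}{\pi_{1}(y-1)\,P_{1}(y-1,y)}.
\]
Using the identity $y(n-y)\binom{n}{y}=n(n-1)\binom{n-2}{y-1}$ one finds $\pi(y)P(y,y+1)\propto 3^{y}\binom{n-2}{y-1}$, whose reciprocal decays geometrically in $y$ at the same rate as $1/\pi(y)$ does in this range. Therefore the inner sum in $B$ is dominated by $y=k+1$, giving $\sum_{y>k}1/(\pi_{1}(y-1)P_{1}(y-1,y))=O\!\l(1/(\pi_{1}(k)P_{1}(k,k+1))\r)$. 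Combining with $\pi_{1}([1,k])=O(\pi_{1}(k))$ the $b_m$ and $\pi(k)$ factors cancel, leaving
\[
B_{k}\;=\;O\!\l(\frac{1}{P_{1}(k,k+1)}\r)\;=\;O\!\l(\frac{n(n-1)}{k(n-k)}\r),
\]
which is maximised at $k=1$ and equals $O(n)$. This gives $\Delta_{1}=\Omega(1/n)$.

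If one prefers to stay within the Markov-chain tools already introduced in this chapter, an equivalent route is to apply the length-function comparison theorem (\thmref{ComparisonLengthFunction}) with $P_{1}$ compared against the complete chain $\hat{P}(x,y)=\pi_{1}(y)$ (which has gap $1$), using the canonical paths along the line and length function $l(a,a+1)=1/(\pi_{1}(a)P_{1}(a,a+1))$. The denominator $l\,\pi_1 P_1$ collapses to $1$, and the same two geometric-sum estimates above bound the comparison constant $A$ by $O(n)$, yielding the same conclusion.

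The main obstacle is to get a factor of $n$ rather than $n^{2}$. A naive Cheeger argument gives $\Phi=\Omega(1/n)$ (the bottleneck being the edge $(1,2)$ with $P_{1}(1,2)=\Theta(1/n)$ and $\pi_{1}([1,1])$ of constant ratio to $\pi_{1}(1)$), but this only yields $\Delta_{1}\ge\Phi^{2}/2=\Omega(1/n^{2})$. Obtaining the tight $\Omega(1/n)$ requires exploiting the geometric concentration of $\pi_{1}$ toward $m$ and the small local moving probability near the left boundary \emph{simultaneously}, which is exactly what the Hardy inequality (or its length-function comparison surrogate) accomplishes by weighting each edge by $1/(\pi_{1}(a)P_{1}(a,a+1))$ so that the bottleneck and the stationary tail interact correctly.
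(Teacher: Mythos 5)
Your primary argument, via a Hardy (Muckenhoupt) inequality for reversible birth--death chains, is sound and gives the claimed $\Omega(1/n)$; it is a genuinely different route from the paper's. The paper instead applies the length-function comparison (\thmref{ComparisonLengthFunction}) with the exponential length $l(a,a+1)=r^a$ for a constant $r\in(0,1)$, and proves by a short recursion that $h_z(r)=(r^z\pi(z))^{-1}\sum_{x\le z}r^x\pi(x)=O(1)$. The content of that recursion is precisely that $\pi(x)r^x$ is geometrically \emph{increasing} in $x$ (the $\pi$-ratio $3(n-x)/(x+1)>3$ beats $1/r$), so the sum over $x$ is dominated by its last term and the comparison constant collapses to $A_z=O(1/P_1(z,z+1))$, maximised at $z=1$ to give $O(n)$. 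Your Hardy criterion captures the same geometric structure more transparently: $B_k=\pi_1([1,k])\sum_{s\ge k}1/(\pi_1(s)P_1(s,s+1))$ is $O(1/P_1(k,k+1))$ because both factors are dominated by their endpoint terms, and the bottleneck at $k=1$ is visible immediately. So the two proofs exploit the same features of the chain, through a different tool each.

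The ``equivalent route'' you sketch, namely \thmref{ComparisonLengthFunction} with $l(a,a+1)=1/(\pi_1(a)P_1(a,a+1))$, is not in fact equivalent and does not give $A=O(n)$. After the prefactor collapses, the dominant contribution to $A_z$ (take $z$ near $m$) is
\begin{equation*}
\sum_{s=1}^{z} l(s,s+1)\,\pi_1\bigl([1,s]\bigr)=O\!\left(\sum_{s=1}^{z}\pi_1(s)\,l(s,s+1)\right)=O\!\left(\sum_{s=1}^{z}\frac{1}{P_1(s,s+1)}\right)=O\!\left(n\sum_{s=1}^{z}\frac{1}{s}\right)=O(n\log z),
\end{equation*}
because with this choice $\pi_1(s)\,l(s,s+1)=1/P_1(s,s+1)\sim n/s$ decays only \emph{harmonically} in $s$, not geometrically, so the sum is not dominated by any single term. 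Your two geometric-sum estimates control $\pi_1([1,s])$ and $\sum_{y\ge s}l(y,y+1)$ separately, but in the comparison constant they appear convolved over the intermediate index $s$ rather than multiplied at a common cutoff, and the harmonic factor survives. The Hardy quantity $B$ avoids this precisely because it is a maximum over $k$ of a product evaluated at the \emph{same} cutoff $k$. A comparison-theorem proof needs $l$ chosen so that $\pi_1(s)\,l(s,s+1)$ is geometrically \emph{increasing} in $s$; that is exactly the role of the paper's constant-$r$ length $l=r^s$. As written, the second route only yields $\Omega(1/(n\log n))$, so it is the Hardy argument, not the comparison surrogate, that carries your proof.
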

\begin{proof}
We use the comparison method with length functions as stated in \thmref{ComparisonLengthFunction}.  The length function we choose is, for $x \le y$, $l(x, y) = r^x$ for some constant $r$ satisfying $0 < r < 1$.

Let
\be
A_z = \frac{1}{l(z, z+1) \pi_1(z) P_1(z, z+1)} \sum_{x=1}^z \sum_{y=z+1}^m \pi_1(x) \pi_1(y) \sum_{s = x}^{y-1} l(s, s+1).
\ee
Then, according to \thmref{ComparisonLengthFunction}, $\Delta_1 \ge 1/A$ where $A = \max_z A_z$.  We need to find an upper bound for $A$:

\bas
A_z &= \frac{1}{r^z \pi(z) P(z, z+1) b_m} \sum_{x=1}^z \sum_{y=z+1}^m \pi(x) \pi(y) \frac{r^x - r^y}{1-r} \\
&= \frac{1}{r^z (1-r) \pi(z) P(z, z+1) b_m} \l( (b_m - b_z) \sum_{x=1}^z r^x \pi(x) - b_z \sum_{x={z+1}}^m r^x \pi(x) \r) \\
&= \frac{1}{r^z (1-r) \pi(z) P(z, z+1) b_m} \l( (1- b_z - (1 -b_m)) \sum_{x=1}^z r^x \pi(x) - (1-(1-b_z)) \sum_{x={z+1}}^m r^x \pi(x) \r) \\
&= \frac{1}{r^z (1-r) \pi(z) P(z, z+1) b_m} \l( (1- b_z) \sum_{x=1}^m r^x \pi(x) - (1 -b_m) \sum_{x=1}^z r^x \pi(x) -  \sum_{x={z+1}}^m r^x \pi(x) \r) \\
&\le \frac{1}{r^z (1-r) \pi(z) P(z, z+1) b_m} \l( \sum_{x=1}^m r^x \pi(x) - (1 -b_m) \sum_{x=1}^z r^x \pi(x) -  \sum_{x={z+1}}^m r^x \pi(x) \r) \\
&= \frac{\sum_{x=1}^z r^x \pi(x)}{r^z (1-r) \pi(z) P(z, z+1)}
\eas
where the inequality comes from $1-b_z\le 1$.  Now let $h_z(r) = \frac{1}{r^z \pi(z)} \sum_{x=1}^z r^x \pi(x)$ then, plugging in the value of $P(z, z+1)$, we find
\bes
A_z \le \frac{5n(n-1)}{6z(n-z)(1-r)} h_z(r).
\ees
Now, $\max_z \frac{5n(n-1)}{6z(n-z)(1-r)} = O(n)$ so showing $\max_z h_z(r) = O(1)$ is sufficient the prove the bound we require.  We evaluate $h_z$ recursively:

Firstly,
\bes
h_z(r) = 1 + \frac{1}{r^z \pi(z)} \sum_{x=1}^{z-1} r^x \pi(x).
\ees
Then evaluate the sum:
\bas
\sum_{x=1}^{z-1} r^x \pi(x) &= \sum_{x=1}^{z-1} \frac{r^x \pi(x)}{r^{x+1} \pi(x+1)} r^{x+1} \pi(x+1) \\
&= \frac{1}{3r} \sum_{x=1}^{z-1} \frac{x+1}{n-x} r^{x+1} \pi(x+1) \\
&= \frac{1}{3r} \sum_{x=2}^{z} \frac{x}{n-x+1} r^{x} \pi(x) \\
&< \frac{1}{3r} \sum_{x=1}^{z} \frac{x}{n-x+1} r^{x} \pi(x) \\
&< \frac{1}{3r} \frac{z}{n-z+1} \sum_{x=1}^{z} r^{x} \pi(x).
\eas
Combining,
\bes
h_z(r) < 1 + \frac{z}{3r(n-z+1)} h_z(r)
\ees
or
\bes
h_z(r) < \frac{1}{1-\frac{z}{3r(n-z+1)}}.
\ees
Since $1 \le z \le \theta n$, $h_z(r)$ is constant in this range, proving the result.
\end{proof}

We can now combine the results to prove the bound on the zero chain gap:
\begin{proof}[Proof of \thmref{ZeroChainGap}]
Using Lemmas \ref{lem:GapP1}, \ref{lem:MixingPhase3} and \ref{lem:GapPBar} together with \thmref{Decomposition} proves the result.
\end{proof}

\section{Main Result}
\label{sec:MainResult}

We will now show how the mixing time results imply that we have an approximate 2-design.

\begin{proof}[Proof of \thmref{Main2Design}:]
We will go via the 2-norm since this gives a tight bound when working
with the Pauli operators.  We write $\rho$ in the Pauli
basis as usual (as \eq{PauliBasisGeneral}) and note that $\rho$ is not necessarily a physical state so the coefficients may not be real.
\begin{align*}
\vectornorm{\cG_{W} - \cG_H}_{\diamond}^2 &= \sup_\rho \frac{1}{\vectornorm{\rho}_1^2} \vectornorm{(\cG_{W} \otimes \text{\rm id}_{2^{2n}}) (\rho) - (\cG_H \otimes \text{\rm id}_{2^{2n}}) (\rho)}_1^2 \\
&\le 2^{4n} \sup_\rho \frac{1}{\vectornorm{\rho}_1^2} \vectornorm{(\cG_{W} \otimes \text{\rm id}_{2^{2n}}) (\rho) - (\cG_H \otimes \text{\rm id}_{2^{2n}}) (\rho)}_2^2 \\
&= \sup_{\rho} \frac{1}{\vectornorm{\rho}_1^2} \bigg|\bigg| \sum_{p_1, p_2, p_3, p_4 \atop p_1 p_2 \ne 00} \gamma_0(p_1, p_2, p_3, p_4) ( \cG_{W}(\sigma_{p_1} \otimes \sigma_{p_2}) \otimes \sigma_{p_3} \otimes \sigma_{p_4}  \\
& \phantom{= \sum_{\rho} \bigg|\bigg|} -  \cG_H(\sigma_{p_1} \otimes \sigma_{p_2}) \otimes \sigma_{p_3} \otimes \sigma_{p_4} ) \bigg|\bigg|_2^2
\end{align*}
Now, write (for $p_1 p_2 \ne 00$) $\cG_{W}(\frac{1}{2^n}\sigma_{p_1} \otimes \sigma_{p_2}) = \frac{1}{2^n} \sum_{q_1, q_2 \atop q_1 q_2 \ne 00} g_t(q_1, q_2; p_1, p_2) \sigma_{q_1} \otimes \sigma_{q_2}$.  We get
\begin{align*}
& \sup_{\rho} \frac{1}{\vectornorm{\rho}_1^2} \bigg|\bigg| \sum_{p_1, p_2, p_3, p_4, q_1, q_2 \atop p_1 p_2 \ne 00, q_1 q_2 \ne 00} \gamma_0(p_1, p_2, p_3, p_4) \left( g_t(q_1, q_2; p_1, p_2) - \frac{\delta_{q_1 q_2} \delta_{p_1 p_2}}{2^n(2^n+1)} \right) \\
& \phantom{\sup_{\rho} \bigg|\bigg|} \sigma_{q_1} \otimes \sigma_{q_2} \otimes \sigma_{p_3} \otimes \sigma_{p_4} \bigg|\bigg|_2^2 \\
&= 2^{4n} \sup_{\rho} \frac{1}{\vectornorm{\rho}_1^2} \sum_{p_1, p_2, p_3, p_4, q_1, q_2 \atop p_1 p_2 \ne 00, q_1 q_2 \ne 00} |\gamma_0(p_1, p_2, p_3, p_4)|^2 \left( g_t(q_1, q_2; p_1, p_2) - \frac{\delta_{q_1 q_2} \delta_{p_1 p_2}}{2^n(2^n+1)} \right)^2  \\
&\le 2^{4n} \eps^2 \sup_{\rho} \frac{\sum_{p_1, p_2, p_3, p_4 \atop p_1 p_2 \ne 00} |\gamma_0(p_1, p_2, p_3, p_4)|^2}{\vectornorm{\rho}_1^2} \\
&\le 2^{4n} \eps^2 \sup_{\rho} \frac{\vectornorm{\rho}_2^2}{\vectornorm{\rho}_1^2} \\
&= 2^{4n} \eps^2
\end{align*}
where the first equality comes from the orthogonality of the Pauli operators under the Hilbert-Schmidt inner product.  This proves the result for the diamond norm, \defref{ApproxUnitaryDesignDiamond}.  For the distance measure defined in \defref{ApproxUnitaryDesignDankert} (TWIRL), the argument in \cite{DCEL06} can be used together with the 1-norm bound to prove the result.
\end{proof}

\section{Conclusions}
\label{sec:ConclusionRandomCircuits}

We have proved tight convergence results for the first two moments of
a random circuit.  We have used this to show that random circuits are
efficient approximate 1- and 2-unitary designs.  Our framework readily
generalises to $k$-designs for any $k$ and the next step in this
research is to prove that random circuits give approximate $k$-designs
for all $k$.

We have shown that, provided the random circuit uses gates from a
universal gate set that is also universal on $U(4)$, the circuit is
still an efficient 2-design.  We also see that the random circuit with
gates chosen uniformly from $U(4)$ is the most natural model.  We note that the gates from $U(4)$ can be replaced by
gates from any approximate 2-design on two qubits without any change to the asymptotic 
convergence properties.

Finally, random circuits are interesting physical models in their own
right.  The original purpose of \cite{ODP06} was to answer the
physical question of how quickly entanglement grows in a system with
random two party interactions.  \lemref{MainMixing}(i) shows that
$O(n(n + \log 1/\eps))$ steps suffice (in contrast to $O(n^2(n + \log
1/\eps))$ which they prove) to give almost maximal entanglement in
such a system.

\section{Proofs}
\label{sec:RandomCircuitsProofs}

\subsection{Zero chain mixing time proofs}
\label{sec:ZeroChainMixingProof}

\subsubsection{Asymmetric Simple Random Walk}
\label{sec:AsymRandomWalks}

We will use some facts about asymmetric simple random walks i.e.~a random walk on a 1D line with probability $p$ of moving right at each step and probability $q=1-p$ of moving left.

The position of the walk after $k$ steps is tightly concentrated around $k(p-q)$:
\begin{lemma}
\label{lem:BiasedWalkPosition}
Let $X_k$ be the random variable giving the position of a random walk after $k$ steps starting at the origin with probability $p$ of moving right and probability $q = 1-p$ of moving left.  Let $\mu = p - q$.  Then for any $\eta > 0$,
\begin{equation*}
\Pr(X_k \ge \mu k + \eta) \le \exp\left(-\frac{\eta^2}{2 k}\right)
\end{equation*}
and
\begin{equation*}
\Pr(X_k \le \mu k - \eta) \le \exp\left(-\frac{\eta^2}{2 k}\right).
\end{equation*}
\end{lemma}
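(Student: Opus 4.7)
The plan is to recognize this as a standard Chernoff/Hoeffding concentration inequality for a sum of i.i.d. bounded random variables, and invoke the moment generating function method.

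First I would write $X_k = \sum_{i=1}^{k} Y_i$ where the $Y_i$ are i.i.d. with $\Pr(Y_i = +1) = p$ and $\Pr(Y_i = -1) = q$, so that $\bbE[Y_i] = \mu$ and $Y_i - \mu \in [-1-\mu, 1-\mu]$, an interval of length $2$. Then by Markov's inequality applied to the non-negative random variable $e^{t(X_k - \mu k)}$ for any $t>0$,
\bes
\Pr(X_k \ge \mu k + \eta) = \Pr\l(e^{t(X_k - \mu k)} \ge e^{t\eta}\r) \le e^{-t\eta} \, \bbE\l[e^{t(X_k - \mu k)}\r] = e^{-t\eta} \l(\bbE\l[e^{t(Y_1 - \mu)}\r]\r)^k,
\ees
where the last equality uses independence of the $Y_i$.

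Next I would bound the single-variable MGF. The standard Hoeffding lemma states that for any random variable $Z$ with $\bbE[Z]=0$ and $Z \in [a,b]$, one has $\bbE[e^{tZ}] \le \exp(t^2 (b-a)^2/8)$. Applied to $Z = Y_1 - \mu$ with $b-a = 2$, this gives $\bbE[e^{t(Y_1-\mu)}] \le \exp(t^2/2)$. (If one prefers to avoid citing Hoeffding, a direct convexity argument on $e^{tz}$ as a function of $z \in [-1-\mu, 1-\mu]$ followed by a Taylor expansion of the logarithm of the resulting bound works; this is the only real calculation in the proof.) Substituting back,
\bes
\Pr(X_k \ge \mu k + \eta) \le \exp\l(-t\eta + \tfrac{kt^2}{2}\r).
\ees

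Finally, I would optimize over $t>0$: choosing $t = \eta/k$ minimizes the exponent and yields $\exp(-\eta^2/(2k))$, giving the upper-tail bound. The lower-tail bound follows immediately by applying the same argument to the random walk $-X_k$, which is an asymmetric simple random walk with right-step probability $q$ and mean $-\mu$; the interval length and the MGF bound are unchanged by this sign flip. The only nontrivial step is Hoeffding's lemma, but since it is entirely standard and the $Y_i$ live in a symmetric interval of length $2$, I do not anticipate any real obstacle here.
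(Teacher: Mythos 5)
Your proof is correct and arrives at the bound by essentially the same route as the paper: both are applications of Hoeffding's inequality. The paper cites the standard $0/1$ Chernoff bound $\Pr(Y_k \ge kp+\eta) \le \exp(-2\eta^2/k)$ and rescales via $X_k = 2Y_k - k$ (with $\eta \mapsto \eta/2$), whereas you run the MGF argument directly on the $\pm 1$ steps and invoke Hoeffding's lemma for a bounded variable on an interval of length $2$ — the same inequality with the same constant, just different bookkeeping.
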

\begin{proof}
The standard Chernoff bound for $0/1$ variables $\tilde{Y}_i$ gives, with $\tilde{Y}_i$ equal to $1$ with probability p and for $Y_k = \sum_{i=1}^k \tilde{Y}_i$,
\begin{align*}
\Pr(Y_k \ge k p + \eta) &\le \exp\left(-\frac{2 \eta^2}{k} \right) \\
\Pr(Y_k \le k p - \eta) &\le \exp\left(-\frac{2 \eta^2}{k} \right).
\end{align*}
For our case, set $\tilde{Y_i} = 2\tilde{X_i} - 1$ to give the desired result.
\end{proof}
This result is for a walk with constant bias.  We will need a result for a walk with varying (but bounded from below) bias:
\begin{lemma}
\label{lem:BiasedWalkPositionWithBoundedBias}
Let $X_k$ be the random variable giving the position of a random walk after $k$ steps starting at the origin with probability $p_i \ge p$ of moving right and probability $q_i \le p$ of moving left at step $i$.  Let $\mu = p - (1-p)$.  Then for any $\eta > 0$,
\begin{equation*}
\Pr(X_k \ge \mu k + \eta) \le \exp\left(-\frac{\eta^2}{2 k}\right)
\end{equation*}
and
\begin{equation*}
\Pr(X_k \le \mu k - \eta) \le \exp\left(-\frac{\eta^2}{2 k}\right).
\end{equation*}
\end{lemma}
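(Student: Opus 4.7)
The plan is to reduce Lemma~\ref{lem:BiasedWalkPositionWithBoundedBias} to Lemma~\ref{lem:BiasedWalkPosition} by a monotone coupling argument. The intuition is that biasing each step of $X$ at least as far to the right as the baseline walk $Y$ with constant right-probability $p$ can only push the walker further right in the stochastic order, so $X_k$ should dominate $Y_k$ pointwise after the coupling is set up.

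Concretely, I would introduce i.i.d.\ uniform random variables $U_1,\ldots,U_k$ on $[0,1]$ and define the increments of both walks simultaneously from the $U_i$. Set the $Y$-step at time $i$ to $+1$ when $U_i<p$ and $-1$ otherwise, so $Y$ is the constant-bias walk of Lemma~\ref{lem:BiasedWalkPosition} with drift $\mu=p-(1-p)$. Set the $X$-step at time $i$ to $+1$ when $U_i<p_i$, to $-1$ when $U_i\geq 1-q_i$, and to $0$ in between (covering the possibility $p_i+q_i<1$). Because $p_i\geq p$, the event $\{U_i<p\}$ is contained in $\{U_i<p_i\}$, and because $q_i\leq 1-p$ (which follows from $p_i\geq p$ together with $p_i+q_i\leq 1$), the event $\{U_i\geq 1-q_i\}$ is contained in $\{U_i\geq p\}$. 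Checking cases, the $X$-increment is at least the $Y$-increment for every value of $U_i$, so by induction $X_k\geq Y_k$ almost surely.

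The lower-tail statement is then immediate from Lemma~\ref{lem:BiasedWalkPosition} applied to $Y_k$: $\Pr(X_k\leq \mu k-\eta)\leq \Pr(Y_k\leq \mu k-\eta)\leq \exp(-\eta^2/(2k))$. The upper-tail statement follows by the symmetric coupling, swapping the roles of left and right moves and coupling $X$ from above by a constant-bias walk with the same drift $\mu$, using the hypothesis $q_i\leq p$ as the analogous one-sided bound; the resulting domination delivers the matching upper-tail Chernoff estimate from Lemma~\ref{lem:BiasedWalkPosition}. The only subtlety is verifying that the joint sampling is well-defined in the lazy regime and that the ordering $p\leq 1-q_i$ holds throughout, which is where the hypotheses on $p_i$ and $q_i$ are used; after that the concentration content is inherited verbatim from Lemma~\ref{lem:BiasedWalkPosition}, so no new large-deviation work is required.
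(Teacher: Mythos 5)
Your coupling argument for the lower tail is correct and is a genuinely different route from the paper's. The paper runs a Chernoff bound and compares the moment generating functions $\bbE\, e^{\lambda Z_k}$ and $\bbE\, e^{\lambda Y_k}$ of the variable-bias and constant-bias walks; you instead construct a pointwise domination $X_k \ge Y_k$ from common uniforms and inherit the tail estimate from Lemma~\ref{lem:BiasedWalkPosition} directly. The coupling is the more elementary of the two (no MGF bookkeeping) and handles the lazy regime $p_i + q_i < 1$ transparently; your checks $p \le p_i$ and $p \le 1 - q_i$ are exactly what makes the increment comparison go through case by case.

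The upper-tail half does not go through, and you should not expect it to. To dominate $X_k$ from above by a constant-bias walk $W$ with drift $\mu = 2p - 1$, every right step of $X$ would have to be a right step of $W$, forcing $W$'s right-probability to be at least $\sup_i p_i$; since the hypothesis only bounds $p_i$ from below, no such fixed bias with drift $\mu$ exists. Equivalently, applying your lower-tail coupling to $-X$ (which steps right with probability $q_i$) would require $q_i \ge 1-p$, which does not follow from $q_i \le p$ and in fact fails in the paper's application, where $q_i = 1 - p_i \le 1-p$. Indeed the upper-tail inequality is false as stated: take $p_i = 1$ and $q_i = 0$ (consistent with the hypotheses), so $X_k = k$ deterministically while $\mu k = (2p-1)k < k$, giving $\Pr(X_k \ge \mu k + \eta) = 1$ whenever $\eta < 2(1-p)k$. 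The paper's own Chernoff computation has the analogous defect --- for $\lambda > 0$ and $p_i \ge p$ the MGF inequality runs the other way, $\bbE\, e^{\lambda Z_k} \ge \bbE\, e^{\lambda Y_k}$ --- but this is harmless because only the lower-tail bound is actually invoked downstream, in Lemma~\ref{lem:Phase2GetsToRightPlace}.
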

\begin{proof}
Let $\tilde{Y}_i$ be a random variable equal to $1$ with probability $p$ and $0$ with probability $1-p$.  Then let $\tilde{Z}_i$ be a random variable equal to $1$ with probability $p_i$ and $0$ with probability $1-p_i$.  Let $Y_k = \sum_{i=1}^k \tilde{Y}_i$ and $Z_k = \sum_{i=1}^k \tilde{Z}_i$.  Then following the standard Chernoff bound derivation (for $\lambda > 0$),
\begin{align*}
\Pr(Z_k \ge k p + \eta) &= \Pr\left(e^{\lambda Z_k} \ge e^{\lambda(kp+\eta)}\right) \\
&\le \frac{e^{\lambda(k p + \eta)}}{\Expect e^{\lambda Z_k}} \\
&\le \frac{e^{\lambda(k p + \eta)}}{\Expect e^{\lambda Y_k}} \\
&\le \exp\left(-\frac{2 \eta^2}{k} \right).
\end{align*}
We can then, as above, set $\tilde{Z_i} = 2\tilde{X_i} - 1$.  The calculation is similar for the bound on $\Pr(X_k \le \mu k - \eta)$.
\end{proof}

From \lemref{BiasedWalkPosition} we can prove a result about how often each site is visited.  If the walk runs for $t$ steps the walk is at position $t \mu$ with high probability so we might expect from symmetry that each site will have been visited about $1/\mu$ times.  Below is a weaker concentration result of this form but is strong enough for our purposes.  It says that the amount of time spent $\le x$ is about $x/\mu$.
\begin{lemma}
\label{lem:NumberOfHits}
For $\gamma > 2$ and integer $x > 0$,
\begin{equation*}
\Pr\left( \sum_{k=1}^\infty \mathbb{I}(X_k \le x) \ge \gamma x/\mu \right) \le 2 \exp\left(-\frac{\mu x(\gamma-2)}{2}\right),
\end{equation*}
where $\mathbb{I}$ is the indicator function.
\end{lemma}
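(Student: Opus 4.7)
The plan is to reduce the occupation-time tail bound to a position tail bound via the observation that $\sum_{k=1}^\infty \mathbb{I}(X_k\le x)$ counts steps and so is bounded above by the time of the last visit to $\{\le x\}$. Set $N = \gamma x/\mu$. If the sum is at least $N$, then at least $N$ indices $k\ge 1$ satisfy $X_k\le x$, hence in particular the largest such index is at least $N$. Thus
\[
\Pr\!\left(\sum_{k=1}^\infty \mathbb{I}(X_k \le x) \ge \gamma x/\mu\right) \le \Pr(\exists\, k \ge N:\, X_k \le x).
\]

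Next I would split this event according to the walker's position at time $N$:
\[
\Pr(\exists\, k \ge N:\, X_k \le x) \le \Pr(X_N \le x) + \Pr\!\bigl(X_N > x \text{ and } \exists\, k > N:\, X_k \le x\bigr).
\]
For the first term, apply \lemref{BiasedWalkPosition} with $\eta = \mu N - x = (\gamma-1)x$, giving $\Pr(X_N\le x)\le \exp(-(\gamma-1)^2 x^2/(2N)) = \exp(-\mu x (\gamma-1)^2/(2\gamma))$. Since $(\gamma-1)^2 - \gamma(\gamma-2) = 1 \ge 0$, this is at most $\exp(-\mu x(\gamma-2)/2)$, contributing one of the two exponentials in the claim.

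For the second (``comeback'') term I would use the strong Markov property at time $N$: conditional on $X_N = y > x$, the increments after time $N$ are an independent random walk starting at $y$, and by standard gambler's-ruin for an asymmetric walk with positive drift, the probability it ever reaches $\{\le x\}$ equals $(q/p)^{y-x}$. Therefore
\[
\Pr\!\bigl(X_N > x,\ \exists\, k > N:\, X_k \le x\bigr) = \Expect\!\left[(q/p)^{X_N - x}\,\mathbb{I}(X_N>x)\right].
\]
I would bound this expectation by choosing an intermediate threshold $T>x$ and splitting: on the event $\{x < X_N \le T\}$ I use the Chernoff bound of \lemref{BiasedWalkPosition} on $X_N$, while on $\{X_N > T\}$ I use $(q/p)^{X_N - x} \le (q/p)^{T-x} \le \exp(-2\mu(T-x))$, the last step using $\log(p/q) = \log\!\frac{1+\mu}{1-\mu} \ge 2\mu$. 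Choosing $T$ appropriately in the interval $(x,\mu N)$ yields a second contribution of the form $\exp(-\mu x(\gamma-2)/2)$, and summing the two gives the stated $2\exp(-\mu x(\gamma-2)/2)$ bound.

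The main obstacle is the balancing step: the direct Chernoff exponent for $\Pr(X_N\le x)$ is $\mu x(\gamma-1)^2/(2\gamma)$, which is already strictly better than $\mu x(\gamma-2)/2$, but the comeback term naturally produces exponents that depend on $\gamma$ only through $\mu(T-x)$ and $(\mu N - T)^2/N$, so one has to pick $T$ carefully to make both halves decay at least as fast as $\exp(-\mu x(\gamma-2)/2)$. The linear dependence on $\gamma$ in the stated bound, rather than the tighter $(\gamma-1)^2/(2\gamma)$, reflects precisely the cost of this threshold split and yields a cleaner expression suitable for the subsequent applications of the lemma.
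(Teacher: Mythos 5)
Your reduction is sound and takes a route genuinely different from the paper's. The observation that $\sum_{k}\mathbb{I}(X_k\le x)\ge N$ forces some $k\ge N$ with $X_k\le x$, the split at time $N=\gamma x/\mu$, and the Markov plus gambler's-ruin identity $\Pr(X_N>x,\ \exists\,k>N:X_k\le x)=\Expect[(q/p)^{X_N-x}\mathbb{I}(X_N>x)]$ (via \lemref{NotBackwards}) are all correct. The paper instead splits $\{\sum Y_k\ge N\}$ according to whether all of $Y_1,\ldots,Y_N$ equal $1$, and then union-bounds the late-return probability as $\sum_{k>N}\Pr(Y_k=1)$, estimating each summand with \lemref{BiasedWalkPosition}; your Markov/gambler's-ruin step replaces that union bound by an equality and is therefore at least as tight.

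The threshold-split balancing step, however, does not close for all $\gamma>2$. Writing $T=tx$ with $1<t<\gamma$, the two requirements become $(\gamma-t)^2/(2\gamma)\ge(\gamma-2)/2$, i.e.\ $t\le\gamma-\sqrt{\gamma(\gamma-2)}$, from the Chernoff piece $e^{-(\mu N-T)^2/(2N)}$, and $2(t-1)\ge(\gamma-2)/2$, i.e.\ $t\ge(\gamma+2)/4$, from $(q/p)^{T-x}\le e^{-2\mu(T-x)}$. These are compatible only when $7\gamma^2-20\gamma-4\le 0$, i.e.\ $\gamma\le(10+8\sqrt{2})/7\approx 3.05$; for larger $\gamma$ there is no admissible $T$. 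The $1/(2\gamma)$ of slack in the first-term exponent does not rescue this: a bound $e^{-uA}+e^{-uB}\le 2e^{-uC}$ valid for all $u=\mu x>0$ genuinely needs both $A\ge C$ and $B\ge C$, not merely $A+B\ge 2C$. And $\log(p/q)\ge 2\mu$ is essentially tight as $\mu\to 0$, so a sharper gambler's-ruin estimate does not help either.

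The argument does close if you drop the threshold split and instead run the Chernoff/MGF calculation of \lemref{BiasedWalkPosition} directly on the full quantity you have produced. Let $\lambda=\log(p/q)$. For every integer $z$ and every $0\le s\le\lambda$,
\begin{equation*}
\mathbb{I}(z\le 0)+(q/p)^{z}\,\mathbb{I}(z>0)\le e^{-sz},
\end{equation*}
since for $z\le 0$ the left side is $1$ and $e^{-sz}\ge 1$, while for $z>0$ the left side is $e^{-\lambda z}\le e^{-sz}$. Taking $z=X_N-x$ and expectations gives, for all $0\le s\le\lambda$,
\begin{equation*}
\Pr(\exists\,k\ge N:X_k\le x)=\Expect\Bigl[\mathbb{I}(X_N\le x)+(q/p)^{X_N-x}\mathbb{I}(X_N>x)\Bigr]\le e^{sx}\,\Expect\bigl[e^{-sX_N}\bigr]\le e^{sx-sN\mu+Ns^2/2},
\end{equation*}
the last step by Hoeffding's lemma exactly as in the proof of \lemref{BiasedWalkPosition}. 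The unconstrained minimiser $s^\dagger=\mu-x/N=\mu(\gamma-1)/\gamma$ satisfies $s^\dagger<\mu<2\mu\le\lambda$ by your own inequality $\log\frac{1+\mu}{1-\mu}\ge 2\mu$, so it is admissible; substituting gives $e^{-(\mu N-x)^2/(2N)}=e^{-\mu x(\gamma-1)^2/(2\gamma)}\le e^{-\mu x(\gamma-2)/2}$, closing your proof and in fact yielding the claimed bound with constant $1$ rather than $2$.
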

\begin{proof}
Let $Y_k = \mathbb{I}(X_k \le x)$.  From \lemref{BiasedWalkPosition},
\begin{equation*}
\Pr(Y_k = 0) \le \exp \left( -\frac{(k \mu - x)^2}{2 k} \right)
\end{equation*}
for $k \le x/\mu$
and
\begin{equation*}
\Pr(Y_k = 1) \le \exp \left( -\frac{(k \mu - x)^2}{2 k} \right)
\end{equation*}
for $k \ge x/\mu$.

Then the quantity to evaluate is
\begin{equation*}
\Pr\left( \sum_{k=1}^\infty Y_k \ge \gamma x / \mu \right).
\end{equation*}
We use a standard trick to split this into two mutually exclusive possibilities and then bound the probabilities separately.  Write
\begin{multline}
\Pr\left( \sum_{k=1}^\infty Y_k \ge \gamma x / \mu \right) =
\Pr\left( \left(\sum_{k=1}^\infty Y_k \ge \gamma x / \mu\right) \bigcap \left( \bigcap_{j=1}^{\gamma x/\mu} \left[Y_j=1\right]\right)\right) + \\ \Pr\left( \left(\sum_{k=1}^\infty Y_k \ge \gamma x / \mu\right) \bigcap \left( \bigcup_{j=1}^{\gamma x/\mu} \left[Y_j=0\right]\right)\right).
\end{multline}
We can bound the first term:
\begin{align*}
\Pr\left( \left(\sum_{k=1}^\infty Y_k \ge \gamma x / \mu\right) \bigcap \left( \bigcap_{j=1}^{\gamma x/\mu} \left[Y_j=1\right]\right)\right) &= \Pr\left(\bigcap_{k=1}^{\gamma x/\mu} Y_k=1\right) \\
&\le \Pr\left( Y_{\gamma x/\mu} = 1 \right) \\
&\le \exp\left( -\frac{\mu x(\gamma-1)^2 }{2\gamma}\right)\\
&\le \exp\left( -\frac{\mu x(\gamma -2)}{2} \right)
\end{align*}
The second term similarly:
\begin{align*}
\Pr\left( \left(\sum_{k=1}^\infty Y_k \ge \gamma x / \mu\right) \bigcap \left( \bigcup_{j=1}^{\gamma x/\mu} \left[Y_j=0\right]\right)\right) &\le \Pr\left( \bigcup_{k=\frac{\gamma x}{\mu}+1}^\infty \left[Y_k=1\right] \right) \\
&\le \sum_{k=\frac{\gamma x}{\mu}+1}^\infty \Pr\left(Y_k=1\right) \\
&\le \sum_{k=\frac{\gamma x}{\mu}+1}^\infty \exp\left( -\frac{(k \mu - x)^2}{2k}\right) \\
&\le \exp\left(-\frac{\mu x (\gamma-2)}{2} \right)\qedhere.
\end{align*}
\end{proof}
The last fact we need about asymmetric simple random walks is a bound on the probability of going backwards.  If $p>q$ then we expect the walk to go right in the majority of steps.  The probability of going left a distance $a$ is exponentially small in $a$.  This is a well known result, often stated as part of the gambler's ruin problem:
\begin{lemma}[See e.g.~\cite{GrimmettWelsh86}]
\label{lem:NotBackwards}
Consider an asymmetric simple random walk that starts at $a>0$ and has an absorbing barrier at the origin.  The probability that the walk eventually absorbs at the origin is $1$ if $p \le q$ and $\left( q/p \right)^a$ otherwise.
\end{lemma}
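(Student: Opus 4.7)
The plan is to prove this via the standard first-step / martingale approach. Let $h_a = \Pr(\text{walk is eventually absorbed at } 0 \mid X_0 = a)$ for $a \ge 0$, with $h_0 = 1$. By the strong Markov property, in order to reach $0$ starting from $a$, the walk must first reach $a-1$, then $a-2$, etc., and each of these descents is an independent copy of a walk started at height $1$ trying to reach $0$. Hence $h_a = h_1^a$, so it suffices to determine $h_1$. First-step analysis at level $1$ gives $h_1 = p\, h_2 + q = p\, h_1^2 + q$, i.e.\ $p\, h_1^2 - h_1 + q = 0$, whose roots are $h_1 = 1$ and $h_1 = q/p$.

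To select the correct root, I would argue separately in the three regimes. For $p > q$, consider the process $M_n := (q/p)^{X_n}$, which is a martingale since
\bes
\bbE[M_{n+1} \mid X_n] = p \cdot (q/p)^{X_n+1} + q \cdot (q/p)^{X_n-1} = (q/p)^{X_n}\bigl(q + p\bigr) = M_n .
\ees
Let $T$ be the absorption time at $0$ and $T_N$ the hitting time of $N$. Applying the optional stopping theorem to the bounded stopping time $T \wedge T_N$ (note $0 \le M_{T\wedge T_N}\le 1$ since $q/p<1$ and the walk stays in $[0,N]$),
\bes
(q/p)^a = \bbE[M_{T\wedge T_N}] = \Pr(T<T_N)\cdot 1 + \Pr(T_N<T)\cdot (q/p)^N .
\ees
By the law of large numbers $X_n/n \to p-q > 0$, so $T_N < \infty$ a.s.\ and $\Pr(T<T_N)\uparrow \Pr(T<\infty)$; letting $N\to\infty$ the second term vanishes, giving $\Pr(T<\infty) = (q/p)^a$, which forces the root $h_1 = q/p$.

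For $p < q$, the strong law gives $X_n/n\to p-q<0$, hence $X_n\to -\infty$ almost surely, so the walk must cross $0$ with probability $1$, giving $h_a=1$. For the symmetric case $p = q = 1/2$, the simple random walk on $\bbZ$ is recurrent (a textbook fact in \cite{GrimmettWelsh86}, provable via the reflection principle or by a second-moment calculation on return probabilities), so it hits $0$ almost surely and again $h_a=1$.

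The only subtle step is justifying the application of optional stopping in the $p>q$ case; I would handle this by stopping at $T\wedge T_N$ as above (so the stopping time is bounded once we further cap at time $n$ and let $n\to\infty$ by bounded convergence), then sending $N\to\infty$. Everything else is routine algebra and standard limit theorems, so no substantial obstacle arises.
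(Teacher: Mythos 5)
The paper does not prove this lemma; it cites \cite{GrimmettWelsh86} and notes in the preceding sentence that the statement is a well-known part of the gambler's ruin problem. Your proposal supplies the standard argument that the paper chose to leave to the textbook, and the proof you give is correct: the reduction $h_a = h_1^a$ via the strong Markov property and spatial homogeneity, the first-step quadratic $p\,h_1^2 - h_1 + q = 0$ with roots $1$ and $q/p$ (discriminant $(p-q)^2$), the exponential martingale $(q/p)^{X_n}$ with optional stopping at $T \wedge T_N$ to select the root $q/p$ when $p > q$, and transience-to-$-\infty$ (for $p<q$) or recurrence (for $p=q$) in the remaining regimes. One small imprecision worth cleaning up in the $p>q$ case: with an absorbing barrier at $0$, the hitting time $T_N$ is \emph{not} almost surely finite (it equals $\infty$ on the absorption event). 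What you actually need, and what the argument actually uses, is that $T\wedge T_N<\infty$ a.s.\ (the walk either reaches $0$ or, by comparison with the free walk whose drift is positive, eventually exits $[0,N]$), together with $\Pr(T<T_N)\uparrow\Pr(T<\infty)$ and $(q/p)^N\to 0$. With that adjustment the optional stopping step is fully justified and the argument is complete.
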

This result is for infinitely many steps.  If we only consider finitely many steps, the probability of absorption must be at most this.

\subsubsection{Waiting Time}

From above we saw that the probability of moving is at least $2x/5n$ when at position $x$.  The length of time spent waiting at each step is therefore stochastically dominated by a geometric distribution with parameter $2x/5n$.  The following concentration result will be used to bound the waiting time (in our case $\beta=2/5$):
\begin{lemma}
\label{lem:WaitingConc}
Let the waiting time at each site be $W(x) \sim Geo\left(\beta x/n\right)$, the total waiting time $W = \sum_{x=1}^t W(x)$ and $t' = \frac{n \ln t}{\beta}$.  Then
\begin{equation*}
\Pr(W \ge C t') \le 2 t^{(1-C)/2}.
\end{equation*}
\end{lemma}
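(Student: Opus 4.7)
The plan is to prove this via a Chernoff-style moment generating function bound, after stochastically dominating each geometric waiting time by an exponential random variable. Observe first that $\bbE[W] = \sum_{x=1}^t n/(\beta x) = (n/\beta) H_t = t' + O(n/\beta)$, so the lemma asserts concentration of $W$ around a constant times its mean, exactly what a Chernoff bound is designed for. The difficulty is that the summands $W(x)$ have widely varying parameters, so we must handle the sum as a whole rather than site-by-site (the naive union bound over the $W(x)$'s loses an extra factor of roughly $\sqrt{t}$).

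The key step is stochastic domination: for $V \sim \mathrm{Geo}(p)$ taking values in $\{1,2,\ldots\}$ we have $\Pr(V \ge k) = (1-p)^{k-1} \le e^{-(k-1)p}$, which is also the tail of $1 + E/p$ for $E \sim \mathrm{Exp}(1)$. Hence with independent $E_x \sim \mathrm{Exp}(1)$ we may couple so that $W \le t + S$ pointwise, where $S = (n/\beta) \sum_{x=1}^t E_x/x$. By independence, the moment generating function of $S$ is
\[
\bbE[e^{\lambda S}] = \prod_{x=1}^t \frac{1}{1 - \lambda n/(\beta x)},
\]
valid for $\lambda < \beta/n$. The crucial choice is $\lambda = \beta/(2n)$, which turns each factor into $2x/(2x-1)$, giving
\[
\bbE[e^{\lambda S}] = \prod_{x=1}^t \frac{2x}{2x-1} = \frac{4^t}{\binom{2t}{t}} \le 2\sqrt{t}
\]
by the standard inequality $\binom{2t}{t} \ge 4^t/(2\sqrt{t})$.

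Applying Markov's inequality to $e^{\lambda S}$ then yields $\Pr(S \ge Ct') \le 2\sqrt{t}\cdot e^{-\lambda C t'} = 2\sqrt{t}\cdot t^{-C/2} = 2 t^{(1-C)/2}$, which is the claimed bound. The coupling offset $W \le t + S$ (rather than $W \le S$) costs an extra multiplicative $e^{\lambda t} = e^{\beta t/(2n)}$ in the MGF, which is $O(1)$ in the regime $t \le n/\beta$ implicit in the setup (since $p_x \le 1$ requires $\beta x \le n$); this constant can be absorbed either by slightly shrinking $\lambda$ or by weakening $C$ by $O(1/\log t)$ without affecting the $t^{(1-C)/2}$ exponent. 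The main obstacle is the choice of $\lambda$: too small, and Markov's inequality wastes decay; too large (close to $\beta/n$), and the product MGF diverges. The value $\beta/(2n)$ is precisely the sweet spot that makes the product collapse to $4^t/\binom{2t}{t}$, which is then controlled by central binomial coefficient asymptotics.
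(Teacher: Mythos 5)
Your proposal is correct in all its essential ideas and is fundamentally the same approach as the paper: apply Markov's inequality to $e^{\lambda W}$, and choose $\lambda$ so that the moment generating function collapses to $\prod_{x=1}^t \frac{2x}{2x-1}$, then bound this by $2\sqrt{t}$. (Your central-binomial-coefficient bound and the paper's Gamma-function calculation in Lemma \ref{lem:GammaMGF} are the same inequality in two notations, since $\prod_{x=1}^t \tfrac{2x}{2x-1} = \tfrac{4^t}{\binom{2t}{t}} = \tfrac{t!\,\Gamma(1/2)}{\Gamma(t+1/2)}$.)

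The one place where you depart from the paper is the detour through exponential random variables. You couple $W(x) \le 1 + (n/\beta x) E_x$ and bound the MGF of $S = (n/\beta)\sum_x E_x/x$, which introduces the additive offset $t$ and hence an extraneous factor $e^{\lambda t}$. The paper avoids this by computing the geometric MGF directly: with $e^\lambda = (1 - \tfrac{\alpha\beta}{n})^{-1}$ and $\alpha = 1/2$, one gets $\mathbb{E} e^{\lambda W(x)} = \tfrac{x}{x - 1/2} = \tfrac{2x}{2x-1}$ exactly, with no offset. The paper's choice of $\lambda$ is in fact slightly larger than your $\beta/(2n)$ (since $-\ln(1-u) \ge u$), which is then exploited via $1-u \le e^{-u}$ to give $e^{-\lambda Ct'} \le t^{-C/2}$ cleanly. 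Your proposal only gives the stated bound up to the constant $e^{\lambda t} \le e^{1/2}$ (using the implicit constraint $t \le n/\beta$), which you acknowledge but do not fully dispatch; to actually recover the lemma as stated you would want to drop the exponential coupling and work directly with the geometric MGF, since the exact product structure you need is already there. So: same core strategy, but the coupling step adds an unnecessary constant-factor loss that the direct computation avoids for free.
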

\begin{proof}
By Markov's inequality for $\lambda > 0$,
\begin{equation*}
\Pr(W \ge C t') \le \frac{\mathbb{E} e^{\lambda W}}{e^{\lambda C t'}}.
\end{equation*}
The $W(x)$ are independent so
\begin{equation*}
\mathbb{E}e^{\lambda W} = \prod_{x=1}^t \mathbb{E} e^{\lambda W(x)}.
\end{equation*}
Summing the geometric series we find
\begin{equation*}
\mathbb{E} e^{\lambda W(x)} = \frac{\frac{\beta x}{n}}{e^{-\lambda}-1+\frac{\beta x}{n}}
\end{equation*}
provided $e^{\lambda} < \frac{1}{1-\frac{\beta x}{n}}$ for all $1 \le x \le t$.  Therefore $e^\lambda$ is of the form $\frac{1}{1-\frac{\alpha \beta}{n}}$ where $0 < \alpha < 1$.  With this,
\begin{equation*}
\mathbb{E} e^{\lambda W(x)} = \frac{x}{x-\alpha}
\end{equation*}
and
\begin{equation*}
\mathbb{E} e^{\lambda W} = \frac{t! \Gamma(1-\alpha)}{\Gamma(t+1-\alpha)}.
\end{equation*}
We are free to choose $\alpha$ within its range to optimise the bound.  However, for simplicity, we will choose $\alpha = 1/2$.  From \lemref{GammaMGF},
\begin{equation*}
\mathbb{E} e^{\lambda W} \le 2 \sqrt{t}.
\end{equation*}
The result follows, using the inequality $1-x \le e^{-x}$.
\end{proof}

\subsubsection{Phase 1}

Here we prove that phase 1 completes successfully with high probability.  The bias here is large so the walk moves right every time with high probability:
\begin{lemma}
\label{lem:Phase1MovesRight}
The probability that the accelerated chain moves right at each step, starting from $x=1$ for $t$ steps, is at least
\begin{equation*}
1 - t^2/n.
\end{equation*}
\end{lemma}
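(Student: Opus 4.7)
The plan is to use a straightforward conditioning argument followed by a union-bound estimate. Let $A_i$ denote the event that the walk has moved right at each of the first $i-1$ steps, so that conditioned on $A_i$ the walk is at position $i$ at the start of step $i$ (note $A_1$ holds trivially). Then by the chain rule,
\begin{equation*}
\Pr(A_{t+1}) = \prod_{i=1}^{t} \Pr(\text{moves right at step } i \mid A_i).
\end{equation*}
From the definition of the accelerated transition matrix $P_a$, the conditional probability of moving left from position $i$ is $(i-1)/(3n-2i-1)$, so the conditional probability of moving right is $1 - (i-1)/(3n-2i-1)$.

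Next I would apply the elementary inequality $\prod_i (1-x_i) \geq 1 - \sum_i x_i$ valid for $x_i \in [0,1]$, yielding
\begin{equation*}
\Pr(A_{t+1}) \geq 1 - \sum_{i=1}^{t} \frac{i-1}{3n-2i-1}.
\end{equation*}
Since $3n - 2i - 1 \geq n$ whenever $i \leq n-1$ (which is automatic in the regime of interest, where $t \leq n^\delta$ for $\delta < 1/2$), each summand is bounded by $(i-1)/n$. Summing gives $\sum_{i=1}^{t}(i-1)/n = t(t-1)/(2n) \leq t^2/n$, which delivers the claimed bound.

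The argument is essentially routine; the only mild subtlety is to remember that the walk is conditioned on always having moved right up to step $i$, so each ``probability of moving left'' is the conditional probability from position $i$, rather than from its actual (random) position. Once this is framed correctly, the estimate is just the product-to-sum inequality combined with the observation that the denominator $3n-2i-1$ is bounded below by a constant multiple of $n$ throughout Phase~1. No deeper obstacle is anticipated.
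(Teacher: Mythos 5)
Your proof is correct and follows essentially the same route as the paper: both write the probability of always moving right as the product $\prod_{i=1}^{t}\frac{3(n-i)}{3n-2i-1}$ of per-step transition probabilities and bound it below by $1-t^2/n$. The only difference is cosmetic — the paper lower-bounds each factor by $1-i/n$ and then applies Bernoulli's inequality, whereas you apply the Weierstrass product-to-sum inequality directly and then bound the resulting sum; both manipulations are valid and give the same final bound.
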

\begin{proof}
The probability of moving right at each step is
\bas
\prod_{x=1}^t \frac{3(n-x)}{3n-2x-1} &= \frac{(n-2)(n-3) \ldots (n-t)}{(n-5/3)(n-7/3) \ldots (n-(2t+1)/3)} \\
&\ge (1-2/n)(1-3/n) \ldots (1-t/n) \\
&\ge (1-t/n)^t \ge 1-t^2/n\qedhere
\eas
\end{proof}
Let $t=n^\delta$.  Provided $\delta < 1/2$ this probability is close to one.  Therefore, with high probability, the walk moves to $n^\delta$ in $n^\delta$ steps.  Using \lemref{WaitingConc} the waiting time can be bounded:
\begin{lemma}
\label{lem:Phase1WaitingTime}
Let $W^{(1)}$ be the waiting time during phase 1.  Let $H$ be the event that the walk moves right at each step.  Then
\begin{equation}
\Pr\left(W^{(1)} \ge C t' | H\right) \le 2 n^{\delta(1-C)/2}
\end{equation}
where $t' = \frac{5 \delta n \ln n}{2}$.
\end{lemma}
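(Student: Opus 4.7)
The plan is to apply the general waiting time concentration bound in \lemref{WaitingConc} directly to the sequence of sites that phase 1 visits under the event $H$. Under $H$, the walker starts at $x=1$ and, without ever stepping left, traverses the sites $1,2,\ldots,n^\delta$ exactly once each before exiting. Hence $W^{(1)} = \sum_{x=1}^{n^\delta} W(x)$, where $W(x)$ is the number of steps spent at site $x$ before moving.

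The first step is to identify the distribution of $W(x)$. In the zero chain, at site $x$ we hold with probability $P(x,x) = 1 - \frac{2x(3n-2x-1)}{5n(n-1)}$ and move on the next step with probability $1-P(x,x) \ge 2x/5n$. Thus $W(x)$ is geometrically distributed, and is stochastically dominated by $\mathrm{Geo}(2x/5n)$, which fits the hypotheses of \lemref{WaitingConc} with $\beta = 2/5$.

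The second step is to deal with the conditioning on $H$. The key observation is that, in the zero chain, whenever a move occurs at site $x$, the direction of that move (left vs.~right) depends only on $x$ and is independent of the number of time steps previously spent waiting at $x$. Consequently, conditioning on the event ``the next move from $x$ is to the right'' does not change the law of $W(x)$; conditioning on $H$, which is the conjunction of these events over $x=1,\ldots,n^\delta$, likewise leaves each $W(x)$ geometric with the same parameter and keeps them mutually independent. Therefore the distribution of $W^{(1)}$ under $H$ is exactly that of $W = \sum_{x=1}^{t} W(x)$ in \lemref{WaitingConc} with $t=n^\delta$ and $\beta=2/5$.

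The final step is just arithmetic: with these parameters \lemref{WaitingConc} gives $t' = \frac{n \ln t}{\beta} = \frac{5\delta n \ln n}{2}$, matching the statement, and the bound becomes $\Pr(W^{(1)} \ge Ct' \mid H) \le 2 t^{(1-C)/2} = 2 n^{\delta(1-C)/2}$. The only subtle point in the whole argument is the independence claim used to discard the conditioning on $H$; the rest is a direct substitution. I would write the proof in essentially one short paragraph invoking \lemref{WaitingConc}.
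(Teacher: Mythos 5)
Your proof is correct and follows the same route as the paper's, which simply invokes Lemma~\ref{lem:WaitingConc} with $t = n^\delta$ and $\beta = 2/5$, noting that under $H$ each site is hit exactly once. You usefully spell out why conditioning on $H$ does not alter the law of the waiting times (the hold-vs.-move decision at a site is independent of the direction of the move, given that a move occurs), a point the paper's one-line proof leaves implicit.
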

\begin{proof}
This follows directly from \lemref{WaitingConc}, since each site is hit exactly once.
\end{proof}

We now combine these two lemmas to prove that phase 1 completes successfully with high probability:
\begin{proof}[Proof of \lemref{Phase1Completes}]
In \lemref{Phase1MovesRight}, we show that in $n^\delta$ accelerated steps, the walk moves right at each step with probability $\ge 1 - n^{2\delta-1}$.  Call this event $H$.  Then $\Pr(H) \ge 1 - n^{2\delta - 1}$.  \lemref{Phase1WaitingTime} shows that the waiting time $W^{(1)}$ is bounded with high probability (choosing $C=3$):
\begin{equation*}
\Pr(W^{(1)} \le 15 n \delta \ln n/2 | H) \ge 1-2n^{-\delta}.
\end{equation*}
Then we can bound the probability of phase 1 completing successfully:
\begin{align*}
\Pr(\text{Phase 1 completes successfully}) &\ge \Pr(H \cap W^{(1)} \le 15 n \delta \ln n/2 ) \\
&= \Pr(H) \Pr(W^{(1)} \le 15 n \delta \ln n/2 | H) \\
&\ge (1-n^{2\delta - 1}) (1 - 2n^{-\delta}) \\
&\ge 1 - n^{2\delta-1} - 2 n^{-\delta}.\qedhere
\end{align*}
\end{proof}

\subsubsection{Phase 2}

Phase 2 starts at $n^\delta/2$ and finishes when the walk has reached $\theta n$ for some constant $0<\theta<3/4$.  We show that, with high probability, this also takes time $O(n \log n)$.  The probability of moving right during this phase is at least $p = \frac{3(1-\theta)}{3-2\theta}$.  We first define some constants that we will derive bounds in terms of.  Let $\gamma$ be a constant $>2$.  Let $\mu = p - (1-p)$ and $\tilde{\mu} = \mu/\gamma$.  Finally let $s = \tilde{\mu} t$ for some $t$ (which will be the number of accelerated steps).  Then, with high probability, the walk will have passed $s$ after $t$ steps:
\begin{lemma}
\label{lem:Phase2GetsToRightPlace}
Let $X_t$ be the position of the walk at accelerated step $t$, where $X_0 = n^\delta$.  Then
\begin{equation*}
\Pr(X_t \le s) \le \exp(-\mu^2 t (1-1/\gamma)^2/2).
\end{equation*}
\end{lemma}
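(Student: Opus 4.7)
The statement is a lower-tail Chernoff bound for the accelerated walk $X_t$ after $t$ steps, starting at $X_0 = n^\delta$. The plan is to reduce it directly to \lemref{BiasedWalkPositionWithBoundedBias}, which already provides precisely such a bound for any walk whose per-step probability of moving right is bounded below by $p$.

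First I would shift coordinates, setting $Y_i := X_i - X_0$ so that $Y_0 = 0$. The step at time $i$ is $+1$ with conditional probability $p(X_{i-1}) = 3(n-X_{i-1})/(3n-2X_{i-1}-1)$, and since $p(x)$ is monotonically decreasing in $x$, whenever $X_{i-1} \le \theta n$ we have $p(X_{i-1}) \ge p := 3(1-\theta)/(3-2\theta)$, i.e.\ per-step bias at least $\mu = 2p-1$. Inspection of the proof of \lemref{BiasedWalkPositionWithBoundedBias} shows that only an \emph{adaptive} lower bound on the conditional bias is required (each conditional MGF factor is compared against the worst-case Bernoulli MGF), so the lemma applies verbatim in our setting.

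The bound is then immediate: with $\eta := \mu t - s = \mu t(1-1/\gamma) > 0$ (using $\gamma > 1$) and $X_0 \ge 0$,
\bas
\Pr(X_t \le s) = \Pr(Y_t \le s - X_0) \le \Pr(Y_t \le s) = \Pr(Y_t \le \mu t - \eta) \le \exp\l(-\frac{\eta^2}{2t}\r) = \exp\l(-\frac{\mu^2 t(1-1/\gamma)^2}{2}\r).
\eas

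The only subtlety—and the main obstacle to this being completely formal—is that $p(x)$ drops below $p$ once $x > \theta n$, so the adaptive lower bound on the bias can fail on sample paths that overshoot the phase-2 region. The clean remedy is to couple $X$ with a modified walk that agrees with $X$ while in $[0,\theta n]$ and then proceeds with constant bias $\mu$ upon first exceeding $\theta n$; this modified walk satisfies the hypothesis of \lemref{BiasedWalkPositionWithBoundedBias} throughout, and any excursion above $\theta n$ followed by a return to position $\le s$ requires $\Omega(n)$ excess leftward steps in a right-biased region, an event itself exponentially unlikely and easily absorbed into the stated error.
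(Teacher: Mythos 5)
Your argument is the paper's own: shift coordinates by $n^\delta$, apply Lemma~\ref{lem:BiasedWalkPositionWithBoundedBias}, and set $\eta = \mu t - s = \mu t(1 - 1/\gamma)$.

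You are also right to flag the issue that the paper's proof glosses over: the accelerated chain's forward probability $p(x) = 3(n-x)/(3n-2x-1)$ is bounded below by $p = 3(1-\theta)/(3-2\theta)$ only while $x \le \theta n$, whereas Lemma~\ref{lem:BiasedWalkPositionWithBoundedBias} needs that bound at every step along the path. Your coupling sketch, however, does not close this gap. A modified walk $\tilde X$ whose forward probability is floored at $p$ above $\theta n$ is stochastically \emph{larger} than the true walk $X$ (couple so that $\tilde X$ moves right whenever $X$ does), hence $\Pr(\tilde X_t \le s) \le \Pr(X_t \le s)$ --- the inequality points the wrong way for bounding $\Pr(X_t \le s)$. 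And the remark that returning below $s$ after exceeding $\theta n$ costs $\Omega(n)$ excess leftward steps fails precisely where it is used: Lemma~\ref{lem:Phase2Completes} takes $\gamma = 3$ and $t = 3\theta n/\mu$, giving $s = \theta n$, so a single leftward step already suffices.

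The repair that does work is to notice that the downstream application only needs $\Pr(\tau > t)$, where $\tau$ is the first hitting time of $\theta n$ --- phase~2 terminates once $\theta n$ is reached. Let $\tilde X$ agree with $X$ up to $\tau$ and move right deterministically thereafter. Then $\tilde X$ has forward probability $\ge p$ at every step, and on $\{\tau > t\}$ one has $\tilde X_t = X_t < \theta n$, so $\{\tau > t\} \subseteq \{\tilde X_t \le s\}$ once $s \ge \theta n - 1$. Applying Lemma~\ref{lem:BiasedWalkPositionWithBoundedBias} to $\tilde X$ then gives exactly the tail bound that Lemma~\ref{lem:Phase2Completes} uses.
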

\begin{proof}
Let $X_t' = X_t - n^\delta$.  Then from \lemref{BiasedWalkPositionWithBoundedBias},
\begin{equation*}
\Pr(X_t' \le \mu t - \eta) \le \exp\left(-\frac{\eta^2}{2 t}\right).
\end{equation*}
Now let $\eta = \mu t - s$ and use
\begin{align*}
\Pr(X_t \le s) &= \Pr(X_t' \le s - n^\delta) \\
&\le \Pr(X_t' \le s)
\end{align*}
to complete the proof.
\end{proof}
We now prove a bound on the waiting time:
\begin{lemma}
\label{lem:Phase2WaitingTime}
Let $W^{(2)}$ be the waiting time in phase 2.  Then, assuming the walk does not go back beyond $n^\delta/2$,
\begin{equation}
\Pr\left(W^{(2)} \ge \frac{15 n \ln s}{\mu} \right) \le (4/s)^{3/2\mu} + 
\frac{2 \exp \left(\frac{-\mu n^\delta}{4}\right)}{1-\exp \left(\frac{-\mu}{2} \right)}.
\end{equation}
\end{lemma}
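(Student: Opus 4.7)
The plan is to combine stochastic control of the visit counts with an MGF computation analogous to the proof of \lemref{WaitingConc}. First, conditional on the accelerated-chain trajectory during phase 2, the unaccelerated waiting time at each visit to a position $x$ is stochastically dominated by a $\mathrm{Geo}(2x/5n)$ random variable, using the lower bound $1-P(x,x)\ge 2x/(5n)$ noted just after \lemref{ZeroChainTransitionMatrix}. Writing $N(x)$ for the number of accelerated steps spent at $x$ during phase 2 and $G_{x,i}$ for i.i.d.\ $\mathrm{Geo}(2x/5n)$ copies, this means $W^{(2)}$ is stochastically dominated by $\sum_{x=n^\delta/2}^{\theta n}\sum_{i=1}^{N(x)}G_{x,i}$.

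Second, I would uniformly control the cumulative visit counts $N_{\le x}:=\sum_{y\le x}N(y)$ using \lemref{NumberOfHits}. Because the per-step bias of the accelerated chain is at least $\mu$ throughout phase 2, coupling to a constant-bias-$\mu$ walk started at $n^\delta$ allows us to apply \lemref{NumberOfHits} with $\gamma=3$ for each integer $x$, giving $\Pr(N_{\le x}\ge 3x/\mu)\le 2e^{-\mu x/2}$. A union bound over $x\ge n^\delta/2$ (the assumption that the walk stays above $n^\delta/2$ is what lets us discard the small-$x$ range) then produces
\bes
\Pr\l(\exists\, x:\, N_{\le x}>3x/\mu\r)\le\sum_{x\ge n^\delta/2}2e^{-\mu x/2}=\frac{2e^{-\mu n^\delta/4}}{1-e^{-\mu/2}},
\ees
which is precisely the second term of the stated bound.

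Third, conditional on the event $\{N_{\le x}\le 3x/\mu\text{ for every }x\}$, I would bound $W^{(2)}$ via Markov's inequality on $e^{\lambda W^{(2)}}$. Setting $e^\lambda=1/(1-1/(5n))$ exactly as in the proof of \lemref{WaitingConc} gives $\bbE e^{\lambda G_{x,i}}\le 2x/(2x-1)$, hence
\bes
\bbE\l[e^{\lambda W^{(2)}}\,\big|\,\text{trajectory}\r]\le\prod_x\l(\frac{2x}{2x-1}\r)^{N(x)}.
\ees
Taking logarithms with $\log(1+1/(2x-1))\le 1/(2x-1)$ and applying Abel summation converts this into $\exp(\sum_x N_{\le x}\cdot 2/((2x-1)(2x+1)))$, and substituting $N_{\le x}\le 3x/\mu$ reduces the exponent to $\frac{3}{2\mu}\log(\Theta(n^{1-\delta}))$. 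Since $\lambda\cdot 15 n\ln s/\mu=3\ln s/\mu$ and $s\asymp\theta n/\gamma$ inside phase 2, Markov's inequality produces a bound of the form $(4/s)^{3/(2\mu)}$ once the constants are consolidated.

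The main obstacle is this final reconciliation of constants: one has to show that the Abel-summation prefactor $3x/\mu$, the harmonic-type inner sum $\sum 1/((2x-1)(2x+1))$, and the relation $s=\tilde\mu t$ together collapse to exactly the displayed constant $4$ in $(4/s)^{3/(2\mu)}$ rather than some other absolute constant. A minor secondary point is the stochastic-domination step in the second paragraph, where the accelerated-chain bias may strictly exceed $\mu$ and phase 2 actually starts at $n^\delta$ rather than the origin, but these shifts are absorbed into the union bound at the cost of at most constant factors in the exponent.
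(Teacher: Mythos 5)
Your proposal follows the paper's architecture for steps one and two almost exactly --- the geometric stochastic domination of waiting times (using $1-P(x,x)\ge 2x/5n$), and the control of cumulative visit counts via \lemref{NumberOfHits} and a union bound over $x\ge n^\delta/2$, which produces the second term verbatim. The genuine divergence is in step three. The paper proves a majorisation statement ($\mathbf{N}(\mathbf{X})\majr\mathbf{N}(\mathbf{X}_0)$ on the event $H$, where $\mathbf{X}_0$ is the canonical trajectory visiting each of $1,\dots,s$ exactly $1/\tilde\mu$ times), argues that total waiting time is monotone under majorisation, and then routes the canonical trajectory through \lemref{WaitingConc} applied $1/\tilde\mu$ times. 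You instead bound the conditional MGF directly, expose $\log\prod_x(2x/(2x-1))^{N(x)}$ as a linear functional of the cumulative counts $N_{\le x}$ via Abel summation, and substitute $N_{\le x}\le 3x/\mu$. These are two different algebraic renderings of the same underlying fact: the MGF bound is monotone in the $N_{\le x}$'s, so the worst case on $H$ is the boundary of the constraint set. Your route is arguably more transparent because it makes the monotonicity explicit rather than packaging it as a stochastic-domination statement about a canonical trajectory.

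The obstacle you flag about reconciling constants is not in fact an obstacle. If you carry the Abel summation through, the inner sum $\sum_{x}2x/((2x-1)(2x+1))$ runs over $x$ from roughly $n^\delta/2$ to $s$, which is $\frac12\log(\Theta(s/n^\delta))$, \emph{smaller} than the $\frac12\log s$ that the paper's canonical trajectory effectively produces (since that trajectory fills in $x=1,\dots,n^\delta/2$ as well). After dividing by $e^{\lambda T}\ge s^{3/\mu}$ you get something of order $(2/(n^\delta s))^{3/(2\mu)}$, which is bounded above by $(4/s)^{3/(2\mu)}$ because $n^\delta\ge 1/2$; so your route actually yields the stated bound with room to spare, rather than requiring the constants to collapse to $4$ exactly. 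Your second caveat (bias may exceed $\mu$; walk starts at $n^\delta$, not the origin) is a real loose end but it is the same one present in the paper's own invocation of \lemref{NumberOfHits}, and is resolved by the same observations: \lemref{BiasedWalkPositionWithBoundedBias} handles nonconstant bias $\ge\mu$, and starting above the level $x$ only reduces visits to $\{\le x\}$. So the proposal is a correct alternative whose only gaps are shared with the original or are resolvable by finishing the arithmetic you describe.
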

\begin{proof}
Let $W_k \sim Geo\left(\frac{2 X_k}{5n}\right)$ where $X_k$ is the position of the walk at accelerated step $k$ ($X_0 = n^\delta$).  We want to bound (w.h.p.) the waiting time $W^{(2)} = \sum_{k=1}^t W_k$ of $t$ steps of the accelerated walk.  

Define the event $H$ to be
\begin{equation}
H = \left\{ \bigcap_{x \ge n^\delta/2} \left[ \sum_{k=1}^{\infty} \mathbb{I}(X_k \le x) \le x/\tilde{\mu} \right] \right\}.
\end{equation}
If $H$ occurs, no sites have been hit too often and the walk has not gone back further than $n^\delta/2$.  It is important that we also use the restriction that $X_k \ge n^\delta/2$ because the waiting time grows the longer the walk moves back.  However, it is very unlikely that the walk will go backwards (even to $n^\delta/2$).

We now define some more notation to bound the waiting time.  Let $\mathbf{X} = \\ (X_1, X_2, \ldots, X_t)$ be a tuple of positions and let $N_x(\mathbf{X})$ be the number of times that $x$ appears in $\mathbf{X}$ and let $\mathbf{N}(\mathbf{X}) = (N_1(\mathbf{X}), N_2(\mathbf{X}), \ldots, N_n(\mathbf{X}))$.  Then we have $\sum_x N_x(\mathbf{X}) = t$.

As we said above, the waiting time at $x=a$ stochastically dominates the waiting time at $x=b$ for $b \ge a$.  In other words,
\begin{equation}
\label{eq:WaitingTimeStocDom}
W_k \stocdoml W_{k'} \text{ if } X_k \le X_{k'}
\end{equation}
where $X \stocdoml Y$ means that $X$ stochastically dominates $Y$.  Now write the waiting time for all steps
\begin{align}
W^{(2)}(\mathbf{X}) &= \sum_{k=1}^t W_k \nonumber \\
&= \sum_x \sum_{h=1}^{N_x(\mathbf{X})} W_h(x)
\label{eq:WaitingTimeX}
\end{align}
where $W_h(x) \sim Geo\left(\frac{2 x}{5n}\right)$.

If event $H$ occurs, we can put some bounds on $N_x$.  We find that, for all $x \ge n^\delta/2$,
\begin{equation}
\sum_{y = n^\delta/2}^x N_y(\mathbf{X}) \le x/\tilde{\mu}
\end{equation}
and $N_x(\mathbf{X}) = 0$ for $x < n^\delta/2$.  Now let $\mathbf{X}_m$ be such that $N_{n^\delta/2}(\mathbf{X}_m) = \frac{n^\delta}{2 \tilde{\mu}}$ and $N_x(\mathbf{X}_m) = 1/\tilde{\mu}$ for $x > n^\delta/2$.  Then
\begin{equation}
\sum_{y = n^\delta/2}^x N_y(\mathbf{X}_m) = x/\tilde{\mu}.
\end{equation}
Now we introduce the relation $\majr$:
\begin{definition}
Let $\mathbf{x}$ and $\mathbf{y}$ be $n$-tuples.  Then $x \majr y$ if
\be
\sum_{i = 1}^k x_i \le \sum_{i=1}^k y_i
\ee
for all $1 \le k \le n$ with equality for $k=n$.
\end{definition}
Note that this is like majorisation, except the elements of the tuples are not sorted.
Using this, we find that $\mathbf{N}(\mathbf{X}) \majr \mathbf{N}(\mathbf{X}_m)$ (Using $\sum_y N_y(\mathbf{X}) = \sum_y N_y(\mathbf{X'}) = t$ for all $\mathbf{X}, \mathbf{X'}$.)

If we combine Equations \ref{eq:WaitingTimeStocDom} and \ref{eq:WaitingTimeX} we find that $W^{(2)}(\mathbf{X}) \stocdoml W^{(2)}(\mathbf{X}')$ if $\mathbf{N}(\mathbf{X}) \majl \mathbf{N}(\mathbf{X'})$.  Roughly speaking, this is simply saying that the waiting time is larger if the earlier sites are hit more often.  But since for all $\mathbf{X}$ that satisfy $H$, $\mathbf{X} \majr \mathbf{X}_m$, we have $W^{(2)}(\mathbf{X}) \stocdomr W^{(2)}(\mathbf{X}_m)$ provided $H$ occurs.  We will simplify further by noting that $\mathbf{X_m} \majr \mathbf{X}_0$ where $N_x(\mathbf{X}_0) = 1/\tilde{\mu}$ for $1 \le x \le \tilde{\mu} t = s$ and zero elsewhere.  Therefore
\begin{equation*}
\Pr\left(W^{(2)}(\mathbf{X}) \ge \frac{5C n \ln s}{2\tilde{\mu}} \bigg| H\right) \le \Pr\left(W^{(2)}(\mathbf{X_0}) \ge \frac{5C n \ln s}{2\tilde{\mu}} \right).
\end{equation*}
We can bound this by applying \lemref{WaitingConc}.  Let $W_h = \sum_{x=1}^{s} W_h(x)$.  From \lemref{WaitingConc},
\begin{equation}
\Pr(W_h \ge C t') \le 2 s^{\frac{1-C}{2}}
\end{equation}
where $t' = \frac{5n \ln s}{2}$.  However, we want a bound on $\Pr\left(\sum_{h=1}^{1/\tilde{\mu}} W_h \ge C t' / \tilde{\mu}\right)$.  The same reasoning as in \lemref{WaitingConc} bounds this as
\begin{equation}
\Pr\left(\sum_{h=1}^{1/\tilde{\mu}} W_h \ge C t' / \tilde{\mu}\right) \le \left(2 s^{\frac{1-C}{2}}\right)^{1/\tilde{\mu}}.
\end{equation}
Therefore 
\begin{equation}
\Pr\left(W^{(2)}(\mathbf{X_0}) \ge \frac{5C n \ln s}{2\tilde{\mu}} \right) \le 2^{1/\tilde{\mu}} s^{\frac{(1-C)/2}{\tilde{\mu}}}.
\end{equation}

To complete the proof, we just need to find $\Pr(H^c)$.  We can bound it using the union bound and \lemref{NumberOfHits}:

\begin{align*}
\Pr(H^c) &= \Pr\left(\bigcup_{x=n^\delta/2}^{n} \left[\sum_{k=1}^\infty \mathbb{I}(X_k \le x) > x/\tilde{\mu}\right] \right) \\
&\le \sum_{x=n^\delta/2}^n \Pr\left(\sum_{k=1}^\infty \mathbb{I}(X_k \le x) \ge x/\tilde{\mu}\right) \\
&\le \sum_{x=n^\delta/2}^n 2 \exp\left(\frac{-\mu x(\gamma -2)}{2}\right) \\
&\le \sum_{x=n^\delta/2}^\infty 2 \exp\left(\frac{-\mu x(\gamma -2)}{2}\right) \\
&= \frac{2 \exp\left(\frac{-\mu n^\delta(\gamma-2)}{4}\right)}{1-\exp\left(\frac{-\mu (\gamma-2)}{2}\right)} \\
\end{align*}
Now, for any events $A$ and $B$
\begin{align*}
\Pr(A) &= \Pr(A \cap B) + \Pr(A \cap B^c) \\
&= \Pr(A | B) \Pr(B) + \Pr(A \cap B^c) \\
&\le \Pr(A | B) + \Pr(B^c)
\end{align*}
and set $C=2$ and $\gamma=3$ to obtain the result.
\end{proof}

We now combine these two lemmas to prove that phase 2 completes successfully with high probability:
\begin{proof}[Proof of \lemref{Phase2Completes}]
Phase 2 can fail if:
\begin{itemize}
\item{The walk does not reach $\theta n$.  The probability of this is bounded by \lemref{Phase2GetsToRightPlace}:
\begin{equation*}
\Pr(X_t \le \theta n) \le \exp\left(-\frac{2}{3} \mu \theta n\right).
\end{equation*}
This follows from setting $t = \frac{3 \theta n}{\mu}$ and $\gamma = 3$.}
\item{The waiting time is too long.  This probability is bounded by \lemref{Phase2WaitingTime}:
\begin{equation*}
\Pr\left(W^{(2)} \ge \frac{15n \ln(\theta n)}{\mu}\right) \le \left(\frac{4}{\theta n}\right)^\frac{3}{2\mu} + \frac{2 \exp\left(\frac{-\mu n^\delta}{4}\right)}{1-\exp(-\mu/2)} + (q/p)^{n^\delta/2}.
\end{equation*}}
\item{The walk gets back to $n^\delta/2$.  This is bounded by \lemref{NotBackwards}:
\begin{equation*}
\Pr\left(\text{Walk gets to $n^\delta/2$}\right) \le \left(q/p \right)^{n^\delta/2}.
\end{equation*}}
\end{itemize}
So, using the union bound we can bound the overall probability of failure:
\begin{equation*}
\Pr(\text{Phase 2 fails}) \le \exp\left(-\frac{2}{3} \mu \theta n\right) + \left(\frac{4}{\theta n}\right)^\frac{3}{2\mu} + \frac{2 \exp\left(\frac{-\mu n^\delta}{4}\right)}{1-\exp(-\mu/2)} + \left(q/p \right)^{n^\delta/2}.\qedhere
\end{equation*}
\end{proof}

\subsubsection{Phase 3}

This phase starts at $\theta n$.  We show that this mixes quickly using log-Sobolev arguments.  

\begin{lemma}
\label{lem:MixingPhase3}
The zero chain on the restricted state space $x \in [m, n]$ where $m = \theta n$ for $\theta > 0$ has mixing time $O\left(n \log \frac{n}{\eps}\right)$.
\end{lemma}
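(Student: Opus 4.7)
The plan is to apply the Sobolev-based mixing bound \eq{SobolevMixingTime}, which gives $\tau(\eps) = O(\rho^{-1}\log\log(1/\pi_*) + \Delta^{-1}\log(d/\eps))$. On the restricted state space $[m,n]$ with $m = \theta n$, the state space has size $d = O(n)$, and the smallest stationary mass $\pi_*$ (attained at $x = m$) is only exponentially small in $n$, so $\log(1/\pi_*) = O(n)$. Thus, to obtain the desired $O(n\log(n/\eps))$ mixing time it suffices to establish $\rho = \Omega(1/n)$ and $\Delta = \Omega(1/n)$ for the restricted chain.

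For the spectral gap I would use the one-dimensional comparison formula \eq{ComparisonWalk}. On $[m,n]$ the moving probability $1-P(x,x) = \frac{2x(3n-2x-1)}{5n(n-1)}$ is $\Theta(1)$ everywhere (since $x = \Theta(n)$), and there is a drift toward the peak of $\pi_3$ near $3n/4$. A simple birth--death reference chain on $[m,n]$ with the same stationary distribution and constant laziness has gap $\Omega(1/n)$ (by a standard conductance argument; every cut has conductance $\Omega(1/n)$), and its forward transition probability agrees with $P(x,x+1)$ up to a $\Theta(1)$ factor. Hence the comparison constant $A$ in \eq{ComparisonWalk} is $O(1)$ and $\Delta = \Omega(1/n)$.

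For the log-Sobolev constant I would use a product-chain comparison. Consider the product chain on $\{0,1\}^n$ where at each step one coordinate is chosen uniformly at random and independently resampled as Bernoulli$(3/4)$. Each single-coordinate chain has log-Sobolev constant $\Theta(1)$, so by \lemref{ProductChain} the product chain has log-Sobolev constant $\Omega(1/n)$. Projecting by ``count the ones'' yields a birth--death chain on $\{0,\ldots,n\}$ with stationary distribution $\binom{n}{x}(3/4)^x(1/4)^{n-x} \propto \binom{n}{x}3^x$, matching $\pi$; projections of Markov chains by a function preserve (or improve) the log-Sobolev constant. The forward transition probabilities of this projected chain and the zero chain differ by only $\Theta(1)$ factors on $x = \Theta(n)$ (both scale like $(n-x)/n$ times a $\Theta(1)$ prefactor), so the log-Sobolev version of the comparison theorem (\thmref{Comparison}) transfers the bound, giving $\rho = \Omega(1/n)$ for the zero chain on this range. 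Finally, restricting to $[m,n]$ changes only the self-loop at the boundary $x = m$; since the mass outside $[m,n]$ is exponentially small in $n$, this restriction degrades $\rho$ only by a constant factor.

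The main obstacle will be making the product-chain-to-zero-chain comparison and the subsequent boundary restriction fully rigorous: both the ``quotient chain inherits log-Sobolev'' step and the ``restriction to a high-mass set preserves log-Sobolev up to constants'' step are standard but require care with the modified self-loop probability at $x=m$ and with how the constants depend on $\theta$. A secondary subtlety is that the zero chain's up-transition probability scales as $x(n-x)$ rather than just $(n-x)$, but on the restricted range the extra factor $x/n$ is $\Theta(1)$, so no damage is done. Once these technicalities are handled, combining $\rho = \Omega(1/n)$, $\Delta = \Omega(1/n)$, $\log(1/\pi_*) = O(n)$ and $d = O(n)$ in \eq{SobolevMixingTime} immediately yields $\tau(\eps) = O(n\log(n/\eps))$.
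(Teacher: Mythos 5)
Your approach to the log-Sobolev constant is essentially the same as the paper's: compare to a product chain with independent single-coordinate mixing (yours on $\{0,1\}^n$ with Bernoulli$(3/4)$ coordinates, the paper's on $\{0,1,2,3\}^n$ with uniform mixing -- equivalent after the counting projection), invoke \lemref{ProductChain} to get $\rho = \Omega(1/n)$, observe that projection by a function can only increase the infimum defining $\rho$, and transfer to the zero chain via the comparison theorem with comparison constant $A = O(1/\theta)$. The boundary-restriction subtlety you flag is real; the paper handles it by restricting $Q_0$ and $P$ in the same way (adjusting the self-loop at $m$) so that $Q_0'$ and $P'$ share the stationary distribution $\pi'$ and can be compared directly, though it does not explicitly argue that restriction preserves $\rho_{Q_0}$ up to constants.

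The one step that would fail as written is the spectral-gap argument via conductance. Cheeger's inequality gives $\Delta \ge \Phi^2/2$, so conductance $\Omega(1/n)$ yields only $\Delta = \Omega(1/n^2)$; the lazy simple random walk on a path of length $n$ has $\Phi = \Theta(1/n)$ but gap $\Theta(1/n^2)$, so conductance alone cannot give a linear bound. The conclusion $\Delta = \Omega(1/n)$ is correct here because of the drift, but it needs a different justification. Two clean options: run exactly the same product-chain comparison for the gap, as the paper does (\lemref{ProductChain} gives gap $1/n$ for $Q$, projection preserves it, comparison transfers it to $P'$); or invoke the general reversible-chain inequality $\rho \le \Delta/2$ from \cite{DiaconisSaloff-Coste96}, which gives $\Delta \ge 2\rho = \Omega(1/n)$ directly from your log-Sobolev bound and renders the separate birth--death reference chain unnecessary.
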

\begin{proof}
We restrict the Markov chain to only run from $m$ by adjusting the holding probability at $m$, $P(m, m)$.  Construct the chain $P'$ with transition matrix
\begin{equation}
P'(x, y)=
\begin{cases}
0 & x < m \, \text{or} \, y < m \\
1 - P(m, m+1) & x = y = m \\
P(x, y) & \text{otherwise}
\end{cases}
\end{equation}
where $P$ is the transition matrix of the full zero chain.  This chain then has stationary distribution
\begin{equation}
\pi'(x)=
\begin{cases}
\pi(x)/(1-w) & m \le x \le n \\
0 & \text{otherwise}
\end{cases}
\end{equation}
where $w = \sum_{x=1}^{m-1} \pi(x)$.  To see this, first note that the distribution is normalised.  We want to show that 
\begin{equation}
\label{eq:Phase3StatDistribDef}
\sum_{x = m}^n P'(x, y) \pi'(x) = \pi'(y).
\end{equation}
When $y = m$ we are required to prove that $P'(m, m) \pi'(m) + P'(m+1, m) \pi'(m+1) = \pi'(m)$.  This follows from the reversibility of the unrestricted zero chain, using $P'(m, m) = 1-P(m, m+1)$.  For $y>m$, \eq{Phase3StatDistribDef} is satisfied simply because $\pi(x)$ is the stationary distribution of $P$ and related by a constant factor to $\pi'(x)$.

We can now prove this final mixing time result, making use of \lemref{ProductChain}.  Let $Q_i$ be the chain that uniformly mixes site $i$.  This converges in one step and has a log-Sobolev constant independent of $n$; call it $\rho_1$.  Let $Q$ be the chain that chooses a site at random and then uniformly mixes that site.  This is the product chain of the $Q_i$ so, by \lemref{ProductChain}, has gap $1/n$ and Sobolev constant $\rho_Q = \rho_1/n$.  We can construct the zero chain for this and find its Sobolev constant.

The Sobolev constant is defined (\defref{LogSobolev}) in terms of a minimisation over functions on the state space.  For the chain $Q$ we can write
\begin{equation*}
\rho_Q = \inf_{\phi} f(\phi).
\end{equation*}
If we restrict the infimum to be over functions $\phi$ with $\phi(x) = \phi(y)$ for $x$ and $y$ containing the same number of zeroes then we obtain the Sobolev constant for the zero-Q chain, $\rho_{Q_0}$, which is the chain which counts the number of zeroes in the full chain Q.  Since taking the infimum over less functions cannot give a smaller value,
\begin{equation*}
\rho_{Q_0} \ge \rho_Q \ge \rho_1/n.
\end{equation*}
We can now compare this chain to the zero-$P$ chain.  The stationary distributions are the same.  The transition matrix for the zero-$Q$ chain is
\begin{equation*}
Q_0(x,y) = 
\begin{cases}
\frac{n+2x}{4n} & y = x \\
\frac{x}{4n} & y = x-1 \\
\frac{3(n-x)}{4n} & y = x+1 \\
0	&	{\rm otherwise} \\
\end{cases}
\end{equation*}
Then construct $Q_0'$ by restricting the space to only run from $m$ in exactly the same way as $P'$ is constructed from $P$.  $Q_0'$ has the same stationary distribution as $P'$.  Now we can perform the comparison.  From \eq{ComparisonWalk}:
\begin{align*}
A &= \max_{a \ge m} \frac{Q_0'(a,a+1)}{P'(a,a+1)} \\
&= \max_{a \ge m} \frac{5(n-1)}{8a} \le \frac{5}{8\theta}.
\end{align*}
Therefore $\rho_{P'} \ge \frac{8 \theta \rho_1}{5n}$.  Exactly the same argument applies to show the gap is $\Omega(1/n)$ so the mixing time is (from \eq{SobolevMixingTime}) $O(n \log \frac{n}{\eps})$.
\end{proof}

Now we can prove that phase 3 completes successfully with high probability:
\begin{proof}[Proof of \lemref{Phase3Completes}]
In \lemref{MixingPhase3}, we show that after $O\left(n \log \frac{n}{\eps}\right)$ steps the chain mixes to distance $\eps$.  We just need to show that the walk goes back to $\theta n/2$ with small probability.  This follows from \lemref{NotBackwards}.
\end{proof}

\subsection{Moment Generating Function Calculations}

The following lemma is needed in the moment generating function calculations.
\begin{lemma}
\label{lem:GammaMGF}
For Integer $s > 0$,
\begin{equation}
\label{eq:GammaS}
\frac{\Gamma(s+1)\Gamma(1/2)}{\Gamma(s+1/2)} \le 2 \sqrt{s}
\end{equation}
\end{lemma}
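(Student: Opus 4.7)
My plan is to prove the inequality by induction on $s$. The key is that the ratio $\Gamma(s+1)/\Gamma(s+1/2)$ satisfies a clean recurrence via the functional equation $\Gamma(z+1)=z\Gamma(z)$, and the desired bound $2\sqrt{s}$ also has a simple comparison under $s\mapsto s+1$.

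For the base case $s=1$, I would compute directly: $\Gamma(2)=1$, $\Gamma(1/2)=\sqrt{\pi}$, and $\Gamma(3/2)=\sqrt{\pi}/2$, so the left-hand side equals $2$, matching $2\sqrt{1}$ with equality. For the inductive step, assuming the bound holds for $s$, I write
\begin{equation*}
\frac{\Gamma(s+2)\Gamma(1/2)}{\Gamma(s+3/2)} = \frac{(s+1)\Gamma(s+1)\Gamma(1/2)}{(s+1/2)\Gamma(s+1/2)} = \frac{s+1}{s+1/2}\cdot \frac{\Gamma(s+1)\Gamma(1/2)}{\Gamma(s+1/2)} \le \frac{s+1}{s+1/2}\cdot 2\sqrt{s}.
\end{equation*}
It then suffices to check $\frac{s+1}{s+1/2}\sqrt{s}\le \sqrt{s+1}$, which after squaring and clearing denominators reduces to $(s+1)s\le(s+1/2)^2$, i.e., $s^2+s\le s^2+s+1/4$, which is immediate.

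There is essentially no obstacle: the inductive step collapses to the trivial inequality $0\le 1/4$. If a more conceptual proof were preferred, I would instead use the half-integer evaluation $\Gamma(s+1/2)=\frac{(2s)!\sqrt{\pi}}{4^s\,s!}$ to rewrite the left-hand side as $4^s/\binom{2s}{s}$, reducing \eqref{eq:GammaS} to the standard central-binomial bound $\binom{2s}{s}\ge 4^s/(2\sqrt{s})$, which itself admits a one-line inductive proof along the same lines. Either route gives the claim with equality at $s=1$, confirming the bound is tight at the base case.
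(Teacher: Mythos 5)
Your proof is correct and is essentially the same induction as in the paper: the paper first rewrites the left-hand side as $\prod_{x=1}^s \frac{2x}{2x-1}$ and then inducts, whereas you apply $\Gamma(z+1)=z\Gamma(z)$ directly, but the inductive factor $\frac{s+1}{s+1/2}=\frac{2(s+1)}{2(s+1)-1}$ and the resulting elementary inequality to verify are identical in both. The alternative route you sketch via $\Gamma(s+1/2)=\frac{(2s)!\sqrt{\pi}}{4^s s!}$ and the central binomial bound is likewise a reformulation of the same product identity.
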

\begin{proof}
From expanding the $\Gamma$ functions, \eq{GammaS} becomes
\begin{align*}
\frac{s! 2^s}{(2s-1)!!} &= \frac{2 \times 4 \times 6 \times \ldots \times 2(s-1) \times 2s}{1 \times 3 \times 5 \times \ldots \times (2s-3) \times (2s-1)} \\
&= \prod_{x=1}^s \frac{2x}{2x-1}
\end{align*}
We then proceed by induction.  $\prod_{x=1}^1 \frac{2x}{2x-1} = 2$ and by the inductive hypothesis
\begin{equation*}
\prod_{x=1}^{s+1} \frac{2x}{2x-1} \le \frac{2(s+1)}{2(s+1)-1} 2\sqrt{s}.
\end{equation*}
It is easy to show that $\frac{2(s+1)}{2(s+1)-1} \le \sqrt{\frac{s+1}{s}}$ and the result follows.
\end{proof}

\subsection{Mixing Times}

We find bounds for the mixing time above that are valid with high probability.  Below we turn these into full mixing time bounds.
\begin{lemma}
\label{lem:DistanceToStat}
If after $O(n \log n)$ steps the state $v$ of a random walk satisfies
\begin{equation*}
|| v - \pi || \le \delta
\end{equation*}
where $\pi$ is the stationary distribution and $\delta$ is $1/poly(n)$ then the number of steps required to be at most a distance $\eps$ from stationarity is
\begin{equation*}
O\left(n \log \frac{n}{\eps}\right).
\end{equation*}
\end{lemma}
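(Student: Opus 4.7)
The plan is to exploit the standard submultiplicativity of total variation distance to stationarity for Markov chains. Writing $d(t) := \max_{\mu_0} \| P^t \mu_0 - \pi \|$ for the worst-case (half) TV distance at time $t$, one has the classical inequality $d(s+t) \le 2 d(s) d(t)$, which gives geometric boost once any $d(t_0) < 1/2$ is established. I would begin by observing that the hypothesis, applied uniformly across starting states of the chain in question, gives $d(t_0) \le \delta$ for $t_0 = O(n \log n)$ and $\delta = 1/\poly(n)$.

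Iterating submultiplicativity $k$ times yields $d(k t_0) \le (2 \delta)^{k}/2$. To force this below $\eps$ it suffices to take
\[ k \ge \left\lceil \frac{\log(1/\eps)}{\log(1/(2\delta))} \right\rceil. \]
Since $\delta = 1/\poly(n)$, we have $\log(1/(2\delta)) = \Omega(\log n)$, so $k = O\bigl(\log(1/\eps)/\log n\bigr)$ suffices. The total number of steps is then
\[ k \, t_0 = O\!\left( \frac{\log(1/\eps)}{\log n} \cdot n \log n \right) = O(n \log(1/\eps)), \]
which combined with the initial $t_0 = O(n \log n)$ phase gives $O(n \log n + n \log(1/\eps)) = O(n \log(n/\eps))$, matching the claim.

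The one mild subtlety -- and the only real step requiring care -- is that the hypothesis is stated in terms of a single state $v$, while the submultiplicativity inequality is naturally phrased in terms of the worst-case distance $d(t_0)$. For the applications in this chapter (the zero chain and the full chain) the bound $\|v-\pi\|\le \delta$ is actually established for any starting state, so upgrading to $d(t_0) \le \delta$ costs nothing. If one wanted a self-contained lemma, one could either state it for $d$ from the outset or note that $\|P^t \mu_0 - \pi\|$ is monotone non-increasing in $t$ and apply submultiplicativity once $d(t_0) \le 1/2$ has been verified. Other than this bookkeeping point, the proof is a direct algebraic manipulation, so I do not anticipate any genuine obstacle.
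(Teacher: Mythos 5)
Your proof is correct and follows essentially the same amplification idea as the paper: run the chain in blocks of length $t_0 = O(n\log n)$, use each block to shrink the distance to stationarity by a factor of roughly $\delta$, and choose the number of blocks so that the total distance drops to $\eps$. The paper carries this out by asserting that after $t_0$ steps the state can be written as the mixture $(1-\delta)\pi + \delta s$ and that iterating sends $\delta\mapsto\delta^k$; you instead invoke the standard submultiplicativity of (worst-case) total variation distance, $d(s+t)\le 2\,d(s)\,d(t)$.

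If anything, your route is the more careful one. The mixture decomposition $v=(1-\delta)\pi+\delta s$ with $s$ a genuine probability distribution is not implied by $\|v-\pi\|_{\mathrm{TV}}\le\delta$ alone (it would follow from a Doeblin-type minorisation $P^{t_0}(x,\cdot)\ge(1-\delta)\pi(\cdot)$, which is a strictly stronger condition), so the paper's stated argument has a small rigour gap that submultiplicativity of $\bar d$ (together with $d\le\bar d\le 2d$) closes cleanly. The factor of $2$ you pick up per block is harmless because $\delta=1/\mathrm{poly}(n)$. Your observation that the hypothesis must be read as holding uniformly over starting states is the same bookkeeping point the paper glosses over by speaking of "the slowest mixing initial state." The final arithmetic $kt_0=O(n\log(n/\eps))$ matches, though note that since $k\ge 1$ the quantity $kt_0$ already dominates the initial $t_0$, so the separate "combined with the initial phase" step is redundant rather than necessary.
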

\begin{proof}
Let $s$ be the slowest mixing initial state.  Then, after $t = O(n \log n)$ steps we have at worst the state
\begin{equation*}
(1-\delta)\pi+\delta s
\end{equation*}
and if we repeat $k t$ times $\delta$ becomes $\delta^k$.  So to get a distance $\eps$, $k = \left\lceil\frac{\log \eps}{\log \delta}\right\rceil$.

Now we evaluate the mixing time:
\begin{align*}
k t = O(n \log n) \left\lceil\frac{\log \eps}{\log \delta}\right\rceil &= O(n \log n) \left\lceil\frac{\log 1/\eps}{\log 1/\delta}\right\rceil \\
&= O(n \max(\log n, \log 1/\eps)) \\
&= O\left(n \log \frac{n}{\eps}\right).
\qedhere
\end{align*}
\end{proof}

\chapter{Quantum Tensor Product Expanders and an Efficient Unitary Design Construction}
\label{chap:TPE}

\section{Introduction}

In this chapter, we give an efficient construction of a unitary $k$-design on $n$
qubits for any
$k$ up to $O(n/\log(n))$.  We will do this by first finding an
efficient construction of a 
quantum \emph{$k$-copy tensor product expander} (k-TPE), which can
then be iterated to produce a $k$-design.  We will therefore need to
understand some of the theory of expanders before presenting our
construction.

Classical expander graphs have the property that a marker executing a
random walk on the graph will have a distribution close to the
stationary distribution after a small number of steps.  We consider a
generalisation of this, known as a $k$-tensor product expander (TPE)
and due to \cite{HastingsHarrow08}, to graphs that randomise $k$
different markers carrying out correlated random walks on the same
graph.  This is a stronger requirement than for a normal ($k=1$)
expander because the correlations between walkers (unless they start
at the same position) must be broken.  We then generalise quantum
expanders in the same way, so that the unitaries act on $k$ copies of
the system.  We give an efficient construction of a quantum $k$-TPE
which uses an efficient classical $k$-TPE as its main ingredient.  We
then give as a key application the first efficient construction of a
unitary $k$-design for any $k$.

While randomised constructions yield $k$-designs (by a modification of
Theorem 5 of \cite{TamperResistance}) and $k$-TPEs (when the dimension is
polynomially larger than $k$ \cite{HastingsHarrow08}) with near-optimal
parameters, these approaches are not efficient.  Previous efficient
constructions of $k$-designs were known only for $k=1,2$, and no
efficient constant-degree, constant-gap quantum $k$-TPEs were
previously known, except for the $k=1$ case corresponding to quantum
expanders \cite{QExpandersEntropyDifference,AmbainisSmith04,AramExpanders07,QuantumMargulisExpanders}.

In \secref{exp-def}, we will define quantum expanders and other key
terms.  Then in \secref{result} we will describe our main result which
will be proved in \secref{proof}.  In this chapter, we will use $N$ to denote the dimension rather than $d$ to be consistent with the rest of the quantum expander literature.

This chapter has been published previously as \cite{TPE} and is joint work with Aram Harrow.

\subsection{Quantum Expanders}\label{sec:exp-def}

We will only consider $D$-regular expander graphs here.  We can think
of a random walk on such a graph as selecting one of $D$ permutations
of the vertices randomly at each step.  We construct the permutations
as follows.  Label the vertices from $1$ to $N$.  Then label each edge
from $1$ to $D$ so that each edge label appears exactly once on the
incoming and outgoing edges of each vertex.  This gives a set of $D$
permutations. Choosing one of these permutations at random (for some
fixed probability distribution) then defines a random walk on the graph.

We now define a classical $k$-TPE:
\begin{definition}[\cite{HastingsHarrow08}]
Let $\nu$ be a probability distribution on $\cS_N$ with support on
$\leq D$ permutations.  Then $\nu$ is a classical $(N, D, \lambda, k)$-TPE if
\label{def:ClassicalTPE}
\be
\left\|\bbE_{\pi\sim\nu} \l[B(\pi)^{\ot k}\r] - \bbE_{\pi\sim\cS_N}
   \l[B(\pi)^{\ot k}\r]\right\|_\infty
 = \l\| \sum_{\pi\in\cS_N} \l(\nu(\pi) - \frac{1}{N!}\r)
B(\pi)^{\ot k}\r\|_\infty
 \le \lambda.
\ee
with $\lambda < 1$.
Here $\bbE_{\pi\sim\nu}$ means the expectation over $\pi$ drawn
according to
$\nu$
and $\bbE_{\pi\sim\cS_N}$ means the expectation over $\pi$ drawn
uniformly from $\cS_N$.
\end{definition}
Setting $k=1$ recovers the usual spectral definition of an expander.
Note that a $(N,D,\lambda,k)$-TPE is also
a $(N,D,\lambda,k')$-TPE for any $k'\leq k$.  The largest meaningful
value of $k$ is $k=N$, corresponding to the case when $\nu$ describes
a Cayley graph expander on $\cS_N$.

The degree of the map is $D=|\supp \nu|$ and the gap is $1-\lambda$.
 Ideally, the degree should be small and gap large.  To be useful,
 these should normally be independent of $N$ and possibly $k$.  We say
 that a TPE construction is efficient if it can be implemented in
 $\poly (\log N)$ steps.  There are known constructions of efficient
 classical TPEs.  The construction of Hoory and Brodsky
 \cite{HooryBrodsky04} provides an expander with $D = \poly (\log N)$
 and $\lambda = 1-1/\poly(k, \log N)$ with efficient running time.  An
 efficient TPE construction is also known, due to Kassabov
 \cite{Kassabov05}, which has constant degree and gap (independent of
 $N$ and $k$).   

Similarly, we define a quantum $k$-TPE:
\begin{definition}[\cite{HastingsHarrow08}]
Let $\nu$ be a distribution on $\cU(N)$, the group of $N\times N$ unitary matrices, with $D=|\supp\nu|$.  Then $\nu$ is a quantum $(N, D, \lambda, k)$-TPE if
\label{def:QuantumTPE}
\be
\left\|\bbE_{U\sim \nu} \l[U^{\ot k,k}\r] - \bbE_{U \sim \cU(N)}
\l[U^{\ot k,k}\r] \right\|_\infty \le \lambda
\ee
with $\lambda < 1$.  Here $\bbE_{U\sim \cU(N)}$ means the expectation over $U$ drawn from the Haar measure.
\end{definition}
Again, normally we want $D$ and $\lambda$ to be constants and setting $k=1$ recovers the usual definition of a quantum expander.  Note that an equivalent statement of the above definition is that, for all $\rho$,
\be
\l\| \bbE_{U\sim\nu} \l[U^{\ot k} \rho (U^\dagger)^{\ot k}\r] - 
\bbE_{U\sim\cU(N)} \l[U^{\ot k} \rho (U^\dagger)^{\ot k}\r] 
\right\|_2 \le \lambda \norm{\rho}_2
\ee

A natural application of this is to make an efficient unitary $k$-design.  The definition we use here is the same as for a $k$-TPE, except with closeness in the 1-norm rather than the $\infty$-norm.  This is given in \defref{ApproxUnitaryDesignkk} (TRACE).

We can make an $\eps$-approximate unitary $k$-design from a quantum $k$-TPE with $O(k \log N)$ overhead:
\begin{theorem}
\label{thm:DesignFromTPE}
If $\mathcal{U}$ is a quantum $(N, D, \lambda, k)$-TPE then iterating the map $m=\frac{1}{\log 1/\lambda} \log \frac{N^{2k}}{\eps}$ times gives an $\eps$-approximate unitary $k$-design according to \defref{ApproxUnitaryDesignkk} (TRACE) with $D^m$ unitaries.
\end{theorem}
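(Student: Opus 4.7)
The plan is to exploit the fact that the Haar second-moment operator $M_H := \bbE_{U\sim\cU(N)}[U^{\ot k,k}]$ is an orthogonal projector onto the subspace of $(U^{\ot k,k})$-invariant vectors, and that it is \emph{absorbing} for $M_\nu := \bbE_{U\sim\nu}[U^{\ot k,k}]$ in the sense $M_\nu M_H = M_H M_\nu = M_H$. This last identity is immediate from the translation invariance of the Haar measure: for any fixed $V$, $\bbE_{U\sim\cU(N)}[(UV)^{\ot k,k}] = \bbE_{U\sim\cU(N)}[U^{\ot k,k}]$, and averaging $V$ over $\nu$ (on either side) gives the claim.

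Next, I would use these properties to telescope. Since $M_H$ is idempotent and absorbs $M_\nu$, one checks by induction that
\be
M_\nu^m - M_H \;=\; (M_\nu - M_H)^m.
\ee
Applied to the TPE hypothesis $\|M_\nu - M_H\|_\infty \le \lambda$, submultiplicativity of the operator norm yields
\be
\|M_\nu^m - M_H\|_\infty \;\le\; \lambda^m.
\ee
The iterated distribution $\nu^{\star m}$ has $M_{\nu^{\star m}} = M_\nu^m$ and is supported on at most $D^m$ unitaries (products of $m$ draws from $\nu$), which gives the size bound.

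To pass to the TRACE definition (\defref{ApproxUnitaryDesignkk}), I convert from the $\infty$-norm to the $1$-norm using the standard inequality $\|A\|_1 \le d \|A\|_\infty$ on a $d$-dimensional space; here the operators live on $(\bbC^N)^{\ot 2k}$, so $d = N^{2k}$ and
\be
\|M_{\nu^{\star m}} - M_H\|_1 \;\le\; N^{2k}\lambda^m.
\ee
Choosing $m \ge \frac{1}{\log(1/\lambda)}\log(N^{2k}/\eps)$ makes the right-hand side at most $\eps$, completing the proof.

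There is no real obstacle here beyond the absorbing identity for $M_H$; once that is established, the proof is a two-line telescoping argument followed by a dimension-factor conversion between Schatten norms. The slight subtlety worth highlighting is simply that one must use $M_\nu M_H = M_H$ (not merely $M_H^2 = M_H$) in order to rewrite $M_\nu^m - M_H$ as a pure $m$-th power, so that the single-step TPE bound $\lambda$ gets amplified multiplicatively rather than additively.
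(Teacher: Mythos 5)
Your proof is correct and follows the same route as the paper's: iterate to amplify the spectral gap, convert from $\infty$-norm to $1$-norm with a dimension factor $N^{2k}$, and solve for $m$. The paper simply asserts that iterating $m$ times gives $\lambda^m$ in the $\infty$-norm; you supply the standard justification via the absorbing identity $M_\nu M_H = M_H M_\nu = M_H$ and the telescoping $(M_\nu - M_H)^m = M_\nu^m - M_H$, which is exactly the right way to fill in that step.
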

\begin{proof}
Iterating the TPE $m$ times gives
\bes
\left\|\bbE_{U\sim \nu} [U^{\ot k,k}] - \bbE_{U \sim \cU(N)} 
[U^{\ot k,k}] \right\|_\infty \le \lambda^m
\ees
This implies that
\bes
\left\|\bbE_{U\sim \nu} [U^{\ot k,k}] - \bbE_{U \sim \cU(N)} 
[U^{\ot k,k}] \right\|_1 \le N^{2k} \lambda^m
\ees
We take $m$ such that $N^{2k} \lambda^m=\eps$ to give the result.
\end{proof}

\begin{corollary}\label{cor:k-designs}
A construction of an efficient quantum $(N, D, \lambda, k)$-TPE yields
an efficient approximate unitary $k$-design, provided $\lambda = 1 -
1/\poly (\log N)$.  Further, if $D$ and $\lambda$ are constants, the
number of unitaries in the design is $N^{(O(k))}$. 
\end{corollary}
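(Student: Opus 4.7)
The corollary is a direct application of \thmref{DesignFromTPE}, so the plan is essentially a parameter-counting exercise. First I would invoke \thmref{DesignFromTPE} to convert the given quantum $(N, D, \lambda, k)$-TPE into an $\eps$-approximate unitary $k$-design (in the sense of \defref{ApproxUnitaryDesignkk}) obtained by iterating the TPE $m = \frac{1}{\log 1/\lambda}\log\frac{N^{2k}}{\eps}$ times, thereby producing a distribution supported on $D^m$ unitaries.

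Next I would verify efficiency of sampling. A single sample from the TPE requires $\poly(\log N)$ time by hypothesis, and a single sample from the iterated map corresponds to drawing $m$ independent samples and multiplying them. Under the assumption $\lambda = 1 - 1/\poly(\log N)$, the expansion $\log 1/\lambda = \log(1+\frac{1-\lambda}{\lambda}) = \Theta(1-\lambda) = 1/\poly(\log N)$ gives $1/\log(1/\lambda) = \poly(\log N)$, so
\bes
m = \poly(\log N)\cdot\l(2k\log N + \log(1/\eps)\r) = \poly(k,\log N,\log(1/\eps)).
\ees
Combined with the per-sample cost this yields total sampling time $\poly(k,\log N,\log(1/\eps))$, which is the efficiency condition (polynomial in $n=\log N$, $k$, and $\log 1/\eps$). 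By \lemref{ApproxUnitaryDesignEquiv}, closeness in the TRACE definition implies closeness (up to $\poly(N^k)$ factors absorbable into the bound) in the other equivalent definitions, so the construction is efficient under any of them after adjusting $\eps$ by a $\poly(N^k)$ factor, which only adds a $\poly(k, \log N)$ overhead to $m$.

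For the second claim, with $D$ and $\lambda$ both constants, $1/\log(1/\lambda) = O(1)$, so $m = O(k\log N + \log(1/\eps))$, and hence the number of unitaries in the design is
\bes
D^m = D^{O(k\log N + \log(1/\eps))} = N^{O(k)}\cdot (1/\eps)^{O(1)} = N^{O(k)}
\ees
once $\eps$ is taken to be any fixed inverse polynomial in $N$ (or simply a constant, as is conventional in this setting). No step is a true obstacle; the only mildly delicate point is ensuring that the conversion between the various approximate-design definitions (via \lemref{ApproxUnitaryDesignEquiv}) only costs $\poly(N^k)$ in $\eps$, which is harmless since $\log$ of this quantity is absorbed into the polynomial dependence of $m$ on $k$ and $\log N$.
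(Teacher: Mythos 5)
Your proof is correct and matches the paper's intent: the corollary is stated as an immediate consequence of Theorem \ref{thm:DesignFromTPE}, and your parameter-counting ($1/\log(1/\lambda) = \poly(\log N)$ when $\lambda = 1 - 1/\poly(\log N)$, hence $m = \poly(k, \log N, \log 1/\eps)$ iterations each costing $\poly(\log N)$, and $D^m = N^{O(k)}$ for constant $D,\lambda,\eps$) is exactly the calculation the paper leaves implicit. The remark about \lemref{ApproxUnitaryDesignEquiv} costing only a $\poly(N^k)$ factor in $\eps$, absorbed into $m$, is also the right way to handle the transfer between approximate-design definitions.
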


Our approach to construct an efficient quantum $k$-TPE will be to take an efficient classical $2k$-TPE and mix it with a quantum Fourier transform.  The degree is thus only larger than the degree of the classical expander by one.  Since the quantum Fourier transform on $\bbC^N$ requires $\poly (\log N)$ time, it follows that if the classical expander is efficient then the quantum expander is as well.  The main technical difficulty is to show for suitable values of $k$ that the gap of the quantum TPE is not too much worse than the gap of the classical TPE.

A similar approach to ours was first used in \cite{HastingsHarrow08} to construct a quantum expander (i.e.~a 1-TPE) by mixing a classical 2-TPE with a phase.  However, regardless of the set of phases chosen, this approach will not yield quantum $k$-TPEs from classical $2k$-TPEs for any $k\geq 2$.

\subsection{Main Result} \label{sec:result}

Let $\omega=e^{2\pi i/N}$ and define the $N$-dimensional Fourier
transform to be
\be
\cF=\frac{1}{\sqrt{N}}\sum_{m=1}^N\sum_{n=1}^N
\omega^{mn}\ket{m}\bra{n}.
\ee
Define $\delta_\cF$ to be the
distribution on $\cU(N)$ consisting of a point mass on $\cF$.  Our
main result in this chapter is that mixing $\delta_\cF$ with a classical $2k$-TPE
yields a quantum $k$-TPE for appropriately chosen $k$ and $N$.

\begin{theorem}
\label{thm:mainresult}
Let
$\nu_C$ be a classical $(N, D, 1-\eps_C, 2k)$-TPE, and for $0<p<1$, define
$\nu_Q = p \nu_C + (1-p) \delta_\cF$.  Suppose that
\be
\eps_A := 1- 2(2k)^{4k}/\sqrt{N} > 0.
\ee
Then $\nu_Q$ is a quantum $(N, D+1, 1-\eps_Q, k)$-TPE where 
\be\eps_Q \geq \frac{\eps_A}{12}\min(p\eps_C,1-p) > 0
\label{eq:eps_Q-bound}\ee
The bound in \eq{eps_Q-bound} is
 optimised when $p=1/(1+\eps_C)$, in which case we have 
\be \eps_Q \geq \frac{\eps_A\eps_C}{24}.\ee
\end{theorem}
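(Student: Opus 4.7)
The plan is to bound $\|G_{\nu_Q} - G_H\|_\infty$, where $G_\mu := \bbE_{U \sim \mu}[U^{\ot k,k}]$ and $G_H$ is the projector onto the Haar-invariant subspace $V_H \subset (\bbC^N)^{\ot 2k}$. Both $G_{\nu_C}$ and $\cF^{\ot k,k}$ fix $V_H$ pointwise, so $G_{\nu_Q} - G_H$ vanishes on $V_H$ and preserves $V_H^\perp$; it suffices to bound its norm there. Split $V_H^\perp = V_Q \oplus V_R$, where $V_{\mathrm{cl}} \supseteq V_H$ is the joint fixed subspace of $\{B(\pi)^{\ot 2k} : \pi \in \cS_N\}$, $V_Q := V_{\mathrm{cl}} \cap V_H^\perp$, and $V_R := V_{\mathrm{cl}}^\perp$. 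Because $G_{\nu_C}$ commutes with every $B(\pi)^{\ot 2k}$, this decomposition is $G_{\nu_C}$-invariant; moreover $G_{\nu_C}$ acts as the identity on $V_Q$, while the classical TPE hypothesis on $\nu_C$ yields $\|G_{\nu_C}|_{V_R}\|_\infty \le 1-\eps_C$.

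For a unit $|v\> = \alpha|v_Q\> + \beta|v_R\> \in V_H^\perp$ with unit $|v_Q\> \in V_Q$, $|v_R\> \in V_R$, expand
\begin{equation*}
\|G_{\nu_Q}|v\>\|^2 = p^2\|G_{\nu_C}|v\>\|^2 + (1-p)^2 + 2p(1-p)\,\Re\<G_{\nu_C}v|\cF^{\ot k,k}|v\>.
\end{equation*}
When $|\beta|$ is appreciable the TPE bound forces $\|G_{\nu_C}|v\>\|$ away from $1$ and, via unitarity of $\cF^{\ot k,k}$ and the triangle inequality, yields $\|G_{\nu_Q}|v\>\| \le 1 - c\,p\eps_C|\beta|^2$. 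When $|\alpha|$ is close to $1$, $G_{\nu_C}|v\> \approx |v\>$ and control must come from the Fourier transform: the key lemma will assert $|\<w|\cF^{\ot k,k}|w\>| \le 1-\eps_A$ for every unit $|w\> \in V_Q$, which combined with the expansion above yields $\|G_{\nu_Q}|v\>\|^2 \le 1 - c\,p(1-p)\eps_A|\alpha|^2$ plus cross-corrections vanishing with $|\beta|$. Optimising the two regimes over $|\alpha|^2 + |\beta|^2 = 1$ produces the claimed $\eps_Q \ge \eps_A \min(p\eps_C,1-p)/12$.

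The heart of the argument is the key lemma, and this is where the hypothesis $\eps_A > 0$ is used. Construct an explicit (non-orthogonal) spanning set for $V_{\mathrm{cl}}$ indexed by set partitions $P$ of $[2k]$ with at most $N$ blocks: $|P\>$ sums, over injective labellings of $P$'s blocks by elements of $[N]$, the computational-basis tensor that assigns each position its block's label. By Schur--Weyl duality $V_H$ is the $k!$-dimensional span of the ``matching'' partitions pairing each of $\{1,\dots,k\}$ with one of $\{k{+}1,\dots,2k\}$, so $V_Q$ is spanned by the remaining partition classes. Matrix elements $\<P|\cF^{\ot k,k}|Q\>$ computed from $\cF_{mn}=\omega^{mn}/\sqrt N$ reduce to Gauss sums, each acquiring a factor $1/\sqrt N$ when the underlying phase structure is non-degenerate --- precisely the case for any $P$ or $Q$ outside the matching class. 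Since $\dim V_{\mathrm{cl}}$ is at most the Bell number $B_{2k} \le (2k)^{2k}$, controlling $\<w|\cF^{\ot k,k}|w\>$ for $|w\> \in V_Q$ reduces to bounding a sum of $O((2k)^{4k})$ such matrix elements of size $O(1/\sqrt N)$, tempered by an inverse-Gram-matrix correction. The main obstacle is this Gauss-sum bookkeeping --- carrying the $1/\sqrt N$ savings through the combinatorics without sacrificing the $(2k)^{4k}$ prefactor stated in the theorem.
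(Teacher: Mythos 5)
Your outline reproduces the paper's structure: the paper also restricts to $V_{\cU(N)}^\perp$, compares $G_{\nu_C}$ against the permutation average $\cE_{\cS_N}^{2k}$, and reduces everything to bounding $\lambda_A := \|(\cE_{\cS_N}^{2k}-\cE_{\cU(N)}^k)\,\cF^{\ot k,k}\,(\cE_{\cS_N}^{2k}-\cE_{\cU(N)}^k)\|_\infty$, which is exactly your claim that $|\langle w|\cF^{\ot k,k}|w\rangle|\le 1-\eps_A$ on $V_Q$. The matrix-element strategy (partition basis, Gauss sums, Bell-number counting) is also the paper's.

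The genuine gap is in the step you describe as ``optimising the two regimes over $|\alpha|^2+|\beta|^2=1$ produces the claimed $\eps_Q\ge\eps_A\min(p\eps_C,1-p)/12$.'' This is not a routine optimisation. Expanding as you do, the cross term $2p(1-p)\,\Re\langle G_{\nu_C}v|\cF^{\ot k,k}|v\rangle$ contains off-block contributions $\alpha^*\beta\,\langle v_Q|\cF^{\ot k,k}|v_R\rangle + \alpha\beta^*\langle w_R|\cF^{\ot k,k}|v_Q\rangle$, and a naive triangle-inequality bound gives a contribution of order $|\alpha||\beta|$, i.e.\ of order $\sqrt{ab}$ with $a=|\alpha|^2$, $b=|\beta|^2$. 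In the regime where $b$ is small but not negligible (say $b\sim\eps_A^2$), the $\sqrt{ab}$ cross term can overwhelm the $-a\eps_A$ gain from the key lemma, so the bound $\|G_{\nu_Q}v\|^2 < 1$ does not follow from the two endpoint estimates plus interpolation. The paper sidesteps this entirely by invoking Lemma~1 of \cite{HastingsHarrow08} (restated as \lemref{ProjUnitaryGap}), whose statement is precisely the combination you need, with the explicit constant $1/12$; that lemma has its own non-trivial proof and is what supplies the specific numerical bound in the theorem. Without either reproducing that proof or citing it, your argument does not close.

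Two smaller points. First, the spanning set you describe --- summing over \emph{injective} labellings of blocks --- is (up to normalisation) the paper's $\{|I_\Pi\rangle\}$, which is orthogonal, so no Gram-matrix inversion is needed at the level of the basis. The ``M\"obius inversion'' bookkeeping in the paper enters because the matrix elements $\langle I_{\Pi_1}|\cF^{\ot k,k}|I_{\Pi_2}\rangle$ are hard to compute directly and one re-expresses the $|I_\Pi\rangle$ in terms of the non-orthogonal $|E_\Pi\rangle$ (equality constraints only); the inverse of the zeta matrix of the refinement order plays the role your proposal attributes to a Gram matrix. Second, bounding $\sup_w|\langle w|\cF^{\ot k,k}|w\rangle|$ directly by an $\ell_1$--$\ell_\infty$ estimate loses a factor of $\dim V_Q$ over the naive matrix-element bound; the paper instead bounds $\tr(\cE_{\cS_N}^{2k}\cF^{\ot k,k}\cE_{\cS_N}^{2k})^2$, which controls $k!+\lambda_A^2$ and recovers the $(2k)^{4k}/\sqrt N$ scaling more cleanly. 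Either route can be made to work, but you should be explicit about which trace or sup you are bounding, since the arithmetic differs.
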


This means that any constant-degree, constant-gap classical
$2k$-TPE gives a quantum $k$-TPE with constant degree and gap. If the
the classical TPE is efficient then the quantum TPE is as
well.  Using \corref{k-designs}, we obtain approximate unitary
$k$-designs with polynomial-size circuits.

Unfortunately the construction does not work for all dimensions; we
require that $N = \Omega((2k)^{8k})$, so that $\eps_A$ is lower-bounded by a
positive constant.  However, in applications normally $k$ is
fixed.  An interesting open problem is to find a construction that
works for all dimensions, in particular a $k=\infty$ expander.  (Most
work on  $k=\infty$ TPEs so far has focused on the $N=2$
case \cite{BG06}.) 
We suspect our construction may work for $k$ as large as $cN$ for a
small constant $c$.  On the other hand, if $2k> N$ then the gap in
our construction drops to zero.

\section{Proof of Theorem 4.1.5} 
\label{sec:proof}
\subsection{Proof overview}
First, we introduce some notation.
Define ${\cE}_{\cS_N}^{2k} = \bbE_{\pi\sim\cS_N}[B(\pi)^{\ot 2k}]$ and 
${\cE}_{\cU(N)}^{k} = \bbE_{U\sim\cU(N)}[U^{\ot k,k}]$.  These are
both projectors onto spaces which we label $V_{\cS_N}$ and
$V_{\cU(N)}$ respectively.  Since $V_{\cU(N)}\subset V_{\cS_N}$, it
follows that ${\cE}_{\cS_N}^{2k} - {\cE}_{\cU(N)}^{k} $ is a
projector onto the space $V_0:= V_{\cS_N}\cap V_{\cU(N)}^\perp$.
We also define ${\cE}_{\nu_C}^{2k} =
\bbE_{\pi\sim \nu_C} [B(\pi)^{\ot 2k}]$ and ${\cE}_{\nu_Q}^k =
\bbE_{U\sim \nu_Q}[U^{\ot k,k}]$.

The idea of our proof is to consider ${\cE}_{\nu_C}^{2k}$ a proxy
for ${\cE}_{\cS_N}^{2k}$; if $\lambda_C$ is small enough then this
is a reasonable approximation.  Then we can restrict our attention to
vectors in $V_0$, which we would like to
show all shrink substantially under the action of our expander.  This
in turn can be reduced to showing that $\cF^{\ot k,k}$ maps any vector in
$V_0$ to a vector that has $\Omega(1)$
amplitude in $V_{S_N}^\perp$.  This last step is the most technically
involved step of the chapter, and involves careful examination of the
different vectors making up $V_{\cS_N}$.

Thus, our proof reduces to two key Lemmas.  The first allows us to
substitute  ${\cE}_{\nu_C}^{2k}$ 
for ${\cE}_{\cS_N}^{2k}$ while keeping the gap constant.
\begin{lemma}[\cite{HastingsHarrow08} Lemma 1]
\label{lem:ProjUnitaryGap}
Let $\Pi$ be a projector and let $X$ and $Y$ be operators such that
$\|X\|_\infty\leq 1$, $\|Y\|_\infty\leq 1$, $\Pi X = X \Pi = \Pi$,
$\|(I-\Pi)X(I-\Pi)\|_\infty \leq 1 - \eps_C$ and 
$\|\Pi Y \Pi\|_\infty \leq 1 - \eps_A$.  Assume $0<\eps_C, \eps_A<1$.
Then for any $0<p<1$, $\|p X + (1-p)Y\|_\infty < 1$.  Specifically,
\be \| p X + (1-p)Y \|_\infty \leq 
 1-\frac{\eps_A}{12}\min(p\eps_C,1-p).
\label{eq:intermediate-norm}\ee
\end{lemma}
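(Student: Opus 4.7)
My plan is to reduce the problem to a case analysis after decomposing an arbitrary unit vector $\ket{v}$ into the ranges of $\Pi$ and $I-\Pi$. Write $\ket{v} = \alpha\ket{v_1} + \beta\ket{v_2}$ where $\ket{v_1}$ is a unit vector in the range of $\Pi$, $\ket{v_2}$ is a unit vector in the range of $I-\Pi$, and $|\alpha|^2 + |\beta|^2 = 1$. The central tool will be the parallelogram-type identity
\bes
1 - \|W\ket{v}\|^2 = p(1 - \|X\ket{v}\|^2) + (1-p)(1 - \|Y\ket{v}\|^2) + p(1-p)\|(X-Y)\ket{v}\|^2,
\ees
which writes the defect as a sum of three non-negative terms, any of which can serve as a lower bound.

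The hypotheses $\Pi X = X\Pi = \Pi$ and $\|(I-\Pi)X(I-\Pi)\|_\infty \leq 1-\eps_C$ force $X\ket{v_1} = \ket{v_1}$ together with $\|X\ket{v_2}\|\leq 1-\eps_C$ and $X\ket{v_2}$ orthogonal to $\ket{v_1}$, whence $\|X\ket{v}\|^2 = |\alpha|^2 + |\beta|^2\|X\ket{v_2}\|^2 \leq 1 - \eps_C|\beta|^2$, so $p(1-\|X\ket{v}\|^2) \geq p\eps_C|\beta|^2$. Specialising the identity to $\ket{v_1}$ and using $r := \Re\bra{v_1}Y\ket{v_1}$ (so $|r|\leq 1-\eps_A$) and $u := \|Y\ket{v_1}\|^2 \leq 1$, one computes
\bes
1 - \|W\ket{v_1}\|^2 = (1-p)\l[1 + p - 2pr - (1-p)u\r].
\ees
The coefficient of $u$ is $-(1-p)^2\leq 0$, so the minimum over admissible $u$ is attained at $u=1$, giving $1 - \|W\ket{v_1}\|^2 \geq 2p(1-p)(1-r) \geq 2p(1-p)\eps_A$. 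Since $2p(1-p) \geq \min(p\eps_C, 1-p)$ (using $\eps_C\leq 1$), this already beats the target bound on vectors in the range of $\Pi$.

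The remaining step is a perturbation argument to extend the $\ket{v_1}$ estimate to general $\ket{v}$, combined with the $X$-shrinkage bound via a two-case argument. Pick a threshold $\tau$ (fixed afterwards to balance the two regimes and absorbed into the constant $1/12$): in the ``large $|\beta|^2\geq\tau$'' case the $X$-shrinkage term alone delivers $p\eps_C\tau$; in the ``small $|\beta|^2<\tau$'' case $\ket{v}$ is close to $\ket{v_1}$ and I would show that the $\ket{v_1}$ estimate survives with worsened constants. Assembling the two cases, taking the square root via $\sqrt{1-x}\leq 1-x/2$, and tracking constants produces the stated bound $1 - \frac{\eps_A}{12}\min(p\eps_C, 1-p)$.

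The main obstacle is the perturbation control in the small-$|\beta|$ case. Expanding
\bes
\|W\ket{v}\|^2 = |\alpha|^2\|W\ket{v_1}\|^2 + 2\Re\l(\bar\alpha\beta\bra{v_1}W^\dagger W\ket{v_2}\r) + |\beta|^2\|W\ket{v_2}\|^2,
\ees
the naive Cauchy--Schwarz bound derived from $I - W^\dagger W \geq 0$, namely $|\bra{v_1}W^\dagger W\ket{v_2}|\leq\sqrt{e_1 e_2}$ with $e_i := 1-\|W\ket{v_i}\|^2$, is too weak: the resulting lower bound $(|\alpha|\sqrt{e_1}-|\beta|\sqrt{e_2})^2$ can vanish. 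One must instead expand $\bra{v_1}W^\dagger W\ket{v_2}$ in the decomposition of $W$ and exploit the exact cancellation $\bra{v_1}X^\dagger X\ket{v_2} = \langle\ket{v_1}, X\ket{v_2}\rangle = 0$ (itself a consequence of $\Pi X = X\Pi = \Pi$, since $X\ket{v_1}=\ket{v_1}$ while $X\ket{v_2}\in\mathrm{range}(I-\Pi)$), which kills the $p^2$ contribution and leaves only cross-term pieces of order $p(1-p) + (1-p)^2$; these are then absorbed using $|\beta|<\sqrt\tau$. This cancellation is precisely why the final gap scales as the asymmetric $\min(p\eps_C, 1-p)$ rather than the symmetric $p(1-p)$.
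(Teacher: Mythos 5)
The paper does not prove this lemma: it is cited verbatim from Hastings and Harrow \cite{HastingsHarrow08} (Lemma~1 there), so there is no in-paper proof to compare against. Your proposal must therefore be judged on its own.

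Your intermediate steps are correct. The identity
$1 - \|Wv\|^2 = p(1-\|Xv\|^2) + (1-p)(1-\|Yv\|^2) + p(1-p)\|(X-Y)v\|^2$
holds and all three terms are non-negative; the $X$-shrinkage bound $1-\|Xv\|^2 \ge \eps_C|\beta|^2$ is correct; the computation $1-\|W v_1\|^2 = (1-p)[1+p-2pr-u(1-p)] \ge 2p(1-p)\eps_A$ is correct (and $2p(1-p) \ge \min(p\eps_C,1-p)$ does hold on the relevant range of $p$); and the cancellation $\bra{v_1}X^\dagger X\ket{v_2}=0$ is a genuine consequence of $\Pi X = X\Pi = \Pi$.

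The gap is in the assembly of the two cases, and it is not merely a matter of constants. After the $X^\dagger X$ cancellation the best bound one gets on the cross term is
$|\bra{v_1}W^\dagger W\ket{v_2}| \le p(1-p)(2-\eps_C) + (1-p)^2 \le (1-p)(1+p),$
i.e. $O(1-p)$, while $e_1 := 1-\|W v_1\|^2$ is only guaranteed to be $\ge 2p(1-p)\eps_A$, which also carries the $(1-p)$ but in addition a factor $p\eps_A$. Consequently
$1-\|Wv\|^2 \ge |\alpha|^2 e_1 - 2|\alpha||\beta|(1-p)(1+p) = 2(1-p)|\alpha|\bigl[|\alpha|p\eps_A - |\beta|(1+p)\bigr],$
which is positive only when $|\beta| \lesssim p\eps_A|\alpha|$. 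That forces your threshold to satisfy $\tau \lesssim p^2\eps_A^2$; it is \emph{not} a constant that can be ``absorbed into the $1/12$''. Plugging this $\tau$ into the large-$|\beta|$ branch gives only $p\eps_C\tau \lesssim p^3\eps_C\eps_A^2$, which falls short of the target $\frac{\eps_A}{12}\,p\eps_C$ by a factor of order $p^2\eps_A$ — a quantity that goes to zero as $p\to 0$ or $\eps_A\to 0$, so the dependence on $p$ and $\eps_A$, not just the numerical constant, is wrong. (Replacing the cancellation bound by Cauchy--Schwarz, $|\text{cross}|\le\sqrt{e_1e_2}$, does not rescue the argument: the bound $(|\alpha|\sqrt{e_1}-|\beta|\sqrt{e_2})^2$ can vanish exactly at $|\beta|^2 e_2 = |\alpha|^2 e_1$, and there the $X$-shrinkage contribution $p\eps_C|\beta|^2 \le p\eps_C\,|\alpha|^2 e_1/e_2 \lesssim p^2(1-p)\eps_A\eps_C$ is again short of the target for $p$ near $0$ or $1$.) To close the argument you would need a cross-term estimate that itself degrades with $\eps_A$ — for instance by exploiting that when $e_1$ is small, $\|Yv_1\|$ must be close to $1$, so $\bra{v_1}Y^\dagger Y\ket{v_2}$ is small via $I-Y^\dagger Y\ge 0$ — or a structurally different argument. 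As written, the sentence ``assembling the two cases \ldots\ produces the stated bound'' is not substantiated, and the sketch as given produces a bound of the wrong shape.
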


We will restrict to $V_{\cU(N)}^\perp$, or equivalently, subtract the
projector $\cE_{\cU(N)}^k$ from each operator.  Thus we have $X =
\cE_{\nu_C}^{2k} - \cE_{\cU(N)}^k$, $\Pi =
{\cE}_{\cS_N}^{2k} - {\cE}_{\cU(N)}^k$ and $Y=\cF^{\ot 
  k,k} - \cE_{\cU(N)}^k$.  
According to
\defref{ClassicalTPE}, we have the bound
 \be\| (I -\Pi) X (I-\Pi) \|_\infty =
 \| {\cE}_{\nu_C}^{2k} -  {\cE}_{\cS_N}^{2k}\|_\infty \le
 1-\eps_C.
\label{eq:c2k-TPE-bound}\ee
It will remain only to bound $\lambda_A := 1-\eps_A =
\l|\l|\l({\cE}_{\cS_N}^{2k} - 
  {\cE}_{\cU(N)}^k\r) \cF^{\ot k,k} \l({\cE}_{\cS_N}^{2k} -
  {\cE}_{\cU(N)}^k\r)\r|\r|_\infty$.
\begin{lemma}
\label{lem:Gap}
For $N \ge (2k)^2$,
\be
\label{eq:Gap}
\lambda_A = \l|\l|\l({\cE}_{\cS_N}^{2k} -
  {\cE}_{\cU(N)}^k\r) \cF^{\ot k,k} \l({\cE}_{\cS_N}^{2k} -
  {\cE}_{\cU(N)}^k\r)\r|\r|_\infty \leq 2(2k)^{4k}/\sqrt{N}.
\ee
\end{lemma}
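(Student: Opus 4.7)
The plan is to decompose $V_0 := V_{\cS_N}\cap V_{\cU(N)}^\perp$ in an explicit combinatorial basis indexed by set partitions of $[2k]$, then estimate matrix elements of $\cF^{\ot k,k}$ directly. For $N\ge 2k$ the space $V_{\cS_N}$ has an orthogonal basis $\{\ket{v_\alpha}\}$ indexed by set partitions $\alpha$ of $[2k]$, with $\ket{v_\alpha}=\sum_{\mathbf{i}}\ket{\mathbf{i}}$ running over tuples such that $i_a=i_b$ iff $a,b$ lie in a common block of $\alpha$; here $\|v_\alpha\|^2 = N(N-1)\cdots(N-|\alpha|+1)$. By Schur--Weyl duality, via the isomorphism $(\bbC^N)^{\ot 2k}\cong\mathrm{End}((\bbC^N)^{\ot k})$, the subspace $V_{\cU(N)}\subset V_{\cS_N}$ is the $k!$-dimensional span of the ``matched'' partitions $\alpha_\pi$ whose blocks are $\{a,\pi(a)+k\}$ for $\pi\in \cS_k$. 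Up to triangular M\"obius corrections on the partition lattice, $V_0$ is the span of $\ket{v_\alpha}$ over the non-matched $\alpha$.

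The second step is to compute $\bra{w_\beta}\cF^{\ot k,k}\ket{w_\alpha}$ explicitly in the more convenient ``if'' basis $\ket{w_\alpha}=\sum_{\mathbf{i}:\,a\sim_\alpha b\Rightarrow i_a=i_b}\ket{\mathbf{i}}$. Writing $\cF^{\ot k,k}=\cF^{\ot k}\ot(\cF^*)^{\ot k}$, setting $\epsilon_a=+1$ for $a\le k$ and $\epsilon_a=-1$ for $a>k$, and using $\cF\ket{j}=N^{-1/2}\sum_n \omega^{jn}\ket{n}$, the overlap reduces to a character sum $N^{-k}\sum_{\mathbf{i},\mathbf{n}}\omega^{\sum_a\epsilon_a i_a n_a}$ with $\mathbf{i}$ constant on blocks of $\alpha$ and $\mathbf{n}$ on blocks of $\beta$. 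Summing out $\mathbf{n}$, and then counting $\mathbf{i}$ satisfying the resulting linear system $M\mathbf{i}\equiv 0\pmod N$ (with $M$ indexed by pairs of blocks $(B,C)\in\beta\times\alpha$ and entries $m_{B,C}=\sum_{a\in B\cap C}\epsilon_a$), gives $N^{|\alpha|+|\beta|-k-r}$, where $r$ is the rank of $M$ modulo $N$. After normalising by $\|w_\alpha\|\|w_\beta\|=N^{(|\alpha|+|\beta|)/2}$ one sees the crucial dichotomy: when both $\alpha,\beta$ are matched, all entries of $M$ vanish and the contribution reassembles into $\cE_{\cU(N)}^k$; whenever $\alpha$ or $\beta$ has an unbalanced block, $M$ gains an independent row mod $N$, and a short case analysis shows the normalised overlap is $O(N^{-1/2})$.

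The main obstacle will be carrying out this case analysis cleanly and handling the passage from the if-basis $\{\ket{w_\alpha}\}$ back to the iff-basis $\{\ket{v_\alpha}\}$, which differ by a triangular M\"obius transform on the partition lattice. A naive expansion can leak non-matched contributions into the matched sector that is supposed to be killed by $\cE_{\cS_N}^{2k}-\cE_{\cU(N)}^k$. I plan to express $\cE_{\cS_N}^{2k}-\cE_{\cU(N)}^k$ by explicit M\"obius inversion over matched partitions, and verify block-by-block that every surviving term after the subtraction carries at least one unbalanced block, and hence the associated $N^{-1/2}$ cancellation.

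Once the per-pair bound $|\bra{v_\alpha}\cF^{\ot k,k}\ket{v_\beta}|/(\|v_\alpha\|\|v_\beta\|)=O(N^{-1/2})$ is established for non-matched $\alpha,\beta$, the conclusion follows by Cauchy--Schwarz: write arbitrary unit $\ket{\psi},\ket{\phi}\in V_0$ in the orthonormal basis $\{\ket{v_\alpha}/\|v_\alpha\|\}$ and sum over at most $B_{2k}^2\le(2k)^{4k}$ pairs of partitions, using the Bell-number bound $B_{2k}\le(2k)^{2k}$. This yields $\lambda_A \le 2(2k)^{4k}/\sqrt N$, as required.
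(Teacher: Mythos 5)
Your decomposition (partition bases, character sum for matrix elements, the dichotomy via the incidence matrix $\tilde{A}$ being zero or not, M\"obius inversion between the ``if'' and ``iff'' bases) lines up with the ingredients the paper uses. However, the final assembly has two genuine gaps, and the route you take to the operator-norm bound is the place where the paper does something cleverer that you would need in order to make your plan rigorous.

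First, your per-pair estimate $|\bra{v_\alpha}\cF^{\ot k,k}\ket{v_\beta}|/(\|v_\alpha\|\|v_\beta\|) = O(N^{-1/2})$ in the iff basis is too strong. The character-sum computation you describe gives bounds of this shape in the \emph{if} basis $\{\ket{w_\alpha}\}$, but passing to the iff basis via M\"obius inversion introduces an amplification factor $\sum_{\Pi'\geq \Pi}|\mu(\Pi,\Pi')| = |\Pi|!$ (Corollary~\ref{cor:ModMobiusSum}). The resulting per-pair bound in the iff basis is really $O((2k)!/\sqrt{N})$, not $O(1/\sqrt{N})$. Relatedly, your accounting in the Cauchy--Schwarz step is off: summing per-pair bounds over the $B_{2k}\times B_{2k}$ matrix gives an operator-norm bound $\leq B_{2k}\cdot(\text{per-pair bound})$, not $B_{2k}^2$. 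These two slips happen to compensate (since $(2k)!\,B_{2k} \leq (2k)^{4k}$ while $B_{2k}^2\leq (2k)^{4k}$), so your final numerical claim is plausible, but the intermediate steps are not correct as written.

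Second, and more seriously, the description of $V_0$ as ``the span of $\ket{v_\alpha}$ over non-matched $\alpha$, up to triangular M\"obius corrections'' does not hold and cannot be made to hold cheaply. The space $V_{\cU(N)}$ is spanned by $\{\ket{E_{P(\pi)}}\}$, and each $\ket{E_{P(\pi)}}$ has nonzero overlap with $\ket{I_{\Pi'}}$ for \emph{every} $\Pi'\geq P(\pi)$, including many non-matched partitions. So $V_0 = V_{\cS_N}\cap V_{\cU(N)}^\perp$ is not a coordinate subspace of the iff basis, and the projector $\cE_{\cU(N)}^k$ involves the inverse Gram matrix of a non-orthogonal set $\{\ket{E_{P(\pi)}}\}$, not just a M\"obius inversion over matched partitions. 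You flag this as ``the main obstacle'' and propose a block-by-block verification, but the plan as described does not resolve it.

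The paper sidesteps the whole issue with a trace trick: since $\cE_{\cS_N}^{2k}\cF^{\ot k,k}\cE_{\cS_N}^{2k}$ has $k!$ unit eigenvalues and next-largest eigenvalue $\lambda_A$, one has
\[
k! + \lambda_A^2 \leq \tr\bigl(\cE_{\cS_N}^{2k}\cF^{\ot k,k}\cE_{\cS_N}^{2k}\bigr)^2 = \sum_{\Pi_1,\Pi_2\vdash 2k}\bigl|\bra{I_{\Pi_1}}\cF^{\ot k,k}\ket{I_{\Pi_2}}\bigr|^2,
\]
and this needs \emph{no} identification of $V_0$ at all; the $k!$ ``matched'' terms contribute $\leq k!$ and are cancelled, while all other pairs are bounded using the iff-basis matrix-element estimates (which already carry the $(2k)!$ factor from M\"obius inversion). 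If you want to repair your argument without reinventing this step, you would either need to establish and invert the Gram matrix of $\{\ket{E_{P(\pi)}}\}$ explicitly (doable for $N\gg k$ but messy), or abandon the attempt to localize $V_0$ in a basis and adopt the Hilbert--Schmidt trace bound.
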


Combining  \eq{c2k-TPE-bound}, \lemref{Gap} and
\lemref{ProjUnitaryGap} now completes the proof of
 \thmref{mainresult}.

\subsection{Action of a Classical \texorpdfstring{$2k$}{2k}-TPE}

We start by analysing the action of a classical $2k$-TPE.  (We
consider $2k$-TPEs rather than general $k$-TPEs since our quantum
expander construction only uses these.)  The fixed points are states which are
unchanged when acted on by $2k$ copies of any permutation matrix.
Since the same permutation is applied to all copies, any equal indices
will remain equal and any unequal indices will remain unequal.  This
allows us to identify the fixed points of the classical expander: they
are the sums over all states with the same equality and difference
constraints.  For example, for $k=1$ (corresponding to a 2-TPE), the
fixed points are $\sum_{n_1} \ket{n_1, n_1}$ and $\sum_{n_1 \ne
n_2}\ket{n_1, n_2}$ (all off-diagonal entries equal to 1).  In
general, there is a fixed point for each partition of the set $\{1, 2,
\ldots, 2k\}$ into at most $N$ non-empty parts.  If $N \ge 2k$, which
is the only case we consider, the $2k^{\text{th}}$ Bell number
$\beta_{2k}$ gives the number of such partitions (see
e.g.~\cite{EnumerativeCombinatorics}).

We now write down some more notation to further analyse this.  If
$\Pi$ is a partition of $\{1,\ldots,2k\}$, then we write $\Pi\vdash
2k$.  We will see that $\cE_{\cS_N}^{2k}$ projects onto a space
spanned by vectors labelled by partitions.  For a partition $\Pi$, say
that $(i, j) \in \Pi$ if and only if elements $i$ and $j$ are in the
same block. 
Now we can write down the fixed points of the classical expander.  Let
\be
I_\Pi = \{(n_1, \ldots, n_{2k}) : n_i = n_j \iff (i, j) \in \Pi \}.
\ee
This is a set of tuples where indices in the same block of $\Pi$ are
equal and indices in different blocks are not equal.  
The corresponding state is
\be
\ket{I_\Pi} = \frac{1}{\sqrt{|I_\Pi|}} \sum_{\bfn\in I_\Pi} \ket{\bfn}
\ee
where $\bfn = (n_1, \ldots, n_{2k})$.  Note that the $\{I_\Pi\}_{\Pi\vdash 2k}$ form a
partition $\{1,\ldots,N\}^{2k}$ and thus the
$\{\ket{I_\Pi}\}_{\Pi\vdash 2k}$ form an orthonormal basis for $V_{\cS_N}$.  This is because, when applying the same permutation to all indices, indices that are the same remain the same and indices that differ remain different.  This implies that \be{\cE}_{\cS_N}^{2k} = \sum_{\Pi\vdash 2k} \proj{I_\Pi}.\ee
To evaluate the normalisation, use
$| I_\Pi | = (N)_{| \Pi |}$ where $(N)_n$ is the falling factorial
$N(N-1) \ldots (N-n+1)$ and $|\Pi|$ is the number of
blocks in $\Pi$.  We will later find it useful to bound
$(N)_n$ with
\be \l(1-\frac{n^2}{2N}\r)N^n \leq (N)_n \leq N^n.\ee

We will also make use of the refinement partial order:
\begin{definition}
The refinement partial order $\le$ on partitions $\Pi, \Pi' \in \Par(2k, N)$ is given by
\be
\Pi \le \Pi'  \iff (i, j) \in \Pi \Rightarrow (i, j) \in \Pi'.
\ee
\end{definition}
For example, $\{\{1, 2\}, \{3\}, \{4\}\} \le \{\{1, 2, 4\}, \{3\}\}$.  Note that $\Pi \le \Pi'$ implies that $|\Pi| \ge |\Pi'|$.

\subsubsection{Turning Inequality Constraints into Equality Constraints.}

In the analysis, it will be easier to consider just equality constraints rather than both inequality and equality constraints as in $I_\Pi$.  Therefore we make analogous definitions:
\be
E_\Pi = \{(n_1, \ldots, n_{2k}) : n_i = n_j \forall (i, j) \in \Pi \}
\ee
and
\be
\ket{E_\Pi} = \frac{1}{\sqrt{|E_\Pi|}} \sum_{\bfn \in E_\Pi} \ket{\bfn}.
\ee
Then $| E_\Pi | = N^{|\Pi|}$.  For $E_\Pi$, indices in the same block are equal, as with $I_\Pi$, but indices in different blocks need not be different.

We will need relationships between $I_\Pi$ and $E_\Pi$.   First, observe that $E_\Pi$ can be written as the union of some $I_\Pi$ sets:
\be
E_\Pi = \bigcup_{\Pi' \ge \Pi} I_{\Pi'}.
\label{eq:E-I-set-rel}\ee
To see this, note that for $\bfn\in E_\Pi$, we have $n_i=n_j \forall (i,j)\in\Pi$, but we may also have an arbitrary number of additional equalities between $n_i$'s in different blocks.   The (unique) partition 
$\Pi'$ corresponding to these equalities has the property that $\Pi$ is a refinement of $\Pi'$; that is, $\Pi'\geq \Pi$.
Thus for any $\bfn\in E_\Pi$ there exists a unique $\Pi'\ge \Pi$ such
that $\bfn\in I_{\Pi'}$.  Conversely, whenever $\Pi'\geq \Pi$, we also
have $I_{\Pi'}\subseteq E_{\Pi'} \subseteq E_{\Pi}$ because each
inclusion is achieved only be relaxing constraints. 

Using \eq{E-I-set-rel}, we can obtain a useful identity involving sums
over partitions:
\be N^{|\Pi|} = |E_\Pi| = \sum_{\Pi'\geq \Pi} |I_{\Pi'}| =
\sum_{\Pi'\geq \Pi} N_{(|\Pi'|)}. \label{eq:stirling-rel}\ee
Additionally, since both sides in \eq{stirling-rel} are degree $|\Pi|$ polynomials and are equal on $\ge |\Pi|+1$ points (we can choose any $N$ in \eq{stirling-rel} with $N\geq 2k$), it
implies that $x^{|\Pi|} = \sum_{\Pi'\geq \Pi} x_{(\Pi')}$ as an identity
on formal polynomials in $x$.

The analogue of \eq{E-I-set-rel} for the states $\ket{E_\Pi}$ and $\ket{I_\Pi}$ is similar but has to account for normalisation factors.  Thus we have
\be \sqrt{|E_\Pi|} \ket{E_\Pi} =  \sum_{\Pi'\geq \Pi} \sqrt{|I_{\Pi'}|} \ket{I_{\Pi'}}.
\label{eq:E-I-state-rel}\ee

We would also like to invert this relation, and write $\ket{I_\Pi}$ as a sum over various $\ket{E_{\Pi'}}$.  Doing so will require introducing some more notation.  Define $\zeta(\Pi,\Pi')$ to be 1 if $\Pi \leq \Pi'$ and 0 if $\Pi \not\leq \Pi'$.  This can be thought of as a matrix that, with respect to the refinement ordering, has ones on the diagonal and is upper-triangular.  Thus it is also invertible.  Define $\mu(\Pi,\Pi')$ to be the matrix inverse of $\zeta$, meaning that for all $\Pi_1,\Pi_2$, we have 
$$\sum_{\Pi'\vdash 2k} \zeta(\Pi_1,\Pi') \mu(\Pi',\Pi_2) = 
\sum_{\Pi'\vdash 2k} \mu(\Pi_1,\Pi')\zeta(\Pi',\Pi_2)  = \delta_{\Pi_1,\Pi_2},$$
where $\delta_{\Pi_1,\Pi_2}=1$ if $\Pi_1=\Pi_2$ and $=0$ otherwise.
Thus, if we rewrite \eq{E-I-state-rel} as 
\be \sqrt{|E_\Pi|} \ket{E_\Pi} =  \sum_{\Pi'\vdash 2k} \zeta(\Pi,\Pi')
\sqrt{|I_{\Pi'}|} \ket{I_{\Pi'}},\ee
then we can use $\mu$ to express $\ket{I_\Pi}$ in terms of the $\ket{E_\Pi}$ as
\be \sqrt{|I_{\Pi}|} \ket{I_{\Pi}}=  \sum_{\Pi'\vdash 2k} \mu(\Pi,\Pi')
\sqrt{|E_{\Pi'}|} \ket{E_{\Pi'}}.
\label{eq:I-into-E}\ee

This approach is a generalisation of inclusion-exclusion known as
M\"obius inversion, and the function $\mu$ is called the M\"{o}bius
function (see Chapter 3 of \cite{EnumerativeCombinatorics} for more
background).  For the case of the refinement partial order, the
M\"obius function is known:
\begin{lemma}[\cite{RotaMobius}, Section 7]
\label{lem:MobiusInversionRefinement}
$$\mu(\Pi, \Pi') = (-1)^{|\Pi| - |\Pi'|} \prod_{i=1}^{| \Pi' |} (b_i-1)!$$ where $b_i$ is the number of blocks of $\Pi$ in the $i^{\text th}$ block of $\Pi'$.
\end{lemma}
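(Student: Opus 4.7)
The plan is to reduce the lemma to two classical facts about the partition lattice: a product-structure theorem for intervals, and the computation of the M\"obius function at the top of a full partition lattice. I would then combine these with the multiplicativity of M\"obius functions over direct products of finite posets.

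First I would establish the product structure. Given $\Pi\le\Pi'$, any $\sigma$ with $\Pi\le\sigma\le\Pi'$ is obtained by grouping blocks of $\Pi$ without crossing any block of $\Pi'$. Writing the blocks of $\Pi'$ as $B_1,\ldots,B_{|\Pi'|}$ and letting $b_i$ count the number of $\Pi$-blocks contained in $B_i$, this yields a poset isomorphism
\be
[\Pi,\Pi'] \;\cong\; \mathcal{L}_{b_1}\times\cdots\times\mathcal{L}_{b_{|\Pi'|}},
\ee
where $\mathcal{L}_b$ denotes the full partition lattice of a $b$-element set ordered by refinement, with minimum $\hat{0}$ (all singletons) and maximum $\hat{1}$ (one block). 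Since M\"obius functions multiply over direct products, this gives
\be
\mu(\Pi,\Pi')=\prod_{i=1}^{|\Pi'|}\mu_{b_i},\qquad \mu_n:=\mu_{\mathcal{L}_n}(\hat{0},\hat{1}).
\ee

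Next I would compute $\mu_n$ via an exponential generating function. The defining M\"obius recursion $\sum_{\sigma\le \tau}\mu(\hat{0},\sigma)=\delta_{\hat{0},\tau}$ applied at $\tau=\hat{1}$ yields $\sum_{\sigma\vdash n}\mu(\hat{0},\sigma) = \delta_{n,1}$ for $n\ge 1$. The product formula above, applied to the interval $[\hat{0},\sigma]\cong\prod_{B\in\sigma}\mathcal{L}_{|B|}$, gives $\mu(\hat{0},\sigma)=\prod_{B\in \sigma}\mu_{|B|}$. Setting $A(x)=\sum_{n\ge 1}\mu_n \frac{x^n}{n!}$, the exponential formula for labelled set partitions then yields $\exp(A(x))=1+x$, so $A(x)=\log(1+x)$ and $\mu_n=(-1)^{n-1}(n-1)!$.

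Putting the pieces together and noting $\sum_i(b_i-1)=|\Pi|-|\Pi'|$ recovers the stated formula. The substantive combinatorial content lies in verifying the product-structure isomorphism of intervals; once that is granted, M\"obius multiplicativity together with the short generating-function calculation finishes the proof. A direct induction on $n$ for $\mu_n$ is also possible, but it forces one to evaluate $\sum_{\sigma\vdash n}\prod_i (-1)^{b_i-1}(b_i-1)!$ by hand, which is rather less transparent than the EGF route.
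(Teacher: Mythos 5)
The paper cites this identity directly from Rota and gives no proof of its own, so there is nothing internal to compare against; what you have written is a complete and correct derivation by the standard modern route. Your three ingredients are all sound: the interval $[\Pi,\Pi']$ in the refinement order is isomorphic to $\prod_{i}\mathcal{L}_{b_i}$ via $\sigma\mapsto(\sigma|_{B_1},\ldots,\sigma|_{B_{|\Pi'|}})$; M\"obius functions multiply over direct products of posets; and the recursion $\sum_{\sigma\vdash n}\mu(\hat 0,\sigma)=\delta_{n,1}$ combined with the exponential formula forces $\exp(A(x))=1+x$, hence $A(x)=\log(1+x)$ and $\mu_n=(-1)^{n-1}(n-1)!$. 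The final sign bookkeeping is right as well, since $\sum_i b_i=|\Pi|$ while the product has $|\Pi'|$ factors, giving $\sum_i(b_i-1)=|\Pi|-|\Pi'|$. The one step you leave as an assertion, the interval isomorphism, is indeed where the combinatorial content lives; it is routine to verify that the restriction map above is an order isomorphism, and your presentation is clear about this being the load-bearing claim. If you wanted to avoid the exponential formula entirely, the induction you mention does work: expand $\sum_{\sigma\vdash n}\prod_B\mu_{|B|}=0$ by pulling out the block containing a fixed element and applying the inductive hypothesis, which gives the recurrence $\mu_n=-\sum_{k=1}^{n-1}\binom{n-1}{k-1}\mu_k\,\mu_{n-k}^{\text{(contribution)}}$ style identity; but as you say the EGF calculation is cleaner and I agree it is the better choice here.
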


We can use this to evaluate sums involving the M\"{o}bius function for the refinement order.
\begin{lemma}
\label{lem:ModMobiusSumx}
\be
\sum_{\Pi' \ge \Pi} |\mu(\Pi, \Pi')| \,  x^{|\Pi'|} = x^{(|\Pi|)}
\label{eq:MobSumClaim}\ee
where $x$ is arbitrary and $x^{(n)}$ is the rising factorial $x(x+1) \cdots (x+n-1)$.
\end{lemma}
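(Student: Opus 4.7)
The plan is to reduce this to the Möbius inversion of an identity already proved in the excerpt, then track signs carefully so that $\mu$ becomes $|\mu|$ and falling factorials turn into rising factorials.

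First I would recall the key polynomial identity established in \eq{stirling-rel}: as an identity in the formal variable $x$,
\be
x^{|\Pi|} \;=\; \sum_{\Pi' \ge \Pi} x_{(|\Pi'|)},
\ee
where $x_{(n)} = x(x-1)\cdots(x-n+1)$ is the falling factorial. Since $\zeta$ is upper-triangular with ones on the diagonal in the refinement order, it is invertible, and $\mu$ is by definition its inverse. Applying Möbius inversion to the above identity (with $g(\Pi)=x^{|\Pi|}$ and $f(\Pi)=x_{(|\Pi|)}$) therefore yields
\be
x_{(|\Pi|)} \;=\; \sum_{\Pi' \ge \Pi} \mu(\Pi,\Pi')\, x^{|\Pi'|}.
\ee

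Next I would substitute $x \mapsto -x$. The rising and falling factorials are related by $(-x)_{(n)} = (-1)^{n}\, x^{(n)}$, so the left-hand side becomes $(-1)^{|\Pi|}\, x^{(|\Pi|)}$, while on the right $(-x)^{|\Pi'|} = (-1)^{|\Pi'|} x^{|\Pi'|}$. Rearranging the sign gives
\be
x^{(|\Pi|)} \;=\; \sum_{\Pi' \ge \Pi} (-1)^{|\Pi|-|\Pi'|}\, \mu(\Pi,\Pi')\, x^{|\Pi'|}.
\ee

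The final step uses the explicit formula from \lemref{MobiusInversionRefinement}, namely $\mu(\Pi,\Pi') = (-1)^{|\Pi|-|\Pi'|} \prod_{i=1}^{|\Pi'|}(b_i-1)!$, so $(-1)^{|\Pi|-|\Pi'|}\mu(\Pi,\Pi') = |\mu(\Pi,\Pi')|$, proving \eq{MobSumClaim}. There is no real obstacle here: the identity we want is just the image of an already-known identity under $x \mapsto -x$, and the sign $(-1)^{|\Pi|-|\Pi'|}$ picked up from the two factorial swaps is exactly the sign carried by $\mu$ itself. The only thing to be careful about is that \eq{stirling-rel} is derived in the excerpt for integer $N \ge 2k$ but holds as a polynomial identity (both sides are polynomials of degree $|\Pi|$ agreeing on infinitely many points), so the substitution $x \mapsto -x$ is legitimate.
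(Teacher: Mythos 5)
Your proof is correct and is essentially the paper's own argument, just run in the reverse order: you first Möbius-invert the identity $x^{|\Pi|}=\sum_{\Pi'\ge\Pi}x_{(|\Pi'|)}$ and then substitute $x\mapsto -x$, whereas the paper begins by replacing $|\mu|$ with $(-1)^{|\Pi|-|\Pi'|}\mu$ and then applies the same two ingredients. The underlying ideas (the sign relation from \lemref{MobiusInversionRefinement}, the polynomial identity \eq{stirling-rel}, Möbius inversion, and $(-x)_{(n)}=(-1)^n x^{(n)}$) are identical.
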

\begin{proof}
Start with $|\mu(\Pi, \Pi')| = (-1)^{|\Pi| - |\Pi'|} \mu(\Pi, \Pi')$ to obtain
\bas
\sum_{\Pi' \ge \Pi} | \mu(\Pi, \Pi') | x^{|\Pi'|}
&=(-1)^{|\Pi|} \sum_{\Pi' \ge \Pi} \mu(\Pi, \Pi') (-x)^{|\Pi'|} 
\\  &= (-1)^{|\Pi|} \sum_{\Pi' \ge \Pi} \mu(\Pi, \Pi') 
\sum_{\Pi''\geq\Pi'} \zeta(\Pi',\Pi'') (-x)_{(|\Pi''|)}
\eas
using \eq{stirling-rel}.  Then use M\"obius inversion and $(-x)_{(n)} = (-1)^n x^{(n)}$ to prove the result.
\end{proof}

We will mostly be interested in the special case $x=1$:
\begin{corollary}
\label{cor:ModMobiusSum}
\be
\sum_{\Pi' \ge \Pi} | \mu(\Pi, \Pi') | = |\Pi|!
\ee
\end{corollary}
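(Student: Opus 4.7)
The plan is to observe that this corollary is an immediate specialisation of \lemref{ModMobiusSumx} to $x=1$. Concretely, setting $x=1$ in the identity
\[
\sum_{\Pi' \ge \Pi} |\mu(\Pi, \Pi')| \, x^{|\Pi'|} = x^{(|\Pi|)},
\]
the left-hand side collapses to $\sum_{\Pi' \ge \Pi} |\mu(\Pi, \Pi')|$ because $1^{|\Pi'|}=1$, while the right-hand side becomes the rising factorial $1^{(|\Pi|)} = 1 \cdot 2 \cdots |\Pi| = |\Pi|!$. That is the claimed identity.

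There is no real obstacle here: the hard work has already been done in \lemref{ModMobiusSumx}, and the corollary is purely a substitution. The only thing worth double-checking is the convention $1^{(n)} = n!$, which follows from the definition of the rising factorial $x^{(n)} = x(x+1)\cdots(x+n-1)$ given in the statement of \lemref{ModMobiusSumx}.
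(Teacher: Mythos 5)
Your proof is correct and matches the paper's intent exactly: the corollary is introduced with the remark ``we will mostly be interested in the special case $x=1$'' and is indeed just the $x=1$ substitution into \lemref{ModMobiusSumx}, with $1^{(|\Pi|)} = 1\cdot 2 \cdots |\Pi| = |\Pi|!$. Nothing more is needed.
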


Using $|\mu(\Pi,\Pi')|\geq 1$ and the fact that $\Pi\geq
\{\{1\},\ldots,\{n\}\}$ for all $\Pi\vdash n$, we obtain a bound on
the total number of partitions:
\begin{corollary}
\label{cor:BellNumberBound}
The Bell numbers $\beta_n$ satisfy $\beta_n \le n!$.
\end{corollary}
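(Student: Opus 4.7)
The plan is to apply Corollary 4.2.4 to the specific case where $\Pi$ is the finest partition of $\{1, 2, \ldots, n\}$, namely $\Pi_0 = \{\{1\}, \{2\}, \ldots, \{n\}\}$. This partition has $|\Pi_0| = n$ blocks, and crucially, every partition $\Pi' \vdash n$ satisfies $\Pi' \ge \Pi_0$ in the refinement order (since $\Pi_0$ refines every partition). So the sum on the left of Corollary 4.2.4 ranges over all $\beta_n$ partitions of $\{1, \ldots, n\}$.

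First I would verify that $|\mu(\Pi_0, \Pi')| \ge 1$ for every $\Pi' \ge \Pi_0$. This follows immediately from Lemma 4.2.3: since $\mu(\Pi_0, \Pi') = (-1)^{n - |\Pi'|}\prod_{i=1}^{|\Pi'|}(b_i - 1)!$, and each $b_i \ge 1$ (because every block of $\Pi'$ contains at least one singleton block of $\Pi_0$), each factor $(b_i - 1)! \ge 0! = 1$, so the absolute value is at least $1$.

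Combining these observations, Corollary 4.2.4 with $\Pi = \Pi_0$ gives
\begin{equation*}
n! = |\Pi_0|! = \sum_{\Pi' \ge \Pi_0} |\mu(\Pi_0, \Pi')| \ge \sum_{\Pi' \vdash n} 1 = \beta_n,
\end{equation*}
which is the desired bound. No step here is an obstacle; the entire argument is a direct specialisation of the identity already established in Corollary 4.2.4, using the trivial lower bound on the M\"obius values supplied by the explicit formula in Lemma 4.2.3.
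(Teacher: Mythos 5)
Your proof is correct and is exactly the argument the paper has in mind: apply Corollary 4.2.4 (\corref{ModMobiusSum}) with $\Pi$ the finest partition, so that the sum ranges over all $\beta_n$ partitions, and use $|\mu(\Pi,\Pi')|\ge 1$ to lower-bound each term. Your extra step of checking $|\mu(\Pi_0,\Pi')|\ge 1$ directly from the explicit M\"obius formula in \lemref{MobiusInversionRefinement} is a nice bit of rigour that the paper leaves implicit.
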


\subsection{Fixed Points of a Quantum Expander}

We now turn to $V_{\cU(N)}$, the space fixed by the quantum expander.
As in \chapref{PRIntro}, the only operators on $(\bbC^N)^{\ot
  k}$ to commute with $U^{\ot k}$ for all $U$ are linear combinations of
subsystem permutations.  The equivalent statement for
$V_{\cU(N)}$ is that the only states invariant under all $U^{\ot k,k}$
are of the form
\be
\frac{1}{\sqrt{N^k}}
\sum_{n_1,\ldots,n_k\in[N]}\ket{n_1,\ldots,n_k,n_{\pi(1)},\ldots,n_{\pi(k)}},
\label{eq:VUN-basis}\ee
for some permutation $\pi\in\cS_k$.  Since $\cE_{\cU(N)}^k=\bbE[U^{\ot
    k,k}]$ projects onto the set of states that is invariant under all
$U^{\ot k,k}$, it follows that $V_{\cU(N)}$ is equal to the span of
    the states in \eq{VUN-basis}.

Now we relate these states to our previous notation.  
\begin{definition}
For $\pi\in\cS_k$, define the partition corresponding to $\pi$ by 
$$P(\pi) = \l\{ \{1,k+\pi(1)\}, \{2,k+\pi(2)\},\ldots,
\{k, k + \pi(k)\}\r\}.$$
\end{definition}
Then the state in \eq{VUN-basis} is simply $\ket{E_{P(\pi)}}$, and so 
\be V_{\cU(N)} = \Span\{\ket{E_{P(\pi)}} : \pi\in\cS_k\}.
\label{eq:VUN-part-basis}\ee

Note that the classical expander has many
more fixed points than just the desired $\ket{E_{P(\pi)}}$.  The main task in
constructing a quantum expander from a classical one is to modify the
classical expander to decay the fixed points that should not be fixed
by the quantum expander.

\subsection{Fourier Transform in the Matrix Element Basis}
\label{sec:FT-mat-el}

Since we make use of the Fourier transform, we will need to know how it acts on a matrix element.  We find
\bes
\cF^{\ot k,k} \ket{\bfm} = \frac{1}{N^k} \sum_\bfn \omega^{\bfm.\bfn} \ket{\bfn}
\ees
where
\be
\bfm.\bfn = m_1 n_1 + \ldots + m_kn_k - m_{k+1}n_{k+1} - \ldots
- m_{2k} n_{2k}
\ee

We will also find it convenient to estimate the matrix elements
$\bra{E_{\Pi_1}}\cF^{\ot k,k}\ket{E_{\Pi_2}}$.  The properties we require are proven in the following lemmas.

\begin{lemma}
\label{lem:EqualityConstraints}
Choose any $\Pi_1,\Pi_2 \vdash 2k$.
Let $\bfm \in \Pi_1$ and $\bfn \in \Pi_2$.
Call the free indices of $\bfm$ $\tilde{m}_i$ for $1 \le i \le |\Pi_1|$.
Then let $\bfm.\bfn = \sum_{i = 1}^{|\Pi_1|} \sum_{j=1}^{2k} \tilde{m}_i
A_{i,j} n_j$ where $A_{i,j}$ is a $|\Pi_1|\times 2k$ matrix with entries
in $\{0, 1, -1\}$ which depends on $\Pi_1$ (but not $\Pi_2$).
Then
\be \bra{E_{\Pi_1}}\cF^{\ot k,k}\ket{E_{\Pi_2}} = 
N^{-k + \frac{|\Pi_1|-|\Pi_2|}{2}}
 \sum_{\bfn \in E_{\Pi_2}} 
\mathbb{I}\left(\sum_j A_{i,j} n_j \equiv 0 \bmod{N} \, \forall \, i\right)
\label{eq:EqualityConstraints}
\ee
where $\mathbb{I}$ is the indicator function.
\end{lemma}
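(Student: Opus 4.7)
The plan is to expand both states and the Fourier transform explicitly, then exploit the product structure over the free indices of $\Pi_1$ to reduce the sum to characters on $\bbZ/N\bbZ$.

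First I would write
\bes
\ket{E_{\Pi_1}} = \frac{1}{\sqrt{N^{|\Pi_1|}}}\sum_{\bfm\in E_{\Pi_1}}\ket{\bfm}, \qquad
\ket{E_{\Pi_2}} = \frac{1}{\sqrt{N^{|\Pi_2|}}}\sum_{\bfn\in E_{\Pi_2}}\ket{\bfn},
\ees
and use the expression for the Fourier transform in the matrix element basis from \secref{FT-mat-el}, namely $\cF^{\ot k,k}\ket{\bfn} = N^{-k}\sum_\bfm \omega^{\bfm.\bfn}\ket{\bfm}$. This gives
\bes
\bra{E_{\Pi_1}}\cF^{\ot k,k}\ket{E_{\Pi_2}}
= \frac{1}{N^k\sqrt{N^{|\Pi_1|+|\Pi_2|}}}\sum_{\bfm\in E_{\Pi_1}}\sum_{\bfn\in E_{\Pi_2}}\omega^{\bfm.\bfn},
\ees
so the entire task reduces to evaluating the inner sum.

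Next, I would reparametrise the sum over $\bfm\in E_{\Pi_1}$ by the $|\Pi_1|$ free indices $\tilde m_1,\ldots,\tilde m_{|\Pi_1|}$, one per block of $\Pi_1$. Because each coordinate $m_j$ is forced to equal the free index of the block containing $j$, the dot product $\bfm.\bfn$ becomes linear in the $\tilde m_i$, with coefficient $\sum_j A_{i,j}n_j$ and with $A_{i,j}\in\{0,+1,-1\}$ recording whether $j$ lies in block $i$ of $\Pi_1$ and whether $j\le k$ or $j>k$. This is precisely the matrix $A$ defined in the statement. The sum over $\bfm$ therefore factorises as
\bes
\sum_{\bfm\in E_{\Pi_1}}\omega^{\bfm.\bfn}
= \prod_{i=1}^{|\Pi_1|}\sum_{\tilde m_i=1}^{N}\omega^{\tilde m_i \sum_j A_{i,j}n_j}.
\ees

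Each inner factor is a complete geometric sum of a character on $\bbZ/N\bbZ$, equal to $N$ when $\sum_j A_{i,j}n_j\equiv 0\pmod N$ and equal to $0$ otherwise. Hence the product equals $N^{|\Pi_1|}\,\mathbb{I}(\sum_j A_{i,j}n_j\equiv 0\pmod N\ \forall i)$. Substituting back and collecting powers of $N$ gives the stated prefactor $N^{-k+(|\Pi_1|-|\Pi_2|)/2}$ and the sum over $\bfn\in E_{\Pi_2}$ of the indicator, completing the proof. There is no real obstacle here; the only point to be careful about is that the matrix $A$ depends only on $\Pi_1$ (since the equalities imposed by $\Pi_2$ are realised by restricting the outer sum to $\bfn\in E_{\Pi_2}$, not by altering the bilinear form), which is what allows the clean factorisation step.
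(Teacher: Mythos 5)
Your proof is correct and takes essentially the same route as the paper's: the paper's entire proof is the one-line instruction to "simply perform the $\bfm$ sum" in the double sum of \eq{EFE}, and your write-up is exactly that computation, made explicit by reparametrising over the free block indices of $\Pi_1$ and evaluating the resulting complete character sums. The bookkeeping of the powers of $N$ and the observation that $A$ depends only on $\Pi_1$ are both correct.
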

\begin{proof}
Simply perform the $\bfm$ sum in
\be
\label{eq:EFE}
\bra{E_{\Pi_1}}\cF^{\ot k,k}\ket{E_{\Pi_2}} = 
N^{-\l(k + \frac{|\Pi_1|+|\Pi_2|}{2}\r)}
 \sum_{\bfm \in E_{\Pi_1}} \sum_{\bfn \in E_{\Pi_2}} 
\omega^{\bfm.\bfn}\qedhere
\ee
\end{proof}

\begin{lemma}
$\bra{E_{\Pi_1}}\cF^{\ot k,k}\ket{E_{\Pi_2}}$ is real and
  positive.
\end{lemma}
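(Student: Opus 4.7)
The plan is to read off reality and positivity directly from the formula in equation \eqref{eq:EqualityConstraints} of the preceding lemma. That lemma rewrote the matrix element as
\bes
\bra{E_{\Pi_1}}\cF^{\ot k,k}\ket{E_{\Pi_2}} =
N^{-k + \frac{|\Pi_1|-|\Pi_2|}{2}}
\,\bigl|\{\bfn \in E_{\Pi_2} : \textstyle\sum_j A_{i,j} n_j \equiv 0 \bmod N \text{ for all } i\}\bigr|,
\ees
which is a positive real prefactor times the cardinality of a finite set. Since a cardinality is a non-negative integer, the matrix element is automatically a non-negative real number, which settles reality and non-negativity without further work.

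For strict positivity, I just need to exhibit one $\bfn$ in the counted set. The obvious candidate is $\bfn = (0,0,\ldots,0)$: it lies in $E_{\Pi_2}$ because the equality constraints defining $E_{\Pi_2}$ are trivially satisfied when all entries agree, and it satisfies every modular constraint $\sum_j A_{i,j} n_j \equiv 0 \pmod N$ since each sum is $0$. Thus the count is at least $1$ and the prefactor is strictly positive, giving a strictly positive matrix element.

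If I wanted to be a little more conceptual, I could observe that any constant vector $\bfn = (c,c,\ldots,c)$ lies in $E_{\Pi_2}$ and makes $\sum_j A_{i,j} n_j = c\sum_j A_{i,j}$; and by construction of $A$, each row $A_{i,\cdot}$ encodes which copies of $\tilde m_i$ appear with which sign in $\bfm\cdot\bfn$, so the row sum is $0$ whenever the two occurrences of a repeated $\tilde m_i$ have opposite signs, and the constraint is $\equiv 0 \bmod N$ in the remaining cases as well. This would give at least $N$ witnesses rather than just one, but it is not needed for the stated lemma. The main (and only) step is therefore invoking the previous lemma; no further obstacle arises.
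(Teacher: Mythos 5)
Your main argument is correct and is exactly the paper's own proof: both rewrite the matrix element via Lemma~\ref{lem:EqualityConstraints} as a positive real prefactor times a nonnegative integer count and then exhibit $\bfn=\mathbf{0}$ (equivalently $\bfn=(N,\ldots,N)$, working mod $N$) as a witness in the counted set. One caution about your optional tangent, though: the claim that every constant vector $\bfn=(c,\ldots,c)$ satisfies the modular constraints is false in general. The row sum $\sum_j A_{i,j}$ equals the number of elements of block $i$ of $\Pi_1$ lying in positions $1,\ldots,k$ minus the number lying in positions $k+1,\ldots,2k$; this need not vanish (e.g.\ if a block of $\Pi_1$ sits entirely inside the first half), and when it is a nonzero integer $r$ the condition $cr\equiv 0\bmod N$ holds only for $\gcd(r,N)$ values of $c$, not all $N$. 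Since you flagged that paragraph as unnecessary, the lemma is still correctly proved, but the ``more conceptual'' remark should be dropped or restricted to the case $\Pi_1\geq P(\pi)$.
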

\begin{proof}
Since all entries in the sum in \eq{EqualityConstraints} are nonnegative and at least one
($\bfn=0$) is strictly positive, \lemref{EqualityConstraints} implies the
result.
\end{proof}

\begin{lemma}
\label{lem:E-relax}
If $\Pi_1'\leq \Pi_1$ and $\Pi_2'\leq \Pi_2$ then
\be \sqrt{|E_{\Pi_1}|\cdot|E_{\Pi_2}|}
\bra{E_{\Pi_1}}\cF^{\ot k,k}\ket{E_{\Pi_2}}
\leq \sqrt{|E_{\Pi_1'}|\cdot|E_{\Pi_2'}|}
\bra{E_{\Pi_1'}}\cF^{\ot k,k}\ket{E_{\Pi_2'}}
\label{eq:E-relax}\ee
\end{lemma}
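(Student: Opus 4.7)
The approach is to rewrite the unnormalised matrix element in a form where monotonicity in each partition is manifest, using character theory on the additive group $(\mathbb{Z}/N)^{2k}$. Starting from Eq.~\ref{eq:EFE} and absorbing the normalising factors, define
\[
g(\Pi_1,\Pi_2) := \sqrt{|E_{\Pi_1}|\cdot|E_{\Pi_2}|}\,\bra{E_{\Pi_1}}\cF^{\ot k,k}\ket{E_{\Pi_2}} = N^{-k}\sum_{\bfm\in E_{\Pi_1},\,\bfn\in E_{\Pi_2}}\omega^{\bfm.\bfn}.
\]
The key observation is that each $E_\Pi$ is a subgroup of $(\mathbb{Z}/N)^{2k}$ of order $N^{|\Pi|}$, and that the bilinear form $\bfm.\bfn$ is symmetric and non-degenerate, since $m\mapsto -m$ is a bijection on $\mathbb{Z}/N$. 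The characters of $(\mathbb{Z}/N)^{2k}$ are therefore exactly $\bfn\mapsto\omega^{\bfm.\bfn}$, and character orthogonality gives $\sum_{\bfm\in E_{\Pi_1}}\omega^{\bfm.\bfn}=|E_{\Pi_1}|\,\mathbb{I}[\bfn\in E_{\Pi_1}^{\perp}]$, where $E_{\Pi_1}^{\perp}$ is the annihilator subgroup, of size $N^{2k-|\Pi_1|}$.

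Combining this collapse with the standard subgroup identity $|A\cap B|=|A|\cdot|B|/|A+B|$ simplifies the matrix element to
\[
g(\Pi_1,\Pi_2) = \frac{N^{k+|\Pi_2|}}{|E_{\Pi_2}+E_{\Pi_1}^{\perp}|},
\]
and the symmetry $\bfm.\bfn=\bfn.\bfm$ of the pairing equivalently gives $g(\Pi_1,\Pi_2)=N^{k+|\Pi_1|}/|E_{\Pi_1}+E_{\Pi_2}^{\perp}|$.

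The inequality now follows by changing one partition at a time. If $\Pi_1'\leq\Pi_1$, then $E_{\Pi_1}\subseteq E_{\Pi_1'}$ (fewer equality constraints yield a larger set), so $E_{\Pi_1'}^{\perp}\subseteq E_{\Pi_1}^{\perp}$, and hence $E_{\Pi_2}+E_{\Pi_1'}^{\perp}\subseteq E_{\Pi_2}+E_{\Pi_1}^{\perp}$. In the first expression for $g$, the numerator depends only on $\Pi_2$ and the denominator does not increase, so $g(\Pi_1',\Pi_2)\geq g(\Pi_1,\Pi_2)$. The symmetric argument using the second expression handles the $\Pi_2$ direction, and chaining the two bounds gives Eq.~\ref{eq:E-relax}. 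There is no serious obstacle here: once the sets $E_\Pi$ are recognised as subgroups of $(\mathbb{Z}/N)^{2k}$, the proof reduces to a short character-orthogonality computation followed by a one-line monotonicity observation about annihilators.
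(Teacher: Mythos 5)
Your proof is correct but follows a genuinely different route from the paper's, and in fact delivers a stronger conclusion.  The paper also begins by carrying out the $\bfm$-sum (as in Lemma~\ref{lem:EqualityConstraints}), producing an indicator of a system of congruences, but then immediately bounds the resulting sum of nonnegative terms by extending the $\bfn$-summation from $E_{\Pi_2}$ to the larger set $E_{\Pi_2'}$; the $\Pi_1$ direction is handled by appealing separately to the symmetry of $\bra{E_{\Pi_1}}\cF^{\ot k,k}\ket{E_{\Pi_2}}$ in $\Pi_1,\Pi_2$.  You instead recognise the congruence set as the annihilator subgroup $E_{\Pi_1}^{\perp}$ of $E_{\Pi_1}$ under the nondegenerate pairing on $(\mathbb{Z}/N)^{2k}$ and, via $|A\cap B|=|A|\,|B|/|A+B|$, close the computation to the exact formula $g(\Pi_1,\Pi_2)=N^{k+|\Pi_2|}/|E_{\Pi_2}+E_{\Pi_1}^{\perp}|$; the monotonicity is then a one-line consequence of $E_{\Pi_1'}^{\perp}\subseteq E_{\Pi_1}^{\perp}$.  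What this buys you is an explicit closed form for the matrix element that the paper never states, and from which the auxiliary facts the paper proves separately (positivity of $\bra{E_{\Pi_1}}\cF^{\ot k,k}\ket{E_{\Pi_2}}$, and its symmetry in $\Pi_1,\Pi_2$) drop out for free.  What the paper's version buys is self-containedness: it never needs nondegeneracy of the pairing, character orthogonality on subgroups, or the second isomorphism theorem, only the observation that a sum of nonnegative terms grows when the index set grows.  Both arguments are sound; yours is the more structural and the more informative.
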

\begin{proof}
We prove first the special case when
$\Pi_1'=\Pi_1$, but $\Pi_2'\leq \Pi_2$ is arbitrary.   Recall that
$\Pi_2'\leq \Pi_2$ implies that $E_{\Pi_2}\subseteq E_{\Pi_2'}$.  Now the
LHS of \eq{E-relax} equals
\bas N^{-k}\sum_{\bfm \in E_{\Pi_1}, \bfn \in E_{\Pi_2}} 
&\exp\left(\tpiN \bfm.\bfn\right) \\
&=  N^{|\Pi_1|-k}
 \sum_{\bfn \in E_{\Pi_2}} 
\mathbb{I}\left(\sum_j A_{i,j} n_j \equiv 0 \bmod{N} \, \forall \, i\right)
\\ & =
N^{|\Pi_1|-k}
 \sum_{\bfn \in E_{\Pi_2'}} \bbI\l(\bfn\in E_{\Pi_2}\r)
\mathbb{I}\left(\sum_j A_{i,j} n_j \equiv 0 \bmod{N} \, \forall \,
  i\right)
\\ & \leq 
N^{|\Pi_1|-k}
 \sum_{\bfn \in E_{\Pi_2'}}
\mathbb{I}\left(\sum_j A_{i,j} n_j \equiv 0 \bmod{N} \, \forall \,
  i\right)
\\ & = 
\sqrt{|E_{\Pi_1}|\,|E_{\Pi_2'}|}
\bra{E_{\Pi_1}}\cF^{\ot k,k}\ket{E_{\Pi_2'}},
\eas
as desired.
To prove \eq{E-relax} we repeat this argument, interchanging the
roles of $\Pi_1$ and $\Pi_2$ and use the fact that $\bra{E_{\Pi_1}}\cF^{\ot k,k}\ket{E_{\Pi_2}}$ is symmetric in $\Pi_1$ and $\Pi_2$.
\end{proof}
\begin{lemma}
\label{lem:E-mat-el-bound}
\item \be \bra{E_{\Pi_1}}\cF^{\ot k,k}\ket{E_{\Pi_2}} \leq 
N^{-\frac{1}{2}\l|2k - (|\Pi_1| + |\Pi_2|)\r|}
\label{eq:E-mat-el-bound}\ee
\end{lemma}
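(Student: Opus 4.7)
The plan is to split the bound according to the sign of $2k-(|\Pi_1|+|\Pi_2|)$, since the absolute value in the exponent makes the two regimes behave differently; each case reduces to a short argument using either \lemref{EqualityConstraints} or \lemref{E-relax}.

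For $|\Pi_1|+|\Pi_2|\le 2k$, I would apply \lemref{EqualityConstraints} and bound the indicator sum trivially by the size of its index set, $\sum_{\bfn\in E_{\Pi_2}}\bbI(\cdots)\le|E_{\Pi_2}|=N^{|\Pi_2|}$. Multiplying by the prefactor $N^{-k+(|\Pi_1|-|\Pi_2|)/2}$ from \lemref{EqualityConstraints} yields $\bra{E_{\Pi_1}}\cF^{\ot k,k}\ket{E_{\Pi_2}}\le N^{(|\Pi_1|+|\Pi_2|)/2-k}$, which in this regime equals the claimed $N^{-\frac{1}{2}|2k-(|\Pi_1|+|\Pi_2|)|}$.

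For $|\Pi_1|+|\Pi_2|\ge 2k$, I would apply \lemref{E-relax} with $\Pi_1'=\Pi_2'$ equal to the singletons partition $\{\{1\},\{2\},\ldots,\{2k\}\}$. This is the minimum of the refinement order, so it automatically satisfies $\Pi_i'\le\Pi_i$ for any $\Pi_i$. The corresponding state is $\ket{E_{\Pi_i'}}=\ket{+}^{\ot 2k}$ with $\ket{+}=N^{-1/2}\sum_n\ket{n}$, and since $\cF\ket{+}=\cF^*\ket{+}=\ket{0}$ I obtain $\cF^{\ot k,k}\ket{+}^{\ot 2k}=\ket{0}^{\ot 2k}$, hence $\bra{E_{\Pi_1'}}\cF^{\ot k,k}\ket{E_{\Pi_2'}}=N^{-k}$. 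Substituting $|E_{\Pi_i'}|=N^{2k}$ into \lemref{E-relax} then yields $\bra{E_{\Pi_1}}\cF^{\ot k,k}\ket{E_{\Pi_2}}\le N^{k-(|\Pi_1|+|\Pi_2|)/2}$, matching the claim in this regime.

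Neither case is genuinely hard: one is a crude counting bound, the other a direct computation via \lemref{E-relax}. The two bounds agree at the crossover $|\Pi_1|+|\Pi_2|=2k$, so the combination is seamless. The only subtle point is to refine in the correct direction in \lemref{E-relax}, namely all the way down to the singletons partition, so that the hypothesis $\Pi_i'\le\Pi_i$ holds for arbitrary input partitions.
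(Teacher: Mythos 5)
Your proof is correct and follows essentially the same route as the paper: bound the indicator (or the double sum) trivially in the regime $|\Pi_1|+|\Pi_2|\le 2k$, and use \lemref{E-relax} together with the observation that $\cF^{\ot k,k}$ maps the singletons state to a computational basis state in the regime $|\Pi_1|+|\Pi_2|\ge 2k$. The only cosmetic difference is that you refine both partitions to the singletons, whereas the paper keeps $\Pi_1'=\Pi_1$ and refines only $\Pi_2'$; both lead to the same bound via $\braket{E_{\Pi_1}}{0}=N^{-|\Pi_1|/2}$ (the all-zeros tuple lies in $E_{\Pi_1}$ for any partition).
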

\begin{proof}
Here, there are two
cases to consider.  The simpler case is when $|\Pi_1|+|\Pi_2|\leq
2k$.  Here we simply apply the inequality
$$\sum_{\bfm \in E_{\Pi_1}, \bfn \in E_{\Pi_2}} 
\exp\left(\tpiN \bfm.\bfn\right) \leq |E_{\Pi_1}|\, |E_{\Pi_2}|
 = N^{|\Pi_1|+|\Pi_2|}$$
to \eq{EFE}, and conclude that 
$\bra{E_{\Pi_1}}\cF^{\ot k,k}\ket{E_{\Pi_2}} \leq
N^{\frac{|\Pi_1| + |\Pi_2|}{2} -k}$.

Next, we would like to prove that
\be \bra{E_{\Pi_1}}\cF^{\ot k,k}\ket{E_{\Pi_2}} \leq
N^{k-\frac{|\Pi_1| + |\Pi_2|}{2}}. 
\label{eq:desired-EFE-bound}\ee
Here we use \lemref{E-relax} with $\Pi_1'=\Pi_1$ and $\Pi_2'=\{ \{1\},
\{2\}, \ldots, \{2k\} \}$, the maximally refined partition. 
Note that $|E_{\Pi_2'}|=N^{2k}$ and $\cF^{\ot k,k}\ket{E_{\Pi_2'}} =
\ket{0}$.  Thus
\bas \bra{E_{\Pi_1}}\cF^{\ot k,k}\ket{E_{\Pi_2}}
\leq N^{k-\frac{|\Pi_2|}{2}}
\bra{E_{\Pi_1}}\cF^{\ot k,k}\ket{E_{\Pi_2'}}
=N^{k-\frac{|\Pi_2|}{2}}
\braket{E_{\Pi_1}}{0} = 
N^{k-\frac{|\Pi_1| + |\Pi_2|}{2}},
\eas
establishing \eq{desired-EFE-bound}.
\end{proof}

\begin{lemma}
\label{lem:E-not-from-perm}
If $\Pi_1=\Pi_2=P(\pi)$ then
$\bra{E_{\Pi_1}}\cF^{\ot k,k}\ket{E_{\Pi_2}}=1$.  If, for any $\Pi_1$, $\Pi_2$ with $|\Pi_1| + |\Pi_2| = 2k$, either condition
isn't met (i.e. either $\Pi_1\neq \Pi_2$ or there does not exist
$\pi\in\cS_k$ such that $P(\pi)=\Pi_1=\Pi_2$) then
\be \bra{E_{\Pi_1}}\cF^{\ot k,k}\ket{E_{\Pi_2}} \leq \frac{2k}{N}
\label{eq:E-not-from-perm}\ee
for $N > k$.
\end{lemma}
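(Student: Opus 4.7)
The first assertion is immediate: by \eq{VUN-part-basis}, $\ket{E_{P(\pi)}}\in V_{\cU(N)}$, so it is fixed by $U^{\ot k,k}$ for every unitary $U$. In particular $\cF^{\ot k,k}\ket{E_{P(\pi)}} = \ket{E_{P(\pi)}}$, and since $\|\ket{E_{P(\pi)}}\|=1$ the inner product equals $1$.

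For the second assertion I would apply \lemref{EqualityConstraints}. Parametrising $\bfn\in E_{\Pi_2}$ by its block-values $\tilde n_C$, the constraints $\sum_j A_{i,j}n_j\equiv 0 \pmod N$ become $\sum_{C} a_{BC}\,\tilde n_C \equiv 0\pmod N$ for every block $B$ of $\Pi_1$, where
\[
a_{BC} \;=\; |B\cap C\cap\{1,\ldots,k\}|\;-\;|B\cap C\cap\{k+1,\ldots,2k\}|
\]
and $|a_{BC}|\le k$. Using $|\Pi_1|+|\Pi_2|=2k$, the prefactor in \lemref{EqualityConstraints} becomes $N^{-|\Pi_2|}$, so the matrix element equals $N^{-|\Pi_2|}$ times the number of solutions of $A\bfn\equiv 0\pmod N$ in $\mathbb{Z}_N^{|\Pi_2|}$.

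The heart of the proof is the combinatorial claim: if $|\Pi_1|+|\Pi_2|=2k$ and the integer matrix $A$ is identically zero, then $\Pi_1=\Pi_2=P(\pi)$ for some $\pi\in\cS_k$. The argument is that $a_{BC}=0$ forces each $|B\cap C|$ to be even, so every non-empty intersection contains at least $2$ elements. Since $\sum_{B,C}|B\cap C|=2k$, the number of non-empty intersections is $\le k$. Each block of $\Pi_i$ is a disjoint union of non-empty intersections, giving $|\Pi_1|,|\Pi_2|\le k$; combined with $|\Pi_1|+|\Pi_2|=2k$, equality forces every non-empty intersection to have size exactly $2$ and every block to be a single such intersection. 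Hence $\Pi_1=\Pi_2$, each block has one element in $\{1,\ldots,k\}$ and one in $\{k+1,\ldots,2k\}$, i.e.\ $\Pi_1=P(\pi)$. I expect this bijection/counting step to be the main obstacle; the rest is routine.

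Under the lemma's hypothesis we are therefore \emph{not} in that case, so $A\ne 0$ as an integer matrix. Since $|a_{BC}|\le k < N$, non-zero over $\mathbb{Z}$ implies non-zero $\bmod\,N$, so $A$ has a row whose entries $(a_1,\ldots,a_{|\Pi_2|})$ are not all $\equiv 0\pmod N$. The number of solutions to this single constraint is $N^{|\Pi_2|-1}\cdot \gcd(a_1,\ldots,a_{|\Pi_2|},N)$, and that gcd is bounded by any non-zero $|a_j|\le k$. The number of solutions to the full system $A\bfn\equiv 0$ is therefore at most $k\cdot N^{|\Pi_2|-1}$, giving
\[
\bra{E_{\Pi_1}}\cF^{\ot k,k}\ket{E_{\Pi_2}} \;\le\; N^{-|\Pi_2|}\cdot k\cdot N^{|\Pi_2|-1} \;=\; \frac{k}{N} \;\le\;\frac{2k}{N},
\]
as required.
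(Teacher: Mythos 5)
Your proof is correct, and it departs from the paper's at each of the three stages, so it is worth recording what each route buys.

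For the first assertion the paper observes that $\tilde A = 0$ when $\Pi_1=\Pi_2=P(\pi)$ (via their \lemref{tildeA}), so $\bfm.\bfn=0$ in \eq{EFE} and every term in the sum equals $1$. You instead invoke \eq{VUN-part-basis} and the fact that $V_{\cU(N)}$ is the common fixed space of all $U^{\ot k,k}$; since $\cF$ is one such $U$, the state is fixed and the inner product is $1$. This is cleaner and makes the structural reason transparent.

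For the combinatorial step, the paper proves the unrestricted equivalence ``$\tilde A = 0 \iff \Pi_1,\Pi_2 \ge P(\pi)$ for some $\pi$'' (\lemref{tildeA}) via the meet $\Pi_1\wedge\Pi_2$, and then specialises: under $|\Pi_1|+|\Pi_2|=2k$, $\Pi_1,\Pi_2\ge P(\pi)$ forces $|\Pi_1|=|\Pi_2|=k$ and hence $\Pi_1=\Pi_2=P(\pi)$. You prove only the specialised implication, using the constraint $|\Pi_1|+|\Pi_2|=2k$ head-on: $a_{BC}=0$ forces nonempty intersections of even size $\ge 2$, of which there are at most $k$; since blocks of each $\Pi_i$ are disjoint unions of intersections, $|\Pi_1|,|\Pi_2|\le k$, and equality pins everything down. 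This is shorter, at the mild cost of not yielding the reusable general lemma. (You flag this step as the likely obstacle, but your argument as written is already complete.)

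For the counting, the paper picks a nonzero row, does a somewhat informal ``replace $\tilde A_{i,j}$ by $\gcd(|\tilde A_{i,j}|,N)$ via change of variables'' argument, and bounds $|\tilde A_{i,j}|\le 2k$. You instead use the exact formula that $\sum_j a_j \tilde n_j\equiv 0\pmod N$ has $N^{m-1}\gcd(a_1,\dots,a_m,N)$ solutions in $\mathbb{Z}_N^m$, bound the gcd by a nonzero entry, and bound $|a_{BC}| = \bigl||B\cap C\cap\{1,\dots,k\}|-|B\cap C\cap\{k+1,\dots,2k\}|\bigr|\le k$. This is both more rigorous and tighter: you get $k/N$ rather than the paper's $2k/N$ (the paper's bound $|\tilde A_{i,j}|\le 2k$ is loose by a factor of two). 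The hypothesis $N>k$ enters in exactly the same place in both arguments, to rule out a nonzero integer entry vanishing modulo $N$.
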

\begin{proof}
In \lemref{tildeA}, we introduce the $\Pi_1 \times \Pi_2$ matrix $\tilde{A}$ with the property that
\be
\bfm.\bfn = \sum_{i=1}^{|\Pi_1|}\sum_{j=1}^{|\Pi_2|}
\tilde{m}_i \tilde{A}_{i,j} \tilde{n}_j
\ee
for all $\bfm \in \Pi_1$ and $\bfn \in \Pi_2$ where $\tilde{m}_j$ and $\tilde{n}_j$ are the free indices of $\bfm$ and $\bfn$.  This is similar to the matrix $A$ introduced in \lemref{EqualityConstraints} except only the free indices of $\bfn$ are considered.

For $\Pi_1 = \Pi_2 = P(\pi)$, \lemref{tildeA} implies that $\tilde{A} = 0$, or equivalently $\bfm.\bfn = 0$ for all $\bfm, \bfn \in P(\pi)$.  Using $|\Pi_1|+|\Pi_2| = 2k$, $\bra{E_{\Pi_1}}\cF^{\ot k,k}\ket{E_{\Pi_2}} = 1$.

Otherwise we have  $(\Pi_1,\Pi_2)\not\in\{(P(\pi),P(\pi)):\pi\in\cS_k\}$ with $|\Pi_1| + |\Pi_2| = 2k$.  For all these, \lemref{tildeA} implies that $\tilde{A}$ is nonzero (for $N > k$, no entries in $\tilde{A}$ can be $>N$ or $<-N$ so $\tilde{A} \equiv 0 \bmod{N}$ is equivalent to $\tilde{A} = 0$).  Fix an $i$ for which the $i^{\text{th}}$ row of
$\tilde{A}$ is nonzero.  We wish to count the number of $(\tilde
n_1,\ldots,\tilde n_{|\Pi_2|})$ such that $\sum_j \tilde A_{i,j} \tilde
n_j \equiv 0 \bmod{N}$.  Assume that each $\tilde{A}_{i,j}$ divides $N$ and is
nonnegative; if
not, we can replace $\tilde{A}_{i,j}$ with $\text{GCD}(|\tilde{A}_{i,j}|,N)$ by a suitable
change of variable for $\tilde{n}_j$.  

Now choose an arbitrary $j$ such that $\tilde A_{i,j}\neq 0$.  For any
values of $\tilde n_1,\ldots,\tilde n_{j-1}, \\ \tilde n_{j+1},
\ldots,\tilde n_{|\Pi_2|}$, there are $|\tilde A_{i,j}| \leq  2k $
choices of $\tilde n_j$ such that $\sum_{j} \tilde{A}_{i,j} \tilde{n}_j \equiv 0
\bmod{N}$.  Thus, there are $\leq 2kN^{|\Pi_2|-1}$ choices of $\tilde n$
such that $\sum_{j} \tilde{A}_{i,j} \tilde{n}_j \equiv 0 \bmod{N}$.
Substituting this into \eq{EqualityConstraints} (which we can
trivially modify to apply for $\tilde{A}$ rather than just $A$), we
find that
$$\bra{E_{\Pi_1}}\cF^{\ot k,k}\ket{E_{\Pi_2}}
\leq \frac{2k}{N} N^{-k + \frac{|\Pi_1|+|\Pi_2|}{2}}
= \frac{2k}{N},$$
thus establishing \eq{E-not-from-perm}.
\end{proof}

\begin{lemma}
\label{lem:tildeA}
Let $\tilde{A}$ be the matrix such that
$\bfm.\bfn = \sum_{i=1}^{|\Pi_1|}\sum_{j=1}^{|\Pi_2|}
\tilde{m}_i \tilde{A}_{i,j} \tilde{n}_j$
for all $\bfm \in \Pi_1$ and $\bfn \in \Pi_2$ where $\tilde{m}_j$ and $\tilde{n}_j$ are the free indices of $\bfm$ and $\bfn$.
Then
$\tilde{A}=0$ if and only if $\Pi_1, \Pi_2 \geq P(\pi)$ for
some $\pi\in\cS_k$.
\end{lemma}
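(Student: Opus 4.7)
The plan is to reduce the statement about $\tilde A$ to a combinatorial statement about partitions. Write the blocks of $\Pi_1$ as $B_1,\ldots,B_{|\Pi_1|}$ and the blocks of $\Pi_2$ as $C_1,\ldots,C_{|\Pi_2|}$. Since $\bfm \in E_{\Pi_1}$ means $m_a$ depends only on the block of $a$ in $\Pi_1$ (and similarly for $\bfn$), expanding $\bfm.\bfn = \sum_{a=1}^{k} m_a n_a - \sum_{a=k+1}^{2k} m_a n_a$ and collecting terms gives
\be
\tilde A_{i,j} = \l| B_i \cap C_j \cap \{1,\ldots,k\}\r| - \l|B_i \cap C_j \cap \{k+1,\ldots,2k\}\r|.
\ee
So $\tilde A = 0$ is equivalent to the condition that every block of the common refinement $\Pi_1 \wedge \Pi_2$ (whose blocks are the nonempty intersections $B_i \cap C_j$) contains equally many elements of the ``left half'' $\{1,\ldots,k\}$ and of the ``right half'' $\{k+1,\ldots,2k\}$.

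For the ``if'' direction, suppose $P(\pi)\leq \Pi_1$ and $P(\pi) \leq \Pi_2$, so $P(\pi) \leq \Pi_1 \wedge \Pi_2$. Each block $\{a, k+\pi(a)\}$ of $P(\pi)$ is contained in a single block of $\Pi_1 \wedge \Pi_2$, giving a bijection between left-half and right-half elements within each intersection $B_i \cap C_j$. Hence $\tilde A_{i,j}=0$ for all $i,j$.

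For the ``only if'' direction, assume $\tilde A = 0$. Then within each block of $\Pi_1 \wedge \Pi_2$ there are equally many left-half and right-half elements, so I can choose an arbitrary bijection inside each such block pairing its left-half elements with its right-half elements. Collecting these bijections yields a permutation $\pi \in \cS_k$ such that every pair $\{a, k+\pi(a)\}$ lies in a single block of $\Pi_1 \wedge \Pi_2$, i.e.\ $P(\pi) \leq \Pi_1 \wedge \Pi_2$, which is the same as $P(\pi) \leq \Pi_1$ and $P(\pi) \leq \Pi_2$.

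The only real obstacle is the preliminary verification that $\tilde A_{i,j}$ has the block-intersection form above, which is essentially a bookkeeping exercise: one must check that the identification of free indices $\tilde m_i$ (resp.\ $\tilde n_j$) with the common value of $\bfm$ on $B_i$ (resp.\ of $\bfn$ on $C_j$) is unambiguous and that regrouping the sum $\bfm.\bfn$ by pairs of blocks produces exactly the signed intersection counts. Once that is set up, both directions of the equivalence are immediate from the matching argument above.
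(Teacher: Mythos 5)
Your proof is correct and takes essentially the same route as the paper's: both derive the identity $\tilde A_{i,j} = |B_i\cap C_j\cap\{1,\ldots,k\}| - |B_i\cap C_j\cap\{k+1,\ldots,2k\}|$ and then, for the ``only if'' direction, extract a permutation $\pi$ from the balanced blocks of $\Pi_1\wedge\Pi_2$. The one small difference is that you also use this intersection formula for the ``if'' direction, whereas the paper handles that direction by showing $\bfm.\bfn=0$ directly via the cancellation $\sum_j m_jn_j - \sum_j m_{\pi(j)}n_{\pi(j)}=0$; this is a purely cosmetic divergence and your unified treatment is arguably cleaner.
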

\begin{proof}
We first consider
$\Pi_1=\Pi_2=P(\pi)$ for the ``if'' direction.  Note that for any
$\bfm,\bfn\in E_{P(\pi)}$, we have
\be\bfm. \bfn = \sum_{j=1}^k m_jn_j - \sum_{j=1}^k m_{\pi(j)} n_{\pi(j)}
 = 0.
\label{eq:P-P-cancel}\ee
This implies that $\tilde{A}=0$.  Now, choose any $\Pi_1 \ge P(\pi)$ and $\Pi_2 \ge P(\pi)$.  Then for any $\bfm \in \Pi_1$ and $\bfn \in \Pi_2$, $\bfm, \bfn \in P(\pi)$.  This means \eq{P-P-cancel} holds for this case so $\tilde{A}=0$ also.

On the other hand, suppose that $\tilde{A}=0$.  We will argue that
this implies the existence of a permutation $\pi$ such that $\Pi_1, \Pi_2 \ge P(\pi)$, thus establishing the ``only
if'' direction.

Let $\Pi_{1,j}$ (resp. $\Pi_{2,j}$) denote the $j^{\text{th}}$ block
of $\Pi_1$ (resp. $\Pi_2$).  Then
$$\tilde A_{i,j} = \sum_{\substack{i'\in\Pi_{1,i}\\ j'\in \Pi_{2,j}}}
\Lambda_{i',j'},$$
where $\Lambda_{i',j'}$ is defined to be 
$$\Lambda_{i',j'} = 
\begin{cases} 1 & \text{if $i'=j'\in\{1,\ldots,k\}$}\\
-1 & \text{if $i'=j'\in\{k+1,\ldots,2k\}$}\\
0 & \text{if $i'\neq j'$}\end{cases}.
$$
If $\tilde{A}=0$ then for each $i,j$ we have
\be \l| \Pi_{1,i} \cap \Pi_{2,j} \cap \{1,\ldots,k\}\r| = 
\l| \Pi_{1,i} \cap \Pi_{2,j} \cap \{k+1,\ldots,2k\}\r|
\label{eq:meet-balanced}.\ee
Denote the {\em meet} of $\Pi_1$ and $\Pi_2$, $\Pi_1\land\Pi_2$ to be
the greatest lower bound of $\Pi_1$ and $\Pi_2$, or equivalently the
unique partition with the fewest blocks that satisfies $\Pi_1\land\Pi_2 \leq
\Pi_1$ and $\Pi_1\land\Pi_2 \leq \Pi_2$.  The blocks of 
$\Pi_1\land\Pi_2$ are simply all of the nonempty sets
$\Pi_{1,i}\cap\Pi_{2,j}$, for $i=1,\ldots,|\Pi_1|$ and
$j=1,\ldots,|\Pi_2|$.  Thus, \eq{meet-balanced} implies that each
block of $\Pi_1\land\Pi_2$ contains an equal number of indices from
$\{1,\ldots,k\}$ as it does from $\{k+1,\ldots,2k\}$.
This implies
the existence of a permutation $\pi\in\cS_k$ such that
$\{i,k+\pi(i)\}$ is contained in a single block of $\Pi_1\land\Pi_2$
for each $i=1,\ldots,k$.  Equivalently $\Pi_1\land\Pi_2\geq P(\pi)$,
implying that $\Pi_1\geq P(\pi)$ and $\Pi_2\geq P(\pi)$.
\end{proof}

\subsection{Proof of Lemma 4.2.2} 
\label{sec:ProofOfLemGap}

\begin{proof}

We would like to show that, for any unit vector $\ket{\psi}\in V_0$, 
$|\bra{\psi}\cF^{\ot k,k}\ket{\psi}|^2 \leq 2(2k)^{4k}/\sqrt{N}$.   Our strategy
will be to calculate the matrix elements of $\cF^{\ot k,k}$ in the
$\ket{I_\Pi}$ and $\ket{E_\pi}$ bases.  While the $\ket{I_\Pi}$ states
are orthonormal, we will see that the $\bra{E_{\Pi_1}}\cF^{\ot
  k,k}\ket{E_{\Pi_2}}$ matrix elements are easier to calculate.  We
then use  M\"obius functions to express $\ket{I_\Pi}$ in terms of
$\ket{E_\Pi}$. 

Consider the matrix $\cE_{\cS_N}^{2k} \cF^{\ot k,k}\cE_{\cS_N}^{2k}$.  It
has $k!$ unit eigenvalues, corresponding to the $k!$-dimensional space
$V_{\cU(N)}$.   Call the $k!+1^{\text{st}}$ largest eigenvalue $\lambda_A$.
We bound $\lambda_A$ with 
\ba k! + \lambda_A^2  
&\leq \tr \l(\cE_{\cS_N}^{2k} \cF^{\ot k,k}\cE_{\cS_N}^{2k}\r)^2 \nonumber
\\ & = \sum_{\Pi_1,\Pi_2\vdash 2k} 
\l|\bra{I_{\Pi_1}}\cF^{\ot k,k}\ket{I_{\Pi_2}}\r|^2
\label{eq:I-sum}.\ea

We divide the terms in \eq{I-sum} into four types.
\begin{subequations}\label{eq:I-sum-pieces}
\begin{enumerate}\item The leading-order contribution comes from the $k!$ terms
of the form $\Pi_1=\Pi_2=P(\pi)$ for $\pi\in\cS_k$.  We bound them
with the trivial upper bound
\be |\bra{I_{\Pi_1}}\cF^{\ot k,k}\ket{I_{\Pi_2}}|^2\leq 1
\ee (which turns
out to be nearly tight).
We will then show that the remaining terms are all $k^{O(k)}/N$.
\item If $|\Pi_1|+|\Pi_2| < 2k$ then
\ba \l|\bra{I_{\Pi_1}}\cF^{\ot k,k}\ket{I_{\Pi_2}}\r|^2
& = \frac{1}{|I_{\Pi_1}|\cdot |I_{\Pi_2}| N^{2k}} 
\l| \sum_{\substack{\bfm\in\Pi_1\\ \bfn\in\Pi_2}}
e^{\frac{2\pi i \bfm.\bfn}{N}} \r|^2 
\nn & \leq \frac{|I_{\Pi_1}|\cdot |I_{\Pi_2}|}{N^{2k}} 
\nn & \leq N^{|\Pi_1| + |\Pi_2| - 2k} \leq \frac{1}{N},
\ea
where in the last line we have used the fact that $|I_\Pi| \leq
|E_\Pi| = N^{|\Pi|}$.
\item If $|\Pi_1|+|\Pi_2| > 2k$ then we will show that
\be \l|\bra{I_{\Pi_1}}\cF^{\ot k,k}\ket{I_{\Pi_2}}\r|^2 
\leq \frac{4\cdot(2k!)^2}{N}
\label{eq:too-big}\ee
\item If $|\Pi_1|+|\Pi_2|=2k$ but either $\Pi_1\neq \Pi_2$ or there is
  no $\pi\in\cS_k$ satisfying $P(\pi)=\Pi_1=\Pi_2$, then we will show that
\be \l|\bra{I_{\Pi_1}}\cF^{\ot k,k}\ket{I_{\Pi_2}}\r|^2 
\leq \frac{((2k)!+2k)^2}{N^2} \leq \frac{4\cdot(2k!)^2}{N}
\label{eq:not-from-perm}\ee
\end{enumerate}
\end{subequations}
To establish these last two claims, we will find it useful to express
$\ket{I_\Pi}$ in terms of the various $\ket{E_\Pi}$ states.  

Lemmas \ref{lem:E-mat-el-bound} and \ref{lem:E-not-from-perm} can now be used together with the M\"obius
function to bound $|\bra{I_{\Pi_1}}\cF^{\ot k,k}\ket{I_{\Pi_2}}|^2$.
First, suppose $|\Pi_1|+|\Pi_2|>2k$.  Then
\ba
 \l|\bra{I_{\Pi_1}}\cF^{\ot k,k}\ket{I_{\Pi_2}}\r| & =
\l|\sum_{\substack{\Pi_1'\geq \Pi_1\\\Pi_2'\geq \Pi_2}}
\sqrt{\frac{|E_{\Pi_1'}|\, |E_{\Pi_2'}|}
{|I_{\Pi_1}|\, |I_{\Pi_2}|}}
 \mu(\Pi_1,\Pi_1')\mu(\Pi_2,\Pi_2') 
\bra{E_{\Pi_1}}\cF^{\ot k,k}\ket{E_{\Pi_2}}\r| \nonumber
\\ & \leq
\sum_{\substack{\Pi_1'\geq \Pi_1\\\Pi_2'\geq \Pi_2}}
\sqrt{\frac{|E_{\Pi_1'}|\, |E_{\Pi_2'}|}
{|I_{\Pi_1}|\, |I_{\Pi_2}|}}
\l| \mu(\Pi_1,\Pi_1')\mu(\Pi_2,\Pi_2') \r|
\bra{E_{\Pi_1'}}\cF^{\ot k,k}\ket{E_{\Pi_2'}} \nonumber
\\ & \leq
\frac{N^k}{\sqrt{|I_{\Pi_1}|\, |I_{\Pi_2}|}}
\sum_{\substack{\Pi_1'\geq \Pi_1\\\Pi_2'\geq \Pi_2}}
\l|\mu(\Pi_1',\Pi_1)\mu(\Pi_2',\Pi_2)\r| \nonumber
\ea
by \lemref{E-mat-el-bound}.  Then using by \corref{ModMobiusSum} we find
\ba
 \l|\bra{I_{\Pi_1}}\cF^{\ot k,k}\ket{I_{\Pi_2}}\r| & = \frac{N^k |\Pi_1|!\, |\Pi_2|!}{\sqrt{(N)_{|\Pi_1|}(N)_{|\Pi_2|}}}
\label{eq:penultimate-g2k}
\\ & \leq \frac{2\cdot(2k)!}{\sqrt{N}} \nonumber
\ea
In the last step, we have assumed that $4k^2<N$, so that
$(N)_{\ell}\geq N^\ell/2$ for any $\ell\leq 2k$.  We have also made
use of the fact that (still assuming $4k^2<N$) \eq{penultimate-g2k} is maximised
when $|\Pi_1| +|\Pi_2|=2k+1$, and in particular, when one of
$|\Pi_1|$, $|\Pi_2|$ is equal to $2k$ and the other is equal to 1.

A similar analysis applies to the pairs $\Pi_1,\Pi_2$ with
$|\Pi_1|+|\Pi_2|=2k$, but with
$(\Pi_1,\Pi_2)\not\in\{(P(\pi),P(\pi)):\pi\in\cS_k\}$.  In this case, 
\ba
\bra{I_{\Pi_1}}\cF^{\ot k,k}\ket{I_{\Pi_2}} 
&= \sqrt{\frac{|E_{\Pi_1}|\, |E_{\Pi_2}|}
{|I_{\Pi_1}|\, |I_{\Pi_2}|}}
\bra{E_{\Pi_1}}\cF^{\ot k,k}\ket{E_{\Pi_2}}  \,+ \nonumber \\
& 
\sum_{\Pi_1' \ge \Pi_1, \Pi'_2 \ge \Pi_2 \atop
(\Pi_1',\Pi_2')\neq (\Pi_1,\Pi_2)} 
\sqrt{\frac{|E_{\Pi_1'}|\, |E_{\Pi_2'}|}
{|I_{\Pi_1}|\, |I_{\Pi_2}|}}
\mu(\Pi_1, \Pi_1') \mu(\Pi_2, \Pi_2')
\bra{E_{\Pi_1'}}\cF^{\ot k,k}\ket{E_{\Pi_2'}}
\label{eq:strictly-greater}\ea
We now use Lemmas \ref{lem:E-not-from-perm} and \ref{lem:E-mat-el-bound} to bound each of the two terms.  For the
first term, we use \eq{E-not-from-perm} to upper bound it with
$2k/N$.  For each choice of $\Pi_1'$ and $\Pi_2'$ in the second sum,
we have $|\Pi_1'|+|\Pi_2'|\leq 2k-1$.  Thus we can upper bound
the absolute value of the second term in \eq{strictly-greater} with
\bas \frac{1}{\sqrt{|I_{\Pi_1}|\, |I_{\Pi_2}|}}
\sum_{\Pi_1' \ge \Pi_1, \Pi'_2 \ge \Pi_2 \atop
(\Pi_1',\Pi_2')\neq (\Pi_1,\Pi_2)} 
|\mu(\Pi_1, \Pi_1') \mu(\Pi_2, \Pi_2')|
N^{|\Pi_1'|+|\Pi_2'|-k}
 &\leq \frac{2\cdot |\Pi_1|!\cdot |\Pi_2|!}{N} \\ &\leq \frac{(2k)!}{N}.
\eas
We combine the two terms and square to establish \eq{not-from-perm}.

We now put together the components from \eq{I-sum-pieces} to
upper bound \eq{I-sum}, and find that
$$k! + \lambda_A^2  \leq k! + 
\beta_{2k}^2 \frac{4\cdot(2k!)^2}{N},$$
implying that $\lambda_A \leq 2\beta_{2k}(2k!)/\sqrt{N} \leq 2(2k)^{4k}/\sqrt{N}$.
  This concludes the proof of \lemref{Gap}.
\end{proof}

\section{Conclusions}

We have shown how efficient quantum tensor product expanders can be
constructed from efficient classical tensor product expanders.  This
immediately yields an efficient construction of unitary
$k$-designs for any $k$.  Unfortunately our results do not work for
all dimensions; we require the dimension $N$ to be $\Omega((2k)^{8k})$.
While tighter analysis of our construction could likely improve this,
our construction does not work for $N < 2k$.  Constructions of
expanders for all dimensions remains an open problem.

\chapter{Applications of Designs}
\label{chap:DesignApplications}

In this chapter we first survey known applications of designs from a wide variety of areas.  Then we present new results applying designs to derandomise some large deviation bounds.

\section{Review of Applications}

As we have already discussed, random unitaries and random states have many applications.  For some of these applications a design is sufficient since only the first few moments of the distribution are required to be equal to those of the Haar measure.

\subsection{Quantum Cryptography}
\label{sec:ApplicationsQCrypto}

The first applications we discuss are to quantum cryptography.  In classical cryptography, the one-time pad is the most basic operation that perfectly encrypts a message using a key that is the same length as the message.  In quantum cryptography the analogue is a quantum operation $\cE$ such that for all input states $\rho$, $\cE(\rho) = \rho_0$ with the requirement that given a secret key Bob can decode Alice's message perfectly.  Since all states are encoded to $\rho_0$ Eve cannot learn anything about the message without knowing the key.  In this section all logs will be taken to base $2$.

If $\rho_0$ is the identity, then the map $\cE$ is a unitary 1-design.  Therefore using $2n$ bits of key to label the Pauli operators provides a quantum one-time pad.  In fact, in \cite{AMTW00} it is shown that $2n$ bits of key are also necessary for a quantum one-time pad.  Therefore this unitary 1-design is of optimal size.

It is interesting to note that $2n$ bits of key are required rather than just $n$ for the classical one-time pad.  This is related to the fact that quantum states allow superdense coding \cite{SuperdenseCoding}, which allows two classical bits to be sent per qubit.  Although it is not possible to use a shorter key for exact encryption, it would be desirable to shorten the key if we can tolerate Eve learning a small amount of information about the message.  In \cite{AmbainisSmith04}, they consider closeness in the 1-norm, and set $\rho_0 = I$.  They define a map $\cE$ as an $\eps$-approximate quantum encryption scheme if for all $\rho$
\be
\label{eq:ApproxQ1TimePad1Norm}
|| \cE(\rho) - I/d ||_1 \le \eps.
\ee
Thus we see an $\eps$-approximate unitary 1-design according to (for example) \defref{ApproxUnitaryDesignDiamond} (DIAMOND) suffices.  In \cite{AmbainisSmith04}, they present an efficient construction that satisfies \eq{ApproxQ1TimePad1Norm} with $n + 2\log n + 2\log\l(\frac{1}{\eps}\r) + O(1)$ bits of key.  While this does not immediately provide an $\eps$-approximate 1-design according to any of our definitions with only $(1+o(1))n$ bits of key, \eq{ApproxQ1TimePad1Norm} is a valid definition of an $\eps$-approximate unitary 1-design.  This key length was further improved by Dickinson and Nayak \cite{DickinsonNayak06} to $n + 2 \log \l( \frac{1}{\eps} \r) + O(1)$ and their construction is efficient.

A stronger definition for $\eps$-approximate encryption was given in \cite{RandomizingQuantumStates04}.  They define a map $\cE$ to be an $\eps$-approximate quantum encryption scheme if for all $\rho$
\be
\label{eq:ApproxQ1TimePadInftyNorm}
|| \cE(\rho) - I/d ||_\infty \le \eps/d.
\ee
This implies the 1-norm bound in \eq{ApproxQ1TimePad1Norm} but a dimension factor is lost when converting the other way.  This could be used as yet another approximate 1-design definition.  However, there are no known efficient constructions of such $\infty$-norm randomising maps.  In \cite{RandomizingQuantumStates04} they provide an inefficient randomised construction with key length $n + \log n + 2 \log \l( \frac{1}{\eps} \r) + O(1)$.  Their method is to show that with non-zero probability random unitaries suffice.

This result was improved by Aubrun in \cite{AubrunRandomizingChannels} to reduce the key length to $n + 2 \log \l( \frac{1}{\eps} \r) + O(1)$.  The method is the same as \cite{RandomizingQuantumStates04} except the analysis is tighter.  Aubrun also makes a step towards finding an efficient construction by showing that the unitaries can be Pauli matrices, which can be implemented efficiently, although the sampling is still inefficient.

Besides cryptographic applications, it is shown in \cite{RandomizingQuantumStates04} that $\infty$-norm randomising maps can be used to hide correlations from local operations and classical communication (LOCC) and have applications to data hiding (see later) and locking of classical correlations \cite{Locking04}, whereby classical correlations can be hidden but unlocked by a very short key.

The last cryptography example we give is that of non-malleable encryption given in \cite{TamperResistance}.  Here the authors not only consider hiding information from Eve but they also require that she cannot change the message.  Of course, Eve could always replace the message with some fixed state or do nothing, so according to \cite{TamperResistance}, an encryption scheme is non-malleable if these (or a convex combination) are the only operations Eve can perform on the encoded data.  The main result of this paper is that a unitary 2-design is necessary and sufficient.  They then show, as do Gross et al.~\cite{GAE07}, that a 2-design requires at least $(d^2-1)^2+1$ unitaries i.e.~the key must be at least $4n - o(1)$ bits long.  Even for approximate encryption (which can be seen as an approximate 2-design) the key length is essentially the same.

\subsection{Measurement}

In some cases a random measurement is a good choice but cannot be performed efficiently.  One example of such a result is:
\begin{theorem}[Sen, \cite{Sen05}]
Let $\rho_1$ and $\rho_2$ be any mixed states with $r(\rho_1) + r(\rho_2) \le \sqrt{d}/C$ for a sufficiently large constant $C$.  Here, $r(\rho)$ is the rank of the state $\rho$.  Then
\be
\label{eq:RandomMeasurementBound}
\bbE_M || M(\rho_1) - M(\rho_2) ||_1 = \Omega(|| \rho_1 - \rho_2 ||_2)
\ee
where $M$ is an orthonormal basis picked from the Haar measure.  Here, $M(\rho)$ is the probability distribution of outcomes according to the POVM $M$.
\end{theorem}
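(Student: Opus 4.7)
The plan is to reduce the random-basis integral to a single Haar random state by symmetry, and then lower bound the expected value of $|\langle \psi|\Delta|\psi\rangle|$ via a moment inequality using only the second and fourth moments. Set $\Delta = \rho_1 - \rho_2$, which is Hermitian and traceless with rank at most $r(\rho_1)+r(\rho_2) \le \sqrt{d}/C$. Parametrising the basis as $|b_i\rangle = U|i\rangle$ for Haar distributed $U$, we have $\|M(\rho_1)-M(\rho_2)\|_1 = \sum_i |\langle b_i|\Delta|b_i\rangle|$; linearity of expectation, together with the unitary invariance of the Haar measure, yields
\bes
\bbE_M \|M(\rho_1)-M(\rho_2)\|_1 = d\, \bbE_{|\psi\rangle \sim \cS(d)} |X|, \quad X := \langle\psi|\Delta|\psi\rangle.
\ees

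To lower bound $\bbE|X|$ I would use the standard inequality $\bbE|X| \ge (\bbE X^2)^{3/2}/(\bbE X^4)^{1/2}$, which follows from two applications of Cauchy--Schwarz ($\bbE X^2 = \bbE(|X|^{1/2}\cdot |X|^{3/2})$ and $\bbE|X|^3 = \bbE(X^2 \cdot |X|)$). Both moments can be computed explicitly using \lemref{SymmetricStateAverage}: writing $\int \proj{\psi}^{\ot k} d\psi = \Pi_{+k}/\binom{k+d-1}{k}$, expanding $\Pi_{+k} = \frac{1}{k!}\sum_{\pi\in \cS_k} S(\pi)$, and applying \lemref{TraceCycles} reduces every term to a product of traces of powers of $\Delta$. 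The traceless condition $\tr\Delta = 0$ kills every contribution containing a $\tr\Delta$ factor, leaving
\bes
\bbE X^2 = \frac{\|\Delta\|_2^2}{d(d+1)}
\ees
from the $(I+\swap)/2$ expansion, and
\bes
\bbE X^4 = \frac{3(\|\Delta\|_2^4 + 2\tr\Delta^4)}{d(d+1)(d+2)(d+3)}
\ees
from the $(2,2)$ and $(4)$ cycle classes in $\cS_4$ (the identity, $(3,1)$ and $(2,1,1)$ classes all vanish).

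The final step is to control $\tr\Delta^4$. The crude estimate $\tr\Delta^4 \le \|\Delta\|_\infty^2\|\Delta\|_2^2 \le \|\Delta\|_2^4$ already gives $\bbE X^4 \le 9\|\Delta\|_2^4/(d(d+1)(d+2)(d+3))$, hence $\bbE|X| = \Omega(\|\Delta\|_2/d)$; the prefactor of $d$ outside then delivers the claimed $\Omega(\|\Delta\|_2)$. The low-rank hypothesis is not essential for this leading-order moment calculation on the Haar measure, but is almost certainly playing a pseudo-randomness role: the entire argument uses only the first four moments of $\proj{\psi}$, so a state 4-design should suffice in place of Haar, and the constraint $r(\rho_1)+r(\rho_2) \le \sqrt{d}/C$ is what keeps $\tr\Delta^4$ small enough relative to $\|\Delta\|_2^4$ (so that $\bbE X^4$ stays within a constant of the Gaussian-like value $3(\bbE X^2)^2$) when Haar is replaced by an approximate 4-design. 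The main obstacle I foresee is tracking constants through the Paley--Zygmund step carefully enough to confirm that the $\sqrt{d}$ rank threshold is indeed the right sufficient hypothesis, and to extract an explicit $C$ that makes the argument survive the approximation errors introduced by using a 4-design rather than the exact Haar measure.
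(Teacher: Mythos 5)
The paper does not prove this theorem: it is stated as a cited result from Sen~\cite{Sen05}, and the discussion that follows defers to Ambainis and Emerson~\cite{AmbainisEmerson07} for the 4-design refinement. So there is no in-text proof against which to compare your argument, but your argument is essentially the Ambainis--Emerson fourth-moment approach, and it is correct. The reduction $\bbE_M\|M(\rho_1)-M(\rho_2)\|_1 = d\,\bbE_{\psi}|\langle\psi|\Delta|\psi\rangle|$ is right; the Cauchy--Schwarz chain giving $\bbE|X|\ge(\bbE X^2)^{3/2}/(\bbE X^4)^{1/2}$ is right; and your moment values $\bbE X^2 = \|\Delta\|_2^2/[d(d+1)]$ and $\bbE X^4 = [3\|\Delta\|_2^4 + 6\tr\Delta^4]/[d(d+1)(d+2)(d+3)]$ follow from Lemma~\ref{lem:SymmetricStateAverage}, Lemma~\ref{lem:TraceCycles}, the cycle structure of $\cS_4$, and $\tr\Delta=0$. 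Feeding in $\tr\Delta^4 \le \|\Delta\|_2^4$ yields $\bbE|X| = \Omega(\|\Delta\|_2/d)$ and hence the claimed $\Omega(\|\Delta\|_2)$ with an absolute constant.

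Your concluding heuristic about the role of the rank hypothesis, however, is backwards. You suggest the constraint $r(\rho_1)+r(\rho_2)\le\sqrt d/C$ ``keeps $\tr\Delta^4$ small enough relative to $\|\Delta\|_2^4$,'' but the opposite is true: for a rank-$r$ traceless Hermitian $\Delta$, Cauchy--Schwarz gives
\bes
\frac{\|\Delta\|_2^4}{r}\;\le\;\tr\Delta^4\;\le\;\|\Delta\|_2^4,
\ees
so small rank forces $\tr\Delta^4$ to be \emph{close to} $\|\Delta\|_2^4$, whereas large rank is what allows $\tr\Delta^4$ to be much smaller. This does not invalidate your proof, because you only ever invoke the unconditional upper bound $\tr\Delta^4\le\|\Delta\|_2^4$, which already makes $\bbE X^4$ within a constant factor of $3(\bbE X^2)^2$ regardless of rank. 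Consequently your proof uses the rank hypothesis nowhere, and indeed the conclusion as stated (lower bound in terms of $\|\Delta\|_2$) holds without it; the same is true of the approximate-4-design version, where the approximation error depends on $\eps$ and $d$ but not on rank. The rank condition in the statement almost certainly reflects either the method of proof in Sen's original paper or a companion statement there relating the bound to the $1$-norm of $\Delta$ (via $\|\Delta\|_1\le\sqrt{2r}\,\|\Delta\|_2$); it is not needed to make your moment argument go through.
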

Since a large 1-norm distance between probability distributions means the distributions are easily distinguishable, this result places a lower bound on the distinguishability of the states $\rho_1$ and $\rho_2$ in terms of their 2-norm distance.

In \cite{AmbainisEmerson07}, Ambainis and Emerson show that a POVM made from a state 4-design achieves the bound in \eq{RandomMeasurementBound}.  In fact, an $\eps$-approximate state 4-design suffices, provided that $\eps = O(|| \rho_1 - \rho_2 ||_2^4)$.  To ensure the POVM is suitably normalised, we insist here that the approximate 4-design is also an exact 1-design rather than an $\eps$-approximate 1-design, which is all that \defref{ApproxStateDesign} ensures.

In \cite{IblisdirRoland06}, Iblisdir and Roland consider a slightly different measurement problem for which a random measurement achieves the best outcome.  The setting is that Alice chooses a random pure state (the authors only consider the case that Alice's system is 2-dimensional i.e.~a single qubit) from the Haar measure and makes $k$ copies of it.  Bob then has to find a state with high overlap with the given state.  The POVM that achieves the optimum is \cite{MassarPopsecu95}
\be
(k+1)\ket{\psi} \bra{\psi}^{\ot k} d\psi.
\ee
From \lemref{SymmetricStateAverage}, the average of this is the projector onto the symmetric subspace of $k$ qubits.  While this is not the identity, no other outcomes are possible because the input state is symmetric.  The states in the POVM can be replaced by a state $k$-design and in \cite{IblisdirRoland06} the authors present a construction of a state $k$-design for all $k$, although only for one qubit.

\subsection{Average Gate Fidelity}

When implementing a quantum operation, we would like to know how far the actual operation is from the desired.  One way of measuring this is the average gate fidelity \cite{Nielsen02}:
\be
\bar{F}(\cE, U) = \int d\psi \bra{\psi} U^\dagger \cE(\ket{\psi} \bra{\psi}) U \ket{\psi}
\ee
where $\cE$ is the operation implemented and $U$ is the desired unitary.  We see immediately, following \cite{DCEL06}, that the integrand is a balanced polynomial of degree 2 so the Haar measure on states can be replaced by a state 2-design.  We can even use an approximate design if the average fidelity only needs to be known approximately.  By repeatedly sampling from the design we can obtain an estimate of the average to $1/\poly (\log d)$ accuracy efficiently whereas naively sampling random states will not be efficient.

\subsection{Data Hiding}

Data hiding was introduced by Terhal, DiVincenzo and Leung \cite{TDL01,DLT02} as a fundamentally quantum concept.  The setting is that Alice and Bob share a quantum state which contains secret bits.  However, the state is chosen so that if they can only communicate using LOCC then they cannot learn this secret bit.  To encode one secret bit, the ``hider'' constructs one of two orthogonal mixed states $\rho_0^{(m)}$ and $\rho_1^{(m)}$ and hands half to Alice and the other half to Bob.  $\rho_0^{(m)}$ is the state with $m$ random Bell pairs chosen subject to the constraint that the number of singlets is even.  $\rho_1^{(m)}$ is the same state except with an odd number of singlets.  The parameter $m$ controls the degree of security.

The way that designs help here is in the construction of these states using minimal resources.  The authors show that $\rho_0$ can be obtained from twirling any initial pure state of the form $\ket{\psi}\bra{\psi} \ot \ket{\psi}\bra{\psi}$\footnote{By unitary invariance of the Haar measure, the choice of $\ket{\psi}$ does not affect the resultant state.}:
\be
\rho_0^{(m)} = \int_{\cU(2^m)} dU (U \ot U) \ket{\psi}\bra{\psi} \ot \ket{\psi}\bra{\psi} (U \ot U)^\dagger.
\ee
$\rho_1^{(m)}$ can be created from $\rho_0^{(m-1)}$.

The authors consider replacing the Haar integral with a sum over a unitary 2-design.  If errors can be tolerated then an approximate 2-design can be used and the state can be prepared efficiently.

\subsection{Decoupling and Evolution of Black Holes}

For various tasks in quantum Shannon theory, it is desirable to decouple a system from the environment.  In \cite{HHYW07} and \cite{ADHW06}, it is shown that for most random unitaries applied to the system the resulting overall state is close to a product state.

The setting is that there is a system $S$ with two parts $S_1$ and $S_2$.  The environment is $E$.  Let the initial state be $\psi_{SE}$ and let
\be
\sigma_{S_2 E}(U) = \tr_{S_1} \l[ ( U \ot I_E ) \psi_{S E} ( U^\dagger \ot I_E ) \r].
\ee
Then we have
\begin{theorem}[\cite{ADHW06}, Theorem 4.2]
\label{thm:Decoupling}
\begin{multline}
\int_{\cU(S)} \l| \l| \sigma_{S_2 E}(U) - \sigma_{S_2}(U) \ot \sigma_R(U) \r| \r|_1^2 dU \le \\ \frac{d_S d_E}{d_{S_1}^2} \l( \tr \l[ \l( \psi_{SE} \r)^2 \r] + \tr \l[ \l( \psi_S \r)^2 \r] \tr \l[ \l( \psi_E \r)^2 \r] \r)
\end{multline}
where $\sigma_{S_2}(U) = \tr_{S_1 E} \sigma_{S E}(U)$, etc..
\end{theorem}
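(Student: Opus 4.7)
My plan is to follow the standard second-moment decoupling argument, which reduces everything to computing a Haar 2-design integral on the $S$ system.

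First I would bound the integrand pointwise by passing from the $1$-norm to the $2$-norm. For any operator $X$ on $S_2 \otimes E$, Cauchy--Schwarz gives $\|X\|_1^2 \le d_{S_2} d_E \|X\|_2^2$. Applying this to $X(U) = \sigma_{S_2 E}(U) - \sigma_{S_2}(U) \otimes \sigma_E$ (noting that $\sigma_E(U) = \tr_S \psi_{SE}$ is actually $U$-independent, so I write $\sigma_E$) and integrating over $U$ reduces the problem to bounding $\bbE_U \|X(U)\|_2^2$ by
\bes
\frac{1}{d_{S_1}^2}\l(\tr \psi_{SE}^2 + \tr \psi_S^2 \, \tr \psi_E^2\r).
\ees

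Next I would expand the $2$-norm squared and convert each term into a $\swap$-operator expression on two copies, using the standard identity $\tr[Y^2] = \tr[\swap(Y \ot Y)]$ from \lemref{TraceCycles}. Introducing a primed copy $(S',E')$ of the system and letting $\psi = \psi_{SE}$, $\psi' = \psi_{S'E'}$, I have
\bas
\bbE_U \|X(U)\|_2^2 &= \bbE_U \tr\!\l[(\swap_{S_2 S_2'} \ot \swap_{E E'}) (\sigma_{S_2 E}(U)\ot \sigma_{S_2' E'}(U))\r] \\
&\quad - 2\,\bbE_U \tr\!\l[(\swap_{S_2 S_2'} \ot \swap_{E E'})(\sigma_{S_2 E}(U) \ot \sigma_{S_2'}(U) \ot \sigma_{E'})\r] \\
&\quad + \bbE_U \tr[\sigma_{S_2}(U)^2]\cdot\tr[\sigma_E^2].
\eas
Pulling the partial traces over $S_1, S_1'$ outside the trace, every term becomes the trace of a fixed operator against $\bbE_U (U \ot U)_{SS'} (\psi \ot \psi')(U^\dag \ot U^\dag)_{SS'}$.

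Now I would apply the Haar $2$-design formula on $SS'$: this Haar average lies in the span of $I_{SS'}$ and $\swap_{SS'}$, with coefficients determined by $\tr_{SS'}$ and $\tr[\swap_{SS'}(\cdot)]$ applied to $\psi \ot \psi'$. The former yields $\tr\psi \cdot \tr \psi' = 1$, and by \lemref{TraceCycles} the latter yields $\tr[\psi_{S} \psi'_{S} \ot \psi_{E}\psi'_{E}]$ type expressions, which in turn reduce to $\tr\psi_{SE}^2$, $\tr\psi_S^2$, and $\tr\psi_E^2$ after taking the partial traces required to collapse the $S_1, S_1'$ subsystems against the identities in the $\swap_{S_2 S_2'}$ pattern. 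After substitution, the cross term and one of the other terms cancel up to factors involving $d_S^2 - 1$, leaving only the two terms in the claimed bound, with the $1/d_{S_1}^2$ prefactor arising from the $d_S$ factors in the Haar $2$-design coefficients combined with $d_{S_2} = d_S/d_{S_1}$.

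The main obstacle will be the combinatorial bookkeeping in the middle step: when applying $\bbE_U (U \ot U)(\cdot)(U^\dag \ot U^\dag)$, one gets two coefficients times two operators ($I_{SS'}$, $\swap_{SS'}$), and each of the three terms in the $\|X\|_2^2$ expansion produces two such contributions, giving six terms that must be carefully simplified and partially cancelled. I would handle this by tracking which subsystem each swap acts on, using the identity $\swap_{SS'} = \swap_{S_1 S_1'} \ot \swap_{S_2 S_2'}$ to factor, and repeatedly invoking \lemref{TraceCycles} to convert every $\swap$-trace back into the canonical purity quantities. A minor sign issue comes from the $-1/(d_S^2-1)$ piece of the Haar coefficients; to get the cleaner bound in the theorem, I would simply drop this negative contribution after verifying it preserves the inequality direction.
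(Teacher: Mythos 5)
Note first that the thesis does not actually prove this statement: it is cited verbatim from \cite{ADHW06}, Theorem 4.2, so there is no internal proof to compare against. Your sketch reproduces the standard second-moment decoupling argument from that reference, and the overall strategy (pointwise Cauchy--Schwarz to pass from the $1$-norm to the $2$-norm, swap-operator expansion of the second moment, Haar $2$-design twirl on $SS'$, then drop a negative term) is sound and does yield the stated bound.

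Two of your intermediate claims need correction, though neither derails the proof. The target you should be aiming for after the norm conversion is $\bbE_U\|X(U)\|_2^2 \le \frac{1}{d_{S_1}}\bigl(\tr\psi_{SE}^2 + \tr\psi_S^2\,\tr\psi_E^2\bigr)$, not $\frac{1}{d_{S_1}^2}(\cdots)$: the prefactor $d_S d_E/d_{S_1}^2 = d_{S_2}d_E/d_{S_1}$ distributes one factor of $1/d_{S_1}$ to the Cauchy--Schwarz conversion ($\|X\|_1^2 \le d_{S_2}d_E\|X\|_2^2$) and only one to the Haar estimate; your version asks for too strong a $2$-norm bound, which the twirl does not actually deliver. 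Second, what cancels and what is dropped is different from your description. Writing $\bbE_U (U^\dag\ot U^\dag)(I_{S_1 S_1'}\ot\swap_{S_2 S_2'})(U\ot U) = \alpha\, I_{SS'} + \beta\,\swap_{SS'}$ with $\alpha = d_{S_2}(d_{S_1}^2-1)/(d_S^2-1)$ and $\beta = d_{S_1}(d_{S_2}^2-1)/(d_S^2-1)$, both coefficients are nonnegative, so there is no ``negative piece of the Haar coefficients'' to discard. What actually happens is that each of the three terms in the expansion of $\|X\|_2^2$ contributes $\alpha\tr\psi_E^2$ from the $I_{SS'}$ piece, and under the $(+1,-2,+1)$ signs these cancel identically; the $\swap_{SS'}$ pieces then give $\beta\bigl(\tr\psi_{SE}^2 - 2\tr[\psi_{SE}(\psi_S\ot\psi_E)] + \tr\psi_S^2\tr\psi_E^2\bigr)$, and the thing you drop is the nonnegative cross-purity $2\tr[\psi_{SE}(\psi_S\ot\psi_E)]$ appearing with a minus sign. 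Finishing with $\beta\le 1/d_{S_1}$ gives exactly the theorem.
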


The proof uses the 2-norm squared, which is a polynomial of degree 2 in the matrix elements of the random unitary.  Therefore the same result holds when $U$ is selected from a unitary 2-design instead and, as above, an approximate design can be used to allow an efficient implementation.  This allows the encoding circuits in \cite{ADHW06} to be made efficient although unfortunately the decoding circuits are still inefficient.

Decoupling has also been used in the study of the evolution of black holes.  While many aspects of quantum gravity are not understood, some attempts have been made to understand how black holes leak information.  Two examples are by Hayden and Preskill \cite{HaydenPreskill07} and Sekino and Susskind \cite{SekinoSusskind08}.  We concentrate on the approach in \cite{HaydenPreskill07} here.  The idea is that Alice wishes to destroy some quantum information by throwing it into a black hole.  However, Bob has been watching it and storing the Hawking radiation emitted.  The question they ask is how long does Bob have to wait before he can recover Alice's information.

Imagine that Alice's information is maximally entangled with a system $N$ held by Charlie.  Should Bob acquire a state from the emitted radiation that is maximally entangled with $N$ then we say he has successfully recovered Alice's information.  Decoupling is used because, if what remains of the black hole after some evaporation is uncorrelated with $N$, then the emitted radiation must be maximally entangled with $N$ and Bob has succeeded.  We therefore require that the evolution of the black hole produces a decoupling unitary.  If the evolution is random then, using \thmref{Decoupling}, this will likely happen, provided enough radiation has been emitted.  In fact, if Bob holds a system that is maximally entangled with the black hole's internal state before Alice throws in her message, then he can recover her state with fidelity $1-2^{-c}$ by reading in only the $k+c$ qubits emitted after Alice deposits her information, where $k$ is the number of qubits in Alice's message.

This model is not physically realistic because most unitaries cannot be implemented efficiently so the black hole would take far too long to apply the decoupling unitary.  However, as we said above, only a 2-design is required.  In fact, in \cite{HaydenPreskill07} they consider the case that the evolution of a black hole is a local random quantum circuit.  This is similar to the random circuits discussed in \chapref{RandomCircuits} except they assume that the unitaries are only applied to nearest-neighbour qubits.  Should the random circuit converge to a 2-design quick enough (as Hayden and Preskill conjecture) then the evolution will be sufficiently fast for Bob to find Alice's state.  While our results do not prove this they could readily be extended to cover the local case considered here.

\subsection{Applications for Larger \texorpdfstring{$k$}{k}}

So far we have only used $k$-designs for $k \le 4$.  However, the higher $k$ is the more similar a $k$-design is to a random unitary.  In the next section we consider replacing random unitaries with $k$-designs in large deviation bounds, thus finding applications for larger $k$.

\section{Derandomising Large Deviation Bounds}
\label{sec:LargeDeviations}

The remainder of this chapter has been published previously as \cite{LargeDeviationskDesigns}.

There are many results in quantum information theory that show generic properties of states or unitaries (e.g.~\cite{AspectsOfGenericEntanglement, RandomizingQuantumStates04}).  Often, these results say that, with high probability, a random state or unitary has some property, for example high entropy.  However, as we have seen above, neither random unitaries nor random states can be implemented efficiently.  This limits the usefulness of such results since no physical systems will behave truly randomly.  To make such results more physically relevant, it would be desirable to show that these properties are generic properties of unitaries from some natural distribution that can be implemented efficiently.  Only then could we conclude that we would expect to see such properties in natural systems.

In many cases, the generic properties of unitaries are desirable but randomised constructions given by the large deviation bounds are inefficient.  We would like to come up with distributions which can be implemented efficiently that have similar generic properties.  One example where the best known construction is an inefficient randomised one is the $\infty$-norm randomising map (see \secref{ApplicationsQCrypto}).  Another example is locking of classical correlations \cite{Locking04, RandomizingQuantumStates04}, which is a quantum phenomenon whereby a small amount of communication can greatly enhance the classical correlation between two parties.  To prove the randomised constructions, the authors show that, with some non-zero probability, random unitaries have the required property.  However, there are no known efficient constructions of unitaries with these properties.  If, on the other hand, we could show that unitaries drawn randomly from a set that can be implemented efficiently have the property with non-zero probability, we could move an important step closer to finding efficient constructions.  (It would not actually provide an efficient construction unless we could find an efficient sampling method.)  In fact, for the case of $\infty$-norm randomisation, this was done by Aubrun in \cite{AubrunRandomizingChannels}.

In this section we continue the theme of replacing the Haar measure with a $k$-design.  The reason for using $k$-designs is two-fold.  Firstly, because the first $k$ moments are the same we would expect similar (although weaker) measure concentration results.  Secondly, for $k = \poly(n)$ (when the design is on $n$ qubits), we might expect to be able to implement the $k$-design efficiently (i.e.~in $\poly(n)$ time).  Indeed, for $k = O(n/\log n)$, we can use the construction from \chapref{TPE}, provided we allow for approximate designs.  However, in the applications we consider here we can always make the approximation good enough to make the error negligible.

Not only can $k$-designs be constructed efficiently, they may even be the product of generic dynamics.  In \chapref{RandomCircuits}, we show that random quantum circuits quickly converge to a 2-design for a quite general model of such circuits.  We also conjecture in \chapref{RandomCircuits} that random circuits give $k$-designs for $k>2$ and $k=\poly(n)$ in polynomial time.  If a physical system can be accurately modelled by a random circuit then, assuming this conjecture, the naturally occurring states will be $k$-designs rather than fully random states.

We now summarise some related results in this area.  Smith and Leung \cite{SmithLeung06} and Dahlsten and Plenio \cite{DahlstenPlenio05} found large deviation bounds for stabiliser states.  They showed that, in certain regimes, stabiliser states are very likely to have large entanglement.  Stabiliser states are state 2-designs so our results can be seen as a generalisation of this to $k$-designs for $k>2$ and to other problems.  There are also some recent classical results related to the present work.  Alon and Nussboim \cite{AlonNussboim08} consider replacing full randomness with $k$-wise independence, a classical analogue of $k$-designs, in random graph theory.  They show that $k$-wise independent random graphs with $k = \poly(\log N)$ ($N$ is the number of vertices) have similar generic properties to fully random graphs.

In the remainder of this chapter, unless otherwise stated, we will use the definition of an $\eps$-approximate unitary design given in terms of monomials, as in \defref{ApproxUnitaryDesignMonomials}.  Using the tensor product expander construction of \chapref{TPE} together with \lemref{ApproxUnitaryDesignEquiv} gives an efficient construction for $k = O(\log d/\log \log d)$ for this definition.

\subsection{Introductory Problem: Entanglement of a 2-design}
\label{sec:EntanglementOfA2design}

We now illustrate our main idea by showing a large deviation bound for the entanglement of a 2-design, but in a different way to \cite{SmithLeung06, DahlstenPlenio05}.

It has been known for a long time that random states are highly entangled across any bipartition \cite{PagesConjecture, PagesConjectureProof94, PagesConjectureProof95}.  Further, in \cite{AspectsOfGenericEntanglement}, it is shown that random unitaries generate almost maximally entangled states with high probability.  However, generating random states is inefficient so it is an interesting question to ask if random efficiently obtainable states are highly entangled.

Let the system be $\cH = \cH_S \ot \cH_E$, where we label the two systems $S$ and $E$.  Let the dimensions be $d_S$ and $d_E$ and $d = d_S d_E$.  Let the overall initial state be any fixed pure state $\rho_0$.  Then consider applying a random unitary $U$ to $SE$ to get the state $\psi = U \rho_0 U^\dagger$.  Then the von Neumann entropy $S(\psi_S) = - \tr \psi_S \log \psi_S$ of the reduced state $\psi_S = \tr_E \psi$ is close to $\log_2 d_S$ (the maximal) with high probability:
\begin{theorem}[\cite{AspectsOfGenericEntanglement} Theorem 3.3]
\label{thm:EntropyTailBoundFullRandomness}
Let $d_E \ge d_S \ge 3$.  Then for unitaries chosen from the Haar measure
\be
\Pr(S(\psi_S) \le \log_2 d_S -\alpha -\beta) \le \exp \l(- \frac{ (d-1)C \alpha^2}{(\log_2 d_S)^2} \r)
\ee
where $C = \frac{1}{8 \pi^2}$ and $\beta = \frac{1}{\ln 2} \frac{d_S}{d_E}$.
\end{theorem}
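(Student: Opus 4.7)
The plan is to prove this via Levy's lemma (measure concentration on the unit sphere in $\mathbb{C}^d$) combined with a lower bound on the expected entropy. Since $U$ is Haar-distributed and $\rho_0$ is a fixed pure state, $\ket{\psi} = U\ket{\psi_0}$ is uniformly distributed on the unit sphere $S^{2d-1}$. Thus it suffices to control deviations of the function $f(\psi) := S(\tr_E \proj{\psi})$ on the sphere.

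First I would compute the Lipschitz constant of $f$. If $\ket{\psi}$ and $\ket{\phi}$ are unit vectors with $\|\ket{\psi} - \ket{\phi}\|_2 \le \eta$, then the induced reduced states satisfy $\|\psi_S - \phi_S\|_1 \le 2\eta$ by a standard trace-norm argument (partial trace is contractive under the trace norm, and $\|\proj{\psi} - \proj{\phi}\|_1 \le 2\|\ket{\psi}-\ket{\phi}\|_2$). Then the Fannes--Audenaert inequality gives $|S(\psi_S) - S(\phi_S)| \le 2\eta \log_2 d_S + h(2\eta)$, so $f$ is Lipschitz with constant essentially $L = \sqrt{8}\log_2 d_S$ (absorbing lower-order terms; the factor $\sqrt{8}$ vs $2$ depends on which version of Fannes one invokes to match the $C = 1/(8\pi^2)$ stated in the theorem).

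Next I would bound $\mathbb{E}_U[f]$ from below. By unitary invariance, this equals the expected entanglement entropy of a uniformly random pure state on $\cH_S \otimes \cH_E$. Using $\mathbb{E}\,\psi_S = I/d_S$ and the concavity of von Neumann entropy one gets $\mathbb{E}[S(\psi_S)] \le \log_2 d_S$; for the reverse direction I would use $S(\psi_S) \ge -\log_2 \tr \psi_S^2$ (i.e.\ the 2-R\'enyi entropy lower-bounds the von Neumann entropy), compute $\mathbb{E}[\tr \psi_S^2] = (d_S + d_E)/(d+1)$ via the symmetric-subspace calculation from \lemref{SymmetricStateAverage} (specifically $\mathbb{E}\,\psi^{\otimes 2} = \Pi_+/\binom{d+1}{2}$ and $\tr[\mathcal{F}_S \otimes I_E \cdot \Pi_+]$), and apply Jensen's inequality to $-\log_2$ to obtain $\mathbb{E}[S(\psi_S)] \ge \log_2 d_S - \log_2(1 + d_S/d_E) \ge \log_2 d_S - \frac{d_S}{d_E \ln 2} = \log_2 d_S - \beta$.

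Finally, Levy's lemma on $S^{2d-1}$ in the form $\Pr(|f - \mathbb{E}f| \ge \alpha) \le 2\exp(-(d-1)\alpha^2/(9\pi^3 L^2))$ (with $L$ as above) yields the tail estimate: the event $\{S(\psi_S) \le \log_2 d_S - \alpha - \beta\}$ is contained in $\{f \le \mathbb{E}f - \alpha\}$, so its probability is at most $\exp(-(d-1)C\alpha^2/(\log_2 d_S)^2)$ for an absolute constant $C$. The main obstacle is the careful tracking of constants: one must choose the Fannes-type Lipschitz bound and the normalisation of Levy's inequality consistently so that the explicit constant $C = 1/(8\pi^2)$ (and the one-sided tail rather than two-sided) drops out. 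The $\beta$ term arises purely from the gap between $\mathbb{E}[S(\psi_S)]$ and $\log_2 d_S$ and is independent of the concentration argument.
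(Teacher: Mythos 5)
This statement is merely quoted from Hayden, Leung, and Winter (\cite{AspectsOfGenericEntanglement}, Theorem 3.3); the thesis does not prove it, so there is no internal proof to compare against.  Your proposed strategy --- Levy's lemma applied to $\psi \mapsto S(\psi_S)$, a Lipschitz bound of order $\log_2 d_S$, and a lower bound $\mathbb{E}[S(\psi_S)] \ge \log_2 d_S - \beta$ --- is precisely the structure of the proof in the original reference, and your R\'enyi-2/Jensen derivation of the $\beta$ term is a clean and correct alternative to citing the Page--Lubkin average-entropy formula.

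However, there is a genuine gap in your Lipschitz step.  The Fannes--Audenaert bound is
\[
|S(\rho)-S(\sigma)| \le T\log_2(d_S-1) + h(T), \qquad T = \tfrac12\|\rho-\sigma\|_1,
\]
and the binary-entropy term $h(T) = -T\log_2 T - (1-T)\log_2(1-T)$ is not $O(T)$ as $T\to 0$: the ratio $h(T)/T \sim -\log_2 T$ diverges.  So Fannes alone does \emph{not} exhibit $\psi\mapsto S(\psi_S)$ as Lipschitz with any finite constant, and ``absorbing lower-order terms'' does not fix this.  The actual argument in \cite{AspectsOfGenericEntanglement} (their Lemma~III.2) computes the gradient of $S(\psi_S)$ with respect to $\ket{\psi}$ on the unit sphere directly; one gets a gradient-norm squared equal to $\sum_i \lambda_i\bigl(\log_2\lambda_i + \tfrac{1}{\ln 2}\bigr)^2$ (where $\lambda_i$ are the Schmidt coefficients squared), which is uniformly bounded by $8(\log_2 d_S)^2$ once $d_S \ge 3$ --- this is exactly where the hypothesis $d_S\ge 3$ enters, and it is why the Lipschitz constant $\sqrt{8}\log_2 d_S$ is legitimate even though Fannes would not deliver it.  If you replace your Fannes step with this gradient calculation, the rest of your plan (Levy on $S^{2d-1}$, then the R\'enyi-2 bound for $\mathbb{E}f$, then containment of $\{S(\psi_S)\le\log_2 d_S-\alpha-\beta\}$ in $\{f\le\mathbb{E}f-\alpha\}$) is sound, modulo matching the particular normalisation of Levy's inequality to recover $C = 1/(8\pi^2)$ and the one-sided tail without the usual prefactor.
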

Now, consider choosing the unitary from a 2-design instead.  Later on (\lemref{ExpectedPurity}), we show that $\bbE \tr \psi_S^2 = \frac{d_S + d_E}{d+1} =: \mu$.  Since purity is a polynomial of degree $2$, it does not matter if we take the expectation over the Haar measure or the 2-design.  We now apply Markov's inequality:
\bas
\Pr\l(\tr \psi_S^2 \ge \mu\gamma\r) &\le \frac{\bbE \tr \psi_S^2}{\mu\gamma} \\
&= \frac{1}{\gamma}.
\eas
Using the bound $S(\psi_S) \ge -\log_2 \tr \psi_S^2$ and some manipulations (the details are in \secref{Entropy}), this can be written as
\be
\Pr(S(\psi_S) \le \log_2 d_S -\alpha -\beta) \le 2^{-\alpha}
\ee
where $\beta$ is as in \thmref{EntropyTailBoundFullRandomness}.  This bound is much weaker than the bound in \thmref{EntropyTailBoundFullRandomness} and, in particular, does not show stronger concentration as $d$ increases.  Later in the chapter, we will show that choosing unitaries from a $k$-design with larger $k$ will give a much stronger bound that does give sharp concentration results for large $d$.

\subsection{Main Results}

We will now state our main results.

Our most general result is:
\begin{theorem}
\label{thm:ConcentrationPolynomial}
Let $f$ be a polynomial of degree $K$.  Let $f(U) = \sum_i \alpha_i M_i(U)$ where $M_i(U)$ are monomials and let $\alpha(f) = \sum_i | \alpha_i |$.  Suppose that $f$ has probability concentration
\begin{equation}
\label{eq:concentration}
\Pr_{U \sim \cU(d)}(|f - \mu| \ge \delta) \le C e^{-a \delta^2}
\end{equation}
and let $\nu$ be an $\eps$-approximate unitary $k$-design.
Then
\begin{equation}
\Pr_{U \sim \nu}(|f - \mu| \ge \delta) \le \frac{1}{\delta^{2m}} \left( C \left(\frac{m}{a}\right)^m + \frac{\eps}{d^k} \left(\alpha + | \mu | \right)^{2m} \right)
\end{equation}
for integer $m$ with $2mK \le k$.
\end{theorem}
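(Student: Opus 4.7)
The plan is a direct moment-bound derandomisation: use Markov on $(f-\mu)^{2m}$, compute the Haar moment from the given concentration inequality, and control the design-vs-Haar error through the monomial definition (\defref{ApproxUnitaryDesignMonomials}).

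First I would start from Markov's inequality applied to the non-negative random variable $(f-\mu)^{2m}$, giving
\bes
\Pr_{U\sim\nu}(|f-\mu|\ge\delta) \le \frac{\bbE_{U\sim\nu}[(f-\mu)^{2m}]}{\delta^{2m}}.
\ees
Now split the numerator as
\bes
\bbE_{U\sim\nu}[(f-\mu)^{2m}] \;=\; \bbE_{U\sim\cU(d)}[(f-\mu)^{2m}] \;+\; \bigl(\bbE_{U\sim\nu}-\bbE_{U\sim\cU(d)}\bigr)[(f-\mu)^{2m}],
\ees
and bound the two pieces independently.

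For the Haar piece I would convert the tail bound into a moment bound using the layer-cake identity $\bbE X^{2m} = \int_0^\infty \Pr(X^{2m}\ge t)\,dt$ with $X=|f-\mu|$, apply the hypothesis \eq{concentration}, and substitute $s=at^{1/m}$ to reduce the integral to a Gamma function. Using $\Gamma(m+1)=m!\le m^m$ this gives the clean estimate $\bbE_{U\sim\cU(d)}[(f-\mu)^{2m}] \le C(m/a)^m$, which is exactly the first term in the stated bound.

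For the design-vs-Haar error I would expand $(f-\mu)^{2m}$ as a sum of monomials. Since $f$ is a degree-$K$ (balanced) polynomial with $\alpha(f)=\sum_i|\alpha_i|$, and $f-\mu$ has absolute coefficient sum $\alpha+|\mu|$, the multinomial expansion of $(f-\mu)^{2m}$ is a sum of balanced monomials each of degree at most $2mK\le k$, with total absolute coefficient sum bounded by $(\alpha+|\mu|)^{2m}$. Applying \defref{ApproxUnitaryDesignMonomials} monomial-by-monomial and using the triangle inequality gives
\bes
\bigl|(\bbE_{U\sim\nu}-\bbE_{U\sim\cU(d)})[(f-\mu)^{2m}]\bigr| \le \frac{\eps}{d^k}(\alpha+|\mu|)^{2m}.
\ees
Combining both pieces and dividing by $\delta^{2m}$ yields the claimed inequality.

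The only genuinely delicate point is verifying that every monomial appearing in $(f-\mu)^{2m}$ is balanced of degree at most $k$, so that the approximate $k$-design hypothesis actually applies: the constant $\mu$ contributes balanced degree zero, and products of balanced monomials of degree at most $K$ are balanced of degree at most the sum of the factors' degrees, which caps at $2mK\le k$ by assumption. Everything else is a routine assembly of Markov's inequality, a Gamma-function calculation, and the triangle inequality on coefficients; I do not expect any subtle obstacle beyond being careful that $f$ (and hence $f-\mu$) is balanced, which is automatic when $f$ is real-valued so that $\Pr(|f-\mu|\ge\delta)$ is the natural quantity to bound.
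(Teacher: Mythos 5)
Your proposal is correct and follows essentially the same route as the paper: Markov's inequality on $|f-\mu|^{2m}$, a layer-cake integral plus Gamma-function estimate to turn the Haar tail bound into a moment bound (this is the paper's \lemref{MomentBoundFromLargeDeviationBound}, applied with exponent $2m$ to give $C\,\Gamma(m+1)a^{-m}\le C(m/a)^m$), and a coefficient-counting argument via the MONOMIAL definition to bound the design-vs-Haar discrepancy by $\tfrac{\eps}{d^k}(\alpha+|\mu|)^{2m}$ (the paper's \lemref{ApproxkdesignMoments}). The only cosmetic difference is that you inline the two auxiliary lemmas rather than stating them separately, and you note explicitly the realness/balancedness subtlety that the paper also flags in passing.
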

We therefore take a bound for Haar random unitaries of the form \eq{concentration} and turn it into a bound for $k$-designs.  Often, we will use Levy's Lemma (\lemref{Levy}) to give the initial concentration bound in \eq{concentration}.  In this case, $a = \Theta(d)$ (provided the Lipschitz constant (see later) is constant).

We then apply this to entropy, as a generalisation of \secref{EntanglementOfA2design}.  We go via the 2-norm since the entropy function is not a polynomial.  We find
\begin{theorem}
\label{thm:EntropyBound}
Let $\nu$ be a $4^{-n^2}$-approximate unitary $\frac{n}{10 \log_2 n}$-design on dimension $2^n$ with $n \ge 19$.  Let $d_S d_E = 2^n$ and $2 \le d_S \le 2^{n/10}$ and $\alpha \ge 2$.  Then
\be
\Pr_{U \sim \nu}(S(\psi_S) \le \log_2 d_S - \alpha - \beta) \le 8 \exp_2 \l(-\frac{n}{80\log_2 n} \l( \frac{n}{5} + \alpha \r) \r)
\ee
where $\beta = \frac{1}{\ln 2} \frac{d_S}{d_E}$ and $\exp_2$ is the exponential function base 2.
\end{theorem}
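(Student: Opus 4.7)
The plan is to reduce the entropy bound to a concentration bound on the purity $f(U) := \tr \psi_S^2$, then apply the general tool \thmref{ConcentrationPolynomial}. The key observations are: (i) $f$ is a balanced polynomial of degree $2$ in the matrix elements of $U$ (it can be written via the swap trick as $\tr[(U\otimes U)(\rho_0\otimes\rho_0)(U^\dag\otimes U^\dag)\, \swap_S\otimes I_E]$, so each monomial appearing has degree $(2,2)$ and the sum of absolute values of coefficients $\alpha(f)$ is $O(1)$, in fact bounded by $1$ since the operator $\swap_S\otimes I_E$ has $\infty$-norm $1$); (ii) its Haar expectation is $\mu = (d_S+d_E)/(d+1)$ by \lemref{SymmetricStateAverage} applied to two copies, so $|\mu|\leq 1$; (iii) the map $U \mapsto \tr\psi_S^2$ is Lipschitz with an $n$-independent constant on the unitary group (one can differentiate and bound using $\|\swap_S\|_\infty = 1$), so Levy's Lemma gives $\Pr_{U\sim\cU(d)}(|f-\mu|\geq\delta)\leq C e^{-a d \delta^2}$ for constants $C,a$.

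With these inputs, apply \thmref{ConcentrationPolynomial} with $K=2$ and $m=k/4$. Taking $k=\lfloor n/(10\log_2 n)\rfloor$ so that $m = \Theta(n/\log n)$, and using $\eps = 4^{-n^2}$ together with $d^k = 2^{nk} \ge 2^{n^2/(10\log_2 n)}$, the $\eps/d^k$ term is doubly exponentially small, so it is swamped by the $C(m/(ad))^m$ term. Choosing the deviation so that $\mu+\delta = 2^{\alpha+\beta}/d_S$ (this is the threshold which, via $S(\psi_S)\geq -\log_2 \tr\psi_S^2$, gives the entropy event we want to bound), one needs to check $\delta = \Omega(2^\alpha/d_S)$; this follows because $\beta = d_S/(d_E\ln 2)$ is precisely chosen so that $2^\beta/d_S \approx \mu$ (since $2^\beta \approx 1 + d_S/d_E$ gives $2^\beta/d_S \approx 1/d_S + 1/d_E = (d_S+d_E)/d \approx \mu$), hence $\delta \geq \mu(2^\alpha - 1) \geq \mu\cdot 2^{\alpha-1}$ for $\alpha\geq 2$.

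Substituting this $\delta$ and $m$ into the bound from \thmref{ConcentrationPolynomial}, each factor of $1/\delta^2$ contributes at least a factor $(ad/(\text{const}\cdot m))\cdot 2^{2\alpha}$ inside the $m$th power (since $ad = \Theta(d)$ and $\mu^{-2} = \Theta(d_S^2) = \Theta(d/d_E) \geq \Theta(1)$ by the assumption $d_S\leq 2^{n/10}$). Collecting, the upper bound reduces to something of the form $8 \cdot 2^{-m(c + \alpha)}$ for an absolute constant $c$. The arithmetic works out to the desired $8\exp_2\bigl(-\frac{n}{80\log_2 n}(\frac{n}{5}+\alpha)\bigr)$ after substituting $m = n/(40\log_2 n)$ and using $d/d_E = d_S \geq 2$ (which is where $d_S\geq 2$ is used) along with the range restriction $d_S\leq 2^{n/10}$ (which keeps $\mu^{-1}/d_S$ bounded so the purity-to-entropy conversion does not lose more than $\beta$).

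The main obstacle is the careful accounting of constants in the final step: one must verify that $\alpha(f)+|\mu|$, the Lipschitz constant for $\tr\psi_S^2$, and the ratio $\mu/(2^\beta/d_S)$ are all bounded by $n$-independent constants in the stated parameter range, and that the exponent in $(m/(ad\delta^2))^m$ combines cleanly with the $2^{2m\alpha}$ factor from $\delta^{-2m}$ to yield the advertised exponent $\frac{n}{80\log_2 n}(\frac{n}{5}+\alpha)$. Everything else—the reduction to purity, the polynomial structure, Levy's Lemma, and the invocation of \thmref{ConcentrationPolynomial}—is essentially mechanical.
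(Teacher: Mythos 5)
Your overall route is the same as the paper's: reduce the entropy bound to a concentration bound for the purity $f(U)=\tr\psi_S^2$, use Levy's Lemma and \thmref{ConcentrationPolynomial}, then convert the threshold via $-\log_2\mu\ge\log_2 d_S-\beta$ (which the paper gets from $\ln(1+x)\le x$). That part is correct.

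There is, however, a genuine error in the estimate $\alpha(f)\le 1$. The quantity $\alpha(f)$ in \thmref{ConcentrationPolynomial} is the $\ell_1$-norm of the coefficients of $f$ written as a polynomial in the entries of $U$ and $U^*$; it is a basis-dependent combinatorial quantity, and the operator norm of $\swap_S\otimes I_E$ gives no control on it. (A quick sanity check: the degree-$1$ balanced function $d^{-1}|\tr U|^2=\sum_{i,j} d^{-1}U_{ii}U^*_{jj}$ has $\alpha=d$, not $O(1)$, despite being bounded by $1$.) The paper computes $\alpha(f)\le d^4$ explicitly by evaluating $f$ on the all-ones matrix. The slip happens not to destroy your argument here, because with $\eps=4^{-n^2}$ the error term $\frac{\eps}{d^k}(d^4+|\mu|)^{2m}$ is still super-exponentially small, but a correct write-up must carry $\alpha(f)\le d^4$ through the estimate and check this.

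Second, you take $m=k/4$ (the maximum allowed by $2mK\le k$ with $K=2$), while the paper takes $m=k/8$; it is the choice $m=k/8=\frac{n}{80\log_2 n}$ that produces the advertised coefficient $\frac{n}{80\log_2 n}$ in the exponent, and that also keeps the error term at size $\approx\eps$ rather than $\approx\eps\, d^k$ once the correct $\alpha(f)\le d^4$ is used. With your $m=k/4$ the natural exponent coefficient is $\sim\frac{n}{40\log_2 n}$, which is \emph{stronger} and does imply the stated theorem, but your claim that ``the arithmetic works out to the desired $8\exp_2(\cdot)$ after substituting $m=n/(40\log_2 n)$'' is not verified and is not literally true as stated. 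You should either switch to $m=k/8$ and match the constants, or explicitly show that the bound you obtain dominates the one claimed.
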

We choose a $k$-design for $k=\frac{n}{10 \log_2 n}$ since this is (up to constants) the largest $k$ for which we have an efficient unitary $k$-design construction (using the construction of \chapref{TPE}).

We then move on to apply our results to ideas in statistical mechanics from Popescu et al.~\cite{ThermalisationPSW}.  In this paper, the authors show that, for almost all pure states of the universe, any subsystem is very close to the canonical state, which is the state obtained by assuming a uniform distribution over all allowed states of the universe (defined in \eqn{CanonicalState}).  This could be achieved if the dynamics of the universe produced a random unitary,  but this would take exponential time in the size of the universe.  We show that the random unitary can be replaced by a $k$-design, showing that the canonical state can be reached in polynomial time:
\begin{theorem}
Let $\Omega_S$ be the canonical state of the system (defined in \eq{CanonicalState}) and $\rho_S$ be the state after choosing a unitary from an $\eps$-approximate $k$-design.  Let $d_R$ be the dimension of the universe's Hilbert space subject to the arbitrary constraint $R$ (normally this will be a total energy constraint).  Then for $\eps \le \frac{3}{2}\l( \frac{4 d_S^3}{d_R} \r)^{k/8}$, $k \le \frac{4 d_S^2}{9 \pi^3}$
\be
\Pr_{U \sim \nu}( || \rho_S - \Omega_S ||_1 \ge \delta) \le 6 \l(\frac{4 d_S^3}{d_R \delta^2} \r)^{k/8}.
\ee
\end{theorem}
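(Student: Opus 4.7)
The plan is to reduce the trace-norm tail bound to a moment bound on a polynomial in the matrix entries of $U$, exactly paralleling how we handled entropy in \thmref{EntropyBound}. Since the trace norm is not polynomial, I would first pass to the 2-norm via the standard inequality $\|A\|_1 \le \sqrt{d_S}\,\|A\|_2$, turning the event $\|\rho_S - \Omega_S\|_1 \ge \delta$ into $\|\rho_S - \Omega_S\|_2^2 \ge \delta^2/d_S$. The quantity $\|\rho_S - \Omega_S\|_2^2 = \tr(\rho_S - \Omega_S)^2$ is now a balanced polynomial of degree $2$ in the entries of $U$, since $\Omega_S$ is a fixed matrix and $\rho_S$ is degree $(1,1)$ in $(U,U^*)$.

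Next, I would control the $m$-th moment $\mathbb{E}_\nu \bigl[\|\rho_S-\Omega_S\|_2^{2m}\bigr]$ and conclude by Markov's inequality. For any $m$ with $2m \le k$, the integrand is a balanced polynomial of degree $2m$, so \defref{ApproxUnitaryDesignMonomials} gives
\bes
\l|\bbE_{U\sim\nu}\|\rho_S - \Omega_S\|_2^{2m} - \bbE_{U\sim\cU(d)}\|\rho_S - \Omega_S\|_2^{2m}\r| \le \frac{\eps\,\alpha_m}{d^{2m}},
\ees
where $\alpha_m$ is the sum of absolute values of the monomial coefficients; this error term is what forces the hypothesis $\eps \le \tfrac{3}{2}(4d_S^3/d_R)^{k/8}$. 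The Haar expectation I would bound by the standard PSW-type calculation: a direct evaluation using \lemref{SymmetricStateAverage} (or its permutation-sum generalisation, cf.\ \corref{IsPerms}) gives $\bbE_{\text{Haar}}\|\rho_S - \Omega_S\|_2^2 = O(d_S/d_R)$, and repeated application of the same permutation integral, together with the trivial bound $\|\rho_S - \Omega_S\|_\infty \le 1$, yields $\bbE_{\text{Haar}}\|\rho_S-\Omega_S\|_2^{2m} \le (C d_S/d_R)^m$ up to combinatorial constants absorbed by the hypothesis $k \le 4d_S^2/(9\pi^3)$.

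Combining, Markov gives
\bes
\Pr_\nu\bigl(\|\rho_S-\Omega_S\|_2^2 \ge \delta^2/d_S\bigr)
\le \frac{d_S^m}{\delta^{2m}}\l[(Cd_S/d_R)^m + \eps\,\alpha_m/d^{2m}\r],
\ees
and I would then optimise the free parameter $m$ by setting it proportional to $k$; the bound in the theorem is precisely what one gets with $m=k/8$, where the prefactor $6$ absorbs the constant $C$ and the combinatorial overhead in $\alpha_m$, and the constraint on $\eps$ is exactly what is needed to keep the design-error term comparable to the Haar moment. The $\sqrt{d_S}$ gap between 1-norm and 2-norm is what produces the $d_S^3$ inside the parenthesis rather than $d_S^2$.

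The main obstacle will be the careful bookkeeping of dimension factors and combinatorial constants in the Haar moments $\bbE_{\text{Haar}}\|\rho_S-\Omega_S\|_2^{2m}$: expanding the product yields a sum over $m$-tuples of permutations on $2m$ copies, and to beat the naive $2^{2m}$ triangle-inequality bound one must exploit cancellation against $\Omega_S = \bbE_{\text{Haar}}\rho_S$. Everything else (the norm conversion, Markov, and the design approximation step) is routine given the framework already established in the chapter, so the real work is a careful moment computation analogous to but more delicate than the one used in the entropy bound.
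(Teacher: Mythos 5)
Your high-level skeleton is right — reduce the trace-norm event to a 2-norm event, control moments of the degree-$2$ polynomial $\|\rho_S-\Omega_S\|_2^2$, and finish with Markov plus the design-error term from \lemref{ApproxkdesignMoments} — but you take a genuinely different and substantially harder route for the key step. The paper does \emph{not} compute the Haar moments $\bbE_{\text{Haar}}\|\rho_S-\Omega_S\|_2^{2m}$ directly by Schur--Weyl/permutation calculus. Instead it starts from the PSW concentration bound (\thmref{ThermalisationPSW}), which already gives $\Pr_{\text{Haar}}(\|\rho_S-\Omega_S\|_1 \ge \delta + \eta) \le 2e^{-C_2 d_R\delta^2}$ with $\eta\le d_S/\sqrt{d_R}$; converts this to a 2-norm tail bound via $\|\cdot\|_2\le\|\cdot\|_1$; and then invokes \lemref{MomentBoundFromLargeDeviationBound2}, the elementary tail-to-moment integral, to get $\bbE_{\text{Haar}}\|\rho_S-\Omega_S\|_2^{2m} \le 2(4m/(C_2d_R))^m + (2\eta)^{2m}$ with no combinatorics at all. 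This is exactly the three-step template of \secref{MainTechnique}: concentration inequality $\to$ moment bound $\to$ Markov plus design error. You identified the "careful bookkeeping of $m$-tuples of permutations on $2m$ copies and cancellation against $\Omega_S$" as the main obstacle, and indeed it is an obstacle — but only for the route you chose; the paper sidesteps it entirely by leaning on Levy's lemma (via PSW) rather than Weingarten calculus. Your approach could in principle succeed (these moments are computable in closed form), but it trades a one-line integral for a delicate combinatorial estimate, and would also need to keep track of the $\eta$-offset (the $\sqrt{d_S/d_E^{\text{eff}}}$ term in PSW) that you do not mention, which is what requires the modified \lemref{MomentBoundFromLargeDeviationBound2} rather than the plain version.
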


Finally, we use results from \cite{MostStatesUselessMBQCGFE} to show that most states in an $O(1)$-approxi-mate state $n^2$-design on $n$ qubits are useless for measurement-based quantum computing, in the sense that any computation using such states could be simulated efficiently on a classical computer.  We do this, following \cite{MostStatesUselessMBQCGFE}, by showing that the states are so entangled that the measurement outcomes are essentially random.

\subsection{Optimality of Results}

An important question is how close our results are to optimal, in terms of their scaling with dimension $d$.  In \thmref{ConcentrationPolynomial}, we will normally have $a=\Theta(d)$ so for $m$ constant, we obtain polynomial bounds, rather than the exponential bounds for full randomness.  This is to be expected:
\begin{theorem}
\label{thm:Optimal}
Let $\nu$ be an $\eps$-approximate unitary $k$-design.  Suppose also that it is discrete i.e.~contains a finite number $S$ of unitaries.  Let $f(U)$ be any function on matrix elements of $U$ and $\mu$ be any constant.  Then either $f(U) = \mu$ for all $U$ in $\nu$ or for some $\delta > 0$
\be
\Pr_{U \sim \nu}(|f-\mu| \ge \delta) \ge p_{min}
\ee
where $p_{min}$ is the probability of choosing the least probable unitary from $\nu$.  If the probability is uniform, $p_{min} = 1/S$.
\end{theorem}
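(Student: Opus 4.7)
The proof is essentially a single observation, so the plan is correspondingly short. The key point is that the support of $\nu$ is a finite set, so the probability of any outcome that occurs at all is bounded below by $p_{min}$.

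First I would dispose of the trivial case: if $f(U) = \mu$ for every unitary $U$ in the support of $\nu$, the first alternative of the theorem holds and we are done. So assume there exists some $U_0 \in \supp \nu$ with $f(U_0) \ne \mu$. Set $\delta = |f(U_0) - \mu| > 0$. Then the event $\{U : |f(U) - \mu| \ge \delta\}$ contains $U_0$, and hence
\bes
\Pr_{U \sim \nu}(|f - \mu| \ge \delta) \ge \Pr_{U \sim \nu}(U = U_0) \ge p_{min},
\ees
which is the desired bound. If $\nu$ is the uniform distribution on the $S$ unitaries, then $p_{min} = 1/S$.

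There is no real obstacle here; the content of the theorem is simply the remark that discrete measures cannot have tails lighter than the smallest atomic weight. The interest is in combining this with the fact, noted in the preceding discussion, that constructions such as the tensor product expander from \chapref{TPE} are discrete with $S$ polynomial or at most exponential in a controlled way; this forces any deviation bound for a non-constant $f$ derived from such a design to scale no better than $1/S$. In that sense the theorem explains why the polynomial (rather than exponential) tails obtained from \thmref{ConcentrationPolynomial} for fixed $m$ are unavoidable once $k$ is fixed and $f$ is non-trivial on the support.
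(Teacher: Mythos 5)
Your proof is correct and follows the same one-line observation the paper uses: pick any $U_0$ in the support with $f(U_0)\neq\mu$, set $\delta=|f(U_0)-\mu|$, and note the event has probability at least the smallest atomic weight $p_{min}$.
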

\begin{proof}
There exists at least one $U$ such that $|f(U)-\mu| \ge \delta$ for some $\delta > 0$; the probability of selecting one such $U$ is at least $p_{min}$.
\end{proof}
\begin{corollary}
Our results are polynomially related to the optimal (i.e.~the optimal bounds can be obtained by raising ours to a constant power).
\end{corollary}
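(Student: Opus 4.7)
The plan is to combine Theorem 5.3.3 with an upper bound on the size of the design used in the upper-bound results. The key observation is that the efficient $k$-design construction from Chapter~\ref{chap:TPE} (specifically Corollary~\ref{cor:k-designs}) produces an ensemble of only $N^{O(k)}=d^{O(k)}$ unitaries when $D$ and $\lambda$ are constants. Substituting this into Theorem~\ref{thm:Optimal} gives $p_{\min} \geq d^{-O(k)}$, so for any nontrivial tail event the optimal lower bound is at least $d^{-\Theta(k)}$.

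Next, I would compare this lower bound with the upper bounds we proved. For \thmref{ConcentrationPolynomial}, in the regime relevant to our applications we take the Lipschitz constant of order unity so that $a=\Theta(d)$ and choose $m=\Theta(k/K)$, giving a tail bound of order $(m/a)^m/\delta^{2m}=d^{-\Theta(k)}$ (up to $\poly(1/\delta)$ and the approximation error $\eps/d^k$, which we can always make negligible by taking $\eps$ small). Likewise, \thmref{EntropyBound} produces a bound of order $2^{-\Theta(n^2/\log n)}=d^{-\Theta(k)}$ with $k=n/(10\log_2 n)$. Both of these match the optimal scaling $d^{-\Theta(k)}$ up to constant factors in the exponent, which is exactly the statement that raising our bound to a suitable constant power recovers the optimal one.

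The plan for writing the proof is therefore: (i) quote Corollary~\ref{cor:k-designs} to bound $|\supp \nu| \leq d^{Ck}$ for some constant $C$; (ii) invoke \thmref{Optimal} to conclude $p_{\min} \geq d^{-Ck}$; (iii) observe that our upper bounds in \thmref{ConcentrationPolynomial} and \thmref{EntropyBound} have the form $d^{-\Omega(k)}$, so that there exist constants $c_1,c_2>0$ with $p_{\min} \geq (\text{our bound})^{c_1}$ and $(\text{our bound}) \geq p_{\min}^{c_2}$.

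The main obstacle is mostly expository rather than technical: the statement ``polynomially related'' needs a precise meaning, and the parameters $k$, $\delta$, $\eps$ and the Lipschitz constant enter the upper and lower bounds differently, so one must carefully fix a regime (e.g.~constant $\delta$, constant Lipschitz constant, $\eps$ exponentially small in $d^k$) before the comparison becomes clean. Once that regime is fixed the matching exponent calculation is essentially immediate from the size bound on the TPE-based design.
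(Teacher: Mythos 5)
Your proposal takes essentially the same approach as the paper: bound $p_{\min}$ for the TPE-based construction by $d^{-O(k)}$ using the size bound from Chapter~\ref{chap:TPE}, feed this into Theorem~\ref{thm:Optimal}, and compare against the $d^{-\Omega(k)}$ form of the upper bounds. The paper's own proof is terser (it leaves the comparison with the upper bounds implicit and cites \lemref{ApproxUnitaryDesignEquiv} rather than \corref{k-designs} for the size bound), but the logic is the same, and your fuller version spelling out the exponent match for \thmref{ConcentrationPolynomial} and \thmref{EntropyBound} is the step the paper elides.
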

\begin{proof}
Our results apply for any design, so must obey the bound in \thmref{Optimal} for all designs.  The unitary design construction we use (from \chapref{TPE} using \lemref{ApproxUnitaryDesignEquiv}) has $p_{min}=d^{-O(k)}$ hence the bounds cannot scale better than this.
\end{proof}
We can also almost recover the tail bound for full randomness in \thmref{ConcentrationPolynomial}.  Suppose for simplicity that we have an exact design (i.e.~$\eps=0$), so that
\bes
\Pr_{U \sim \nu}(|f - \mu| \ge \delta) \le C \left(\frac{m}{a\delta^2}\right)^m.
\ees
The optimal $m$ is $a \delta^2/e$, which gives
\bes
\Pr_{U \sim \nu}(|f - \mu| \ge \delta) \le C e^{-a \delta^2/e}.
\ees
So our result allows us to interpolate from Markov's inequality, which gives weak bounds, all the way to full Haar randomness and is within a polynomial correction of optimal for the full range.

The remainder of the chapter is organised as follows.  In \secref{MainTechnique} we present our main technique for finding large deviation bounds for $k$-designs.  We then apply this to entropy in \secref{Entropy}, to ideas in statistical mechanics in \secref{StatMech} and to using $k$-designs for measurement-based quantum computing in \secref{MBQC}.  We then conclude in \secref{ConclusionLargeDeviations}.

\section{Main Technique}
\label{sec:MainTechnique}

The main idea in this chapter can be summarised in three steps.  Let $f: \cU(d) \rightarrow \mathbb{C}$ be a balanced polynomial of degree $K$ in the matrix elements of a unitary $U$.  Then to get a concentration bound on $f$ when $U$ is chosen from a $k$-design:
\begin{enumerate}
\item{Find some measure concentration result for $|f(U) - \mu|$ when the unitaries are chosen uniformly at random from the Haar measure.  Normally $\mu$ will be the expectation of $f$.}
\item{Use this to bound the moments $\bbE |f(U) - \mu|^{2m}$ for some integer $m \le \frac{k}{2K}$.}
\item{Then use Markov's inequality and the fact that for a (approximate) $k$-design the moments are (almost) the same as for uniform randomness.  We then optimise the bound for $m$, which will often involve setting $m$ close to the maximum, $\l\lfloor\frac{k}{2K}\r\rfloor$.}
\end{enumerate}
We will now work through each of these steps and finish with a proof of \thmref{ConcentrationPolynomial}.

\subsection{Step 1: Concentration for uniform randomness}
For the first step, we will often start with Levy's Lemma.  This states, roughly speaking, that slowly varying functions in high dimensions are approximately constant.  We quantify `slowly varying' by the Lipschitz constant:
\begin{definition}
The Lipschitz constant $\eta$ (with respect to the Euclidean norm) for a function $f$ is
\begin{equation}
\eta = \sup_{U_1, U_2} \frac{| f(U_1) - f(U_2) |}{|| U_1 - U_2 ||_2}.
\end{equation}
\end{definition}
Then we have Levy's lemma:
\begin{lemma}[Levy, see e.g.~\cite{Ledoux}]
\label{lem:Levy}
Let $f$ be an $\eta$-Lipschitz function on $U(d)$ with mean $\bbE f$.  Then
\begin{equation}
\Pr(|f - \bbE f| \ge \delta) \le 4 \exp\left(-\frac{C_1 d \delta^2}{\eta^2}\right)
\end{equation}
where $C_1$ can be taken to be $\frac{2}{9 \pi^3}$.
\end{lemma}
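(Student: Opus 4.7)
The plan is to view $U(d)$ as a compact connected Riemannian manifold equipped with the bi-invariant metric induced by the Hilbert--Schmidt inner product, and then apply a standard concentration-of-measure inequality for manifolds with strictly positive Ricci curvature. The Lipschitz constant $\eta$ in the statement is with respect to the Hilbert--Schmidt (chordal) norm, which agrees with the intrinsic geodesic distance up to a universal constant factor of order $\pi/2$; this conversion is ultimately responsible for the $\pi^3$ appearing in $C_1$.

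First, I would establish that $U(d)$ has Ricci curvature bounded below by a quantity of order $d$. Since the metric is bi-invariant, it suffices to compute $\mathrm{Ric}$ at the identity, where it reduces to an algebraic calculation on the Lie algebra $\mathfrak{u}(d)$ using the structure constants. A direct calculation yields $\mathrm{Ric} \geq d/2$ in the Hilbert--Schmidt normalisation (one can also pass to $SU(d)$ and use the standard fact that the Killing form gives $\mathrm{Ric} = \tfrac{1}{2}g$ after rescaling by the dual Coxeter number, which is $d$ for $SU(d)$). This is the step that provides the factor of $d$ in the exponent.

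Next, I would apply the Gromov--Milman concentration inequality: on any complete Riemannian manifold with Ricci curvature at least $c > 0$, every $1$-Lipschitz (w.r.t.\ geodesic distance) function $f$ satisfies $\Pr(|f - \bbE f| \ge \delta) \le 2\exp(-c\delta^2/2)$. Rescaling by $\eta$ and converting from the geodesic Lipschitz constant to the Hilbert--Schmidt Lipschitz constant introduces only constant factors, yielding the claimed bound $4 \exp(-C_1 d \delta^2/\eta^2)$. The factor of $4$ (rather than $2$) arises from a union bound splitting $\{|f - \bbE f| \ge \delta\}$ into its two one-sided events and absorbing a small slack in the metric comparison.

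The main obstacle is pinning down the specific constant $C_1 = \tfrac{2}{9\pi^3}$. The abstract Gromov--Milman argument gives \emph{some} explicit constant, but extracting exactly this value requires a careful accounting of (i) the precise Ricci lower bound for $U(d)$ in the chosen normalisation, and (ii) the multiplicative distortion between geodesic and chordal distance on $U(d)$, which is where the $\pi^3$ factor enters. Rather than re-derive this in detail, the cleanest route is to invoke the version of the inequality stated in Ledoux's monograph \cite{Ledoux} (as done in the statement), which bundles these constants into a single clean bound, and to check only that the hypotheses --- bi-invariant metric, positive Ricci curvature scaling as $\Theta(d)$, and $\eta$ defined via the chordal norm --- are all met.
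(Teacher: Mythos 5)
Your overall plan — Gromov–Milman concentration via a Ricci lower bound on a compact Lie group with bi-invariant metric — is the standard route, and the paper itself offers no proof of \lemref{Levy} (it merely cites Ledoux), so this is a reasonable blind attempt. However, your step (i), the claim that $U(d)$ has Ricci curvature bounded below by $\Theta(d)$, is false, and this is a genuine gap. The unitary group has a one-dimensional centre $\{e^{i\theta}I : \theta \in [0,2\pi)\}$, and for a bi-invariant metric the Ricci tensor at the identity satisfies $\mathrm{Ric}(X,X) = -\tfrac14\mathrm{Tr}(\mathrm{ad}_X^2)$; this vanishes whenever $X$ is central, since then $\mathrm{ad}_X = 0$. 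So $\mathrm{Ric}$ has a null eigendirection on $U(d)$ and the hypothesis of Gromov–Milman fails exactly as you would need it to hold.

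You do mention passing to $SU(d)$ as an aside — and there the Ricci lower bound of order $d$ is genuine, since $SU(d)$ is semisimple — but you do not explain how to lift concentration on $SU(d)$ back to $U(d)$, and this lift is where the actual work lies. The standard fix is to fibre $U(d)$ over its determinant: condition on $\det U = e^{i\alpha}$, which restricts to an isometric copy of $SU(d)$ on which your concentration argument applies around the conditional mean $\mu(\alpha) := \bbE[\,f \mid \det U = e^{i\alpha}\,]$; then show that the conditional means barely vary, because multiplication by $e^{i(\alpha'-\alpha)/d}$ is a measure-preserving bijection between the two cosets that displaces every point by at most $|e^{i(\alpha'-\alpha)/d}-1|\sqrt{d} \leq \pi/\sqrt{d}$ in Hilbert--Schmidt norm, whence $|\mu(\alpha)-\mu(\alpha')| \leq \pi\eta/\sqrt{d}$. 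Absorbing this $O(\eta/\sqrt{d})$ slack into the exponent (at the price of a worse constant and a prefactor of $4$ instead of $2$) then yields the two-sided bound on all of $U(d)$. Without this reduction, or an equivalent device, your proof as written does not go through; the deferral to Ledoux for the numerical constant $C_1 = 2/(9\pi^3)$ is acceptable, but the centre must be dealt with explicitly before that citation applies.
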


\subsection{Step 2: A bound on the moments}
Levy's Lemma says that $f$ is close to its mean.  This means that $\bbE |f - \bbE f|^m$ should be small.  We will bound the moments for slightly more general concentration results:
\begin{lemma}
\label{lem:MomentBoundFromLargeDeviationBound}
Let $X$ be any random variable with probability concentration
\begin{equation}
\label{eq:LargeDeviationAssumption}
\Pr(|X - \mu| \ge \delta) \le C e^{-a \delta^2}.
\end{equation}
(Normally $\mu$ will be the expectation of $X$, although the bound does not assume this.)  Then
\begin{equation}
\bbE |X - \mu|^m \le C \Gamma(m/2+1) a^{-m/2} \le C \left( \frac{m}{2a} \right)^{m/2}
\end{equation}
for any $m > 0$.
\end{lemma}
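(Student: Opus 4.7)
The plan is to use the layer-cake (tail-integral) representation of the moment,
\[
\bbE |X - \mu|^m = \int_0^\infty \Pr(|X - \mu|^m \ge t) \, dt = \int_0^\infty m \delta^{m-1} \Pr(|X - \mu| \ge \delta) \, d\delta,
\]
where the second equality comes from the substitution $t = \delta^m$. Since $|X-\mu|$ is nonnegative this identity requires no special justification. Then I would simply plug in the hypothesis \eqref{eq:LargeDeviationAssumption} to upper-bound the integrand, obtaining
\[
\bbE |X - \mu|^m \le C \int_0^\infty m \delta^{m-1} e^{-a \delta^2} \, d\delta.
\]

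The next step is to evaluate this Gaussian-moment integral. Substituting $u = a \delta^2$ (so $du = 2 a \delta \, d\delta$ and $\delta^{m-2} = (u/a)^{(m-2)/2}$) converts the integral into a Gamma function:
\[
\int_0^\infty m \delta^{m-1} e^{-a \delta^2} \, d\delta = \frac{m}{2} a^{-m/2} \int_0^\infty u^{m/2 - 1} e^{-u} \, du = \frac{m}{2} a^{-m/2} \Gamma(m/2) = \Gamma(m/2 + 1) \, a^{-m/2},
\]
using the recursion $\frac{m}{2} \Gamma(m/2) = \Gamma(m/2 + 1)$. This gives the first claimed inequality.

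For the second inequality, I would bound $\Gamma(m/2 + 1) \le (m/2)^{m/2}$. This is standard: for integer $n \ge 1$, $\Gamma(n+1) = n! \le n^n$ (Stirling), and the general real case follows either by interpolating with the log-convexity of $\log \Gamma$ or by a direct integral comparison. Combining gives $\bbE |X - \mu|^m \le C (m/(2a))^{m/2}$, as required. There is no real obstacle here; the only thing to check carefully is the substitution and that we are allowed to use the concentration bound even on the regime where $Ce^{-a\delta^2}>1$ (which is fine since we are producing an upper bound and the true probability is always at most $1$, hence at most $C e^{-a\delta^2}$ wherever the latter exceeds $1$).
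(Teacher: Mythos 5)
Your proof is correct and takes essentially the same route as the paper: both use the tail-integral representation of $\bbE |X-\mu|^m$, bound the tail by the hypothesis, and evaluate the resulting integral as a Gamma function via a substitution of the form $u = a\delta^2$. The only difference is that you first rewrite the layer-cake integral in density form $\int_0^\infty m\delta^{m-1}\Pr(|X-\mu|\ge\delta)\,d\delta$ before substituting, whereas the paper substitutes directly in $\int_0^\infty \Pr(|X-\mu|\ge x^{1/m})\,dx$; this is purely cosmetic.
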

\begin{proof}
This proof is based on the proof of an analogous result by Bellare and Rompel \cite{BellareRompel}, Lemma A.1.

Note that, for any random variable $Y \ge 0$, 
\begin{equation}
\bbE Y = \int_0^\infty \Pr(Y \ge y) dy.
\end{equation}
Therefore
\bas
\bbE |X-\mu|^m &= \int_0^\infty \Pr(|X-\mu|^m \ge x) dx \\
&= \int_0^\infty \Pr(|X-\mu| \ge x^{1/m}) dx \\
&\le C \int_0^{\infty} \exp(-a x^{2/m}) dx
\eas
where in the last line we used the assumed large deviation bound \eq{LargeDeviationAssumption}.  To evaluate this integral, use the change of variables $y = a x^{2/m}$ to get
\bas
\bbE |X-\mu|^m &\le \frac{Cm}{2} a^{-m/2} \int_0^\infty e^{-y} y^{m/2-1} dy\\
&= C a^{-m/2} \Gamma(m/2+1) \\
&\le C \left( \frac{m}{2a} \right)^{m/2}.\qedhere
\eas
\end{proof}

\subsection{Step 3: A concentration bound for a \texorpdfstring{$k$}{k}-design}

Now we show how to obtain a measure concentration result for polynomials when the unitaries are selected from an approximate $k$-design.  We first show that the moments of $|f-\mu|$ for $f$ a polynomial are close to the Haar measure moments:
\begin{lemma}
\label{lem:ApproxkdesignMoments}
Let $f$ be a balanced polynomial of degree $K$ and $\mu$ be any constant.  Let $f = \sum_{i=1}^t \alpha_i M_i$ where each $M_i$ is a monomial.  Let $\alpha(f) = \sum_i |\alpha_i|$.  Then for $m$ an integer with $2mK \le k$ and $\nu$ an $\eps$-approximate $k$-design,
\be
\bbE_{U \sim \nu} |f - \mu|^{2m} \le \bbE_{U \sim \cU(d)} | f-\mu |^{2m} + \frac{\eps}{d^k} \left(\alpha + | \mu | \right)^{2m}.
\ee
\end{lemma}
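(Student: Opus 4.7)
The plan is to expand $|f-\mu|^{2m}$ as a balanced polynomial in the matrix elements of $U$, apply the MONOMIAL definition of an approximate $k$-design (\defref{ApproxUnitaryDesignMonomials}) termwise, and bound the resulting sum of absolute coefficients.

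First I would write $|f-\mu|^{2m} = \bigl((f-\mu)\overline{(f-\mu)}\bigr)^{m}$. Since $f$ is a balanced polynomial of degree $K$, every monomial in $f$ has degree $(k',k')$ for some $k' \le K$; subtracting the constant $\mu$ preserves this (treating $\mu$ as a degree-$(0,0)$ monomial). Thus $(f-\mu)^m$ is a balanced polynomial of degree at most $mK$, and its complex conjugate is also balanced of the same degree, so the product $|f-\mu|^{2m}$ is a balanced polynomial of degree at most $2mK \le k$. Writing this expansion as $|f-\mu|^{2m} = \sum_j \beta_j N_j$ where each $N_j$ is a balanced monomial of degree at most $k$, I can invoke \defref{ApproxUnitaryDesignMonomials} to obtain $|\bbE_{U\sim\nu} N_j - \bbE_{U\sim\cU(d)} N_j| \le \eps/d^k$ for every $j$.

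Taking expectations and applying the triangle inequality gives
\bes
\l|\bbE_{U\sim\nu} |f-\mu|^{2m} - \bbE_{U\sim\cU(d)} |f-\mu|^{2m}\r| \le \frac{\eps}{d^k}\sum_j |\beta_j|.
\ees
The central step is then to control $\sum_j |\beta_j|$. Writing $f - \mu = \sum_i \alpha_i M_i + (-\mu)$, the sum of absolute values of its coefficients is at most $\alpha(f) + |\mu|$. Since the sum of absolute coefficient values is submultiplicative under polynomial multiplication (expanding a product $(\sum_i a_i x_i)(\sum_j b_j y_j)$ gives coefficients of absolute sum at most $(\sum_i|a_i|)(\sum_j|b_j|)$, and collecting like terms can only decrease this), the same bound holds for $\overline{f-\mu}$, and hence $\sum_j |\beta_j| \le (\alpha + |\mu|)^{2m}$. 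Combined with the fact that $\bbE_{U\sim\nu} |f-\mu|^{2m} \ge 0$, this yields the claimed inequality.

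There is no real obstacle here beyond bookkeeping; the only thing to be careful about is the degree count, which forces the condition $2mK \le k$ (a factor of $m$ from each of $(f-\mu)^m$ and $\overline{(f-\mu)}^m$, and a factor of $K$ from the degree of $f$), and the fact that the submultiplicativity argument needs $\mu$ to be treated as a trivially balanced monomial so that the MONOMIAL definition applies uniformly to every term of the expansion.
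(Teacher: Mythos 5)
Your proof is correct and follows essentially the same route as the paper's: expand $|f-\mu|^{2m}$ as a balanced polynomial of degree at most $2mK \le k$, apply the MONOMIAL definition of an approximate $k$-design termwise, and bound the total absolute-coefficient mass by $(\alpha+|\mu|)^{2m}$. The paper simplifies the bookkeeping by assuming $f$ and $\mu$ real (so $|f-\mu|^{2m}=(f-\mu)^{2m}$) and then applying the binomial theorem over $i$ together with the multinomial theorem for $f^i$; you instead handle the complex case directly by writing the quantity as $(f-\mu)^m\overline{(f-\mu)}^m$ and invoking submultiplicativity of the $\ell^1$ norm on coefficients. These are the same computation, and your version is marginally cleaner since it avoids the ``easily generalises to the complex case'' handwave. (The remark about $\bbE_{U\sim\nu}|f-\mu|^{2m}\ge 0$ is harmless but unneeded — the one-sided claimed inequality already follows from the two-sided triangle-inequality bound.)
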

\begin{proof}
For simplicity, we assume that $f$ and $\mu$ are real.  Our proof easily generalises to the complex case.

Firstly we calculate $|\bbE_{U \sim \nu} f^i - \bbE_{U \sim \cU(d)} f^i|$ using the multinomial theorem:
\bas
|\bbE_{U \sim \nu} f^i - &\bbE_{U \sim \cU(d)} f^i| \\
&=\l| \sum_{k_1 + \ldots + k_t = i} {i \choose k_1, \ldots, k_t} \alpha_1^{k_1} \ldots \alpha_t^{k_t} \l( \bbE_{U \sim \nu} - \bbE_{U \sim \cU(d)} \r) M_1^{k_1} \ldots M_t^{k_t} \r| \\
&\le \sum_{k_1 + \ldots + k_t = i} {i \choose k_1, \ldots, k_t} |\alpha_1|^{k_1} \ldots |\alpha_t|^{k_t} \l| \l( \bbE_{U \sim \nu} - \bbE_{U \sim \cU(d)} \r) M_1^{k_1} \ldots M_t^{k_t} \r| \\
&\le \frac{\eps}{d^k} \sum_{k_1 + \ldots + k_t = i} {i \choose k_1, \ldots, k_t} |\alpha_1|^{k_1} \ldots |\alpha_t|^{k_t} \\
&= \frac{\eps}{d^k} \alpha^i.
\eas
We now calculate $\bbE_{U \sim \nu} |f - \mu|^{2m}$:
\bas
\l| \bbE_{U \sim \nu} | f - \mu |^{2m} - \bbE_{U \sim \cU(d)} | f - \mu |^{2m} \r|
&= \l| \bbE_{U \sim \nu} (f - \mu)^{2m} - \bbE_{U \sim \cU(d)} (f - \mu)^{2m} \r| \\
&= \l|\sum_{i = 0}^{2m} {2m \choose i} (\bbE_{U \sim \nu} f^i - \bbE_{U \sim \cU(d)} f^i) (-\mu)^{2m-i} \r|\\
&\le \sum_{i = 0}^{2m} {2m \choose i} |\bbE_{U \sim \nu} f^i - \bbE_{U \sim \cU(d)} f^i| |\mu|^{2m-i}\\
&\le \frac{\eps}{d^k} \sum_{i = 0}^{2m} {2m \choose i} \alpha^i |\mu|^{2m-i}\\
&= \frac{\eps}{d^k} \left(\alpha + | \mu | \right)^{2m}.\qedhere
\eas
\end{proof}

Now we can simply apply Markov's inequality to prove \thmref{ConcentrationPolynomial}.
\begin{proof}[Proof of \thmref{ConcentrationPolynomial}]
Apply Markov's inequality and Lemmas \ref{lem:MomentBoundFromLargeDeviationBound} and \ref{lem:ApproxkdesignMoments}:
\bas
\Pr_{U \sim \nu}(|f - \mu| \ge \delta) &= \Pr_{U \sim \nu}(|f - \mu|^{2m} \ge \delta^{2m}) \\
&\le \frac{\bbE_{U \sim \nu} | f - \mu |^{2m}}{\delta^{2m}} \\
&\le \frac{1}{\delta^{2m}} \left( C \left(\frac{m}{a}\right)^m + \frac{\eps}{d^k} \left(\alpha + | \mu | \right)^{2m}\right).\qedhere
\eas 
\end{proof}

We finish this section with two remarks.  Firstly, provided $\alpha(f)$ (the sum of the absolute value of all the coefficients) is at most polynomially large in $d$, we can choose $\eps$ to be polynomially small to cancel this at no change to the asymptotic efficiency.  Secondly, when applying the theorem we will optimise the choice of $m$ (and normally choose $k = 2mK$).  Often $a=\Theta(d)$ and the optimal choice of $m$ is often $\Theta(d)$ as well.  However, we will not take $m$ so large because we can only implement an efficient $k$-design for $k = O(\log d/\log \log d)$.

\section{Application 1: Entropy of a \texorpdfstring{$k$}{k}-design}
\label{sec:Entropy}

We now apply the above to show that most unitaries in a $k$-design generate large amounts of entropy across any bipartition, provided the dimensions are sufficiently far apart.  This means that, for any initial state, for most choices of a unitary from a $k$-design applied to the state, the resulting state will be highly entangled.  We go via the purity of the reduced density matrix, since the entropy function is not a polynomial.

We will call the two systems $S$ (the `system') and $E$ (the `environment') and calculate the purity of the reduced state.  That the purity, $\tr\left[\left( \tr_E U \rho U^\dagger \right)^2\right]$, is a balanced polynomial of degree 2 is easily seen by noting that the trace is linear and the reduced state is squared.  However, we should check that there are not too many terms or terms with large coefficients.  To do this, we should calculate $\alpha$ to apply \thmref{ConcentrationPolynomial}.

There is a general method for calculating $\alpha(f)$ which we will use.  Write $f(U) = \sum_i \alpha_i M_i(U)$ for monomials $M_i$.  To evaluate $\alpha(f) = \sum_i | \alpha_i |$, calculate $f(A)$ where $A$ is the matrix with all entries equal to $1$ (so that $M_i(A) = 1$) and replace $\alpha_i$ with $| \alpha_i|$. Using this here we find
\bas
\alpha &= d^2\l(\sum_{i j} | \rho_{ij} | \r)^2 \\
&\le d^4 \sum_{ij} | \rho_{ij} |^2 \\
&= d^4 || \rho ||_2^2 \\
&\le d^4.
\eas
We now calculate the expected purity:
\begin{lemma}
\label{lem:ExpectedPurity}
The expected purity of the reduced state is $\frac{d_S + d_E}{d+1}$, where $d_S$ is the dimension of subsystem $S$ and $d_E = d/d_S$ is the dimension of subsystem $E$.
\end{lemma}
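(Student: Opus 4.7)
The plan is to compute $\bbE_U \tr \psi_S^2$ directly using the swap trick and the standard Haar integral formula for $U^{\otimes 2}$. Since purity is a balanced degree-$2$ polynomial, the expectation over the Haar measure agrees with the expectation over any exact (or sufficiently close approximate) $2$-design, which is what we ultimately need.

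First I would rewrite $\tr \psi_S^2$ using \lemref{TraceCycles} (the swap trick) as
\[
\tr \psi_S^2 \;=\; \tr\!\bigl( (\swap_S \otimes I_{E_1 E_2})\, (\psi \otimes \psi) \bigr),
\]
where $\swap_S$ is the swap on the two copies of $\cH_S$. Pulling the expectation inside the trace, the task reduces to evaluating $\bbE_U (U \otimes U)\, (\rho_0 \otimes \rho_0)\, (U^\dagger \otimes U^\dagger)$.

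By Schur--Weyl duality (as used already in the proof of \lemref{SymmetricStateAverage}), this expectation is a linear combination of $I$ and $\swap$ on $\cH^{\otimes 2}$, with coefficients fixed by matching the two scalars $\tr(\rho_0 \otimes \rho_0)$ and $\tr\bigl(\swap\,(\rho_0 \otimes \rho_0)\bigr) = \tr \rho_0^2$. Since $\rho_0$ is pure, both equal $1$, and solving the $2\times 2$ linear system yields
\[
\bbE_U\, U^{\ot 2} (\rho_0 \otimes \rho_0) (U^\dagger)^{\ot 2} \;=\; \frac{1}{d(d+1)}\bigl(I + \swap\bigr).
\]
This is the key input; its derivation is just the standard computation of the projector onto the symmetric subspace from \lemref{SymmetricStateAverage}.

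Finally, I would plug this in, using $\swap = \swap_S \otimes \swap_E$ to get
\[
\bbE_U \tr \psi_S^2 \;=\; \frac{1}{d(d+1)}\Bigl[\tr(\swap_S \otimes I_{E_1 E_2}) + \tr\bigl((\swap_S \otimes I_{E_1 E_2})(\swap_S \otimes \swap_E)\bigr)\Bigr],
\]
and evaluate the two traces as $d_S d_E^2$ and $d_S^2 d_E$ respectively (using $\tr \swap_S = d_S$, $\tr \swap_E = d_E$, and $\swap_S^2 = I$). This gives $d_S d_E(d_S + d_E)/(d(d+1)) = (d_S + d_E)/(d+1)$, as claimed. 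There is no real obstacle here — the calculation is routine once the swap trick and the Haar average over $U^{\ot 2}$ are combined; the only thing to be careful about is bookkeeping of which factor each swap acts on.
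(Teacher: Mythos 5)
Your proof is correct and follows essentially the same route as the paper: both use the swap trick to express purity, substitute the Haar average $\bbE_U\,U^{\ot 2}(\rho_0 \ot \rho_0)(U^\dagger)^{\ot 2} = (I + \swap)/(d(d+1))$, and then evaluate traces. The only cosmetic difference is that you keep everything in the full space and use $\swap = \swap_S \ot \swap_E$, whereas the paper first takes the partial trace over $E$ of $I+\swap$ to get $d_E^2 I_{S_1 S_2} + d_E \swap_{S_1 S_2}$; the arithmetic is identical.
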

\begin{proof}
We have
\be
\bbE_{U \sim \cU(d)} || \psi_S ||_2^2 = \bbE_{U \sim \cU(d)} \l[ \tr \cF_{S_1 S_2} (\tr_E U \rho U^\dagger \ot \tr_E U \rho U^\dagger) \r]
\ee
where $\cF_{S_1 S_2}$ is swap acting between systems $S_1$ and $S_2$.  By linearity of the trace, we can commute the $\bbE_{U \sim \cU(d)}$ through and use $\bbE_{U \sim \cU(d)} \l[ U \rho U^\dagger \ot U \rho U^\dagger \r] = \frac{I_{12} + \cF_{12}}{d(d+1)}$ to find
\bas
\bbE_{U \sim \cU(d)} || \psi_S ||_2^2 &= \tr \l[\frac{\cF_{S_1 S_2}}{d(d+1)} (d_E^2 I_{S_1 S_2} + d_E \cF_{S_1 S_2}) \r] \\
&= \frac{1}{d(d+1)} (d_E^2 d_S + d_E d_S^2) \\
&= \frac{d_S + d_E}{d+1}\qedhere
\eas
\end{proof}

Working out the higher moments in this way is difficult (although has been done in \cite{GiraudPurityMoments}) so we use Levy's Lemma and \lemref{MomentBoundFromLargeDeviationBound}.  To use Levy's Lemma, all we have to do is find the Lipschitz constant for the purity:
\begin{lemma}
The Lipschitz constant for purity is $\le 2$.
\end{lemma}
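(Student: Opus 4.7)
\textbf{The plan} is a direct differential bound on the unitary group. Along any smooth path $U(t)$ with tangent $\dot U = iHU$ ($H$ Hermitian, so $\|\dot U\|_2 = \|H\|_2$), set $\psi = U \rho U^\dagger$. Then $\dot\psi = i[H,\psi]$ and $\dot\psi_S = i\tr_E[H,\psi]$. Using $\tr[\psi_S \tr_E X] = \tr[(\psi_S \ot I_E)X]$ together with cyclicity of the trace gives
\be
\dot f = 2\tr[\psi_S \dot\psi_S] = 2i\tr\l[H\,[\psi,\psi_S \ot I_E]\r].
\ee
Since $[\psi, \psi_S \ot I_E]$ is anti-Hermitian this quantity is real, and Cauchy--Schwarz yields $|\dot f| \le 2\|H\|_2 \|[\psi,\psi_S \ot I_E]\|_2 = 2\|\dot U\|_2 \|[\psi,\psi_S \ot I_E]\|_2$. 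Integrating along a path from $U_2$ to $U_1$ then bounds the Lipschitz constant by $2\sup_U \|[\psi,\psi_S \ot I_E]\|_2$.

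Next I would bound the commutator. The paper's setup takes $\rho = \proj{\phi}$ pure, so $\psi = \proj{\psi}$ with $\ket{\psi} = U\ket{\phi}$. Set $\ket{\xi} := (\psi_S \ot I_E)\ket{\psi}$; then $\braket{\psi}{\xi} = \tr\psi_S^2 = f$, so decomposing $\ket{\xi} = f\ket{\psi} + \ket{\xi_\perp}$ with $\braket{\psi}{\xi_\perp} = 0$ gives
\be
[\psi, \psi_S \ot I_E] = \ket{\psi}\bra{\xi_\perp} - \ket{\xi_\perp}\bra{\psi},
\ee
a rank-two operator of Frobenius norm $\sqrt{2}\,\|\xi_\perp\|_2$. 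Pythagoras yields $\|\xi_\perp\|_2^2 = \bra{\psi}(\psi_S^2 \ot I_E)\ket{\psi} - f^2 = \tr\psi_S^3 - f^2 \le f(1-f) \le 1/4$, using $\tr\psi_S^3 \le \tr\psi_S^2 = f$ (which holds because $\psi_S \le I$). Hence $\|[\psi,\psi_S \ot I_E]\|_2 \le 1/\sqrt{2}$ and the Lipschitz constant is at most $\sqrt{2} \le 2$.

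\textbf{The main obstacle} is obtaining a dimension-independent $O(1)$ constant. A crude partial-trace bound in the Frobenius norm uses $\|\tr_E X\|_2 \le \sqrt{d_E}\,\|X\|_2$, which would scale badly with $d_E$; alternatively, chaining $\|\cdot\|_2 \le \|\cdot\|_1$ with the partial-trace contractivity $\|\tr_E X\|_1 \le \|X\|_1$ is dimension-independent but loses factors of $2$ at several steps. Extracting the antihermitian commutator $[\psi, \psi_S \ot I_E]$ directly from the differential avoids both problems, reducing the whole estimate to the elementary purity inequality $\tr\psi_S^3 \le \tr\psi_S^2$.
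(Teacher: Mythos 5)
Your argument is correct and takes a genuinely different route from the paper's. The paper bounds the difference directly: it factors $\bigl|\|\psi_S\|_2^2-\|\phi_S\|_2^2\bigr|=\bigl|\|\psi_S\|_2-\|\phi_S\|_2\bigr|\,\bigl(\|\psi_S\|_2+\|\phi_S\|_2\bigr)$, bounds the second factor by $2$ via $\|\psi_S\|_2\le 1$ and the first by $\|\psi_S-\phi_S\|_2$ via the reverse triangle inequality, and then relies on the reduced state depending $1$-Lipschitzly on the argument. Your differential bound instead collapses the estimate to the single quantity $\|[\psi,\psi_S\ot I_E]\|_2$: the derivative of purity is the commutator overlap $\dot f=2i\,\tr\bigl(H[\psi,\psi_S\ot I_E]\bigr)$, and the decomposition $\ket\xi=f\ket\psi+\ket{\xi_\perp}$ together with the one-line inequality $\tr\psi_S^3\le\tr\psi_S^2$ controls the commutator norm exactly. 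This makes the mechanism transparent, sidesteps the dimension-dependence worry you flag, and gives the sharper constant $\sqrt2$. The only technicality worth noting is that integrating $|\dot f|\le L\|\dot U\|_2$ along a path in $U(d)$ bounds the Lipschitz constant with respect to the geodesic metric, whereas the paper's stated definition is the ambient quotient $\sup|f(U_1)-f(U_2)|/\|U_1-U_2\|_2$. Since $d_{\mathrm{geod}}(U_1,U_2)\ge\|U_1-U_2\|_2$, your $\sqrt2$ on the geodesic constant converts to an ambient bound only after comparing the two metrics (costing a factor at most $\pi/2$, i.e.\ $\approx 2.22$). For Levy's lemma this is immaterial --- the geodesic Lipschitz constant is what actually governs concentration, and the discrepancy is absorbed into the constant $C_1$ --- so the substantive claim is proved, with a cleaner and tighter estimate than the paper's.
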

\begin{proof}
\bas
\eta &= \sup_{\psi, \phi} \frac{\left| ||\psi_S||_2^2 - ||\phi_S||_2^2 \right|}{||\psi - \phi||_2} \\
&= \sup_{\psi, \phi} \frac{\left| ||\psi_S||_2 - ||\phi_S||_2 \right| (||\psi_S||_2 + ||\phi_S||_2)}{||\psi - \phi||_2}
\eas
Now we use $\left| ||S||_2 - ||T||_2 \right| \le ||S-T||_2$ to find
\bes
\eta \le \sup_{\psi, \phi} (||\psi_S||_2 + ||\phi_S||_2) \le 2
\ees
using the fact that the purity is upper bounded by 1.
\end{proof}

\begin{lemma}
\label{lem:MessyEntropyBound}
For $\mu = \frac{d_S + d_E}{d+1}$ and $m$ an integer with $m \le k/4$ and $\nu$ an $\eps$-approximate $k$-design,
\be
\label{eq:MessyEntropyBound}
\Pr_{U \sim \nu}(S(\psi_S) \le -\log_2 \mu - \alpha) \le \frac{1}{(\mu (2^\alpha-1))^{2m}} \left(4 \left(\frac{4 m}{C_1 d}\right)^m + \frac{\eps}{d^k}(d^4 + \mu)^{2m} \right).
\ee
\end{lemma}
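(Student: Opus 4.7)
The plan is to reduce the entropy tail bound to a tail bound on the purity, to which we can apply Theorem 5.2.1 directly. The key observation is the standard inequality $S(\psi_S) \ge -\log_2 \tr \psi_S^2$ (the von Neumann entropy is at least the collision entropy). Therefore the event $S(\psi_S) \le -\log_2 \mu - \alpha$ is contained in the event $-\log_2 \tr\psi_S^2 \le -\log_2 \mu - \alpha$, i.e.\ $\tr\psi_S^2 \ge \mu 2^\alpha$. Since $\bbE \tr\psi_S^2 = \mu$ by Lemma 5.3.1, this is a one-sided large-deviation event $|\tr\psi_S^2 - \mu| \ge \mu(2^\alpha - 1)$.

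Next, I set up the hypotheses of Theorem 5.2.1 for $f(U) = \tr\psi_S^2$. This is a balanced polynomial of degree $K=2$ in the matrix elements of $U$, and we already noted that $\alpha(f) \le d^4$. Since the Lipschitz constant of $f$ is at most $2$ (Lemma 5.3.2), Levy's Lemma (Lemma 5.2.1) gives the Haar concentration bound
\bes
\Pr_{U \sim \cU(d)}(|f - \mu| \ge \delta) \le 4\exp\!\l(-\tfrac{C_1 d}{4}\delta^2\r),
\ees
so we may take $C=4$ and $a = C_1 d/4$ in the notation of Theorem 5.2.1.

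Applying Theorem 5.2.1 with these parameters at $\delta = \mu(2^\alpha - 1)$ and with any integer $m$ satisfying $2mK = 4m \le k$ gives
\bes
\Pr_{U \sim \nu}\l(|f - \mu| \ge \mu(2^\alpha-1)\r) \le \frac{1}{(\mu(2^\alpha-1))^{2m}}\l(4\l(\tfrac{4m}{C_1 d}\r)^m + \tfrac{\eps}{d^k}(d^4 + \mu)^{2m}\r),
\ees
which is exactly the claimed bound once we combine it with the entropy-to-purity reduction from the first step.

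There is really no technical obstacle here: Theorem 5.2.1, together with the three preparatory facts (expected purity, Lipschitz bound, and the estimate $\alpha(f) \le d^4$) reduces the proof to bookkeeping. The only minor care needed is in the reduction to purity: because $-\log_2$ is monotone decreasing, the inequality on $S(\psi_S)$ translates correctly into an upper-tail event for $\tr\psi_S^2$, and since $\mu \le 1$ this upper tail is what Theorem 5.2.1 controls (the two-sided bound of the theorem is of course stronger than needed).
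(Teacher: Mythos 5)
Your proof is correct and follows essentially the same route as the paper's: both reduce the entropy tail to a purity tail via $S(\psi_S)\ge S_2(\psi_S)=-\log_2\tr\psi_S^2$, translate to the two-sided deviation event $|\tr\psi_S^2-\mu|\ge\mu(2^\alpha-1)$, and then plug the Lipschitz-2/Levy concentration (giving $C=4$, $a=C_1 d/4$) and $\alpha(f)\le d^4$ into Theorem 5.2.1 with $K=2$. The only cosmetic difference is that the paper parameterises the threshold as $(1+\delta)\mu$ while you work directly with $\alpha = \log_2(1+\delta)$; your remark that the containment of the one-sided tail in the two-sided event relies on $\mu\le 1$ is unnecessary (it holds for any $\mu>0$ and $\alpha>0$) but does not affect the argument.
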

\begin{proof}
We use the fact that von Neumann entropy is lower bounded by the Renyi 2-entropy i.e.~$-\log_2 || \psi_S ||_2^2$:
\begin{equation}
S(\psi_S) \ge S_2(\psi_S) = - \log_2 || \psi_S ||_2^2.
\end{equation}
Then
\bas
\Pr_{U \sim \nu}(S(\psi_S) \le - \log_2 (1+\delta) \mu) &\le \Pr_{U \sim \nu}(S_2(\psi_S) \le - \log_2 (1+\delta) \mu) \\
&= \Pr_{U \sim \nu}(|| \psi_S ||_2^2 \ge (1+\delta)\mu) \\
&\le \Pr_{U \sim \nu}(\left| || \psi_S ||_2^2 - \mu \right|\ge \delta\mu) \\
&\le \frac{1}{(\mu \delta)^{2m}} \left(4 \left(\frac{4 m}{C_1 d}\right)^m + \frac{\eps}{d^k}(d^4 + \mu)^{2m} \right)
\eas
using \thmref{ConcentrationPolynomial} in the last line.
\end{proof}
We have written this in a more convenient form in \thmref{EntropyBound} which is proved in \secref{EntropyBoundProof}.  This is to be compared with the Haar random version \thmref{EntropyTailBoundFullRandomness}.  As expected, we have $n = \log_2 d$ appearing in the exponent rather than $d$.  Note also that our bound does not work well for $d_S \approx d_E$.  In fact, in this case, we do not get a bound that improves with dimension.  In order to achieve such a bound in this regime a different technique will be necessary.

\section{Application 2: \texorpdfstring{$k$}{k}-designs and Statistical Mechanics}
\label{sec:StatMech}

We can also apply these ideas to partially derandomise some of the arguments on the foundations of statistical mechanics in \cite{ThermalisationPSW}.  In this paper, the authors develop the idea that the uncertainty in statistical mechanics comes from entanglement rather than the traditional assumption of the principle of equal a priori probabilities.  They consider the universe being in a pure quantum state and that the uncertainty in the state of a subsystem comes from the entanglement between this system and the rest of the universe.

The setting is that there is an arbitrary global linear constraint $R$.  Often this will be a total energy constraint although this is not assumed.  Let the Hilbert space of states satisfying $R$ be $\cH_R$.  Then let the system and environment Hilbert spaces be $\cH_S$ and $\cH_E$ respectively.  Then
\be
\cH_R \subseteq \cH_S \ot \cH_E.
\ee
Let the dimensions be $d_R$, $d_S$ and $d_E$ and let $\cE_R = \frac{I_R}{d_R}$.  Note that $d_R \le d_S d_E$, unlike in the above where we took $d = d_S d_E$.  Normally we will have $d_S \ll d_R$.  The principle of equal a priori probabilities says that the state of the universe is $\cE_R$ which implies the subsystem state is the canonical state, given by
\be
\label{eq:CanonicalState}
\Omega_S = \tr_E(\cE_R).
\ee
The main result of \cite{ThermalisationPSW} (the `principle of \emph{apparently} equal a priori probabilities') is that, for almost all pure states of the universe, the subsystem state is almost exactly the canonical state.  
\begin{theorem}[Theorem 1 of \cite{ThermalisationPSW}]
\label{thm:ThermalisationPSW}
For a randomly chosen state $\ket{\phi} \in \mathcal{H}_R \subseteq \mathcal{H}_S \ot \mathcal{H}_E$ and arbitrary $\eps > 0$, the distance between the reduced density matrix of the system $\rho_S = \tr_E(\ket{\phi} \bra{\phi})$ and the canonical state $\Omega_S$ (\eq{CanonicalState}) is given probabilistically by
\begin{equation}
\Pr_{U \sim \cU(d)}\left( || \rho_S - \Omega_S ||_1 \ge \eps + \sqrt{\frac{d_S}{d_E^{\text{\rm eff}}}}\right) \le 2 \exp\left(-C_2 d_R \eps^2\right)
\end{equation}
where $C_2 = 1/(18 \pi^3)$ and $d_E^{\text{\rm eff}} = \frac{1}{\tr \Omega_E^2} \ge \frac{d_R}{d_S}$.
\end{theorem}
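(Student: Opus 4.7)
The plan is to prove the tail bound by the standard ``mean-plus-concentration'' template used throughout this chapter. First I would identify the mean. By the unitary invariance of the Haar measure on the unit sphere of $\cH_R$, $\bbE_{\ket\phi} \ket\phi\bra\phi = I_R/d_R = \cE_R$, so by linearity of the partial trace $\bbE_{\ket\phi} \rho_S = \tr_E \cE_R = \Omega_S$. This is what makes $\Omega_S$ the right target.

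Second, I would bound $\bbE\, ||\rho_S - \Omega_S||_1$. Because $\rho_S - \Omega_S$ is supported on $\cH_S$, the norm inequality $||A||_1 \le \sqrt{d_S}\,||A||_2$ and Jensen give
\be
\bbE\, ||\rho_S - \Omega_S||_1 \le \sqrt{d_S\,\bbE\,||\rho_S-\Omega_S||_2^2} = \sqrt{d_S\l(\bbE\,\tr\rho_S^2 - \tr\Omega_S^2\r)}.
\ee
To evaluate $\bbE\,\tr\rho_S^2$ I would use $\tr\rho_S^2 = \tr \cF_{S_1S_2}(\rho_S\ot\rho_S)$ together with the fact that $\bbE\,\ket\phi\bra\phi^{\ot 2} = (I_R\ot I_R + S_R)/(d_R(d_R+1))$, where $I_R$ and $S_R$ denote the identity and the swap on $\cH_R\subseteq\cH_S\ot\cH_E$. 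Pushing the partial traces through and using $\tr_E \cF_{E_1E_2}(X\ot Y) = XY$ type identities (as in \lemref{TraceCycles}) gives $\bbE\,\tr\rho_S^2 = (d_R\tr\Omega_S^2 + d_R\tr\Omega_E^2)/(d_R+1)$, so $\bbE\,||\rho_S-\Omega_S||_2^2 \le \tr\Omega_E^2 = 1/d_E^{\text{eff}}$ and hence $\bbE\,||\rho_S-\Omega_S||_1 \le \sqrt{d_S/d_E^{\text{eff}}}$.

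Third, I would apply Levy's lemma on the sphere (the analogue of \lemref{Levy} but for pure states in a $d_R$-dimensional complex space, whose real dimension is $2d_R$) to the function $f(\ket\phi) = ||\rho_S - \Omega_S||_1$. The Lipschitz constant is $O(1)$: for any two normalised states, contractivity of the partial trace under the 1-norm and the bound $||\,\ket{\phi_1}\bra{\phi_1} - \ket{\phi_2}\bra{\phi_2}\,||_1 \le 2||\,\ket{\phi_1}-\ket{\phi_2}\,||_2$ give $|f(\ket{\phi_1}) - f(\ket{\phi_2})| \le 2||\ket{\phi_1}-\ket{\phi_2}||_2$. Levy's lemma then yields $\Pr(|f - \bbE f| \ge \eps) \le 2\exp(-C_2 d_R\eps^2)$ for an absolute constant $C_2$, and combining with the mean bound, the event $f \ge \eps + \sqrt{d_S/d_E^{\text{eff}}}$ implies $f - \bbE f \ge \eps$, giving the claimed tail.

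The main obstacle is the bookkeeping in step two: the purity average must be done inside the constrained subspace $\cH_R$, so one must be careful to replace the full symmetric-subspace projector $\Pi_{+}$ by its restriction to $\cH_R\ot\cH_R$ before pushing partial traces through, and to recognise that after tracing out $E$ the cross term produces exactly $\tr\Omega_E^2$ (and not $\tr\Omega_S^2$, which would be far worse for small $d_S$). Pinning down the explicit constant $C_2 = 1/(18\pi^3)$ is then a matter of carefully tracking the Lipschitz constant $2$ and the Levy constant on the real $(2d_R{-}1)$-sphere.
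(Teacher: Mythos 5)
Your proposal is correct, but a clarification is in order: \thmref{ThermalisationPSW} is stated in this thesis purely as an imported result (Theorem~1 of \cite{ThermalisationPSW}) and is not re-proved here, so there is no in-paper proof to compare against. What you have reconstructed is, in outline, exactly the argument of Popescu, Short and Winter: identify $\Omega_S$ as the Haar mean of $\rho_S$; bound the mean deviation $\bbE\,\|\rho_S-\Omega_S\|_1 \le \sqrt{d_S\,\bbE\|\rho_S-\Omega_S\|_2^2}$ via Cauchy--Schwarz and Jensen; compute $\bbE\,\tr\rho_S^2 = d_R(\tr\Omega_S^2+\tr\Omega_E^2)/(d_R+1)$ from the symmetric-subspace projector on $\cH_R^{\ot 2}$ (your remark about restricting $\Pi_+$ to $\cH_R\ot\cH_R$, and about the cross term yielding $\tr\Omega_E^2$ rather than $\tr\Omega_S^2$, is precisely the point that has to be gotten right); then apply L\'evy's lemma on the sphere $S^{2d_R-1}$ with Lipschitz constant $2$. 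Your Lipschitz bound $\|\,\proj{\phi_1}-\proj{\phi_2}\,\|_1 \le 2\|\,\ket{\phi_1}-\ket{\phi_2}\,\|_2$ is correct (it follows from $1-|c|^2\le 2(1-\Re c)$ for $c=\braket{\phi_1}{\phi_2}$), and substituting $\eta=2$ into the spherical L\'evy bound $2\exp(-2d_R\eps^2/(9\pi^3\eta^2))$ does reproduce $C_2=1/(18\pi^3)$. The only cosmetic blemish is the sign in your intermediate display: from $\bbE\tr\rho_S^2$ one gets $\bbE\|\rho_S-\Omega_S\|_2^2 = (d_R\tr\Omega_E^2 - \tr\Omega_S^2)/(d_R+1) \le \tr\Omega_E^2$, not an exact equality to $\bbE\tr\rho_S^2-\tr\Omega_S^2$ simplified to $\tr\Omega_E^2$; but the inequality you need survives, so the conclusion stands.
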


This result gives compelling evidence to replace the principle of equal a priori probabilities with the principle of apparently equal a priori probabilities, but it does not address the problem of how the system reaches this state.  It will take an extremely (exponentially) long time for the universe to reach a random pure state, in contrast to the observed fact that thermalisation occurs quickly.  Here, we show that for almost all unitaries in a $k$-design applied to the universe, the subsystem state is close to the canonical state.  Since these unitaries can be implemented and sampled efficiently, this means that equilibrium could be reached quickly to match observations.

We are now ready to show that a $k$-design gives a small $|| \rho_S - \Omega_S ||_1$.  First, we have to modify \lemref{MomentBoundFromLargeDeviationBound} slightly:
\begin{lemma}
\label{lem:MomentBoundFromLargeDeviationBound2}
Let $X$ be any non-negative random variable with probability concentration
\begin{equation}
\label{eq:LargeDeviationAssumption2}
\Pr(X \ge \delta + \eta) \le C e^{-a \delta^2}
\end{equation}
where $\eta \ge 0$.  Then
\begin{equation}
\bbE X^m \le C \left( \frac{2m}{a} \right)^{m/2} + (2\eta)^m
\end{equation}
for any $m > 0$.
\end{lemma}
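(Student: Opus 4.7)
The plan is to mimic the proof of \lemref{MomentBoundFromLargeDeviationBound} but to handle the extra offset $\eta$ by a truncation argument. Since the hypothesis only controls $\Pr(X \ge \eta + \delta)$, the tail behaviour of $X$ is only constrained once $X$ has exceeded roughly $\eta$. The natural threshold is $X = 2\eta$, because there $X - \eta \ge X/2$, which makes the change of variables clean.

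First I would split the moment as
\bes
\bbE X^m = \bbE\!\left[X^m \mathbb{I}(X \le 2\eta)\right] + \bbE\!\left[X^m \mathbb{I}(X > 2\eta)\right].
\ees
The first term is bounded trivially by $(2\eta)^m$, since on the event $\{X \le 2\eta\}$ we have $X^m \le (2\eta)^m$ and the indicator has expectation $\le 1$. For the second term I would use the inequality $X \le 2(X - \eta)$ valid on $\{X > 2\eta\}$, which gives $X^m \mathbb{I}(X > 2\eta) \le 2^m (X-\eta)^m \mathbb{I}(X > 2\eta) \le 2^m \max(X-\eta, 0)^m$.

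Next I would bound $\bbE \max(X-\eta,0)^m$ by the same layer-cake argument used in \lemref{MomentBoundFromLargeDeviationBound}: writing $Y = \max(X-\eta,0)$, which is non-negative,
\bas
\bbE Y^m &= \int_0^\infty \Pr(Y^m \ge y)\, dy = \int_0^\infty \Pr(X \ge y^{1/m} + \eta)\, dy \\
&\le C \int_0^\infty e^{-a y^{2/m}}\, dy = C\,\Gamma(m/2+1)\, a^{-m/2} \le C \left(\frac{m}{2a}\right)^{m/2},
\eas
where the last inequality is the same Stirling-type estimate $\Gamma(m/2+1) \le (m/2)^{m/2}$ used in the original lemma. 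Multiplying by the prefactor $2^m$ gives $2^m \cdot C(m/(2a))^{m/2} = C(2m/a)^{m/2}$, and combining with the first piece yields the claimed bound $\bbE X^m \le C(2m/a)^{m/2} + (2\eta)^m$.

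There is no real obstacle here; the only point requiring care is choosing the correct splitting threshold so that the substitution $X \mapsto X-\eta$ on the tail part costs only a constant factor $2^m$ and matches the stated constant. A threshold larger than $2\eta$ would weaken the first term, while a smaller threshold would make the inequality $X \le 2(X-\eta)$ fail.
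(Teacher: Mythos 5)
Your proof is correct and follows essentially the same route as the paper's own (which is omitted from the text and merely described as ``very similar to'' the preceding lemma): both split at the threshold $2\eta$, both exploit the observation that $x-\eta \ge x/2$ once $x\ge 2\eta$ (you phrase it as $X\le 2(X-\eta)$ on $\{X>2\eta\}$, the paper phrases it as $\Pr(X\ge\delta)\le Ce^{-a\delta^2/4}$ for $\delta\ge 2\eta$), and both recover the factor $2^m$ that turns $(m/2a)^{m/2}$ into $(2m/a)^{m/2}$. The only cosmetic difference is that you decompose the expectation with indicators and shift to $Y=\max(X-\eta,0)$ before applying the layer-cake, whereas the paper applies the layer-cake to $X^m$ directly and splits the integral at $(2\eta)^m$; the computations and constants coincide.
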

\comment{\begin{proof}
For $\delta \ge \eta$, $\Pr(X \ge 2\delta) \le C e^{-a \delta^2}$.  Alternatively, if $\delta \ge 2\eta$, $\Pr(X \ge \delta) \le C e^{-a \delta^2/4}$.  Now
\begin{align}
\Expect X^m &= \int_0^\infty \Pr(X^m \ge x) dx \\
&= \int_0^\infty \Pr(X \ge x^{1/m}) dx \\
&= \int_{(2\eta)^m}^\infty \Pr(X \ge x^{1/m}) dx + \int_0^{(2\eta)^m} \Pr(X \ge x^{1/m}) dx \\
&\le \int_{(2\eta)^m}^\infty C \exp(-a x^{2/m}/4) dx + (2\eta)^m \\
&\le \int_0^\infty C \exp(-a x^{2/m}/4) dx + (2\eta)^m \\
&\le C \left( \frac{2m}{a} \right)^{m/2} + (2\eta)^m
\end{align}
where the last line follows from evaluating the integral, as in the proof of \lemref{MomentBoundFromLargeDeviationBound}.
\end{proof}}
The proof is very similar to the proof of \lemref{MomentBoundFromLargeDeviationBound}.

Now we state and prove the main result in this section:
\begin{theorem}
Let $\nu$ be an $\eps$-approximate unitary $k$-design.  Then
\be
\label{eq:ThermalisationMessy}
\Pr_{U \sim \nu}( || \rho_S - \Omega_S ||_1 \ge \delta) \le \l(\frac{d_S}{\delta^2}\r)^{k/8} \l( 2 \l(\frac{k}{2 C_2 d_R}\r)^{k/8} + \l( \frac{4 d_S^2}{d_R} \r)^{k/8} + \frac{\eps}{d_R^k}(d_R^2+1)^{k/2}\r).
\ee
In particular, with $\eps = \frac{3}{2}\l( \frac{4 d_S^3}{d_R} \r)^{k/8}$, $k \le 8C_2 d_S^2$,
\be
\label{eq:ThermalisationSimplified}
\Pr_{U \sim \nu}( || \rho_S - \Omega_S ||_1 \ge \delta) \le 6 \l(\frac{4 d_S^3}{d_R \delta^2} \r)^{k/8}.
\ee
\end{theorem}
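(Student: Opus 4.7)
The plan is to re-run the 2-norm-plus-Markov template already used for entropy in \lemref{MessyEntropyBound}, with \thmref{ThermalisationPSW} supplying the Haar concentration in place of Levy's Lemma, and with \lemref{MomentBoundFromLargeDeviationBound2} and \lemref{ApproxkdesignMoments} playing the same derandomisation role they did in the proof of \thmref{ConcentrationPolynomial}. The immediate obstacle is that the trace norm is not a polynomial in $U$, so I first reduce it to the Hilbert--Schmidt norm via $\|\rho_S - \Omega_S\|_1 \le \sqrt{d_S}\,\|\rho_S - \Omega_S\|_2$ and work throughout with $g := \|\rho_S - \Omega_S\|_2^2$, which \emph{is} a balanced polynomial in the matrix entries of $U$ of degree $2$, so \lemref{ApproxkdesignMoments} applies to it.

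Markov's inequality on $g$ then gives
\bes
\Pr_{U \sim \nu}(\|\rho_S - \Omega_S\|_1 \ge \delta)
\le \Pr_{U \sim \nu}\!\l(g \ge \delta^2/d_S\r)
\le \l(\frac{d_S}{\delta^2}\r)^{\!m}\, \bbE_\nu g^{m},
\ees
so the task reduces to estimating $\bbE_\nu g^m$. To control the Haar expectation $\bbE_H g^m$, I use that $\|X\|_2 \le \|X\|_1$ for any matrix, so the PSW trace-norm tail bound automatically implies
\bes
\Pr_H\!\l(\|\rho_S - \Omega_S\|_2 \ge \eta + \sqrt{d_S/d_E^{\rm eff}}\r)
\le 2\exp\l(-C_2 d_R \eta^2\r),
\ees
and feeding this into \lemref{MomentBoundFromLargeDeviationBound2} applied to $X = \|\rho_S - \Omega_S\|_2$ with exponent $2m$ yields $\bbE_H g^m \le 2(4m/(C_2 d_R))^m + (4 d_S^2/d_R)^m$ after using $d_E^{\rm eff} \ge d_R/d_S$. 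With the choice $m = k/8$ these become exactly the first two summands of the statement.

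For the third summand I apply \lemref{ApproxkdesignMoments} with $f = g$, $K = 2$ and $\mu = 0$; since $g \ge 0$ this transfers the Haar moment bound to the $k$-design at the cost of an additive error $(\eps/d_R^k)\,\alpha(g)^{2m'}$, subject to the degree constraint $4m' \le k$. The sum-of-coefficients $\alpha(g)$ is computed in exactly the same ``evaluate at the all-ones matrix'' fashion as $\alpha(\|\psi_S\|_2^2) \le d^4$ in \secref{Entropy}: the dominant contribution is $\alpha(\tr \rho_S^2) \le d_R^4$, with lower-order contributions from $\tr(\rho_S \Omega_S)$ and $\tr(\Omega_S^2)$, so $\alpha(g) = O((d_R^2+1)^2)$ and the residual design-error contributes the claimed $(\eps/d_R^k)(d_R^2+1)^{k/2}$ term.

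The main obstacle is really only book-keeping: the Markov power on $g$, the degree cut-off from \lemref{ApproxkdesignMoments} and the moment exponent fed into \lemref{MomentBoundFromLargeDeviationBound2} have to be threaded consistently so that all three summands end up with the common $(d_S/\delta^2)^{k/8}$ prefactor, which is what the choice $m = k/8$ achieves. The simplified bound \eqn{ThermalisationSimplified} then drops out by picking $\eps = \tfrac{3}{2}(4 d_S^3/d_R)^{k/8}$ so that the design-error term is dominated by the thermalisation term $(4 d_S^2/d_R)^{k/8}$, and using the hypothesis $k \le 4 d_S^2/(9\pi^3) = 8 C_2 d_S^2$ to dominate the statistical term $2(k/(2 C_2 d_R))^{k/8}$ by the same quantity.
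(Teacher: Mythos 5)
Your proposal follows the same route as the paper: reduce $\|\cdot\|_1$ to $\|\cdot\|_2$, work with the degree-$2$ polynomial $g=\|\rho_S-\Omega_S\|_2^2$, pull the Haar moment bound out of \thmref{ThermalisationPSW} via \lemref{MomentBoundFromLargeDeviationBound2}, transfer to the design with \lemref{ApproxkdesignMoments}, apply Markov, and set $m=k/8$. The structure and all the ingredients are right.

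One small book-keeping remark, which you half-flag yourself as the "main obstacle": when \lemref{ApproxkdesignMoments} is applied to $f=g$ (degree $K=2$, $\mu=0$) to estimate $\bbE_\nu g^m$, the lemma's exponent parameter must be $m'=m/2$ (so it needs $m$ even), giving a design-error of $(\eps/d_R^k)\alpha(g)^{m}\le (\eps/d_R^k)(d_R^2+1)^{2m}=(\eps/d_R^k)(d_R^2+1)^{k/4}$ under the cut-off $2m\le k$, \emph{not} the $(d_R^2+1)^{k/2}$ you state. So you would actually land on a \emph{tighter} bound than the theorem; that is harmless for the claim, since the stronger inequality implies it, and the paper itself quotes the looser exponent. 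Just be careful that the degree constraint you cite ($4m'\le k$) already accounts for the factor of two when $m'$ is the lemma's variable, otherwise the three "$m$''s will not thread consistently.
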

Again, we need $d_S$ to be polynomially smaller than $d_R$ to obtain non-trivial bounds.

\begin{proof}
We go via the 2-norm and use Lemmas \ref{lem:MomentBoundFromLargeDeviationBound2} and \ref{lem:ApproxkdesignMoments}.

We have from \thmref{ThermalisationPSW} that
\begin{equation}
\Pr_{U \sim \cU(d)}(||\rho_S - \Omega_S||_1 \ge \delta + \eta) \le 2 e^{-C_2 d_R \delta^2}
\end{equation}
where $\eta = \sqrt{\frac{d_S}{d_E^\text{eff}}} \le \frac{d_S}{\sqrt{d_R}}$.  Since $||\rho_S - \Omega_S||_2 \le ||\rho_S - \Omega_S||_1$,
\begin{equation}
\Pr_{U \sim \cU(d)}(||\rho_S - \Omega_S||_2 \ge \delta + \eta) \le 2 e^{-C_2 d_R \delta^2}.
\end{equation}
We now apply \lemref{MomentBoundFromLargeDeviationBound2} to get
\begin{equation}
\bbE_{U \sim \cU(d)} ||\rho_S - \Omega_S||_2^{2m} \le 2\left(\frac{4m}{C_2 d_R}\right)^{m} + (2 \eta)^{2m}.
\end{equation}
So for $m \le k/4$, using Markov's inequality and \lemref{ApproxkdesignMoments} (with $\mu = 0$) on the polynomial $||\rho_S - \Omega_S||_2^{2}$ :
\be
\Pr_{U \sim \nu} (||\rho_S - \Omega_S||_2 \ge \delta) \le \frac{1}{\delta^{2m}} \left( 2\left(\frac{4m}{C_2 d_R}\right)^{m} + (2 \eta)^{2m} + \frac{\eps}{d_R^k}(d_R^2+1)^{4m} \r).
\ee
Here, we used an estimate of $\alpha$, the sum of the moduli of the coefficients:
\be
\alpha \le (d_R^2+1)^2
\ee
which we obtain via a similar calculation to that in \secref{Entropy}.

Now we go back to the 1-norm, using $||\rho_S - \Omega_S||_1 \le \sqrt{d_S} ||\rho_S - \Omega_S||_2$ to get
\begin{align}
\Pr_{U \sim \nu} (||\rho_S - \Omega_S||_1 \ge \delta) &\le \Pr_{U \sim \nu} (||\rho_S - \Omega_S||_2 \ge \delta/\sqrt{d_S}) \\
&\le \l(\frac{d_S}{\delta^2}\r)^{m} \l( 2 \l(\frac{4m}{C_2 d_R}\r)^{m} + \l( 2 \eta \r)^{2m} + \frac{\eps}{d_R^k}(d_R^2+1)^{4m} \r).
\end{align}
To obtain the result in \eq{ThermalisationMessy}, we just use $\eta \le \frac{d_S}{\sqrt{d_R}}$ and set $m=k/8$.

To prove the simplified version, first use, as in \secref{Entropy}, that $(d_R^2+1)^{4m} \le 2 d_R^{8m}$ for $m \le d_R^2/8$.  This is implied by $k \le 8C_2 d_S^2$.  We then set $m = k/8$ to find
\be
\Pr_{U \sim \nu}( || \rho_S - \Omega_S ||_1 \ge \delta) \le 2 \l(\frac{k d_S}{2 C_2 d_R \delta^2}\r)^{k/8} + \l( \frac{4 d_S^3}{d_R \delta^2} \r)^{k/8} + 2\frac{\eps}{\delta^{k/4}}.
\ee
Then, using $k \le 8 C_2 d_S^2$, with $\eps \le \frac{3}{2}\l( \frac{4 d_S^3}{d_R} \r)^{k/8}$, we obtain the simplified result \eq{ThermalisationSimplified}.
\end{proof}

\section{Application 3: Using \texorpdfstring{$k$}{k}-designs for Measurement-Based Quantum Computing}
\label{sec:MBQC}

Here we apply our ideas to partially derandomise some results of Gross, Flammia and Eisert in \cite{MostStatesUselessMBQCGFE} and Bremner, Mora and Winter in \cite{MostStatesUselessMBQCBMW}.  The main result in these two papers is that most states do not offer any advantage over classical computation when used in the measurement-based quantum computing (MBQC) model.  In MBQC, a classical computer is given access to a large quantum state on which it can do single qubit measurements.  Some states allow for universal quantum computation whereas others do not add any extra power to the classical computer.  These results are concerned with the question of characterising which states do and do not work.  Showing that random states do not give any speed up shows that useful states for MBQC are not generic and so must be carefully constructed.

While the results in these two papers are similar, we will concentrate on the methods from \cite{MostStatesUselessMBQCGFE} since their methods are simpler to apply here.  They prove their result by showing that most states are very entangled in the geometric measure (see \defref{GeometricMeasure}).  They then use this to show that the measurement outcomes of even the best possible measurement scheme are almost completely random.  In fact, the state could be thrown away and the measurement outcomes replaced with random numbers to solve the computational problem just as efficiently.  This shows that you can classically simulate any quantum computation that uses these highly entangled states.  The measure of entanglement they use is the geometric measure:
\begin{definition}
\label{def:GeometricMeasure}
The geometric measure of entanglement of a state $\ket{\Psi}$ is \cite{GeometricEntanglementShimony, BarnumLinden01}
\be
E_g(\ket{\Psi}) = -\log_2 \sup_{\alpha \in \cP} | \braket{\alpha}{\Psi} |^2.
\ee
where $\cP$ is the set of all product states.
\end{definition}
They show that any MBQC using a state $\ket{\Psi}$ with $E_g(\ket{\Psi}) = n - O(\log_2 n)$ can be efficiently simulated classically.  They then show that
\begin{theorem}[\cite{MostStatesUselessMBQCGFE}, Theorem 2]
\label{thm:HaarRandomLargeGeometricEntropy}
For $n \ge 11$,
\be
\Pr_{\ket{\psi} \sim \cS(d)} (E_g(\ket{\Psi}) \le n - 2 \log_2 n - 3) \le e^{-n^2}.
\ee
\end{theorem}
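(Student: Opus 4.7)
The plan is to pair the exact distribution of the overlap with a fixed product state with an $\epsilon$-net argument over the product manifold $\cP$. Write $g(\ket\Psi) := \sup_{\ket\alpha\in\cP}|\braket\alpha\Psi|$ so that $E_g(\ket\Psi) = -2\log_2 g(\ket\Psi)$; the conclusion is equivalent to $g(\ket\Psi) \le 2\sqrt{2}\, n/2^{n/2}$ except with probability at most $e^{-n^2}$.

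First I would establish the pointwise tail. For any fixed product state $\ket\alpha$, unitary invariance of the Haar measure implies that $|\braket\alpha\Psi|^2$ is distributed as the modulus squared of a single coordinate of a uniformly random unit vector in $\bbC^d$, i.e.\ as a Beta$(1,d-1)$ random variable. Hence
\bes
\Pr_{\ket\Psi\sim\cS(d)}\l(|\braket\alpha\Psi|^2 \ge t\r) = (1-t)^{d-1} \le e^{-(d-1)t}.
\ees
Second I would construct an $\epsilon$-net $\cN_\epsilon\subset\cP$ in the Euclidean norm. Covering each single-qubit state space by an $(\epsilon/n)$-net of cardinality $O(n^2/\epsilon^2)$ via a standard volume-packing estimate on $S^3\subset\bbC^2$, and tensoring these nets together, the telescoping bound $\|\bigotimes_i\ket{\alpha_i}-\bigotimes_i\ket{\alpha'_i}\| \le \sum_i \|\ket{\alpha_i}-\ket{\alpha'_i}\|$ yields a net of size $|\cN_\epsilon| \le (Cn/\epsilon)^{2n}$. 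Since $\ket\alpha\mapsto|\braket\alpha\Psi|$ is $1$-Lipschitz in the 2-norm on $\cP$ by Cauchy--Schwarz, we have $g(\ket\Psi) \le \max_{\ket{\alpha'}\in\cN_\epsilon}|\braket{\alpha'}\Psi| + \epsilon$.

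Third I would choose parameters and take a union bound. Setting $\epsilon = n/2^{n/2}$ gives $\ln|\cN_\epsilon| = (\ln 2)\, n^2 + O(n\log n)$, while imposing $\max_{\ket{\alpha'}\in\cN_\epsilon}|\braket{\alpha'}\Psi| \le (2\sqrt 2 - 1)n/2^{n/2}$ corresponds to a per-point threshold $t_0 = (2\sqrt 2-1)^2 n^2/2^n$ on $|\braket{\alpha'}\Psi|^2$. The union bound then gives
\bes
\Pr_{\ket\Psi\sim\cS(d)}(g(\ket\Psi) > 2\sqrt 2\, n/2^{n/2}) \le |\cN_\epsilon|\cdot e^{-(d-1)t_0} \le \exp\l[\bigl(\ln 2 - (2\sqrt 2-1)^2\bigr) n^2 + O(n\log n)\r].
\ees
Since $(2\sqrt 2 - 1)^2 = 9 - 4\sqrt 2 \approx 3.34$ and $\ln 2 \approx 0.69$, the net coefficient of $n^2$ is roughly $-2.65$, so the bound is well below $e^{-n^2}$ once $n$ is large enough to dominate the subleading $O(n\log n)$ terms; the hypothesis $n \ge 11$ is what pins down this threshold explicitly.

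The main technical obstacle is the bookkeeping of constants: both $|\cN_\epsilon|$ and the per-point tail scale like $e^{\Theta(n^2)}$, so the final bound arises as the difference of two quantities of the same order, and one must choose $\epsilon$ carefully to leave simultaneous room for the discretisation slack (which erodes the threshold $t_0$) and for the gap $(2\sqrt 2 - 1)^2 - \ln 2 - 1 \approx 1.65$ that powers the decay past $e^{-n^2}$. A minor subsidiary task is verifying the single-qubit net bound $O(n^2/\epsilon^2)$ in the 2-norm rather than in the Fubini--Study metric, but this follows from the standard equivalence of the two metrics on pure states together with a volume-ratio argument.
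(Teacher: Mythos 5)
Your proposal is correct and reconstructs exactly the argument of Gross, Flammia, and Eisert's Theorem~2 that the paper cites (and which the paper reuses almost verbatim in the proof of its derandomised variant, Theorem~5.5.2): an exponential tail $\Pr(|\braket{\alpha}{\Psi}|^2 \geq t) = (1-t)^{d-1}$ for a fixed product state, a tensor-product $\gamma$-net on product states, a Lipschitz/triangle step passing from net-maximum to supremum, and a union bound. The one minor discrepancy is your net cardinality $(Cn/\epsilon)^{2n}$ versus the $(5n/\gamma)^{4n}$ quoted in the paper (your count mods out the per-qubit phase so the per-factor net lives on $\mathbb{CP}^1$ rather than $S^3$, which is legitimate since the overlap modulus is phase-invariant), but as you observe the coefficient of $n^2$ in the exponent leaves a margin of roughly $1.65$, so either net bound comfortably delivers $e^{-n^2}$.
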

This shows that most states are useless.  We partially derandomise this result to show that most states in an $\eps$-approximate ($\eps$ can be taken as a constant) state $n^2$-design have high geometric measure of entanglement and thus are useless in the same way.

We could apply our technique and use \thmref{ConcentrationPolynomial} but in this case, it is simpler to directly bound the probability using Markov's inequality.
\begin{lemma}
\label{lem:RandomStateOverlap}
\begin{equation}
\Pr_{\ket{\Psi} \sim \nu}(| \braket{\Phi}{\Psi} |^2 \ge \delta) \le (1+\eps)\frac{m!}{(d\delta)^m} \le (1+\eps)\left(\frac{m}{d\delta}\right)^m
\end{equation}
where $\ket{\Psi}$ is chosen from an $\eps$-approximate state $k$-design $\nu$, $m \le k$ and a positive integer and $\ket{\Phi}$ is any fixed state.
\end{lemma}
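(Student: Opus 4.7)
The plan is to apply Markov's inequality to the random variable $|\braket{\Phi}{\Psi}|^{2m}$ and evaluate the resulting moment via the symmetric-subspace formula from \lemref{SymmetricStateAverage}. Since $|\braket{\Phi}{\Psi}|^{2m}$ is a balanced polynomial of degree $m$ in the coefficients of $\ket{\Psi}$, and $m\le k$, the $\eps$-approximate $k$-design property will control it.

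First I would write
\bes
\Pr_{\ket{\Psi}\sim\nu}(|\braket{\Phi}{\Psi}|^2 \ge \delta)
 = \Pr_{\ket{\Psi}\sim\nu}(|\braket{\Phi}{\Psi}|^{2m} \ge \delta^m)
 \le \frac{\bbE_{\ket{\Psi}\sim\nu}|\braket{\Phi}{\Psi}|^{2m}}{\delta^m},
\ees
and then rewrite the expectation as a matrix element
$\bbE_\nu|\braket{\Phi}{\Psi}|^{2m} = \bra{\Phi}^{\ot m}\bbE_{\Psi\sim\nu}[\proj{\Psi}^{\ot m}]\ket{\Phi}^{\ot m}$.

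Next, since $\nu$ is an $\eps$-approximate state $k$-design, taking the partial trace (as noted just after \defref{ApproxStateDesign}) shows it is also an $\eps$-approximate state $m$-design for any $m\le k$, so
\bes
\l\|\bbE_\nu[\proj{\Psi}^{\ot m}] - \bbE_{\ket{\psi}\sim\cS(d)}[\proj{\Psi}^{\ot m}]\r\|_\infty
 \le \frac{\eps}{\binom{d+m-1}{m}}.
\ees
Since $\ket{\Phi}^{\ot m}$ is a unit vector lying in the symmetric subspace, applying \lemref{SymmetricStateAverage} gives
$\bbE_{\ket{\psi}\sim\cS(d)}|\braket{\Phi}{\Psi}|^{2m} = \bra{\Phi}^{\ot m}\Pi_{+m}\ket{\Phi}^{\ot m}/\binom{d+m-1}{m} = 1/\binom{d+m-1}{m}$, so
\bes
\bbE_\nu|\braket{\Phi}{\Psi}|^{2m}
 \le \frac{1+\eps}{\binom{d+m-1}{m}}
 \le (1+\eps)\frac{m!}{d^m},
\ees
using the standard estimate $\binom{d+m-1}{m}\ge d^m/m!$. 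Substituting into Markov and bounding $m!\le m^m$ yields both inequalities in the lemma.

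There is no real obstacle here: the proof is essentially a one-line application of Markov once the Haar moment is identified, and the symmetric projector appearance in \lemref{SymmetricStateAverage} does all the work. The only point that requires a moment of thought is that we use the $k$-design condition at the reduced level $m\le k$, which is licensed by the partial-trace monotonicity of the design property.
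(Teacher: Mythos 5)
Your proof follows essentially the same route as the paper's: apply Markov's inequality to $|\braket{\Phi}{\Psi}|^{2m}$, rewrite the moment as $\bra{\Phi}^{\ot m}\bbE_\nu[\proj{\Psi}^{\ot m}]\ket{\Phi}^{\ot m}$, control it via the approximate-design property and \lemref{SymmetricStateAverage}, and bound the binomial coefficient. One small remark: your intermediate claim that taking the partial trace of an $\eps$-approximate $k$-design gives an $\eps$-approximate $m$-design \emph{in the $\infty$-norm with the same $\eps$} is not obviously true (the partial trace can inflate the operator norm); what does hold cleanly — and what the paper's displayed inequality is really using — is the operator inequality $\bbE_\nu[\proj{\Psi}^{\ot k}] \le (1+\eps)\Pi_{+k}/\binom{k+d-1}{k}$ on the symmetric subspace, which is preserved under partial trace because the partial trace is a positive map, yielding $\bbE_\nu[\proj{\Psi}^{\ot m}] \le (1+\eps)\Pi_{+m}/\binom{m+d-1}{m}$ and hence your bound.
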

\begin{proof}
We prove this bound directly using Markov's inequality:
\bas
\Pr_{\ket{\Psi} \sim \nu}(| \braket{\Phi}{\Psi} |^2 \ge \delta) &= \Pr_{\ket{\Psi} \sim \nu}(| \braket{\Phi}{\Psi} |^{2m} \ge \delta^m) \\
&\le \frac{\bbE_{\ket{\Psi} \sim \nu}  | \braket{\Phi}{\Psi} |^{2m}}{\delta^m} \\
&= \frac{\bbE_{\ket{\Psi} \sim \nu}  \bra{\Phi}^{\ot m} \ket{\Psi}^{\ot m} \bra{\Psi}^{\ot m} \ket{\Phi}^{\ot m}}{\delta^m} \\
&= \frac{\bra{\Phi}^{\ot m} \bbE_{\ket{\Psi} \sim \nu} \l[ \ket{\Psi}^{\ot m} \bra{\Psi}^{\ot m} \r] \ket{\Phi}^{\ot m}}{\delta^m} \\
&\le \frac{\bra{\Phi}^{\ot m} (1+\eps)\frac{\Pi^\text{sym}_m}{{m+d-1 \choose d-1}} \ket{\Phi}^{\ot m}}{\delta^m} \\
&= \frac{1+\eps}{{m+d-1 \choose d-1} \delta^m} \\
&\le \frac{(1+\eps)m!}{(d \delta)^m} \le (1+\eps)\left(\frac{m}{d \delta} \right)^m.\qedhere
\eas
\end{proof}
We now prove the main result in this section:
\begin{theorem}
\label{thm:LargeGeometricEntanglement}
For $\ket{\Psi}$ randomly drawn from an $\eps$-approximate state $k$-design with $d = 2^n$
\begin{equation}
\Pr_{\ket{\Psi} \sim \nu}(E_g(\ket{\Psi}) \le n - \delta) \le (1+\eps) \exp_2 (k \log_2 2k + 4n \log_2 10n - k\delta + 4n(n-\delta)).
\end{equation}
In particular, for $k = n^2$, $\delta = 3 \log_2 n + 5$ and $\eps = 1$,
\begin{equation}
\Pr_{\ket{\Psi} \sim \nu}(E_g(\ket{\Psi}) \le n - 3 \log_2 n - 5) \le 2 \cdot n^{-n^2}.
\end{equation}
\end{theorem}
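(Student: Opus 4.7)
The plan is to combine the single-point overlap bound of Lemma 5.6.2 with a standard $\eta$-net and union-bound argument over product states, mirroring the Haar-random proof of Theorem 5.6.4 but replacing Levy's Lemma by the design moment bound. By Definition 5.6.1, the event $E_g(\Psi) \le n-\delta$ is equivalent to the existence of a product state $\alpha \in \cP$ with $|\braket{\alpha}{\Psi}|^2 \ge 2^{\delta-n}$, so it suffices to bound the probability of this existential event.

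To discretize, I would construct an $\eta_1$-net $\cN_1$ of single-qubit unit vectors. Viewing a qubit state as a point of $S^3 \subset \bbC^2 \cong \bbR^4$ and using a volumetric cover gives $|\cN_1| \le (C/\eta_1)^4$ for an absolute constant $C$. Tensoring yields a product-state net $\cN$ with $|\cN| \le (C/\eta_1)^{4n}$, and the telescoping estimate $\bigl\|\bigotimes_i \alpha_i - \bigotimes_i \tilde\alpha_i\bigr\|_2 \le \sum_i \|\alpha_i - \tilde\alpha_i\|_2 \le n\eta_1$ (valid on unit vectors) shows that every product state lies within $L^2$-distance $n\eta_1$ of some element of $\cN$. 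Since $\alpha \mapsto |\braket{\alpha}{\Psi}|^2$ is $2$-Lipschitz in this metric, choosing $\eta_1 = 2^{\delta-n-2}/n$ guarantees that if some product state has overlap at least $2^{\delta-n}$ with $\Psi$, then some $\tilde\alpha \in \cN$ has $|\braket{\tilde\alpha}{\Psi}|^2 \ge 2^{\delta-n-1}$.

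Now I would apply Lemma 5.6.2 with $m = k$ and threshold $2^{\delta-n-1}$ to each $\tilde\alpha \in \cN$, obtaining $\Pr[|\braket{\tilde\alpha}{\Psi}|^2 \ge 2^{\delta-n-1}] \le (1+\eps)(k/2^{\delta-1})^{k}$, and union-bound over $\cN$:
\bes
\Pr[E_g(\Psi) \le n-\delta] \;\le\; (1+\eps)\,|\cN|\,\l(\frac{k}{2^{\delta-1}}\r)^{k}.
\ees
Substituting $|\cN| \le (4Cn/2^{\delta-n})^{4n}$, rewriting $(k/2^{\delta-1})^{k} = 2^{k\log_2(2k) - k\delta}$ and $|\cN| = 2^{4n\log_2(4Cn) + 4n(n-\delta)}$, and absorbing $4C$ into the factor of $10$ in $\log_2(10n)$ produces the stated bound. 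The specialization $k = n^2$, $\delta = 3\log_2 n + 5$, $\eps = 1$ is then a routine arithmetic substitution.

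The main obstacle I anticipate is calibrating the net constants: the precise factors $10$ and $4$ in $4n\log_2(10n)$ are sensitive to (i) the chosen metric and covering number of $S^3$, (ii) the tensor-product Lipschitz inequality for product states, and (iii) the amount of overlap headroom left when passing from the continuous supremum to a net element. Each ingredient is standard in isolation, but their composition requires careful, if mechanical, bookkeeping to land exactly on the stated constants.
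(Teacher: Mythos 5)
Your proposal follows essentially the same route as the paper: reduce the geometric-measure event to an existential overlap event, discretize product states with a net, union-bound using \lemref{RandomStateOverlap} at the relaxed threshold $\delta'/2$, and then collect the powers of two. The only difference is that the paper does not rebuild the net from single-qubit covers and a telescoping tensor-product Lipschitz estimate — it simply cites the bound $|\cN_{\gamma,n}|\le(5n/\gamma)^{4n}$ from \cite{MostStatesUselessMBQCGFE}, which is precisely what makes the constant $10n$ fall out cleanly (after substituting $\gamma=\delta'/2=2^{\delta-n-1}$); your hand-built net gives $(4Cn/2^{\delta-n})^{4n}$ with the $S^3$-covering constant $C$, and the standard volumetric bound yields $C=3$, i.e.~$12n$ rather than $10n$, so to land exactly on the stated constants you should import the net cardinality from the cited reference rather than rederive it.
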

We note that this bound is almost the same as in \thmref{HaarRandomLargeGeometricEntropy}.  It only works for slightly larger deviations from $n$, which is why we obtain a slightly better probability bound.  Note also that we can obtain an exponential bound in $n$ (not $d=2^n$) because the design is exponentially large in $n$.
\begin{proof}
This proof closely mirrors the proof of Theorem 2 in \cite{MostStatesUselessMBQCGFE}.  We use the idea of a $\gamma$-net.  $\cN_{\gamma, n}$ is a $\gamma$-net on product states if
\be
\sup_{\ket{\alpha} \in \cP} \inf_{\ket{\tilde{\alpha}} \in \cN_{\delta, n}} \big|\big| \ket{\alpha} - \ket{\tilde{\alpha}} \big|\big|_2 \le \gamma/2.
\ee
In \cite{MostStatesUselessMBQCGFE}, it is shown that such a net exists with $| \mathcal{N}_{\gamma, n}| \le (5n/\gamma)^{4n}$.  We then proceed by showing that most states in the state design have small overlap with every state in the net using the union bound and \lemref{RandomStateOverlap}.  Finally, since every state is close to one in the net, we can show that most states in the design have small overlap with every product state.

We now formalise the above.  Using \lemref{RandomStateOverlap} and the union bound,
\begin{equation}
\label{eq:NetStateOverlap}
\Pr_{\ket{\Psi} \sim \nu} \left( \sup_{\ket{\tilde{\alpha}} \in \mathcal{N}_{\gamma, n}} | \braket{\tilde{\alpha}}{\Psi} |^2 \ge \delta'/2 \right) \le | \mathcal{N}_{\gamma,n} | (1+\eps) \left(\frac{2k}{d\delta'}\right)^k \le \left(\frac{5n}{\gamma}\right)^{4n} (1+\eps) \left(\frac{2k}{2^n \delta'}\right)^k.
\end{equation}
Now, we need to bound
\bas
\Pr_{\ket{\Psi} \sim \nu}(E_g(\ket{\Psi}) \le n - \delta) &= \Pr_{\ket{\Psi} \sim \nu}\left(-\log_2 \sup_{\ket{\alpha} \in \mathcal{P}} | \braket{\alpha}{\Psi} |^2 \le n - \delta\right) \\
&= \Pr_{\ket{\Psi} \sim \nu}\left(\sup_{\ket{\alpha} \in \mathcal{P}} | \braket{\alpha}{\Psi} |^2 \ge 2^{-(n - \delta)}\right).
\eas
We now claim that
\be
\label{eq:OverlapInNet}
\sup_{\ket{\alpha} \in \cP} | \braket{\alpha}{\Psi} |^2 \ge \delta' \Rightarrow \sup_{\ket{\tilde{\alpha}} \in \cN_{\delta'/2, n}} | \braket{\tilde{\alpha}}{\Psi} |^2 \ge \delta'/2.
\ee
To prove this claim, let $\ket{\alpha}$ be the state that achieves the supremum on the left hand side, and let $\ket{\tilde{\alpha}}$ be the state closest to it in the $\delta'/2$-net.  It is shown in \cite{MostStatesUselessMBQCGFE} that this implies for any $\ket{\Psi}$
\be
\left| | \braket{\alpha}{\Psi} | ^2 - | \braket{\tilde{\alpha}}{\Psi} | ^2 \right| \le \delta'/2.
\ee
Therefore
\bas
| \braket{\tilde{\alpha}}{\Psi} |^2 &\ge  | \braket{\alpha}{\Psi} |^2 - \delta'/2 \\
&\ge \delta'/2.
\eas
This implies that the supremum over all states in the net must be at least $\delta'/2$ to prove the claim.

We can now finish the proof.  Set $\delta' = 2^{-(n - \delta)}$ in \eq{OverlapInNet} and use \eq{NetStateOverlap} with $\gamma = \delta'/2$ to find
\bas
\Pr_{\ket{\Psi} \sim \nu}&\left(\sup_{\ket{\alpha} \in \mathcal{P}} | \braket{\alpha}{\Psi} |^2 \ge 2^{-(n - \delta)}\right) \\
&\le \Pr_{\ket{\Psi} \sim \nu}\left(\sup_{\ket{\tilde{\alpha}} \in \mathcal{N}_{2^{-(n-\delta)-1}, n}} | \braket{\tilde{\alpha}}{\Psi} |^2 \ge 2^{-(n - \delta)-1}\right) \\
&\le (1+\eps) \exp_2 (k \log_2 2k + 4n \log_2 10n - k\delta + 4n(n-\delta)).\qedhere
\eas
\end{proof}
Combining this with the arguments of \cite{MostStatesUselessMBQCGFE} shows that most states in a state $n^2$-design on $n$ qubits are useless for MBQC.  This shows that even many efficiently preparable states are useless.

\section{Conclusions}
\label{sec:ConclusionLargeDeviations}

We have seen how to turn large deviation bounds for Haar-random unitaries into bounds for $k$-designs.  The main technique was applied to show that unitaries from $k$-designs generate large amounts of entanglement.  Then we showed that, if the dynamics of the universe produced a $k$-design, the entanglement generated would be sufficient to reproduce the principle of equal a priori probabilities.  Finally we showed that most states in sufficiently large state designs are useless for measurement-based quantum computing, in the sense that computation using them can be efficiently simulated classically.

However, there are other bounds for which our technique does not work.  Since we cannot obtain exponential bounds for polynomially sized designs, our technique cannot directly derandomise some bounds.  Some results, for example showing that the $\infty$-norm of the reduced state of a random pure state is close to $1/d_S$ \cite{SuperdenseCodingHHL}, are proven by using an $\eps$-net of states and the union bound.  Since the $\eps$-net is exponentially large, exponentially small bounds are required.  We do not know how to apply our idea to results of this kind and still have $k=\poly(\log d)$.  (Note that we could cope with the $\eps$-net in \secref{MBQC} since it was just a net on product states which is considerably smaller.)

It is also possible that our ideas could be used to completely derandomise some constructions (e.g.~locking \cite{RandomizingQuantumStates04, Locking04}).  If we could show that unitaries drawn from a $k$-design work with non-zero probability, and come up with an efficient sampling method, then we could obtain efficient randomised constructions.

\section{Proof of Theorem 5.2.3} 
\label{sec:EntropyBoundProof}

Here we prove the more convenient form of \lemref{MessyEntropyBound} stated as \thmref{EntropyBound}.

\begin{proof}[Proof of \thmref{EntropyBound}]

Firstly, we will write the left hand side of \eq{MessyEntropyBound} in a more useful way.  Using $\ln(1+x) \le x$, we find
\bes
-\log_2 \mu \ge \log_2 d_S - \beta
\ees
where $\beta = \frac{1}{\ln 2} \frac{d_S}{d_E}$, following the notation in \cite{AspectsOfGenericEntanglement}.  This means
\bas
\Pr_{U \sim \nu}(S(\psi_S) \le \log_2 d_S - \alpha - \beta) &\le \Pr_{U \sim \nu}(S(\psi_S) \le -\log_2 \mu - \alpha) \\
&\le \frac{1}{(\mu (2^\alpha-1))^{2m}} \left(4 \left(\frac{4 m}{C_1 d}\right)^m + \frac{\eps}{d^k}(d^4 + \mu)^{2m} \right).
\eas
We now simplify the right hand side.  Let $\delta = 2^\alpha - 1$.  For $d_S \ge 2$, we have $\mu \ge 1/d_S$.  We shall also assume that $m = k/8$.  This gives us (using $\mu \le 1$)
\be
\Pr_{U \sim \nu}(S(\psi_S) \le \log_2 d_S - \alpha - \beta) \le \left(\frac{d_S}{\delta}\right)^{k/4} \left(4 \left(\frac{k}{2C_1 d}\right)^{k/8} + \eps\left(1 + \frac{1}{d^4}\right)^{k/4} \right).
\ee
Now, one can easily show (e.g.~by induction on $n$) that
\be
(1+\delta)^n \le 2
\ee
for $2n\delta \le 1$.  We use this for $n = k/4$ and $\delta = 1/d^4$.  The condition is then $k \le 2 d^4$, which we shall assume (we will set $k = \log d/\log \log d$ later).  We now obtain
\be
\Pr_{U \sim \nu}(S(\psi_S) \le \log_2 d_S - \alpha - \beta) \le \left(\frac{d_S}{\delta}\right)^{k/4} \left(4 \left(\frac{k}{2C_1 d}\right)^{k/8} + 2\eps \right).
\ee
We will now take $\eps = 2 \left(\frac{k}{2C_1 d}\right)^{k/8}$, so that the two terms are the same.  $\log 1/\eps$ is $\poly (\log d)$ so this remains efficient.  Now
\be
\Pr_{U \sim \nu}(S(\psi_S) \le \log_2 d_S - \alpha - \beta) \le 8 \left(\frac{d_S^2 k}{2 C_1 d \delta^2}\right)^{k/8}.
\ee
Assuming that $\delta^2 > \frac{k d_S^2}{2 C_1 d}$, we should take $k$ as large as possible up to $\frac{2 C_1 \delta^2 d}{e d_S^2}$, when the right hand side is maximised.  We then find the result after further simplification.
\end{proof}

\part{Quantum Learning}
\label{part:Learning}

\chapter{Learning and Testing Algorithms for the Clifford Group}
\label{chap:LearningCliffords}

\section{Introduction}

A central problem in quantum computing is to determine an unknown quantum state from measurements of multiple copies of the state.  This process is known as quantum state tomography (see \cite{NielsenChuang} and references therein).  By making enough measurements, the probability distributions of the outcomes can be estimated from which the state can be inferred.  A related problem is that of quantum process tomography, where an unknown quantum evolution is determined by applying it to certain known input states.  There are several methods for doing this, including what are known as Standard Quantum Process Tomography \cite{ChuangNielsen97, PoyatosCiracZoller97} and Ancilla Assisted Process Tomography \cite{DArianoPresti01, Leung03}.  These methods work by using state tomography on the output states for certain input states.

However, all these procedures share one important downside: the number of measurements required increases exponentially with the number of qubits.  This already presents problems even with systems achievable with today's technology, for which complete tomographical measurements can take hours (e.g.~\cite{HaffnerTomography}) making tomography of larger systems unfeasible.  Unfortunately this exponential cost is necessary to determine a completely unknown state or process, since there are exponentially many parameters to measure.  To make tomography feasible for larger systems, we need to find a restriction that requires fewer measurements, ideally polynomially many.

One way to improve the measurement, or query, complexity is to assume some prior knowledge of the process.  For example, suppose the process was known to be one of a small number of unitaries, then the task is just to decide which.  This is the approach we take here.  As a simple example, consider being given a black box implementing an unknown Pauli matrix.  By applying this to half a maximally entangled state, the Pauli can be identified with one query.  This is essentially superdense coding \cite{SuperdenseCoding} and is explained in \secref{LearningPaulis}.  Indeed, if the black box performed a tensor product of arbitrary Paulis on $n$ qubits then it too can be identified with just one query.

We extend this to work for elements of the Clifford group (the normaliser of the Pauli group; see \defref{CliffordGroup}) and show that any member of the Clifford group can be learnt with $O(n)$ queries, which we show is optimal.  The Clifford group is an important subgroup of the unitary group that has found uses in quantum error correction and fault tolerance \cite{QECGeometry,ShorFaultTolerance,GottesmanFaultTolerance}.

Then generalising further, we show that elements of the Gottesman-Chuang hierarchy \cite{GottesmanChuang} (see \defref{GottesmanChuangHierarchy}), also known as the $\cC_k$ hierarchy, can also be learnt efficiently.  As the level $k$ increases, the set $\cC_k$ includes more and more unitaries so this implies ever larger sets can be learnt, although the number of queries scales exponentially with $k$.  Our methods also work if the unitary is known to be close to a Clifford (or any element of $\cC_k$ for some known $k$) rather than exactly a Clifford.

We also give a Clifford testing algorithm, which determines whether an unknown unitary is close to a Clifford or far from every Clifford.  This is an extension of the Pauli testing algorithm given in \cite{QBF}.  Indeed, our results are closely related to results in \cite{QBF} and we use some of the algorithms presented there as ingredients.  Our results can also be compared with \cite{AaronsonLearnability}, which contains methods to approximately learn quantum states.  Another related result is that of Aaronson and Gottesman \cite{AaronsonGottesmanStabilisers}, which provides a method of learning stabiliser states with linearly many copies.

We only consider query complexity although, at least for the Clifford group results, our methods are computationally efficient too.

The rest of the chapter is organised as follows.  In \secref{definitions}, we define the Pauli and Clifford groups and the Gottesman-Chuang hierarchy.  In \secref{ExactLearning} we present our algorithm for exact learning of Clifford and $\cC_k$ elements.  In \secref{ApproxLearning} we show how to find the closest element of $\cC_k$ to an unknown unitary.  In \secref{CliffordTesting} we present our Clifford testing algorithm and then conclude in \secref{ConclusionLearning}.

This chapter has been published previously as \cite{LearningCliffords}.

\section{The Pauli and Clifford Groups and the Gottesman-Chuang Hierarchy}
\label{sec:definitions}

Firstly, we define the Pauli group.  Call the set of all Pauli matrices on $n$ qubits $\hat{\cP}$.  We then have $|\hat{\cP}| = 4^n$.  We write matrices in the Pauli basis using the normalisation $\rho = \sum_p \gamma(p) \sigma_p$.  To make $\hat{\cP}$ into a group, the Pauli group $\cP$, we must include each matrix in $\hat{\cP}$ with phases $\{\pm 1, \pm i\}$.

We can now define the Clifford group:
\begin{definition}[The Clifford group]
\label{def:CliffordGroup}
The Clifford group is the normaliser of the Pauli group i.e.
\bes
\cC = \{ U \in \cU(2^n) : U \cP U^\dagger \subseteq \cP \}.
\ees
\end{definition}

Then the Gottesman-Chuang hierarchy is a generalisation:
\begin{definition}[The Gottesman-Chuang hierarchy \cite{GottesmanChuang}]
\label{def:GottesmanChuangHierarchy}
Let $C_1$ be the Pauli group $\cP$.  Then level $C_k$ of the hierarchy is defined recursively:
\bes
\cC_k = \{ U \in \cU(2^n) : U \cP U^\dagger \subseteq \cC_{k-1} \}.
\ees
\end{definition}
By definition, $\cC_2$ is the Clifford group $\cC$.  For $k>2$, $\cC_k$ is no longer a group but contains a universal gate set, whereas $\cC_1$ and $\cC_2$ are not universal.

\section{Learning Gottesman-Chuang Operations}
\label{sec:ExactLearning}

Before we give our algorithm for learning Gottesman-Chuang operations, we present a simple method for learning Pauli operations, which we use as the main ingredient.

\subsection{Learning Pauli Operations}
\label{sec:LearningPaulis}

This is due to \cite{QBF} and is in fact identical to the superdense-coding protocol \cite{SuperdenseCoding}.
\begin{theorem}[\cite{QBF}, Proposition 20]
\label{thm:PauliLearning}
Pauli operations can be identified with one query and in time $O(n)$.
\end{theorem}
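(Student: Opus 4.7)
The plan is to realize this as an instance of superdense coding. First I would introduce a register $B$ of $n$ ancilla qubits and, in $O(n)$ time, prepare $n$ EPR pairs across systems $A$ and $B$, giving the state $|\Phi\rangle_{AB} = 2^{-n/2}\sum_{x \in \{0,1\}^n}|x\rangle_A|x\rangle_B$. This is equivalent to having half of a maximally entangled state on $2^n$ dimensions. I would then query the black box on system $A$, producing $(\sigma_p \otimes I)|\Phi\rangle_{AB}$, where $\sigma_p$ is the unknown Pauli on $n$ qubits (up to a global phase, which is unobservable).

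The key step is to observe that the $4^n$ states $\{(\sigma_p \otimes I)|\Phi\rangle_{AB} : p \in \{0,1,2,3\}^n\}$ form an orthonormal basis of the $2^{2n}$-dimensional space, namely the generalized Bell basis. Orthogonality follows from $\langle \Phi | (\sigma_p^\dagger \sigma_q \otimes I) | \Phi \rangle = 2^{-n}\tr(\sigma_p^\dagger \sigma_q) = \delta_{p,q}$ using the orthogonality of the Pauli basis. Hence a projective measurement in this basis deterministically reveals the label $p$.

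Finally, the Bell-basis measurement factorizes across the $n$ EPR pairs: on each pair one applies a CNOT followed by a Hadamard on the control, then measures in the computational basis, which takes $O(1)$ gates per pair and $O(n)$ gates total. Reading off the $n$ two-bit outcomes yields $p \in \{0,1,2,3\}^n$ using one call to the black box and $O(n)$ elementary operations of pre-processing and post-processing. There is no real obstacle here; the only thing to double-check is that the global phase in $\cP$ (the $\pm 1, \pm i$ factor) is genuinely undetectable by this procedure, which is fine because the statement is about identifying the element of $\hat{\cP}$ represented by the black box, and a global phase does not affect any measurement statistics.
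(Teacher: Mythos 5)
Your proposal is correct and takes essentially the same approach as the paper: apply the unknown Pauli to half a maximally entangled state and distinguish the resulting orthogonal (generalized Bell) states, with orthogonality established by the same $2^{-n}\tr(\sigma_p\sigma_q)=\delta_{pq}$ computation. You simply flesh out the $O(n)$ circuit details (EPR-pair preparation and the factorized Bell measurement) that the paper states without elaboration.
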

\begin{proof}
Apply the operator $\sigma_p$ to half of the maximally entangled state \bes \ket{\psi} = 2^{-n/2} \sum_i \ket{i i}.\ees  For different choices of $\sigma_p$, the resulting states are orthogonal so can be perfectly distinguished:
\bas
\bra{\psi} \l(\sigma_p \ot I\r) \l(\sigma_q \ot I\r) \ket{\psi} &= 2^{-n} \sum_{ij} \bra{ii} \sigma_p \sigma_q \ot I \ket{jj} \\
&= 2^{-n} \sum_{ij} \bra{i} \sigma_p \sigma_q \ket{j} \braket{i}{j} \\
&= 2^{-n} \sum_{i} \bra{i} \sigma_p \sigma_q \ket{i} \\
&= 2^{-n} \tr \sigma_p \sigma_q \\
&= \delta_{pq}.
\eas
The time complexity $O(n)$ comes from the preparation and measurement operations.
\end{proof}

\subsection{Learning Clifford Operations}

We can now present our algorithm for learning Clifford operations to illustrate our main idea for learning unitaries in the Gottesman-Chuang hierarchy.  We will use the fact that knowing how a unitary acts by conjugation on all elements of $\hat{\cP}$ identifies it uniquely (up to phase):
\begin{lemma}
\label{lem:UConjPauli}
Knowing $U \sigma_p U^\dagger$ for all $\sigma_p \in \hat{\cP}$ uniquely determines $U$, up to global phase.
\end{lemma}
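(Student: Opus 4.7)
The plan is to proceed in two steps: first argue that knowing $U\sigma_p U^\dagger$ for every $\sigma_p \in \hat{\cP}$ determines the entire superoperator $\cU(\rho) := U\rho U^\dagger$, and then argue that the superoperator $\cU$ determines $U$ up to a global phase.

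For the first step I would use the fact, recalled earlier in the thesis, that the Pauli matrices form an orthogonal basis for $\bbC^{2^n \times 2^n}$. Any matrix $\rho$ can therefore be expanded as $\rho = \sum_p \gamma(p)\sigma_p$, and by linearity of conjugation
\bes
U \rho U^\dagger = \sum_p \gamma(p)\, U\sigma_p U^\dagger.
\ees
Hence the data $\{U\sigma_p U^\dagger\}_{\sigma_p \in \hat{\cP}}$ completely specifies the linear map $\cU$.

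For the second step, suppose $V$ is another unitary with $V\sigma_p V^\dagger = U \sigma_p U^\dagger$ for every $p$. By linearity this gives $V \rho V^\dagger = U \rho U^\dagger$ for all $\rho$, equivalently $(V^\dagger U)\rho = \rho (V^\dagger U)$ for all $\rho$. Setting $W := V^\dagger U$, we have $W$ commuting with every matrix; taking $\rho = \ket{i}\bra{j}$ for all $i,j$ forces $W$ to be a scalar multiple of the identity, $W = cI$. Since $W$ is unitary, $|c|=1$, so $U = cV$ with $|c| = 1$, i.e., $U$ and $V$ agree up to a global phase. (Equivalently, one invokes Schur's lemma applied to the irreducible defining representation of $\cU(2^n)$.)

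No real obstacle is expected here; the statement is essentially a restatement of the standard fact that the channel $\rho \mapsto U\rho U^\dagger$ determines $U$ up to a phase, combined with the observation that the Paulis span the matrix algebra. The only mild subtlety is to be explicit that the phase ambiguity is genuine and unavoidable, since $U$ and $e^{i\theta}U$ induce the same conjugation action on every $\sigma_p$.
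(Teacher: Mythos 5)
Your proof is correct and follows essentially the same route as the paper: expand an arbitrary matrix in the Pauli basis, use linearity of conjugation to recover the channel $\rho \mapsto U\rho U^\dagger$, and conclude that $U$ is determined up to a global phase. The only difference is that you make explicit the step the paper leaves implicit — namely, that if $V\rho V^\dagger = U\rho U^\dagger$ for all $\rho$ then $V^\dagger U$ lies in the centre of the matrix algebra and is therefore a unimodular scalar — which is a welcome bit of added rigor rather than a different argument.
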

\begin{proof}
The Pauli matrices form a basis for all $2^n \times 2^n$ matrices so knowing the action of $U$ on the Paulis is enough to determine the action of $U$ on any matrix up to phase.  The phase cannot be determined because action by conjugation does not reveal the phase.
\end{proof}
Now let $G = \{ \sigma_{x_i}, \sigma_{z_i} \}_{i=1}^n$ where $\sigma_{x_i}$ ($\sigma_{z_i}$) is the matrix with $\sigma_x$ ($\sigma_z$) acting on qubit $i$ and trivially elsewhere.  We think of this as a set of generators for $\hat{\cP}$ since each element of $\hat{\cP}$ can be written as a product of elements of $G$, up to phase.  Using this, knowledge of how $U$ acts on elements of $G$ is sufficient to determine the action on all of $\hat{\cP}$:
\begin{lemma}
\label{lem:UConjGenerator}
$U \sigma_p U^\dagger$ for any $\sigma_p \in \hat{\cP}$ can be calculated from knowledge of $U \sigma_g U^\dagger$ for each $\sigma_g \in G$.
\end{lemma}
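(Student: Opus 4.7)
The plan is to exploit the fact that conjugation by $U$ is a group homomorphism together with the fact that every element of $\hat{\cP}$ is, up to a phase in $\{\pm 1, \pm i\}$, a product of elements of $G$. First I would fix an arbitrary $\sigma_p \in \hat{\cP}$, where $p = (p_1, \ldots, p_n)$ with $p_i \in \{0, x, y, z\}$, and write $\sigma_p = \sigma_{p_1} \otimes \cdots \otimes \sigma_{p_n}$ explicitly as a product of elements of $G$ up to a phase, using the identities $\sigma_y = i \sigma_x \sigma_z$ (applied on a single qubit), and the observation that Paulis on disjoint qubits commute and hence can be multiplied in any order. So for each site $i$ the single-qubit factor $\sigma_{p_i}$ (which is one of $I$, $\sigma_x$, $\sigma_y$, $\sigma_z$) is, up to a phase, either the empty product, $\sigma_{x_i}$, $\sigma_{z_i}$, or $\sigma_{x_i}\sigma_{z_i}$. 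Concatenating gives $\sigma_p = \omega_p \, \sigma_{g_1} \sigma_{g_2} \cdots \sigma_{g_m}$ for some phase $\omega_p \in \{\pm 1, \pm i\}$ and generators $\sigma_{g_j} \in G$, where the $\sigma_{g_j}$ and the phase $\omega_p$ are determined in $O(n)$ time by inspecting $p$.

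Next I would apply the homomorphism property of conjugation: for any unitaries $A, B$, $U(AB)U^\dagger = (UAU^\dagger)(UBU^\dagger)$. Iterating this and using that the phase $\omega_p$ commutes with everything, we obtain
\begin{equation*}
U \sigma_p U^\dagger = \omega_p \prod_{j=1}^m \l( U \sigma_{g_j} U^\dagger \r).
\end{equation*}
Each factor on the right-hand side is, by hypothesis, known. Hence $U\sigma_p U^\dagger$ can be computed from the given data by performing at most $2n$ matrix multiplications and multiplying by a known scalar phase. This completes the reduction.

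There is no real obstacle here; the lemma is essentially just the observation that a homomorphism on a group generated by $G$ is determined by its values on $G$, lifted to the projective setting where phases are tracked explicitly. The only very mild subtlety is to be careful that $\hat{\cP}$ as defined in the excerpt is the set of $4^n$ Pauli tensor products (no phases), whereas products of generators can produce factors of $i$; these are absorbed into the scalar $\omega_p$ and play no role in the conjugation action, so the statement $U\sigma_p U^\dagger \in \cC_{k-1}$-type consequences (relevant later in the chapter) are unaffected.
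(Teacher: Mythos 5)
Your proof is correct and takes essentially the same approach as the paper: write $\sigma_p$ as a phase times a product of generators in $G$, then use the homomorphism property $U(AB)U^\dagger = (UAU^\dagger)(UBU^\dagger)$ to reduce to the known data. The extra detail you give about constructing the decomposition qubit by qubit via $\sigma_y = i\sigma_x\sigma_z$ is a fine elaboration but not a different argument.
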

\begin{proof}
Let $\sigma_p = \alpha \sigma_{g_1} \ldots \sigma_{g_m}$ for $\sigma_{g_i} \in G$ where $\alpha$ is a phase.  Then
\bes
U \sigma_p U^\dagger = \alpha U \sigma_{g_1} \ldots \sigma_{g_m} U^\dagger = \alpha U \sigma_{g_1} U^\dagger \ldots U \sigma_{g_m} U^\dagger.\qedhere
\ees
\end{proof}
With these definitions and observations, we can now present the Clifford learning algorithm.
\begin{theorem}
\label{thm:CliffordLearning}
Given oracle access to an unknown Clifford operation $C$ and its conjugate $C^\dagger$, $C$ can be determined exactly (up to global phase) with $2n+1$ queries to $C$ and $2n$ to $C^\dagger$.  The algorithm runs in time $O(n^2)$.
\end{theorem}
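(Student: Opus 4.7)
The plan is to identify $C$ by learning the image of each Pauli generator under conjugation by $C$, exploiting \lemref{UConjPauli} and \lemref{UConjGenerator}: knowledge of $\{C\sigma_g C^\dagger : \sigma_g\in G\}$ fixes $C$ up to a global phase.  The basic extraction trick is to sandwich each $\sigma_g$ between $C^\dagger$ and $C$ on half of a maximally entangled state and read off the resulting Pauli by Bell measurement, in the style of \thmref{PauliLearning}.  Since Bell measurement sees a Pauli only up to an overall sign, the challenge is to recover the $2n$ missing sign bits; I would do this jointly using a single extra query to $C$ at the end.

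Concretely, for each generator $\sigma_g$, prepare $\ket\psi = 2^{-n/2}\sum_i \ket{ii}$ and act on the first half successively by $C^\dagger$ (one query), $\sigma_g$, and $C$ (one query), producing
\bes
(C\sigma_g C^\dagger \otimes I)\ket\psi = \epsilon_g (P_g \otimes I)\ket\psi,
\ees
where $P_g$ is a phase-free Pauli and $\epsilon_g\in\{\pm 1\}$ (real because $C\sigma_g C^\dagger$ is Hermitian).  A Bell measurement identifies $P_g$ deterministically, as the $\epsilon_g$ is only a global phase.  Repeating for each of the $2n$ generators costs $2n$ queries to $C$ and $2n$ to $C^\dagger$ and gives the unsigned Paulis $P_g$ but not the signs.

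Since Clifford conjugation preserves the (anti)commutation pattern of the generators, the $P_g$ obey the same pattern as the $\sigma_g$, so a Clifford $C_0$ satisfying $C_0\sigma_g C_0^\dagger = P_g$ exists and can be constructed classically in polynomial time via standard symplectic-representation techniques for the Clifford group.  Writing $D = C_0^\dagger C$, one computes
\bes
D\sigma_g D^\dagger = C_0^\dagger (C\sigma_g C^\dagger) C_0 = \epsilon_g C_0^\dagger P_g C_0 = \epsilon_g\sigma_g,
\ees
so $D$ commutes or anticommutes with every generator and must therefore itself be a Pauli (up to global phase).  One further query to $C$ suffices to learn $D$: prepare $\ket\psi$, apply $C$ and then $C_0^\dagger$ (the latter classically, no query) to the first half to obtain $(D\otimes I)\ket\psi$, and Bell-measure as in \thmref{PauliLearning}.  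Output $C = C_0 D$; the totals are $2n+1$ queries to $C$ and $2n$ to $C^\dagger$, as claimed.

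The main subtlety is the sign bookkeeping: a per-generator sign determination would double the query count, so the key idea is to absorb all $2n$ missing sign bits into a single Pauli-learning instance via the $C = C_0 D$ factorisation.  For the running time, each of the $O(n)$ Bell preparations and measurements uses $O(n)$ elementary gates, and the classical Clifford manipulations (constructing $C_0$ from the $P_g$ and composing with $D$) fit in the same budget using the symplectic representation, yielding $O(n^2)$ overall.
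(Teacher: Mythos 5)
Your proof is correct and matches the paper's approach step for step: learn the unsigned Paulis $P_g$ with $2n$ calls to the Pauli-learning primitive (using both $C$ and $C^\dagger$), classically synthesise a reference Clifford $C_0$ realising those unsigned Paulis, then recover the residual Pauli $D = C_0^\dagger C$ with one more query. Your explicit justification that $D$ must be a Pauli (because it conjugates each generator to $\pm$ itself) is a slightly cleaner framing than the paper's explicit formula $\sigma = \prod_{i:\alpha_i=-1}\sigma_{z_i}\prod_{i:\beta_i=-1}\sigma_{x_i}$, and you are also more careful to observe that $C_0$ exists because conjugation preserves the symplectic (anti)commutation data, but the algorithm and query counts are identical.
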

\begin{proof}
From the definition of the Clifford group, $C \sigma_p C^\dagger \in \cP$ for all $\sigma_p \in \hat{\cP}$.  Note that $C \sigma_p C^\dagger$ is not necessarily a Pauli operator in $\hat{\cP}$ because there is a phase of $\pm 1$ (complex phases are not allowed because $C \sigma_p C^\dagger$ is Hermitian).  Determining which Pauli operator and phase for every $\sigma_p$ would be sufficient to learn $C$ using \lemref{UConjPauli}.  But from \lemref{UConjGenerator}, we only need to know $C \sigma_g C^\dagger$ for each $\sigma_g \in G$.

Let $C \sigma_{x_i} C^\dagger = \alpha_i \sigma_{a_i}$ and $C \sigma_{z_i} C^\dagger = \beta_i \sigma_{b_i}$, where $\alpha_i, \beta_i = \pm 1$.  Knowing just $\sigma_{a_i}$ and $\sigma_{b_i}$ is enough to specify $C$ up to a Pauli correction factor $\sigma_q$ which gives the phases $\alpha_i$ and $\beta_i$.  Choosing $\sigma_q$ that anticommutes with $\sigma_{x_i}$ flips the sign of $\alpha_i$ and similarly for $\sigma_{z_i}$.  We now present the algorithm:
\begin{enumerate}
\item{Apply $C \sigma_{x_i} C^\dagger$ and $C \sigma_{z_i} C^\dagger$ for each $i$ and use \thmref{PauliLearning} to determine $\sigma_{a_i}$ and $\sigma_{b_i}$.  This uses $2n$ queries to both $C$ and $C^\dagger$.}
\item{Let $C'$ be such that $C' \sigma_{x_i} C'^\dagger = \sigma_{a_i}$ and $C' \sigma_{z_i} C'^\dagger = \sigma_{b_i}$ i.e.~the phases are all $+1$.  Then, choosing a phase for $C'$, we can write $C = C' \sigma$ where
\be
\sigma = \prod_{i : \alpha_i = -1} \sigma_{z_i} \prod_{i : \beta_i = -1} \sigma_{x_i}.
\ee
Then implement $C'^\dagger C$ to determine $\sigma$ using \thmref{PauliLearning}.  This uses one query to $C$.  We can now calculate the phases $\alpha_i$ and $\beta_i$.}
\end{enumerate}
To work out the time complexity, note that in step 1 the $O(n)$ time Pauli learning algorithm is called $2n$ times.  Then for step 2, the Clifford $C'$ can be implemented in $O(n^2)$ time using for example Theorem 10.6 of \cite{NielsenChuang}.
\end{proof}
We now show that this algorithm is optimal, in terms of number of queries, up to constant factors:
\begin{lemma}
\label{lem:CliffordLearningOptimal}
Any method of learning a Clifford gate requires at least $n$ queries.
\end{lemma}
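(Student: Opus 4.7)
The plan is to give an information-theoretic lower bound via dimension counting. The two ingredients are (a) a lower bound on the number of distinct Cliffords modulo global phase, and (b) an upper bound on the number of pairwise orthogonal output states that any $t$-query protocol can produce.

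First, I would recall (or briefly derive) that the Clifford group on $n$ qubits, modulo its centre of global phases, has order
\bes
|\cC/U(1)| \;=\; 2^{n^2+2n}\prod_{j=1}^{n}(4^j-1) \;\ge\; 2^{2n^2+2n},
\ees
where the factor $\prod_j(4^j-1)$ comes from counting the image of $\cC$ in the symplectic group $Sp(2n,\mathbb{F}_2)$ (equivalently, from counting symplectic bases of $\mathbb{F}_2^{2n}$) and the prefactor $2^{n^2+2n}$ accounts for the Pauli coset and sign choices. An algorithm that exactly identifies a Clifford up to global phase must produce mutually distinguishable outputs for each of these $\ge 2^{2n^2+2n}$ elements.

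Next, I would invoke the standard dimension bound for oracle identification of unitaries: any (possibly adaptive) protocol that makes $t$ queries to a black-box unitary $C\in\cU(2^n)$ — and any number of queries to $C^{\dagger}$, treated symmetrically — produces output states lying in a Hilbert space of effective dimension at most $d^{2t}=2^{2nt}$. The cleanest way to see this is via Choi--Jamio\l{}kowski: a single query with an $n$-qubit ancilla in a maximally entangled state yields a Choi vector in $\bbC^{d^2}$, and $t$ parallel queries yield a state in $\bbC^{d^{2t}}$. For exact identification the output states $\{|\psi_C\>\}_{C}$ must be pairwise orthogonal, so we need $d^{2t}\ge |\cC/U(1)|$, i.e.
\bes
2^{2nt}\;\ge\;2^{2n^2+2n},
\ees
which forces $t\ge n+1\ge n$.

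The main obstacle is justifying the bound $d^{2t}$ on the effective output dimension in the fully adaptive setting where the queries to $C$ and $C^{\dagger}$ are interleaved with arbitrary intermediate unitaries, measurements, and ancillas. I would handle this by appealing to the standard fact (a Choi-state / comb argument; see e.g. Chiribella--D'Ariano--Perinotti) that any $t$-query adaptive protocol can be simulated, for the purpose of distinguishing a family of unitaries, by a non-adaptive one of the form ``prepare a fixed state on $d^{2t}$ dimensions, apply $C^{\otimes t}$ (or $C^{\otimes s}\otimes(C^{\dagger})^{\otimes t-s}$) to the designated register, measure.'' Once that reduction is in hand, the counting argument above completes the proof.
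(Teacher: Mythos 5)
Your proof is correct, but it takes a genuinely different route from the paper. The paper's argument is a mutual-information bound: it cites the optimality of superdense coding to conclude that each query to $C$ (or $C^\dagger$) can raise the mutual information between the algorithm's register and the identity of $C$ by at most $2n$ bits, and then compares against $\log_2|\cC|$. Your argument is instead a dimension count on the output-state manifold: since the final state of any $t$-query protocol is a multilinear function of the matrix entries of $C$, the span $\Span\{\ket{\psi_C}\}_{C}$ has dimension at most $d^{2t}=2^{2nt}$, and exact discrimination forces the $\ket{\psi_C}$ to be pairwise orthogonal, so $2^{2nt}\ge|\cC/U(1)|$. Both are driven by the same heuristic (each use of an $n$-qubit black box contributes roughly $2n$ bits), but the technical machinery differs. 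Your route is cleaner in one respect: it handles adaptivity explicitly and yields $t\ge n+1$, a marginally stronger conclusion than stated. One small simplification you could make: the comb-reduction of Chiribella--D'Ariano--Perinotti is heavier than needed here. After purifying (deferring all measurements to the end), the final state is literally $\ket{\psi_C} = V_t(C\ot I)V_{t-1}\cdots V_1(C\ot I)V_0\ket{0}$; expanding each $C=\sum_{ij}C_{ij}\ket{i}\bra{j}$ exhibits $\ket{\psi_C}$ as a linear combination of $d^{2t}$ fixed vectors with coefficients the degree-$t$ monomials $C_{i_1 j_1}\cdots C_{i_t j_t}$, which gives the dimension bound directly without invoking any adaptive-to-parallel reduction. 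Finally, a minor bookkeeping note: your lower bound $\prod_{j=1}^n(4^j-1)\ge 2^{n^2}$ (from $4^j-1\ge 4^j/2$) is correct and in fact tighter than the paper's $\ge 2^{n^2-n}$; the numerical prefactor $2^{n^2+2n}$ versus the paper's $2^{n^2+2n+3}$ reflects whether one counts the eight central phases, and either convention gives $|\cC|\ge 2^{2n^2}$, which is all that is needed.
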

\begin{proof}
Each application of the gate $C$ can give at most $2n$ bits of mutual information about $C$.  This follows from the optimality of superdense coding \cite{SuperdenseCoding}.  The Clifford group (modulo global phase) is of size \cite{Calderbank98quantumerror} $2^{n^2+2n+3} \prod_{j=1}^n (4^j-1) \ge 2^{2n^2+n+3}$.  To identify an element with $m$ queries, we therefore need
\be
2^{2nm} \ge 2^{2n^2+n+3}
\ee
which implies $m \ge n$.
\end{proof}
It is unfortunate that access to $C^\dagger$ is also required, but we do not know a method with optimal query complexity that works without $C^\dagger$.  There are however methods that use $O(n^2)$ queries that do not use $C^\dagger$.  The result of \cite{HowManyCopiesStateDesc} can be used to show that $O(n^2)$ queries to $C$ are sufficient, by distinguishing the states $C \ot I \ket{\psi}$ for different Cliffords $C$ and where $\ket{\psi}$ is the maximally entangled state.  We can use \lemref{UniqueCloseC} to show that these states are far apart in the distance measure used in \cite{HowManyCopiesStateDesc}, allowing us to apply their result.

\subsection{Learning Gottesman-Chuang Operations}

\thmref{CliffordLearning} can easily be generalised to learning any operation from the $\cC_k$ hierarchy:
\begin{theorem}
\label{thm:CkLearning}
Given oracle access to an unknown operation $C \in \cC_k$ and its conjugate $C^\dagger$, $C$ can be determined exactly (up to phase) with $\frac{(2n)^k - 1}{2n-1}$ queries to $C$ and $(2n)^{k-1}$ to $C^\dagger$.
\end{theorem}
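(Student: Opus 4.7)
The plan is induction on $k$, with \thmref{CliffordLearning} serving as the base case $k=2$. The structural fact driving the induction is that for $C\in\cC_k$ and every generator $\sigma_g\in G=\{\sigma_{x_i},\sigma_{z_i}\}_{i=1}^n$, the Hermitian operator $U_g:=C\sigma_g C^\dagger$ lies in $\cC_{k-1}$, and the analogues of \lemref{UConjPauli} and \lemref{UConjGenerator} go through unchanged at level $k$ to say that the $2n$ operators $\{U_g\}_{g\in G}$ determine $C$ up to an overall Pauli correction.

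The algorithm in the inductive step is then the natural one. For each of the $2n$ generators, invoke the level-$(k-1)$ learning algorithm (available by inductive hypothesis) on $U_g$; every black-box query to $U_g$ or $U_g^\dagger$ required by that subroutine is realised by sandwiching $\sigma_g$ between one query to $C^\dagger$ and one to $C$. Once every $U_g$ has been identified, build a candidate $C'\in\cC_k$ implementing the same conjugations, and then---exactly as in step~2 of the proof of \thmref{CliffordLearning}---spend one further query on $C$ to learn the residual Pauli $C'^\dagger C$ by Pauli learning, which pins down all remaining signs simultaneously and recovers $C$ up to global phase.

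What remains is turning this recursion into the stated closed-form counts. Writing $Q_C(k)$ and $Q_{C^\dagger}(k)$ for the number of queries to $C$ and $C^\dagger$, the outer loop makes $2n$ calls to the level-$(k-1)$ algorithm plus one phase-fixing query to $C$, and the bookkeeping should give recurrences $Q_C(k)=2n\,Q_C(k-1)+1$ and $Q_{C^\dagger}(k)=2n\,Q_{C^\dagger}(k-1)$, which unroll from the base case $Q_C(2)=2n+1$, $Q_{C^\dagger}(2)=2n$ to the claimed $Q_C(k)=\frac{(2n)^k-1}{2n-1}$ and $Q_{C^\dagger}(k)=(2n)^{k-1}$. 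The main obstacle is precisely this accounting: a naive substitution $U_g\mapsto C\sigma_g C^\dagger$ charges both a $C$ and a $C^\dagger$ query for every recursive sub-query, which would inflate both tallies by a factor of two at every level. Matching the stated bound appears to require exploiting the Hermiticity $U_g=U_g^\dagger$ to merge the forward and adjoint recursive queries, or equivalently to arrange the recursive circuit so that each sub-query contributes to only one of the two tallies; verifying this reduction, and tracking how the Pauli phase corrections compound through the $k-1$ levels of recursion, is where the technical work sits.
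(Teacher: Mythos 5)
Your strategy is identical to the paper's: induct on $k$, learn each $U_g := C\sigma_g C^\dagger \in \cC_{k-1}$ via the level-$(k-1)$ subroutine, then pin down the residual Pauli with one extra query exactly as in step~2 of \thmref{CliffordLearning}. (The only cosmetic difference is that the paper bases the induction at $k=1$ with Pauli learning, $T(1)=1$, $T'(1)=0$, rather than at $k=2$.)

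The accounting concern you flagged is a genuine gap, and it is a gap in the paper's proof as well: the paper simply asserts the recurrences $T(k+1)=2nT(k)+1$, $T'(k+1)=2nT'(k)$ without deriving them, and as written they implicitly assume a recursive sub-query to $U_g$ costs one call to $C$ and none to $C^\dagger$, with the reverse for $U_g^\dagger$. That is not what the substitution gives: every black-box invocation of $U_g = C\sigma_g C^\dagger$, and likewise of $U_g^\dagger = U_g$, consumes one call to $C$ \emph{and} one to $C^\dagger$. Honest bookkeeping therefore gives $T(k+1) = 2n\bigl(T(k)+T'(k)\bigr)+1$ and $T'(k+1)=2n\bigl(T(k)+T'(k)\bigr)$, so $T(k)+T'(k) = \frac{(4n)^k-1}{4n-1}$, larger than the claimed count by roughly a factor $2^{k-2}$. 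The two recurrences coincide at $k=2$ only because $T'(1)=0$, which is why the discrepancy is invisible in the Clifford case. You are also right that Hermiticity alone cannot rescue the count: $U_g = U_g^\dagger$ merely lets you replace adjoint sub-queries by direct ones, but each still costs one $C$ plus one $C^\dagger$. Absent a smarter circuit that merges consecutive oracle calls across recursion levels---which neither your proposal nor the paper supplies---the recursive algorithm attains only a $\Theta\bigl((4n)^{k-1}\bigr)$-type bound. The qualitative conclusion (polynomially many queries for fixed $k$) survives, but the specific closed forms $\frac{(2n)^k-1}{2n-1}$ and $(2n)^{k-1}$ do not appear to be justified by the argument as given.
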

\begin{proof}
The proof is by induction.  The base case is for the Paulis and is proven in \thmref{PauliLearning}.  Then, to learn $C \in \cC_{k+1}$, we assume we have a learning algorithm for members of $\cC_{k}$.  Apply $C \sigma_g C^\dagger$ for each $\sigma_g \in G$.  These operations are elements of $\cC_{k}$ so use the learning algorithm for $\cC_{k}$ to determine these up to phase.  Then use the last step of \thmref{CliffordLearning} to determine the phases.

We now determine the number of queries to $C$ and $C^\dagger$.  Let $T(k)$ be the number of queries to $C$ and $T'(k)$ the number of queries to $C^\dagger$.  We have the recurrences
\ba
T(k+1) &= 2n T(k) + 1 \nonumber \\
T(1) &= 1
\ea
and
\ba
T'(k+1) &= 2n T'(k) \nonumber \\
T'(2) &= 2n
\ea
which have solutions $T(k) = \frac{(2n)^k - 1}{2n-1}$ and $T'(k) = (2n)^{k-1}$ (with $T'(1) = 0$).
\end{proof}

\section{Learning Unitaries Close to \texorpdfstring{$\cC_k$}{C\_k} Elements}
\label{sec:ApproxLearning}

Here we suppose that we are given a unitary that is known to be close to an element of $\cC_k$ for some given $k$.  We present a method for finding this element.  But first we must define our distance measure.

We would like our distance measure to not distinguish between unitaries that differ by just an unobservable global phase.  We define a `distance' $D$ below with this property.  However, firstly define the distance $D^+$ to be a normalised 2-norm distance:
\begin{definition}
For $U_1$ and $U_2$ $d \times d$ matrices,
\bes
D^+(U_1, U_2) := \frac{1}{\sqrt{2d}} || U_1 - U_2 ||_2.
\ees
where $||A||_2 = \sqrt{\tr A^\dagger A}$.
\end{definition}
We have chosen the normalisation so that $0 \le D^+(U_1, U_2) \le 1$.  We now define our phase invariant `distance':
\begin{definition}
\label{def:DistanceMeasure}
For $U_1$ and $U_2$ $d \times d$ matrices,
\bes
D(U_1, U_2) := \frac{1}{\sqrt{2d^2}} || U_1 \ot U_1^* - U_2 \ot U_2^* ||_2
\ees
\end{definition}
This is not a true distance since $D(U_1, U_2) = 0$ does not imply $U_1 = U_2$, but that $U_1$ and $U_2$ are the same up to a phase so the difference is unobservable.  From the 2-norm definition, we can show:
\begin{lemma}
\label{lem:DistFromInnerProd}
\be
D^+(U_1, U_2) = \sqrt{1 - \Re \frac{\tr U_1 U_2^\dagger}{d}}
\ee
and
\be
D(U_1, U_2) = \sqrt{1 - \l| \frac{\tr U_1 U_2^\dagger}{d} \r|^2}.
\ee
\end{lemma}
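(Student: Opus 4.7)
The plan is to reduce both identities to a direct computation via the definition $\|A\|_2^2 = \tr(A^\dagger A)$, exploiting the unitarity of $U_1, U_2$ and, for the second identity, the multiplicativity of trace on tensor products.

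For the first identity, I would expand
\bes
\|U_1 - U_2\|_2^2 = \tr(U_1^\dagger U_1) - \tr(U_1^\dagger U_2) - \tr(U_2^\dagger U_1) + \tr(U_2^\dagger U_2).
\ees
Unitarity gives $\tr(U_i^\dagger U_i) = d$, and combining the two cross terms using $\tr(U_2^\dagger U_1) = \overline{\tr(U_1^\dagger U_2)}$ together with the cyclicity identity $\tr(U_1^\dagger U_2) = \tr(U_2 U_1^\dagger)$ (whose real part coincides with $\Re \tr(U_1 U_2^\dagger)$) yields $\|U_1 - U_2\|_2^2 = 2d - 2\Re \tr(U_1 U_2^\dagger) = 2d\bigl(1 - \Re \tr(U_1 U_2^\dagger)/d\bigr)$. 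Dividing by $2d$ and taking the square root gives the claimed expression for $D^+$.

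For the second identity, the same expansion applied to $U_1 \otimes U_1^* - U_2 \otimes U_2^*$ produces four terms. The two ``diagonal'' terms each evaluate to $\tr(U_i^\dagger U_i)\tr(U_i^T U_i^*) = d^2$ using $\tr((A \ot B)(C \ot D)) = \tr(AC)\tr(BD)$. The cross terms equal $\tr(U_1^\dagger U_2)\tr(U_1^T U_2^*)$ and its complex conjugate; since $\tr(U_1^T U_2^*) = \overline{\tr(U_1^\dagger U_2)}$, each cross term equals $|\tr(U_1^\dagger U_2)|^2$, which in turn equals $|\tr(U_1 U_2^\dagger)|^2$ by cyclicity. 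Collecting everything gives $\|U_1 \ot U_1^* - U_2 \ot U_2^*\|_2^2 = 2d^2 - 2|\tr(U_1 U_2^\dagger)|^2$, and dividing by $2d^2$ and taking the square root delivers the formula for $D$.

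There is no real obstacle: the only points to handle with care are the conjugation identity $\overline{\tr A} = \tr A^*$ and the trace rule for tensor products, both of which are standard. The calculation also confirms phase invariance of $D$ (since replacing $U_2$ by $e^{i\theta}U_2$ leaves $|\tr(U_1 U_2^\dagger)|$ unchanged), matching the motivation given in Definition~\ref{def:DistanceMeasure}.
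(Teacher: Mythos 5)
Your proof is correct and takes the only reasonable route: expanding $\|A\|_2^2 = \tr(A^\dagger A)$ and using unitarity, cyclicity, and the multiplicativity of trace over tensor products. The paper does not spell out a proof — it states the lemma with the remark ``from the 2-norm definition, we can show'' — so your direct computation is precisely the argument the paper leaves to the reader. All the small identities you invoke ($\tr(U_1^T U_2^*) = \overline{\tr(U_1^\dagger U_2)}$ via $(U_1^\dagger)^* = U_1^T$, and $|\tr(U_1^\dagger U_2)| = |\tr(U_1 U_2^\dagger)|$ via cyclicity and conjugation) check out.
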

From this we can easily see that $0 \le D(U_1, U_2) \le 1$ with equality if and only if $U_1$ and $U_2$ are orthogonal.  Further note that by the unitary invariance of the 2-norm, both $D$ and $D^+$ are unitarily invariant and from the triangle inequality for the 2-norm they both obey the triangle inequality.

Our approximate learning method will find the unique closest element of $\cC_k$ to $U$.  In order to guarantee uniqueness, the distance must be upper bounded:
\begin{lemma}
\label{lem:UniqueCloseC}
If $D(U, C) < \frac{1}{2^{k-1/2}}$ for some $C \in \cC_k$ then $C$ is unique up to phase.
\end{lemma}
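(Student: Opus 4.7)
The plan is to split the argument by $k$. For $k = 1$, write any $C_i = \alpha_i \sigma_{p_i} \in \cC_1$ and use \lemref{DistFromInnerProd} to get $D(U, C_i)^2 = 1 - |\gamma_U(p_i)|^2$ with $\gamma_U(p) = \tr(U\sigma_p)/d$, independent of the phase $\alpha_i$. The assumption $D(U, C_i) < 2^{-1/2}$ forces $|\gamma_U(p_i)|^2 > 1/2$, and Parseval's identity $\sum_p |\gamma_U(p)|^2 = \|U\|_2^2/d = 1$ permits at most one such index, so $p_1 = p_2$ and $C_1$, $C_2$ differ only by a phase. The triangle inequality alone is insufficient here, since $D \le 1$ but the triangle bound $D(C_1, C_2) < 2^{1/2}$ is vacuous.

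For $k \ge 2$, the plan is to combine the triangle inequality for $D$ (immediate from its $2$-norm definition) with a minimum-distance estimate inside $\cC_k$: any two distinct-mod-phase elements of $\cC_k$ satisfy $D \ge 2^{-(k-3/2)}$. Granted this, the hypothesis gives $D(C_1, C_2) < 2 \cdot 2^{-(k-1/2)} = 2^{-(k-3/2)}$, forcing $C_1 \sim C_2$. I would prove the minimum-distance claim itself by induction on $k \ge 2$, the key algebraic tool being the decomposition $W := C_1^\dagger C_2 = wI + W_\perp$, where $w = \tr W / d$, $\tr W_\perp = 0$, and $\|W_\perp\|_2^2/d = 1 - |w|^2 = D(C_1, C_2)^2$. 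A short expansion (the cross terms vanish since $\tr W_\perp = 0 = \tr(\sigma_g W_\perp \sigma_g)$) yields, for every Pauli $\sigma_g$,
\[
\tr(\sigma_g W \sigma_g W^\dagger)/d = |w|^2 + \tr(\sigma_g W_\perp \sigma_g W_\perp^\dagger)/d,
\]
where Cauchy--Schwarz bounds the second term in modulus by $\|W_\perp\|_2^2/d = 1 - |w|^2$.

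For the base case $k = 2$, $W \in \cC$ so $W \sigma_g W^\dagger = \pm \sigma_{g'}$ and $\tr(\sigma_g W \sigma_g W^\dagger)/d \in \{0, \pm 1\}$; the assumption $|w|^2 > 1/2$ makes this trace strictly positive, so it equals $+1$ for every generator $\sigma_g \in G$, whence $W \sigma_g W^\dagger = \sigma_g$ for all generators and $W \propto I$. For the inductive step $k \to k+1$, apply the identity with $A_i := C_i \sigma_g C_i^\dagger \in \cC_k$ (Hermitian, and $\tr(A_1 A_2^\dagger) = \tr(\sigma_g W \sigma_g W^\dagger)$) to obtain $\tr(A_1 A_2)/d \ge 2|w|^2 - 1$, and hence
\[
D(A_1, A_2)^2 \le 1 - (2|w|^2 - 1)^2 = 4 |w|^2 (1 - |w|^2) \le 4 D(C_1, C_2)^2,
\]
i.e.\ $D(A_1, A_2) \le 2 D(C_1, C_2) < 2^{-(k-3/2)}$. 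The inductive hypothesis then forces $A_1 = \pm A_2$ (phase $\pm 1$ because both are Hermitian), so $V := C_2^\dagger C_1$ commutes or anticommutes with each generator. Matching the sign pattern to a Pauli $\sigma_p$ makes $V \sigma_p^\dagger$ commute with every Pauli, so $V = \alpha \sigma_p$; the trivial bound $D(C_1, C_2) < 1$ rules out $\sigma_p \ne I$ (else $D(C_1, C_2) = 1$), giving $V = \alpha I$ and $C_1 \sim C_2$.

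The main obstacle I expect is the sharp factor-of-$2$ estimate $D(A_1, A_2) \le 2 D(C_1, C_2)$. Naive routes---bounding $\|A_1 - A_2\|_2 \le 2 \|C_1 - C_2\|_2$ and converting back through $D^+$, or invoking an operator-norm triangle inequality on $A_1 \otimes A_1^* - A_2 \otimes A_2^*$---all lose an extra factor of $\sqrt{2}$ and yield only $D(A_1, A_2) \le 4 D(C_1, C_2)$, which does not close the induction (the halving of the minimum-distance threshold from level to level leaves no slack). Extracting the correct constant appears to require the direct trace computation above via the $wI + W_\perp$ decomposition rather than any off-the-shelf norm inequality.
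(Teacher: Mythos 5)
Your proof is correct, and it takes a genuinely different route from the paper's. The paper runs the induction directly on the uniqueness statement: it keeps the unknown unitary $U$ in play throughout, conjugates by each generator, and invokes the inductive hypothesis (at level $k$, with radius $2\eps$) on the pair $C_1\sigma_g C_1^\dagger$, $C_2\sigma_g C_2^\dagger$ sitting near $U\sigma_g U^\dagger$. You instead factor the problem into (a) a standalone \emph{minimum-distance} estimate inside $\cC_k$ --- that distinct-mod-phase elements satisfy $D \ge 2^{-(k-3/2)}$ --- and (b) the triangle inequality for $D$. This is a cleaner conceptual decomposition: it makes explicit that $\cC_k$ modulo phase is a ``code'' with a definite packing radius, and uniqueness within the covering radius follows formally. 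The $k=1$ Parseval argument, the Cauchy--Schwarz/trace step, and the final ``commutation/anticommutation pattern forces $V$ to be a Pauli, then $D(I,\sigma_p)=1$'' argument all mirror the paper's ingredients closely.

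One remark about your closing worry: the ``sharp factor-of-2'' conjugation bound is not actually an obstacle. The paper's Lemma \ref{lem:CloseHaveClosePauli} already achieves $D(U_1\sigma_p U_1^\dagger, U_2\sigma_p U_2^\dagger) \le 2\,D(U_1,U_2)$ with no loss, by a short triangle-inequality argument entirely in the $D$-metric: writing $U_1 = V U_2$ and $U_{2p}=U_2\sigma_p U_2^\dagger$, unitary invariance gives $D(VU_{2p}V^\dagger, U_{2p}) = D(VU_{2p}, U_{2p}V) \le D(VU_{2p},U_{2p}) + D(U_{2p},U_{2p}V) = 2D(V,I)$. The $\sqrt{2}$ leakage you observed comes from detouring through $D^+$ or through operator-norm estimates and then converting back; staying in $D$ avoids it. Your $W = wI + W_\perp$ decomposition does give a slightly sharper inequality $D(A_1,A_2) \le 2|w|\,D(C_1,C_2)$, but the extra factor $|w|\le 1$ isn't needed to close the induction, and the off-the-shelf lemma would have sufficed.
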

The proof is in \secref{ProofLemUniqueCloseC}.
\begin{theorem}
\label{thm:ApproxLearning}
Given oracle access to $U$ and $U^\dagger$ and $k$ such that $D(U, C) \le \eps$ for some $C \in \cC_k$ with
\be
\eps' := \sqrt{2(1-(2^{k-1} \eps)^2)} - 1 > 0
\ee
then $C$ can be determined with probability at least $1-\delta$ with \bes O\l(\frac{1}{\eps'^2} (2n)^{k-1} \log \frac{(2n+1)^{k-1}}{\delta}\r) \ees queries.
\end{theorem}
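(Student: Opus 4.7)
The plan is to execute the recursive structure of \thmref{CkLearning} essentially verbatim, but substituting a repetition-boosted approximate Pauli identifier for every exact one and tracking how the distance $D$ propagates down the recursion. I will induct on $k$.

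For the base case, suppose $D(U,P) \le \eta$ for some Pauli $P$ with $\eta < 1/\sqrt{2}$. Running the Bell-basis procedure of \thmref{PauliLearning} once outputs Pauli $Q$ with probability $|\tr(\sigma_Q^\dagger U)/2^n|^2$, which by \lemref{DistFromInnerProd} is at least $1-\eta^2$ for $Q=P$ and, since the outcome probabilities sum to one, at most $\eta^2$ for every $Q\neq P$. Taking the majority of $m$ independent trials and applying the Chernoff bound to the margin $1-2\eta^2$ then identifies $P$ with failure probability at most $\delta'$ for $m = O(\log(1/\delta')/(1-2\eta^2)^2)$.

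For the inductive step, I will use the error-propagation inequality
\[
D(U\sigma U^\dagger, C\sigma C^\dagger) \le 2\,D(U,C),
\]
valid for any Pauli $\sigma$ and unitaries $U, C$. This follows from one application of the triangle inequality, combined with two-sided unitary invariance of $D$ and the identity $D(U^\dagger, C^\dagger) = D(U,C)$, all immediate from \defref{DistanceMeasure} and \lemref{DistFromInnerProd}. Given $D(U,C) \le \eps$ with $C\in\cC_k$, I would apply the level-$(k-1)$ algorithm to each of the $2n$ oracles $U\sigma_g U^\dagger$ (each such query uses one call to $U$ and one to $U^\dagger$) to recover $C\sigma_g C^\dagger \in \cC_{k-1}$ up to phase, and then determine the residual Pauli correction exactly as in the final step of \thmref{CliffordLearning}: form the canonical candidate $C'$ and run one more approximate Pauli identification on $C'^\dagger U$, which by left-unitary invariance of $D$ is within $\eps$ of a single Pauli.

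Iterating the error-propagation bound shows that the effective distance reaching any base-case Pauli call is at most $2^{k-1}\eps$; the associated Chernoff margin is then $1-2(2^{k-1}\eps)^2 = (1+\eps')^2 - 1 \ge 2\eps'$ by the definition of $\eps'$, so each Pauli subroutine needs $O(\eps'^{-2}\log(1/\delta'))$ repetitions and, crucially, the margin does not degrade down the recursion. The total number of Pauli subroutines across the whole recursion matches \thmref{CkLearning} and equals $T(k) = \frac{(2n)^k-1}{2n-1} = O((2n)^{k-1})$; allocating failure budget $\delta/T(k)$ to each and taking a union bound yields the advertised query cost. The main obstacle I anticipate is precisely this calibration between the error-doubling under Pauli conjugation and the Chernoff margin at the base of the recursion, which is what the particular algebraic form of $\eps'$ is designed to absorb; once that is verified, the rest is routine bookkeeping of the recursive query counts.
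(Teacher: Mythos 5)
Your proposal is correct and follows essentially the same route as the paper's proof: induction via \lemref{CloseHaveClosePauli} for error-doubling under conjugation, a Chernoff-boosted majority-vote Pauli learner at the base (which the paper cites as Proposition 21 of \cite{QBF} and you rederive inline from \lemref{DistFromInnerProd}), and a union bound over the $O((2n)^{k-1})$ Pauli subroutines. Your observation that the doubling accumulates to $2^{k-1}\eps$ and that $1-2(2^{k-1}\eps)^2 \ge 2\eps'$ is exactly what the paper packages into the definition of $\eps'$.
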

\begin{proof}
By \lemref{UniqueCloseC}, $C$ is unique up to phase.  We now prove the Theorem by induction.

For $k=1$, use Proposition 21 of \cite{QBF} to learn the closest Pauli operator.  This works by repeating the Pauli learning method \thmref{PauliLearning} and taking the majority vote.  This uses $O\l(\frac{1}{\eps'^2} \log \frac{1}{\delta}\r)$ queries to succeed with probability at least $1-\delta$.

Now for the inductive step.  Assume we have a learning algorithm for level $k$.  Then for $C \in \cC_{k+1}$, let $C \sigma_{g_i} C^\dagger = C_{g_i}$ for $\sigma_{g_i} \in G$.  By \lemref{CloseHaveClosePauli}, we have $D(U \sigma_{g_i} U^\dagger, C_{g_i}) \le 2\eps$.  Use the learning algorithm for level $k$ to determine $C_{g_i}$ up to phase for all $i$.  Then to find the phases we use the same method as before: implement any $C'$ with $C' \sigma_{g_i} C'^\dagger = \pm C_{g_i}$ for any (known) choice of phase.  Then $C'= C \sigma_q$ for some Pauli operator $\sigma_q$.  We can determine $\sigma_q$ by implementing $C'^\dagger U$ and using the $k=1$ learning algorithm since
\ba
D(C'^\dagger U, \sigma_q) &= D(U, C' \sigma_q) \nonumber\\
&= D(U, C) \le \eps.
\ea
Now we calculate the success probabilities and number of queries.  There are $2n+1$ calls to the algorithm at lower levels, which all succeed with probability at least $1-\delta$.  So at this level the success probability is at least $1-(2n+1)\delta$.  So to succeed with probability at least $1-\delta$ we must replace $\delta$ with $\delta/(2n+1)$.  Then the overall number of queries is
\be
2n \cdot O\l(\frac{1}{\eps'^2} (2n)^{k-1} \log \frac{(2n+1)^{k}}{\delta}\r) + 1 = O\l(\frac{1}{\eps'^2} (2n)^{k} \log \frac{(2n+1)^{k}}{\delta}\r).\qedhere
\ee
\end{proof}
We remark that there is only $O(k \log n)$ overhead (for constant $\eps'$ and $\delta$) over the exact learning algorithm of \thmref{CkLearning}.

\section{Clifford Testing}
\label{sec:CliffordTesting}

Here we present an efficient algorithm to determine whether an unknown unitary operation is close to a Clifford or far from every Clifford.  Whereas the previous results allow us to find the Clifford operator close to the given black box unitary, in this section we are concerned with determining how far the given unitary is from any Clifford.  We do not measure this directly, but provide an algorithm of low query complexity that decides if the given unitary is close to a Clifford or far from all.  This type of algorithm is known in computer science as a property testing algorithm and has many applications, including the theory of probabilistically checkable proofs \cite{PCPTheorem}.  The result in this section could be extended to work for any level of the Gottesman-Chuang hierarchy although for simplicity we only present the version for Cliffords.

The key ingredient to our method will be a way of estimating the Pauli coefficients:
\begin{lemma}[Lemma 23 of \cite{QBF}]
\label{lem:MeasurePauliCoeff}
For any $p \in \{I, x, y, z\}^n$ and unitary $U$, the Pauli coefficients $|\gamma(p)| = \frac{1}{2^n} \l| \tr U \sigma_p \r|$ can be estimated to within $\pm \eta$ with probability $1-\delta$ using $O\l(\frac{1}{\eta^2} \log \frac{1}{\delta}\r)$ queries.
\end{lemma}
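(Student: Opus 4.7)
The plan is to estimate $|\gamma(p)| = \frac{1}{2^n}|\tr(U\sigma_p)|$ via the Choi--Jamio\l{}kowski correspondence together with the Hadamard test. First I would observe that on the maximally entangled state $|\Phi\rangle = 2^{-n/2}\sum_i |i\rangle|i\rangle$ one has the identity
$$\langle\Phi|(U\sigma_p\otimes I)|\Phi\rangle = \frac{1}{2^n}\tr(U\sigma_p),$$
exactly as in the computation used to prove \thmref{PauliLearning}. Thus estimating $|\gamma(p)|$ reduces to estimating the modulus of a complex inner product between two pure states, one of which ($|\Phi\rangle$) we can prepare for free, and the other of which costs a single query to $U$ to prepare (the Pauli $\sigma_p$ is a known gate and costs nothing).

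Next I would extract the real and imaginary parts of this inner product separately using the standard Hadamard test. Introduce an ancilla qubit in the state $|+\rangle$, apply a controlled-$(U\sigma_p\otimes I)$ conditioned on the ancilla, and measure the ancilla in the $X$-basis: the probability of obtaining $|+\rangle$ equals $\tfrac12\bigl(1+\Re\tfrac{\tr(U\sigma_p)}{2^n}\bigr)$. Swapping the final measurement to the $Y$-basis yields the imaginary part analogously. Each trial consumes exactly one (controlled) query to $U$ and yields a Bernoulli random variable bounded in $[0,1]$.

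Then I would apply Hoeffding's inequality: with $m = O(\eta^{-2}\log\delta^{-1})$ trials for each of the two circuits, the empirical means $\widehat R$, $\widehat I$ satisfy $|\widehat R - \Re(\tr(U\sigma_p)/2^n)| \le \eta/\sqrt{2}$ and likewise for $\widehat I$, each with failure probability $\le \delta/2$. Taking a union bound, the complex estimate $\hat c := \widehat R + i\widehat I$ satisfies $|\hat c - \tr(U\sigma_p)/2^n|\le \eta$ with probability at least $1-\delta$. The reverse triangle inequality then gives $\bigl||\hat c|-|\gamma(p)|\bigr|\le \eta$, so reporting $|\hat c|$ is the required estimator.

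The main technical point is not a real obstacle but a modelling remark: implementing the Hadamard test requires a coherently controlled version of $U$, which is the standard convention for quantum oracle access (and is implicit throughout this chapter). It is worth noting that a naive alternative---using a swap test to estimate $|\langle\Phi|(U\sigma_p\otimes I)|\Phi\rangle|^2 = |\gamma(p)|^2$ to additive error $\eta$, then taking a square root---would cost $O(\eta^{-4}\log\delta^{-1})$ queries (since $|\sqrt a - \sqrt b|$ is controlled only by $\sqrt{|a-b|}$), which is quadratically worse than the claim. The Hadamard test avoids this loss by estimating $\tr(U\sigma_p)/2^n$ \emph{linearly} rather than its square, which is precisely why the $\eta^{-2}$ scaling is achievable.
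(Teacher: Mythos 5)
Your Hadamard-test argument is correct as far as it goes, but it takes a genuinely different route from the one the paper sketches (and from Lemma 23 of \cite{QBF}), and the comparison you draw against that method in your last paragraph is wrong in one important place.

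The intended method is the direct generalisation of \thmref{PauliLearning} announced in the sentence preceding the lemma: prepare $(U\ot I)\ket{\psi}$ with $\ket{\psi}=2^{-n/2}\sum_i\ket{ii}$, measure in the Bell basis $\{(\sigma_q\ot I)\ket{\psi}\}_q$, and note that outcome $q$ occurs with probability $|\gamma(q)|^2$. Repeating $m$ times gives a Bernoulli sample of parameter $q:=|\gamma(p)|^2$, and one reports $\sqrt{\hat q}$. Your closing remark asserts that this costs $O(\eta^{-4}\log\delta^{-1})$ because $|\sqrt a - \sqrt b|$ is controlled only by $\sqrt{|a-b|}$. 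That conclusion follows only from the worst-case Hoeffding bound, which is the wrong tool here: the Bernoulli variance $q(1-q)$ vanishes exactly where the derivative of $\sqrt{\cdot}$ blows up, and the two effects cancel. By the delta method the standard deviation of $\sqrt{\hat q}$ is $\approx\sqrt{(1-q)/4m}$, bounded uniformly in $q$; non-asymptotically a Bernstein or multiplicative Chernoff bound gives $m=O(\eta^{-2}\log\delta^{-1})$ in both regimes (when $|\gamma(p)|\ge\eta$, aim for $|\hat q - q|\le |\gamma(p)|\eta$ and use $\mathrm{Var}(\hat q)\le q/m$; when $|\gamma(p)|<\eta$, it suffices to show $\hat q\le 4\eta^2$, and the mean and variance of $\hat q$ are both $O(\eta^2)$ or smaller). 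So the Bell-measurement estimator also achieves the claimed $O(\eta^{-2}\log\delta^{-1})$ and is not quadratically worse.

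Where your approach and the paper's genuinely differ is the oracle model. The Hadamard test requires controlled-$U$, which is strictly stronger than the bare $U$ and $U^\dagger$ access assumed throughout this chapter (see the statements of Theorems \ref{thm:CliffordLearning}, \ref{thm:ApproxLearning} and \ref{thm:CliffordTesting}); it is not ``implicit'' here, and the Bell-measurement estimator does not need it. In exchange, your route returns the full complex number $\tr(U\sigma_p)/2^n$ rather than only its modulus, which is more than the lemma promises and is not used anywhere downstream. So your method is a valid but stronger-assumption alternative, and the perceived $\eta^{-4}$ penalty of the paper's method is an artefact of using Hoeffding where a variance-aware bound is what is actually needed.
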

This is a generalisation of \thmref{PauliLearning} and the method is similar.  Instead of there being only one possible outcome, now the probability of obtaining the outcome corresponding to $\sigma_p$ is estimated.  This probability is equal to $|\gamma(p)|^2$.
\begin{theorem}
\label{thm:CliffordTesting}
Given oracle access to $U$ and $U^\dagger$ with the promise that for $0 < \eps < 1$ either
\begin{enumerate}
\item[a)]{CLOSE: there exists $C \in \cC$ such that $D(U, C) \le \frac{\eps}{\sqrt{32}n}$ or}
\item[b)]{FAR: for all $C \in \cC$, $D(U,C) > \eps$ and there exists $C \in \cC$ such that $D(U, C) \le 1/3$}
\end{enumerate}
holds then there is a $O\l(\frac{n^3}{\eps^2} \log \frac{n}{\delta}\r)$ algorithm that determines which with probability at least $1-\delta$.
\end{theorem}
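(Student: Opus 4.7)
The approach is to reduce Clifford testing to running a Pauli test on each of the $2n$ conjugated generators $V_g := U\sigma_g U^\dagger$, for $\sigma_g\in G:=\{\sigma_{x_i},\sigma_{z_i}\}_{i=1}^n$. Each $V_g$ costs one query to $U$ and one to $U^\dagger$, and the per-generator Pauli test uses the Bell-measurement primitive underlying \lemref{MeasurePauliCoeff}: apply $V_g\ot I$ to the maximally entangled state and measure in the Bell basis, obtaining outcome $p$ with probability $|\gamma_g(p)|^2$, where $V_g = \sum_p \gamma_g(p)\sigma_p$. The algorithm declares CLOSE iff all $2n$ per-generator tests accept. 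The intuition behind the reduction is that $U$ is a Clifford exactly when every $V_g$ is a signed Pauli, so by a continuity/rigidity argument $U$ is close to a Clifford precisely when every $V_g$ is close to some Pauli.

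Completeness follows from unitary invariance of $D$ and the triangle inequality: $D(V_g, C\sigma_g C^\dagger) \leq 2D(U,C) \leq \eps/(\sqrt{8}\,n)$, so $|\gamma_g(p_g)|^2 \geq 1-\eps^2/(8n^2)$ for the designated Pauli $\pm\sigma_{p_g} := C\sigma_g C^\dagger$, which a Hoeffding-type concentration detects with per-test failure probability at most $\delta/(2n)$. Soundness rests on the converse technical lemma: if $D(V_g,\sigma_{p_g}) \leq \eta$ for every $g\in G$ and some assignment of Paulis $\{\sigma_{p_g}\}$, then (i)~the $\{\sigma_{p_g}\}$ satisfy the same commutation relations as $\{\sigma_g\}$ and hence determine a Clifford $C$ with $C\sigma_g C^\dagger = \pm\sigma_{p_g}$, and (ii)~$D(U,C) = O(n\eta)$. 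The FAR-case promise $D(U,C^*)\leq 1/3$ is used in (i)~to ensure each $V_g$ has a unique closest Pauli, after which exact commutation of the $\sigma_{p_g}$'s is forced by the rigidity of the Pauli commutation structure together with the identity $[V_{g_1},V_{g_2}] = U[\sigma_{g_1},\sigma_{g_2}]U^\dagger$.

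Part (ii), the main obstacle, I would prove via Fourier analysis on the Pauli group. Writing $W := C^\dagger U = \sum_p w_p \sigma_p$ (normalised so $\sum_p|w_p|^2 = 1$), a direct calculation in the Pauli basis yields
\[
\tfrac{1}{2^n}\tr\!\left(\sigma_g W\sigma_g W^\dagger\right) \;=\; 1 - 2\!\!\sum_{p \in S_g^-}|w_p|^2,
\]
where $S_g^-$ denotes the set of Paulis anticommuting with $\sigma_g$. Combining with $D(W\sigma_g W^\dagger,\sigma_g) = D(V_g,C\sigma_g C^\dagger) \leq \eta$ yields $\sum_{p\in S_g^-}|w_p|^2 = O(\eta^2)$ for each $g$. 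Since the generators $G$ separate the non-identity Paulis (each $\sigma_p\neq I$ anticommutes with at least one $\sigma_g$), aggregating the $2n$ per-generator bounds controls $\sum_{p\neq I}|w_p|^2$ in terms of $\eta$ and $n$; combining with \lemref{DistFromInnerProd} and $w_I = \tr(C^\dagger U)/2^n$ then delivers the closeness bound on $D(U,C)$.

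Taking the contrapositive, in the FAR regime some generator must satisfy $D(V_g,\sigma_p) > \Omega(\eps/n)$ for every Pauli $\sigma_p$, i.e.~$\max_p|\gamma_g(p)|^2 < 1-\Omega(\eps^2/n^2)$. Comparing with the completeness bound $\max_p|\gamma_g(p)|^2 \geq 1-\eps^2/(8n^2)$ gives a gap of $\Omega(\eps^2/n^2)$, which can be detected with $O((n^2/\eps^2)\log(n/\delta))$ Bell samples per generator via a consistency-check scheme (take samples in rounds, accept a round iff all outcomes match, and take majority over $O(\log(n/\delta))$ rounds; the per-round pass probability in the two cases differs by a constant, by a Chernoff argument). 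Multiplying by the $2n$ generators and union-bounding over sub-tests yields the overall query count $O(n^3/\eps^2 \cdot \log(n/\delta))$ and success probability at least $1-\delta$.
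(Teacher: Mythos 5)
Your proposal is correct in outline but takes a genuinely different route from the paper. The paper's algorithm is a two-stage procedure: first it invokes the approximate learning algorithm of \thmref{ApproxLearning} to identify the unique closest Clifford $C$ (this is where the FAR-side hypothesis $D(U,C)\le 1/3$, together with \lemref{UniqueCloseC}, is actually needed), and then for each generator $\sigma_g\in G$ it Bell-samples $U\sigma_g U^\dagger$ and estimates the \emph{specific} coefficient $|\tr(U\sigma_g U^\dagger\,C\sigma_g C^\dagger)|/2^n$, accepting iff all $2n$ such coefficients exceed $1-3\eps^2/(16n^2)$. Soundness is then a direct chaining of the contrapositives of \lemref{CloseGenAreClosePauli} and \lemref{ClosePauliAreClose}, which together give exactly the $O(n\eta)$ bound you stated. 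You instead skip the learning step entirely, run an ``anonymous'' Pauli test on each $V_g$, and recover the implicit Clifford a posteriori via a rigidity argument (passing all tests forces the tentative Paulis to reproduce the commutation pattern of the $\sigma_g$'s, hence extend to a Clifford) followed by the Pauli-group Fourier calculation $\frac{1}{2^n}\tr(\sigma_g W\sigma_g W^\dagger)=1-2\sum_{p\in S_g^-}|w_p|^2$, aggregated over $g$. That aggregation is exactly the content of the paper's \lemref{ClosePauliAreClose} in a different guise, but your route in fact yields the slightly stronger soundness bound $D(U,C')=O(\sqrt{n}\,\eta)$ rather than $O(n\eta)$, because each non-identity $\sigma_p$ is covered at least once by $\bigcup_g S_g^-$ so $\sum_{p\ne I}|w_p|^2\le\sum_g s_g=O(n\eta^2)$.

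What each approach buys: yours avoids the front-end invocation of \thmref{ApproxLearning} and, as a consequence, would not actually need the FAR-side promise $D(U,C)\le 1/3$ at all (the test thresholds $\eta=\Theta(\eps/n)\ll 1/(2\sqrt{2})$ already give uniqueness of the closest Pauli to each $V_g$ whenever the tests pass) — a mild generalization. The cost is that you must spell out the rigidity step quantitatively; the commutator-norm inequality $\|[\sigma_{p_{g_1}},\sigma_{p_{g_2}}]-[V_{g_1},V_{g_2}]\|_2\le 2\|\sigma_{p_{g_1}}-V_{g_1}\|_2+2\|\sigma_{p_{g_2}}-V_{g_2}\|_2$ together with $[V_{g_1},V_{g_2}]=U[\sigma_{g_1},\sigma_{g_2}]U^\dagger$ does this, and yields the explicit threshold $\eta<1/(2\sqrt{2})$ below which the commutation pattern is forced, but you left this implicit. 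Similarly, the sample-complexity argument in both treatments needs the refinement that the probabilities being estimated are within $O(\eps^2/n^2)$ of $1$ (so a relative Chernoff bound gives $O((n^2/\eps^2)\log(n/\delta))$ samples per generator rather than the $O(n^4/\eps^4)$ that a blunt application of \lemref{MeasurePauliCoeff} would suggest); your ``consistency-check round'' scheme is one way to realize this, but it is more ad hoc than simply counting failures against a threshold between the two regimes, which is what the paper implicitly does.
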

\begin{proof}
In both cases, we have that $D(U, C) < 1/3$ for some $C$, which ensures that $C$ is unique (using \lemref{UniqueCloseC}, since $\frac{1}{3} < \frac{1}{2 \sqrt{2}}$) and can be found using \thmref{ApproxLearning} with $O\l(n \log \frac{n}{\delta} \r)$ queries.  Then the algorithm is:
\begin{enumerate}
\item{For each $\sigma_g \in G$, measure the Pauli coefficient of $C \sigma_g C^\dagger$ in $U \sigma_g U^\dagger$ (i.e.~measure $\l| \tr U \sigma_g U^\dagger C \sigma_g C^\dagger \r|/2^n$) to precision $\frac{\eps^2}{16 n^2}$ using \lemref{MeasurePauliCoeff}.}
\item{If all the coefficients are found to have modulus at least $1 - \frac{3 \eps^2}{16 n^2}$ then output \emph{CLOSE} else output \emph{FAR}.}
\end{enumerate}
This works because, for the two possibilities \emph{CLOSE} and \emph{FAR}:
\begin{enumerate}
\item[a)]{Using \lemref{CloseHaveClosePauli}, $D(U,C) \le \frac{\eps}{\sqrt{32} n}$ implies that for all $\sigma_p \in \hat{\cP},$
\be D(U \sigma_p U^\dagger, C \sigma_p C^\dagger) \le \frac{2\eps}{\sqrt{32}n}. \ee
Since we will only apply $U \sigma_g U^\dagger$ for $\sigma_g \in G$ we restrict this to only the generators to find that for all $\sigma_g \in G,$
\be D(U \sigma_g U^\dagger, C \sigma_g C^\dagger) \le \frac{2\eps}{\sqrt{32}n} \ee
giving
\be \l|\frac{\tr U \sigma_g U^\dagger C \sigma_g C^\dagger}{2^n} \r|^2 \ge 1 - \frac{\eps^2}{8 n^2}
\ee
for every generator $\sigma_g$.  We need a bound on the non-squared coefficients, which follows directly:
\be \l|\frac{\tr U \sigma_g U^\dagger C \sigma_g C^\dagger}{2^n} \r| \ge 1 - \frac{\eps^2}{8 n^2}.
\ee
Therefore when measuring the coefficients to precision $\frac{\eps^2}{16 n^2}$, all results will give at least $1 - \frac{3\eps^2}{16 n^2}$.}
\item[b)]{
Using the contrapositive of \lemref{ClosePauliAreClose}, $D(U, C) > \eps$ implies that there exists $\sigma_p \in \hat{\cP}$ such that
\be D^+(U \sigma_p U^\dagger, C \sigma_p C^\dagger) > \eps. \ee
Using the contrapositive of \lemref{CloseGenAreClosePauli} this in turn implies there exists $\sigma_g \in G$ such that
\be D^+(U \sigma_g U^\dagger, C \sigma_g C^\dagger) > \frac{\eps}{2n}, \ee
which means that for at least one $\sigma_g \in G$, $U \sigma_g U^\dagger$ will have a small overlap with $C \sigma_g C^\dagger$ i.e.~there exists $\sigma_g \in G$ such that
\be
\l| \frac{\tr U \sigma_g U^\dagger C \sigma_g C^\dagger}{2^n} \r| < 1- \frac{\eps^2}{4n^2}.
\ee
The $C$ returned by the application of \thmref{ApproxLearning} is such that $\tr U \sigma_g U^\dagger C \sigma_g C^\dagger$ is positive, which justifies inserting the absolute value signs above when using $D^+$ rather than $D$.  This implies that at least one coefficient will be found to be less than $1- \frac{3\eps^2}{16 n^2}$ when measuring to precision $\frac{\eps^2}{16n^2}$.\qedhere
}
\end{enumerate}
\end{proof}

\section{Conclusions and Further Work}
\label{sec:ConclusionLearning}

We have shown how to exactly identify an unknown Clifford operator in $O(n)$ queries, which we show is optimal.  This is then extended to cover elements of the $\cC_k$ hierarchy and for unitaries that are only known to be close to $\cC_k$ operations.  The key to the Clifford learning algorithm is to apply $C \sigma_p C^\dagger$ and then find the resulting Pauli operator.

A way of extending this idea could be to learn unitaries from larger sets.  Suppose $\cV$ is a set of unitaries with the property that for every $V \in \cV$, $V \sigma_p V^\dagger$ is a linear combination of a constant number of Paulis.  Then $V$ can be learnt in the same way as above, using the quantum Goldreich-Levin algorithm of \cite{QBF}, which can efficiently find which Paulis have large overlap with an input unitary.  However, we have not been able to find interesting sets $\cV$ other than the Clifford group with this property.

We also presented a Clifford testing algorithm, which determines whether a given black-box unitary is close to a Clifford or far from every Clifford.  This can be seen as a quantum generalisation of quadratic testing, just as Pauli testing can be seen as a quantum generalisation of linearity testing.  Property testing of this form is used to prove the PCP theorem \cite{PCPTheorem} so these quantum testing results could potentially be useful in proving a quantum PCP theorem.  It would also be interesting to strengthen the testing method in \thmref{CliffordTesting} to remove the $O(1/n)$ difference between the close and far conditions.

Finally, it would be interesting to see if it is possible to remove the requirement to have access to $U^\dagger$.  However, using both $U$ and $U^\dagger$ is the key to our method so we do not know if a method without $U^\dagger$ is possible with low query complexity.

\section{Proof of Lemma 6.4.4} 
\label{sec:ProofLemUniqueCloseC}

\begin{proof}[Proof of \lemref{UniqueCloseC}]
The proof is by induction.  The base case is for $k=1$ when we have the Pauli group.  Without loss of generality, assume $C$ is a Pauli operator with no phase.  Let $C = \sigma_p$.

Expand $U$ in the Pauli basis:
\be
U = \sum_q \gamma(q) \sigma_q.
\ee
Since $U$ is unitary, we have $\sum_q | \gamma(q) |^2 = 1$.  By \lemref{DistFromInnerProd},
\be
D(U, \sigma_p)^2 = 1 - \l| \frac{\tr \sigma_p U}{2^n} \r|^2
\ee
which implies
\be
|\gamma(p)|^2 \ge 1- \eps^2.
\ee

Now, suppose for contradiction that there exists $\sigma_{p_1} \ne \sigma_{p_2}$ with $D(U, \sigma_{p_1}) \le \eps$ and $D(U, \sigma_{p_2}) \le \eps$.  Then by the above, $|\gamma(p_1)|^2, |\gamma(p_2)|^2 \ge 1- \eps^2$.  But there is also the constraint $|\gamma(p_1)|^2 + |\gamma(p_2)|^2 \le 1$ which combined give
\be
\eps \ge \frac{1}{\sqrt{2}}
\ee
which is false by assumption.  This implies $\sigma_{p_1} = \sigma_{p_2}$, which proves the base case.

To prove the inductive step, again assume for contradiction that there exist $C_1, C_2 \in \cC_{k+1}$ with $C_1 \ne C_2$ and $D(U, C_1) \le \eps$ and $D(U, C_2) \le \eps$.  Then there exists $\sigma_g \in G$ with
\be
C_1 \sigma_g C_1^\dagger =: C_{1g} \ne C_{2g} := C_2 \sigma_g C_2^\dagger.
\ee
Here, $C_{1g}, C_{2g} \in \cC_k$.

Using \lemref{CloseHaveClosePauli}, $D(U \sigma_g U^\dagger, C_{1g}) \le 2\eps$ and $D(U \sigma_g U^\dagger, C_{2g}) \le 2\eps$.

Now there are two cases.  Firstly, suppose we can choose $\sigma_g$ such that $C_1 \sigma_g C_1^\dagger \ne \pm C_2 \sigma_g C_2^\dagger$.  Then $C_{1g}$ and $C_{2g}$ are not equivalent up to phase so, using the inductive hypothesis, we must have
\be
2 \eps \ge \frac{1}{2^{k-1/2}}
\ee
or
\be
\eps \ge \frac{1}{2^{(k+1)-1/2}}
\ee
which is again false by assumption.

For the other case, $C_1 \sigma_g C_1^\dagger = \pm C_2 \sigma_g C_2^\dagger$ for all $\sigma_g \in G$.  This implies that $C_2 = C_1 \sigma_q$ for some Pauli $\sigma_q \ne I$.  Then we have 
\ba
D(U, C_1) &\le \eps\nonumber\\
D(U, C_1 \sigma_q) &\le \eps
\ea
which by unitary invariance gives
\ba
D(C_1^\dagger U, I) &\le \eps\nonumber\\
D(C_1^\dagger U, \sigma_q) &\le \eps.
\ea
But we proved that this is impossible in this range of $\eps$ in the $k=1$ proof above.
\end{proof}

\section{Miscellaneous Lemmas}

Here we prove some miscellaneous lemmas used earlier in the chapter.

The first lemma says that for two close operators $U_1$ and $U_2$, $U_1 \sigma_p U_1^\dagger$ is close to $U_2 \sigma_p U_2^\dagger$ for all Paulis $\sigma_p$:
\begin{lemma}
\label{lem:CloseHaveClosePauli}
If $D(U_1, U_2) \le \delta$ then for all $\sigma_p \in \hat{\cP}$,
\bes
D(U_1 \sigma_p U_1^\dagger, U_2 \sigma_p U_2^\dagger) \le 2 \delta.
\ees
\end{lemma}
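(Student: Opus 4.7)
The plan is to exploit the structure of $D$ given in Definition \ref{def:DistanceMeasure}, which expresses the distance as a 2-norm distance between the enlarged operators $U_i \otimes U_i^*$. The point is that conjugation $V \mapsto V\sigma_p V^\dagger$ lifts to the doubled space as $(V \otimes V^*)(\sigma_p \otimes \sigma_p^*)(V^\dagger \otimes V^T)$, after which the Pauli factor is a single unitary sandwiched between two unitaries, and the 2-norm is multiplicative under composition with unitaries on either side.

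Concretely, I would set $A_i = U_i \otimes U_i^*$ and $P = \sigma_p \otimes \sigma_p^*$. Since $\sigma_p$ is Hermitian, $(U_i \sigma_p U_i^\dagger)^* = U_i^* \sigma_p^* U_i^T$, so
\bes
U_i \sigma_p U_i^\dagger \otimes (U_i \sigma_p U_i^\dagger)^* = A_i P A_i^\dagger.
\ees
Then I would apply the triangle inequality in the standard ``add and subtract'' form,
\bes
\|A_1 P A_1^\dagger - A_2 P A_2^\dagger\|_2 \le \|A_1 P (A_1^\dagger - A_2^\dagger)\|_2 + \|(A_1 - A_2) P A_2^\dagger\|_2,
\ees
and use the fact that each of $A_1$, $A_2$, $P$ is unitary, together with $\|XYZ\|_2 \le \|X\|_\infty \|Y\|_\infty \|Z\|_2$, to bound the right-hand side by $2\|A_1 - A_2\|_2$. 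Dividing through by $\sqrt{2d^2}$ and using the definition of $D$ gives $D(U_1 \sigma_p U_1^\dagger, U_2 \sigma_p U_2^\dagger) \le 2 D(U_1, U_2) \le 2\delta$.

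There is no real obstacle here; the only small thing to be careful about is the conjugate-transpose bookkeeping when tensoring with complex conjugates, and noting that $\sigma_p \otimes \sigma_p^*$ is unitary because $\sigma_p$ is (it does not even matter that $\sigma_p$ is a Pauli in particular — the inequality would hold for any unitary $\sigma_p$). This also explains why the factor of $2$ is natural and suggests the bound is essentially tight.
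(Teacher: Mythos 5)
Your proof is correct and rests on the same two ingredients the paper uses, namely the triangle inequality and unitary invariance of the underlying 2-norm; the paper simply keeps the argument at the level of the abstract distance $D$ (writing $U_1 = VU_2$ and using $D(VWV^\dagger,W) \le D(V,I) + D(I,V)$), whereas you unfold the definition and do the same add-and-subtract decomposition directly on $A_1 P A_1^\dagger - A_2 P A_2^\dagger$ in the doubled space. The bookkeeping about complex conjugates that you flagged is exactly the detail worth checking, and your observation that the bound holds for any unitary in place of $\sigma_p$ also matches the structure of the paper's argument, which never uses that $\sigma_p$ is Pauli.
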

\begin{proof}
Let $U_1 = V U_2$ and $U_{2p} = U_2 \sigma_p U_2^\dagger$.  Then we simply apply the triangle inequality for $D$ and unitary invariance:
\bas
D(U_1 \sigma_p U_1^\dagger, U_2 \sigma_p U_2^\dagger) &= D(V U_{2p} V^\dagger, U_{2p}) \\
&= D(V U_{2p}, U_{2p} V) \\
&\le D(V U_{2p}, U_{2p}) + D(U_{2p}, U_{2p} V) \\
&= D(V, I) + D(I, V) \\
&= 2D(U_1, U_2).\qedhere
\eas
\end{proof}
The next lemma is a converse to this:
\begin{lemma}
\label{lem:ClosePauliAreClose}
If for all $\sigma_p \in \hat{\cP}$
\be D^+(U_1 \sigma_p U_1^\dagger, U_2 \sigma_p U_2^\dagger) \le \delta \ee
then
\be
D(U_1, U_2) \le \delta.
\ee
\end{lemma}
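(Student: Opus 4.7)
The plan is to reduce everything to traces via Lemma 6.4.2, then exploit the Pauli twirling identity to get the bound by averaging the hypothesis over all Paulis.

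First, I would rewrite both the hypothesis and the conclusion using Lemma \ref{lem:DistFromInnerProd}. The hypothesis $D^+(U_1 \sigma_p U_1^\dagger, U_2 \sigma_p U_2^\dagger) \le \delta$ becomes, after noting that $(U_2 \sigma_p U_2^\dagger)^\dagger = U_2\sigma_p U_2^\dagger$ (Paulis are Hermitian) and that the resulting trace is real,
\bes
\tr\bigl(U_1 \sigma_p U_1^\dagger U_2 \sigma_p U_2^\dagger\bigr) \ge d(1-\delta^2)\qquad\text{for every } \sigma_p\in\hat{\cP}.
\ees
Setting $W := U_1^\dagger U_2$ and using cyclicity, the left-hand side equals $\tr(\sigma_p W \sigma_p W^\dagger)$. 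The conclusion $D(U_1,U_2)\le\delta$ becomes $|\tr W|^2 \ge d^2(1-\delta^2)$.

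Next, I would sum the hypothesis over all $d^2$ Paulis and apply the Pauli twirling identity $\sum_{\sigma_p\in\hat{\cP}} \sigma_p A \sigma_p = d\,\tr(A)\,I$, which follows directly from Lemma \ref{lem:Swap} (or from expanding $A$ in the Pauli basis and noting that for any nonidentity Pauli, half of the $\sigma_p$ commute and half anticommute). This gives
\bas
\sum_{\sigma_p\in\hat{\cP}} \tr(\sigma_p W \sigma_p W^\dagger)
&= \tr\Bigl(\Bigl[\sum_{\sigma_p\in\hat{\cP}} \sigma_p W \sigma_p\Bigr] W^\dagger\Bigr) \\
&= d\,\tr(W)\tr(W^\dagger) = d\,|\tr W|^2.
\eas
On the other hand, the hypothesis gives $\sum_{\sigma_p} \tr(\sigma_p W \sigma_p W^\dagger) \ge d^2 \cdot d(1-\delta^2) = d^3(1-\delta^2)$. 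Combining, $|\tr W|^2 \ge d^2(1-\delta^2)$, which is exactly $D(U_1,U_2)\le\delta$.

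There is no real obstacle here: the proof is essentially a one-line averaging argument once the Pauli twirling identity is invoked. The only thing worth being careful about is the phase/reality issue (the hypothesis uses $D^+$, which involves $\Re$, whereas the conclusion uses $D$ with $|\cdot|^2$), but as noted the trace $\tr(U_1\sigma_p U_1^\dagger U_2 \sigma_p U_2^\dagger)$ is automatically real because it is the Hilbert–Schmidt inner product of two Hermitian matrices, so $\Re$ drops out and no information is lost in the averaging step.
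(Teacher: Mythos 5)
Your proof is correct and follows essentially the same route as the paper's: both rewrite the hypothesis as a trace inequality via Lemma \ref{lem:DistFromInnerProd}, average (or sum) over all Paulis, and apply the Pauli 1-design/twirling identity to collapse the average to $|\tr(U_1^\dagger U_2)|^2$. The only cosmetic differences are your substitution $W = U_1^\dagger U_2$ and your up-front observation that each trace is real, whereas the paper lets $\Re$ ride along and drop out at the end when the product $\tr(W)\tr(W^\dagger)$ is recognized as $|\tr W|^2$.
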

\begin{proof}
If $D^+(U_1 \sigma_p U_1^\dagger, U_2 \sigma_p U_2^\dagger) \le \delta$ then $\frac{1}{2^n} \Re \tr U_1 \sigma_p U_1^\dagger U_2 \sigma_p U_2^\dagger \ge 1-\delta^2$.  Since this is true for all $\sigma_p$, we can take the average of this over the whole of $\hat{\cP}$ and use the fact that for any $d \times d$ matrix $A$ $\frac{1}{4^n}\sum_{\sigma_p \in \hat{\cP}} \sigma_p A \sigma_p = \frac{I}{2^n} \tr A$ (the Paulis are a \emph{1-design}) to find
\be
\frac{1}{2^n} \Re \tr U_1 \l( \frac{I}{2^n} \tr U_1^\dagger U_2 \r) U_2^\dagger \ge 1 - \delta^2
\ee
which simplified gives
\be
\l| \frac{\tr U_1 U_2^\dagger}{2^n} \r|^2 \ge 1 - \delta^2
\ee
giving the desired result.
\end{proof}
Now we show how to go from distances for just the generators $G$ to distances for the whole of $\hat{\cP}$:
\begin{lemma}
\label{lem:CloseGenAreClosePauli}
If for all $\sigma_g \in G$
\be
D^+(U_1 \sigma_g U_1^\dagger, U_2 \sigma_g U_2^\dagger) \le \delta
\ee
then for all $\sigma_p \in \hat{\cP}$
\be
D^+(U_1 \sigma_p U_1^\dagger, U_2 \sigma_p U_2^\dagger) \le 2n\delta
\ee
\end{lemma}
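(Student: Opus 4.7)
The plan is to exploit the fact that every $\sigma_p \in \hat{\cP}$ factors as a product of at most $2n$ generators from $G$ (times a global phase), and then push this factorisation through $D^+$ using a triangle-inequality-style bound for products of unitaries.

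First I would establish an auxiliary product bound: for any unitaries $A_1,B_1,A_2,B_2$,
\be
D^+(A_1A_2,B_1B_2) \le D^+(A_1,B_1) + D^+(A_2,B_2).
\ee
This follows from the unitary invariance of the 2-norm: inserting $B_1A_2$ and using the triangle inequality gives
\bes
\frac{1}{\sqrt{2d}}\|A_1A_2 - B_1B_2\|_2 \le \frac{1}{\sqrt{2d}}\|A_1A_2 - B_1A_2\|_2 + \frac{1}{\sqrt{2d}}\|B_1A_2 - B_1B_2\|_2,
\ees
and unitary invariance collapses the right-hand side to $D^+(A_1,B_1) + D^+(A_2,B_2)$. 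By induction, $D^+(A_1\cdots A_m, B_1\cdots B_m) \le \sum_{i=1}^m D^+(A_i,B_i)$.

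Next, write $\sigma_p = \alpha\,\sigma_{g_1}\sigma_{g_2}\cdots\sigma_{g_m}$ with $\sigma_{g_i}\in G$, $m \le 2n$, and $\alpha$ a phase (for each qubit we use at most one $\sigma_{x_i}$ and one $\sigma_{z_i}$, absorbing $\sigma_y$ as $i\sigma_x\sigma_z$). Conjugation distributes, so for $j\in\{1,2\}$
\bes
U_j\sigma_p U_j^\dagger = \alpha\,(U_j\sigma_{g_1}U_j^\dagger)\cdots(U_j\sigma_{g_m}U_j^\dagger),
\eees
and each factor $U_j\sigma_{g_i}U_j^\dagger$ is unitary. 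Since $D^+$ is invariant under multiplying both arguments by the same global phase, the $\alpha$ cancels.

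Finally, apply the product bound with $A_i = U_1\sigma_{g_i}U_1^\dagger$ and $B_i = U_2\sigma_{g_i}U_2^\dagger$ and use the hypothesis $D^+(A_i,B_i)\le \delta$ to conclude
\bes
D^+(U_1\sigma_p U_1^\dagger, U_2\sigma_p U_2^\dagger) \le \sum_{i=1}^m D^+(A_i,B_i) \le m\delta \le 2n\delta.
\ees
The only subtlety is checking that the global phase from combining $\sigma_x,\sigma_z$ into $\sigma_y$ really does drop out under $D^+$; since it appears identically in both $U_1\sigma_pU_1^\dagger$ and $U_2\sigma_pU_2^\dagger$, it factors out of the 2-norm and does not affect the bound. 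There is no genuinely hard step; the whole argument is a clean triangle-inequality-on-a-factorisation.
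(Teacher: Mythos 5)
Your proof is correct and takes essentially the same approach as the paper, which simply notes that the result follows "by induction on the number of generators required to make $\sigma_p$, using the triangle inequality for $D^+$"; you have filled in exactly the details that phrase elides, namely the product bound $D^+(A_1A_2,B_1B_2)\le D^+(A_1,B_1)+D^+(A_2,B_2)$ from unitary invariance of the 2-norm, the factorisation into at most $2n$ generators, and the cancellation of the common global phase.
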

\begin{proof}
The proof is by induction on the number of generators required to make $\sigma_p$, using the triangle inequality for $D^+$.
\end{proof}

\newpage

\addcontentsline{toc}{chapter}{Bibliography}

\newcommand{\etalchar}[1]{$^{#1}$}

\end{document}